\tikzset{
    state/.style={
           rectangle,
           rounded corners,
           draw=black, very thick,
           minimum height=2em,
           inner sep=2pt,
           text centered,
           },
}
\DeclareFontFamily{OT1}{rsfs}{}
\DeclareFontShape{OT1}{rsfs}{n}{it}{<-> rsfs10}{}
\DeclareMathAlphabet{\mathscr}{OT1}{rsfs}{n}{it}
\DeclareMathOperator{\Hom}{Hom}
\DeclareMathOperator{\Res}{Res}
\DeclareMathOperator{\GL}{GL}
\DeclareMathOperator{\gl}{\mathfrak{gl}}
\DeclareMathOperator{\SL}{SL}
\DeclareMathOperator{\Spec}{Spec}
\newcommand*{\N}{\ensuremath{\mathbf{N}}}                        
\newcommand*{\m}{\mathfrak{m}}                                   
\theoremstyle{plain}
  \newtheorem{theorem}{Theorem}
  \newtheorem{proposition}[subsubsection]{Proposition}
  \newtheorem{lemma}[subsubsection]{Lemma}
  \newtheorem{corollary}{Corollary}
  \newtheorem{conjecture}{Conjecture}
\theoremstyle{definition}
  \newtheorem{definition}[subsubsection]{Definition}
\theoremstyle{remark}
  \newtheorem{example}[subsubsection]{Example}
  \newtheorem{remark}[subsubsection]{Remark}
\numberwithin{equation}{section}
\DeclareMathOperator{\sym}{Sym}
\title{Deformed Double Current Algebras, Matrix Extended $\mathcal W_{\infty}$ Algebras, Coproducts, and Intertwiners from the M2-M5 Intersection}
\author[a]{Davide Gaiotto\thanks{dgaiotto@perimeterinstitute.ca}}
\author[b]{Miroslav Rapčák\thanks{miroslav.rapcek@gmail.com}}
\author[c]{Yehao Zhou\thanks{yehao.zhou@ipmu.jp}}
\affil[a]{Perimeter Institute for Theoretical Physics, Waterloo, Ontario, Canada N2L 2Y5}
\affil[b]{Theoretical Physics Department, CERN, 1211 Geneva 23, Switzerland}
\affil[c]{Kavli Institute for the Physics and Mathematics of the Universe (WPI), University of Tokyo, Kashiwa, Chiba 277-8583, Japan}
\date{}
\begin{document}

\maketitle

\begin{abstract}
We study the algebraic structures which govern the deformation of supersymmetric intersections of M2 and M5 branes. The universal algebras on M2 and M5 branes are deformed double current algebra of $\mathfrak{gl}_K$ and $\mathfrak{gl}_K$-extended $\mathcal{W}_{\infty}$-algebra respectively. We give a new presentation of the deformed double current algebra of $\mathfrak{gl}_K$, and we give a rigorous mathematical construction of the $\mathfrak{gl}_K$-extended $\mathcal{W}_{\infty}$-algebra. A new presentation of the affine Yangian of $\mathfrak{gl}_K$ is also obtained. We construct various coproducts of these algebras, which are expected to encode the fusions of defects in twisted M-theory. The matrix extended Miura operators are identified as intertwiners in certain bimodules of these algebras.
\end{abstract}

{\tableofcontents}

\section{Introduction}
The purpose of this paper is to describe certain algebraic structures which characterize the properties of twisted M-theory on the 
\begin{equation}\label{eqn: twisted M-theory background}
\mathbb{R} \times \mathbb{C}^2 \times \mathbb{R}^2_{\epsilon_1} \times \frac{\mathbb{R}^2_{K^{-1} \epsilon_2} \times \mathbb{R}^2_{K^{-1} \epsilon_3}}{\mathbb{Z}_K}
\end{equation}
background, where $K$ is a positive integer. This background produces a five-dimensional theory depending on the three $\epsilon_i$ parameters, which must satisfy a constraint 
\begin{equation}
K \epsilon_1 + \epsilon_2 + \epsilon_3 = 0
\end{equation}
Our results generalizes the $K=1$ results derived in \cite{gaiotto2022miura}. 

The properties of (resp. topological or holomorphic) defects in the twisted M-theory background are controlled by two algebras \cite{costello2017holography,gaiotto2019aspects}: the $\mathfrak{gl}(K)$-extended deformed double current algebra $\mathsf A^{(K)}$ and the $\mathfrak{gl}(K)$-extended ${\cal W}_\infty$ chiral algebra ${\cal W}^{(K)}_\infty$. Both are non-linear deformations of the universal enveloping algebra of the classical gauge algebra of the theory: the semi-direct product of the Lie algebra $\mathfrak{sl}_K\otimes\mathscr O(\mathbb C^2)$ of traceless $K \times K$ matrices valued in polynomials in two variables and the Lie algebra of symplectomorphisms $\mathfrak{po}(\mathbb C^2)$ of $\mathbb{C}^2$. 

Concretely, $\mathsf A^{(K)}$ is generated from a collection of generators $e^i_{j;a,b}$, with integer indices  $0 \leq i,j \leq K$ and $0 \leq a,b$,
such that $e^i_{i;a,b} = \epsilon_2 t_{a,b}$. The $e^i_{j;a,b}$ and $t_{a,b}$ are deformations of the generators of $\mathfrak{sl}_K\otimes\mathscr O(\mathbb C^2)$ and $\mathfrak{po}(\mathbb C^2)$ respectively.
On the other hand, ${\cal W}^{(K)}_\infty$ is generated by $K \times K$ matrices of vertex operators $U^i_{j;n}(z)$ of dimension $n$ for $n \geq 1$. 

The relation between these algebras and the defects in twisted M-theory is predicated by Koszul duality \cite{costello2017holography}: a defect is defined by coupling the M-theory fields to a (chiral) algebra which admits a (chiral) algebra homomorphism from $\mathsf A^{(K)}$ or ${\cal W}^{(K)}_\infty$. In \cite{oh2021feynman}, perturbative computation in 5d holomorphic-topological Chern-Simons theory up to first order in $\epsilon_1$ was done as an examination of such prediction.

The existence of gauge-invariant junctions of defects requires the algebras to have extra structures. The reference \cite{gaiotto2022miura} proposed to encode the extra structure in a collection of compatible coproducts. In the language of Koszul duality, these structures should be the image of an holomorphic-topological factorization algebra structure on $\mathbb{R} \times \mathbb{C}$ under Koszul duality along the topological or holomorphic direction. In \cite{oh2021twisted}, perturbative computation in 5d holomorphic-topological Chern-Simons theory with various defects up to first order in $\epsilon_1$ was done as an examination of such prediction.

We expect that the construction in this paper naturally extends when $\gl_K$ is replaced by the Lie super algebra $\gl_{K|M}$, see the relevant discussions in \cite{ueda2019affine,ueda2022affine,rapvcak2020extensions,gaberdiel2018twin},

In this paper we will provide rigorous mathematical proofs for the algebraic relations expected to follow from the existence of twisted M-theory. We will mention physical motivations only sparingly. 

\subsection{Main results of the paper}
The main objects in this paper are the following.
\begin{itemize}
    \item The associative algebra $\mathsf A^{(K)}$ is defined in Section \ref{sec: DDCA}. $\mathsf A^{(K)}$ is a version of the deformed double current algebra of type $\gl_K$. Its relations to other versions of DDCAs of type $\gl_K$ are summarized in the Figure \eqref{relations between DDCAs} \footnote{For the DDCA of simple Lie algebras, see \cite{guay2017deformed}.}.
    \item The vertex algebra $\mathcal W^{(K)}_{\infty}$ is defined in Section \ref{sec: W(infinity)}. $\mathcal W^{(K)}_{\infty}$ is the $\gl_K$-extended version of $\mathcal W_{\infty}$-algebra. It is the uniform-in-$L$ version of the rectangular W-algebra $\mathcal W^{(K)}_L$. This generalizes the $K=1$ case in \cite{linshaw2021universal}.
    \item The associative algebra $\mathsf Y^{(K)}$ is defined in Section \ref{sec: Y^K and coproducts}. When $K\neq 2$, $\mathsf Y^{(K)}$ is isomorphic to the affine Yangian of type $A_{K-1}$ after localizing $\epsilon_2\epsilon_3$, see Theorem \ref{thm: compare affine Yangian with Y}. This provides a new presentation of affine Yangian of type $A_{K-1}$ when $K\neq 2$, and we conjecture that such new presentation extends to hold for $K=2$.
\end{itemize}
All of these algebras are defined over the base ring $\mathbb C[\epsilon_1,\epsilon_2]$. They are related as follows:
\begin{align}
    \mathsf A^{(K)}\subset \mathsf Y^{(K)}\subset {U}(\mathcal W^{(K)}_{\infty})[\bar\alpha^{-1}]\subset \mathfrak{U}(\mathcal W^{(K)}_{\infty})[\bar\alpha^{-1}],
\end{align}
where ${U}(\mathcal W^{(K)}_{\infty})$ is the restricted mode algebra of $\mathcal W^{(K)}_{\infty}$ defined in Appendix \ref{sec: Restricted Mode Algebra}, and $\mathfrak{U}(\mathcal W^{(K)}_{\infty})$ is the usual mode algebra of $\mathcal W^{(K)}_{\infty}$ whose definition is recalled in Appendix \ref{sec: Restricted Mode Algebra}. All the inclusions are $\mathbb C[\epsilon_1,\epsilon_2]$-algebra embeddings. Moreover, 
\begin{enumerate}
    \item $\mathsf Y^{(K)}$ contains two subalgebras $\mathsf A^{(K)}_+$ and $\mathsf A^{(K)}_-$, both isomorphic to $\mathsf A^{(K)}$, and $\mathsf Y^{(K)}$ is generated by $\mathsf A^{(K)}_+$ and $\mathsf A^{(K)}_-$ and a central element $\mathbf c$ subject to a set of gluing relations \eqref{eqn: gluing relations of Y}.
    \item The subalgebra $\mathsf A^{(K)}_+\subset \mathsf Y^{(K)}$ is mapped to the positive mode subalgebra $\mathfrak{U}_+(\mathcal W^{(K)}_{\infty})[\bar\alpha^{-1}]$ such that the action on the vacuum $\mathsf A^{(K)}_+|0\rangle$ factors through an augmentation $\mathfrak C_{\mathsf A}:\mathsf A^{(K)}_+\to \mathbb C[\epsilon_1,\epsilon_2]$. See \eqref{eqn: map to W(infinity)} and the comments that follow.
    \item The subalgebra $\mathsf A^{(K)}_-$ is the image of $\mathsf A^{(K)}_+$ under an anti-involution of $\mathfrak U(\mathcal W^{(K)}_{\infty})$ which generalizes the anti-involution $J^a_{b,n}\mapsto J^b_{a,-n}$ of the affine Lie algebra $\widehat{\gl}_K$. See \eqref{eqn: Psi_infinity minus on generators} and the proof of Proposition \ref{prop: map Y^K to modes of W(infinity)} for details.
    \item The algebra $\mathsf L^{(K)}:=\mathsf Y^{(K)}/(\mathbf c)$ is generated by two subalgebras $\mathsf A^{(K)}_+$ and $\mathsf A^{(K)}_-$ subject to a set of gluing relations \eqref{eqn: gluing relations of L}. See Theorem \ref{thm: gluing construction of L}.
    \item There exists a $\mathbb C[\epsilon_1,\epsilon_2]$-algebra embedding $S(w):\mathsf L^{(K)}\hookrightarrow \mathsf A^{(K)}(\!(w^{-1})\!)$, where $w$ is treated as a formal parameter. In fact, we treat this embedding as the definition of $\mathsf L^{(K)}$, see Definition \ref{def: the algebra L^K}.
\end{enumerate}
In the physics setting, $\mathsf A^{(K)}$ is the algebra of gauge-invariant local observables on the M2 branes (topological defects) in the twisted M-theory background \eqref{eqn: twisted M-theory background}, and $\mathcal W^{(K)}_{\infty}$ is the vertex algebra of gauge-invariant local observables on the M5 branes (holomorphic defects) in the twisted M-theory background \eqref{eqn: twisted M-theory background}. 

\bigskip The physics setting also predicts a variety of coproduct maps between these algebras which control the properties of configurations of defects which lie within a $\mathbb{R} \times \mathbb{C}$ subspace of the 5d geometry \cite{gaiotto2022miura}. We write down the explicit formulae for the predicted coproducts and prove that they are algebra homomorphisms: 
\begin{enumerate}
\item A meromorphic algebra coproduct $\Delta_{\mathsf A}(w): \mathsf A^{(K)} \to \mathsf A^{(K)} \otimes \mathsf A^{(K)}(\!(w^{-1})\!)$ describing the fusion of lines, see Proposition \ref{prop: AA coproduct}.
\item A chiral algebra coproduct $\Delta_{{\cal W}}: {\cal W}^{(K)}_{\infty} \to{\cal W}^{(K)}_{\infty} \otimes {\cal W}^{(K)}_{\infty}$ describing the fusion of surfaces, see Theorem \ref{thm: matrix extended W(infty)}.
\item An algebra coproduct $\Delta_{\mathsf Y}:\mathsf Y^{(K)}\to \mathsf Y^{(K)}\widetilde{\otimes} \mathsf Y^{(K)}$, see Definition \ref{def: YY coproduct} and equation \eqref{eqn: YY coproduct}. Here $\widetilde{\otimes}$ is a completion of tensor product, defined in Appendix \ref{sec: Completion of Tensor Product}.
\item A mixed coproduct $\Delta_{\mathsf A,\mathsf Y}: \mathsf A^{(K)} \to \mathsf A^{(K)}\widetilde{\otimes} \mathsf Y^{(K)}$ controlling the gauge-invariance constraints of M2-M5 junctions \footnote{The actual physical prediction is a mixed coproduct $\Delta_{\mathsf A,\mathcal W}: \mathsf A^{(K)} \to \mathsf A^{(K)}\widetilde{\otimes} U({\cal W}^{(K)}_{\infty})[\bar\alpha^{-1}]$, and this follows from $\Delta_{\mathsf A,\mathsf Y}$ via the embedding $\mathsf Y^{(K)}\subset U({\cal W}^{(K)}_{\infty})[\bar\alpha^{-1}]$.}, see \eqref{eqn: AW(infinity) coproduct}.
\item The linear meromorphic coproduct $\Delta_{\mathcal W}(w): U({\cal W}^{(K)}_{\infty}) \to U({\cal W}^{(K)}_{\infty}) \otimes U_+({\cal W}^{(K)}_{\infty})(\!(w^{-1})\!)$ which exists for all chiral algebras, defined by the action on products of vertex operators. Given a current ${\cal O}_n(z)$ of dimension $n$ in ${\cal W}^{(K)}_{\infty}$ the coproduct is \footnote{To derive this, use the action of a mode $O_{n,m}$ on two vertex operators
\begin{align*}
\oint {\cal O}_n(z) z^{n+m-1} {\cal V}_1(0) {\cal V}_2(w) = \left[\oint_{z=0} {\cal O}_n(z) z^{n+m-1} {\cal V}_1(0)\right] {\cal V}_2(w) + {\cal V}_1(0) e^{w L_{-1}} \left[\oint_{z=0} {\cal O}_n(z) (z+w)^{n+m-1}  {\cal V}_2(0)\right]
\end{align*}}
\begin{equation}
{\mathcal O}_{n,m} \mapsto {\mathcal O}_{n,m} \otimes 1 + \sum_{s=0}^\infty \binom{n+m-1}{s}w^{n+m-1-s} \left(1 \otimes {\mathcal O}_{n,s-n+1} \right).
\end{equation}
The details of this meromorphic coproduct for the restricted mode algebra of a chiral algebra will be given in the Appendix \ref{subsec: meromorphic coproduct of mode algebra}.
\end{enumerate}
These coproducts satisfy the following compatibilities:
\begin{enumerate}
\item $\Delta_{\mathsf Y}$ is the restriction of $\Delta_{\cal W}$ to the subalgebra $\mathsf Y^{(K)}\subset  U(\mathcal W^{(K)}_{\infty})[\bar\alpha^{-1}]$. I.e. if we denote $\Psi_\infty:\mathsf Y^{(K)}\hookrightarrow  U(\mathcal W^{(K)}_{\infty})[\bar\alpha^{-1}]$ then
\begin{equation}\label{eqn: D_Y compatible with D_W}
    [\Psi_{\infty}\otimes\Psi_{\infty}]\circ\Delta_{\mathsf Y}=\Delta_{\mathcal W}\circ \Psi_{\infty}
\end{equation}
\item $\Delta_{\mathsf Y}$ is compatible with mixed coproduct $\Delta_{\mathsf A,\mathsf Y}$ in the sense that
\begin{equation}\label{eqn: D_Y compatible with mixed coproduct}
    \Delta_{\mathsf Y}\circ i=[i\otimes 1]\circ \Delta_{\mathsf A,\mathsf Y},
\end{equation}
where $i:\mathsf A^{(K)}\hookrightarrow \mathsf Y^{(K)}$ is the natural inclusion.
\item $\Delta_{\mathsf Y}$ is compatible with meromorphic coproduct $\Delta_{\mathsf A}(w)$ in the sense that
\begin{equation}
    [1\otimes S_{\mathsf Y}(w)]\circ \Delta_{\mathsf Y}\circ i= [i\otimes 1]\circ \Delta_{\mathsf A}(w).
\end{equation}
Here $S_{\mathsf Y}(w):\mathsf Y^{(K)}\to \mathsf A^{(K)}(\!(w^{-1})\!)$ is the composition of the quotient map $\mathsf Y^{(K)}\twoheadrightarrow \mathsf L^{(K)}$ and the embedding $S(w):\mathsf L^{(K)}\hookrightarrow \mathsf A^{(K)}(\!(w^{-1})\!)$.
\item $\Delta_{\mathsf Y}$ is compatible with meromorphic coproduct $\Delta_{\mathcal W}(w)$  in the sense that
\begin{equation}\label{eqn: D_Y compatible with meromorphic coproducts}
    [\Psi_{\infty}\otimes \Psi_{\infty}]\circ [1\otimes i]\circ[1\otimes S_{\mathsf Y}(w)]\circ \Delta_{\mathsf Y}=\Delta_{\mathcal W}(w)\circ \Psi_{\infty}.
\end{equation}
\end{enumerate}
These coproducts satisfy a variety of co-associativity relations. In particular
\begin{enumerate}
\item $\Delta_{{\cal W}}$ is co-associative, i.e. ${\cal W}^{(K)}_{\infty}$ is a coalgebra object in the category of chiral algebras, see Theorem \ref{thm: matrix extended W(infty)}..
\item $\Delta_{\mathsf A}(w)$ satisfies co-associativity relations analogous to associativity of an OPE. More precisely, $\Delta_{\mathsf A}(w)$ together with the augmentation $\mathfrak C_{\mathsf A}$ make $\mathsf A^{(K)}$ a vertex coalgebra object in the category of algebras, see Theorem \ref{thm: A^K is vertex coalgebra}. The composition 
\begin{align}\label{eqn: YA meormorphic coproduct}
    \Delta_{\mathsf Y}(w):\mathsf Y^{(K)}\overset{\Delta_{\mathsf Y}}{\longrightarrow} \mathsf Y^{(K)}\widetilde{\otimes} \mathsf Y^{(K)} \overset{1\otimes S_{\mathsf Y}(w)}{\longrightarrow} \mathsf Y^{(K)}\otimes \mathsf A^{(K)}(\!(w^{-1})\!)
\end{align}
makes $\mathsf Y^{(K)}$ a vertex comodule of $\mathsf A^{(K)}$ in the category of algebras, see Proposition \ref{prop: Y^K is vertex comodule}. We review the definition of vertex coalgebras and vertex comodules in Appendix \ref{sec: vertex coalgebras and vertex comodules}.
\item $\Delta_{\mathcal W}(w)$ also satisfies co-associativity relations analogous to associativity of an OPE. More precisely, $\Delta_{\mathcal W}(w)$ together with the augmentation $U_+(\mathcal W^{(K)}_\infty)\to \mathbb C[\epsilon_1,\epsilon_2]$ that maps all nontrivial operators to zero make $U_+(\mathcal W^{(K)}_\infty)$ a vertex coalgebra object in the category of algebras, and also make $U(\mathcal W^{(K)}_\infty)$ a vertex comodule of $U_+(\mathcal W^{(K)}_\infty)$ in the category of algebras. The co-associativity of linear meromorphic coproduct holds for all chiral algebras, see Proposition \ref{prop: vertex coalgebra from vertex algebra}.
\item $\Delta_{\mathsf Y}$ is co-associative. More precisely $\Delta_{\mathsf Y}$ together with an augmentation map $\mathfrak C_{\mathsf Y}:\mathsf Y^{(K)}\to \mathbb C[\epsilon_1,\epsilon_2]$ make $\mathsf Y^{(K)}$ a bialgebra. The co-associativity of $\Delta_{\mathsf Y}$ together with the compatibility \eqref{eqn: D_Y compatible with mixed coproduct} implies that $\mathsf A^{(K)}$ is a comodule of $\mathsf Y^{(K)}$, i.e. the two ways to map $\mathsf A^{(K)} \to \mathsf A^{(K)} \widetilde{\otimes}\mathsf Y^{(K)}\widetilde{\otimes}\mathsf Y^{(K)}$ agree:
\begin{equation}
\left[\Delta_{\mathsf A,\mathsf Y} \otimes 1 \right] \circ \Delta_{\mathsf A,\mathsf Y} = \left[1 \otimes \Delta_{\mathsf Y}\right] \circ \Delta_{\mathsf A,\mathsf Y}.
\end{equation}
See Proposition \ref{prop: Y^K is a bialgebra} for details.
\item The two ways to map $\mathsf A^{(K)} \to \mathsf A^{(K)}\otimes \mathsf A^{(K)}\otimes \mathsf Y^{(K)}\otimes \mathsf Y^{(K)}(\!(w^{-1})\!)$ should agree:
\begin{equation}\label{eqn: mixed compatible with meromorphic}
\left[\Delta_{\mathsf A,\mathsf Y} \otimes \Delta_{\mathsf A,\mathsf Y}\right] \circ \Delta_{\mathsf A}(w) = \left[\Delta_{\mathsf A}(w) \otimes \Delta_{\mathsf Y}(w)\right] \circ \Delta_{\mathsf A,\mathsf Y},
\end{equation}
where $\Delta_{\mathsf Y}(w)$ is the map in \eqref{eqn: YA meormorphic coproduct}. However, a direct computation using explicit formulae for the coproducts shows that two sides of \eqref{eqn: mixed compatible with meromorphic} suffer from divergence issue, so \eqref{eqn: mixed compatible with meromorphic} is not well-defined. To cure the divergence issue, we use the restricted mode algebra (resp. positive restricted mode algebra) of $\mathcal W^{(K)}_\infty$ instead of $\mathsf Y^{(K)}$ (resp. $\mathsf A^{(K)}$) and the correct statement is that 
\begin{equation}
    [\Delta_{\mathcal W}\otimes \Delta_{\mathcal W}]\circ \Delta_{\mathcal W}(w)=\Delta_{\mathcal W\otimes \mathcal W}(w)\circ \Delta_{\mathcal W}
\end{equation}
which follows from the the functoriality of the meromorphic coproduct of the restricted mode algebra, see Proposition \ref{prop: functoriality of meromorphic coproduct}.

\end{enumerate}

The existence of universal algebras $\mathsf A^{(K)}$ and ${\cal W}^{(K)}_\infty$ equipped with such coproducts should be 
thought of as a way to encode a holomorphic-topological factorization algebra \cite{costello2021factorization,butson2020equivariant}. With some help from dualities, one may predict the existence of gauge-invariant junctions, which are 
special elements in certain $\mathsf A^{(K)} \widetilde{\otimes} U({\cal W}^{(K)}_{\infty}) - \mathsf A^{(K)}$ bimodules which intertwine the right action of $\mathsf A^{(K)}$ and left action of $\mathsf A^{(K)}$ via the coproduct $\Delta_{\mathsf A,\mathcal W}:\mathsf A^{(K)} \to \mathsf A^{(K)} \widetilde{\otimes} U({\cal W}^{(K)}_{\infty})$. We discuss the bimodules in Section \ref{subsec: bimodule} and intertwiners (Miura operators) in Section \ref{sec: Miura Operators as Intertwiners}. We also derive some formulae for the correlators of Miura operators in Section \ref{sec: Miura Operators as Intertwiners}.

\section{The Algebra \texorpdfstring{$\mathsf A^{(K)}$}{Ak}}\label{sec: DDCA}
Let $D_{\epsilon_2}(\mathbb C):=\mathbb C[\epsilon_2]\langle x,y\rangle/([y,x]=\epsilon_2)$ be the Weyl algebra (the algebra of $\epsilon_2$-differential operators on $\mathbb C$), and let $\mathfrak{gl}_K$ be the associative algebra of $K\times K$ complex matrices. Consider the Lie algebra $D_{\epsilon_2}(\mathbb C)\otimes \mathfrak{gl}_K$, which is a free module over the base ring $\mathbb C[\epsilon_2]$ with a basis as follows
\begin{align}
    \mathsf T_{n,m}(X):=\sym(x^my^n)\otimes X,\; X\in \mathfrak{gl}_K,
\end{align}
for all $(n,m)\in \mathbb N^2$. Here $\sym(\cdots)$ means averaging over all permutations.

\begin{definition}
We define the subspace $D_{\epsilon_2}(\mathbb C)\otimes \mathfrak{gl}_K^{\sim}\subset D_{\epsilon_2}(\mathbb C)\otimes \mathfrak{gl}_K[\epsilon_2^{-1}]$ to be the $\mathbb C[\epsilon_2]$-submodule generated by $\mathsf T_{n,m}(X)$ for all $X\in \gl_K, (n,m)\in \mathbb N^2$ and 
\begin{align}
    \mathsf t_{n,m}:=\frac{1}{\epsilon_2}\mathsf T_{n,m}(1),
\end{align}
for all $(n,m)\in \mathbb N^2$. 
\end{definition}

\begin{lemma}
$D_{\epsilon_2}(\mathbb C)\otimes \mathfrak{gl}_K^{\sim}$ is a Lie subalgebra of $D_{\epsilon_2}(\mathbb C)\otimes \mathfrak{gl}_K[\epsilon_2^{-1}]$.
\end{lemma}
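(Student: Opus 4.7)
My plan is to verify closure under the commutator case by case: since $D_{\epsilon_2}(\mathbb C)\otimes\gl_K^{\sim}$ is $\mathbb C[\epsilon_2]$-spanned by the $\mathsf T_{n,m}(X)$ together with the $\mathsf t_{n,m}$, there are three types of brackets to control. The main algebraic identity I will use is
\begin{equation*}
[A \otimes X, B \otimes Y] = \tfrac{1}{2} \{A,B\}_+ \otimes [X,Y] + \tfrac{1}{2} [A,B] \otimes \{X,Y\}_+,
\end{equation*}
valid for $A, B \in D_{\epsilon_2}(\mathbb C)$, $X, Y \in \gl_K$, where $\{U,V\}_+ := UV+VU$ is the anticommutator. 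The crucial algebraic input is the classical-limit fact $D_{\epsilon_2}(\mathbb C)/\epsilon_2 D_{\epsilon_2}(\mathbb C) \cong \mathbb C[x,y]$, which gives $[A,B] \in \epsilon_2\cdot D_{\epsilon_2}(\mathbb C)$ for all $A,B \in D_{\epsilon_2}(\mathbb C)$; throughout I will write $[A,B] = \epsilon_2 R$ with $R \in D_{\epsilon_2}(\mathbb C)$.

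For $[\mathsf T_{n,m}(X), \mathsf T_{n',m'}(Y)]$ I will decompose the anticommutator as $\{X,Y\}_+ = \{X,Y\}_+^{\circ} + \tfrac{2\mathrm{tr}(XY)}{K} I$ into its traceless and scalar parts. The $[X,Y]$ contribution and the $\{X,Y\}_+^{\circ}$ contribution land in $D_{\epsilon_2}(\mathbb C) \otimes \gl_K$ automatically, while the scalar part equals $\tfrac{\mathrm{tr}(XY)}{K}\epsilon_2 R \otimes I = \tfrac{\mathrm{tr}(XY)}{K}\epsilon_2^2 (\epsilon_2^{-1} R \otimes I)$, which is $\epsilon_2^2$ times an element in the $\mathbb C[\epsilon_2]$-span of the $\mathsf t_{n,m}$. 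For $[\mathsf T_{n,m}(X), \mathsf t_{n',m'}] = \epsilon_2^{-1}[\mathsf T_{n,m}(X), \mathsf T_{n',m'}(1)]$, the identity $[X,I]=0$ kills the $[X,Y]$ term and $\{X,I\}_+ = 2X$ reduces the remaining term to $\epsilon_2^{-1}[A,B]\otimes X = R\otimes X$, which lies in $D_{\epsilon_2}(\mathbb C)\otimes\gl_K$. For $[\mathsf t_{n,m}, \mathsf t_{n',m'}] = \epsilon_2^{-2}[\mathsf T_{n,m}(1), \mathsf T_{n',m'}(1)]$, only the anticommutator term survives and yields $\epsilon_2^{-2}\cdot \epsilon_2 R \otimes I = \epsilon_2^{-1} R \otimes I$, manifestly a $\mathbb C[\epsilon_2]$-combination of $\mathsf t_{n,m}$'s.

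There is essentially no obstacle beyond careful bookkeeping: the only substantive input is the $\epsilon_2$-divisibility of Weyl-algebra commutators, and this divisibility exactly cancels the $\epsilon_2^{-1}$ factors introduced by the $\mathsf t$-generators. The point requiring the most care is to recognize that a scalar $I$-valued output can appear only through the trace component of $\{X,Y\}_+$, and that this term is paired with $[A,B]$ rather than with $\{A,B\}_+$, so the saving factor of $\epsilon_2$ is always present precisely when it is needed.
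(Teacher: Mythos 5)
Your proof is correct and follows essentially the same route as the paper: both arguments reduce everything to the single fact that commutators in the Weyl algebra are divisible by $\epsilon_2$ (because the $\epsilon_2\to 0$ limit is commutative), and then check that this divisibility cancels the $\epsilon_2^{-1}$ factors carried by the $\mathsf t$-generators in each of the three bracket types. The only difference is cosmetic: your commutator/anticommutator identity makes the bookkeeping explicit, and your first case is more elaborate than necessary, since $[\mathsf T_{n,m}(X),\mathsf T_{n',m'}(Y)]$ already lies in $D_{\epsilon_2}(\mathbb C)\otimes\mathfrak{gl}_K$, which is contained in the submodule, with no trace decomposition required.
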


\begin{proof}
Note that the Lie bracket $[\mathsf T_{n,m}(1),\mathsf T_{p,q}(X)]$ has no $\mathcal O(\epsilon_2^0)$-term since its $\epsilon_2\to 0$ limit vanishes, so $\epsilon_2^{-1}[\mathsf T_{n,m}(1),\mathsf T_{p,q}(X)]$ is still inside $D_{\epsilon_2}(\mathbb C)\otimes \mathfrak{gl}_K$. Similarly $[\mathsf T_{n,m}(1),\mathsf T_{p,q}(1)]$ is a linear combination of $\mathsf T_{r,s}(1)$ for some $(r,s)$, and it is divisible by $\epsilon_2$, therefore $\epsilon_2^{-2}[\mathsf T_{n,m}(1),\mathsf T_{p,q}(1)]$ is still a linear combination of $\mathsf t_{r,s}$. This proves that $D_{\epsilon_2}(\mathbb C)\otimes \mathfrak{gl}_K^{\sim}$ is a Lie subalgebra.
\end{proof}

Note that $D_{\epsilon_2}(\mathbb C)\otimes \mathfrak{gl}_K^{\sim}$ is a free $\mathbb C[\epsilon_2]$ module with a basis $\{\mathsf T_{n,m}(X),\mathsf t_{n,m}\:|\: X\in\text{a basis of }\mathfrak{sl}_K,(n,m)\in \mathbb N^2\}$. 

\begin{definition}
We define the $\gl_K$-extended double current algebra to be universal enveloping algebra $U(D_{\epsilon_2}(\mathbb C)\otimes \mathfrak{gl}_K^{\sim})$ over the base ring $\mathbb C[\epsilon_2]$.
\end{definition}

A coordinate-free description of $D_{\epsilon_2}(\mathbb C)\otimes \mathfrak{gl}_K^{\sim}$ is as follows. Notice that $[D_{\epsilon_2}(\mathbb C),D_{\epsilon_2}(\mathbb C)\otimes \mathfrak{gl}_K]\subset \epsilon_2\cdot D_{\epsilon_2}(\mathbb C)\otimes \mathfrak{gl}_K$, therefore we can modify the Lie algebra by defining $D_{\epsilon_2}(\mathbb C)\otimes \mathfrak{gl}_K^{\sim}$ to be the $\mathbb C[\epsilon_2]$-submodule of $D_{\epsilon_2}(\mathbb C)\otimes \mathfrak{gl}_K[\epsilon_2^{-1}]$ generated by $D_{\epsilon_2}(\mathbb C)\otimes \mathfrak{sl}_K$ and $\frac{1}{\epsilon_2}\cdot D_{\epsilon_2}(\mathbb C)$, then $D_{\epsilon_2}(\mathbb C)\otimes \mathfrak{gl}_K^{\sim}$ is a Lie subalgebra of $D_{\epsilon_2}(\mathbb C)\otimes \mathfrak{gl}_K[\epsilon_2^{-1}]$, and it contains $D_{\epsilon_2}(\mathbb C)\otimes \mathfrak{gl}_K$ as a Lie subalgebra. Moreover its $\epsilon_2\to 0$ limit is 
\begin{align}
    D_{\epsilon_2}(\mathbb C)\otimes \mathfrak{gl}_K^{\sim}/(\epsilon_2=0)\cong \mathfrak{po}(\mathbb C^2)\ltimes \left(\mathscr O(\mathbb C\times\mathbb C)\otimes\mathfrak{sl}_K\right),
\end{align}
where $\mathfrak{po}(\mathbb C^2)$ is the Lie algebra of functions on $\mathbb C_x\times\mathbb C_y$ equipped with Poisson bracket $\{y,x\}=1$, and $\mathscr O(\mathbb C\times\mathbb C)$ is the function ring on $\mathbb C_x\times\mathbb C_y$ (considered as abelian Lie algebra), and $\mathfrak{po}(\mathbb C^2)$ naturally acts on the Lie algebra $\mathscr O(\mathbb C\times\mathbb C)\otimes\mathfrak{sl}_K$ via the Poisson bracket with the first tensor component. We shall define a deformation of $U(D_{\epsilon_2}(\mathbb C)\otimes \mathfrak{gl}_K^{\sim})$ in this section. 

\begin{remark}\label{rmk: modified current algebra in general}
More generally, one can define the Lie algebra $D_{\epsilon_2}(\mathcal M)\otimes \gl_K^{\sim}$ for any affine smooth variety $\mathcal M$ to be the $\mathbb C[\epsilon_2]$-submodule of $D_{\epsilon_2}(\mathcal M)\otimes \mathfrak{gl}_K[\epsilon_2^{-1}]$ generated by $D_{\epsilon_2}(\mathcal M)\otimes \mathfrak{sl}_K$ and $\frac{1}{\epsilon_2}\cdot D_{\epsilon_2}(\mathcal M)$. Then $D_{\epsilon_2}(\mathcal M)\otimes \gl_K^{\sim}$ contains $D_{\epsilon_2}(\mathcal M)\otimes \gl_K$ as a Lie subalgebra, and $D_{\epsilon_2}(\mathcal M)\otimes \mathfrak{gl}_K^{\sim}/(\epsilon_2=0)$ is isomorphic to $\mathfrak{po}(T^*\mathcal M)\ltimes \left(\mathscr O(T^*\mathcal M)\otimes\mathfrak{sl}_K\right)$, where $\mathfrak{po}(T^*\mathcal M)$ is the Lie algebra of functions on cotangent bundle $T^*\mathcal M$ equipped with standard Poisson bracket.
\end{remark}

\begin{definition}\label{def: A^K}
We define $\mathsf A^{(K)}$ to be the $\mathbb C[\epsilon_1,\epsilon_2]$-algebra generated by $\{\mathsf T_{n,m}(X),\mathsf t_{n,m}\:|\: X\in\mathfrak{gl}_K,(n,m)\in \mathbb N^2\}$ with the relations \eqref{eqn: A0}-\eqref{eqn: A4} as follows.
\begin{equation}\label{eqn: A0}
    \mathsf T_{n,m}(1)=\epsilon_2 \mathsf t_{n,m},\; \mathsf T_{n,m}(aX+bY)=a\mathsf T_{n,m}(X)+b\mathsf T_{n,m}(Y),\forall (a,b)\in \mathbb C^2,\tag{A0}
\end{equation}
\begin{equation}\label{eqn: A1}
    [\mathsf T_{0,0}(X),\mathsf T_{0,n}(Y)]=\mathsf T_{0,n}([X,Y]),\; [\mathsf T_{0,0}(X),\mathsf t_{0,n}]=0,\tag{A1}
\end{equation}
\begin{equation}\label{eqn: A2}
\text{ for }p+q\le 2,
\begin{cases}
&[\mathsf t_{p,q},\mathsf T_{n,m}(X)]=(mp-nq)\mathsf T_{p+n-1,q+m-1}(X),\\
&[\mathsf t_{p,q},\mathsf t_{n,m}]=(mp-nq)\mathsf t_{p+n-1,q+m-1},
\end{cases}
\tag{A2}
\end{equation}
To write down \eqref{eqn: A3}-\eqref{eqn: A4}, we introduce notation $\epsilon_3=-K\epsilon_1-\epsilon_2$, and $$\mathsf T_{u,r,t,s}(X\otimes Y):=\mathsf T_{u,r}(X)\mathsf T_{t,s}(Y)$$ for $X,Y\in \mathfrak{gl}_K$, and $\Omega:=E^a_b\otimes E^b_a\in \mathfrak{gl}_K^{\otimes 2}$, then
\begin{equation}\label{eqn: A3}
\begin{cases}
    &\begin{aligned}
        [\mathsf T_{1,0}(X),\mathsf T_{0,n}(Y)]=&\mathsf T_{1,n}([X,Y])-\frac{\epsilon_3 n}{2}\mathsf T_{0,n-1}(\{X,Y\})-n\epsilon_1\mathrm{tr}(Y) \mathsf T_{0,n-1}(X)\\
&+\epsilon_1\sum_{m=0}^{n-1}\frac{m+1}{n+1}\mathsf T_{0,m,0,n-1-m}(([X,Y]\otimes 1)\cdot \Omega)\\
&+\epsilon_1 \sum_{m=0}^{n-1}\mathsf T_{0,m,0,n-1-m}((X\otimes Y-XY\otimes 1)\cdot \Omega)
    \end{aligned}\\
    &[\mathsf T_{1,0}(X),\mathsf t_{0,n}]=n \mathsf T_{0,n-1}(X),
\end{cases}\tag{A3}
\end{equation}
\begin{equation}\label{eqn: A4}
    \begin{split}
        &[\mathsf t_{3,0},\mathsf t_{0,n}]=3n \mathsf t_{2,n-1}+\frac{n(n-1)(n-2)}{4}(\epsilon_1^2-\epsilon_2\epsilon_3) \mathsf t_{0,n-3}\\
&-\frac{3\epsilon_1}{2}\sum_{m=0}^{n-3}(m+1)(n-2-m)(\mathsf T_{0,m,0,n-3-m}(\Omega)+\epsilon_1\epsilon_2\mathsf t_{0,m}\mathsf t_{0,n-3-m}), \;(n\ge 3)
    \end{split}\tag{A4}
\end{equation}
\end{definition}

It is easy to see that there is a $\mathbb C[\epsilon_2]$-algebra homomorphism $\mathsf A^{(K)}/(\epsilon_1)\to U(D_{\epsilon_2}(\mathbb C)\otimes \mathfrak{gl}_K^{\sim})$ sending $\mathsf T_{n,m}(X),\mathsf t_{n,m}\in \mathsf A^{(K)}/(\epsilon_1)$ to the same symbols in $D_{\epsilon_2}(\mathbb C)\otimes \mathfrak{gl}_K^{\sim}$. In the later subsection we will show deduce from a PBW theorem for $\mathsf A^{(K)}$ that this is an isomorphism, see Corollary \ref{cor: A is DDCA}.

As a preliminary observation, \eqref{eqn: A2} implies that $\{\mathsf t_{2,0},\mathsf t_{1,1},\mathsf t_{0,2}\}$ forms an $\mathfrak{sl}_2$ triple and its adjoint action on $\mathsf A^{(K)}$ integrates to an $\SL_2$ action:
\begin{align}\label{eqn: sl2 action}
    \begin{pmatrix}
a&b\\c&d
    \end{pmatrix}
    \curvearrowright \mathsf A^{(K)}: \;
    \begin{split}
        \mathsf T(X;u,v)&\mapsto \mathsf T(X;au+cv,bu+dv),\\
        \mathsf t(u,v)&\mapsto \mathsf t(au+cv,bu+dv),
    \end{split} 
\end{align}
where $\mathsf T(X;u,v):=\sum_{m,n\in \mathbb N^2}\mathsf T_{m,n}(X)u^mv^n$ and $\mathsf t(u,v):=\sum_{m,n\in \mathbb N^2}\mathsf t_{m,n}u^mv^n$ are generating series. In particular the $\SL_2$ element 
\begin{align}\label{eqn: automorphism tau}
\tau=\begin{pmatrix}
0&1\\-1&0
    \end{pmatrix}
\end{align}
acts on generators by $\tau(\mathsf t_{n,m})=(-1)^m \mathsf t_{m,n},\;\tau(\mathsf T_{n,m}(X))=(-1)^m \mathsf T_{m,n}(X)$.

\begin{definition}\label{def: subalg of A}
The algebra $\mathsf B^{(K)}$ is defined as the $\mathbb C[\epsilon_1,\epsilon_2]$-subalgebra of $\mathsf A^{(K)}$ generated by $\{\mathsf T'_{n,m}(X):=\epsilon_1\mathsf T_{n,m}(X), \mathsf t'_{n,m}:=\epsilon_1\mathsf t_{n,m}\:|\: X\in \mathfrak{gl}_K,(n,m)\in \mathbb N^2\}$.
The algebra $\mathsf D^{(K)}$ is defined as the $\mathbb C[\epsilon_1,\epsilon_2]$-subalgebra of $\mathsf A^{(K)}$ generated by $\{\mathsf T_{n,m}(X)\:|\: X\in \mathfrak{gl}_K,(n,m)\in \mathbb N^2\}$. If $K>1$, then the algebra $\mathbb D^{(K)}$ is defined as the $\mathbb C[\epsilon_1,\epsilon_2]$-subalgebra of $\mathsf A^{(K)}$ generated by $\{\mathsf T_{n,m}(X)\:|\: X\in \mathfrak{sl}_K,(n,m)\in \{0,1\}^2\}$.
\end{definition}

It is obvious from the definition that $\mathsf D^{(K)}[\epsilon_2^{-1}]=\mathsf A^{(K)}[\epsilon_2^{-1}]$, and that $\mathsf B^{(K)}[\epsilon_1^{-1}]=\mathsf A^{(K)}[\epsilon_1^{-1}]$. Later we will show that $\mathbb D^{(K)}[\epsilon_3^{-1}]=\mathsf D^{(K)}[\epsilon_3^{-1}]$, see Corollary \ref{cor: compare two DDCAs}.

It is also obvious from the definition that the $\SL_2$ action in \eqref{eqn: sl2 action} preserves the subalgebras $\mathsf B^{(K)},\mathsf D^{(K)}$, and $\mathbb D^{(K)}$.\\

There is a natural grading on $\mathsf A^{(K)}$ induced by the Cartan of $\SL_2$ action mentioned above, which is equivalent to setting
\begin{align}\label{eqn: grading on A}
    \deg \mathsf T_{n,m}(E^a_b)=\deg \mathsf t_{n,m}=m-n,\quad\deg\epsilon_1=\deg\epsilon_2=0,
\end{align}
so $\mathsf A^{(K)}$ is a $\mathbb Z$-graded algebra and $\mathsf D^{(K)}$ is a homogeneous subalgebra.\\

Before we proceed to the discussions on the properties of $\mathsf A^{(K)}$, let us write down the generators and relations of $\mathsf A^{(1)}$ in a more compact form.
\begin{lemma}\label{lem: relations A^1}
$\mathsf A^{(1)}$ is the $\mathbb C[\epsilon_1,\epsilon_2]$-algebra generated by $\{\mathsf t_{n,m}\:|\: (n,m)\in \mathbb N^2\}$ with the following relations
\begin{equation}\label{eqn: A1, K=1}
    \text{ for }p+q\le 2,\; [\mathsf t_{p,q},\mathsf t_{n,m}]=(mp-nq)\mathsf t_{p+n-1,q+m-1}
\end{equation}
\begin{equation}\label{eqn: A2, K=1}
\begin{split}
        [\mathsf t_{3,0},\mathsf t_{0,n}]=~&3n \mathsf t_{2,n-1}-\frac{n(n-1)(n-2)}{4}\sigma_2 \mathsf t_{0,n-3}\\
&+\frac{3\sigma_3}{2}\sum_{m=0}^{n-3}(m+1)(n-2-m)\mathsf t_{0,m}\mathsf t_{0,n-3-m}, \;(n\ge 3)
    \end{split}
\end{equation}
where we set $\sigma_2=\epsilon_1\epsilon_2+\epsilon_2\epsilon_3+\epsilon_3\epsilon_1$ and $\sigma_3=\epsilon_1\epsilon_2\epsilon_3$.
\end{lemma}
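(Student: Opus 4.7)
The plan is to show that setting $K=1$ in Definition \ref{def: A^K} collapses the generators and relations to exactly the ones stated. I would proceed by treating each of the defining relations (A0)--(A4) in turn and showing how it specializes.

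First, for $K=1$ we have $\gl_1 = \mathbb C$, so any $X \in \gl_1$ is a scalar multiple of $1$. Then relation (A0) gives $\mathsf T_{n,m}(X) = X \epsilon_2 \mathsf t_{n,m}$, which eliminates all $\mathsf T$-generators in favor of the $\mathsf t$-generators; so $\mathsf A^{(1)}$ is indeed generated by $\{\mathsf t_{n,m}\}$. Relation (A1) becomes vacuous since $[X,Y] = 0$ for scalars and the second equation is the $(p,q)=(0,0)$ case of (A2). Relation (A2) (rewritten via $\mathsf T_{n,m}(X) = X\epsilon_2 \mathsf t_{n,m}$) becomes exactly \eqref{eqn: A1, K=1} for $p+q \le 2$.

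Next I would handle (A3). With $K=1$, one has $[X,Y] = 0$, $XY = X \otimes 1$ componentwise, so $X\otimes Y - XY \otimes 1 = 0$ and $\Omega = 1 \otimes 1$, making both sums on the right-hand side vanish. After substituting $\mathsf T_{1,0}(X) = X\epsilon_2 \mathsf t_{1,0}$ and $\mathsf T_{0,n}(Y) = Y\epsilon_2 \mathsf t_{0,n}$, the surviving terms $-\tfrac{\epsilon_3 n}{2}\mathsf T_{0,n-1}(\{X,Y\}) - n\epsilon_1 \mathrm{tr}(Y)\mathsf T_{0,n-1}(X)$ combine to $-n\epsilon_2(\epsilon_3 + \epsilon_1)\mathsf t_{0,n-1}$. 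Using the constraint $\epsilon_1 + \epsilon_2 + \epsilon_3 = 0$, this equals $n\epsilon_2^2 \mathsf t_{0,n-1}$, so dividing by $\epsilon_2^2$ recovers $[\mathsf t_{1,0}, \mathsf t_{0,n}] = n\mathsf t_{0,n-1}$, which is the $(p,q)=(1,0)$ instance of \eqref{eqn: A1, K=1}. Hence (A3) is redundant for $K=1$.

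Finally, for (A4), the key is to translate the coefficients using the constraint $\epsilon_1 + \epsilon_2 + \epsilon_3 = 0$. On the one hand, $\mathsf T_{0,m,0,n-3-m}(\Omega) = \epsilon_2^2 \mathsf t_{0,m}\mathsf t_{0,n-3-m}$, so the bracketed term in the sum becomes $\epsilon_2(\epsilon_2+\epsilon_1)\mathsf t_{0,m}\mathsf t_{0,n-3-m} = -\epsilon_2\epsilon_3 \mathsf t_{0,m}\mathsf t_{0,n-3-m}$, and the overall coefficient $-\tfrac{3\epsilon_1}{2}(-\epsilon_2\epsilon_3) = \tfrac{3\sigma_3}{2}$. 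On the other hand, a short manipulation using $(\epsilon_1+\epsilon_2+\epsilon_3)^2 = 0$ gives $\epsilon_1^2 - \epsilon_2\epsilon_3 = -\sigma_2$, matching the coefficient of $\mathsf t_{0,n-3}$ in \eqref{eqn: A2, K=1}. This reproduces \eqref{eqn: A2, K=1} exactly.

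These verifications show that every defining relation of $\mathsf A^{(1)}$ is either trivial, redundant, or equivalent to one of \eqref{eqn: A1, K=1}--\eqref{eqn: A2, K=1}, and conversely both listed relations are specializations of (A2) and (A4). No step is difficult; the only place to be slightly careful is the arithmetic in (A4), where one must track the signs and symmetric functions $\sigma_2, \sigma_3$ consistently. I expect this to be the main (and only real) calculational obstacle.
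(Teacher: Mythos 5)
Your proposal is correct and follows essentially the same route as the paper's (much terser) proof: use (A0) to eliminate the $\mathsf T$-generators, observe that (A1), (A3) and the first line of (A2) become consequences of the second line of (A2), and check that (A4) specializes to \eqref{eqn: A2, K=1} via $\epsilon_1^2-\epsilon_2\epsilon_3=-\sigma_2$ and $\epsilon_2(\epsilon_1+\epsilon_2)=-\epsilon_2\epsilon_3$. The only cosmetic remark is that (A1) and (A3) are not vacuous but rather redundant (each is an $\epsilon_2$-multiple of an instance of \eqref{eqn: A1, K=1}), which is exactly what your computation shows.
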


\begin{proof}
In the case $K=1$, the relation \eqref{eqn: A0} in Definition \ref{def: A^K} simply says that $\mathsf T_{n,m}(x),x\in \mathbb C$ is redundant and it equals to $x\epsilon_2\mathsf t_{n,m}$. Then relations \eqref{eqn: A1} and \eqref{eqn: A3} and the first line of \eqref{eqn: A2} are special cases of the second line of \eqref{eqn: A2}.
\end{proof}

To conclude the definition of $\mathsf A^{(K)}$, let us record a commutation relation which will be useful in the later subsections.
\begin{lemma}
The following relations hold in $\mathsf A^{(K)}$
\begin{equation}\label{eqn: A5}
    [\mathsf T_{n,0}(X),\mathsf t_{2,1}]=n\mathsf T_{n+1,0}(X),\quad [\mathsf t_{n,0},\mathsf t_{2,1}]=n\mathsf t_{n+1,0}.\tag{A5}
\end{equation}
\end{lemma}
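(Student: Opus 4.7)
The plan is to reduce both relations to the computation of $[\mathsf T_{n,0}(X),\mathsf t_{0,3}]$ and $[\mathsf t_{n,0},\mathsf t_{0,3}]$, via the identity
\[
\mathsf t_{2,1} \;=\; \tfrac{1}{24}\,\mathrm{ad}(\mathsf t_{2,0})^{2}\,\mathsf t_{0,3},
\]
which follows from two applications of \eqref{eqn: A2} with $(p,q)=(2,0)$: indeed $[\mathsf t_{2,0},\mathsf t_{0,3}]=6\mathsf t_{1,2}$ and $[\mathsf t_{2,0},\mathsf t_{1,2}]=4\mathsf t_{2,1}$. The crucial secondary input, also from \eqref{eqn: A2} with $(p,q)=(2,0)$, is that $[\mathsf t_{2,0},\mathsf T_{n,0}(X)]=0$ and $[\mathsf t_{2,0},\mathsf t_{n,0}]=0$ for every $n$ (the coefficient $2m$ vanishes when $m=0$). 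The Jacobi identity then pulls $\mathrm{ad}(\mathsf t_{2,0})^{2}$ to the outside, giving
\[
[\mathsf T_{n,0}(X),\mathsf t_{2,1}] = \tfrac{1}{24}\,\mathrm{ad}(\mathsf t_{2,0})^{2}\,[\mathsf T_{n,0}(X),\mathsf t_{0,3}],\qquad [\mathsf t_{n,0},\mathsf t_{2,1}] = \tfrac{1}{24}\,\mathrm{ad}(\mathsf t_{2,0})^{2}\,[\mathsf t_{n,0},\mathsf t_{0,3}].
\]

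For the second relation, applying the $\SL_{2}$-automorphism $\tau$ of \eqref{eqn: automorphism tau} to \eqref{eqn: A4} produces a formula
\[
[\mathsf t_{n,0},\mathsf t_{0,3}] \;=\; 3n\,\mathsf t_{n-1,2} \;+\; (\text{polynomial expressions in } \mathsf t_{k,0} \text{ and } \mathsf T_{k,0}(\cdot)),
\]
because the original corrections in \eqref{eqn: A4} live in the ``$n=0$ row'' and $\tau$ swaps the two indices (up to signs) into the ``$m=0$ column''. Every generator with second index $0$ commutes with $\mathsf t_{2,0}$, so the corrections are killed by $\mathrm{ad}(\mathsf t_{2,0})^{2}$, while \eqref{eqn: A2} yields $\mathrm{ad}(\mathsf t_{2,0})^{2}\mathsf t_{n-1,2}=8\,\mathsf t_{n+1,0}$. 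Substitution produces $[\mathsf t_{n,0},\mathsf t_{2,1}] = n\,\mathsf t_{n+1,0}$.

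For the first relation the bracket $[\mathsf T_{n,0}(X),\mathsf t_{0,3}]$ is not provided directly by \eqref{eqn: A0}--\eqref{eqn: A4}, so I would first establish by induction on $n$ that it has the form $3n\,\mathsf T_{n-1,2}(X)$ plus correction terms in the $m=0$ column. The base cases $n=0,1$ follow from \eqref{eqn: A1} and \eqref{eqn: A3} respectively (the $n=1$ case giving exactly $3\mathsf T_{0,2}(X)$ with no correction), and the inductive step uses $\mathsf T_{n,0}(X) = \tfrac{1}{2}[\mathsf t_{2,0},\mathsf T_{n-1,1}(X)]$ from \eqref{eqn: A2} together with Jacobi to reduce to lower-$n$ brackets. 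Once this structural claim holds, the argument of the previous paragraph yields $[\mathsf T_{n,0}(X),\mathsf t_{2,1}] = n\,\mathsf T_{n+1,0}(X)$. The main obstacle is to verify in this induction that every correction term indeed lands in the $m=0$ column; the $\epsilon_1$-corrections of \eqref{eqn: A3} contribute double-tensor $\mathsf T_{u,r,t,s}((\cdots)\cdot\Omega)$ expressions that must be tracked carefully through iterated Jacobi manipulations, and one must confirm that they ultimately take the form $\mathsf T_{k,0}(\cdot)\mathsf T_{l,0}(\cdot)$ (times $\mathsf t$-type factors), so that they commute with $\mathsf t_{2,0}$ and are annihilated by $\mathrm{ad}(\mathsf t_{2,0})^{2}$.
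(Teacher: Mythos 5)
Your reduction of both brackets to $\tfrac{1}{24}\,\mathrm{ad}(\mathsf t_{2,0})^{2}$ applied to $[\,\cdot\,,\mathsf t_{0,3}]$ is sound, and your treatment of the second relation works: applying $\tau$ to \eqref{eqn: A4} does turn the correction terms into polynomials in generators with vanishing second index, which commute with $\mathsf t_{2,0}$ by \eqref{eqn: A2} and the Leibniz rule, so only $3n\,\mathsf t_{n-1,2}$ survives and $[\mathsf t_{n,0},\mathsf t_{2,1}]=n\,\mathsf t_{n+1,0}$ follows. (You should note separately that the cases $n\le 2$ are covered directly by \eqref{eqn: A2}, since \eqref{eqn: A4} is only stated for $n\ge 3$; this is routine.) This is a legitimate variant of the paper's route, which instead writes $\mathsf t_{2,1}=-\tfrac16\,\mathrm{ad}_{\mathsf t_{0,2}}(\mathsf t_{3,0})$ and works with $[\mathsf t_{3,0},\,\cdot\,]$ throughout.

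The first relation, however, has a genuine gap. Your whole argument hinges on knowing $[\mathsf T_{n,0}(X),\mathsf t_{0,3}]$ modulo the kernel of $\mathrm{ad}(\mathsf t_{2,0})^{2}$, and this bracket is not supplied by \eqref{eqn: A0}--\eqref{eqn: A4}: it is essentially equivalent to the statement being proved, since $\mathrm{ad}(\mathsf t_{2,0})^{2}([\mathsf T_{n,0}(X),\mathsf t_{0,3}])=24\,[\mathsf T_{n,0}(X),\mathsf t_{2,1}]$. The induction you sketch does not close as set up: writing $\mathsf T_{n,0}(X)=\tfrac12[\mathsf t_{2,0},\mathsf T_{n-1,1}(X)]$ and applying Jacobi produces the brackets $[\mathsf T_{n-1,1}(X),\mathsf t_{0,3}]$ and $[\mathsf T_{n-1,1}(X),\mathsf t_{1,2}]$ (since $[\mathsf t_{2,0},\mathsf t_{0,3}]=6\mathsf t_{1,2}\ne 0$), neither of which is an instance of your inductive hypothesis, so you would need to control an ever-growing family of mixed brackets whose correction terms you admit you cannot track. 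The idea you are missing is the one the paper uses to transfer the computation to the pure-$\mathsf t$ relation \eqref{eqn: A4}: first deduce from \eqref{eqn: A2} and \eqref{eqn: A3} that $[\mathsf T_{1,0}(X),\mathsf t_{m,0}]=0$ for all $m$ (apply $\mathrm{ad}_{\mathsf t_{2,0}}^{m}$ to $[\mathsf T_{1,0}(X),\mathsf t_{0,m}]=m\mathsf T_{0,m-1}(X)$ and note the result dies by local nilpotency). Then $\mathsf T_{0,n}(X)=\tfrac{1}{n+1}[\mathsf T_{1,0}(X),\mathsf t_{0,n+1}]$ together with $[\mathsf t_{3,0},\mathsf T_{1,0}(X)]=0$ gives
\begin{equation*}
[\mathsf t_{3,0},\mathsf T_{0,n}(X)]=\tfrac{1}{n+1}\,\mathrm{ad}_{\mathsf T_{1,0}(X)}\bigl([\mathsf t_{3,0},\mathsf t_{0,n+1}]\bigr),
\end{equation*}
so the only input is \eqref{eqn: A4}, whose correction terms are annihilated by the high powers of $\mathrm{ad}_{\mathsf t_{2,0}}$ one subsequently applies for spin reasons. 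From this one obtains $[\mathsf t_{3,0},\mathsf T_{n-1,1}(X)]=3\mathsf T_{n+1,0}(X)$ and $[\mathsf t_{3,0},\mathsf T_{n,0}(X)]=0$, and the first relation then follows by one application of Jacobi exactly as in your scheme. Without some such device for converting the mixed bracket into a derivative of the scalar relation \eqref{eqn: A4}, your proof of $[\mathsf T_{n,0}(X),\mathsf t_{2,1}]=n\mathsf T_{n+1,0}(X)$ is incomplete.
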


\begin{proof}
For the first equation, let us first prove two related equations: 
\begin{align}\label{eqn: A5 related}
    [\mathsf t_{3,0},\mathsf T_{n-1,1}(X)]=3\mathsf T_{n+1,0}(X),\quad  [\mathsf t_{3,0},\mathsf T_{n,0}(X)]=0.
\end{align}
By \eqref{eqn: A2} and \eqref{eqn: A3}, we have $[\mathsf T_{1,0}(X),\mathsf t_{n,0}]=0$, so we have 
\begin{equation*}
\begin{split}
 &[\mathsf t_{3,0},\mathsf T_{n-1,1}(X)]=\frac{1}{2^{n-1}n!}\mathrm{ad}_{\mathsf t_{2,0}}^{n-1}([\mathsf t_{3,0},\mathsf T_{0,n}(X)])=\frac{1}{2^{n-1}(n+1)!}\mathrm{ad}_{\mathsf t_{2,0}}^{n-1}(\mathrm{ad}_{\mathsf T_{1,0}(X)}([\mathsf t_{3,0},\mathsf t_{0,n+1}]))\\
&=\frac{1}{2^{n-1}(n+1)!}\mathrm{ad}_{\mathsf T_{1,0}(X)}(\mathrm{ad}_{\mathsf t_{2,0}}^{n-1}([\mathsf t_{3,0},\mathsf t_{0,n+1}]))=\frac{1}{2^{n-1}(n+1)!}\mathrm{ad}_{\mathsf T_{1,0}(X)}(\mathrm{ad}_{\mathsf t_{2,0}}^{n-1}(3(n+1)\mathsf t_{2,n}))\\
&=\frac{3}{2^{n+1}(n+2)!}\mathrm{ad}_{\mathsf T_{1,0}(X)}(\mathrm{ad}_{\mathsf t_{2,0}}^{n+1}(\mathsf t_{0,n+2}))=\frac{3}{2^{n+1}(n+2)!}\mathrm{ad}_{\mathsf t_{2,0}}^{n+1}(\mathrm{ad}_{\mathsf T_{1,0}(X)}(\mathsf t_{0,n+2}))=3\mathsf T_{n+1,0}(X),
\end{split}
\end{equation*}
and similarly
\begin{equation*}
\begin{split}
 &[\mathsf t_{3,0},\mathsf T_{n,0}(X)]=\frac{1}{2^{n}n!}\mathrm{ad}_{\mathsf t_{2,0}}^{n}([\mathsf t_{3,0},\mathsf T_{0,n}(X)])=\frac{3}{2^{n+2}(n+2)!}\mathrm{ad}_{\mathsf t_{2,0}}^{n+2}(\mathrm{ad}_{\mathsf T_{1,0}(X)}(\mathsf t_{0,n+2}))=0.
\end{split}
\end{equation*}
Using \eqref{eqn: A5 related}, we derive
\begin{equation*}
\begin{split}
&[\mathsf T_{n,0}(X),\mathsf t_{2,1}]=-\frac{1}{6}[\mathsf T_{n,0}(X),\mathrm{ad}_{\mathsf t_{0,2}}(\mathsf t_{3,0})]=\frac{1}{6}([\mathrm{ad}_{\mathsf t_{0,2}}(\mathsf T_{n,0}(X)),\mathsf t_{3,0}]-\mathrm{ad}_{\mathsf t_{0,2}}([\mathsf T_{n,0}(X),\mathsf t_{3,0}]))\\
&=-\frac{n}{3}[\mathsf T_{n-1,1}(X),\mathsf t_{3,0}]=n\mathsf T_{n+1,0}(X).
\end{split}
\end{equation*}
The other equation $[\mathsf t_{n,0},\mathsf t_{2,1}]=n\mathsf t_{n+1,0}$ can be derived in the similar way and we omit the detail.
\end{proof}

\subsection{A filtration on \texorpdfstring{$\mathsf A^{(K)}$}{Ak}}\label{subsec: flitration on A}
We define an increasing filtration $0=F_{-1}\mathsf A^{(K)}\subset F_0\mathsf A^{(K)}\subset F_1\mathsf A^{(K)}\cdots$ as follows. Define the degree on generators as 
\begin{align}
    \deg_F \epsilon_1=\deg_F \epsilon_2=0,\; \deg_F \mathsf T_{n,m}(X)=\deg_F \mathsf t_{n,m}=n+m+1,
\end{align}
and this gives rise to a grading on the tensor algebra $\mathbb C[\epsilon_1,\epsilon_2]\langle \mathsf T_{n,m}(X), \mathsf t_{n,m}\:|\: X\in \mathfrak{gl}_K,(n,m)\in \mathbb N^2\rangle$. We define $F_{i}\mathsf A^{(K)}$ to be the image of the span of homogeneous elements in the tensor algebra of degrees $\le i$. We shall call it the diagonal filtration on $\mathsf A^{(K)}$

\begin{proposition}\label{prop: filtration}
For all $(n,m,p,q)\in \mathbb N^4$ and $X,Y$ chosen from a basis of $\mathfrak{gl}_K$, there exists 
\begin{equation}
    \begin{split}
        f_{n,m,p,q}^{X,Y}&\in F_{n+m+p+q}\mathbb C[\epsilon_1,\epsilon_2]\langle \mathsf T_{i,j}(Z)\:|\: Z\in \mathfrak{gl}_K, (i,j)\in \mathbb N^2\rangle,\\
        g_{n,m,p,q}^{X}&\in F_{n+m+p+q-2}\mathbb C[\epsilon_1,\epsilon_2]\langle \mathsf T_{i,j}(Z)\:|\: Z\in \mathfrak{gl}_K, (i,j)\in \mathbb N^2\rangle,\\
        h_{n,m,p,q}&\in F_{n+m+p+q-2}\mathbb C[\epsilon_1,\epsilon_2]\langle \mathsf T_{i,j}(Z), \mathsf t_{i,j}\:|\: Z\in \mathfrak{gl}_K, (i,j)\in \mathbb N^2\rangle,\\
    \end{split}
\end{equation}
such that following equations hold in $\mathsf A^{(K)}$
\begin{align}\label{eqn_schematic commutators 1}
    [\mathsf T_{n,m}(X),\mathsf T_{p,q}(Y)]=\mathsf T_{n+p,m+q}([X,Y])+\bar f_{n,m,p,q}^{X,Y},
\end{align}
\begin{align}\label{eqn_schematic commutators 2}
    [\mathsf t_{n,m},\mathsf T_{p,q}(X)]=(nq-mp)\mathsf T_{n+p-1,m+q-1}(X)+\bar g_{n,m,p,q}^{X},
\end{align}
\begin{align}\label{eqn_schematic commutators 3}
    [\mathsf t_{n,m},\mathsf t_{p,q}]=(nq-mp)\mathsf t_{n+p-1,m+q-1}+\bar h_{n,m,p,q},
\end{align}
where $\bar f_{n,m,p,q}^{X,Y}$ (resp. $\bar g_{n,m,p,q}^{X}$, resp. $\bar h_{n,m,p,q}$) is the image of $f_{n,m,p,q}^{X,Y}$ (resp. $g_{n,m,p,q}^{X}$, resp. $h_{n,m,p,q}$) in $\mathsf A^{(K)}$.
\end{proposition}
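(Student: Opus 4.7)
The plan is to prove \eqref{eqn_schematic commutators 1}--\eqref{eqn_schematic commutators 3} simultaneously by strong induction on the total index $N := n+m+p+q$. The assertion has two parts: identification of the leading term, and a filtration bound on the correction. For the latter, the key structural fact is that the diagonal filtration is multiplicative ($F_iF_j\subseteq F_{i+j}$), so by the Leibniz rule it suffices to verify the bound on commutators of generators. The leading-term identification amounts to saying that $\mathrm{gr}_F\mathsf A^{(K)}$, as a Poisson algebra, is a quotient of the Poisson enveloping algebra of $\mathfrak{po}(\mathbb C^2)\ltimes(\mathscr O(\mathbb C^2)\otimes\mathfrak{sl}_K)$ (cf.\ Remark~\ref{rmk: modified current algebra in general}), whose Poisson brackets between the classes of $\mathsf T_{n,m}(X)$ and $\mathsf t_{n,m}$ take exactly the forms appearing on the right-hand sides of \eqref{eqn_schematic commutators 1}--\eqref{eqn_schematic commutators 3}.

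For the base of the induction I would take \eqref{eqn: A1}--\eqref{eqn: A4} together with \eqref{eqn: A5} and verify by inspection that in each relation the leading term and the correction are as claimed. These cover $[\mathsf T_{0,0}(X),\cdot]$, $[\mathsf t_{p,q},\cdot]$ with $p+q\le 2$, $[\mathsf T_{1,0}(X),\mathsf T_{0,n}(Y)]$, $[\mathsf T_{1,0}(X),\mathsf t_{0,n}]$, $[\mathsf t_{3,0},\mathsf t_{0,n}]$, $[\mathsf T_{n,0}(X),\mathsf t_{2,1}]$, and $[\mathsf t_{n,0},\mathsf t_{2,1}]$. The $\SL_2$-action \eqref{eqn: sl2 action} preserves the diagonal filtration and linearly permutes generators within fixed $n+m$, so $\SL_2$-translation extends the base cases to entire $\SL_2$-orbits.

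For the inductive step, given a commutator $[g_1,g_2]$ of total index $N$ not already in the base, I would rewrite $g_1$ as a bracket of generators of smaller index using
\begin{equation*}
\mathsf T_{n,m}(X)=\tfrac{1}{2(m+1)}[\mathsf t_{2,0},\mathsf T_{n-1,m+1}(X)]\;(n\ge 1),\quad \mathsf T_{n,0}(X)=\tfrac{1}{n-1}[\mathsf T_{n-1,0}(X),\mathsf t_{2,1}]\;(n\ge 2),
\end{equation*}
and their $\mathsf t$-analogues (all consequences of \eqref{eqn: A2} and \eqref{eqn: A5}). Jacobi then rewrites $[g_1,g_2]=c^{-1}\bigl([g_1',[g_1'',g_2]]-[g_1'',[g_1',g_2]]\bigr)$, the inner commutators having strictly smaller total index whenever the decomposition is nontrivial in the inductive sense. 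The inductive hypothesis identifies their leading terms and bounds their corrections; the outer commutator with $g_1'$ or $g_1''$---again an application of \eqref{eqn_schematic commutators 2}--\eqref{eqn_schematic commutators 3} at smaller index, or of \eqref{eqn: A2}---then produces the claimed filtration bound by the Leibniz observation.

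The main obstacle is the boundary of the induction: for small $n,m$ the proposed decompositions do not give strict reductions (e.g.\ the \eqref{eqn: A5}-decomposition of $\mathsf T_{2,0}(X)$ or $\mathsf T_{3,0}(X)$ introduces $\mathsf t_{2,1}$ of index $3$, so neither factor is strictly smaller). These low-order cases must therefore be handled by a slightly enlarged set of base cases built via $\SL_2$-translation from the core relations. The remaining task---checking that leading-term scalars combine correctly across the Jacobi expansions---is dispatched by the Poisson-algebra observation in the first paragraph: once the filtration bounds are in place, the leading-term formulas in $\mathrm{gr}_F\mathsf A^{(K)}$ are forced by the Poisson structure of $\mathfrak{po}(\mathbb C^2)\ltimes(\mathscr O(\mathbb C^2)\otimes\mathfrak{sl}_K)$, and the induction only needs to propagate these filtration bounds.
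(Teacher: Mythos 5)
Your overall strategy---reduce to the defining relations by writing one factor of the commutator as a bracket of lower-order generators and applying the Jacobi identity, using the $\SL_2$-action and \eqref{eqn: A5} to move between generators---is the right one and is essentially the paper's. But the induction as you have set it up does not close, and the reason is structural, not confined to the ``boundary cases'' you flag. If you decompose $\mathsf T_{n,m}(X)=\tfrac{1}{2(m+1)}[\mathsf t_{2,0},\mathsf T_{n-1,m+1}(X)]$ and expand $[\mathsf T_{n,m}(X),\mathsf T_{p,q}(Y)]$ by Jacobi, the inner commutator $[\mathsf T_{n-1,m+1}(X),\mathsf T_{p,q}(Y)]$ has total index $(n-1)+(m+1)+p+q=N$, and the other term involves $[\mathsf T_{n-1,m+1}(X),[\mathsf t_{2,0},\mathsf T_{p,q}(Y)]]$, again of total index $N$ since $[\mathsf t_{2,0},\mathsf T_{p,q}(Y)]=2q\,\mathsf T_{p+1,q-1}(Y)$ preserves $p+q$. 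The same happens with the $\mathsf t_{2,1}$-decomposition. So strong induction on $N=n+m+p+q$ never strictly decreases at \emph{any} stage. The correct parameter is $n+m$, the degree of the first factor alone: one treats every commutator against $\mathsf t_{2,0}$, $\mathsf t_{0,2}$ and $\mathsf t_{2,1}$ as an \emph{exact} identity coming from \eqref{eqn: A2} and \eqref{eqn: A5} (not as an instance of the proposition), sweeps through all $(p,q)$ at fixed $n+m$ using the $\mathfrak{sl}_2$-action, and only then raises $n+m$ by one via $\mathsf t_{2,1}$, so that every genuinely unknown inner commutator has strictly smaller first-factor degree. This reorganization is the substance of the proof, not a detail to be deferred.

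Second, the appeal to the Poisson structure of $\mathfrak{po}(\mathbb C^2)\ltimes(\mathscr O(\mathbb C^2)\otimes\mathfrak{sl}_K)$ to ``force'' the leading terms is circular. That $\mathrm{gr}_F\mathsf A^{(K)}$ carries this particular Poisson structure on the classes of the generators is exactly the leading-term assertion of the proposition (and is only fully pinned down once the PBW theorem is available); the filtration bounds alone tell you only that $\mathrm{gr}_F\mathsf A^{(K)}$ is \emph{some} graded Poisson algebra. The scalars $(nq-mp)$ and $[X,Y]$ must be propagated through the recursion by hand. This is routine---the relations \eqref{eqn: A2} and \eqref{eqn: A5} are exact, so $\mathrm{ad}_{\mathsf t_{2,0}}$, $\mathrm{ad}_{\mathsf t_{0,2}}$, $\mathrm{ad}_{\mathsf t_{2,1}}$ act on leading terms precisely as the classical brackets do, and the coefficients combine correctly across the Jacobi expansions---but it is a computation that must be carried out inside the induction, not outsourced to a structure whose existence is being established.
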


\begin{proof}
We construct $f_{n,m,p,q}^{X,Y}, g_{n,m,p,q}^X,h_{n,m,p,q}$ inductively. First of all, it is evident from \eqref{eqn: A0}-\eqref{eqn: A4} that we can set $g_{n,m,p,q}^X=h_{n,m,p,q}=0$ for all $n+m\le 2$, and set $f^{X,Y}_{0,0,0,n}=0$, and set $\mathsf T_{1,n}([X,Y])+f^{X,Y}_{1,0,0,n}$ to be the right-hand-side of the first equation of \eqref{eqn: A3}. Then we set $f^{X,Y}_{1,0,i+1,n-i-1}$ to be the lift of $\frac{1}{2n-2i}[\mathsf t_{2,0},\bar f^{X,Y}_{1,0,i,n-i}]$, inductively for all $i<n$. Next we set $f^{X,Y}_{0,1,p,q}$ to be the lift of $\frac{1}{2}[\bar f^{X,Y}_{1,0,p,q},\mathsf t_{0,2}]-p\bar f^{X,Y}_{1,0,p-1,q+1}$.

By \eqref{eqn: A2} and \eqref{eqn: A3}, we have $[\mathsf T_{1,0}(X),\mathsf t_{3,0}]=0$, thus 
\begin{equation}
    \begin{split}
        &[\mathsf t_{3,0},\mathsf T_{0,n}(X)]=\frac{1}{n+1}[\mathsf t_{3,0},[\mathsf T_{1,0}(X), \mathsf t_{0,n+1}]]=\frac{1}{n+1}[\mathsf T_{1,0}(X),[\mathsf t_{3,0}, \mathsf t_{0,n+1}]]\\
    &=3n \mathsf T_{2,n-1}(X)+\frac{n(n-1)(n-2)}{4}(\epsilon_1^2-\epsilon_2\epsilon_3)\mathsf T_{0,n-3}(X)+\text{quadratic+cubic}.
    \end{split}
\end{equation}
We will not need the exact form the of the quadratic or cubic terms, but only point out that they are all polynomials of $\mathsf T_{i,j}(Z)$ of total degree $\le n+1$. We set $g^X_{3,0,0,n}$ using the above equation. Then we set $g^X_{3,0,i+1,n-i-1}$ to be the lift of $\frac{1}{2n-2i}[\mathsf t_{2,0},\bar g^X_{3,0,i,n-i}]$, inductively for all $i<n$. Next, we set $g^X_{i-1,4-i,p,q}$ to be the lift of $\frac{1}{2i}[\bar g^X_{i,3-i,p,q},\mathsf t_{0,2}]-\frac{p}{i}\bar g^X_{i,3-i,p-1,q+1}$, inductively for all $0<i\le 3$.

For general cases, we proceed by induction. Assume that $f^{X,Y}_{n,m,p,q}$ have been constructed for all $(n+m)\le s$, then 
\begin{align*}
    &[\mathsf T_{s+1,0}(X),\mathsf T_{p,q}(Y)]=-\frac{1}{s}[[\mathsf t_{2,1},\mathsf T_{s,0}(X)],\mathsf T_{p,q}(Y)]\\
    &=-\frac{1}{s}([\mathsf t_{2,1},[\mathsf T_{s,0}(X),\mathsf T_{p,q}(Y)]]-[\mathsf T_{s,0}(X),[\mathsf t_{2,1},\mathsf T_{p,q}(Y)]])\\
    &= \mathsf T_{s+p+1,q}([X,Y])-\frac{1}{s}(\bar g^{[X,Y]}_{2,1,s+p,q}+[\mathsf t_{2,1},\bar f^{X,Y}_{s,0,p,q}]-(2q-p)\bar f^{X,Y}_{s,0,p+1,q}-[\mathsf T_{s,0}(X),\bar g^Y_{2,1,p,q}]),
\end{align*}
so we set $f^{X,Y}_{s+1,0,p,q}$ using the above equation, and note that $\deg_F f^{X,Y}_{s+1,0,p,q}\le p+s+q+1$. Next, we set $f^{X,Y}_{i-1,s+2-i,p,q}$ to be the lift of $\frac{1}{2i}[\bar g^X_{i,s+1-i,p,q},\mathsf t_{0,2}]-\frac{p}{i}\bar g^X_{i,s+1-i,p-1,q+1}$, inductively for all $0<i\le s$. This finishes the construction of $f^{X,Y}_{n,m,p,q}$. The construction of $g^X_{n,m,p,q}$ and $h_{n,m,p,q}$ are similar.
\end{proof}

\begin{proposition}\label{prop: filtration'}
Let $\mathsf T'_{n,m}(X),\mathsf t'_{n,m}$ be in the Definition \ref{def: subalg of A}, then for all $(n,m,p,q)\in \mathbb N^4$ and $X,Y$ chosen from a basis of $\mathfrak{gl}_K$, there exists 
\begin{equation}
    \begin{split}
        f_{n,m,p,q}^{'X,Y}&\in F_{n+m+p+q}\mathbb C[\epsilon_1,\epsilon_2]\langle \mathsf T'_{i,j}(Z)\:|\: Z\in \mathfrak{gl}_K, (i,j)\in \mathbb N^2\rangle,\\
        g_{n,m,p,q}^{'X}&\in F_{n+m+p+q-2}\mathbb C[\epsilon_1,\epsilon_2]\langle \mathsf T'_{i,j}(Z)\:|\: Z\in \mathfrak{gl}_K, (i,j)\in \mathbb N^2\rangle,\\
        h'_{n,m,p,q}&\in F_{n+m+p+q-2}\mathbb C[\epsilon_1,\epsilon_2]\langle \mathsf T'_{i,j}(Z), \mathsf t'_{i,j}\:|\: Z\in \mathfrak{gl}_K, (i,j)\in \mathbb N^2\rangle,\\
    \end{split}
\end{equation}
such that following equations hold in $\mathsf A^{(K)}$
\begin{align}\label{eqn_schematic commutators 1'}
    [\mathsf T_{n,m}(X),\mathsf T'_{p,q}(Y)]=\mathsf T'_{n+p,m+q}([X,Y])+\bar f_{n,m,p,q}^{'X,Y},
\end{align}
\begin{align}\label{eqn_schematic commutators 2'}
    [\mathsf t_{n,m},\mathsf T'_{p,q}(X)]=(nq-mp)\mathsf T'_{n+p-1,m+q-1}(X)+\bar g_{n,m,p,q}^{'X},
\end{align}
\begin{align}\label{eqn_schematic commutators 3'}
    [\mathsf t_{n,m},\mathsf t'_{p,q}]=(nq-mp)\mathsf t'_{n+p-1,m+q-1}+\bar h'_{n,m,p,q},
\end{align}
where $\bar f_{n,m,p,q}^{'X,Y}$ (resp. $\bar g_{n,m,p,q}^{'X}$, resp. $\bar h'_{n,m,p,q}$) is the image of $f_{n,m,p,q}^{'X,Y}$ (resp. $g_{n,m,p,q}^{'X}$, resp. $h'_{n,m,p,q}$) in $\mathsf A^{(K)}$.
\end{proposition}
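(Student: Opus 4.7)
The plan is to bootstrap from Proposition \ref{prop: filtration}. Since $\mathsf T'_{p,q}(Y)=\epsilon_1\mathsf T_{p,q}(Y)$ and $\mathsf t'_{p,q}=\epsilon_1\mathsf t_{p,q}$ one immediately obtains
\begin{align*}
[\mathsf T_{n,m}(X),\mathsf T'_{p,q}(Y)]=\epsilon_1[\mathsf T_{n,m}(X),\mathsf T_{p,q}(Y)]=\mathsf T'_{n+p,m+q}([X,Y])+\epsilon_1\bar f_{n,m,p,q}^{X,Y},
\end{align*}
and analogous identities for the other two commutators. So it is enough to show that each of $\epsilon_1\bar f_{n,m,p,q}^{X,Y}$, $\epsilon_1\bar g_{n,m,p,q}^{X}$, and $\epsilon_1\bar h_{n,m,p,q}$ lifts to a polynomial in the primed generators with coefficients in $\mathbb C[\epsilon_1,\epsilon_2]$; the filtration-degree bounds transfer for free, since $\deg_F\epsilon_1=0$.

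To convert unprimed polynomials into primed ones cleanly I would introduce an auxiliary $\mathbb Z$-grading $w$ on the free algebra $\mathbb C[\epsilon_1,\epsilon_2]\langle\mathsf T_{n,m}(X),\mathsf t_{n,m}\rangle$ by the rules $w(\epsilon_1)=+1$, $w(\epsilon_2)=0$, and $w(\mathsf T_{n,m}(X))=w(\mathsf t_{n,m})=-1$. A monomial $\epsilon_1^a\epsilon_2^b Z_1\cdots Z_\ell$ of $w$-weight $a-\ell\geq 0$ factors as $\epsilon_1^{a-\ell}\epsilon_2^b Z_1'\cdots Z_\ell'$ with $Z_i':=\epsilon_1 Z_i$, so any element whose monomials all have $w$-weight $\geq 0$ lifts tautologically to the subring generated by the primed generators. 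The proposition thus reduces to the single claim that the specific polynomials $f_{n,m,p,q}^{X,Y}$, $g_{n,m,p,q}^{X}$, $h_{n,m,p,q}$ produced in the proof of Proposition \ref{prop: filtration} can be chosen with every monomial of $w$-weight $\geq -1$.

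I would establish this claim by a straightforward induction mirroring that proof. The base cases follow from direct inspection of the defining relations: in \eqref{eqn: A3} the quadratic corrections carry an explicit factor $\epsilon_1$ (weight $1-2=-1$) and the linear corrections have weight $-1$, after noting that $\epsilon_3=-K\epsilon_1-\epsilon_2$ contributes no negative weight; the derivation of $[\mathsf t_{3,0},\mathsf T_{0,n}(X)]$ given in the proof of Proposition \ref{prop: filtration} then yields $w(g_{3,0,0,n}^{X})\geq -1$; and in \eqref{eqn: A4} the cubic term $-\tfrac{3\epsilon_1^2\epsilon_2}{2}\mathsf t_{0,m}\mathsf t_{0,n-3-m}$ has weight $0$ while the remaining terms have weight $\geq -1$, giving $w(h_{3,0,0,n})\geq -1$. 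For the inductive step, each higher $f,g,h$ is assembled from these by iterated adjoint actions of $\mathsf t_{2,0},\mathsf t_{0,2},\mathsf t_{2,1}$. The first two adjoints send generators to generators by \eqref{eqn: A2} without introducing powers of $\epsilon_1$, and hence preserve $w$; the adjoint of $\mathsf t_{2,1}$ on a generator $\mathsf T_{p,q}(Y)$ produces the leading term $(q-2p)\mathsf T_{p+1,q}(Y)$ of weight $-1$ plus a correction $\bar g^{Y}_{2,1,p,q}$ of weight $\geq -1$ by the inductive hypothesis, so by the Leibniz rule its action on arbitrary monomials still preserves the bound. The main technical point is this last Leibniz argument: one must check that replacing a single factor of a monomial of weight $\geq -1$ by an expression of weight $\geq -1$ again yields weight $\geq -1$, which holds because the excess powers of $\epsilon_1$ already present in the original monomial exactly compensate for the extra length introduced by the substitution.
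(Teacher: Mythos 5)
Your argument is correct, and it is a genuinely different packaging of the proof from the one the paper intends. The paper omits the proof with the remark that one should rerun the induction of Proposition \ref{prop: filtration} verbatim in the primed generators, checking at each stage that the corrections rewrite as polynomials in $\mathsf T'_{i,j}(Z)$, $\mathsf t'_{i,j}$. You instead prove a quantitative strengthening of Proposition \ref{prop: filtration} itself — that the lifts $f,g,h$ can be chosen with every monomial of $w$-weight $\geq -1$ for your grading $w(\epsilon_1)=+1$, $w(\mathsf T_{n,m}(X))=w(\mathsf t_{n,m})=-1$ — and then deduce the primed statement by multiplying by $\epsilon_1$. This is a cleaner organization: the single invariant $w$ isolates exactly why every correction term in \eqref{eqn: A3}, \eqref{eqn: A4} carries enough powers of $\epsilon_1$ (after expanding $\epsilon_3=-K\epsilon_1-\epsilon_2$), and your substitution computation is the right one — replacing a weight-$(-1)$ generator inside a weight-$\geq -1$ monomial by an expression of weight $\geq -1$ gives weight $\geq (-1)+(-1)+1=-1$. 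The base cases and the stability of the bound under $\mathrm{ad}_{\mathsf t_{2,0}}$, $\mathrm{ad}_{\mathsf t_{0,2}}$ (which act by \eqref{eqn: A2} without corrections) and $\mathrm{ad}_{\mathsf t_{2,1}}$ all check out; the filtration bounds indeed transfer for free since $\deg_F\epsilon_1=0$. What your route buys is that Proposition \ref{prop: filtration'} becomes a corollary of a sharpened Proposition \ref{prop: filtration} rather than a parallel induction, at the cost of having to restate and reprove that proposition with the extra weight bookkeeping. One immaterial slip: the leading term of $[\mathsf t_{2,1},\mathsf T_{p,q}(Y)]$ is $(2q-p)\mathsf T_{p+1,q}(Y)$ by \eqref{eqn_schematic commutators 2}, not $(q-2p)\mathsf T_{p+1,q}(Y)$; this does not affect your argument, which only uses that the leading term is a single generator of weight $-1$.
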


\noindent The proof of Proposition \ref{prop: filtration'} is analogous to that of Proposition \ref{prop: filtration} and we omit it.\\

Let us choose a basis $\mathfrak B:=\{X_1,\cdots,X_{K^2-1}\}$ of $\mathfrak{sl}_K$, so that $\mathfrak{B}_+:=\{1\}\cup \mathfrak B$ is a basis of $\mathfrak{gl}_K$. We fix a total order $1\preceq X_1\preceq\cdots\preceq X_{K^2-1}$ on $\mathfrak B_+$. Then we put the dictionary order on the set $\mathfrak{G}(\mathsf A^{(K)}):=\{\mathsf T_{n,m}(X),\mathsf t_{n,m}\:|\: X\in \mathfrak B, (n,m)\in \mathbb N^2\}$, in other words $\mathsf T_{n,m}(X)\preceq\mathsf T_{n',m'}(X')$ if and only only if $n<n'$ or $n=n'$ and $m<m'$ or $(n,m)=(n',m')$ and $X\preceq X'$ \footnote{We choose dictionary order for convenience, we can also choose another order and the argument works verbatim.}. Similarly we put the dictionary order on the set $\mathfrak{G}(\mathsf D^{(K)}):=\{\mathsf T_{n,m}(X)\:|\: X\in \mathfrak B_+, (n,m)\in \mathbb N^2\}$ and the set $\mathfrak{G}(\mathsf B^{(K)}):=\{\mathsf T'_{n,m}(X),\mathsf t'_{n,m}\:|\: X\in \mathfrak B, (n,m)\in \mathbb N^2\}$.

\begin{definition}\label{def: basis of A, B, D}
Define the set of ordered monomials in $\mathfrak{G}(\mathsf A^{(K)})$ (resp. $\mathfrak{G}(\mathsf B^{(K)})$, resp. $\mathfrak{G}(\mathsf D^{(K)})$) as 
\begin{align}
    \mathfrak B(\star):=\{1\}\cup\{\mathcal O_1\cdots\mathcal O_n\:|\: n\in \mathbb N_{>0}, \mathcal O_1\preceq \cdots\preceq\mathcal O_n\in \mathfrak{G}(\star)\},
\end{align}
where $\star$ is $\mathsf A^{(K)}$ or $\mathsf B^{(K)}$ or $\mathsf D^{(K)}$.
\end{definition}

\begin{lemma}\label{lem: ordered monoimials}
$\mathsf A^{(K)}$ (resp. $\mathsf B^{(K)}$, resp. $\mathsf D^{(K)}$) is generated by $\mathfrak B(\mathsf A^{(K)})$ (resp. $\mathfrak{B}(\mathsf B^{(K)})$, resp. $\mathfrak{B}(\mathsf D^{(K)})$) as $\mathbb C[\epsilon_1,\epsilon_2]$-module.
\end{lemma}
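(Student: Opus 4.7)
The plan is a standard PBW-style spanning argument that uses the diagonal filtration $F_\bullet\mathsf A^{(K)}$ defined at the start of Section \ref{subsec: flitration on A}, with Propositions \ref{prop: filtration} and \ref{prop: filtration'} as the essential input. The content of those propositions that I will exploit is that the commutator of any two generators is of strictly smaller filtration degree than their product: in \eqref{eqn_schematic commutators 1} the main term $\mathsf T_{n+p,m+q}([X,Y])$ has filtration degree $(n+m+1)+(p+q+1)-1$ and the remainder $\bar f^{X,Y}_{n,m,p,q}$ lies in $F_{n+m+p+q}$, so the whole commutator lies in $F_{d_1+d_2-1}$ where $d_1,d_2$ are the filtration degrees of the two generators; the same holds (with even more room) for \eqref{eqn_schematic commutators 2} and \eqref{eqn_schematic commutators 3}.

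I would then prove by induction on $n$ that $F_n\mathsf A^{(K)}$ is spanned over $\mathbb C[\epsilon_1,\epsilon_2]$ by the ordered monomials in $\mathfrak B(\mathsf A^{(K)})$ of filtration degree $\le n$. The base case $F_0=\mathbb C[\epsilon_1,\epsilon_2]$ is spanned by $\{1\}$. For the inductive step, $F_n\mathsf A^{(K)}$ is by definition the image of the $\mathbb C[\epsilon_1,\epsilon_2]$-span of products $\mathcal O_{i_1}\cdots\mathcal O_{i_k}$ of total filtration degree $\le n$, where each $\mathcal O_{i_j}\in\mathfrak G(\mathsf A^{(K)})$. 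An inner bubble-sort induction on the number of out-of-order adjacent pairs reduces such a product to an ordered monomial: swapping $\mathcal O_{i_j}\succ\mathcal O_{i_{j+1}}$ produces a monomial with one fewer inversion and the same filtration degree, plus a correction $\mathcal O_{i_1}\cdots\mathcal O_{i_{j-1}}[\mathcal O_{i_j},\mathcal O_{i_{j+1}}]\mathcal O_{i_{j+2}}\cdots\mathcal O_{i_k}$ whose filtration degree is strictly smaller than $n$ by the key observation above, hence handled by the outer hypothesis.

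The two subalgebra statements reduce to the same argument, once closure of the corrections is verified. For $\mathsf D^{(K)}$, Proposition \ref{prop: filtration} explicitly confines $f^{X,Y}_{n,m,p,q}$ and $g^X_{n,m,p,q}$ to the subring generated by the $\mathsf T_{i,j}(Z)$ alone, so every correction produced by a swap of two $\mathsf T$-generators remains in $\mathsf D^{(K)}$. For $\mathsf B^{(K)}$, Proposition \ref{prop: filtration'} supplies the analogous commutator expansions written entirely in the primed generators $\mathsf T'_{n,m}(X),\mathsf t'_{n,m}$, so the same bubble sort stays inside $\mathsf B^{(K)}$. The main ``obstacle'' is really just bookkeeping: coordinating the outer filtration induction with the inner inversion-count induction, and tracking that the monomial factors on either side of the commutator in each swap do not spoil the degree estimate (they do not, because the filtration is sub-additive on products and the commutator strictly drops). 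No identities beyond \eqref{eqn_schematic commutators 1}--\eqref{eqn_schematic commutators 3} and their primed counterparts are needed.
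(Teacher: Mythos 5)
Your proposal is correct and follows essentially the same route as the paper: an induction on the diagonal filtration degree, using Propositions \ref{prop: filtration} and \ref{prop: filtration'} to see that each swap of adjacent out-of-order generators costs only a correction of strictly smaller filtration degree, with the $\mathsf D^{(K)}$ and $\mathsf B^{(K)}$ cases handled by noting the corrections stay in the respective generating sets. The paper's proof is just a terser statement of the same reordering argument, so there is nothing to add.
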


\begin{proof}
Let us first prove the claim for $\mathsf A^{(K)}$. Obviously $F_0\mathsf A^{(K)}$ is generated by $1$ as $\mathbb C[\epsilon_1,\epsilon_2]$-module. Assume that $F_s\mathsf A^{(K)}$ is generated by elements in $\mathfrak B(\mathsf A^{(K)})$, then Proposition \ref{prop: filtration} implies that we can reorder any monomials in $\mathfrak{G}(\mathsf A^{(K)})$ with total degree $s+1$ into the non-decreasing order modulo terms in $F_{s}\mathsf A^{(K)}$, therefore $F_{s+1}\mathsf A^{(K)}$ is generated by elements in $\mathfrak B(\mathsf A^{(K)})$. $F_{\bullet}\mathsf A^{(K)}$ is obviously exhaustive, thus $\mathsf A^{(K)}$ is generated by $\mathfrak B(\mathsf A^{(K)})$. For $\mathsf D^{(K)}$, it is enough to use \eqref{eqn_schematic commutators 1} for the induction step. And for $\mathsf B^{(K)}$, use Proposition \ref{prop: filtration'} instead.
\end{proof}

\begin{lemma}\label{lem: t[m,n] preserves D^K}
For all $(n,m)\in \mathbb N^2$, the adjoint action of $\mathsf t_{n,m}$ preserves the subalgebra $\mathsf D^{(K)}$.
\end{lemma}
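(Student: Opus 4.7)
The plan is to reduce the claim to a statement about generators via the Leibniz rule, and then apply Proposition \ref{prop: filtration} directly. Since $\mathrm{ad}_{\mathsf t_{n,m}}$ is a derivation of $\mathsf A^{(K)}$, the set $\{a\in \mathsf A^{(K)}\:|\:[\mathsf t_{n,m},a]\in \mathsf D^{(K)}\}$ is a $\mathbb C[\epsilon_1,\epsilon_2]$-submodule stable under multiplication by elements of $\mathsf D^{(K)}$. Because $\mathsf D^{(K)}$ is generated as a $\mathbb C[\epsilon_1,\epsilon_2]$-algebra by the collection $\{\mathsf T_{p,q}(X)\:|\:X\in \gl_K,(p,q)\in \mathbb N^2\}$ (this is the defining description in Definition \ref{def: subalg of A}), it therefore suffices to prove the inclusion $[\mathsf t_{n,m},\mathsf T_{p,q}(X)]\in \mathsf D^{(K)}$ on each such generator.

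For this one invokes the commutator formula \eqref{eqn_schematic commutators 2} from Proposition \ref{prop: filtration}:
\begin{equation*}
[\mathsf t_{n,m},\mathsf T_{p,q}(X)]=(nq-mp)\,\mathsf T_{n+p-1,m+q-1}(X)+\bar g^{X}_{n,m,p,q}.
\end{equation*}
The leading term lies in $\mathsf D^{(K)}$ since $\mathsf T_{n+p-1,m+q-1}(X)$ is a generator; the only potential issue is when one of the indices $n+p-1$ or $m+q-1$ is negative, but then either $n=p=0$ or $m=q=0$, in which case the coefficient $(nq-mp)$ vanishes and the leading term is simply absent. The correction $\bar g^{X}_{n,m,p,q}$ also lies in $\mathsf D^{(K)}$ because Proposition \ref{prop: filtration} guarantees that $g^{X}_{n,m,p,q}$ is an element of the free algebra $\mathbb C[\epsilon_1,\epsilon_2]\langle \mathsf T_{i,j}(Z)\:|\:Z\in \gl_K,(i,j)\in \mathbb N^2\rangle$, i.e.\ a polynomial expression in the $\mathsf T_{i,j}(Z)$'s alone, with no $\mathsf t_{i,j}$ appearing. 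Hence both summands on the right-hand side belong to $\mathsf D^{(K)}$, and so does the commutator.

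There is no real obstacle: the content of the lemma is essentially a repackaging of the refined structure already established in Proposition \ref{prop: filtration}, which was stated precisely so that the correction terms $g^{X}_{n,m,p,q}$ do not introduce $\mathsf t_{i,j}$ generators. The only subtlety worth naming explicitly in the writeup is the vanishing of the leading coefficient in the degenerate index range, which I would note in one line.
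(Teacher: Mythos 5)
Your argument is correct and is exactly the paper's proof: the paper's own justification is the one-line remark that the lemma follows from \eqref{eqn_schematic commutators 2}, i.e.\ from the fact that the correction term $g^{X}_{n,m,p,q}$ in Proposition \ref{prop: filtration} is a polynomial in the $\mathsf T_{i,j}(Z)$ alone. Your additional remarks (reduction to generators via the derivation property, and the vanishing of the leading coefficient in the degenerate index range) are sound and merely make explicit what the paper leaves implicit.
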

\begin{proof}
This follows from \eqref{eqn_schematic commutators 2}.
\end{proof}

\subsection{PBW theorems for \texorpdfstring{$\mathsf A^{(K)}$}{Ak, Bk, and Dk}, \texorpdfstring{$\mathsf B^{(K)}$}{TEXT}, and \texorpdfstring{$\mathsf D^{(K)}$}{TEXT}}\label{subsec: PBW}
Recall that the $\gl_K$-extended rational Cherednik algebra $\mathcal H^{(K)}_N$ \cite{guay2005cherednik,etingof2002symplectic,opdam2000lecture,etingof2010lecture} is defined as the quotient of the semi-direct product
\begin{align*}
    \mathbb C[\mathfrak{S}_N]\ltimes \left(\mathbb C\langle x_1,\cdots,x_N,y_1,\cdots,y_N\rangle\otimes \mathfrak{gl}_K^{\otimes N}\right)
\end{align*}
by the relations
\begin{align*}
    [x_i,x_j]=0&,\quad [y_i,y_j]=0,\\
    [y_i,x_j]=\delta_{ij}(\epsilon_2-\epsilon_1&\sum_{l\neq i}s_{il}\Omega_{il})+(1-\delta_{ij})\epsilon_1 s_{ij}\Omega_{ij},
\end{align*}
where $s_{ij}\in \mathfrak{S}_N$ is the elementary permutation of $ij$ positions, and $\Omega_{ij}\in \mathfrak{gl}_K^{\otimes N}$ is the tensor $1\otimes\cdots \otimes E^a_b\otimes\cdots \otimes E^b_a\otimes\cdots\otimes 1$ (put $1$ on the $k$-th site for $k\neq i$ or $j$). It is well-known that $\mathcal H^{(K)}_N$ admits an embedding $\mathcal H^{(K)}_N\hookrightarrow \mathbb C[\mathfrak{S}_N]\ltimes \left(D(\mathbb C^N_{\mathrm{disj}})\otimes \mathfrak{gl}_K^{\otimes N}\right)[\epsilon_1,\epsilon_2]$, where $D(\mathbb C^N_{\mathrm{disj}})$ is the algebra of differential operators on the configuration space of $N$-points on $\mathbb C$. We identify $x_1,\cdots,x_N$ with the coordinates of $\mathbb C^N$, then the explicit formula of Dunkl embedding is given by
\begin{align}
    y_i\mapsto \epsilon_2 \partial_i+\epsilon_1\sum_{j\neq i}\frac{1}{x_i-x_j}s_{ij}\Omega_{ij}.
\end{align}
Recall that the spherical subalgebra $\mathrm S\mathcal H^{(K)}_N$ is defined as $\mathbf e\mathcal H^{(K)}_N \mathbf e$, where $\mathbf e:=\frac{1}{N!}\sum_{g\in \mathfrak S_N}g$ is the projector to the $\mathfrak S_N$-invariants. Then Dunkl embedding restricts to an embedding of $\mathbb C[\epsilon_1,\epsilon_2]$-algebras $\mathrm S\mathcal H^{(K)}_N\hookrightarrow \left(D(\mathbb C^N_{\mathrm{disj}})\otimes \mathfrak{gl}_K^{\otimes N}\right)^{\mathfrak{S}_N}[\epsilon_1,\epsilon_2]$. Examples of images of elements are as follows:
\begin{align}
    \sum_{i=1}^Ny_i \mathbf e\mapsto \epsilon_2 \sum_{i=1}^N\partial_i,\quad\sum_{i=1}^Ny_i^2  \mathbf e \mapsto \epsilon_2^2 \sum_{i=1}^N\partial_i^2-\sum_{i\neq j}^N\frac{\epsilon_1}{(x_i-x_j)^2}(\epsilon_2 \Omega_{ij}+\epsilon_1).
\end{align}

\begin{lemma}\label{lem: map rho_N}
The map on generators
\begin{align}\label{eqn: map rho_N}
    \rho_N(\mathsf T_{n,m}(X))=\sum_{i=1}^N\sym(x_i^my_i^n)X_i\mathbf e,\quad\rho_N(\mathsf t_{n,m})=\frac{1}{\epsilon_2}\sum_{i=1}^N\sym(x_i^my_i^n)\mathbf e,
\end{align}
uniquely determines a surjective algebra homomorphism $\rho_N:\mathsf A^{(K)}[\epsilon_2^{-1}]\twoheadrightarrow \mathrm S\mathcal H^{(K)}_N[\epsilon_2^{-1}]$.
\end{lemma}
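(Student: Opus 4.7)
The plan is to establish well-definedness of $\rho_N$ (that is, that the images satisfy the relations \eqref{eqn: A0}-\eqref{eqn: A4}) and then surjectivity. Throughout I would work inside the Dunkl realization $\mathcal H^{(K)}_N[\epsilon_2^{-1}]\hookrightarrow \mathbb C[\mathfrak{S}_N]\ltimes (D(\mathbb C^N_{\mathrm{disj}})\otimes \mathfrak{gl}_K^{\otimes N})[\epsilon_1,\epsilon_2^{\pm 1}]$, since all needed commutators become explicit polynomial/differential operator computations. Relation \eqref{eqn: A0} is built into the formulas defining $\rho_N$. Relations \eqref{eqn: A1} and the low-index relations in \eqref{eqn: A2} follow from direct computation, because the images $\rho_N(\mathsf t_{p,q})$ for $p+q\le 2$ are (spherical versions of) translation, scaling, Laplacian and quadratic momentum operators, and these act on symmetrized monomials in the Dunkl realization in the familiar polynomial way with no $\epsilon_1$-correction. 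By the filtration and reordering arguments already assembled in Proposition \ref{prop: filtration} and Lemma \ref{lem: ordered monoimials}, only a small base of commutators needs direct verification.

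The main computational step is to verify the first line of \eqref{eqn: A3}: the commutator $[\sum_i y_i X_i \mathbf e,\sum_j x_j^n Y_j \mathbf e]$ in the spherical Cherednik algebra. Splitting $y_i$ into the $\epsilon_2\partial_i$ piece and the Dunkl correction $\epsilon_1\sum_{j\neq i}(x_i-x_j)^{-1} s_{ij}\Omega_{ij}$, the $\epsilon_2$-linear part supplies $\mathsf T_{1,n}([X,Y])$, the $\frac{\epsilon_3 n}{2}$-term, and the trace term coming from $[y_i,x_i^n]$ inside $\mathcal H^{(K)}_N$. The Dunkl part, using the identity $(x_i^n-x_j^n)/(x_i-x_j)=\sum_{m=0}^{n-1}x_i^m x_j^{n-1-m}$ together with the $\mathbf e$-invariance that lets us absorb $s_{ij}$, reorganizes into sums over $m$ of $\Omega_{ij}$ applied to $(X_i,Y_j)$ and $(1,Y_j)$; tracking the non-commutativity $X_iY_i$ vs. $Y_iX_i$ produces the two terms $(X\otimes Y-XY\otimes 1)\cdot \Omega$ and $([X,Y]\otimes 1)\cdot \Omega$ with binomial weights $\frac{m+1}{n+1}$ and $1$, matching \eqref{eqn: A3} exactly. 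The cubic relation \eqref{eqn: A4} is established similarly by computing $[\sum_i y_i^3\mathbf e,\sum_j x_j^n\mathbf e]$; alternatively, once \eqref{eqn: A0}-\eqref{eqn: A3} are verified and the $\SL_2$ action is matched across $\rho_N$, Jacobi identities and the filtration reduction confine \eqref{eqn: A4} to checking finitely many cases, most of which reduce to the classical $\mathfrak{gl}_1$ Calogero-Moser computation appearing in Lemma \ref{lem: relations A^1}.

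For surjectivity, the plan is to observe that $\mathrm S\mathcal H^{(K)}_N[\epsilon_2^{-1}]$ is generated over $\mathbb C[\epsilon_1,\epsilon_2^{\pm 1}]$ by the spherical power sums $\mathbf e(\sum_i x_i^m)\mathbf e$, $\mathbf e(\sum_i y_i^n)\mathbf e$, and the matrix-valued elements $\sum_i X_i\mathbf e$ for $X\in \mathfrak{sl}_K$, together with iterated commutators. The spherical $\mathfrak{sl}_2$-triple generated by $\mathbf e\sum_i x_i^2\mathbf e$, $\mathbf e\sum_i x_iy_i\mathbf e$, $\mathbf e\sum_i y_i^2\mathbf e$ (images of $\mathsf t_{0,2},\mathsf t_{1,1},\mathsf t_{2,0}$) acts by \eqref{eqn: sl2 action} and lets us generate every symmetric combination $\sum_i\sym(x_i^my_i^n)X_i\mathbf e$ from $\sum_i y_i^{m+n}X_i\mathbf e$, and similarly for the scalar generators; all such elements lie tautologically in the image of $\rho_N$.

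The hard part is the bookkeeping in the last two lines of \eqref{eqn: A3}: one must verify that the $\Omega_{ij}$-contributions from the Dunkl corrections assemble into the precise binomial weights $\frac{m+1}{n+1}$ and into the specific $\mathfrak{gl}_K^{\otimes 2}$-valued tensors $([X,Y]\otimes 1)\cdot\Omega$ and $(X\otimes Y-XY\otimes 1)\cdot\Omega$, with no residual terms. This requires careful use of the symmetry $\Omega=E^a_b\otimes E^b_a$ and the identity $s_{ij}\Omega_{ij}\mathbf e=\Omega_{ij}s_{ij}\mathbf e=\Omega_{ij}\mathbf e$, but no conceptual subtlety beyond systematic combinatorial accounting.
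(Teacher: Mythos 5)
Your proposal is correct and follows essentially the same route as the paper: the paper's own proof is a one-line deferral to Appendix~\ref{sec: quantum ADHM quiver variety}, where the identities corresponding to \eqref{eqn: A3} and \eqref{eqn: A4} are verified by exactly the kind of Dunkl/Calogero-type computation with $\Omega_{ij}$-bookkeeping that you sketch (Proposition~\ref{Proposition_The Key Commutation Relation}), the only difference being that the paper performs the computation in the Calogero representation of the quantum ADHM algebra and transports it, whereas you work directly in the Dunkl realization of $\mathrm S\mathcal H^{(K)}_N$. The one point worth tightening is surjectivity, where your key claim --- that $\mathrm S\mathcal H^{(K)}_N[\epsilon_2^{-1}]$ is generated by the symmetrized one-site elements $\sum_i\sym(x_i^my_i^n)X_i\mathbf e$ --- is asserted rather than argued (showing these elements lie in the image is tautological; the content is that they generate), though the paper leaves this step implicit as well.
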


\begin{proof}
The computation is essentially the same as that of generators and relations of quantized Gieseker varieties in the Calogero representations, we refer to Appendix \ref{sec: quantum ADHM quiver variety} for details.
\end{proof}

\begin{remark}
It is obvious from the formula that the map $\rho_N$ intertwines between the $\SL_2$ action \eqref{eqn: sl2 action} and the following $\SL_2$ action on $\mathcal H^{(K)}_N$:
\begin{align}
    \begin{pmatrix}
a&b\\c&d
    \end{pmatrix}
    \curvearrowright \mathcal H^{(K)}_N: \;x_i\mapsto dx_i+cy_i,\; y_i\mapsto bx_i+ay_i.
\end{align}
\end{remark}

\begin{remark}
The grading \eqref{eqn: grading on A} is compatible with the natural grading on $\mathrm{S}\mathcal K^{(K)}_N$. Namely the grading on $\mathrm{S}\mathcal H^{(K)}_N$ is by setting $\deg y_i=-1,\deg x_i=1$. Then the map $\rho_N:\mathsf D^{(K)}\to \mathrm{S}\mathcal H^{(K)}_N$ is $\mathbb Z$-graded.
\end{remark}

\begin{proposition}\label{prop: compare with Kalinov's DDCA}
    Let $(t,k)\in \mathbb C^2$ and denote by $\mathcal D_{t,k}(K)$ the DDCA defined in \cite[5.3.3]{kalinov2021deformed}. Assume that $t\neq 0$, then there is surjective homomorphism of algebras
\begin{align*}
    \mathsf A^{(K)}/(\epsilon_1=k,\epsilon_2=t)\twoheadrightarrow \mathcal D_{t,k}(K).
\end{align*}
\end{proposition}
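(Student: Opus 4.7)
My plan is to construct the surjection by specifying it on generators, verify that the defining relations \eqref{eqn: A0}--\eqref{eqn: A4} of $\mathsf A^{(K)}$ descend to relations already imposed in Kalinov's presentation of $\mathcal D_{t,k}(K)$, and finally argue surjectivity by noting that Kalinov's generating set lies in the image.

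First, I would recall the presentation of $\mathcal D_{t,k}(K)$ from \cite[5.3.3]{kalinov2021deformed}: it is generated by $\mathfrak{sl}_K$-valued analogues of $\mathsf T_{0,0}(X), \mathsf T_{1,0}(X), \mathsf T_{0,1}(X)$ together with a trace/Heisenberg piece corresponding to our $\mathsf t_{2,0}, \mathsf t_{1,1}, \mathsf t_{0,2}$ and $\mathsf t_{3,0}$, subject to quadratic and cubic relations that are exactly the $(\epsilon_1,\epsilon_2)=(k,t)$-specializations of the low-order cases of \eqref{eqn: A1}--\eqref{eqn: A4}. I then define the candidate map $\phi:\mathsf A^{(K)}/(\epsilon_1-k,\epsilon_2-t)\to \mathcal D_{t,k}(K)$ by sending the low-order generators to Kalinov's corresponding generators, setting $\phi(\mathsf t_{n,m}):=t^{-1}\phi(\mathsf T_{n,m}(1))$ (which requires $t\neq 0$), and extending to all $\mathsf T_{n,m}(X)$ and $\mathsf t_{n,m}$ inductively by iterated commutators with $\mathsf t_{2,0}$ and $\mathsf t_{0,2}$, exactly as in the inductive construction of Proposition \ref{prop: filtration}.

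Next, I would verify that \eqref{eqn: A0}--\eqref{eqn: A4} hold for the images. Relations \eqref{eqn: A0}--\eqref{eqn: A2} are essentially tautological after the identification of low-order generators, since they encode the $\mathfrak{gl}_K\oplus \mathfrak{sl}_2$-structure shared by both algebras. The higher-index cases of \eqref{eqn: A3} follow from the degree-$0$ case by applying $\mathrm{ad}_{\mathsf t_{2,0}}$ and using the already-verified parts of \eqref{eqn: A2}, mirroring the inductive step in the proof of Proposition \ref{prop: filtration}. The relation \eqref{eqn: A4} is then a formal consequence of \eqref{eqn: A3} together with \eqref{eqn: A5}, along the same line as the derivation of \eqref{eqn: A5 related}, so it is automatic once \eqref{eqn: A3} is checked on the defining set. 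Surjectivity is then immediate: Kalinov's generators lie in the image of $\phi$ by construction.

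The main obstacle is matching conventions in the comparison of the two presentations, especially tracking the factors of $t^{-1}$ introduced by our redefinition $\mathsf t_{n,m}=\epsilon_2^{-1}\mathsf T_{n,m}(1)$ and checking that the cubic $\mathsf T_{0,m,0,n-1-m}(\Omega)$-terms in \eqref{eqn: A3}--\eqref{eqn: A4} reproduce Kalinov's cubic terms after the substitution $(\epsilon_1,\epsilon_2)=(k,t)$. As a conceptual sanity check one can compare both sides through the Dunkl-type representations on spherical Cherednik algebras: by Lemma \ref{lem: map rho_N} our algebra acts on $\mathrm{S}\mathcal H^{(K)}_N[\epsilon_2^{-1}]$, and Kalinov's algebra admits an analogous family of Dunkl actions, so any candidate $\phi$ must be compatible with both, which forces the rescaled relations to match up to the Kalinov normalization. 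This compatibility, combined with the inductive verification above, yields the asserted surjection.
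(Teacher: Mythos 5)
Your argument rests on a premise that is not available: that Kalinov's $\mathcal D_{t,k}(K)$ from \cite[5.3.3]{kalinov2021deformed} is \emph{presented} by generators and relations which are the $(\epsilon_1,\epsilon_2)=(k,t)$ specializations of \eqref{eqn: A0}--\eqref{eqn: A4}. It is not. Kalinov's DDCA is defined as a restricted ultraproduct $\prod_{\mathcal F}^r \mathrm S\mathcal H^{(K)}_n$ of spherical Cherednik algebras (equivalently via Deligne categories); what is known from \cite{kalinov2021deformed} is only that it is \emph{generated} by elements $T_{m,n}(X)$ and has a PBW-type basis, not that it satisfies a given list of defining relations. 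If your premise were true the proposition would be essentially vacuous (and would in fact give an isomorphism rather than a surjection); the whole content of the statement is precisely that the relations \eqref{eqn: A0}--\eqref{eqn: A4} \emph{hold} in $\mathcal D_{t,k}(K)$, and that is what must be proved. Consequently your step ``relations \eqref{eqn: A0}--\eqref{eqn: A2} are essentially tautological after the identification of low-order generators'' has no content, and the inductive bootstrapping of \eqref{eqn: A3}--\eqref{eqn: A4} from low-order cases cannot get started, because you have no way to verify even the low-order relations inside an algebra you only know as an ultraproduct.

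What you relegate to a ``conceptual sanity check'' is in fact the entire proof. The paper's argument is: by Lemma \ref{lem: map rho_N} the map $\mathsf T_{n,m}(X)\mapsto T_{m,n}(X)$, $\mathsf t_{n,m}\mapsto t^{-1}T_{m,n}(1)$ induces surjective homomorphisms onto each finite-$n$ spherical Cherednik algebra $\mathrm S\mathcal H^{(K)}_n/(\epsilon_1=k_n,\epsilon_2=t_n)$ whenever $t_n\neq 0$; since $t\neq 0$ the parameters $t_n$ are generically nonzero, so by the defining property of the ultraproduct the relations pass to $\widetilde{\mathcal D}_{t,k,\nu}(K)=\mathcal D_{t,k}(K)/(T_{0,0}(1)=\nu)$ for a transcendental $\nu$. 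There is then a final step you do not address at all: the ultraproduct only produces the quotient at a transcendental value of the central element $T_{0,0}(1)$, and one must use that $\mathcal D_{t,k}(K)$ is a free $\mathbb C[T_{0,0}(1)]$-module, together with the fact that the map is a homomorphism for all transcendental specializations, to conclude the surjection onto $\mathcal D_{t,k}(K)$ itself. To repair your write-up, replace the generators-and-relations comparison by this reduction to finite $N$ via $\rho_N$ and the ultraproduct argument, and add the freeness-over-$\mathbb C[T_{0,0}(1)]$ step.
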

\begin{proof}
It is shown in \cite{kalinov2021deformed} that $\mathcal D_{t,k}(K)$ has a set of generators $\{T_{m,n}(X)\:|\: X\in \mathfrak{gl}_K,(m,n)\in \mathbb N^2\}$. Consider the map
\begin{align}\label{eqn: map from A to D}
    \mathsf T_{n,m}(X)\mapsto T_{m,n}(X),\quad \mathsf t_{n,m}\mapsto \frac{1}{t}T_{m,n}(1).
\end{align}
We claim that for any transcendental number $\nu\in \mathbb C$ the map \eqref{eqn: map from A to D} generates a surjective algebra homomorphism 
\begin{align*}
   \mathsf A^{(K)}/(\epsilon_1=k,\epsilon_2=t)\twoheadrightarrow \widetilde{\mathcal D}_{t,k,\nu}(K)= \mathcal D_{t,k}(K)/(T_{0,0}(1)=\nu).
\end{align*}
In fact, if we fix a non-principle ultrafilter $\mathcal F$ on $\mathbb N$, and choose an isomorphism $\prod_{\mathcal F}\overline{\mathbb Q}\cong \mathbb C$ such that $\prod_{\mathcal F} n=\nu$, then $\widetilde{\mathcal D}_{t,k,\nu}(K)$ is by definition the restricted ultraproduct $\prod_{\mathcal F}^r \mathrm S\mathcal H^{(K)}_n/(\epsilon_1=k_n,\epsilon_2=t_n)$, and the parameters $t,k$ are identified to ultraproducts $\prod_{\mathcal F}t_n=t,\prod_{\mathcal F}k_n=k$, where $t_n,k_n\in \overline{\mathbb Q}$. From the Lemma \ref{lem: map rho_N} we know that if $t_n\neq 0$ then \eqref{eqn: map from A to D} gives rise to a surjective algebra homomorphism 
\begin{align*}
    \mathsf A^{(K)}/(\epsilon_1=k,\epsilon_2=t)/(\epsilon_1-k_n,\epsilon_2-t_n)\twoheadrightarrow \mathrm S\mathcal H^{(K)}_n/(\epsilon_1=k_n,\epsilon_2=t_n).
\end{align*}
By the definition of ultraproduct, $t_n$ are generically nonzero since $t\neq 0$, so the map \eqref{eqn: map from A to D} generically generates surjective algebra homomorphisms $\mathsf A^{(K)}/(\epsilon_1=k,\epsilon_2=t)/(\epsilon_1-k_n,\epsilon_2-t_n)\twoheadrightarrow \mathrm S\mathcal H^{(K)}_n/(\epsilon_1=k_n,\epsilon_2=t_n)$. By the property of ultraproduct, we conclude that \eqref{eqn: map from A to D} generates a surjective algebra morphism $\mathsf A^{(K)}/(\epsilon_1=k,\epsilon_2=t)\twoheadrightarrow \widetilde{\mathcal D}_{t,k,\nu}(K)$.

Now $\mathcal D_{t,k}(K)$ is a free $\mathbb C[T_{0,0}(1)]$-module, and the map \eqref{eqn: map from A to D} generates algebra homomorphisms for all transcendental $T_{0,0}(1)$, therefore \eqref{eqn: map from A to D} generates a surjective algebra homomorphism $\mathsf A^{(K)}/(\epsilon_1=k,\epsilon_2=t)\twoheadrightarrow \mathcal D_{t,k}(K)$.
\end{proof}

\begin{theorem}\label{thm: PBW}
$\mathsf A^{(K)}$ (resp. $\mathsf B^{(K)}$, resp. $\mathsf D^{(K)}$) is a free $\mathbb C[\epsilon_1,\epsilon_2]$-module with basis $\mathfrak B(\mathsf A^{(K)})$ (resp. $\mathfrak{B}(\mathsf B^{(K)})$, resp. $\mathfrak{B}(\mathsf D^{(K)})$).
\end{theorem}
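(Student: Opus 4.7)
The plan is to deduce PBW for all three algebras from a single PBW theorem for $\mathsf A^{(K)}$. Indeed, an ordered monomial in $\mathfrak B(\mathsf B^{(K)})$ of length $\ell$ equals $\epsilon_1^\ell$ times the corresponding ordered monomial in $\mathfrak B(\mathsf A^{(K)})$, while an ordered monomial in $\mathfrak B(\mathsf D^{(K)})$ with $k$ factors of the form $\mathsf T_{n,m}(1)$ equals $\epsilon_2^k$ times a corresponding monomial in $\mathfrak B(\mathsf A^{(K)})$. A $\mathbb C[\epsilon_1,\epsilon_2]$-basis of $\mathsf A^{(K)}$ given by $\mathfrak B(\mathsf A^{(K)})$ therefore immediately forces $\mathbb C[\epsilon_1,\epsilon_2]$-linear independence of $\mathfrak B(\mathsf B^{(K)})$ and $\mathfrak B(\mathsf D^{(K)})$ in their respective subalgebras; spanning is already Lemma \ref{lem: ordered monoimials}.

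For $\mathsf A^{(K)}$ itself I would combine three ingredients: the diagonal filtration of Section \ref{subsec: flitration on A}, the classical limit $\epsilon_1=0$, and the Cherednik maps $\rho_N$ of Lemma \ref{lem: map rho_N}. Proposition \ref{prop: filtration} shows that the commutator of two generators lies in strictly lower $F$-degree than their product (the commutator lies in $F_{n+m+p+q+1}$ while the product has $F$-degree $n+m+p+q+2$), so $\mathrm{gr}^F\mathsf A^{(K)}$ is commutative and is a quotient of $\mathrm{Sym}_{\mathbb C[\epsilon_1,\epsilon_2]}$ of the free module on the generators. Proving Theorem \ref{thm: PBW} for $\mathsf A^{(K)}$ is therefore equivalent to showing this surjection is injective on ordered monomials.

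Reducing modulo $\epsilon_1$, the surjection $\mathsf A^{(K)}/(\epsilon_1)\twoheadrightarrow U(D_{\epsilon_2}(\mathbb C)\otimes\mathfrak{gl}_K^{\sim})$ noted immediately after Definition \ref{def: A^K}, together with the classical PBW theorem for universal enveloping algebras of Lie algebras free over $\mathbb C[\epsilon_2]$, shows that $\mathfrak B(\mathsf A^{(K)})$ is a $\mathbb C[\epsilon_2]$-basis of $\mathsf A^{(K)}/(\epsilon_1)$, and that this surjection is in fact an isomorphism.

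The main obstacle is to promote this mod-$\epsilon_1$ basis statement to a $\mathbb C[\epsilon_1,\epsilon_2]$-basis of $\mathsf A^{(K)}$ itself, equivalently to rule out $\epsilon_1$-torsion. My approach is to exploit the surjections $\rho_N$: a hypothetical finite $\mathbb C[\epsilon_1,\epsilon_2]$-relation $\sum_i c_i m_i = 0$ among ordered monomials in $\mathsf A^{(K)}$ pushes forward, under $\rho_N$ for $N$ much larger than the maximum diagonal degree of the $m_i$, to a relation in $\mathrm S\mathcal H^{(K)}_N[\epsilon_2^{-1}]$ --- which is a free $\mathbb C[\epsilon_1,\epsilon_2,\epsilon_2^{-1}]$-module on its own well-established PBW basis. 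The technical heart of the argument, which I expect to be the chief obstacle, is the combinatorial verification that for $N\gg 0$ the explicit power-sum expressions \eqref{eqn: map rho_N} have pairwise distinct leading symbols in the Cherednik PBW basis for distinct ordered monomials $m_i$. Granting this, all $c_i$ must vanish in $\mathbb C[\epsilon_1,\epsilon_2,\epsilon_2^{-1}]$, hence also in $\mathbb C[\epsilon_1,\epsilon_2]$, completing the PBW theorem for $\mathsf A^{(K)}$ and thereby for $\mathsf B^{(K)}$ and $\mathsf D^{(K)}$ as well.
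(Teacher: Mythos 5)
Your reduction of the $\mathsf B^{(K)}$ and $\mathsf D^{(K)}$ cases to the $\mathsf A^{(K)}$ case by rescaling ordered monomials is correct and is exactly what the paper does (phrased there as "localize to $\mathbb C[\epsilon_1^{\pm},\epsilon_2^{\pm}]$"); spanning is indeed Lemma \ref{lem: ordered monoimials}. The problem is that everything you then do for $\mathsf A^{(K)}$ either is logically idle or is deferred. The diagonal-filtration and mod-$\epsilon_1$ discussion never feeds into your final argument (and by itself it cannot rule out $\epsilon_1$-torsion, which is the whole difficulty), so the entire weight of the proof rests on the one step you explicitly do not carry out: that for $N\gg 0$ the images $\rho_N(m_i)$ of distinct ordered monomials are $\mathbb C[\epsilon_1,\epsilon_2^{\pm}]$-linearly independent in $\mathrm S\mathcal H^{(K)}_N[\epsilon_2^{-1}]$. "Granting this" grants the theorem: this independence statement \emph{is} the content of Theorem \ref{thm: PBW}, and it is not a routine verification. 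Pairwise distinctness of leading symbols is not automatic here, because the images of the generators are matrix-extended power sums $\sum_i\sym(x_i^m y_i^n)X_i$, and one must show these are algebraically independent (in the appropriate associated graded sense) once $N$ exceeds the relevant degrees; products of distinct formal monomials in algebraically dependent elements can of course coincide. For $K=1$ this is the classical large-$N$ independence of two-variable power sums; for general $K$ it is precisely the statement established in \cite[5.3.4]{kalinov2021deformed} (via restricted ultraproducts of the $\mathrm S\mathcal H^{(K)}_n$) and, in the $\epsilon_1=0$ avatar, in \cite[Proposition 15.0.2]{costello2017holography}.

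The paper's actual proof is therefore shorter than yours but not more self-contained: after the same reduction to $\mathsf A^{(K)}$, it specializes $(\epsilon_1,\epsilon_2)\mapsto(k,t)$ with $t\neq 0$ and quotes \cite[5.3.4]{kalinov2021deformed} that $\mathfrak B(\mathsf A^{(K)})$ maps to a $\mathbb C$-basis of $\mathcal D_{t,k}(K)$ (using the surjection of Proposition \ref{prop: compare with Kalinov's DDCA}), whence any relation's coefficients vanish at a Zariski-dense set of parameters and hence identically. If you want your version to stand on its own, you must either supply the large-$N$ triangularity/independence argument for the matrix-extended power sums in the Cherednik PBW basis, or cite the Kalinov--Costello results as the paper does; as written, the "technical heart" you name is a genuine gap rather than a deferred routine check.
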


\begin{proof}
By Lemma \ref{lem: ordered monoimials}, it suffices to show that there is no nontrivial relations among elements in $\mathfrak B(\mathsf A^{(K)})$ or $\mathfrak{B}(\mathsf B^{(K)})$ or $\mathfrak{B}(\mathsf D^{(K)})$. Localize to $\mathbb C[\epsilon_1^{\pm},\epsilon_2^{\pm}]$, the basis elements in $\mathfrak B(\mathsf A^{(K)})$ and $\mathfrak{B}(\mathsf B^{(K)})$ and $\mathfrak{B}(\mathsf D^{(K)})$ are the same up to scaling, so it is enough to prove the claim for $\mathfrak B(\mathsf A^{(K)})$. For all $(t,k)\in \mathbb C^{\times}\times \mathbb C$, it is known that the image of $\mathfrak B(\mathsf A^{(K)})$ in $\mathcal D_{t,k}(K)$ forms a $\mathbb C$-basis by \cite[5.3.4]{kalinov2021deformed}, this implies that there is no nontrivial relations among elements in $\mathfrak B(\mathsf A^{(K)})$.
\end{proof}

\begin{corollary}\label{cor: truncation of D}
The homomorphism $\rho_N$ in Lemma \ref{lem: map rho_N} restricts to a surjective algebra homomorphism $\rho_N:\mathsf D^{(K)}\twoheadrightarrow \mathrm S\mathcal H^{(K)}_N$, and $\ker(\prod_N \rho_N)=0$. Moreover, for all $(t,k)\in \mathbb C^2$, the map $\mathsf T_{n,m}(X)\mapsto T_{m,n}(X)$ generates an algebra isomorphism $\mathsf D^{(K)}/(\epsilon_1=k,\epsilon_2=t)\cong\mathcal D_{t,k}(K)$.
\end{corollary}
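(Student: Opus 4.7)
The plan is to establish the three assertions in order: surjectivity of each $\rho_N$, the isomorphism with Kalinov's DDCA at every specialization, and finally the triviality of $\ker(\prod_N \rho_N)$, which will follow from the first two claims combined with the ultraproduct description of $\mathcal D_{t,k}(K)$.

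First I would handle surjectivity of $\rho_N:\mathsf D^{(K)}\to \mathrm S\mathcal H^{(K)}_N$. The images $\rho_N(\mathsf T_{n,m}(X))=\sum_{i=1}^N\sym(x_i^m y_i^n)X_i\mathbf e$ already lie in $\mathrm S\mathcal H^{(K)}_N$ itself (without inverting $\epsilon_2$); the element $\mathsf t_{n,m}$, whose image requires division by $\epsilon_2$, is not a generator of $\mathsf D^{(K)}$, and only the safe combination $\mathsf T_{n,m}(1)=\epsilon_2\mathsf t_{n,m}$ appears. The fact that these symmetric power-sum elements generate $\mathrm S\mathcal H^{(K)}_N$ is a standard PBW-type statement: passing to the associated graded with respect to the Bernstein filtration reduces to the classical result that $(\mathbb C[x_i,y_i]\otimes \mathfrak{gl}_K^{\otimes N})^{\mathfrak S_N}$ is generated as a $\mathbb C$-algebra by the symmetrized power sums $\sum_i \sym(x_i^m y_i^n)X_i$ for $X$ running through a basis of $\mathfrak{gl}_K$.

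Next I would prove the isomorphism $\mathsf D^{(K)}/(\epsilon_1-k,\epsilon_2-t)\cong \mathcal D_{t,k}(K)$. In the case $t\ne 0$, Proposition \ref{prop: compare with Kalinov's DDCA} gives a surjection $\mathsf A^{(K)}/(\epsilon_1-k,\epsilon_2-t)\twoheadrightarrow \mathcal D_{t,k}(K)$ sending $\mathsf T_{n,m}(X)\mapsto T_{m,n}(X)$. Since $\mathcal D_{t,k}(K)$ is generated by the $T_{m,n}(X)$ with $X\in \mathfrak{gl}_K$, restricting to $\mathsf D^{(K)}/(\epsilon_1-k,\epsilon_2-t)$ still gives a surjection. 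Theorem \ref{thm: PBW} exhibits $\mathfrak B(\mathsf D^{(K)})$ as a $\mathbb C[\epsilon_1,\epsilon_2]$-basis of the source, while \cite[5.3.4]{kalinov2021deformed} exhibits the corresponding ordered monomials as a $\mathbb C$-basis of the target; a surjection sending one basis bijectively onto another must be an isomorphism. The $t=0$ case is not covered by Proposition \ref{prop: compare with Kalinov's DDCA} and needs a separate argument: both $\mathsf D^{(K)}$ and a $\mathbb C[\epsilon_1,\epsilon_2]$-form of Kalinov's DDCA are free $\mathbb C[\epsilon_1,\epsilon_2]$-modules on matching PBW bases, so the assignment $\mathsf T_{n,m}(X)\mapsto T_{m,n}(X)$ extends uniquely to a $\mathbb C[\epsilon_1,\epsilon_2]$-algebra map which is bijective because it sends basis to basis, and the $t=0$ isomorphism is obtained by specialization.

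Finally I would deduce $\ker(\prod_N \rho_N)=0$. Given a nonzero $a\in \mathsf D^{(K)}$, Theorem \ref{thm: PBW} lets us pick $(k,t)\in \mathbb C^2$ with $t\ne 0$ at which the image of $a$ in $\mathsf D^{(K)}/(\epsilon_1-k,\epsilon_2-t)$ is nonzero; by the isomorphism above, its image in $\mathcal D_{t,k}(K)$ is nonzero. Since by construction $\mathcal D_{t,k}(K)=\prod^r_{\mathcal F}\mathrm S\mathcal H^{(K)}_n/(\epsilon_1-k_n,\epsilon_2-t_n)$ is a restricted ultraproduct, nonzeroness there forces $\rho_n(a)\ne 0$ in $\mathrm S\mathcal H^{(K)}_n/(\epsilon_1-k_n,\epsilon_2-t_n)$, and a fortiori in $\mathrm S\mathcal H^{(K)}_n$, for $\mathcal F$-generically many $n$, contradicting the hypothesis. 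The main obstacle in the whole proof is the $t=0$ case of the isomorphism: one must promote the assignment on generators to a well-defined map of $\mathbb C[\epsilon_1,\epsilon_2]$-algebras by checking that the defining relations \eqref{eqn: A0}-\eqref{eqn: A4}, restricted to the $\mathsf T$-type generators of $\mathsf D^{(K)}$, survive in a $\mathbb C[\epsilon_1,\epsilon_2]$-form of $\mathcal D(K)$, and then rely on freeness of both sides to spread the generic isomorphism to the special fiber at $t=0$.
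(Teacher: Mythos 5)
Your overall architecture matches the paper's: image-in-$\mathrm S\mathcal H^{(K)}_N$ plus generation for surjectivity, the specialization isomorphism with $\mathcal D_{t,k}(K)$, and then the ultraproduct contrapositive for $\ker(\prod_N\rho_N)=0$. Your surjectivity argument (pass to the associated graded and invoke Weyl-type polarization to see that the symmetrized power sums generate $(\mathbb C[x,y]\otimes\gl_K^{\otimes N})^{\mathfrak S_N}$) is a legitimate and more explicit route than the paper's one-line appeal to flatness, and your kernel argument is essentially the paper's.

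The genuine gap is exactly where you flag it: the $t=0$ fiber of the isomorphism. Your proposed fix does not work as stated, for two reasons. First, there is no independently given ``$\mathbb C[\epsilon_1,\epsilon_2]$-form of Kalinov's DDCA'': $\mathcal D_{t,k}(K)$ is defined pointwise via ultraproducts, so asserting a free $\mathbb C[\epsilon_1,\epsilon_2]$-module with matching PBW basis into which the generic isomorphism spreads is assuming a polynomial-family structure that would itself have to be proved (or carefully extracted from \cite{kalinov2021deformed}). Second, ``checking that the defining relations \eqref{eqn: A0}--\eqref{eqn: A4} restricted to the $\mathsf T$-type generators survive'' is not a valid way to produce a map out of $\mathsf D^{(K)}$: the algebra $\mathsf D^{(K)}$ is defined as a subalgebra of $\mathsf A^{(K)}$ (Definition \ref{def: subalg of A}), not by a presentation, so there is no list of relations whose verification would yield a homomorphism. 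The fix the paper intends (``analogous to Proposition \ref{prop: compare with Kalinov's DDCA}'') is cleaner: rerun the ultraproduct argument for $\mathsf D^{(K)}$ in place of $\mathsf A^{(K)}$. The only place Proposition \ref{prop: compare with Kalinov's DDCA} needed $t\neq 0$ was that $\rho_n(\mathsf t_{n,m})$ involves $\epsilon_2^{-1}$; on $\mathsf D^{(K)}$ the generators $\mathsf T_{n,m}(X)$ map to $\sum_i\sym(x_i^my_i^n)X_i\mathbf e\in\mathrm S\mathcal H^{(K)}_n$ with no inversion, so the component maps $\mathsf D^{(K)}/(\epsilon_1-k_n,\epsilon_2-t_n)\to\mathrm S\mathcal H^{(K)}_n/(\epsilon_1-k_n,\epsilon_2-t_n)$ exist for \emph{every} $n$ and every $t_n$, including $t_n=0$; their restricted ultraproduct is then an algebra homomorphism $\mathsf D^{(K)}/(\epsilon_1-k,\epsilon_2-t)\to\widetilde{\mathcal D}_{t,k,\nu}(K)$ for all $(t,k)$, which one lifts to $\mathcal D_{t,k}(K)$ by freeness over $\mathbb C[T_{0,0}(1)]$ and then upgrades to an isomorphism by the basis-to-basis comparison you already use in the $t\neq 0$ case.
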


\begin{proof}
The first claim follows from the flatness of $\mathsf D^{(K)}$ as a $\mathbb C[\epsilon_2]$-module which is a consequence of PBW theorem. The proof of the last claim is analogous to that of Proposition \ref{prop: compare with Kalinov's DDCA} and we omit it. Finally, let $f\in\ker(\prod_N \rho_N)$, suppose that $f\neq 0$ then there exists $(t,k)\in \mathbb C^2$ such that $f(\epsilon_1=k,\epsilon_2=t)\neq 0$, so the map $\mathsf D^{(K)}/(\epsilon_1=k,\epsilon_2=t)\to\mathcal D_{t,k}(K)$ has a nontrivial kernel, which is a contradiction, hence $\ker(\prod_N \rho_N)=0$.
\end{proof}

\begin{corollary}\label{cor: A is DDCA}
The $\mathbb C[\epsilon_2]$-algebra homomorphism $\mathsf A^{(K)}/(\epsilon_1)\to U(D_{\epsilon_2}(\mathbb C)\otimes \mathfrak{gl}_K^{\sim})$ sending $\mathsf T_{n,m}(X),\mathsf t_{n,m}\in \mathsf A^{(K)}/(\epsilon_1)$ to the same symbols in $D_{\epsilon_2}(\mathbb C)\otimes \mathfrak{gl}_K^{\sim}$ is an isomorphism. Moreover this isomorphism restricts to a $\mathbb C[\epsilon_2]$-algebra isomorphism $\mathsf D^{(K)}/(\epsilon_1)\cong U(D_{\epsilon_2}(\mathbb C)\otimes \mathfrak{gl}_K)$.
\end{corollary}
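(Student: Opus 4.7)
The plan is to reduce the statement to a direct matching of PBW bases on the two sides. The homomorphism has already been constructed in the paragraph preceding the corollary, and surjectivity is evident because every generator of $U(D_{\epsilon_2}(\mathbb{C}) \otimes \mathfrak{gl}_K^\sim)$ lies in the image. So the entire task reduces to proving injectivity.

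For injectivity I would first invoke Theorem \ref{thm: PBW}, which presents $\mathsf A^{(K)}$ as a free $\mathbb C[\epsilon_1,\epsilon_2]$-module on $\mathfrak B(\mathsf A^{(K)})$; passing to the quotient by $(\epsilon_1)$ then yields a free $\mathbb C[\epsilon_2]$-module with basis the image of $\mathfrak B(\mathsf A^{(K)})$. On the target side, the discussion preceding the corollary already records that $D_{\epsilon_2}(\mathbb C) \otimes \mathfrak{gl}_K^\sim$ is itself free as a $\mathbb C[\epsilon_2]$-module with basis $\{\mathsf T_{n,m}(X),\mathsf t_{n,m}\:|\: X\in\mathfrak B,(n,m)\in\mathbb N^2\}$. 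The classical PBW theorem applies over the base ring $\mathbb C[\epsilon_2]$ precisely because this Lie algebra is a free module, and it supplies a basis of $U(D_{\epsilon_2}(\mathbb C) \otimes \mathfrak{gl}_K^\sim)$ consisting of ordered monomials in these generators, which is tautologically the image of $\mathfrak B(\mathsf A^{(K)})$ under the homomorphism. The map therefore sends a basis bijectively onto a basis, hence is an isomorphism.

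The restriction statement for $\mathsf D^{(K)}$ is handled identically: Theorem \ref{thm: PBW} presents $\mathsf D^{(K)}/(\epsilon_1)$ as free on $\mathfrak B(\mathsf D^{(K)})$, while the classical PBW theorem for $U(D_{\epsilon_2}(\mathbb C) \otimes \mathfrak{gl}_K)$ supplies a basis of ordered monomials in $\{\mathsf T_{n,m}(X)\:|\: X\in\mathfrak B_+,(n,m)\in\mathbb N^2\}$; the two sets are matched by the same map.

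There is no serious obstacle in the argument once Theorem \ref{thm: PBW} is in hand. The only point that genuinely requires care is the applicability of the classical PBW theorem over the base ring $\mathbb C[\epsilon_2]$, which is fine because the Lie algebras in question are free $\mathbb C[\epsilon_2]$-modules, so the standard proof (filtered-associated-graded comparison with the symmetric algebra) goes through verbatim. Everything else is a tautological identification of ordered-monomial sets.
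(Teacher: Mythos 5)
Your proposal is correct and follows essentially the same route as the paper: both arguments observe that $\mathfrak B(\mathsf A^{(K)})$ is simultaneously a $\mathbb C[\epsilon_2]$-basis of $\mathsf A^{(K)}/(\epsilon_1)$ (by Theorem \ref{thm: PBW} and freeness) and, via the classical PBW theorem for the free $\mathbb C[\epsilon_2]$-module Lie algebra $D_{\epsilon_2}(\mathbb C)\otimes\mathfrak{gl}_K^{\sim}$, a basis of $U(D_{\epsilon_2}(\mathbb C)\otimes\mathfrak{gl}_K^{\sim})$, so the map carries a basis to a basis. The restriction to $\mathsf D^{(K)}$ is handled the same way in both.
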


\begin{proof}
Since $\mathfrak B(\mathsf A^{(K)})$ is simultaneously a $\mathbb C[\epsilon_2]$-basis of $\mathsf A^{(K)}/(\epsilon_1)$ and a $\mathbb C[\epsilon_2]$-basis of $U(D_{\epsilon_2}(\mathbb C)\otimes \mathfrak{gl}_K^{\sim})$, and the homomorphism $\mathsf A^{(K)}/(\epsilon_1)\to U(D_{\epsilon_2}(\mathbb C)\otimes \mathfrak{gl}_K^{\sim})$ induces identity on $\mathfrak B(\mathsf A^{(K)})$, we conclude that $\mathsf A^{(K)}/(\epsilon_1)\cong U(D_{\epsilon_2}(\mathbb C)\otimes \mathfrak{gl}_K^{\sim})$. The second claim is proven similarly.
\end{proof}

\begin{corollary}\label{cor: compare two DDCAs}
Assume that $K>1$, then $\mathbb D^{(K)}[\epsilon_3^{-1}]=\mathsf D^{(K)}[\epsilon_3^{-1}]$, in other words $\mathsf D^{(K)}$ is generated by $\{\mathsf T_{n,m}(X)\:|\: X\in \mathfrak{sl}_K,(n,m)\in \{0,1\}^2\}$ if $\epsilon_3$ is invertible.
\end{corollary}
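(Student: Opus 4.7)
The plan is to prove by induction on $s := n+m$ that $\mathsf T_{n,m}(X) \in \mathbb D^{(K)}[\epsilon_3^{-1}]$ for every $X \in \gl_K$ and $(n,m) \in \mathbb N^2$; combined with the fact that these elements generate $\mathsf D^{(K)}$, this yields $\mathsf D^{(K)}[\epsilon_3^{-1}] \subseteq \mathbb D^{(K)}[\epsilon_3^{-1}]$ and hence the claim (the reverse inclusion being immediate). The main lever is relation \eqref{eqn: A3} with $X, Y \in \mathfrak{sl}_K$, rearranged to isolate the anticommutator:
\begin{equation*}
\tfrac{\epsilon_3 n}{2}\mathsf T_{0,n-1}(\{X,Y\}) = \mathsf T_{1,n}([X,Y]) - [\mathsf T_{1,0}(X),\mathsf T_{0,n}(Y)] - n\epsilon_1\mathrm{tr}(Y)\mathsf T_{0,n-1}(X) + \epsilon_1 Q_n(X,Y).
\end{equation*}
Since $K > 1$, choosing $X = E^1_2,\,Y = E^2_1 \in \mathfrak{sl}_K$ gives $[X,Y] = E^1_1-E^2_2 \in \mathfrak{sl}_K$, $\mathrm{tr}(Y) = 0$, and $\{X,Y\} = E^1_1+E^2_2 = \tfrac{2}{K}\cdot 1 + (\text{traceless part})$; hence the identity exposes $\mathsf T_{0,n-1}(1)$ after inverting $\epsilon_3$.

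For the base case $s=0$, apply the identity with $n=1$ to produce $\mathsf T_{0,0}(1) = \epsilon_2\mathsf t_{0,0}$. The element $\mathsf t_{0,0}$ is central in $\mathsf A^{(K)}$ (by the $p=q=0$ case of \eqref{eqn: A2}), so $Q_1(X,Y)$ organizes as a polynomial in $\mathsf T_{0,0}(1)$. A direct trace computation using $\Omega = E^a_b\otimes E^b_a$, $\mathrm{tr}(E^a_b) = \delta^a_b$, and $\mathrm{tr}[X,Y]=0$ shows that the $\mathsf T_{0,0}(1)^2$-coefficient in $Q_1$ vanishes identically, reducing the equation to the linear form
\begin{equation*}
\mathsf T_{0,0}(1)\bigl(-\tfrac{\epsilon_3}{K} + \epsilon_1 R\bigr) = S,\qquad R,S \in \mathbb D^{(K)}[\epsilon_3^{-1}].
\end{equation*}
With $\epsilon_3$ invertible the leading coefficient is a unit up to an $\epsilon_1 R$ perturbation; an $\epsilon_1$-adic iterative substitution produces approximants $e_N \in \mathbb D^{(K)}[\epsilon_3^{-1}]$ with $\mathsf T_{0,0}(1) - e_N \in \epsilon_1^N \mathsf D^{(K)}[\epsilon_3^{-1}]$. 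A Nakayama/Artin--Rees-style argument on the finitely generated pieces of the diagonal PBW filtration of Subsection \ref{subsec: flitration on A} (using Proposition \ref{prop: filtration} and Theorem \ref{thm: PBW}) upgrades this to $\mathsf T_{0,0}(1) \in \mathbb D^{(K)}[\epsilon_3^{-1}]$.

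For the inductive step at level $s+1$, the traceless element $\mathsf T_{n,m}(X)$ with $X\in\mathfrak{sl}_K$ is produced from a bracket $[\mathsf T_{1,0}(A),\mathsf T_{n-1,m}(B)]$ (or its $\tau$-conjugate under \eqref{eqn: automorphism tau}, which preserves $\mathbb D^{(K)}$, when $n=0$) with $A \in \mathfrak{sl}_K, B \in \gl_K$ chosen so that $[A,B]=X$; Proposition \ref{prop: filtration} gives the leading term $\mathsf T_{n,m}([A,B])$ plus a correction $\bar f$ of strictly smaller filtration degree, which lies in $\mathbb D^{(K)}[\epsilon_3^{-1}]$ by induction, and the identity $[\mathfrak{sl}_K,\gl_K]=\mathfrak{sl}_K$ covers all traceless targets. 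The scalar element $\mathsf T_{n,m}(1)$ is then handled by reapplying the displayed identity at an appropriately shifted $n$-value together with the centrality of $\mathsf T_{0,0}(1)$ and, when needed, Lemma \ref{lem: t[m,n] preserves D^K} to propagate scalars via the adjoint action of the $\mathsf t_{p,q}$. The main obstacle throughout is the controlled $\epsilon_1$-adic inversion appearing already in the base case: the recursive appearance of $\mathsf T_{0,0}(1)$ inside the $\epsilon_1 Q$-correction forces the filtration/Nakayama argument to guarantee that the constructed limit lies in the un-completed subalgebra $\mathbb D^{(K)}[\epsilon_3^{-1}]$ rather than in an $\epsilon_1$-adic completion.
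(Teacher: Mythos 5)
Your overall strategy---use relation (A3) with traceless $X,Y$ to expose the trace part $\mathsf T_{0,n-1}(\{X,Y\})$ with an $\epsilon_3$ coefficient, then invert $\epsilon_3$---is the same lever the paper pulls, but your choice $X=E^1_2$, $Y=E^2_1$ creates a genuine gap. With that choice the correction $\epsilon_1 Q_n(X,Y)$ contains products such as $\mathsf T_{0,m}(E^1_b)\mathsf T_{0,n-1-m}(E^b_1)$ whose matrix arguments are \emph{not} traceless, so the unknown scalar reappears on the right-hand side and your equation becomes $\mathsf T_{0,0}(1)\bigl(-\tfrac{\epsilon_3}{K}+\epsilon_1 R\bigr)=S$ with $R$ a nonconstant element of positive filtration degree. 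The factor $-\tfrac{\epsilon_3}{K}+\epsilon_1 R$ is invertible only in an $\epsilon_1$-adic completion, and the Nakayama/Artin--Rees step you invoke to descend from that completion does not apply: the quotient $\mathsf D^{(K)}[\epsilon_3^{-1}]/\mathbb D^{(K)}[\epsilon_3^{-1}]$ is not a finitely generated $\mathbb C[\epsilon_1,\epsilon_2,\epsilon_3^{-1}]$-module, and your approximants $e_N$ do not stay inside any fixed piece $F_i$ of the diagonal filtration, since each iteration multiplies by $R$ and so $\deg_F e_N\to\infty$. Knowing $\mathsf T_{0,0}(1)-e_N\in\epsilon_1^N\mathsf D^{(K)}[\epsilon_3^{-1}]$ for all $N$ therefore does not place $\mathsf T_{0,0}(1)$ in $\mathbb D^{(K)}[\epsilon_3^{-1}]$ unless you separately prove that $\mathbb D^{(K)}[\epsilon_3^{-1}]$ is $\epsilon_1$-adically closed (equivalently, $\epsilon_1$-saturated) in $\mathsf D^{(K)}[\epsilon_3^{-1}]$, which is essentially as hard as the statement being proved. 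A secondary structural problem is that your induction on $s=n+m$ is circular: producing the scalar $\mathsf T_{0,n-1}(1)$ at level $n-1$ consumes $\mathsf T_{1,n}([X,Y])$ at level $n+1$, while the correction terms of Proposition \ref{prop: filtration} used for the traceless elements are polynomials in $\mathsf T_{i,j}(Z)$ with $Z\in\mathfrak{gl}_K$, hence already involve lower scalars.

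The fix, and the route the paper takes, is to set $X=Y=H_1=E^1_1-E^2_2$ in (A3). Then $[X,Y]=0$ and $\mathrm{tr}(Y)=0$, so the only surviving terms are $-\epsilon_3 n\,\mathsf T_{0,n-1}(H_1^2)$ and the single quadratic sum built from $(H_1\otimes H_1-H_1^2\otimes 1)\cdot\Omega=-(E^1_2\otimes E^2_1+E^2_1\otimes E^1_2)-\sum_{i=1}^2\sum_{j\neq i}^K E^i_j\otimes E^j_i$, whose tensor factors are all off-diagonal and hence traceless. Every term other than $\mathsf T_{0,n-1}(H_1^2)=\mathsf T_{0,n-1}(E^1_1+E^2_2)$ is then manifestly in $\mathbb D^{(K)}$, and dividing by $\epsilon_3$ yields the trace part with no recursion and no completion argument.
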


\begin{proof}
The idea of proof is essentially explained in \cite[6.3.1]{kalinov2021deformed}. In effect, \cite[6.3.1]{kalinov2021deformed} can be restated as the claim that if $K>3$ then for all $(t,k)\in \mathbb C^2$ such that $t+Kk\neq 0$, the DDC algebra $\mathcal D_{t,k}(K)$ is generated by $\{T_{m,n}(X)\:|\: X\in \mathfrak{sl}_K,(m,n)\in \{0,1\}^2\}$. We shall simplify the argument in the \emph{loc. cit.} and relax the technical assumption to include all $K>1$ cases, and our argument works for $\mathbb C[\epsilon_1,\epsilon_2,\epsilon_3^{-1}]$-families, not just for complex numbers $(t,k)$.

Notice that $\{\mathsf T_{n,0}(X)\:|\: X\in \mathfrak{sl}_K,n\in \mathbb N \}$ forms the current Lie algebra $\mathfrak{sl}_K[y]$ and similarly $\{\mathsf T_{0,n}(X)\:|\: X\in \mathfrak{sl}_K,n\in \mathbb N \}$ also forms the current Lie algebra $\mathfrak{sl}_K[x]$, thus $\{\mathsf T_{n,0}(X), \mathsf T_{0,n}(X)\:|\: X\in \mathfrak{sl}_K,n\in \mathbb N \}\subset \mathbb D^{(K)}$. Then it follows that $[\mathsf t_{2,0},\mathbb D^{(K)}]\subset \mathbb D^{(K)}$ since the adjoint action of $\mathsf t_{2,0}$ maps the generators of $\mathbb D^{(K)}$ to the subset $\{\mathsf T_{n,0}(X)\:|\: X\in \mathfrak{sl}_K,n\in \{1,2\} \}$ which is contain in $\mathbb D^{(K)}$. Since $\mathsf D^{(K)}$ is generated by the image of the adjoint action of $\mathsf t_{2,0}$ on the subset $\{\mathsf T_{0,n}(X)\:|\: X\in \mathfrak{gl}_K,n\in \mathbb N \}$, it suffices to show that if $K>1$ then for all $n\in \mathbb N$ there exists $X\in \mathfrak{gl}_K$ such that $\mathrm{tr}(X)\neq 0$ and that $\mathsf T_{0,n}(X)\in \mathbb D^{(K)}[\epsilon_3^{-1}]$. 

To this end, we set $X=Y=H_1:=E^1_1-E^2_2$ in the first equation of \eqref{eqn: A3}, and get
\begin{align*}
    [\mathsf T_{1,0}(H_1),\mathsf T_{0,n}(H_1)]&=-\epsilon_3 n\mathsf T_{0,n-1}((H_1)^2)+\epsilon_1 \sum_{m=0}^{n-1}\mathsf T_{0,m,0,n-1-m}((H_1\otimes H_1-(H_1)^2\otimes 1)\cdot \Omega).
\end{align*}
Simple computation shows that
\begin{align*}
    (H_1\otimes H_1-(H_1)^2\otimes 1)\cdot \Omega=-(E^1_2\otimes E^2_1+E^2_1\otimes E^1_2)-\sum_{i=1}^2\sum_{j\neq i}^KE^i_j\otimes E^j_i,
\end{align*}
therefore $\mathsf T_{m,n-1-m}((H_1\otimes H_1-(H_1)^2\otimes 1)\cdot \Omega)\in \mathbb D^{(K)}$. This implies that $\mathsf T_{0,n-1}(E^1_1+E^2_2)\in \mathbb D^{(K)}[\epsilon_3^{-1}]$. This concludes the proof.
\end{proof}

\subsection{Relation to Costello's DDCA}
Define $\mathscr O_{\epsilon_1}(\mathcal M_{\epsilon_2}(N,K))$ to be the quantum Higgs branch algebra for the ADHM quiver gauge theory of gauge rank $N$ and flavor rank $K$. $\mathscr O_{\epsilon_1}(\mathcal M_{\epsilon_2}(N,K))$ is generated by $\GL_N$-invariant monomials in $\{X^i_j,Y^i_j,I^a_i,J^j_a\:|\: 1\le i,j\le N,1\le a\le K\}$ with relations
\begin{equation}
\begin{aligned}
[X^i_j,Y^k_l]=&\epsilon_1\delta^i_l\delta^k_j,\;[J^j_a,I^b_i]=\epsilon_1\delta^j_i\delta_a^b,\\
g(X,Y,I,J)&\left(:{[X,Y]^i_j}:+I^a_jJ^i_a-\epsilon_2\delta^i_j\right)=0,\\
\end{aligned}
\end{equation}
and other commutations between symbols $X,Y,I,J$ are zero. Here $g(X,Y,I,J)$ means arbitrary polynomials in $X,Y,I,J$, and normal ordering convention is such that $Y$ is to the left of $X$ and that $I$ to the left of $J$.

\begin{lemma}\label{lem: map to Higgs branch}
The map on generators
\begin{align}
    p_N(\mathsf T'_{n,m}(E^a_b))=I^a\sym (X^nY^m)J_b,\quad  p_N(\mathsf t'_{n,m})=\mathrm{Tr}(\sym (X^nY^m)),
\end{align}
uniquely determines a surjective algebra homomorphism $p_N:\mathsf B^{(K)}\twoheadrightarrow \mathscr O_{\epsilon_1}(\mathcal M_{\epsilon_2}(N,K))$.
\end{lemma}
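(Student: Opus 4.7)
The plan is to verify directly that the proposed formulas on generators respect the defining relations of $\mathsf B^{(K)}$, and then to establish surjectivity by classical invariant theory. By Definition \ref{def: subalg of A}, $\mathsf B^{(K)}$ is generated by $\mathsf T'_{n,m}(X) = \epsilon_1 \mathsf T_{n,m}(X)$ and $\mathsf t'_{n,m} = \epsilon_1 \mathsf t_{n,m}$, and its defining relations are obtained from \eqref{eqn: A0}--\eqref{eqn: A4} by multiplying through by appropriate powers of $\epsilon_1$ to clear all inverse powers. It thus suffices to check the rescaled relations inside $\mathscr O_{\epsilon_1}(\mathcal M_{\epsilon_2}(N,K))$, after which surjectivity is a separate argument.

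The two computational ingredients are the canonical commutators $[X^i_j, Y^k_l] = \epsilon_1 \delta^i_l \delta^k_j$, $[J^j_a, I^b_i] = \epsilon_1 \delta^j_i \delta^b_a$, and the quantum moment map identity $:{[X,Y]^i_j}: + I^a_j J^i_a = \epsilon_2 \delta^i_j$. Relations \eqref{eqn: A0}--\eqref{eqn: A2} reduce to routine contraction computations; for instance,
\begin{equation*}
[I^a J_b,\, I^c Y^n J_d] = \epsilon_1 \delta^c_b\, I^a Y^n J_d - \epsilon_1 \delta^a_d\, I^c Y^n J_b = \epsilon_1\, p_N(\mathsf T'_{0,n}([E^a_b, E^c_d])),
\end{equation*}
matching the $\epsilon_1$-rescaled form of \eqref{eqn: A1}, while the elements $\mathsf t'_{2,0}, \mathsf t'_{1,1}, \mathsf t'_{0,2}$ act as the quadratic Hamiltonians generating the natural $\SL_2$-action on the Higgs branch, which yields the \eqref{eqn: A2} bracket with $\mathsf t'$ automatically. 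The essential relation \eqref{eqn: A3} is verified by expanding $[I^a X J_b,\, I^c \sym(Y^n) J_d]$ and sorting the contractions into two types: single contractions, which produce the linear terms on the right-hand side of \eqref{eqn: A3} involving $\mathsf T'_{1,n}([X,Y])$, $\mathsf T'_{0,n-1}(\{X,Y\})$, and $\mathrm{tr}(Y)\mathsf T'_{0,n-1}(X)$; and double contractions, which create an interior factor $[X,Y]$ that must be traded via the moment map for $\epsilon_2 - IJ$. This second source produces the quadratic $\Omega$-twisted terms of the shape $\mathsf T'_{0,m}\mathsf T'_{0,n-1-m}$, and the $\epsilon_3 = -K\epsilon_1 - \epsilon_2$ coefficient emerges from a closed flavor loop contributing $K\epsilon_1$ together with the $\epsilon_2$ from the moment map. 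Relation \eqref{eqn: A4} then follows from \eqref{eqn: A3} by bracketing with $\mathsf t'_{3,0}$, using \eqref{eqn: A5 related} and the manifest $\SL_2$-covariance of both algebras. Surjectivity is deduced from the first fundamental theorem of $\GL_N$ invariant theory for mixed tensors: every $\GL_N$-invariant noncommutative polynomial in $X, Y, I, J$ is a polynomial in closed-trace monomials and open bilinears $I^a (\cdots) J_b$, and any such monomial reduces via the canonical commutators and the moment map to a polynomial in the symmetrized generators $\mathrm{Tr}(\sym(X^n Y^m))$ and $I^a \sym(X^n Y^m) J_b$ modulo terms of strictly lower filtration degree, so an induction on the filtration closes the argument.

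The main obstacle is the verification of \eqref{eqn: A3}: one must perform the full Wick-type expansion of the commutator, organize contractions by topology (tree versus flavor-loop), invoke the moment map precisely at each loop-closing step, and match the resulting combinatorial sums to the coefficients $\frac{m+1}{n+1}$ and $1$ appearing in \eqref{eqn: A3}. This calculation is parallel to the verification of Lemma \ref{lem: map rho_N} and would naturally be carried out in Appendix \ref{sec: quantum ADHM quiver variety}, whose strategy in the Calogero case it mirrors.
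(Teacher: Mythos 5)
There is a genuine gap at the very first step: you treat $\mathsf B^{(K)}$ as if it were \emph{presented} by the generators $\mathsf T'_{n,m}(X),\mathsf t'_{n,m}$ subject to the $\epsilon_1$-rescaled versions of \eqref{eqn: A0}--\eqref{eqn: A4}. But by Definition \ref{def: subalg of A}, $\mathsf B^{(K)}$ is defined as a \emph{subalgebra} of $\mathsf A^{(K)}$, and no presentation of it is ever established. Checking that your formulas satisfy the rescaled relations only produces a homomorphism out of the abstractly presented algebra $\widetilde{\mathsf B}$ with those relations; to descend it to $\mathsf B^{(K)}$ you would need $\ker(\widetilde{\mathsf B}\twoheadrightarrow \mathsf B^{(K)})$ to die in $\mathscr O_{\epsilon_1}(\mathcal M_{\epsilon_2}(N,K))$, and you have no handle on that kernel (the rescaled relations certainly hold in $\mathsf B^{(K)}$, but nothing says they generate all relations among the $\mathsf T'$'s). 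The correct fix is the one the paper uses: by Theorem \ref{thm: PBW} both $\mathsf B^{(K)}$ and $\mathscr O_{\epsilon_1}(\mathcal M_{\epsilon_2}(N,K))$ are $\epsilon_1$-torsion-free, $\mathsf B^{(K)}[\epsilon_1^{-1}]=\mathsf A^{(K)}[\epsilon_1^{-1}]$ \emph{does} have the known presentation \eqref{eqn: A0}--\eqref{eqn: A4}, so one builds $p_N$ on $\mathsf A^{(K)}[\epsilon_1^{-1}]$ and then observes that the generators of the subalgebra $\mathsf B^{(K)}$ land in the integral form $\mathscr O_{\epsilon_1}(\mathcal M_{\epsilon_2}(N,K))$. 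You should restructure your argument around this localization rather than around a presentation you have not proved.

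For the actual verification of the relations, your route is genuinely different from the paper's, and your claim that it ``mirrors'' the appendix is not accurate. The paper does \emph{not} perform the Wick-type expansion of $[I^aXJ_b,\,I^c\sym(Y^n)J_d]$ in the quiver variables modulo the moment-map ideal. Instead, it verifies \eqref{eqn: A3} and \eqref{eqn: A4} by passing to the Calogero representation: $\mathscr O_{\epsilon_1}(\mathcal M_{\epsilon_2}(N,K))$ embeds into twisted differential operators on the Quot scheme (Proposition \ref{Proposition_Sheaf version of quantization}, via ``quantization commutes with reduction''), and after diagonalizing $Y$ the generators become explicit Dunkl-type operators (Lemma \ref{lemma: t[2,0] and e[0,n] in Calogero}), where the brackets \eqref{eqn: [e[1,0],e[0,n]]} and \eqref{eqn: [t[3,0],t[0,n]]} of Proposition \ref{Proposition_The Key Commutation Relation} are computed directly. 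The injectivity of that representation is what transfers the identities back to the quiver algebra; this is precisely the device that lets the paper avoid the combinatorial bookkeeping of contraction topologies and flavor loops that you describe. Your direct approach is plausible in outline (single contractions give the linear terms, the moment map trades $[X,Y]$ for $\epsilon_2-IJ$ and produces the quadratic $\Omega$-terms, and the flavor loop plus moment map assemble $\epsilon_3=-K\epsilon_1-\epsilon_2$, consistent with the manipulation in the proof of Proposition \ref{prop: other generators}), but you have only named the computation rather than carried it out, and matching the coefficients $\tfrac{m+1}{n+1}$ in \eqref{eqn: A3} is exactly where such a direct computation is delicate. As it stands the hardest step of the proof is asserted, not proved. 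Your surjectivity argument via the first fundamental theorem and induction on the filtration is fine and, if anything, more explicit than the paper's one-line appeal to the generators of the localized algebra.
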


\begin{proof}
By the flatness of $\mathsf B^{(K)}$ and $ \mathscr O_{\epsilon_1}(\mathcal M_{\epsilon_2}(N,K))$, it suffices to localize $\epsilon_1$ and show that 
\begin{align}
    p_N(\mathsf T_{n,m}(E^a_b))=\frac{1}{\epsilon_1}I^a\sym (X^nY^m)J_b,\quad  p_N(\mathsf t_{n,m})=\frac{1}{\epsilon_1}\mathrm{Tr}(\sym (X^nY^m)),
\end{align}
extends to a surjective algebra homomorphism $p_N:\mathsf A^{(K)}[\epsilon_1^{-1}]\twoheadrightarrow \mathscr O_{\epsilon_1}(\mathcal M_{\epsilon_2}(N,K))[\epsilon_1^{-1}]$. In the Appendix, we show that the above formula gives rise to an algebra homomorphism, see Lemma \ref{Lemma_Basic Commutation Relation} and Proposition \ref{Proposition_The Key Commutation Relation}, and it is surjective because $\mathscr O_{\epsilon_1}(\mathcal M_{\epsilon_2}(N,K))[\epsilon_1^{-1}]$ is generated by $\frac{1}{\epsilon_1}I^a\sym (X^nY^m)J_b$ and $\frac{1}{\epsilon_1}\mathrm{Tr}(\sym (X^nY^m))$.
\end{proof}

\begin{remark}
It is obvious from the formula that the map $p_N$ intertwines between the $\SL_2$ action \eqref{eqn: sl2 action} and the following $\SL_2$ action on $\mathscr O_{\epsilon_1}(\mathcal M_{\epsilon_2}(N,K))$:
\begin{align}
    \begin{pmatrix}
a&b\\c&d
    \end{pmatrix}
    \curvearrowright \mathscr O_{\epsilon_1}(\mathcal M_{\epsilon_2}(N,K)): \;X\mapsto aX+bY,\; Y\mapsto cX+dY.
\end{align}
\end{remark}

\begin{remark}
The grading \eqref{eqn: grading on A} is compatible with the natural grading on $\mathscr O_{\epsilon_1}(\mathcal M_{\epsilon_2}(N,K))$. Namely the grading on $\mathscr O_{\epsilon_1}(\mathcal M_{\epsilon_2}(N,K))$ is by setting $\deg X^i_j=\deg I^a_i=-1,\deg Y^i_j=\deg J^i_a=1$. Then the map $p_N:\mathsf B^{(K)}\to \mathscr O_{\epsilon_1}(\mathcal M_{\epsilon_2}(N,K))$ is $\mathbb Z$-graded.
\end{remark}

Recall that Costello defined a version of deformed double current algebra \cite{costello2017holography}. In short, his DDC algebra $\mathscr O_{\epsilon_1}(\mathcal M_{\epsilon_2}(\bullet,K))$ is the subalgebra of $\prod_N \mathscr O_{\epsilon_1}(\mathcal M_{\epsilon_2}(N,K))$ generated by elements $\{I^a\sym (X^nY^m)J_b\}_{N}$ and $\{\mathrm{Tr}(\sym (X^nY^m))\}_{N}$. In other words, it is the image of the map $p_{\bullet}:=\prod_N p_N: \mathsf B^{(K)}\to \prod_N \mathscr O_{\epsilon_1}(\mathcal M_{\epsilon_2}(N,K))$.

\begin{theorem}\label{thm: compare with Costello's DDCA}
$\ker(p_{\bullet})=0$. In particular the map $p_{\bullet}:\mathsf B^{(K)}\to \mathscr O_{\epsilon_1}(\mathcal M_{\epsilon_2}(\bullet,K))$ is a $\mathbb C[\epsilon_1,\epsilon_2]$-algebra isomorphism.
\end{theorem}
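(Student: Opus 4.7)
The plan is to mirror the proof of Corollary \ref{cor: truncation of D}, with the Cherednik presentation replaced by the quantized ADHM one. The essential new input is a comparison, at generic parameter values, between the spherical wreath--product Cherednik algebra and the quantized ADHM ring that intertwines the two symmetrization maps $\rho_N$ and $p_N$.

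First, since $\mathsf B^{(K)}$ is a free $\mathbb C[\epsilon_1,\epsilon_2]$-module by Theorem \ref{thm: PBW}, any nonzero $f\in\ker p_\bullet$ will have nonzero image in $\mathsf B^{(K)}/(\epsilon_1-k,\epsilon_2-t)$ for some $(k,t)\in\mathbb C^\times\times\mathbb C^\times$. Because $\mathsf B^{(K)}[\epsilon_1^{-1}]=\mathsf A^{(K)}[\epsilon_1^{-1}]$ and $k\ne 0$, this specialization identifies with a nonzero $\bar f\in \mathsf A^{(K)}/(\epsilon_1-k,\epsilon_2-t)$. The argument in the proof of Theorem \ref{thm: PBW} in fact shows that the surjection of Proposition \ref{prop: compare with Kalinov's DDCA} is an isomorphism $\mathsf A^{(K)}/(\epsilon_1-k,\epsilon_2-t)\xrightarrow{\sim}\mathcal D_{t,k}(K)$ whenever $t\ne 0$, since it carries the PBW basis of the source to a $\mathbb C$-basis of the target. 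Hence the image of $\bar f$ in $\mathcal D_{t,k}(K)$ is nonzero, and from the restricted--ultraproduct definition we conclude that $\rho_N(\bar f)\ne 0$ in $\mathrm S\mathcal H^{(K)}_N/(\epsilon_1-k_N,\epsilon_2-t_N)$ for $N$ on a set of $\mathcal F$-measure one, where $(k_N,t_N)\in\overline{\mathbb Q}^2$ are the algebraic approximants with $\prod_{\mathcal F}(k_N,t_N)=(k,t)$.

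To transfer this nonvanishing to the ADHM side, I would invoke the isomorphism $\mathrm S\mathcal H^{(K)}_N/(\epsilon_1-k_N,\epsilon_2-t_N)\cong \mathscr O_{k_N}(\mathcal M_{t_N}(N,K))$, valid at generic parameters, arising from quantum Hamiltonian reduction for the ADHM quiver (the $\gl_K$-generalization of the Gordon--Stafford correspondence for the Hilbert scheme). Under this isomorphism the generator-level formulas of Lemma \ref{lem: map rho_N} and Lemma \ref{lem: map to Higgs branch} match, so $\rho_N$ and $p_N$ are identified on $\mathsf B^{(K)}[\epsilon_1^{-1}]$. Therefore $p_N(\bar f)\ne 0$ for the same $N$, contradicting $f\in\ker p_\bullet$, which gives $\ker p_\bullet = 0$.

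The principal difficulty is this last step: precisely establishing the wreath Cherednik/ADHM isomorphism at matching parameters and verifying that it intertwines $\rho_N$ with $p_N$. A more self-contained alternative that avoids the abstract geometric equivalence is to embed both algebras into a common differential operator algebra on the configuration space of $N$ points in $\mathbb C$ valued in $(\mathbb C^K)^{\otimes N}$---the Dunkl representation on the Cherednik side and the Calogero-type representation used in Appendix \ref{sec: quantum ADHM quiver variety} on the ADHM side---and to compare the explicit symmetrization formulas on generators, which is a direct extension of the verification already carried out in the proof of Lemma \ref{lem: map to Higgs branch}.
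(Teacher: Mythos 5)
Your reduction to a single specialization $(k,t)\in\mathbb C^{\times}\times\mathbb C^{\times}$ and the passage to $\mathcal D_{t,k}(K)$ via the ultraproduct are fine, but the step on which everything hinges --- an isomorphism $\mathrm S\mathcal H^{(K)}_N/(\epsilon_1-k_N,\epsilon_2-t_N)\cong\mathscr O_{k_N}(\mathcal M_{t_N}(N,K))$ intertwining $\rho_N$ with $p_N$ --- is a genuine gap, and in the form you state it the claim is actually false. The two truncations already disagree on the central element $\mathsf t_{0,0}$: from Lemma \ref{lem: map rho_N} one has $\rho_N(\mathsf t_{0,0})=N/\epsilon_2$, while from Lemma \ref{lem: map to Higgs branch} one has $p_N(\mathsf t_{0,0})=N/\epsilon_1$ (since $p_N(\mathsf t'_{0,0})=\mathrm{Tr}(1)=N$ and $\mathsf t'_{0,0}=\epsilon_1\mathsf t_{0,0}$). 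Hence no algebra map can intertwine $\rho_N$ with $p_N$ at a specialization with $k\neq t$; the matrix-extended Gordon--Stafford-type equivalence you invoke is only available on the locus $\epsilon_1=\epsilon_2$, and even there the paper only records a surjection (see the final proposition of Appendix \ref{sec: quantum ADHM quiver variety}), not an isomorphism. Your ``self-contained alternative'' runs into the same obstruction: in the Calogero representation the $\gl_K$-generators act as $\mathcal O(1)^{\otimes\epsilon_2/\epsilon_1}$-twisted differential operators on $\mathbb P^{K-1}$, and these reduce to the elementary matrices appearing in the Dunkl representation only when that twist equals $1$.

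The paper avoids all of this by working at the opposite degeneration $\epsilon_1=0$ rather than at generic nonzero parameters: it quotes Costello's result that the images of the generators $\mathfrak G(\mathsf B^{(K)})$ in the classical limit $\mathbb C_{\epsilon_1=0}[\mathcal M_{\epsilon_2}(\bullet,K)]$ are algebraically independent for generic $\epsilon_2$, concludes from flatness that $\ker(p_{\bullet})\subset(\epsilon_1)$, and then uses flatness of $\mathscr O_{\epsilon_1}(\mathcal M_{\epsilon_2}(\bullet,K))$ together with the freeness of $\mathsf B^{(K)}$ over $\mathbb C[\epsilon_1]$ to push the kernel into $\bigcap_n(\epsilon_1^n)=0$. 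If you insist on a specialization argument at nonzero parameters, you would need an input comparing $p_N$ with $\rho_M$ for \emph{different} ranks $N,M$ satisfying $N/k=M/t$, which is established nowhere in the paper; the $\epsilon_1=0$ route is the one that actually closes.
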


\begin{proof}
It is shown in \cite[Proposition 15.0.2]{costello2017holography} that the images of the generators $\mathfrak G(\mathsf B^{(K)})$ in $\mathbb C_{\epsilon_1=0}[\mathcal M_{\epsilon_2}(\bullet,K)]$ are algebraically independent for generic $\epsilon_2$. By the flatness of $\mathsf B^{(K)}/(\epsilon_1)$ and $\mathbb C[\mathcal M_{\epsilon_2}(\bullet,K)]$, the modulo $\epsilon_1$ map $\mathsf B^{(K)}/(\epsilon_1)\to \mathbb C[\mathcal M_{\epsilon_2}(\bullet,K)]$ is injective therefore it is an isomorphism. In other words the kernel of $\mathsf B^{(K)}\to \mathscr O_{\epsilon_1}(\mathcal M_{\epsilon_2}(\bullet,K))$ is contained in the ideal $(\epsilon_1)$. By the flatness of $\mathscr O_{\epsilon_1}(\mathcal M_{\epsilon_2}(\bullet,K))$, if $\epsilon_1f$ is in the kernel, then $f$ is in the kernel too. This implies that the kernel of $\mathsf B^{(K)}\to \mathscr O_{\epsilon_1}(\mathcal M_{\epsilon_2}(\bullet,K))$ is contained in the ideal $\cap_n (\epsilon_1^n)$, which is zero because $\mathsf B^{(K)}$ is a free $\mathbb C[\epsilon_1]$-module, thus $\mathsf B^{(K)}\to \mathscr O_{\epsilon_1}(\mathcal M_{\epsilon_2}(\bullet,K))$ is an isomorphism.
\end{proof}

\subsection{Duality automorphism}
\begin{proposition}\label{prop: duality for A}
    The map on generators 
    \begin{align}\label{eqn: duality for A}
        \sigma(\mathsf t_{n,m})=\mathsf t_{n,m},\quad \sigma(\mathsf T_{n,m}(X))= -\mathsf T_{n,m}(X^{\mathrm t})-\epsilon_1\mathrm{tr}(X)\mathsf t_{n,m},\quad \sigma(\epsilon_2)= \epsilon_3,
    \end{align}
    extends to a $\mathbb C[\epsilon_1]$-algebra automorphism $\sigma:\mathsf A^{(K)}\cong \mathsf A^{(K)}$. Moreover $\sigma$ preserves the subalgebras $\mathsf B^{(K)}$ and $\mathbb D^{(K)}$.
\end{proposition}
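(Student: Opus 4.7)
The plan is to verify that the formulas \eqref{eqn: duality for A} preserve each of the defining relations (A0)--(A4) of Definition \ref{def: A^K}, thereby extending $\sigma$ to a $\mathbb C[\epsilon_1]$-algebra endomorphism of $\mathsf A^{(K)}$. Invertibility is then automatic by a direct check on generators that $\sigma^2=\mathrm{id}$: $\sigma^2(\epsilon_2)=-K\epsilon_1-(-K\epsilon_1-\epsilon_2)=\epsilon_2$, $\sigma^2(\mathsf t_{n,m})=\mathsf t_{n,m}$, and
\begin{equation*}
\sigma^2(\mathsf T_{n,m}(X))=-\sigma(\mathsf T_{n,m}(X^{\mathrm t}))-\epsilon_1\mathrm{tr}(X)\mathsf t_{n,m}=\mathsf T_{n,m}(X)+\epsilon_1\mathrm{tr}(X^{\mathrm t})\mathsf t_{n,m}-\epsilon_1\mathrm{tr}(X)\mathsf t_{n,m}=\mathsf T_{n,m}(X).
\end{equation*}

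Relation (A0) is immediate from $\epsilon_3=-K\epsilon_1-\epsilon_2$: $\sigma(\mathsf T_{n,m}(1))=-\mathsf T_{n,m}(1)-K\epsilon_1\mathsf t_{n,m}=-(\epsilon_2+K\epsilon_1)\mathsf t_{n,m}=\epsilon_3\mathsf t_{n,m}=\sigma(\epsilon_2\mathsf t_{n,m})$. Relations (A1) and (A2) reduce to $\mathrm{tr}([X,Y])=0$, $[X^{\mathrm t},Y^{\mathrm t}]=-[X,Y]^{\mathrm t}$, and a short bookkeeping in which the trace-corrections $\epsilon_1\mathrm{tr}(X)\mathsf t_{n,m}$ contribute commutators $[\mathsf t_{p,q},\mathsf t_{n,m}]$ that are themselves governed by (A2). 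The bulk of the work is (A3) and (A4). For (A4) the only nontrivial step is $\sigma$ applied to the quadratic $\Omega$-sum; expanding via $(E^a_b)^{\mathrm t}=E^b_a$ and using $\sum_a E^a_a=1$ together with (A0) to collapse the cross terms gives
\begin{equation*}
\sigma\bigl(\mathsf T_{0,m,0,n-3-m}(\Omega)\bigr)=\mathsf T_{0,m,0,n-3-m}(\Omega)+(2\epsilon_1\epsilon_2+K\epsilon_1^2)\,\mathsf t_{0,m}\mathsf t_{0,n-3-m}=\mathsf T_{0,m,0,n-3-m}(\Omega)+\epsilon_1(\epsilon_2-\epsilon_3)\,\mathsf t_{0,m}\mathsf t_{0,n-3-m}.
\end{equation*}
This extra piece precisely cancels the $\epsilon_1\epsilon_2\to\epsilon_1\epsilon_3$ shift in the last term of (A4), while $\epsilon_1^2-\epsilon_2\epsilon_3$ is manifestly $\sigma$-invariant, so (A4) is preserved.

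For (A3) an analogous but longer calculation must be performed. Using $\sigma(\mathsf T_{1,0}(X))=-\mathsf T_{1,0}(X^{\mathrm t})-\epsilon_1\mathrm{tr}(X)\mathsf t_{1,0}$ and similarly for $\mathsf T_{0,n}(Y)$, the LHS expands into four commutators by bilinearity, three of which are evaluated via (A2), (A3) and (A5). On the RHS I substitute $X\to X^{\mathrm t}$, $Y\to Y^{\mathrm t}$, $\epsilon_2\to\epsilon_3$, and expand each $\sigma(\mathsf T_{0,m,0,n-1-m}(\cdots))$ in the manner above. Using $\{X^{\mathrm t},Y^{\mathrm t}\}=\{X,Y\}^{\mathrm t}$, $[X^{\mathrm t},Y^{\mathrm t}]=-[X,Y]^{\mathrm t}$, and (A0), the cumulative trace-contributions organize into pieces proportional to $\mathrm{tr}(X)$, $\mathrm{tr}(Y)$, and $\mathrm{tr}(X)\mathrm{tr}(Y)$ which are matched by the constraint $\epsilon_3=-K\epsilon_1-\epsilon_2$. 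The main obstacle I anticipate is exactly this bookkeeping: tracking how the many trace corrections coming from $\sigma(\mathsf T_{1,0}(X))$, $\sigma(\mathsf T_{0,n}(Y))$, and the two $\Omega$-sums combine into the single $\epsilon_2\leftrightarrow\epsilon_3$-swapped relation. No new input beyond (A0)--(A2), (A5) and the displayed $\Omega$-identity is needed. Finally, preservation of the subalgebras is immediate from the formulas: $\sigma(\mathsf T'_{n,m}(X))=-\mathsf T'_{n,m}(X^{\mathrm t})-\epsilon_1\mathrm{tr}(X)\mathsf t'_{n,m}\in\mathsf B^{(K)}$ and $\sigma(\mathsf t'_{n,m})=\mathsf t'_{n,m}$, while for $X\in\mathfrak{sl}_K$ one has $\mathrm{tr}(X)=0$ and $X^{\mathrm t}\in\mathfrak{sl}_K$, so $\sigma(\mathsf T_{n,m}(X))=-\mathsf T_{n,m}(X^{\mathrm t})\in\mathbb D^{(K)}$ for $(n,m)\in\{0,1\}^2$.
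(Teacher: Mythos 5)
Your route is genuinely different from the paper's. The paper does not check the relations (A0)--(A4) at all: it first reduces to the subalgebra $\mathsf B^{(K)}$ by flatness, then uses the identification $\mathsf B^{(K)}\cong\mathscr O_{\epsilon_1}(\mathcal M_{\epsilon_2}(\bullet,K))$ from Theorem \ref{thm: compare with Costello's DDCA} and observes that $\sigma$ is intertwined with the manifest automorphism $X^i_j\mapsto X^j_i,\;Y^j_i\mapsto Y^i_j,\;I^a_i\mapsto J^i_a,\;J^j_b\mapsto -I^b_j$, $\epsilon_2\mapsto\epsilon_3$ of the quantized ADHM algebra, which visibly preserves the commutation relations and the quantum moment map; injectivity of $p_\bullet$ then forces $\sigma$ to be a homomorphism. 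That detour exists precisely to avoid the computation you are attempting. Your direct verification is legitimate in principle, and your checks of (A0), (A1), (A2), the involutivity $\sigma^2=\mathrm{id}$, the $\Omega$-identity for (A4), and the preservation of $\mathsf B^{(K)}$ and $\mathbb D^{(K)}$ are all correct.

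The gap is (A3), which you describe but do not perform, and it is exactly where all the content of the proposition sits. After expanding $[\sigma(\mathsf T_{1,0}(X)),\sigma(\mathsf T_{0,n}(Y))]$ and subtracting $\sigma$ of the right-hand side of (A3), the non-$\Omega$ terms leave a residue
\begin{equation*}
\frac{K\epsilon_1 n}{2}\,\mathsf T_{0,n-1}(\{X^{\mathrm t},Y^{\mathrm t}\})
+n\epsilon_1\mathrm{tr}(X)\,\mathsf T_{0,n-1}(Y^{\mathrm t})
-n\epsilon_1\mathrm{tr}(Y)\,\mathsf T_{0,n-1}(X^{\mathrm t})
-n\epsilon_1\epsilon_2\,\mathrm{tr}(XY)\,\mathsf t_{0,n-1}
\end{equation*}
(using $\epsilon_2+\epsilon_3=-K\epsilon_1$), and this must be cancelled by the difference of the two $\Omega$-sums. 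Working that out requires, beyond what you list, the identity $\Omega\cdot(A^{\mathrm t}\otimes B^{\mathrm t})=(B^{\mathrm t}\otimes A^{\mathrm t})\cdot\Omega$ together with the contractions $\sum_{a,b}\mathrm{tr}(BE^b_a)\,AE^a_b=AB$ and $\sum_{a,b}\mathrm{tr}(AE^a_b)\mathrm{tr}(BE^b_a)=\mathrm{tr}(AB)$, applied to both coefficient patterns $\frac{m+1}{n+1}([X,Y]\otimes 1)\Omega$ and $(X\otimes Y-XY\otimes 1)\Omega$; the asymmetric weight $\frac{m+1}{n+1}$ makes the matching nontrivial rather than "analogous" to (A4). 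Asserting that "no new input is needed" is not a verification, and as written your proof of the key relation is missing. Either carry this computation to completion or adopt the paper's strategy of transporting the automorphism through a faithful realization.
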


\begin{proof}
By flatness, it suffices to show that the map on generators 
\begin{align}
        \sigma(\mathsf t'_{n,m})=\mathsf t'_{n,m},\quad \sigma(\mathsf T'_{n,m}(X))= -\mathsf T'_{n,m}(X^{\mathrm t})-\epsilon_1\mathrm{tr}(X)\mathsf t'_{n,m},\quad \sigma(\epsilon_2)= \epsilon_3,
\end{align}
extends to a $\mathbb C[\epsilon_1]$-algebra automorphism $\sigma:\mathsf B^{(K)}\cong \mathsf B^{(K)}$. Consider the automorphism of $\mathscr O_{\epsilon_1}(\mathcal M_{\epsilon_2}(N,K))$ defined by 
\begin{equation}
    \begin{split}
        X^i_j\mapsto X^j_i,\; Y^j_i\mapsto Y^i_j&,\; I^a_i\mapsto J^i_a,\; J^j_b\mapsto -I^b_j\\
    \epsilon_1\mapsto\epsilon_1&,\;\epsilon_2\mapsto\epsilon_3.
    \end{split}
\end{equation}
Under this map, the commutation relations and quantum moment map equation are preserved, thus it induces a $\mathbb C[\epsilon_1]$-algebra automorphism on $\mathscr O_{\epsilon_1}(\mathcal M_{\epsilon_2}(N,K))$. By direct computation, $p_N$ intertwines between $\sigma$ and the above automorphism when restricted to generators. Since $\ker(p_{\bullet})=0$, we conclude that $\sigma$ is an algebra homomorphism, therefore it is a $\mathbb C[\epsilon_1]$-algebra automorphism. Moreover, $\sigma(\mathsf T_{n,m}(X))= -\mathsf T_{n,m}(X^{\mathrm t})$ for $X\in \mathfrak{sl}_K$, in particular $\sigma$ preserves $\mathbb D^{(K)}$.
\end{proof}
It is easy to see that $\sigma$ does not preserves $\mathsf D^{(K)}$ unless $K=1$.
\begin{definition}\label{def: D tilde}
We define the subalgebra $\widetilde{\mathsf D}^{(K)}$ to be the image $\sigma(\mathsf D^{(K)})$. Equivalently, $\widetilde{\mathsf D}^{(K)}$ is the $\mathbb C[\epsilon_1,\epsilon_2]$-subalgebra of $\mathsf A^{(K)}$ generated by $\mathfrak{G}(\widetilde{\mathsf D}^{(K)}):=\{\mathsf T_{n,m}(X),\epsilon_3\mathsf t_{n,m}\:|\: X\in \mathfrak{sl}_K,(n,m)\in \mathbb N^2\}$.
\end{definition}

Composing the map $\rho_N:\mathsf A^{(K)}\to \left(D(\mathbb C^N_{\mathrm{disj}})\otimes \mathfrak{gl}_K^{\otimes N}\right)^{\mathfrak{S}_N}[\epsilon_1,\epsilon_2^{-1}]$ with the duality automorphism $\sigma$, we obtain another homomorphism $\widetilde\rho_N:\mathsf A^{(K)}\to \left(D(\mathbb C^N_{\mathrm{disj}})\otimes \mathfrak{gl}_K^{\otimes N}\right)^{\mathfrak{S}_N}[\epsilon_1,\epsilon_3^{-1}]$ which is uniquely determined by 
\begin{align}\label{eqn: third truncation}
    \widetilde\rho_N(\mathsf T_{0,n}(E^a_b))=\sum_{i=1}^N F^a_{b,i}x_i^n,\quad \widetilde\rho_N(\mathsf t_{2,0})=\epsilon_3\sum_{i=1}^N \partial_i^2-2\sum_{i<j}^N \frac{\epsilon_1\Omega_{ij}+\epsilon_1^2\epsilon_2/\epsilon_3^2}{(x_i-x_j)^2}.
\end{align}
Here $F^a_{b,i}$ is related to $\mathfrak{gl}_K$ fundamental representation matrix $E^a_{b,i}$ by
\begin{align}
    F^a_{b,i}=-E^b_{a,i}-\frac{\epsilon_1}{\epsilon_3}\delta^b_a,
\end{align}
and $\Omega_{ij}=F^a_{b,i}F^b_{a,j}$. By definition of $\widetilde\rho_N$ and Corollary \ref{cor: truncation of D}, the image $\widetilde\rho_N(\widetilde{\mathsf D}^{(K)})$ lies inside the subalgebra $\left(D(\mathbb C^N_{\mathrm{disj}})\otimes \mathfrak{gl}_K^{\otimes N}\right)^{\mathfrak{S}_N}[\epsilon_1,\epsilon_2]$.

\begin{definition}
Define $\mathsf B^{(K)}_N:=\mathscr O_{\epsilon_1}(\mathcal M_{\epsilon_2}(N,K))$, $\mathsf D^{(K)}_N:=\mathrm S\mathcal H^{(K)}_N$, and $\widetilde{\mathsf D}^{(K)}_N:=\widetilde\rho_N(\widetilde{\mathsf D}^{(K)})$.
\end{definition}
By duality automorphism $\sigma$, $\widetilde{\mathsf D}^{(K)}_N$ is isomorphic to $\mathsf D^{(K)}_N$ up to reparametrization $\epsilon_2\leftrightarrow\epsilon_3$.

\subsection{Relations between various DDCAs}
The relations between different versions of DDCAs are summarized in the Figure \eqref{relations between DDCAs}.
\begin{figure}[ht]
\centering
\includegraphics[scale=0.5]{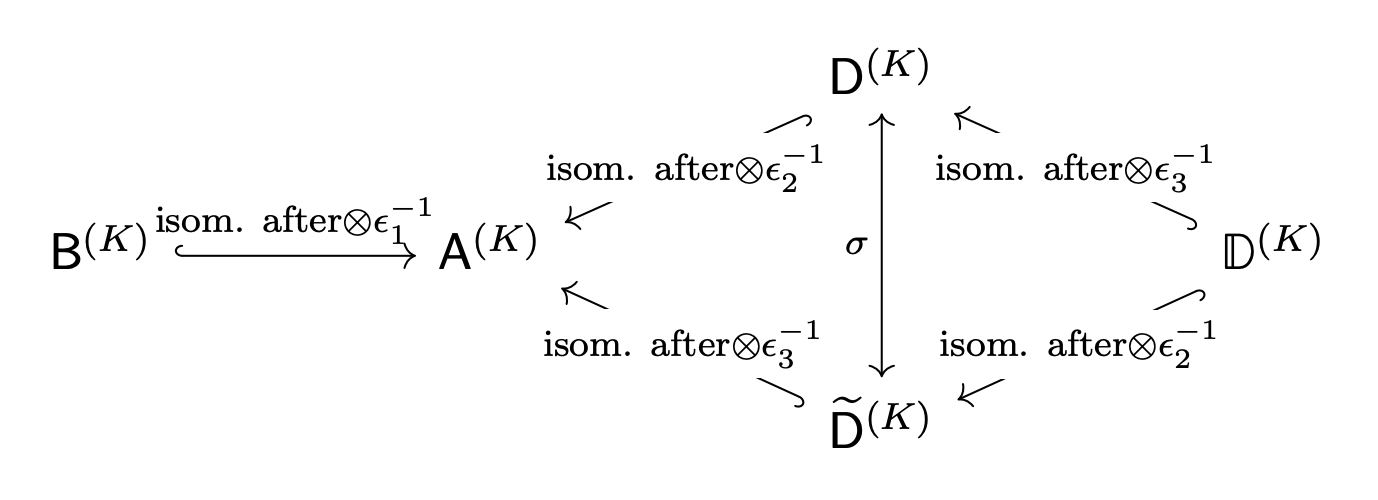}
\caption{Relations between the algebras $\mathsf A^{(K)}$, $\mathsf B^{(K)}$, $\mathsf D^{(K)}$, $\widetilde{\mathsf D}^{(K)}$, and $\mathbb D^{(K)}$. Here all hook arrows are algebra embeddings and they become isomorphism after localizing the parameter $\epsilon_i$. }
\label{relations between DDCAs}
\end{figure}
Different versions of DDCAs have been studied in the literature:
\begin{itemize}
    \item $\mathsf B^{(K)}$ is isomorphic to Costello's DDCA $\mathscr O_{\epsilon_1}(\mathcal M_{\epsilon_2}(\bullet,K))$ defined as large-$N$-limit of quantum ADHM quiver variety in \cite{costello2017holography}, see Theorem \ref{thm: compare with Costello's DDCA}. $\mathscr O_{\epsilon_1}(\mathcal M_{\epsilon_2}(\bullet,K))$ also shows up in the recent study on the large $N$ limit of the matrix models arising from $3d$ non-commutative Chern-Simons theory \cite{hu2023quantum}. 
    \item $\mathsf D^{(K)}$ is isomorphic to the DDCA $\mathcal D_{\epsilon_2,\epsilon_1}(K)$ studied by Etingof, Kalinov and Rains in \cite{etingof2023new,kalinov2021deformed}, see Corollary \ref{cor: truncation of D}.
    \item $\mathbb D^{(K)}$ is isomorphic to the imaginary $1$-shifted affine Yangian of $\mathfrak{sl}_K$ defined by Guay in \cite{guay2007affine} if $K>2$ (Proposition \ref{prop: DDCA=shifted Yangian}), therefore $\mathbb D^{(K)}$ is isomorphic to Guay's DDCA if $K>3$ \cite[Theorem 11.1]{guay2007affine}, see also \cite[Section 6]{kalinov2021deformed}.
    \item If $K=1$ then $\mathsf A^{(1)}$ is the $1$-shifted affine Yangian studied in the literature \cite{kodera2018quantized,rapcak2020cohomological}. $\mathbb D^{(1)}$ is not well-defined, nevertheless $\mathsf B^{(1)},\mathsf D^{(1)}$ and $\widetilde{\mathsf D}^{(1)}$ are defined, moreover the duality is enhanced to a triality automorphism $\mathfrak{S}_3\curvearrowright\{\mathsf B^{(1)},\mathsf D^{(1)},\widetilde{\mathsf D}^{(1)}\}$ where $\mathfrak{S}_3$ permutes the parameters $\epsilon_1,\epsilon_2,\epsilon_3$, see also \cite{creutzig2022trialities}.
\end{itemize}

\subsection{Vertical and horizontal filtrations on \texorpdfstring{$\mathsf A^{(K)}$}{Ak}}

In this subsection we give two more filtration on $\mathsf A^{(K)}$, a vertical one and a horizontal one, whose meaning is self-evident in the following definition.

\begin{definition}\label{def: filtration_vertical and horizontal}
The vertical filtration $0=V_{-1}\mathsf A^{(K)}\subset V_0\mathsf A^{(K)}\subset V_1\mathsf A^{(K)}\subset\cdots$ is an increasing filtration induced by setting the degrees on generators as
\begin{align}
    \deg_v\epsilon_1=\deg_v\epsilon_2=0,\quad \deg_v\mathsf T_{n,m}(X)=\deg_v\mathsf t_{n,m}=n.
\end{align}
The horizontal filtration $0=H_{-1}\mathsf A^{(K)}\subset H_0\mathsf A^{(K)}\subset H_1\mathsf A^{(K)}\subset\cdots$ is an increasing filtration induced by setting the degrees on generators as
\begin{align}
    \deg_h\epsilon_1=\deg_h\epsilon_2=0,\quad \deg_h\mathsf T_{n,m}(X)=\deg_h\mathsf t_{n,m}=m.
\end{align}
Note that the automorphism $\tau:\mathsf A^{(K)}\cong \mathsf A^{(K)}$ defined by \eqref{eqn: automorphism tau} interchanges the vertical and horizontal filtration: $\tau(H_n\mathsf A^{(K)})=V_n\mathsf A^{(K)}$, $\tau(V_n\mathsf A^{(K)})=H_n\mathsf A^{(K)}$.

We also define shifted vertical and horizontal filtrations $0=\tilde V_{-1}\mathsf A^{(K)}\subset \tilde V_0\mathsf A^{(K)}\subset \tilde V_1\mathsf A^{(K)}\subset\cdots$ and $0=\tilde H_{-1}\mathsf A^{(K)}\subset \tilde H_0\mathsf A^{(K)}\subset \tilde H_1\mathsf A^{(K)}\subset\cdots$ by setting the degrees on generators as
\begin{equation}
\begin{split}
 \deg_{\tilde v}\epsilon_1=\deg_{\tilde v}\epsilon_2=0&,\quad \deg_{\tilde v}\mathsf T_{n,m}(X)=\deg_{\tilde v}\mathsf t_{n,m}=n+1,\\
 \deg_{\tilde h}\epsilon_1=\deg_{\tilde h}\epsilon_2=0&,\quad \deg_{\tilde h}\mathsf T_{n,m}(X)=\deg_{\tilde h}\mathsf t_{n,m}=m+1.
\end{split}
\end{equation}
\end{definition}

Note that on the Cherednik algebra $\mathcal H^{(K)}_N$ there is an order filtration $\mathrm{Ord}^{\bullet}\mathcal H^{(K)}_N$ by letting the degree of $y_i,1\le i\le N$ to be $1$ and all other generators are of degree $0$. It is easy to see that the truncation $\rho_N:\mathsf A^{(K)}\to \mathrm{S}\mathcal H^{(K)}_N[\epsilon_2^{-1}]$ is filtered with respect to the (shifted) vertical filtration and order filtration, i.e. $\rho_N(V_{n}\mathsf A^{(K)})\subset \mathrm{Ord}^{n}\mathrm{S}\mathcal H^{(K)}_N[\epsilon_2^{-1}]$ for all $n\in \mathbb N$.

\begin{proposition}\label{prop: filtration_vertical and horizontal}
The commutators between generators of $\mathsf A^{(K)}$ can be schematically written as
\begin{equation}\label{eqn_schematic commutators_vertical}
\begin{split}
&[\mathsf T_{n,m}(X),\mathsf T_{p,q}(Y)]=\mathsf T_{n+p,m+q}([X,Y])\pmod{V_{n+p-1}\tilde V_{n+p}H_{m+q-1} \tilde H_{m+q}\mathsf D^{(K)}},\\
&[\mathsf t_{n,m},\mathsf T_{p,q}(X)]=(nq-mp)\mathsf T_{n+p-1,m+q-1}(X)\pmod{V_{n+p-2}\tilde V_{n+p-1}H_{m+q-2}\tilde H_{m+q-1}\mathsf D^{(K)}},\\
&[\mathsf t_{n,m},\mathsf t_{p,q}]=(nq-mp)\mathsf t_{n+p-1,m+q-1}\pmod{V_{n+p-2}\tilde V_{n+p-1}H_{m+q-2}\tilde H_{m+q-1}\mathsf A^{(K)}},
\end{split}
\end{equation}
where $V_i\tilde V_jH_k\tilde H_l \mathsf A^{(K)}$ is the short hand notation for $V_i\mathsf A^{(K)}\cap\tilde V_j\mathsf A^{(K)}\cap H_k\mathsf A^{(K)}\cap\tilde H_l \mathsf A^{(K)}$.
\end{proposition}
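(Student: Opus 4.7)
The plan is to mirror the inductive construction of Proposition \ref{prop: filtration}, but now tracking the four filtrations $V_\bullet, \tilde V_\bullet, H_\bullet, \tilde H_\bullet$ in parallel rather than the single diagonal filtration $F_\bullet$. The key structural point is that the $\mathfrak{sl}_2$-triple $\{\mathsf t_{2,0},\mathsf t_{1,1},\mathsf t_{0,2}\}$ used in the earlier bootstrap also controls the bi-graded structure: the adjoint action of $\mathsf t_{2,0}$ (which itself has $(\deg_v,\deg_h)=(2,0)$) shifts the leading symbol of a commutator by $(+1,-1)$ in $(v,h)$, and dually $\mathsf t_{0,2}$ shifts by $(-1,+1)$. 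This bi-variable compatibility is exactly what the refined filtration bounds \eqref{eqn_schematic commutators_vertical} require.

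First, I would verify the three identities directly on the defining relations (A0)--(A4) together with (A5). Relations (A1) and (A2) have only a single generator on the right-hand side, so the error is literally zero and every four-fold intersection bound is trivial. For (A3), each error summand must be checked: the linear terms $\mathsf T_{0,n-1}(\cdot)$ and $\mathrm{tr}(Y)\mathsf T_{0,n-1}(\cdot)$ carry $(v,\tilde v,h,\tilde h)=(0,1,n-1,n)$, which is precisely the bound $(V_0,\tilde V_1,H_{n-1},\tilde H_n)$ for the commutator $[\mathsf T_{1,0}(X),\mathsf T_{0,n}(Y)]$; the quadratic terms $\mathsf T_{0,m}(\cdot)\mathsf T_{0,n-1-m}(\cdot)$ have $v=0$ and $h=n-1$, which sits in the intersection factor controlling the vertical and horizontal sides. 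One performs the analogous check on each term of (A4) and (A5).

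Second, I would run the $\mathfrak{sl}_2$-bootstrap exactly as in Proposition \ref{prop: filtration}, but keeping track of all four filtration indices. Given the bound for $[\mathsf T_{n,0}(X),\mathsf T_{0,q}(Y)]$, applying $\mathrm{ad}_{\mathsf t_{2,0}}^{i}$ and $\mathrm{ad}_{\mathsf t_{0,2}}^{j}$ on either slot moves the tuple $(n,0,0,q)$ to $(n+i,j,i,q+j-i)$ while shifting the error indices by the same amount on the nose; this uses the elementary fact that $\deg_v[\mathsf t_{2,0},\cdot]\le\deg_v(\cdot)+1$ and $\deg_h[\mathsf t_{2,0},\cdot]\le\deg_h(\cdot)-1$ (and symmetrically for $\mathsf t_{0,2}$). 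To raise the sum $n+p$, I would use the Jacobi identity
\begin{equation*}
    [\mathsf T_{s+1,0}(X),\mathsf T_{p,q}(Y)]=-\tfrac{1}{s}\bigl[[\mathsf t_{2,1},\mathsf T_{s,0}(X)],\mathsf T_{p,q}(Y)\bigr],
\end{equation*}
exactly as in the proof of Proposition \ref{prop: filtration}, and expand out the right-hand side using the inductive hypothesis on $[\mathsf T_{s,0}(X),\mathsf T_{p,q}(Y)]$ and on $[\mathsf t_{2,1},\mathsf T_{p,q}(Y)]$. The corresponding inductions for $[\mathsf t_{n,m},\mathsf T_{p,q}(X)]$ and $[\mathsf t_{n,m},\mathsf t_{p,q}]$ are carried out in the same fashion, noting for the last identity that errors are allowed to live in $\mathsf A^{(K)}$ rather than $\mathsf D^{(K)}$.

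The main obstacle will be the bookkeeping in the Jacobi-identity step: each commutator produced on the right-hand side of the identity above must be shown to land inside the four-way intersection claimed by the proposition. This reduces to checking that when the inductive hypothesis is used on a commutator with tuple of size $s$, and then one further commutator against $\mathsf t_{2,1}$ or $\mathsf T_{p,q}(Y)$ is taken, the resulting error stays inside $V_{s+p}\cap\tilde V_{s+p+1}\cap H_{q-1}\cap\tilde H_q$ (and similar) rather than acquiring spurious bi-degree. Since the generators $\mathsf t_{2,1}, \mathsf t_{2,0}, \mathsf t_{0,2}$ all have small bi-degree, each such second-order bracket contributes at most $(+1,0)$ or $(0,+1)$ in $(v,h)$ to the input, which matches the shift on the leading symbol. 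Once this bookkeeping is set up uniformly, the proposition follows by a single simultaneous induction on $n+p$ and $m+q$.
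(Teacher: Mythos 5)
Your overall strategy is the right one: the paper itself omits this proof with the remark that it is analogous to Proposition \ref{prop: filtration}, and your plan (verify the bounds on the defining relations, then bootstrap with the $\mathfrak{sl}_2$-triple and the Jacobi identity against $\mathsf t_{2,1}$) is exactly that analogue. The $V_\bullet$ and $H_\bullet$ parts of your bookkeeping go through, and your observation that $\mathrm{ad}_{\mathsf t_{2,0}}$, $\mathrm{ad}_{\mathsf t_{0,2}}$ act \emph{exactly} on generators by \eqref{eqn: A2} (hence shift all four filtration indices by $(\pm 1,\mp 1)$ via Leibniz) is the correct engine for the raising/lowering steps.

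However, there is a concrete gap at the base case, and it propagates through the whole induction: you never actually verify the shifted bounds $\tilde V_\bullet$, $\tilde H_\bullet$ on the \emph{quadratic} error terms. For the linear terms you record the full quadruple $(v,\tilde v,h,\tilde h)$, but for the quadratic terms $\mathsf T_{0,m}(\cdot)\mathsf T_{0,n-1-m}(\cdot)$ in \eqref{eqn: A3} you only check $v=0$ and $h=n-1$ and assert membership in the four-fold intersection. Under the additive convention for the degree of a monomial (which is how $F_\bullet$ is defined and presumably how $\tilde V_\bullet,\tilde H_\bullet$ are meant), these terms have $\deg_{\tilde v}=2$ and $\deg_{\tilde h}=n+1$, whereas the first congruence with $(n,m,p,q)=(1,0,0,n)$ demands $\tilde V_{n+p}=\tilde V_1$ and $\tilde H_{m+q}=\tilde H_{n}$; by the PBW theorem (Theorem \ref{thm: PBW}) a quadratic ordered monomial cannot be rewritten inside the span of linear ones, so these terms genuinely fail the stated bound. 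The same problem recurs in the inductive step for the second congruence, where $[\mathsf t_{3,0},\mathsf T_{0,n}(X)]$ produces cubic terms of $\tilde v$-degree $3$ against a claimed bound of $\tilde V_{2}$. You need to either (a) prove the weaker bounds $\tilde V_{n+p+1}$, $\tilde H_{m+q+1}$ in the first two congruences --- which is all that the subsequent application (commutativity and freeness of $\mathrm{gr}_{\tilde V}\mathsf A^{(K)}$ and $\mathrm{gr}_{\tilde H}\mathsf A^{(K)}$) actually requires --- and carry those through the Jacobi-identity bookkeeping, or (b) make explicit a non-additive convention for the filtration degree of monomials under which the stated bounds hold. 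As written, the proof asserts at the outset a membership statement that it does not check and that appears to be false for the relations \eqref{eqn: A3}; everywhere afterwards only the pair $(v,h)$ is tracked (``contributes at most $(+1,0)$ or $(0,+1)$ in $(v,h)$''), so the shifted filtrations are never controlled.
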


\noindent The proof of Proposition \ref{prop: filtration_vertical and horizontal} is similar to that of \ref{prop: filtration} and we omit it.\\

It follows from Proposition \ref{prop: filtration_vertical and horizontal} that the associated graded algebras with respect to vertical and horizontal filtrations are are isomorphic to the double current algebra $$\mathrm{gr}_V\mathsf A^{(K)}\cong U(\mathscr O(\mathbb C^2)\otimes \gl_K)\cong \mathrm{gr}_H\mathsf A^{(K)},$$ and the associated graded algebras with respect to shifted vertical and horizontal filtrations are are isomorphic to the free commutative algebra $$\mathrm{gr}_{\tilde V}\mathsf A^{(K)}\cong \mathrm{Sym}(\mathscr O(\mathbb C^2)\otimes \gl_K)\cong \mathrm{gr}_{\tilde H}\mathsf A^{(K)}.$$ Then it is easy to see that 
$\bar\rho_N:\mathrm{gr}_V\mathsf A^{(K)}\to \mathrm{gr}_{\mathrm{Ord}}\mathrm{S}\mathcal H^{(K)}_N[\epsilon_2^{-1}]\cong \left(\mathscr O(\mathbb C^2)\otimes\gl_K^{\otimes N}\right)^{\mathfrak S_N}[\epsilon_2^{-1}]$ is simply
\begin{align*}
\bar\rho_N(y^nx^m\otimes X)=\sum_{i=1}^N y_i^nx_i^m\otimes X_i,\; (X\in\mathfrak{sl}_K),\quad
\bar\rho_N(y^nx^m\otimes 1)=\frac{1}{\epsilon_2}\sum_{i=1}^N y_i^nx_i^m\otimes 1.
\end{align*}
In particular $\bigcap_N\ker(\bar\rho_N)=0$. Thus it follows that 
\begin{align}\label{eqn: verticcal vs order filtration}
    V_n\mathsf A^{(K)}=\bigcap _N\rho_N^{-1}(\mathrm{Ord}^n\mathrm{S}\mathcal H^{(K)}_N[\epsilon_2^{-1}]).
\end{align}
\begin{proposition}\label{prop: characterize vertical filtration}
$V_0\mathsf A^{(K)}$ is generated by $\mathsf T_{0,n}(X),\mathsf t_{0,n},(n=0,1,\cdots)$. Moreover, an element $f\in \mathsf A^{(K)}$ is in $V_k\mathsf A^{(K)}$ if and only if 
$$\mathrm{ad}_{\mathsf t_{0,n_1}}\circ \cdots\circ\mathrm{ad}_{\mathsf t_{0,n_{k+1}}}(f)=0, \; \forall (n_1,\cdots,n_{k+1})\in \mathbb N^{k+1}.$$
\end{proposition}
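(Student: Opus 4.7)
The plan is to address the two assertions separately, reducing the filtration characterization to a statement about order filtrations on the spherical Cherednik algebras via the truncation maps $\rho_N$.

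The first assertion follows by combining Theorem \ref{thm: PBW} with Proposition \ref{prop: filtration_vertical and horizontal}: the latter ensures that reordering any tensor monomial of vertical degree $\le k$ into PBW order introduces only strictly lower vertical-degree corrections, so $V_0\mathsf A^{(K)}$ coincides with the $\mathbb C[\epsilon_1,\epsilon_2]$-span of PBW monomials in the vertical-degree-zero generators $\mathsf T_{0,n}(X)$ and $\mathsf t_{0,n}$, hence with the subalgebra they generate.

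For the second assertion, the direction $(\Rightarrow)$ is immediate from the $n=0$ case of \eqref{eqn_schematic commutators_vertical}: $\mathrm{ad}_{\mathsf t_{0,m}}$ sends every generator of vertical degree $p$ into $V_{p-1}\mathsf A^{(K)}$, and Leibniz propagates this to $\mathrm{ad}_{\mathsf t_{0,m}}(V_k\mathsf A^{(K)})\subset V_{k-1}\mathsf A^{(K)}$, so iterating $k+1$ times yields zero. For $(\Leftarrow)$ I would use equation \eqref{eqn: verticcal vs order filtration}, which identifies $V_k\mathsf A^{(K)}=\bigcap_N\rho_N^{-1}(\mathrm{Ord}^k\,\mathrm S\mathcal H^{(K)}_N[\epsilon_2^{-1}])$, together with $\rho_N(\mathsf t_{0,n})=\epsilon_2^{-1}p_n(x)$ where $p_n(x)=\sum_i x_i^n$, to reduce the problem to the statement that an element $P\in\mathrm S\mathcal H^{(K)}_N[\epsilon_2^{-1}]$ lies in $\mathrm{Ord}^k$ if and only if every $(k+1)$-fold commutator $\mathrm{ad}_{p_{n_1}(x)}\cdots\mathrm{ad}_{p_{n_{k+1}}(x)}(P)$ vanishes.

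To establish this reduced statement I would pass to principal symbols: if $P$ has order $d>k$, its symbol $\sigma_d(P)\in\mathbb C[x,y]^{\mathfrak S_N}\otimes(\gl_K^{\otimes N})^{\mathfrak S_N}$ is nonzero of $y$-degree $d$, and the leading symbol of the iterated commutator equals, up to sign, the iterated Poisson bracket $\{p_{n_1}(x),\{\cdots\{p_{n_{k+1}}(x),\sigma_d(P)\}\cdots\}\}$. Choosing the $n_j$'s so this bracket survives is the core step: since $\{p_n(x),\cdot\}=n\sum_i x_i^{n-1}\partial_{y_i}$, linear combinations in $n$ realize every operator $\sum_i q(x_i)\partial_{y_i}$ for $q\in\mathbb C[x]$, and if a nonzero symbol of $y$-degree $e\ge 1$ were killed by all such, Vandermonde interpolation at distinct $x_i$'s would force every $\partial_{y_i}$ to annihilate it, contradicting the degree. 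Applied to the top $y$-degree component at each step, this allows an inductive choice of $n_{k+1},\ldots,n_1$ so the $y$-degree drops by exactly one at each stage, ending at a nonzero symbol of $y$-degree $d-k-1\ge 0$. The main obstacle is precisely this Vandermonde/Poisson argument: one must verify that power sums alone (rather than arbitrary symmetric polynomials in $x$) already detect nonvanishing $y$-derivatives in the fully $\mathfrak S_N$-symmetric setting and that the drop in $y$-degree is strict at each iteration; once this is in place the rest is routine bookkeeping.
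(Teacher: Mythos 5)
Your proof is correct, and for the one genuinely nontrivial step it takes a different route from the paper. The forward direction (via Proposition \ref{prop: filtration_vertical and horizontal} and Leibniz) and the reduction to the spherical Cherednik algebras via \eqref{eqn: verticcal vs order filtration} coincide with the paper's strategy; but where the paper simply cites \cite[Proposition 1.2]{schiffmann2013cherednik} for the statement that order $\le k$ in $\mathrm{S}\mathcal H^{(K)}_N$ is detected by $(k+1)$-fold commutators with $\mathbf e\,\mathbb C[x_1,\dots,x_N]\mathbf e$, you prove the needed fact from scratch by a principal-symbol and Vandermonde argument. Your version is in one respect better adapted to the proposition as stated: the hypothesis only supplies the operators $\mathrm{ad}_{\mathsf t_{0,n}}$, i.e.\ power sums, so one must either argue as you do or invoke the standard remark that iterated-commutator conditions against a commutative subalgebra may be tested on algebra generators — a point the paper leaves implicit. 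To make your ``iterated Poisson bracket'' claim airtight you should record the computation that in $\mathcal H^{(K)}_N$ one has $[y_j,\sum_i x_i^n]=n\epsilon_2 x_j^{n-1}$ \emph{exactly}: the $\epsilon_1 s_{jl}\Omega_{jl}$ contributions coming from $[y_j,x_j^n]$ and from $[y_j,x_i^n]$ with $i\neq j$ cancel, so $\mathrm{ad}_{p_n(x)}$ is literally the derivation $-n\epsilon_2\sum_j x_j^{n-1}\partial_{y_j}$ (killing $x$'s, $\mathfrak S_N$ and $\gl_K^{\otimes N}$), it lowers the order by exactly one whenever it does not annihilate the symbol, and your Vandermonde interpolation — applied coefficientwise in a basis of $\gl_K^{\otimes N}$, on the dense locus of distinct $x_i$ — then closes the induction. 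What the paper's citation buys is brevity; what your argument buys is self-containedness and an explicit treatment of the matrix-extended, power-sum setting.
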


\begin{proof}
The corresponding statement for the Cherednik algebras is proven in \cite[Proposition 1.2]{schiffmann2013cherednik} (which works for matrix extended Cherednik algebras as well), namely $\mathrm{Ord}^0\mathrm{S}\mathcal H^{(K)}_N$ is generated by $\mathbf e\:\mathbb C[x_i\:|\:1\le i\le N]\mathbf e$. Moreover, an element $f\in \mathrm{S}\mathcal H^{(K)}_N$ is of order $\le k$ if and only if 
$$\mathrm{ad}_{z_1}\circ \cdots\circ\mathrm{ad}_{z_{k+1}}(f)=0, \; \forall z_1,\cdots,z_{k+1}\in \mathbf e\:\mathbb C[x_i\:|\:1\le i\le N]\mathbf e.$$
Then the proposition follows from \eqref{eqn: verticcal vs order filtration}.
\end{proof}

\begin{remark}
Since the automorphism $\tau:\mathsf A^{(K)}\cong \mathsf A^{(K)}$ defined by \eqref{eqn: automorphism tau} interchanges the vertical and horizontal filtration, we also have similar result for the horizontal filtration: an element $f\in \mathsf A^{(K)}$ is in $H_k\mathsf A^{(K)}$ if and only if 
$$\mathrm{ad}_{\mathsf t_{n_1,0}}\circ \cdots\circ\mathrm{ad}_{\mathsf t_{n_{k+1},0}}(f)=0, \; \forall (n_1,\cdots,n_{k+1})\in \mathbb N^{k+1}.$$
\end{remark}

\subsection{Unsymmetrized generators}
In the definition of $\mathsf B^{(K)}$, the symmetrization is used to define the uniform-in-$N$ generators $I^a\sym(X^nY^m)J_b$ and $\mathrm{Tr}\sym(X^nY^m)$. It also makes sense to talk about unsymmetrized version of generators. In fact, by the definition of uniform-in-$N$ algebra, for every word $\mathbf r$ with two letters, there are an elements $\{I^a\mathbf r(X,Y)J_b\}_N$ and $\{\mathrm{Tr}\:\mathbf r(X,Y)\}_N$ in the uniform-in-$N$ algebra $\mathscr O_{\epsilon_1}(\mathcal M_{\epsilon_2}(\bullet,K))\cong \mathsf B^{(K)}$.

\begin{proposition}\label{prop: other generators}
For every word $\mathbf r$ with two letters, there exists $\mathsf T_{\mathbf r}(E^a_b)\in \mathsf D^{(K)}$ and $\mathsf t_{\mathbf r}\in \mathsf A^{(K)}$ such that
\begin{equation}
\begin{split}
    \epsilon_1 p_N(\mathsf T_{\mathbf r}(E^a_b))=I^a\mathbf r(X,Y)J_b &,\quad \rho_N(\mathsf T_{\mathbf r}(E^a_b))=\sum_{i=1}^N \mathbf r(y_i,x_i)E^a_{b,i},\\
    \epsilon_1 p_N(\mathsf t_{\mathbf r})&=\mathrm{Tr}\:\mathbf r(X,Y).
\end{split}
\end{equation}
\end{proposition}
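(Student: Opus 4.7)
The plan is to prove existence of $\mathsf T_{\mathbf r}(E^a_b)$ and $\mathsf t_{\mathbf r}$ by induction on the length $l$ of the word $\mathbf r$. For $l \le 1$ there is nothing to do, since symmetrization is trivial and we may set $\mathsf T_\emptyset(E^a_b) := \mathsf T_{0,0}(E^a_b)$, $\mathsf T_y(E^a_b) := \mathsf T_{1,0}(E^a_b)$, $\mathsf T_x(E^a_b) := \mathsf T_{0,1}(E^a_b)$, and similarly for $\mathsf t_{\mathbf r}$; the three required identities then follow directly from Lemma \ref{lem: map rho_N} and Lemma \ref{lem: map to Higgs branch}.

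For the inductive step, let $\mathbf r$ have length $l \ge 2$ with $n$ copies of $y$ and $m$ copies of $x$. In the free associative algebra on two letters any monomial differs from the totally symmetric monomial $\sym(y^n x^m)$ by a finite sum of expressions $\mathbf u_\alpha(y,x)\cdot [y,x] \cdot \mathbf v_\alpha(y,x)$ with $\mathbf u_\alpha, \mathbf v_\alpha$ words of total length at most $l-2$, because any two orderings are connected by adjacent transpositions, each of which introduces a single $[y,x]$. Evaluating at site $i$ in the Cherednik algebra and using $[y_i,x_i] = \epsilon_2 - \epsilon_1 \sum_{k\neq i} s_{ik}\Omega_{ik}$ converts each bracket into a scalar piece plus a bilocal piece. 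After summing over $i$ and applying the $\mathbf e$-projector, the scalar pieces yield shorter single-site sums, and the bilocal pieces — using $\mathbf e \,s_{ik} = \mathbf e$ — become products of two shorter single-site sums, each of which lifts to $\mathsf D^{(K)}$ by the induction hypothesis. The analogous reduction in the Higgs branch algebra uses the matrix commutator $[Y,X]^i_j = -\epsilon_2\delta^i_j + I^a_j J^i_a$, and after contracting with $I^a\cdot$ and $\cdot J_b$ (resp.\ taking the trace) produces exactly the parallel lower-length corrections. Collecting these corrections gives an element $R \in \mathsf D^{(K)}$, and we set $\mathsf T_{\mathbf r}(E^a_b) := \mathsf T_{n,m}(E^a_b) + R$; the construction of $\mathsf t_{\mathbf r} \in \mathsf A^{(K)}$ is entirely analogous.

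The main obstacle is ensuring that the single element $\mathsf T_{\mathbf r}(E^a_b) \in \mathsf D^{(K)}$ simultaneously realizes the correct images under both $\rho_N$ and $p_N$ for every $N$, even though $\rho_N$ lands in the Cherednik algebra and $p_N$ lands in the Higgs branch algebra. The resolution is that both reductions are shadows of the \emph{same} universal identity performed inside $\mathsf A^{(K)}$ via relations \eqref{eqn: A0}--\eqref{eqn: A3}: relation \eqref{eqn: A3} encodes at once the $\epsilon_2$-scalar piece, the $\epsilon_1\mathrm{tr}$ piece, and the $\Omega$-type piece, so performing the inductive reduction directly in $\mathsf A^{(K)}$ (rather than separately in each target) produces a single element whose images are obtained by applying the homomorphisms $\rho_N$ or $p_N$ to the universal formula. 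Alternatively, the uniqueness of the lift follows from injectivity: $\prod_N \rho_N$ is injective on $\mathsf D^{(K)}$ by Corollary \ref{cor: truncation of D}, so the Cherednik identity alone pins down $R$, and the Higgs identity then holds automatically because it arises from the identical universal manipulation and $p_\bullet$ is an algebra homomorphism (in fact an embedding by Theorem \ref{thm: compare with Costello's DDCA}). This simultaneously gives the uniform lift and the compatibility of both image formulas.
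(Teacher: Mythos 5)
Your construction of $\mathsf T_{\mathbf r}(E^a_b)$ is essentially the paper's argument in a different order. The paper anchors the definition on the Higgs-branch side: it sets $\mathsf T_{\mathbf r}(E^a_b):=\frac{1}{\epsilon_1}\{I^a\mathbf r(X,Y)J_b\}_N\in\mathsf B^{(K)}[\epsilon_1^{-1}]$, so the element is canonically defined once and for all (no dependence on a chosen reduction path), derives the reordering identity
\begin{align*}
\mathsf T_{\mathbf r}(E^a_b)-\mathsf T_{\mathbf r'}(E^a_b)=-\epsilon_3 \mathsf T_{\mathbf r_1\mathbf r_2}(E^a_b)-\epsilon_1 \mathsf T_{\mathbf r_1}(E^a_c) \mathsf T_{\mathbf r_2}(E^c_b)
\end{align*}
from the moment-map relation and injectivity of $p_\bullet$, checks that the Cherednik-side sums $\sum_i\mathbf r(y_i,x_i)E^a_{b,i}$ obey the \emph{same} recursion, and concludes that the defect $\rho_N(\mathsf T_{\mathbf r}(E^a_b))-\sum_i\mathbf r(y_i,x_i)E^a_{b,i}$ is independent of the ordering and hence vanishes (being zero for the symmetrized word); membership in $\mathsf D^{(K)}$ then follows from the PBW theorem and $\ker(\prod_N\rho_N)=0$. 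You anchor on the Cherednik side and invoke the same two injectivity statements to transfer the identity to the Higgs branch; this works, and your second "resolution" (uniqueness via $\prod_N\rho_N$, then matching recursions plus injectivity of $p_\bullet$) is the correct mechanism. Your first resolution — "performing the inductive reduction directly in $\mathsf A^{(K)}$ via \eqref{eqn: A0}--\eqref{eqn: A3}" — is not viable as stated: the unsymmetrized $\mathsf T_{\mathbf r}$ are not elements of $\mathsf A^{(K)}$ until you have constructed them, and \eqref{eqn: A0}--\eqref{eqn: A4} are relations among the symmetrized generators only, so there is no "universal identity inside $\mathsf A^{(K)}$" to appeal to before the induction is complete. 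Drop that sentence and keep the injectivity argument.

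The genuine soft spot is the last clause, "the construction of $\mathsf t_{\mathbf r}\in\mathsf A^{(K)}$ is entirely analogous." It is not, for $K>1$. If you run your Cherednik-anchored induction for the trace with target $\frac{1}{\epsilon_2}\sum_i\mathbf r(y_i,x_i)$, the bilocal piece of $[y_i,x_i]$ produces $-\frac{\epsilon_1}{\epsilon_2}\sum_{i\neq k}\mathbf r_1(y_i,x_i)\Omega_{ik}\mathbf r_2(y_k,x_k)$, which is a quadratic expression in the $\mathsf T$'s divided by $\epsilon_2$ — it neither stays in $\mathsf A^{(K)}$ without localizing $\epsilon_2$ nor reduces to shorter $\mathsf t$'s. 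Indeed the proposition deliberately asserts no $\rho_N$-formula for $\mathsf t_{\mathbf r}$, and the paper's Remark immediately after notes that $\rho_N(\mathsf t_{\mathbf r})=\frac{1}{\epsilon_2}\sum_i\mathbf r(y_i,x_i)$ only when $K=1$. The paper instead treats $\mathsf t_{\mathbf r}$ purely on the Higgs-branch side: it shows $\mathsf t_{\mathbf r}-\mathsf t_{\mathbf r'}$ equals an explicit element of $\mathsf A^{(K)}$ whose correction terms involve the \emph{already constructed} $\mathsf T_{\mathbf r_2\mathbf r_1}(1)$ from Step 1 (so the trace case genuinely depends on the matrix case, rather than being parallel to it), and then averages over all orderings to conclude $\mathsf t_{\mathbf r}-\mathsf t_{n,m}\in\mathsf A^{(K)}$. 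You should restructure the $\mathsf t_{\mathbf r}$ step along these lines rather than mirroring the $\mathsf T_{\mathbf r}$ argument.
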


\begin{proof}
Step 1. We set $\mathsf T_{\mathbf r}(E^a_b):=\frac{1}{\epsilon_1}\{I^a\mathbf r(X,Y)J_b\}_N\in \mathsf B^{(K)}[\epsilon_1^{-1}]$, and we claim that $\rho_N(\mathsf T_{\mathbf r}(E^a_b))=\sum_{i=1}^N \mathbf r(y_i,x_i)E^a_{b,i}$. The claim implies that $\mathsf T_{\mathbf r}(E^a_b)\in \mathsf D^{(K)}$, by the PBW theorem for $\mathsf D^{(K)}$ and the triviality of $\ker(\prod_N\rho_N)$. To prove the claim, we use the induction on the number of letters in $\mathbf r$, the initial cases of zero and one letter are obviously true. For the induction step, suppose that $\mathbf r$ can be written as the augmentation of three binary sequences $\mathbf r_1 XY\mathbf r_2$, then let $\mathbf r'=\mathbf r_1 YX\mathbf r_2$, and we have 
\begin{align*}
    &p_N(\mathsf T_{\mathbf r}(E^a_b)-\mathsf T_{\mathbf r'}(E^a_b))=\frac{1}{\epsilon_1}(I^a\mathbf r_1(X,Y)[X,Y]\mathbf r_2(X,Y))J_b)\\
    &=\frac{1}{\epsilon_1}(-\epsilon_3I^a\mathbf r_1(X,Y)\mathbf r_2(X,Y)J_b-I^a\mathbf r_1(X,Y)J_cI^c\mathbf r_2(X,Y)J_b)\\
    &=p_N(-\epsilon_3 \mathsf T_{\mathbf r_1\mathbf r_2}(E^a_b)-\epsilon_1 \mathsf T_{\mathbf r_1}(E^a_c) \mathsf T_{\mathbf r_2}(E^c_b)),
\end{align*}
which implies that
\begin{align}\label{eqn: transformation to unsymmetrized basis}
    \mathsf T_{\mathbf r}(E^a_b)-\mathsf T_{\mathbf r'}(E^a_b)=-\epsilon_3 \mathsf T_{\mathbf r_1\mathbf r_2}(E^a_b)-\epsilon_1 \mathsf T_{\mathbf r_1}(E^a_c) \mathsf T_{\mathbf r_2}(E^c_b).
\end{align}
On the other hand, we have 
\begin{align*}
    &\sum_{i=1}^N (\mathbf r(y_i,x_i)-\mathbf r'(y_i,x_i))E^a_{b,i}=\sum_{i=1}^N E^a_{b,i}\mathbf r_1(y_i,x_i)[y_i,x_i]\mathbf r_2(y_i,x_i)\\
    &=\epsilon_2\sum_{i=1}^N E^a_{b,i}\mathbf r_1(y_i,x_i)\mathbf r_2(y_i,x_i)-\epsilon_1\sum_{i\neq j}^N E^a_{c,i}\mathbf r_1(y_i,x_i)E^c_{b,j}\mathbf r_2(y_j,x_j)\\
    &=-\epsilon_3\sum_{i=1}^N E^a_{b,i}\mathbf r_1(y_i,x_i)\mathbf r_2(y_i,x_i)-\epsilon_1\left(\sum_{i=1}^N E^a_{c,i}\mathbf r_1(y_i,x_i)\right)\left(\sum_{j=1}^N E^c_{b,j}\mathbf r_2(y_j,x_j)\right).
\end{align*}
By our induction assumption, we have $\rho_N(\mathsf T_{\mathbf r}(E^a_b)-\mathsf T_{\mathbf r'}(E^a_b))=\sum_{i=1}^N (\mathbf r(y_i,x_i)-\mathbf r'(y_i,x_i))E^a_{b,i}$. This implies that $\rho_N(\mathsf T_{\mathbf r}(E^a_b))-\sum_{i=1}^N \mathbf r(y_i,x_i)E^a_{b,i}$ is independent of the ordering of letters in $\mathbf r$, thus $$\rho_N(\mathsf T_{\mathbf r}(E^a_b))-\sum_{i=1}^N \mathbf r(y_i,x_i)E^a_{b,i}=\rho_N(\mathsf T_{n,m}(E^a_b))-\sum_{i=1}^N \sym(y_i^nx_i^m)E^a_{b,i}=0.$$

\noindent Step 2. We set $\mathsf t_{\mathbf r}:=\frac{1}{\epsilon_1}\{\mathrm{Tr}\:\mathbf r(X,Y)\}_N\in \mathsf B^{(K)}[\epsilon_1^{-1}]$, and we claim that $\mathsf t_{\mathbf r}\in \mathsf A^{(K)}$. To prove the claim, we use the induction on the number of letters in $\mathbf r$, the initial cases of zero and one letter are obviously true. For the induction step, suppose that $\mathbf r$ can be written as the augmentation of three binary sequences $\mathbf r_1 XY\mathbf r_2$, then let $\mathbf r'=\mathbf r_1 YX\mathbf r_2$, and we have 
\begin{align*}
    &p_N(\mathsf t_{\mathbf r}-\mathsf t_{\mathbf r'})=\frac{1}{\epsilon_1}\mathrm{Tr}(\mathbf r_1(X,Y)[X,Y]\mathbf r_2(X,Y))\\
    &=\frac{1}{\epsilon_1}(-\epsilon_3\mathrm{Tr}(\mathbf r_1(X,Y)\mathbf r_2(X,Y))-\mathrm{Tr}(\mathbf r_1(X,Y)J_cI^c\mathbf r_2(X,Y))-\epsilon_1\mathrm{Tr}(\mathbf r_1(X,Y))\mathrm{Tr}(\mathbf r_2(X,Y)))\\
    &=p_N(-\epsilon_3 \mathsf t_{\mathbf r_1\mathbf r_2}- \mathsf T_{\mathbf r_2\mathbf r_1}(1)) \pmod{\mathsf B^{(K)}_N}.
\end{align*}
By our induction assumption and the result of Step 1, we have $p_N(\mathsf t_{\mathbf r}-\mathsf t_{\mathbf r'})\in p_N(\mathsf A^{(K)})$ for all $N$. This implies that $\mathsf t_{\mathbf r}-\mathsf t_{\mathbf r'}\in \mathsf A^{(K)}$ by the PBW theorem for $\mathsf A^{(K)}$ and the triviality of $\ker(\prod_N p_N)$. Therefore for any reordering $\mathbf r''$ of $\mathbf r$, we have $\mathsf t_{\mathbf r}-\mathsf t_{\mathbf r''}\in \mathsf A^{(K)}$. Summing over all possible ordering, we get $\mathsf t_{\mathbf r}-\mathsf t_{n,m}\in \mathsf A^{(K)}$, thus $\mathsf t_{\mathbf r}\in \mathsf A^{(K)}$.
\end{proof}

\begin{remark}\label{rmk: unsymmetrized basis}
By \eqref{eqn: transformation to unsymmetrized basis} and an induction argument, we see that $$\mathsf T_{\mathbf r}(E^a_b)\equiv \mathsf T_{n,m}(E^a_b)\pmod{V_{n-1}\tilde V_n H_{m-1}\tilde H_{m}\mathsf D^{(K)}},$$
where $n,m$ are the numbers of $X$s and $Y$s in $\mathbf r(X,Y)$.
\end{remark}

\begin{remark}
If $K=1$, then the same computation in the proof of Proposition \ref{prop: other generators} shows that
\begin{align}
    \rho_N(\mathsf t_{\mathbf r})=\frac{1}{\epsilon_2}\sum_{i=1}^N\mathbf r(y_i,x_i).
\end{align}
In fact we have $\mathsf t_{\mathbf r}-\mathsf t_{\mathbf r'}=-\epsilon_3 \mathsf t_{\mathbf r_1\mathbf r_2}-\epsilon_1\epsilon_2 \mathsf t_{\mathbf r_1} \mathsf t_{\mathbf r_2}$, and 
\begin{align*}
    \sum_{i=1}^N (\mathbf r(y_i,x_i)-\mathbf r'(y_i,x_i))=-\epsilon_3\sum_{i=1}^N \mathbf r_1(y_i,x_i)\mathbf r_2(y_i,x_i)-\epsilon_1\left(\sum_{i=1}^N \mathbf r_1(y_i,x_i)\right)\left(\sum_{j=1}^N \mathbf r_2(y_j,x_j)\right).
\end{align*}
By induction, $\rho_N(\mathsf t_{\mathbf r})-\frac{1}{\epsilon_2}\sum_{i=1}^N\mathbf r(y_i,x_i)$ does not depend on the ordering of $\mathbf r$, thus 
\begin{align*}
    \rho_N(\mathsf t_{\mathbf r})-\frac{1}{\epsilon_2}\sum_{i=1}^N\mathbf r(y_i,x_i)=\rho_N(\mathsf t_{n,m})-\frac{1}{\epsilon_2}\sum_{i=1}^N\sym (y_i^nx_i^m)=0.
\end{align*}
\end{remark}

\subsection{\texorpdfstring{$\mathcal{B}$}{B}-algebra and the Yangian algebra of \texorpdfstring{$\mathfrak{gl}_K$}{gl(K)}}\label{subsec: B-algebra}
Recall that if $A=\oplus_{i\in\mathbb Z}A^i$ is a $\mathbb Z$-graded algebra with homogeneous components $A^i$, then one can define a new algebra $\mathcal B(A)$, called the $\mathcal{B}$-algebra 
\begin{align}
    \mathcal B(A)=A^0/\left(\sum_{i>0}A^i\cdot A^{-i}\right).
\end{align}
Note that if $A$ is commutative, then $\Spec \mathcal{B}(A)=(\Spec A)^{\mathbb C^{\times}}$, where $\mathbb C^{\times}$-action on $\Spec A$ is induced from grading.\\

Recall that $\mathsf A^{(K)}$ is graded by \eqref{eqn: grading on A}. The following theorem was conjectured by Costello \cite[Section 2.3]{costello2017holography}.

\begin{theorem}\label{thm:B-algebra and Yangian}
Under the grading \eqref{eqn: grading on A}, there is an algebra isomorphism
\begin{align}
    \mathcal B(\mathsf D^{(K)})\cong Y_{\epsilon_1}(\mathfrak{gl}_K)[\epsilon_2]
\end{align}
between $\mathcal B$-algebra of $\mathsf D^{(K)}$ and the Yangian algebra of $\mathfrak{gl}_K$.
\end{theorem}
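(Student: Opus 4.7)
The plan is to define an algebra homomorphism from the Yangian (in its RTT presentation) into $\mathcal B(\mathsf D^{(K)})$ sending the generator $t^{(r+1)}_{ab}$ to the class $[\mathsf T_{r,r}(E^b_a)]$, and then establish bijectivity in three steps: generating, relations, and a filtration matching.

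First I would identify a small generating set of $\mathcal B(\mathsf D^{(K)})$. By the PBW theorem (Theorem \ref{thm: PBW}), $\mathsf D^{(K),0}$ is spanned by ordered monomials $\mathsf T_{n_1,m_1}(X_1)\cdots\mathsf T_{n_k,m_k}(X_k)$ with $\sum_j(m_j-n_j)=0$. Using the diagonal filtration of Proposition \ref{prop: filtration} together with relation \eqref{eqn_schematic commutators 1}, any ordered monomial containing at least one off-diagonal factor can, after a reordering by commutators of strictly smaller diagonal degree, be split as a product of a strictly positive-degree factor and a strictly negative-degree factor, hence lies modulo $F_{\le s-1}\mathsf D^{(K)}$ inside the ideal $I:=\sum_{i>0}\mathsf D^{(K),i}\cdot\mathsf D^{(K),-i}$. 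Inducting on the diagonal filtration shows $\mathcal B(\mathsf D^{(K)})=\mathsf D^{(K),0}/I$ is generated as a $\mathbb C[\epsilon_1,\epsilon_2]$-algebra by the classes $[\mathsf T_{n,n}(E^a_b)]$ for $n\ge0$ and $1\le a,b\le K$.

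Next I would check well-definedness of $\phi:Y_{\epsilon_1}(\mathfrak{gl}_K)[\epsilon_2]\to\mathcal B(\mathsf D^{(K)})$ by verifying the RTT relation
\[
[t^{(r+1)}_{ab},t^{(s)}_{cd}]-[t^{(r)}_{ab},t^{(s+1)}_{cd}]=\epsilon_1\bigl(t^{(r)}_{cb}t^{(s)}_{ad}-t^{(s)}_{cb}t^{(r)}_{ad}\bigr)
\]
holds among the designated classes. Unpacking, this reduces to computing $[\mathsf T_{r,r}(E^b_a),\mathsf T_{s,s}(E^d_c)]$ inside $\mathsf D^{(K)}$ using \eqref{eqn: A3} iterated via \eqref{eqn: A5}; the non-diagonal $\mathsf T_{r+s-i,i}(\cdot)$ contributions are annihilated in $\mathcal B(\mathsf D^{(K)})$ by the first step, and the $\Omega$-tensor quadratic corrections in \eqref{eqn: A3} are exactly what produce the Yangian right-hand side with parameter $\hbar=\epsilon_1$. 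Surjectivity of $\phi$ is immediate from the generation statement.

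For injectivity, compare associated graded algebras. The Yangian carries its standard loop filtration with $\mathrm{gr}\,Y_{\epsilon_1}(\mathfrak{gl}_K)\cong U(\mathfrak{gl}_K[t])$ via $t^{(r+1)}_{ab}\mapsto t^r\otimes E_{ab}$. On the other side the vertical filtration $V_\bullet$ from Definition \ref{def: filtration_vertical and horizontal} is $\mathbb C^\times$-equivariant under \eqref{eqn: grading on A}, so it descends to $\mathcal B(\mathsf D^{(K)})$; by Proposition \ref{prop: filtration_vertical and horizontal} one has $\mathrm{gr}_V\mathsf D^{(K)}\cong U(\mathscr O(\mathbb C^2)\otimes\mathfrak{gl}_K)$, and a direct inspection of its $\mathbb C^\times$-invariants modulo positive-times-negative gives a surjection $U(\mathfrak{gl}_K[xy])\twoheadrightarrow \mathrm{gr}_V\mathcal B(\mathsf D^{(K)})$. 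Since $\phi$ is filtration-preserving and induces the obvious map $t^r\otimes E_{ab}\mapsto (xy)^r\otimes E_{ab}$ on associated gradeds, injectivity of $\phi$ follows by a standard filtration argument, once $\mathrm{gr}_V\mathcal B(\mathsf D^{(K)})\cong\mathcal B(\mathrm{gr}_V\mathsf D^{(K)})$ is established.

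The main obstacle is precisely this last identification: $\mathcal B$ does not a priori commute with taking associated graded. A robust alternative is to exploit the family of truncations $\rho_N:\mathsf D^{(K)}\twoheadrightarrow\mathrm S\mathcal H^{(K)}_N$ of Corollary \ref{cor: truncation of D}, which is $\mathbb C^\times$-equivariant, so induces $\mathcal B(\rho_N):\mathcal B(\mathsf D^{(K)})\to\mathcal B(\mathrm S\mathcal H^{(K)}_N)$; the right-hand side is the well-known $N$-truncation of $Y_{\epsilon_1}(\mathfrak{gl}_K)[\epsilon_2]$, and combining $\ker(\prod_N\rho_N)=0$ with compatibility of the truncation ideals with $I$ gives injectivity. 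The heaviest computational labor will be the sign and coefficient bookkeeping of the $\Omega$-terms in Step 2, together with a careful argument that the PBW-reordering in Step 1 lands monomials with off-diagonal factors in $I$ and not merely in $I+F_{\le s-1}\mathsf D^{(K)}$.
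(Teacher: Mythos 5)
Your overall strategy (locate a Yangian in degree zero, show everything else dies in the quotient by $\sum_{i>0}A^i\cdot A^{-i}$) is the right shape, but two of your three steps have genuine gaps, and the paper's proof is organized precisely to avoid them.

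First, the choice of generators. You send $t^{(r+1)}_{ab}$ to the class of the \emph{symmetrized} generator $\mathsf T_{r,r}(E^b_a)$ and propose to verify the RTT relation by iterating \eqref{eqn: A3}. The paper instead uses the \emph{unsymmetrized} generators $\mathsf T_{\mathbf r_n}(E^a_b)$ with $\mathbf r_n(x,y)=(yx)^n$ from Proposition \ref{prop: other generators}; these are the images of $\frac{1}{\epsilon_1}I^a(YX)^nJ_b$ in the quantum ADHM algebra, and the RTT relations for \emph{those} elements are imported from the computation in \cite{moosavian2021towards} (Lemma \ref{lem: embedding of Yangian}), so that $Y_{\epsilon_1}(\gl_K)$ is an honest subalgebra of $\mathsf D^{(K)}$ before any quotient is taken. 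By \eqref{eqn: transformation to unsymmetrized basis} and Remark \ref{rmk: unsymmetrized basis}, $\mathsf T_{\mathbf r_n}(E^a_b)$ and $\mathsf T_{n,n}(E^a_b)$ differ by products of lower generators, and these corrections include products of diagonal-degree factors which do \emph{not} vanish in $\mathcal B(\mathsf D^{(K)})$. So $[\mathsf T_{r,r}(E^b_a)]\neq[\mathsf T_{\mathbf r_r}(E^b_a)]$ in general, and there is no reason the RTT relation holds verbatim for your classes; at minimum this step is an unverified and heavy computation, and most likely the relation acquires correction terms.

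Second, injectivity. You acknowledge that $\mathcal B$ need not commute with $\mathrm{gr}_V$, and your fallback via the truncations $\rho_N$ does not work either: $\ker(\prod_N\rho_N)=0$ does not imply that $\prod_N\mathcal B(\rho_N)$ is injective, since an element of $\mathsf D^{(K),0}$ could land in the ideal $\sum_{i>0}\mathrm S\mathcal H^{(K),i}_N\cdot\mathrm S\mathcal H^{(K),-i}_N$ for every $N$ without lying in $I$. The paper's route avoids both surjectivity and injectivity issues simultaneously via Lemma \ref{lem: B-algbera of polynomial-like algebra}: choose the total order on the homogeneous generating set $\tilde{\mathfrak G}(\mathsf D^{(K)})=\{\mathsf T_{n,m}(E^a_b)\ (n\neq m)\}\cup\{\mathsf T_{\mathbf r_n}(E^a_b)\}$ to \emph{refine the grading}, so that in any degree-zero ordered PBW monomial containing an off-diagonal factor the leading factor has strictly positive degree and the monomial lies in $I$ outright — no reordering, hence no $F_{s-1}$ error terms of the kind you worried about — and then the projection from the span of monomials in degree-zero generators (which is the Yangian subalgebra) to $\mathcal B(\mathsf D^{(K)})$ is a module isomorphism by the PBW theorem. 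If you want to salvage your approach, the fix is to replace $\mathsf T_{r,r}(E^b_a)$ by $\mathsf T_{\mathbf r_r}(E^b_a)$ throughout and to replace both of your injectivity arguments by the adapted-PBW-basis argument.
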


Recall that $Y_{\epsilon_1}(\mathfrak{gl}_K)$ is the $\mathbb C[\epsilon_1]$-algebra generated by $\{T^a_{b;n}\:|\: 1\le a,b\le K,n\in \mathbb N\}$ with relations:
\begin{align}\label{eqn: RTT relations}
    [T^a_b(u),T^c_d(v)]=\frac{\epsilon_1}{u-v}\left(T^c_b(u)T^a_d(v)-T^c_b(v)T^a_d(u)\right),
\end{align}
where $T^a_b(u)=\delta^a_b+\epsilon_1\sum_{n\ge 0}T^a_{b;n}u^{-n-1}$. 
\begin{lemma}\label{lem: embedding of Yangian}
Let $\mathbf r_n(x,y):=(yx)^n$, then the map $T^a_{b;n}\mapsto \mathsf T_{\mathbf r_n}(E^a_b)$ generates a $\mathbb C[\epsilon_1]$-algebra embedding $Y_{\epsilon_1}(\gl_K)\hookrightarrow \mathsf D^{(K)}$. The image of this embedding is in the degree zero subalgebra under the grading \eqref{eqn: grading on A}.
\end{lemma}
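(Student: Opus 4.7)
The plan has three ingredients: the grading statement, the RTT relations, and injectivity. The grading claim is immediate: the word $\mathbf r_n$ contains $n$ copies of each of its two letters, so by Remark \ref{rmk: unsymmetrized basis} the element $\mathsf T_{\mathbf r_n}(E^a_b)$ agrees with $\mathsf T_{n,n}(E^a_b)$ modulo $V_{n-1}\tilde V_n H_{n-1}\tilde H_n\mathsf D^{(K)}$ and hence lies in the degree-zero component of $\mathsf D^{(K)}$ under the $\mathbb Z$-grading \eqref{eqn: grading on A}.

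For the RTT relations \eqref{eqn: RTT relations}, I would pass to the family of Cherednik truncations. Since by Corollary \ref{cor: truncation of D} the map $\prod_N\rho_N:\mathsf D^{(K)}\to\prod_N \mathrm{S}\mathcal H^{(K)}_N[\epsilon_2^{-1}]$ is injective, it suffices to verify the relations after applying each $\rho_N$. By Proposition \ref{prop: other generators} we have $\rho_N(\mathsf T_{\mathbf r_n}(E^a_b))=\sum_{i=1}^N \mathbf r_n(y_i,x_i)\,E^a_{b,i}$, and the generating series
\[
\widehat T^a_b(u)\;:=\;\delta^a_b+\epsilon_1\sum_{n\ge 0}\rho_N(\mathsf T_{\mathbf r_n}(E^a_b))\,u^{-n-1}
\]
can be resummed as $\delta^a_b+\epsilon_1\sum_{i=1}^N E^a_{b,i}\,(u-z_i)^{-1}$ for the spectral variable $z_i:=\mathbf r_1(y_i,x_i)$. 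The RTT equation then amounts to the identity
\[
(u-v)\bigl[\widehat T^a_b(u),\widehat T^c_d(v)\bigr]=\epsilon_1\bigl(\widehat T^c_b(u)\widehat T^a_d(v)-\widehat T^c_b(v)\widehat T^a_d(u)\bigr)
\]
inside $\mathrm{S}\mathcal H^{(K)}_N$, which is a direct computation using the Cherednik commutator $[y_i,x_j]=\delta_{ij}(\epsilon_2-\epsilon_1\sum_{l\ne i}s_{il}\Omega_{il})+(1-\delta_{ij})\epsilon_1 s_{ij}\Omega_{ij}$ together with $[E^a_{b,i},E^c_{d,j}]=0$ for $i\ne j$. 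This is essentially the Cherednik-algebra realization of the degenerate Yangian, and is the principal technical obstacle of the proof: the $s_{il}\Omega_{il}$ correction terms in $[y_i,x_i]$ must combine, after spherical symmetrization across sites, to reproduce the permutation-style right-hand side of RTT (rather than some more complicated structure).

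Injectivity I would prove by associated graded. Equip $Y_{\epsilon_1}(\gl_K)$ with the RTT filtration $\deg T^a_{b;n}=n$, for which $\mathrm{gr}\,Y_{\epsilon_1}(\gl_K)\cong U(\gl_K[t])$ via $T^a_{b;n}\rightsquigarrow E^a_b\otimes t^n$, and equip $\mathsf D^{(K)}$ with the vertical filtration $V_\bullet$, for which $\mathrm{gr}_V\mathsf D^{(K)}\cong U(\mathscr O(\mathbb C^2)\otimes\gl_K)$ by the discussion after Proposition \ref{prop: filtration_vertical and horizontal}. The degree analysis of the first paragraph shows that the assignment $T^a_{b;n}\mapsto \mathsf T_{\mathbf r_n}(E^a_b)$ is filtered, and the induced graded map sends $E^a_b\otimes t^n$ to $y^nx^n\otimes E^a_b$. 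Since $\mathbb C[yx]\subset \mathbb C[x,y]=\mathscr O(\mathbb C^2)$ is a polynomial subring, the underlying Lie algebra map $\gl_K[t]\hookrightarrow \gl_K\otimes\mathscr O(\mathbb C^2)$ is injective, and by PBW the induced map on universal enveloping algebras is injective. Hence the associated-graded map, and therefore the original map, is injective, completing the proof.
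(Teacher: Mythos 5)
Your architecture (verify the RTT relations in a faithful family of truncations, then deduce injectivity by an associated-graded argument) is sound and is a genuinely different route from the paper's: the paper realizes $T^a_{b;n}\mapsto\frac{1}{\epsilon_1}I^a(YX)^nJ_b$ in the ADHM quiver algebras $\mathsf B^{(K)}_N$, cites \cite{moosavian2021towards} for the fact that these satisfy \eqref{eqn: RTT relations}, and then concludes from $\ker(p_\bullet)=0$ (Theorem \ref{thm: compare with Costello's DDCA}) together with the PBW theorem for $\mathsf D^{(K)}$ (Theorem \ref{thm: PBW}); your injectivity step instead uses $\mathrm{gr}_V\mathsf D^{(K)}\cong U(\mathscr O(\mathbb C^2)\otimes\gl_K)$ and the PBW theorem for $Y_{\epsilon_1}(\gl_K)$, which is fine but is an extra external input. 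The genuine gap is that you assert rather than prove the hard step. The RTT identity for $\widehat T^a_b(u)=\delta^a_b+\epsilon_1\sum_i E^a_{b,i}(u-x_iy_i)^{-1}$ inside $\mathrm S\mathcal H^{(K)}_N$ is essentially the whole content of the lemma, and it is \emph{not} the computation available in \cite{bernard1993yang}: there the spectral parameters are the pairwise commuting Cherednik operators $\mathcal D_i=x_iy_i-\epsilon_1\sum_{j<i}s_{ij}\Omega_{ij}$ of Definition \ref{def: Cherednik operators}, whereas the operators $x_iy_i$ do not commute with one another, since $[x_iy_i,x_jy_j]=\epsilon_1\left(x_is_{ij}\Omega_{ij}y_j-x_js_{ij}\Omega_{ij}y_i\right)$ for $i\neq j$. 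Moreover the two candidate families of generators differ even after spherical projection: already at $n=1$ one has $\sum_iE^a_{b,i}\mathcal D_i\mathbf e-\sum_iE^a_{b,i}x_iy_i\mathbf e=-\epsilon_1\sum_{j<i}E^a_{b,i}\Omega_{ij}\mathbf e\neq 0$. So you cannot simply quote the known RTT relation for the $\mathcal D_i$-generating function; you must either carry out the computation for $x_iy_i$ directly (this is what "the principal technical obstacle" amounts to, and flagging it does not discharge it) or exhibit an explicit change of generators relating the two families.

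Two smaller points. First, your grading argument is a non sequitur as written: Remark \ref{rmk: unsymmetrized basis} places the correction term in $V_{n-1}\tilde V_nH_{n-1}\tilde H_n\mathsf D^{(K)}$, but that filtration piece is not contained in the degree-zero component of \eqref{eqn: grading on A} (it contains, e.g., $\mathsf T_{0,n-1}(E^a_b)$, of degree $n-1$). The fix is easy: either induct on the rewriting relation \eqref{eqn: transformation to unsymmetrized basis}, observing that every term on the right-hand side has the same $\mathrm{ad}_{\mathsf t_{1,1}}$-weight as $\mathsf T_{\mathbf r}(E^a_b)$ because $\mathbf r_1\mathbf r_2$ loses one $X$ and one $Y$ simultaneously; or use that $\rho_N$ (equivalently $p_N$, as in the paper) is a graded map with $\ker(\prod_N\rho_N)=0$ on $\mathsf D^{(K)}$, so homogeneity of degree zero of all the images $\sum_i(x_iy_i)^nE^a_{b,i}$ forces homogeneity of $\mathsf T_{\mathbf r_n}(E^a_b)$. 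Second, your associated-graded injectivity argument is correct once the homomorphism exists (the leading symbol $y^nx^n\otimes E^a_b$ does follow from Remark \ref{rmk: unsymmetrized basis} and the discussion after Proposition \ref{prop: filtration_vertical and horizontal}), and it is arguably cleaner than the paper's appeal to the PBW basis of $\mathsf D^{(K)}$; just state explicitly that you are using $\mathrm{gr}\,Y_{\epsilon_1}(\gl_K)\cong U(\gl_K[t])$ for the loop filtration $\deg T^a_{b;n}=n$ as a known input.
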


\begin{proof}
The computation in \cite{moosavian2021towards} shows that the map
\begin{align}
    T^a_{b;n}\mapsto \frac{1}{\epsilon_1}I^a(YX)^nJ_b=p_N(\mathsf T_{\mathbf r_n}(E^a_b))
\end{align}
extends to a $\mathbb C[\epsilon_1]$-algebra homomorphism from $Y_{\epsilon_1}(\mathfrak{gl}_K)$ to $\mathsf B^{(K)}_N[\epsilon_1^{-1}]$. Then the lemma follows from Theorem \ref{thm: compare with Costello's DDCA} and the PBW theorem for $\mathsf D^{(K)}$ (Theorem \ref{thm: PBW}). Finally, $\mathsf T_{\mathbf r_n}(E^a_b)$ has degree zero because its image in $\mathsf B^{(K)}_N[\epsilon_1^{-1}]$ has degree zero for all $N$.
\end{proof}

\begin{lemma}\label{lem: B-algbera of polynomial-like algebra}
Let $R$ be a base ring and let $A$ be a $\mathbb Z$-graded $R$-algebra, assume that $A$ possesses a set of homogeneous elements $\mathfrak{G}(A)$ together with a total order which refines the partial order given by $\mathbb Z$-grading, such that $A$ is a free $R$-module with a PBW basis $\mathfrak{B}(A):=$non-increasing ordered monomials of elements in $\mathfrak{G}(A)$, assume moreover that the $R$-span of the subset of $\mathfrak{B}(A)$ consisting of monomials in degree zero elements $\mathfrak{G}(A)^0\subset\mathfrak{G}(A)$ is a subalgebra $B$, then the $\mathcal B$-algebra $\mathcal B(A)$ is isomorphic to $B$. 
\end{lemma}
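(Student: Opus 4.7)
The plan is to show that the natural $R$-algebra homomorphism $\phi\colon B \hookrightarrow A^0 \twoheadrightarrow \mathcal{B}(A) = A^0/J$ is an isomorphism, where $J := \sum_{i>0} A^i A^{-i}$. That $\phi$ is an algebra homomorphism follows from the subalgebra hypothesis on $B$. For surjectivity, note that by the non-increasing PBW order refining the $\mathbb Z$-grading, every PBW basis element $p \in A^0$ factors as $p = PMN$ with $P$, $M$, $N$ ordered products of strictly positive-degree, zero-degree, and strictly negative-degree generators respectively. Either $P = N = 1$ (and $p = M \in B$), or $\deg P = -\deg N > 0$ whence $p = P \cdot (MN) \in A^{\deg P} A^{-\deg P} \subseteq J$. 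Hence $A^0 = B + J$ and $\phi$ is surjective.

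For injectivity I would construct a retraction. The PBW basis yields an $R$-module direct-sum decomposition $A \cong A_+ \otimes_R B \otimes_R A_-$, where $A_\pm$ is the $R$-span of non-increasing ordered monomials in positive/negative-degree generators. Define $\pi\colon A \to B$ to be the projection onto the middle tensor factor (equivalently, $\pi$ sends a PBW basis element lying in $B$ to itself and kills every other basis element). Since $\pi|_B = \mathrm{id}_B$, it suffices to prove $\pi(J) = 0$: this makes $\pi$ descend to a retraction $\bar\pi\colon \mathcal{B}(A) \to B$ of $\phi$, forcing $\phi$ to be injective.

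To prove $\pi(xy) = 0$ for $x \in A^i, y \in A^{-i}$ with $i > 0$, I may assume by linearity that $x, y$ are PBW basis elements and factor $x = p_x b_x q_x$, $y = p_y b_y q_y$ via the tensor decomposition. Since $\deg y = -i$ and $\deg p_y \geq 0$, one has $|\deg q_y| \geq i$. I would then track, through any PBW rewriting of the concatenation $p_x b_x q_x p_y b_y q_y$, the total absolute negative degree $\mathrm{dn}(m) := \sum_{\deg g < 0}|\deg g|$ of an intermediate PBW monomial $m$. The only swap that strictly decreases $\mathrm{dn}$ in its commutator term is a negative-positive swap $g_- g_+ \mapsto g_+ g_- + [g_-, g_+]$, which requires a positive-degree generator immediately to the right of $g_-$. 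Because the PBW order puts positives strictly before negatives and commutator terms are inserted only at the location of the swap, no positive-degree generator can ever appear to the right of $q_y$'s initial position; hence the generators of $q_y$ only ever participate in neg-neg swaps whose commutators lie in strictly more-negative graded pieces and therefore preserve or increase $\mathrm{dn}$. The subalgebra hypothesis on $B$ is used precisely to ensure that zero-zero commutators stay in $B$, generating no spurious positive-degree generators that could migrate past $q_y$. It follows that every PBW monomial in $xy$ has $\mathrm{dn} \geq |\deg q_y| \geq i$ and hence total positive degree $\geq i > 0$, so contains a positive-degree generator and lies outside $B$; thus $\pi(xy) = 0$, and combined with surjectivity we obtain $B \cong \mathcal{B}(A)$. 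The main obstacle is this structural bookkeeping of generator migration during iterated PBW reductions, in particular the verification that commutator-introduced positive-degree generators never migrate past the initially rightmost block $q_y$.
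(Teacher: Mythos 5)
Your surjectivity argument is exactly the paper's: the paper observes that every degree-zero PBW monomial outside $\mathfrak B(A)^0_0$ must begin with a positive-degree generator, hence lies in $J=\sum_{i>0}A^i A^{-i}$, so $A^0=B+J$. Where you go beyond the printed proof is injectivity. The paper passes directly from $\mathfrak B(A)^0_1\subseteq J$ to the assertion that $B\to\mathcal B(A)$ is an $R$-module isomorphism, leaving the reverse containment $J\cap B=0$ unaddressed; your retraction $\pi$ together with the goal $\pi(J)=0$ is precisely the missing ingredient, and your identification of this as "the main obstacle" is accurate.

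The gap is in the verification of $\pi(J)=0$. First, the bookkeeping presupposes that the basis expansion of $xy$ is computed by a terminating sequence of adjacent transpositions $gg'\mapsto g'g+[g,g']$ followed by re-expansion of the commutators. The hypotheses of the lemma provide no filtration in which commutators of generators are of lower order, so this straightening process need not terminate, and an invariant maintained along the intermediate words of a non-terminating rewriting controls nothing about the finitely many coefficients of the actual element $xy\in\bigoplus_m Rm$. (In the paper's application to $\mathsf D^{(K)}$ this is repaired by the diagonal filtration of Proposition \ref{prop: filtration}, for which commutators strictly drop filtration degree and straightening terminates; the abstract lemma as stated grants no such thing.) Second, even granting termination, the claim that the generators of $q_y$ only ever undergo neg-neg swaps "whose commutators preserve $\mathrm{dn}$" is too quick: a neg-neg swap at the left edge of $q_y$ replaces the pair by an element of $A^d$, $d<0$, whose \emph{basis expansion} may contain monomials carrying positive-degree generators; these are inserted to the left of the surviving part of $q_y$ and can subsequently undergo dn-decreasing neg-pos swaps with negative generators further left. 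What actually survives is the stronger invariant that the maximal all-negative \emph{suffix} of every intermediate word keeps total absolute degree $\ge i$ (which then forces $\mathrm{dn}\ge i$ globally, since the suffix alone contributes that much); your migration argument should be recast in those terms. With the suffix invariant and a filtration guaranteeing termination, your proof closes; without them, the key step $B\cap J=0$ remains unproved --- a point on which the paper's own proof is also silent.
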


\begin{proof}
By the assumption, the degree zero subalgebra $A^0$ has $R$-basis $\mathfrak B(A)^0:=\mathfrak{B}(A)\cap A^0$. Let $\mathfrak B(A)^0_0$ be the subset of $\mathfrak B(A)^0$ consisting of non-increasing ordered monomials in $\mathfrak{G}(A)^0$, and let $\mathfrak B(A)^0_1$ be the complement of $\mathfrak B(A)^0_0$ in $\mathfrak B(A)^0$, then elements in $\mathfrak B(A)^0_1$ are of the form: $a_1\cdots a_n$, where $\deg a_1>0$, this is because if $\deg a_1\le 0$ then all other elements $a_2,\cdots,a_n$ have non-positive degrees, so all $a_i$ to be of degree zero i.e. $a_1\cdots a_n\in \mathfrak B(A)^0_0$, a contradiction. Thus $\mathfrak B(A)^0_1$ belongs to the ideal $\sum_{i>0} A^i\cdot A^{-i}$, and the projection $B\to \mathcal B(A)$ is an $R$-module isomorphism. Since $B$ is a subalgebra of $A^0$, the projection $B\to \mathcal B(A)$ is algebra homomorphism, thus $B$ is isomorphic to $ \mathcal B(A)$.
\end{proof}

\begin{proof}[Proof of Theorem \ref{thm:B-algebra and Yangian}]
According to our previous computation \eqref{eqn: transformation to unsymmetrized basis}, the transformation between unsymmetrized generators and the symmetrized ones are triangular with respect to the filtration in Section \ref{subsec: flitration on A}, therefore $\mathsf D^{(K)}$ has a set of generators $$\tilde{\mathfrak G}(\mathsf D^{(K)}):=\{\mathsf T_{n,m}(E^a_b),\mathsf T_{\mathbf r_n}(E^a_b)\:|\: 1\le a,b\le K, (n,m)\in \mathbb N^2, n\neq m\},$$ where $\mathbf r_n(x,y)=(xy)^n$. Note that elements in $\tilde{\mathfrak G}(\mathsf D^{(K)})$ are homogeneous under the grading \eqref{eqn: grading on A} and the degree zero subset $\tilde{\mathfrak G}(\mathsf D^{(K)})^0=\{\mathsf T_{\mathbf r_n}(E^a_b)\:|\: 1\le a,b\le K, n\in \mathbb N\}$. Choose a refinement of the partial order on $\tilde{\mathfrak G}(\mathsf D^{(K)})$ given by $\mathbb Z$-grading and we get a total order $\preceq$ on $\tilde{\mathfrak G}(\mathsf D^{(K)})$. Since the proof of Theorem \ref{thm: PBW} does not depend on the choice of total order, we conclude that $\mathsf D^{(K)}$ possesses a $\mathbb C[\epsilon_1,\epsilon_2]$-basis $\tilde{\mathfrak B}(\mathsf D^{(K)}):=$non-increasing ordered monomials of elements in $\tilde{\mathfrak G}(\mathsf D^{(K)})$. By Lemma \ref{lem: embedding of Yangian}, the $\mathbb C[\epsilon_1,\epsilon_2]$-span of the subset of $\tilde{\mathfrak B}(\mathsf D^{(K)})$ consisting of monomials in degree zero elements $\tilde{\mathfrak G}(\mathsf D^{(K)})^0$ is a subalgebra which is isomorphic to $Y_{\epsilon_1}(\gl_K)[\epsilon_2]$. The assumptions in Lemma \ref{lem: B-algbera of polynomial-like algebra} are satisfied if we set $R=\mathbb C[\epsilon_1,\epsilon_2], A=\mathsf D^{(K)}, \mathfrak{G}(A)=\tilde{\mathfrak G}(\mathsf D^{(K)})$, thus $Y_{\epsilon_1}(\gl_K)[\epsilon_2]$ is isomorphic to $\mathcal B(\mathsf D^{(K)})$.
\end{proof}

\begin{remark}
The image of the Yangian $Y_{\epsilon_1}(\mathfrak{gl}_K)$ generators $T^a_{b;n}$ in the spherical Cherednik algebra $\mathrm{S}\mathcal H^{(K)}_N$ is 
\begin{align}
    \rho_N(T^a_{b;n})=\sum_{i=1}^N E^a_{b,i}(x_iy_i)^n.
\end{align}
This is compatible with the observation in \cite{bernard1993yang}.
\end{remark}

\subsection{A simpler definition of \texorpdfstring{$\mathsf A^{(1)}$}{A1}}
\begin{proposition}\label{prop: simpler definition, K=1}
$\mathsf A^{(1)}$ is generated over $\mathbb C[\epsilon_1,\epsilon_2]$ by $\{\mathsf t_{3,0},\mathsf t_{2,0},\mathsf t_{1,0},\mathsf t_{1,1},\mathsf t_{0,n}\:|\: n\in \mathbb N\}$ with relations
\begin{equation}\label{eqn: A1', K=1}
     [\mathsf t_{2,0},\mathsf t_{0,2}]=4\mathsf t_{1,1},\;[\mathsf t_{1,1},\mathsf t_{2,0}]=-2\mathsf t_{2,0},\;[\mathsf t_{1,1},\mathsf t_{0,2}]=2\mathsf t_{0,2},\tag{A$1_1$}
\end{equation}
\begin{equation}\label{eqn: A2', K=1}
\begin{split}
    [\mathsf t_{2,0},\mathsf t_{n,0}]=0,\; [\mathsf t_{0,2},\mathsf t_{0,m}]=0,&\; [\mathsf t_{1,1},\mathsf t_{0,m}]=m\mathsf t_{0,m}, \;(0\le n\le 3 \text{ and }m\ge 0)\\
    \mathrm{ad}^3_{\mathsf t_{2,0}}(\mathsf t_{0,3})=48\mathsf t_{3,0},&\; [\mathsf t_{2,0},\mathsf t_{0,1}]=2\mathsf t_{1,0},
\end{split}\tag{A$2_1$}
\end{equation}
\begin{equation}\label{eqn: A3', K=1}
\begin{split}
        [\mathsf t_{1,0},\mathsf t_{0,0}]=0,\;[\mathsf t_{1,0},\mathsf t_{0,1}]=\mathsf t_{0,0},\; [\mathsf t_{1,0},\mathsf t_{0,3}]=3\mathsf t_{0,2},
    \end{split}\tag{A$3_1$}
\end{equation}
\begin{equation}\label{eqn: A4', K=1}
\begin{split}
    \begin{split}
    [\mathsf t_{3,0},\mathsf t_{0,n}]=&\frac{3}{4n+4}\mathrm{ad}^2_{\mathsf t_{2,0}} (\mathsf t_{0,n+1})-\frac{n(n-1)(n-2)}{4}\sigma_2\mathsf t_{0,n-3}\\
&+\frac{3\sigma_3}{2}\sum_{m=0}^{n-3}(m+1)(n-2-m)\mathsf t_{0,m}\mathsf t_{0,n-3-m}, \;(n\ge 3)
    \end{split}
    \end{split}\tag{A$4_1$}
\end{equation}
where we set $\sigma_2=\epsilon_1\epsilon_2+\epsilon_2\epsilon_3+\epsilon_3\epsilon_1$ and $\sigma_3=\epsilon_1\epsilon_2\epsilon_3$.
\end{proposition}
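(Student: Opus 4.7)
Write $\mathsf A^{(1)'}$ for the algebra defined by the presentation in Proposition \ref{prop: simpler definition, K=1}, and denote by $\phi : \mathsf A^{(1)'} \to \mathsf A^{(1)}$ the natural morphism sending each named generator to the element of the same name in $\mathsf A^{(1)}$. To see $\phi$ is well-defined, observe that each simpler relation is a specialization of Lemma \ref{lem: relations A^1}: the relations \eqref{eqn: A1', K=1}, \eqref{eqn: A3', K=1}, and the non-$\mathrm{ad}^3$ clauses of \eqref{eqn: A2', K=1} are direct instances of \eqref{eqn: A1, K=1}; the identity $\mathrm{ad}^3_{\mathsf t_{2,0}}(\mathsf t_{0,3}) = 48\,\mathsf t_{3,0}$ is obtained by iterating $[\mathsf t_{2,0},\mathsf t_{n,m}] = 2m\,\mathsf t_{n+1,m-1}$; and \eqref{eqn: A4', K=1} is \eqref{eqn: A2, K=1} after substituting $\mathsf t_{2,n-1} = \tfrac{1}{4n(n+1)}\,\mathrm{ad}^2_{\mathsf t_{2,0}}(\mathsf t_{0,n+1})$, which follows from the same lowering formula. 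Surjectivity of $\phi$ then follows because every $\mathsf t_{n,m}\in\mathsf A^{(1)}$ arises by applying $\mathrm{ad}_{\mathsf t_{2,0}}$ iteratively to $\mathsf t_{0,n+m}$.

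For injectivity, I construct the inverse $\psi : \mathsf A^{(1)} \to \mathsf A^{(1)'}$. In $\mathsf A^{(1)'}$, define
\begin{align*}
    \mathsf t_{n,m}\ :=\ \frac{m!}{2^n (n+m)!}\,\mathrm{ad}^n_{\mathsf t_{2,0}}(\mathsf t_{0,n+m}),\qquad (n,m)\in \mathbb N^2,
\end{align*}
a formula which is consistent with the original named generators, as easily verified using \eqref{eqn: A1', K=1}--\eqref{eqn: A2', K=1} for $(n,m)\in\{(0,N),(1,0),(1,1),(2,0),(3,0)\}$; then set $\psi(\mathsf t_{n,m})$ equal to this element. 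It remains to verify that the $\mathsf t_{n,m}$ satisfy the relations \eqref{eqn: A1, K=1} and \eqref{eqn: A2, K=1} of $\mathsf A^{(1)}$. The $p+q=2$ commutators follow from a standard $\mathfrak{sl}_2$-theoretic Jacobi computation using the seeds $[\mathsf t_{0,2},\mathsf t_{0,N}] = 0$ and $[\mathsf t_{1,1},\mathsf t_{0,N}] = N\mathsf t_{0,N}$ from \eqref{eqn: A2', K=1} together with the $\mathfrak{sl}_2$-triple relations \eqref{eqn: A1', K=1}: concretely, the span of $\{\mathsf t_{n,m} : n+m=N\}$ becomes the irreducible $\mathfrak{sl}_2$-representation of highest weight $N$ with $\mathsf t_{0,N}$ as highest weight vector. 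The cubic relation \eqref{eqn: A2, K=1} is \eqref{eqn: A4', K=1} rewritten via the definition of $\mathsf t_{n,m}$.

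The main obstacle is establishing the $p+q\le 1$ relations $[\mathsf t_{1,0},\mathsf t_{n,m}] = m\,\mathsf t_{n,m-1}$, $[\mathsf t_{0,1},\mathsf t_{n,m}] = -n\,\mathsf t_{n-1,m}$, and $[\mathsf t_{0,0},\mathsf t_{n,m}] = 0$, since \eqref{eqn: A3', K=1} supplies only the sparse base cases $N\in\{0,1,3\}$ of $[\mathsf t_{1,0},\mathsf t_{0,N}] = N\,\mathsf t_{0,N-1}$. The plan is first to derive $[\mathsf t_{1,0},\mathsf t_{2,0}] = 0 = [\mathsf t_{1,0},\mathsf t_{3,0}]$ from $\mathsf t_{1,0} = \tfrac12[\mathsf t_{2,0},\mathsf t_{0,1}]$, $\mathsf t_{3,0} = \tfrac{1}{48}\mathrm{ad}^3_{\mathsf t_{2,0}}(\mathsf t_{0,3})$, and the available $\mathsf t_{2,0}$-commutators in \eqref{eqn: A2', K=1}; this reduces each $[\mathsf t_{1,0},\mathsf t_{n,m}]$ to $\mathrm{ad}^n_{\mathsf t_{2,0}}[\mathsf t_{1,0},\mathsf t_{0,n+m}]$. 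The case $N=2$ is extracted from the already-proved relation $[\mathsf t_{0,2},\mathsf t_{1,0}] = -2\,\mathsf t_{0,1}$; for $N\ge 4$, one inducts on $N$ using the Jacobi identity applied to $[\mathsf t_{1,0},[\mathsf t_{3,0},\mathsf t_{0,N-1}]]$: expanding both sides via \eqref{eqn: A4', K=1} and the inductive hypothesis produces an equation that, combined with $\mathfrak{sl}_2$-weight considerations, pins down $[\mathsf t_{1,0},\mathsf t_{0,N}]$ uniquely. The relations for $\mathsf t_{0,1}$ and $\mathsf t_{0,0}$ then follow by the symmetric argument that swaps the roles of $\mathsf t_{2,0}$ and $\mathsf t_{0,2}$, or equivalently via the dual presentation obtained from the automorphism $\tau$ of \eqref{eqn: automorphism tau}.
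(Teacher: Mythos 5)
Your overall architecture matches the paper's (a surjection from the presented algebra, then reconstruction of all $\mathsf t_{n,m}$ as $\tfrac{m!}{2^n(n+m)!}\mathrm{ad}^n_{\mathsf t_{2,0}}(\mathsf t_{0,n+m})$ followed by verification of the relations of Lemma \ref{lem: relations A^1}), but there is a genuine gap at the step you dismiss as a ``standard $\mathfrak{sl}_2$-theoretic Jacobi computation'': the claim that the span of $\{\mathsf t_{n,m}: n+m=N\}$ is the \emph{irreducible} representation of highest weight $N$. The seeds $[\mathsf t_{0,2},\mathsf t_{0,N}]=0$ and $[\mathsf t_{1,1},\mathsf t_{0,N}]=N\mathsf t_{0,N}$ only say that $\mathsf t_{0,N}$ is a highest weight vector of weight $N$; in the abstractly presented algebra this generates, a priori, a Verma-type string under $\mathrm{ad}_{\mathsf t_{2,0}}$, and nothing in \eqref{eqn: A1', K=1}--\eqref{eqn: A3', K=1} forces $\mathrm{ad}^{N+1}_{\mathsf t_{2,0}}(\mathsf t_{0,N})=0$ for $N\ge 4$. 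Without this local nilpotency you cannot conclude $[\mathsf t_{2,0},\mathsf t_{N,0}]=0$, which is itself one of the target relations in \eqref{eqn: A1, K=1} (the case $(p,q)=(2,0)$, $m=0$), and the irreducibility you invoke is unproven.

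The paper isolates exactly this point as Lemma \ref{lem: local nilpotency, K=1}, whose proof is not formal $\mathfrak{sl}_2$ theory: one inducts on $N$, applies $\mathrm{ad}^{N}_{\mathsf t_{0,2}}$ to both sides of \eqref{eqn: A4', K=1} with $n=N-1$, uses the inductive nilpotency to annihilate the left-hand side and all lower-order terms on the right, and is left with $\mathrm{ad}^{N+2}_{\mathsf t_{2,0}}(\mathsf t_{0,N})=0$; the singular-vector argument then upgrades this to $\mathrm{ad}^{N+1}_{\mathsf t_{2,0}}(\mathsf t_{0,N})=0$. So relation \eqref{eqn: A4', K=1} is needed twice --- once here and once where you rewrite it as \eqref{eqn: A2, K=1} --- and your proof only records the second use. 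Separately, your induction for the $p+q\le 1$ relations is underspecified: the Jacobi identity applied to $[\mathsf t_{1,0},[\mathsf t_{3,0},\mathsf t_{0,N-1}]]$ only determines $\mathrm{ad}^2_{\mathsf t_{2,0}}([\mathsf t_{1,0},\mathsf t_{0,N}])$, and one must still argue that a highest weight vector of weight $N-1$ killed by $\mathrm{ad}^2_{\mathsf t_{2,0}}$ vanishes before the commutator is ``pinned down''. This can be repaired, but the paper's route is cleaner: apply $\mathrm{ad}^2_{\mathsf t_{0,2}}$ to \eqref{eqn: A4', K=1} to obtain $[\mathsf t_{1,2},\mathsf t_{0,n-1}]=(n-1)\mathsf t_{0,n}$, derive $[\mathsf t_{1,0},\mathsf t_{1,2}]=2\mathsf t_{1,1}$ from \eqref{eqn: A3', K=1}, and induct via the Jacobi identity on $[\mathsf t_{1,0},[\mathsf t_{1,2},\mathsf t_{0,n-1}]]$.
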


Define $\mathsf A^{(1)}_{\mathrm{new}}$ to be the $\mathbb C[\epsilon_1,\epsilon_2]$-algebra generated by $\{\mathsf t_{3,0},\mathsf t_{2,0},\mathsf t_{1,0},\mathsf t_{1,1},\mathsf t_{0,n}\:|\: n\in \mathbb N\}$ with relations \eqref{eqn: A1', K=1}-\eqref{eqn: A4', K=1}. Then we have an obvious surjective algebra homomorphism $\mathsf A^{(1)}_{\mathrm{new}}\to \mathsf A^{(1)}$. To show that this is an isomorphism, we need to recover the relations \eqref{eqn: A1, K=1} and \eqref{eqn: A2, K=1} in the Lemma \ref{lem: relations A^1}.

As a preliminary step, we show that the $\mathsf t_{2,0}$ acts on $\mathsf A^{(1)}_{\mathrm{new}}$ locally nilpotently, namely we have the following.
\begin{lemma}\label{lem: local nilpotency, K=1}
For all $n\in \mathbb N$, the equations $\mathrm{ad}_{\mathsf t_{2,0}}^{n+1}(\mathsf t_{0,n})=0$ holds in $\mathsf A^{(1)}_{\mathrm{new}}$.
\end{lemma}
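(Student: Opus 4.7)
The plan is to prove the lemma by strong induction on $n$. The base cases $n \le 3$ are direct verifications from \eqref{eqn: A1', K=1}--\eqref{eqn: A3', K=1}: for example $\mathrm{ad}^3_{\mathsf t_{2,0}}(\mathsf t_{0,3}) = 48\mathsf t_{3,0}$ by \eqref{eqn: A2', K=1}, and one further bracket with $\mathsf t_{2,0}$ vanishes because $[\mathsf t_{2,0}, \mathsf t_{3,0}] = 0$; the cases $n = 0, 1, 2$ are shorter chains using only \eqref{eqn: A1', K=1}--\eqref{eqn: A2', K=1}.

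For the inductive step $n \ge 4$, assume $\mathrm{ad}^{k+1}_{\mathsf t_{2,0}}(\mathsf t_{0,k}) = 0$ for all $k < n$. First, apply $\mathrm{ad}^{n-1}_{\mathsf t_{2,0}}$ to \eqref{eqn: A4', K=1} with parameter $n-1$ (valid since $n-1 \ge 3$). Using $[\mathsf t_{2,0}, \mathsf t_{3,0}] = 0$ from \eqref{eqn: A2', K=1}, the left-hand side becomes $[\mathsf t_{3,0}, \mathrm{ad}^{n-1}_{\mathsf t_{2,0}}(\mathsf t_{0,n-1})]$. On the right-hand side, the linear term $\sigma_2\, \mathsf t_{0,n-4}$ is killed by the induction hypothesis (since $\mathrm{ad}^{n-3}_{\mathsf t_{2,0}}(\mathsf t_{0,n-4}) = 0$ and $n-1 \ge n-3$), while each quadratic summand $\mathsf t_{0,m}\mathsf t_{0,n-4-m}$ vanishes by a Leibniz plus pigeonhole estimate: a nonzero contribution would require distributing at most $m$ derivations on the first factor and at most $n-4-m$ on the second, which total only $n-4 < n-1$. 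This yields
\begin{equation*}
\mathrm{ad}^{n+1}_{\mathsf t_{2,0}}(\mathsf t_{0,n}) = \frac{4n}{3}\bigl[\mathsf t_{3,0},\, \mathrm{ad}^{n-1}_{\mathsf t_{2,0}}(\mathsf t_{0,n-1})\bigr].
\end{equation*}

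Applying one further $\mathrm{ad}_{\mathsf t_{2,0}}$ to this identity gives $\mathrm{ad}^{n+2}_{\mathsf t_{2,0}}(\mathsf t_{0,n}) = \frac{4n}{3}[\mathsf t_{3,0},\, \mathrm{ad}^n_{\mathsf t_{2,0}}(\mathsf t_{0,n-1})] = 0$ by the inductive hypothesis applied to $n-1$. To sharpen this by one notch to the claimed $\mathrm{ad}^{n+1}_{\mathsf t_{2,0}}(\mathsf t_{0,n}) = 0$, I would invoke the $\mathfrak{sl}_2$-triple of \eqref{eqn: A1', K=1}: the relations $[\mathsf t_{0,2}, \mathsf t_{0,n}] = 0$ and $[\mathsf t_{1,1}, \mathsf t_{0,n}] = n\,\mathsf t_{0,n}$ from \eqref{eqn: A2', K=1} make $\mathsf t_{0,n}$ a highest weight vector of weight $n$, and the standard $\mathfrak{sl}_2$-module identity (which only uses $[e,f]\propto h$, $[h,e]=2e$, $[h,f]=-2f$ and the highest-weight conditions) delivers $[\mathsf t_{0,2},\, \mathrm{ad}^{n+1}_{\mathsf t_{2,0}}(\mathsf t_{0,n})] = 0$. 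Setting $w := \mathrm{ad}^{n+1}_{\mathsf t_{2,0}}(\mathsf t_{0,n})$, both $[\mathsf t_{2,0}, w] = 0$ and $[\mathsf t_{0,2}, w] = 0$; Jacobi combined with $[\mathsf t_{2,0}, \mathsf t_{0,2}] = 4\mathsf t_{1,1}$ then forces $[\mathsf t_{1,1}, w] = 0$. On the other hand, $w$ has $\mathsf t_{1,1}$-weight $n - 2(n+1) = -(n+2) \ne 0$, so $w = 0$.

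The conceptual pivot, and the main obstacle to a purely computational proof, is the mismatch in nilpotency degree: direct bracket manipulation from \eqref{eqn: A4', K=1} naturally delivers only the weaker statement $\mathrm{ad}^{n+2}_{\mathsf t_{2,0}}(\mathsf t_{0,n}) = 0$. The Jacobi identity together with the $\mathfrak{sl}_2$ highest-weight vanishing is exactly what upgrades this from degree $n+2$ to the sharp degree $n+1$; without this extra input the induction does not close.
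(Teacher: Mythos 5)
Your proof is correct and follows essentially the same route as the paper: both derive $\mathrm{ad}^{n+2}_{\mathsf t_{2,0}}(\mathsf t_{0,n})=0$ by applying powers of $\mathrm{ad}_{\mathsf t_{2,0}}$ to \eqref{eqn: A4', K=1} with parameter $n-1$ (your version merely pauses one step earlier to record the intermediate identity before the last bracket), killing the lower-order terms via the induction hypothesis and a Leibniz count, and then upgrade to $\mathrm{ad}^{n+1}_{\mathsf t_{2,0}}(\mathsf t_{0,n})=0$ via the $\mathfrak{sl}_2$-triple $\{\mathsf t_{2,0},\mathsf t_{1,1},\mathsf t_{0,2}\}$ acting on the highest-weight vector $\mathsf t_{0,n}$. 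Your explicit Jacobi/weight argument in the last step is just an unpacking of the paper's appeal to irreducibility of the resulting finite-dimensional $\mathfrak{sl}_2$-module.
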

\begin{proof}
We prove it by induction on $n$. For $n\le 3$, this equation is implied by \eqref{eqn: A1', K=1} and \eqref{eqn: A2', K=1}, so we assume that $n>3$ and that the equation $\mathrm{ad}_{\mathsf t_{2,0}}^{m+1}(\mathsf t_{0,m})=0$ holds for all $m<n$.
Then we have 
\begin{align*}
    &[\mathsf t_{3,0},\mathrm{ad}^{n}_{\mathsf t_{0,2}}(\mathsf t_{0,n-1})]=\mathrm{ad}^{n}_{\mathsf t_{0,2}}([\mathsf t_{3,0},\mathsf t_{0,n-1}])=\frac{3}{4n}\mathrm{ad}^{n+2}_{\mathsf t_{2,0}} (\mathsf t_{0,n})\\
&-\frac{(n-1)(n-2)(n-3)}{4}\sigma_2\mathrm{ad}^{n}_{\mathsf t_{2,0}} (\mathsf t_{0,n-4})+\frac{3\sigma_3}{2}\sum_{m=0}^{n-4}(m+1)(n-3-m)\mathrm{ad}^{n}_{\mathsf t_{2,0}} (\mathsf t_{0,m}\mathsf t_{0,n-4-m}),
\end{align*}
and by induction assumption, the left-hand-side of the above equation is zero and the right-hand-side of the above equation equals to $\frac{3}{4n}\mathrm{ad}^{n+2}_{\mathsf t_{2,0}} (\mathsf t_{0,n})$, this shows that $\mathrm{ad}^{n+2}_{\mathsf t_{2,0}} (\mathsf t_{0,n})=0$. On the other hand, \eqref{eqn: A1', K=1} implies that $\{\mathsf t_{2,0},\mathsf t_{1,1},\mathsf t_{0,2}\}$ is an $\mathfrak{sl}_2$-triple, and \eqref{eqn: A2', K=1} implies that $\mathsf t_{0,n}$ is a highest weight vector of $\mathfrak{sl}_2$ with highest weight $n$. The nilpotency $\mathrm{ad}^{n+2}_{\mathsf t_{2,0}} (\mathsf t_{0,n})=0$ implies that the $\mathfrak{sl}_2$-action on $\mathsf t_{0,n}$ generates an irreducible representation with highest weight $n$, thus $\mathrm{ad}_{\mathsf t_{2,0}}^{n+1}(\mathsf t_{0,n})=0$.
\end{proof}

Define $\mathsf t_{n,m}:=\frac{m!}{2^n(n+m)!}\mathrm{ad}_{\mathsf t_{2,0}}^{n}(\mathsf t_{0,n+m})$, then Lemma \ref{lem: local nilpotency, K=1} implies that for $p+q=2$, the equations
\begin{equation}\label{eqn: A1, K=1, repeated}
\begin{split}
   [\mathsf t_{p,q},\mathsf t_{n,m}]&=(mp-nq)\mathsf t_{p+n-1,q+m-1}
\end{split}
\end{equation}
hold in $\mathsf A^{(1)}_{\mathrm{new}}$.

\begin{proof}[Proof of Proposition \ref{prop: simpler definition, K=1}]
It suffices to show that \eqref{eqn: A1, K=1, repeated} holds for $p+q\le 1$ as well. Let us first verify the equations $[\mathsf t_{1,0},\mathsf t_{0,n}]=n\mathsf t_{0,n-1}$, which is automatic for $n\le 3$ by \eqref{eqn: A2', K=1} and \eqref{eqn: A3', K=1}. For $n>3$, we proceed by induction on $n$, assume that $[\mathsf t_{1,0},\mathsf t_{0,m}]=m\mathsf t_{0,m-1}$ holds for all $m<n$. Applying $\mathrm{ad}_{\mathsf t_{0,2}}^{2}$ to both sides of \eqref{eqn: A4', K=1}, and we get 
\begin{align*}
[\mathsf t_{1,2},\mathsf t_{0,n-1}]=(n-1)\mathsf t_{0,n}.
\end{align*}
On the other hand, applying $\mathrm{ad}_{\mathsf t_{2,0}}$ to the equation $[\mathsf t_{1,0},\mathsf t_{0,3}]=3\mathsf t_{0,2}$ and we get
\begin{align*}
[\mathsf t_{1,0},\mathsf t_{1,2}]=2\mathsf t_{1,1}.
\end{align*}
Thus we have
\begin{align*}
[\mathsf t_{1,0},\mathsf t_{0,n}]=\frac{1}{n-1}[\mathsf t_{1,0},[\mathsf t_{1,2},\mathsf t_{0,n-1}]]=\frac{2}{n-1}[\mathsf t_{1,1},\mathsf t_{0,n-1}]+[\mathsf t_{1,2},\mathsf t_{0,n-2}]=n\mathsf t_{0,n-1}.
\end{align*}
This proves the induction step. 

Next, applying adjoint actions of $\mathsf t_{2,0}$ on both sides of the equation $[\mathsf t_{1,0},\mathsf t_{0,n}]=n\mathsf t_{0,n-1}$, and we see that \eqref{eqn: A1, K=1, repeated} holds for $(p,q)=(1,0)$. Then applying adjoint actions of $\mathsf t_{0,2}$ on both sides of the equation \eqref{eqn: A1, K=1, repeated} with $(p,q)=(1,0)$, and we see that \eqref{eqn: A1, K=1, repeated} holds for $(p,q)=(0,1)$. Finally
\begin{align*}
[\mathsf t_{0,0},\mathsf t_{n,m}]=[[\mathsf t_{1,0},\mathsf t_{0,1}],\mathsf t_{n,m}]=[[\mathsf t_{1,0},[\mathsf t_{0,1},\mathsf t_{n,m}]]-[[\mathsf t_{0,1},[\mathsf t_{1,0},\mathsf t_{n,m}]]=0.
\end{align*}
We have verified all relations in \eqref{eqn: A1, K=1}, therefore the surjective algebra homomorphism $\mathsf A^{(1)}_{\mathrm{new}}\to \mathsf A^{(1)}$ admits a section $\mathsf A^{(1)}\to \mathsf A^{(1)}_{\mathrm{new}}$ and it is obvious from the construction that these two maps are inverse to each other.
\end{proof}

\subsection{A simpler definition of \texorpdfstring{$\mathsf A^{(K)}$}{Ak}, \texorpdfstring{$K>1$}{K>1}}
In this subsection, we continue using the convention $\epsilon_3:=-K\epsilon_1-\epsilon_2$.
\begin{theorem}\label{thm: simpler definition, K>1}
If $K>1$, then $\mathsf A^{(K)}[\epsilon_2^{-1},\epsilon_3^{-1}]$ is generated over $\mathbb C[\epsilon_1,\epsilon_2^{\pm},\epsilon_3^{\pm}]$ by $\{\mathsf T_{1,0}(X),\mathsf T_{0,n}(X)\:|\: X\in \mathfrak{gl}_K, n\in \mathbb N\}$ and $\{\mathsf t_{2,0},\mathsf t_{1,1},\mathsf t_{0,2}\}$ with relations
\begin{equation}\label{eqn: A0'}
\begin{split}
     \mathsf T_{0,2}(1)=&\epsilon_2 \mathsf t_{0,2},\\
    \mathsf T_{n,m}(aX+bY)=a\mathsf T_{n,m}(X)+b\mathsf T_{n,m}(Y),&\forall (a,b)\in \mathbb C^2, \forall (n,m)\in (1,0)\sqcup (0,\mathbb N),
\end{split}\tag{A0'}
\end{equation}
\begin{equation}\label{eqn: A1'}
    [\mathsf T_{0,m}(X),\mathsf T_{0,n}(Y)]=\mathsf T_{0,m+n}([X,Y]),\tag{A1'}
\end{equation}
and for $(p,q)=(2,0)$ or $(0,2)$, and for all $(m,n)\in \mathbb N^2$ such that $m+n=2$, and for all $(r,s)\in \mathbb N^2$ such that $r+s\le 1$,
\begin{equation}\label{eqn: A2'}
\begin{split}
   [\mathsf t_{p,q},\mathsf t_{n,m}]&=(mp-nq)\mathsf t_{p+n-1,q+m-1},\\
    [\mathsf t_{p,q},\mathsf T_{r,s}(X)]&=(ps-rq)T_{p+r-1,q+s-1}(X).
\end{split}\tag{A2'}
\end{equation}
Use the notation $\mathsf T_{u,r,t,s}(X\otimes Y):=\mathsf T_{u,r}(X)\mathsf T_{t,s}(Y)$ for $X,Y\in \mathfrak{gl}_K$, and $\Omega:=E^a_b\otimes E^b_a\in \mathfrak{gl}_K^{\otimes 2}$, then
\begin{equation}\label{eqn: A3'}
\begin{split}
        [\mathsf T_{1,0}(X),\mathsf T_{0,n}(Y)]&=\frac{1}{2n+2}[\mathsf t_{2,0},\mathsf T_{0,n+1}([X,Y])]-\frac{\epsilon_3 n}{2}\mathsf T_{0,n-1}(\{X,Y\})-n\epsilon_1\mathrm{tr}(Y) \mathsf T_{0,n-1}(X)\\
&+\epsilon_1\sum_{m=0}^{n-1}\frac{m+1}{n+1}\mathsf T_{0,m,0,n-1-m}(([X,Y]\otimes 1)\cdot \Omega)\\
&+\epsilon_1 \sum_{m=0}^{n-1}\mathsf T_{0,m,0,n-1-m}((X\otimes Y-XY\otimes 1)\cdot \Omega)
    \end{split}\tag{A3'}
\end{equation}
\end{theorem}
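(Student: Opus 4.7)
Let $\mathsf A^{(K)}_{\mathrm{new}}$ denote the $\mathbb C[\epsilon_1,\epsilon_2^\pm,\epsilon_3^\pm]$-algebra presented by the generators and relations in the statement. Every relation of the new presentation is a consequence of \eqref{eqn: A0}--\eqref{eqn: A4} after inverting $\epsilon_2$ and $\epsilon_3$: \eqref{eqn: A0'} and \eqref{eqn: A2'} are special cases of \eqref{eqn: A0} and \eqref{eqn: A2}; \eqref{eqn: A1'} holds in $\mathsf A^{(K)}[\epsilon_2^{-1}]$ by the Dunkl realization of Lemma \ref{lem: map rho_N} together with the triviality of $\bigcap_N\ker\rho_N$ (Corollary \ref{cor: truncation of D}); and \eqref{eqn: A3'} is \eqref{eqn: A3} rewritten via the instance $\mathsf T_{1,n}([X,Y])=\frac{1}{2(n+1)}[\mathsf t_{2,0},\mathsf T_{0,n+1}([X,Y])]$ of \eqref{eqn: A2}. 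This produces a canonical surjection $\pi:\mathsf A^{(K)}_{\mathrm{new}}\twoheadrightarrow \mathsf A^{(K)}[\epsilon_2^{-1},\epsilon_3^{-1}]$; surjectivity is clear since each missing generator $\mathsf T_{n,m}(X)$ is a scalar multiple of $\mathrm{ad}_{\mathsf t_{2,0}}^n \mathsf T_{0,n+m}(X)$ and $\mathsf t_{n,m}=\epsilon_2^{-1}\mathsf T_{n,m}(1)$. The plan is to construct a section.

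The first step is to check from \eqref{eqn: A2'} that $\{\mathsf t_{2,0},\mathsf t_{1,1},\mathsf t_{0,2}\}$ is a standard $\mathfrak{sl}_2$-triple and, using $\mathsf t_{0,2}=\epsilon_2^{-1}\mathsf T_{0,2}(1)$ from \eqref{eqn: A0'} together with \eqref{eqn: A1'}, that $[\mathsf t_{0,2},\mathsf T_{0,n}(X)]=0$ for all $n,X$. Combined with the weight relation $[\mathsf t_{1,1},\mathsf T_{0,n}(X)]=n\mathsf T_{0,n}(X)$ (derived by induction on $n$ from the $\mathfrak{sl}_2$-relations and the initial cases in \eqref{eqn: A2'}), each $\mathsf T_{0,n}(X)$ is an $\mathfrak{sl}_2$ highest-weight vector of weight $n$.

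The technical heart is the local nilpotency statement $\mathrm{ad}_{\mathsf t_{2,0}}^{n+1}\bigl(\mathsf T_{0,n}(X)\bigr)=0$ in $\mathsf A^{(K)}_{\mathrm{new}}$ for every $X\in\mathfrak{gl}_K$ and $n\in\mathbb N$; this is the analogue of Lemma \ref{lem: local nilpotency, K=1}, with relation \eqref{eqn: A3'} now playing the role of \eqref{eqn: A4', K=1}. One argues by induction on $n$: the cases $n\le 2$ follow directly from \eqref{eqn: A1'} and \eqref{eqn: A2'}. For the inductive step, apply $\mathrm{ad}_{\mathsf t_{2,0}}^{n+2}$ to both sides of \eqref{eqn: A3'} evaluated at $(X,Y)$ with $[X,Y]\neq 0$; using the induction hypothesis to kill all lower-weight right-hand-side terms and the $\mathfrak{sl}_2$-module structure to identify the surviving contribution as $\mathrm{ad}_{\mathsf t_{2,0}}^{n+2}\mathsf T_{0,n+1}([X,Y])$, one obtains the desired vanishing. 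Because $K>1$, commutators span $\mathfrak{sl}_K$, so this handles every $X\in\mathfrak{sl}_K$; the remaining identity component $X=1$ is recovered by extracting the trace sector from \eqref{eqn: A3'} applied to $X=Y$ with $\mathrm{tr}(XY)\ne 0$, and this is where inverting $\epsilon_3$ becomes essential, since the term $\epsilon_3 n \mathsf T_{0,n-1}(\{X,Y\})/2$ is the unique lever producing scalar-matrix output.

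Granted local nilpotency, one sets $\mathsf T_{n,m}(X):=\frac{m!}{2^n(n+m)!}\mathrm{ad}_{\mathsf t_{2,0}}^n\mathsf T_{0,n+m}(X)$ and $\mathsf t_{n,m}:=\epsilon_2^{-1}\mathsf T_{n,m}(1)$; the $\mathfrak{sl}_2$-representation theory of the $(n+m+1)$-dimensional modules generated by the $\mathsf T_{0,n+m}(X)$ then automatically yields the cases $(p,q)\in\{(2,0),(1,1),(0,2)\}$ of \eqref{eqn: A2}. The remaining checks are: \eqref{eqn: A0} is built into the definition; \eqref{eqn: A1} reduces to \eqref{eqn: A1'} via $\mathsf t_{0,n}=\epsilon_2^{-1}\mathsf T_{0,n}(1)$; \eqref{eqn: A2} for $p+q\le 1$ follows, after $\mathfrak{sl}_2$-conjugation, from the known $p+q=2$ cases; \eqref{eqn: A3} is \eqref{eqn: A3'} under the defining identity for $\mathsf T_{1,n}([X,Y])$; and \eqref{eqn: A4} is obtained by iterating $\mathrm{ad}_{\mathsf t_{2,0}}^3$ on an appropriate specialization of \eqref{eqn: A3'}, comparing both sides weight-by-weight and collecting the quadratic corrections, again using $\epsilon_3^{-1}$ to isolate the scalar pieces. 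Throughout, the principal obstacle is Step~3 (local nilpotency); once it is established, the remaining verifications are a bookkeeping exercise in $\mathfrak{sl}_2$-module combinatorics following the blueprint of Proposition \ref{prop: simpler definition, K=1}.
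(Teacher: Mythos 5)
Your setup (the surjection $\mathsf A^{(K)}_{\mathrm{new}}\twoheadrightarrow \mathsf A^{(K)}[\epsilon_2^{-1},\epsilon_3^{-1}]$) and your local-nilpotency step are essentially the paper's: Lemma \ref{lem: local nilpotency} handles $X\in\mathfrak{sl}_K$ directly from \eqref{eqn: A1'}--\eqref{eqn: A2'} and then extracts the trace component by specializing \eqref{eqn: A3'} at $X=Y=H_1=E^1_1-E^2_2$, which is exactly where $\epsilon_3^{-1}$ enters, as you say. The divergence is in how injectivity of the surjection is established, and here your plan has a genuine gap. You propose to build a section by verifying \eqref{eqn: A0}--\eqref{eqn: A4} inside $\mathsf A^{(K)}_{\mathrm{new}}$, "following the blueprint of Proposition \ref{prop: simpler definition, K=1}." But for $K=1$ the analogue of \eqref{eqn: A4} is \emph{assumed} as relation \eqref{eqn: A4', K=1}, whereas here you must \emph{derive} \eqref{eqn: A4} — with its exact coefficients $\tfrac{n(n-1)(n-2)}{4}(\epsilon_1^2-\epsilon_2\epsilon_3)$ and $\tfrac{3\epsilon_1}{2}(m+1)(n-2-m)$ on the quadratic terms — from \eqref{eqn: A0'}--\eqref{eqn: A3'}. "Iterating $\mathrm{ad}_{\mathsf t_{2,0}}^3$ on a specialization of \eqref{eqn: A3'}" requires knowing commutators such as $[\mathsf t_{1,2},\mathsf t_{0,n}]$ and $[\mathsf t_{2,1},\mathsf t_{1,n}]$ exactly in $\mathsf A^{(K)}_{\mathrm{new}}$; these are not consequences of \eqref{eqn: A2'} (which only covers $p+q\le 2$ against $r+s\le 1$), and any attempt to bootstrap them via Jacobi identities circles back to commutators of the same complexity. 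Compare equations \eqref{eqn_[t[2,1],t[0,n]]}--\eqref{eqn_[t[2,1],t[1,n]]}: even with the explicit Calogero realization in hand these identities take real work, and you would need them abstractly. The paper avoids this entirely: Proposition \ref{prop: filtration_new} constructs a filtration $G_\bullet$ on $\mathsf A^{(K)}_{\mathrm{new}}$ and shows ordered monomials $\mathfrak B(\mathsf A^{(K)})$ span it, controlling commutators only up to filtration degree rather than computing them exactly; then the composite $\mathbb C[\epsilon_1,\epsilon_2^{\pm},\epsilon_3^{\pm}]\cdot\mathfrak B(\mathsf A^{(K)})\to\mathsf A^{(K)}_{\mathrm{new}}\to\mathsf A^{(K)}[\epsilon_2^{-1},\epsilon_3^{-1}]$ is an isomorphism by the PBW theorem (Theorem \ref{thm: PBW}), which forces injectivity. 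No verification of \eqref{eqn: A4} is ever needed.

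A second, smaller gap: your claim that \eqref{eqn: A2} for $p+q\le 1$ "follows, after $\mathfrak{sl}_2$-conjugation, from the known $p+q=2$ cases" does not work. The $\mathfrak{sl}_2$-triple preserves the total degree $p+q$, and the obvious attempt — writing $\mathsf t_{1,0}=\tfrac12[\mathsf t_{2,0},\mathsf t_{0,1}]$ and expanding $[\mathsf t_{1,0},\mathsf T_{0,n}(Y)]$ by Jacobi — returns a tautology. The paper's Lemma \ref{lemma: implies relation A2} instead proves centrality of $\mathsf t_{0,0}$ and then specializes \eqref{eqn: A3'} at $X=1$, expanding $\sum_m\mathsf T_{0,m,0,n-1-m}((1\otimes Y-Y\otimes 1)\cdot\Omega)$ via \eqref{eqn: A1'} so that the $K$-dependent terms conspire with $\epsilon_3=-K\epsilon_1-\epsilon_2$ to yield $[\mathsf t_{1,0},\mathsf T_{0,n}(Y)]=n\mathsf T_{0,n-1}(Y)$. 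You would need this computation (or an equivalent one) in any version of the argument.
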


Define $\mathsf A^{(K)}_{\mathrm{new}}$ to be the $\mathbb C[\epsilon_1,\epsilon_2^{\pm},\epsilon_3^{\pm}]$-algebra generated by $\{\mathsf T_{1,0}(X),\mathsf T_{0,n}(X)\:|\: X\in \mathfrak{gl}_K, n\in \mathbb N\}$ and $\{\mathsf t_{2,0},\mathsf t_{1,1},\mathsf t_{0,2}\}$ with relations \eqref{eqn: A0'}-\eqref{eqn: A3'}. Then we have an obvious surjective algebra homomorphism $\mathsf A^{(K)}_{\mathrm{new}}\to \mathsf A^{(K)}[\epsilon_2^{-1},\epsilon_3^{-1}]$ and we will show that this is an isomorphism. As a preliminary step, we show that the $\mathsf t_{2,0}$ and $\mathsf t_{0,2}$ act on $\mathsf A^{(K)}_{\mathrm{new}}$ locally nilpotently, namely we have the following.
\begin{lemma}\label{lem: local nilpotency}
For all $n\in \mathbb N$ and for all $X\in \mathfrak{gl}_K$, 
\begin{align}
    [\mathsf t_{0,2},\mathsf T_{0,n}(X)]=0,\quad [\mathsf t_{1,1},\mathsf T_{0,n}(X)]=n\mathsf T_{0,n}(X),\quad\mathrm{ad}_{\mathsf t_{2,0}}^{n+1}(\mathsf T_{0,n}(X))=0.
\end{align}
\end{lemma}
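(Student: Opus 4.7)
The plan is a joint induction on $n$, with the first claim handled largely independently of the other two (which are proved together). The base cases $n \leq 2$ follow directly from (A0'), (A1'), (A2') and the $\mathfrak{sl}_2$-relations among $\{\mathsf t_{2,0}, \mathsf t_{1,1}, \mathsf t_{0,2}\}$ contained in (A2'). The inductive step when $X \in \mathfrak{sl}_K$ is essentially routine: since $K > 1$ the Lie algebra $\mathfrak{sl}_K$ is perfect, so one can write $X = [A, B]$ with $A, B \in \mathfrak{sl}_K$; then (A1') gives $\mathsf T_{0,n}(X) = [\mathsf T_{0,a}(A), \mathsf T_{0,b}(B)]$ for any $a + b = n$ with $a, b \geq 1$. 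The first two claims for $\mathsf T_{0,n}(X)$ follow from the Jacobi identity and the inductive hypothesis; for the third, the Leibniz expansion of $\mathrm{ad}^{n+1}_{\mathsf t_{2,0}}[\mathsf T_{0,a}(A), \mathsf T_{0,b}(B)]$ vanishes term by term, since by pigeonhole every summand contains a factor of the form $\mathrm{ad}^k_{\mathsf t_{2,0}} \mathsf T_{0,a}(A)$ with $k \geq a+1$ or $\mathrm{ad}^{n+1-k}_{\mathsf t_{2,0}} \mathsf T_{0,b}(B)$ with $n+1-k \geq b+1$, both vanishing by induction.

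For the first claim in the scalar case $X = 1$, the argument is immediate: (A0') gives $\mathsf t_{0,2} = \epsilon_2^{-1} \mathsf T_{0,2}(1)$, hence $[\mathsf t_{0,2}, \mathsf T_{0,n}(1)] = \epsilon_2^{-1} [\mathsf T_{0,2}(1), \mathsf T_{0,n}(1)] = \epsilon_2^{-1} \mathsf T_{0,n+2}([1,1]) = 0$ by (A1'). The main obstacle lies in proving the second and third claims for $X = 1$ with $n \geq 3$, since (A1') does not express $\mathsf T_{0,n}(1)$ as a Lie bracket of lower-degree elements. The key tool is the identity
\[
[\mathsf T_{1,0}(Y), \mathsf T_{0,n}(1)] = n\epsilon_2 \mathsf T_{0,n-1}(Y), \qquad Y \in \mathfrak{gl}_K,
\]
derived by specializing the second argument in (A3') to $1$ and simplifying using $[Y,1]=0$, $\{Y,1\}=2Y$, $\mathrm{tr}(1)=K$, the cancellation of the $\Omega$-tensor contributions, and $\epsilon_3 + K\epsilon_1 = -\epsilon_2$.

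Applying $\mathrm{ad}^{n+1}_{\mathsf t_{2,0}}$ to this identity and using $[\mathsf t_{2,0}, \mathsf T_{1,0}(Y)] = 0$ from (A2'), the Leibniz rule collapses to $[\mathsf T_{1,0}(Y), V_n] = n\epsilon_2\, \mathrm{ad}^{n+1}_{\mathsf t_{2,0}} \mathsf T_{0,n-1}(Y) = 0$ by the inductive hypothesis for the third claim at level $n - 1$, where $V_n := \mathrm{ad}^{n+1}_{\mathsf t_{2,0}} \mathsf T_{0,n}(1)$; analogously, applying $\mathrm{ad}_{\mathsf t_{1,1}}$ shows that the weight defect $W_n' := [\mathsf t_{1,1}, \mathsf T_{0,n}(1)] - n\mathsf T_{0,n}(1)$ commutes with every $\mathsf T_{1,0}(Y)$. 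The remaining step, which I expect to be the hardest part of the argument, is to conclude $V_n = 0$ and $W_n' = 0$ from these centralizer conditions together with the first claim. For this one exploits the $\mathfrak{sl}_2$-representation theory of the adjoint action by $\{\mathsf t_{2,0}, \mathsf t_{1,1}, \mathsf t_{0,2}\}$: the defining relations (A0')--(A3') are homogeneous under the natural $\mathbb{Z}$-grading that assigns $\mathsf T_{0,n}(X)$ degree $n$, $\mathsf T_{1,0}(X)$ degree $-1$, $\mathsf t_{2,0}$ degree $-2$, and $\mathsf t_{0,2}$ degree $2$, so $\mathsf A^{(K)}_{\mathrm{new}}$ inherits this grading; combining the highest-weight property of $\mathsf T_{0,n}(1)$ coming from the first claim, the weight data from the grading, and the centralizer constraints will pin down the $\mathfrak{sl}_2$-isotype of the submodule generated by $\mathsf T_{0,n}(1)$ and thereby force $V_n = W_n' = 0$.
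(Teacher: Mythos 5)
Most of your proposal is sound and tracks the paper: the base cases, the reduction for $X\in\mathfrak{sl}_K$ via $X=[A,B]$ and the Leibniz/pigeonhole argument, the observation that $[\mathsf t_{0,2},\mathsf T_{0,n}(1)]=\epsilon_2^{-1}\mathsf T_{0,n+2}([1,1])=0$, and the identity $[\mathsf T_{1,0}(Y),\mathsf T_{0,n}(1)]=n\epsilon_2\mathsf T_{0,n-1}(Y)$ (your simplification of \eqref{eqn: A3'} at $Y=1$ is correct, using $\epsilon_3+K\epsilon_1=-\epsilon_2$; the paper derives the mirror identity in Lemma \ref{lemma: implies relation A2}). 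The problem is the last step, which you defer: from $[\mathsf T_{1,0}(Y),V_n]=[\mathsf T_{1,0}(Y),W_n']=0$ you cannot conclude $V_n=W_n'=0$. The centralizer of $\{\mathsf T_{1,0}(Y)\}$ is not small in the relevant degrees: by \eqref{eqn: A2'} the element $\mathsf t_{2,0}$ itself commutes with every $\mathsf T_{1,0}(Y)$, so its powers give (presumably nonzero) centralizer elements in every even negative degree, including the degree $-n-2$ of $V_n$ when $n$ is even. Moreover the $\mathfrak{sl}_2$-argument you invoke needs as \emph{input} exactly the two facts you are trying to prove: the weight of $\mathsf T_{0,n}(1)$ under $\mathrm{ad}_{\mathsf t_{1,1}}$ (i.e.\ $W_n'=0$) and the vanishing of \emph{some} power of $\mathrm{ad}_{\mathsf t_{2,0}}$ on it; without the latter, the cyclic $\mathfrak{sl}_2$-module generated by a singular vector can be a Verma module and $\mathrm{ad}_{\mathsf t_{2,0}}^{n+1}$ need not vanish. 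A telling symptom is that your argument never uses the invertibility of $\epsilon_3$, which the statement's ambient algebra $\mathsf A^{(K)}_{\mathrm{new}}$ builds in and which the paper's proof genuinely needs.

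The missing idea is to get an equation in which $\mathsf T_{0,n}(1)$ appears as an isolated, solvable term rather than only inside a commutator. The paper sets $X=Y=H_1:=E^1_1-E^2_2$ in \eqref{eqn: A3'} and computes
$(H_1\otimes H_1-(H_1)^2\otimes 1)\cdot\Omega=-(E^1_2\otimes E^2_1+E^2_1\otimes E^1_2)-\sum_{i=1}^2\sum_{j\neq i}E^i_j\otimes E^j_i$,
which involves only off-diagonal (hence traceless) matrices. Therefore every term of the resulting identity other than $-\epsilon_3 n\,\mathsf T_{0,n}((H_1)^2)$ — namely the bracket $[\mathsf T_{1,0}(H_1),\mathsf T_{0,n+1}(H_1)]$ and the quadratic expressions in $\mathsf T_{0,m}(Z)$ with $Z\in\mathfrak{sl}_K$ — is already known to satisfy the three conditions (with $\mathrm{ad}_{\mathsf t_{2,0}}^{n+2}$ in place of $\mathrm{ad}_{\mathsf t_{2,0}}^{n+1}$, by the Leibniz rule). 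Inverting $\epsilon_3$ then transfers these conditions to $\mathsf T_{0,n}((H_1)^2)=\mathsf T_{0,n}(E^1_1+E^2_2)$ and hence to $\mathsf T_{0,n}(1)$, yielding the correct $\mathsf t_{1,1}$-weight and $\mathrm{ad}_{\mathsf t_{2,0}}^{n+2}(\mathsf T_{0,n}(1))=0$; only at that point does finite-dimensional $\mathfrak{sl}_2$-representation theory (highest weight $n$ plus finite-dimensionality forces irreducibility) improve the exponent from $n+2$ to $n+1$. You should replace your final paragraph with this computation.
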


\begin{proof}
By \eqref{eqn: A1'} and \eqref{eqn: A2'}, it is straightforward to see that for all $n\in \mathbb N$ and for all $X\in \mathfrak{sl}_K$, $[\mathsf t_{0,2},\mathsf T_{0,n}(X)]=0$ and $[\mathsf t_{1,1},\mathsf T_{0,n}(X)]=n\mathsf T_{0,n}(X)$ and $\mathrm{ad}_{\mathsf t_{2,0}}^{n+1}(\mathsf T_{0,n}(X))=0$. It remains to show that $[\mathsf t_{0,2},\mathsf T_{0,n}(1)]=0$ and $[\mathsf t_{1,1},\mathsf T_{0,n}(1)]=n\mathsf T_{0,n}(1)$ and $\mathrm{ad}_{\mathsf t_{2,0}}^{n+1}(\mathsf T_{0,n}(1))=0$ for all $n\in \mathbb N_{>1}$ (since these are known for $n=0,1$).

Set $X=Y=H_1:=E^1_1-E^2_2$ in \eqref{eqn: A3'}, and we get
\begin{align*}
    [\mathsf T_{1,0}(H_1),\mathsf T_{0,n+1}(H_1)]&=-\epsilon_3 n\mathsf T_{0,n}((H_1)^2)+\epsilon_1 \sum_{m=0}^{n}\mathsf T_{0,m,0,n-m}((H_1\otimes H_1-(H_1)^2\otimes 1)\cdot \Omega).
\end{align*}
Simple computation shows that
\begin{align*}
    (H_1\otimes H_1-(H_1)^2\otimes 1)\cdot \Omega=-(E^1_2\otimes E^2_1+E^2_1\otimes E^1_2)-\sum_{i=1}^2\sum_{j\neq i}^KE^i_j\otimes E^j_i,
\end{align*}
therefore $\mathsf T_{m,n-m}((H_1\otimes H_1-(H_1)^2\otimes 1)\cdot \Omega)$ satisfies
\begin{align}\label{eqn: local nilpotency}
    [\mathsf t_{0,2},\mathcal O]=0,\quad [\mathsf t_{1,1},\mathcal O]=n\mathcal O,\quad\mathrm{ad}_{\mathsf t_{2,0}}^{n+2}(\mathcal O)=0,
\end{align}
where $\mathcal O=\mathsf T_{m,n-m}((H_1\otimes H_1-(H_1)^2\otimes 1)\cdot \Omega)$. On the other hand, \eqref{eqn: local nilpotency} also holds for $\mathcal O=[\mathsf T_{1,0}(H_1),\mathsf T_{0,n+1}(H_1)]$, thus \eqref{eqn: local nilpotency} holds for $\mathcal O=\mathsf T_{0,n}((H_1)^2)$, since $\epsilon_3$ is invertible. This implies that \eqref{eqn: local nilpotency} holds for $\mathcal O=\mathsf T_{0,n}(1)$. Finally, it follows from \eqref{eqn: local nilpotency} for $\mathcal O=\mathsf T_{0,n}(1)$ that the action of the $\mathfrak{sl}_2$-triple $\{\mathsf t_{2,0},\mathsf t_{1,1},\mathsf t_{0,2}\}$ on $\mathsf T_{0,n}(1)$ generates a finite dimensional cyclic representation (therefore irreducible) with $\mathsf T_{0,n}(1)$ being the highest weight vector of weight $n$, hence $\mathrm{ad}_{\mathsf t_{2,0}}^{n+1}(\mathsf T_{0,n}(1))=0$.
\end{proof}

Define $\mathsf T_{n,m}(X):=\frac{m!}{2^n(n+m)!}\mathrm{ad}_{\mathsf t_{2,0}}^{n}(\mathsf T_{0,n+m}(X))$ and define $\mathsf t_{n,m}:=\frac{1}{\epsilon_2}\mathsf T_{n,m}(1)$, then Lemma \ref{lem: local nilpotency} implies that 
for $(p,q)\in \mathbb N^2$ such that $p+q=2$, and for all $(m,n)\in \mathbb N^2$,
\begin{equation}\label{eqn: implies relation A2}
\begin{split}
   [\mathsf t_{p,q},\mathsf t_{n,m}]&=(mp-nq)\mathsf t_{p+n-1,q+m-1},\\
    [\mathsf t_{p,q},\mathsf T_{n,m}(X)]&=(mp-nq)T_{p+r-1,q+s-1}(X).
\end{split}
\end{equation}
\begin{lemma}\label{lemma: implies relation A2}
    \eqref{eqn: implies relation A2} also holds for $(p,q)\in \mathbb N^2$ such that $p+q\le 1$.
\end{lemma}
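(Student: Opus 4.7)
The plan is to reduce the three residual cases $(p,q)\in\{(1,0),(0,1),(0,0)\}$ to a single base computation and then propagate using the $\mathfrak{sl}_2$-triple $\{\mathsf t_{2,0},\mathsf t_{1,1},\mathsf t_{0,2}\}$, whose adjoint action on every $\mathsf T_{n,m}(X)$ is already under control by the $p+q=2$ instance of \eqref{eqn: implies relation A2}.

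First I would establish the base identity $[\mathsf t_{1,0},\mathsf T_{0,n}(Y)]=n\,\mathsf T_{0,n-1}(Y)$ by specialising \eqref{eqn: A3'} to $X=1$. The relations $[1,Y]=0$, $\{1,Y\}=2Y$, $\mathrm{tr}(1)=K$, and $XY=Y$ kill the first and fourth terms on the right-hand side. For the fifth term, I would expand $(1\otimes Y-Y\otimes 1)\cdot\Omega$ in the basis $E^a_b\otimes E^c_d$ and use \eqref{eqn: A1'} to swap the two $\mathsf T_{0,\bullet}$ factors; the swap commutator is independent of $m$, so the $m$-sum simply produces a factor of $n$ and the whole $\Omega$-sum collapses to $\epsilon_1 n\bigl(\mathrm{tr}(Y)\mathsf T_{0,n-1}(1)-K\mathsf T_{0,n-1}(Y)\bigr)$. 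Combining surviving terms and invoking $\epsilon_3+K\epsilon_1=-\epsilon_2$ yields $[\mathsf T_{1,0}(1),\mathsf T_{0,n}(Y)]=n\epsilon_2\,\mathsf T_{0,n-1}(Y)$, which is the claim after dividing by $\epsilon_2$.

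Next I would propagate to arbitrary $(n,m)$. The $p+q=2$ case of \eqref{eqn: implies relation A2} with $(n,m)=(1,0)$ gives $[\mathsf t_{2,0},\mathsf t_{1,0}]=0$, so $\mathrm{ad}_{\mathsf t_{1,0}}$ commutes with $\mathrm{ad}_{\mathsf t_{2,0}}$. Applying $\mathrm{ad}_{\mathsf t_{2,0}}^n$ to the base identity and renormalising via $\mathsf T_{n,m}(X)=\frac{m!}{2^n(n+m)!}\mathrm{ad}_{\mathsf t_{2,0}}^n\mathsf T_{0,n+m}(X)$ produces $[\mathsf t_{1,0},\mathsf T_{n,m}(X)]=m\,\mathsf T_{n,m-1}(X)$; the $m=0$ case is the vanishing $\mathrm{ad}_{\mathsf t_{2,0}}^n\mathsf T_{0,n-1}(X)=0$ supplied by Lemma \ref{lem: local nilpotency}. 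Then using $\mathsf t_{0,1}=\tfrac12[\mathsf t_{1,0},\mathsf t_{0,2}]$ (from the $(p,q,n,m)=(0,2,1,0)$ instance of the hypothesis), one Jacobi expansion combined with $[\mathsf t_{0,2},\mathsf T_{n,m}(X)]=-2n\,\mathsf T_{n-1,m+1}(X)$ and the freshly established $(1,0)$ case delivers $[\mathsf t_{0,1},\mathsf T_{n,m}(X)]=-n\,\mathsf T_{n-1,m}(X)$. Finally, the $(1,0)$ case specialised to $(n,m)=(0,1)$ gives $[\mathsf t_{1,0},\mathsf t_{0,1}]=\mathsf t_{0,0}$, and a second Jacobi expansion using the just-proved $(0,1)$ and $(1,0)$ cases immediately yields $[\mathsf t_{0,0},\mathsf T_{n,m}(X)]=0$. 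The companion statements for $[\mathsf t_{p,q},\mathsf t_{n,m}]$ follow by setting $X=1$ and dividing by $\epsilon_2$.

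The main obstacle is the base computation itself: the careful expansion of the $\Omega$-sum and the cancellation $\epsilon_3+K\epsilon_1=-\epsilon_2$ which eats the $\mathrm{tr}(Y)$ term. Everything downstream is pure $\mathfrak{sl}_2$-theoretic bookkeeping with one or two applications of the Jacobi identity.
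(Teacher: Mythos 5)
Your proof is correct and follows essentially the same route as the paper: the heart of both arguments is the $X=1$ specialization of \eqref{eqn: A3'}, the collapse of the $\Omega$-sum via \eqref{eqn: A1'} into $n\bigl(\mathrm{tr}(Y)\mathsf T_{0,n-1}(1)-K\mathsf T_{0,n-1}(Y)\bigr)$, and the cancellation using $\epsilon_3+K\epsilon_1=-\epsilon_2$, followed by propagation with the $\mathfrak{sl}_2$-triple. The only (immaterial) difference is that the paper gets the $(0,0)$ case directly from \eqref{eqn: A1'} and \eqref{eqn: A2'} (centrality of $\mathsf t_{0,0}$) rather than from $[\mathsf t_{1,0},\mathsf t_{0,1}]=\mathsf t_{0,0}$ and the Jacobi identity.
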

\begin{proof}
By \eqref{eqn: A1'}, $\mathsf t_{0,0}=\frac{1}{\epsilon_2}\mathsf T_{0,0}(1)$ commutes with all $\mathsf T_{0,m}(X)$. Since $[\mathsf t_{0,0},\mathsf t_{2,0}]=0$ by \eqref{eqn: A2'}, we see that $\mathsf t_{0,0}$ is central. Next, we set $X=1$ in \eqref{eqn: A3'} and get 
\begin{align*}
    [\mathsf T_{1,0}(1),\mathsf T_{0,n}(Y)]&=-\epsilon_3 n\mathsf T_{0,n-1}(Y)-n\epsilon_1\mathrm{tr}(Y) \mathsf T_{0,n-1}(1)\\
    &+\epsilon_1 \sum_{m=0}^{n-1}\mathsf T_{0,m,0,n-1-m}((1\otimes Y-Y\otimes 1)\cdot \Omega).
\end{align*}
Using \eqref{eqn: A1'} we expand 
\begin{align*}
    \sum_{m=0}^{n-1}\mathsf T_{0,m,0,n-1-m}((1\otimes Y-Y\otimes 1)\cdot \Omega)=n\epsilon_1 \mathrm{tr}(Y)\mathsf T_{0,n-1}(1)-Kn\epsilon_1 \mathsf T_{0,n-1}(Y),
\end{align*}
thus $[\mathsf t_{1,0},\mathsf T_{0,n}(Y)]=n \mathsf T_{0,n-1}(Y)$. Using \eqref{eqn: A2'} again, we conclude that \eqref{eqn: implies relation A2} holds for $(p,q)=(1,0)$ and $(0,1)$. 
\end{proof}

We define increasing filtration $0=G_{-1}\mathsf A^{(K)}_{\mathrm{new}}\subset G_0\mathsf A^{(K)}_{\mathrm{new}}\subset G_1\mathsf A^{(K)}_{\mathrm{new}}\cdots$ which is \textit{different} from the one in subsection \ref{subsec: flitration on A}. Define the degree on generators as $\deg \epsilon_1=\deg \epsilon_2=0$ and
\begin{equation}
    \begin{split}   
    \text{for }X\in \mathfrak{sl}_K,\;\deg \mathsf T_{n,m}(X)=n+m,\text{ and }
    \deg \mathsf T_{n,m}(1)=n+m+1,
    \end{split}
\end{equation}
this gives rise to a grading on the tensor algebra $\mathbb C[\epsilon_1,\epsilon_2^{\pm},\epsilon_3^{\pm}]\langle \mathsf T_{n,m}(X), \mathsf T_{n,m}(1)\:|\: X\in \mathfrak{sl}_K,(n,m)\in \mathbb N^2\rangle$. We define $G_{i}\mathsf A^{(K)}_{\mathrm{new}}$ to be the image of the span of homogeneous elements in the tensor algebra of degrees $\le i$.

Recall that we choose a basis $\mathfrak B:=\{X_1,\cdots,X_{K^2-1}\}$ of $\mathfrak{sl}_K$, so that $\mathfrak{B}_+:=\{1\}\cup \mathfrak B$ is a basis of $\mathfrak{gl}_K$.

\begin{proposition}\label{prop: filtration_new}
For all $(n,m,p,q)\in \mathbb N^4$ and $X,Y\in \mathfrak B$, there exists 
\begin{equation}
    \begin{split}
        f_{n,m,p,q}^{X,Y}&\in G_{n+m+p+q-1}\mathbb C[\epsilon_1,\epsilon_2^{\pm},\epsilon_3^{\pm}]\langle \mathsf T_{i,j}(Z)\:|\: Z\in \mathfrak{gl}_K, (i,j)\in \mathbb N^2\rangle,\\
        g_{n,m,p,q}^{X}&\in G_{n+m+p+q-3}\mathbb C[\epsilon_1,\epsilon_2^{\pm},\epsilon_3^{\pm}]\langle \mathsf T_{i,j}(Z)\:|\: Z\in \mathfrak{gl}_K, (i,j)\in \mathbb N^2\rangle,\\
        h_{n,m,p,q}&\in G_{n+m+p+q-1}\mathbb C[\epsilon_1,\epsilon_2^{\pm},\epsilon_3^{\pm}]\langle \mathsf T_{i,j}(Z)\:|\: Z\in \mathfrak{gl}_K, (i,j)\in \mathbb N^2\rangle,
    \end{split}
\end{equation}
such that following equations hold in $\mathsf A^{(K)}_{\mathrm{new}}$
\begin{align}\label{eqn_schematic commutators 1_new}
    [\mathsf T_{n,m}(X),\mathsf T_{p,q}(Y)]=\mathsf T_{n+p,m+q}([X,Y])+\bar f_{n,m,p,q}^{X,Y},
\end{align}
\begin{align}\label{eqn_schematic commutators 2_new}
    [\mathsf T_{n,m}(X),\mathsf t_{p,q}]=(nq-mp)\mathsf T_{n+p-1,m+q-1}(X)+\bar g_{n,m,p,q}^{X},
\end{align}
\begin{align}\label{eqn_schematic commutators 3_new}
    [\mathsf t_{n,m},\mathsf t_{p,q}]=\bar h_{n,m,p,q},
\end{align}
where $\bar f_{n,m,p,q}^{X,Y}$ (resp. $\bar g_{n,m,p,q}^{X}$, resp. $\bar h_{n,m,p,q}$) is the image of $f_{n,m,p,q}^{X,Y}$ (resp. $g_{n,m,p,q}^{X}$, resp. $h_{n,m,p,q}$) in $\mathsf A^{(K)}_{\mathrm{new}}$.
\end{proposition}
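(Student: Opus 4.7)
The plan is to adapt the inductive bootstrapping strategy of Proposition \ref{prop: filtration} to the coarser filtration $G_\bullet$, constructing $f^{X,Y}_{n,m,p,q}$, $g^X_{n,m,p,q}$, and $h_{n,m,p,q}$ by induction on $n+m+p+q$. The base cases come from the defining relation \eqref{eqn: A3'} together with the extended commutation relations of Lemma \ref{lemma: implies relation A2}, and the inductive step uses the Jacobi identity together with the operators $\mathrm{ad}_{\mathsf t_{2,0}}$ and $\mathrm{ad}_{\mathsf t_{0,2}}$, which preserve $G_\bullet$ by \eqref{eqn: implies relation A2}.

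For the base cases, Lemma \ref{lemma: implies relation A2} directly yields $[\mathsf t_{p,q}, \mathsf T_{r,s}(X)] = (ps-rq)\mathsf T_{p+r-1, q+s-1}(X)$ and $[\mathsf t_{p,q}, \mathsf t_{r,s}] = (ps-rq)\mathsf t_{p+r-1, q+s-1}$ for $p+q \le 2$, so both $g^X$ and $h$ vanish identically in this range. For $f^{X,Y}_{1,0,0,n}$, I would read off the explicit formula from \eqref{eqn: A3'}: the leading term $\frac{1}{2n+2}[\mathsf t_{2,0}, \mathsf T_{0,n+1}([X,Y])]$ equals $\mathsf T_{1,n}([X,Y])$ by the definition of $\mathsf T_{1,n}$, and the correction terms are either of the form $\mathsf T_{0,n-1}(\bullet)$ (of $G$-degree $\le n$) or quadratic $\Omega$-convolution sums $\mathsf T_{0,m}(A)\mathsf T_{0,n-1-m}(B)$. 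A key observation is that the trace-trace components of both tensors $([X,Y]\otimes 1)\cdot\Omega$ and $(X\otimes Y - XY\otimes 1)\cdot \Omega$ vanish, since $\mathrm{tr}([X,Y])=0$ and $\mathrm{tr}(XY) - \mathrm{tr}(XY)=0$; consequently no summand with two trace factors (which would carry $G$-degree $n+1$) survives, and every summand has $G$-degree at most $n$, giving $f^{X,Y}_{1,0,0,n} \in G_n$ as required.

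The inductive step mirrors that of Proposition \ref{prop: filtration}: assuming the constructions have been carried out for all quadruples of total index-sum $\le s+p+q$, I would write $\mathsf T_{s+1,0}(X) = \frac{1}{2}\,\mathrm{ad}_{\mathsf t_{2,0}}(\mathsf T_{s,1}(X))$ (an immediate consequence of the recursive definition $\mathsf T_{n,m}(X) := \frac{m!}{2^n(n+m)!}\mathrm{ad}^n_{\mathsf t_{2,0}}(\mathsf T_{0,n+m}(X))$) and expand
\begin{align*}
[\mathsf T_{s+1,0}(X), \mathsf T_{p,q}(Y)] = \tfrac{1}{2}\bigl[[\mathsf t_{2,0}, \mathsf T_{s,1}(X)], \mathsf T_{p,q}(Y)\bigr]
\end{align*}
via Jacobi, reducing to commutators already controlled by the induction hypothesis and by \eqref{eqn: implies relation A2}. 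The leading term extracts $\mathsf T_{s+p+1, q}([X,Y])$, and the remainder is defined to be $f^{X,Y}_{s+1, 0, p, q}$; its $G$-degree bound $\le s+p+q$ follows from the filtration-preservation of $\mathrm{ad}_{\mathsf t_{2,0}}$, which is itself an inductive consequence of \eqref{eqn: implies relation A2}. Horizontal bootstrapping via $\mathrm{ad}_{\mathsf t_{0,2}}$ then extends the construction to arbitrary $(n,m)$. The constructions of $g^X_{n,m,p,q}$ and $h_{n,m,p,q}$ proceed analogously, with the sharper degree drops (to $n+m+p+q-3$ for $g^X$ and to $n+m+p+q-1$ for $h$) reflecting the fact that commuting with $\mathsf t_{p,q}$, which carries $G$-degree $p+q+1$, produces a Poisson-like leading term of degree exactly three units below the naive product, as dictated by \eqref{eqn: implies relation A2}.

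The principal obstacle is the careful bookkeeping of the quadratic $\Omega$-contraction terms: each contribution $(A\otimes B)\cdot \Omega = A E^a_b \otimes E^b_a B$ must be split into traceless-traceless, traceless-trace, trace-traceless, and trace-trace components, and the bound depends critically on the vanishing of the trace-trace components of the specific tensors appearing in \eqref{eqn: A3'}. A secondary difficulty is verifying, at each induction step, that the explicit scalar coefficients in \eqref{eqn: A3'} (in particular the weight $\frac{m+1}{n+1}$) interact correctly with the $\mathrm{ad}_{\mathsf t_{2,0}}$-action so that the trace-trace cancellation persists; this is essentially a finite combinatorial check that is ultimately forced by the overall consistency of the relations \eqref{eqn: A0'}--\eqref{eqn: A3'}.
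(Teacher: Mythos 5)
There is a genuine gap in the inductive step. The operators $\mathrm{ad}_{\mathsf t_{2,0}}$, $\mathrm{ad}_{\mathsf t_{1,1}}$, $\mathrm{ad}_{\mathsf t_{0,2}}$ all preserve the total index $n+m$ of $\mathsf T_{n,m}(X)$ (they move within a single $\mathfrak{sl}_2$-multiplet), so they cannot by themselves raise the degree of the first factor of a commutator. Your step writes $\mathsf T_{s+1,0}(X)=\tfrac12\mathrm{ad}_{\mathsf t_{2,0}}(\mathsf T_{s,1}(X))$ and expands by Jacobi, but this produces $[\mathsf t_{2,0},[\mathsf T_{s,1}(X),\mathsf T_{p,q}(Y)]]$ and $[\mathsf T_{s,1}(X),[\mathsf t_{2,0},\mathsf T_{p,q}(Y)]]$, both of which involve commutators of total index-sum $s+1+p+q$ — the same level you are trying to establish, not one covered by the hypothesis "$\le s+p+q$". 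The same objection applies to the "horizontal bootstrapping via $\mathrm{ad}_{\mathsf t_{0,2}}$". In $\mathsf A^{(K)}$ this degree-raising is done with $\mathsf t_{2,1}$ via relation \eqref{eqn: A5}, but $\mathsf t_{2,1}$ is not a generator of $\mathsf A^{(K)}_{\mathrm{new}}$ and no analogue of \eqref{eqn: A4} is available there, so the argument of Proposition \ref{prop: filtration} cannot simply be "mirrored". The paper's proof closes the induction by a different device: since $K>1$ and $\mathfrak{sl}_K$ is perfect, one writes $\mathsf T_{0,s+1}(X)=[\mathsf T_{0,1}(Z_1),\mathsf T_{0,s}(Z_2)]$ with $[Z_1,Z_2]=X$, and the Jacobi identity then reduces $[\mathsf T_{0,s+1}(X),\mathsf T_{p,q}(Y)]$ to commutators whose first factor has degree $\le s$; only afterwards does $\mathrm{ad}_{\mathsf t_{2,0}}$ distribute the result across the multiplet $\mathsf T_{i,s+1-i}$. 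This idea is absent from your proposal and is the essential new ingredient relative to Proposition \ref{prop: filtration}.

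A second, smaller gap: the construction of $h_{n,m,p,q}$ does not "proceed analogously". Since $\mathsf t_{n,m}=\tfrac{1}{\epsilon_2}\mathsf T_{n,m}(1)$ and the $f$'s and $g$'s are only built for traceless arguments, one must first express $\mathsf t_{0,n}$ through brackets of $\mathsf T$'s with $\mathfrak{sl}_K$-arguments. The paper does this via the identity $\sum_{i<j}[\mathsf T_{1,0}(H_{ij}),\mathsf T_{0,n+1}(H_{ij})]=-(K-1)\epsilon_2\epsilon_3 n\,\mathsf t_{0,n}-\epsilon_1 K\sum_{i\neq j}\sum_m \mathsf T_{0,m}(E^i_j)\mathsf T_{0,n-m}(E^j_i)$, and then divides by $\epsilon_2\epsilon_3$ — which is precisely why the proposition lives over $\mathbb C[\epsilon_1,\epsilon_2^{\pm},\epsilon_3^{\pm}]$. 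Your base-case analysis of the quadratic $\Omega$-terms (the vanishing of the trace--trace components of $([X,Y]\otimes 1)\cdot\Omega$ and $(X\otimes Y-XY\otimes 1)\cdot\Omega$) is correct and matches the paper, but without the two mechanisms above the induction does not go through.
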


\begin{proof}
We construct $f_{n,m,p,q}^{X,Y}$, $g_{n,m,p,q}^{X}$ and $h_{n,m,p,q}$ inductively. First of all, we set $f^{X,Y}_{0,0,0,n}=0$ and set $g^{X}_{n,m,p,q}=0$ for $n+m\le 1$, using \eqref{eqn: implies relation A2} and Lemma \ref{lemma: implies relation A2}. For $f^{X,Y}_{1,0,0,n}$, we notice that the term $$\sum_{m=0}^{n-1}\frac{m+1}{n+1}\mathsf T_{0,m,0,n-1-m}(([X,Y]\otimes 1)\cdot \Omega)+\sum_{m=0}^{n-1}\mathsf T_{0,m,0,n-1-m}((X\otimes Y-XY\otimes 1)\cdot \Omega)$$ can be written as sum over quadratic monomials in $\{\mathsf T_{0,i}(Z)\:|\: Z\in \mathfrak{sl}_K,i\in \mathbb N\}$ with total degree $n-1$, thus we set $\mathsf T_{1,n}([X,Y])+f^{X,Y}_{1,0,0,n}$ to be the RHS of \eqref{eqn: A3'}. 

Next we set $f^{X,Y}_{1,0,i+1,n-i-1}$ to be the lift of $\frac{1}{2n-2i}[\mathsf t_{2,0},\bar f^{X,Y}_{1,0,i,n-i}]$, inductively for all $i<n$. Then we set $f^{X,Y}_{0,1,p,q}$ to be the lift of $\frac{1}{2}[\bar f^{X,Y}_{1,0,p,q},\mathsf t_{0,2}]-p\bar f^{X,Y}_{1,0,p-1,q+1}$.

Assume that $f^{X,Y}_{n,m,p,q}$ and $g^{X}_{n,m,p,q}$ have been constructed for all $(n+m)\le s$ and all $X,Y\in \mathfrak B$. Then for every $X\in \mathfrak B$ we fix $Z_1,Z_2\in \mathfrak{sl}_K$ such that $[Z_1,Z_2]=X$, so 
\begin{align*}
    [\mathsf T_{0,1}(Z_1),\mathsf T_{0,s}(Z_2)]=\mathsf T_{0,s+1}(X).
\end{align*}
Using the identity
\begin{align*}
    &[\mathsf T_{0,s+1}(X),\mathsf T_{p,q}(Y)]=[[\mathsf T_{0,1}(Z_1),\mathsf T_{0,s}(Z_2)],\mathsf T_{p,q}(Y)]\\
    =&-[\mathsf T_{0,s}(Z_2),[\mathsf T_{0,1}(Z_1),\mathsf T_{p,q}(Y)]]+[\mathsf T_{0,1}(Z_1),[\mathsf T_{0,s}(Z_2),\mathsf T_{p,q}(Y)]]\\
    =&\mathsf T_{p,q+s+1}([X,Y])-\bar f^{Z_2,[Z_1,Y]}_{0,s,p,q+1}+\bar f^{Z_1,[Z_2,Y]}_{0,1,p,q+s}-[\mathsf T_{0,s}(Z_2),\bar f^{Z_1,Y}_{0,1,p,q}]+[\mathsf T_{0,1}(Z_1),\bar f^{Z_2,Y}_{0,s,p,q}]
\end{align*}
we set $f^{X,Y}_{0,s+1,p,q}$ to be the lift of $-\bar f^{Z_2,[Z_1,Y]}_{0,s,p,q+1}+\bar f^{Z_1,[Z_2,Y]}_{0,1,p,q+s}-[\mathsf T_{0,s}(Z_2),\bar f^{Z_1,Y}_{0,1,p,q}]+[\mathsf T_{0,1}(Z_1),\bar f^{Z_2,Y}_{0,s,p,q}]$. Note that we expand $[\mathsf T_{0,s}(Z_2),\bar f^{Z_1,Y}_{0,1,p,q}]$ using \eqref{eqn_schematic commutators 1_new} and $f^{Z_2,W}_{0,s,r,t}$ and $g^{Z_2}_{0,s,r,t}$ for all $W\in \mathfrak{B}_+$ and all $(r,t)\in \mathbb N^2$, and expand $[\mathsf T_{0,1}(Z_1),\bar f^{Z_2,Y}_{0,s,p,q}]$ similarly. By induction hypothesis, we have
\begin{align*}
    \deg f^{X,Y}_{0,s+1,p,q}&\le \max\{\deg f^{Z_2,[Z_1,Y]}_{0,s,p,q+1},\deg f^{Z_1,[Z_2,Y]}_{0,1,p,q+s}, s+\deg f^{Z_1,Y}_{0,1,p,q},1+\deg  f^{Z_2,Y}_{0,s,p,q}\}\\
    &\le p+q+s-1.
\end{align*}
Next, we set $f^{X,Y}_{i+1,s-i,p,q}$ to be the lift of $\frac{1}{2(s+1-i)}[\mathsf t_{2,0},\bar f^{X,Y}_{i,s+1-i,p,q}]-\frac{q}{s+1-i}\bar f^{X,Y}_{i,s+1-i,p+1,q-1}$, inductively for all $0\le i\le s$.

Similarly, using the identity
\begin{align*}
    &[\mathsf T_{0,s+1}(X),\mathsf t_{p,q}]=[[\mathsf T_{0,1}(Z_1),\mathsf T_{0,s}(Z_2)],\mathsf t_{p,q}]\\
    =&-[\mathsf T_{0,s}(Z_2),[\mathsf T_{0,1}(Z_1),\mathsf t_{p,q}]]+[\mathsf T_{0,1}(Z_1),[\mathsf T_{0,s}(Z_2),\mathsf t_{p,q}]]\\
    =&-(ps+p)\mathsf T_{p-1,q+s}(X)+p\bar f^{Z_2,Z_1}_{0,s,p-1,q}-ps\bar f^{Z_1,Z_2}_{0,1,p-1,q+s-1}-[\mathsf T_{0,s}(Z_2),\bar g^{Z_1}_{0,1,p,q}]+[\mathsf T_{0,1}(Z_1),\bar g^{Z_2}_{0,s,p,q}]
\end{align*}
we set $g^{X}_{0,s+1,p,q}$ to be the lift of $p\bar f^{Z_2,Z_1}_{0,s,p-1,q}-ps\bar f^{Z_1,Z_2}_{0,1,p-1,q+s-1}-[\mathsf T_{0,s}(Z_2),\bar g^{Z_1}_{0,1,p,q}]+[\mathsf T_{0,1}(Z_1),\bar g^{Z_2}_{0,s,p,q}]$. By induction hypothesis, we have
\begin{align*}
    \deg g^{X}_{0,s+1,p,q}&\le \max\{\deg f^{Z_2,Z_1}_{0,s,p-1,q},\deg f^{Z_1,Z_2}_{0,1,p-1,q+s-1}, s+\deg g^{Z_1}_{0,1,p,q},1+\deg  g^{Z_2}_{0,s,p,q}\}\\
    &\le p+q+s-2.
\end{align*}
Next, we set $g^{X}_{i+1,s-i,p,q}$ to be the lift of $\frac{1}{2(s+1-i)}[\mathsf t_{2,0},\bar g^{X}_{i,s+1-i,p,q}]-\frac{q}{s+1-i}\bar g^{X}_{i,s+1-i,p+1,q-1}$, inductively for all $0\le i\le s$. 

The above steps conclude the construction of $f^{X,Y}_{n,m,p,q}$ and $g^{X}_{n,m,p,q}$ for all $X,Y\in \mathfrak B$ and all $(n,m,p,q)\in \mathbb N^4$. It remains to construct $h_{n,m,p,q}$.

Set $H_{ij}:=E^i_i-E^j_j$, then we get
\begin{align*}
    \sum_{i<j}^K[\mathsf T_{1,0}(H_{ij}),\mathsf T_{0,n+1}(H_{ij})]&=-(K-1)\epsilon_2\epsilon_3 n\mathsf t_{0,n}-\epsilon_1 K\sum_{i\neq j}^K\sum_{m=0}^{n}\mathsf T_{0,m}(E^i_j)\mathsf T_{0,n-m}(E^j_i).
\end{align*}
So we set $-(K-1)\epsilon_2\epsilon_3 nh_{0,n,p,q}$ to be the lift of 
\begin{align*}
    &\sum_{i<j}^K\left([\bar g^{H_{ij}}_{1,0,p,q},\mathsf T_{0,n+1}(H_{ij})]+q\bar f^{H_{ij},H_{ij}}_{p,q-1,0,n+1}+[\mathsf T_{1,0}(H_{ij}),\bar g^{H_{ij}}_{0,n+1,p,q}]-p(n+1)\bar f^{H_{ij},H_{ij}}_{1,0,p-1,n+q}\right)+\\
    +&\epsilon_1 K \sum_{i\neq j}^K\sum_{m=0}^{n}\{\bar g^{E^i_j}_{0,m,p,q}-mp\mathsf T_{p-1,m+q-1}(E^i_j),\mathsf T_{0,n-m}(E^j_i)\}.
\end{align*}
The degree of $h_{0,n,p,q}$ is bounded above by the maximum of
\begin{align*}
    \deg g^{H_{ij}}_{1,0,p,q}+n+1,\;\deg f^{H_{ij},H_{ij}}_{p,q-1,0,n+1}&,\; 1+\deg g^{H_{ij}}_{0,n+1,p,q},\;\deg  f^{H_{ij},H_{ij}}_{1,0,p-1,n+q},\\
    n-m+\deg g^{E^i_j}_{0,m,p,q}&,\; p+q+n-2,
\end{align*}
which is bounded above by $p+q+n-1$. Finally, we set $h_{i+1,n-1-i,p,q}$ to be the lift of $\frac{1}{2(n-i)}[\mathsf t_{2,0},\bar h_{i,n-i,p,q}]-\frac{q}{n-i}\bar h_{i,n-i,p+1,q-1}$, inductively for all $0\le i< n$. 
\end{proof}

\begin{proof}[Proof of Theorem \ref{thm: simpler definition, K>1}]
We fix a total order $1<X_1<\cdots<X_{K^2-1}$ on $\mathfrak B_+$. Then we put the dictionary order on the set $\mathfrak{G}(\mathsf A^{(K)}):=\{\mathsf T_{n,m}(X)\:|\: X\in \mathfrak B_+, (n,m)\in \mathbb N^2\}$, in other words $\mathsf T_{n,m}(X)<\mathsf T_{n',m'}(X')$ if and only only if $n<n'$ or $n=n'$ and $m<m'$ or $(n,m)=(n',m')$ and $X<X'$. Define the set of ordered monomials in $\mathfrak{G}(\mathsf A^{(K)}$ as 
\begin{align}
    \mathfrak B(\mathsf A^{(K)}):=\{1\}\cup\{\mathcal O_1\cdots\mathcal O_n\:|\: n\in \mathbb N_{>0}, \mathcal O_1\le \cdots\le\mathcal O_n\in \mathfrak{G}(\mathsf A^{(K)})\}.
\end{align}
We claim that natural map $\mathbb C[\epsilon_1,\epsilon_2^{\pm},\epsilon_3^{\pm}]\cdot \mathfrak B(\mathsf A^{(K)})\to \mathsf A^{(K)}_{\mathrm{new}}$ is surjective. Obviously $G_0\mathsf A^{(K)}_{\mathrm{new}}$ is generated by polynomials in $\{\mathsf T_{0,0}(X)\:|\: X\in \mathfrak{sl}_K\}$, which is contained in the image of $\mathbb C[\epsilon_1,\epsilon_2^{\pm},\epsilon_3^{\pm}]\cdot \mathfrak B(\mathsf A^{(K)})$ by the PBW theorem for $\mathfrak{sl}_K$. Suppose that $G_s\mathsf A^{(K)}_{\mathrm{new}}$ is generated by elements in $\mathfrak B(\mathsf A^{(K)})$. Consider another natural filtration $W_{\bullet}\mathsf A^{(K)}_{\mathrm{new}}$ such that $W_{d}\mathsf A^{(K)}_{\mathrm{new}}$ is spanned by monomials consisting of $\le d$ elements in $\mathfrak{G}(\mathsf A^{(K)})$. Then $W_{1}\mathsf A^{(K)}_{\mathrm{new}}$ is contained in the image of $\mathbb C[\epsilon_1,\epsilon_2^{\pm},\epsilon_3^{\pm}]\cdot \mathfrak B(\mathsf A^{(K)})$ by definition. Suppose that $G_{s+1}\mathsf A^{(K)}_{\mathrm{new}}\cap W_{d}\mathsf A^{(K)}_{\mathrm{new}}$ is generated by elements in $\mathfrak B(\mathsf A^{(K)})$, then Proposition \ref{prop: filtration_new} implies that we can reorder any monomials in $G_{s+1}\mathsf A^{(K)}_{\mathrm{new}}\cap W_{d+1}\mathsf A^{(K)}_{\mathrm{new}}$ into the non-decreasing order modulo terms in $G_{s}\mathsf A^{(K)}+G_{s+1}\mathsf A^{(K)}_{\mathrm{new}}\cap W_{d}\mathsf A^{(K)}_{\mathrm{new}}$, therefore $G_{s+1}\mathsf A^{(K)}_{\mathrm{new}}\cap W_{d+1}\mathsf A^{(K)}_{\mathrm{new}}$ is generated by elements in $\mathfrak B(\mathsf A^{(K)})$. $G_{\bullet}\mathsf A^{(K)}_{\mathrm{new}}$ and $W_{\bullet}\mathsf A^{(K)}_{\mathrm{new}}$ are obviously exhaustive, thus $\mathsf A^{(K)}_{\mathrm{new}}$ is generated by $\mathfrak B(\mathsf A^{(K)})$.

Finally, the composition $\mathbb C[\epsilon_1,\epsilon_2^{\pm},\epsilon_3^{\pm}]\cdot \mathfrak B(\mathsf A^{(K)})\to \mathsf A^{(K)}_{\mathrm{new}}\to \mathsf A^{(K)}[\epsilon_2^{-1},\epsilon_3^{-1}]$ is isomorphism beause of the PBW theorem for $\mathsf A^{(K)}$ (Theorem \ref{thm: PBW}), thus the natural map $\mathsf A^{(K)}_{\mathrm{new}}\to \mathsf A^{(K)}[\epsilon_2^{-1},\epsilon_3^{-1}]$ must be an isomorphism.
\end{proof}

\section{A map from \texorpdfstring{$\mathsf A^{(K)}$}{Ak} to the Mode Algebras of Rectangular W-Algebras}\label{sec: map A^K to W}

In this section, we construct a map from $\mathsf A^{(K)}$ to the mode algebra of rectangular W-algebra $\mathcal W^{\kappa}(\gl_{KL},(L^K))$, later in section \ref{subsec: map from affine Yangian to Y} we will compare our map with the map from the affine Yangian of $\mathfrak{sl}_K$ to the mode algebra of $\mathcal W^{\kappa}(\gl_{KL},(L^K))$ obtained by Kodera-Ueda in \cite{kodera2022coproduct}, see also \cite{ueda2022affine}.

To begin with, let us briefly recall the definition of rectangular W-algebra \cite{prochazka2015exploring,eberhardt2019matrix,kodera2022coproduct,gaiotto2019vertex,arakawa2017explicit}.

Let $\alpha$ be a variable, and define a symmetric bilinear form $\kappa_{\alpha}$ on $\mathfrak{gl}_K=\mathfrak{sl}_K\oplus \mathfrak{z}_K$ as follows
\begin{equation}\label{eqn: inner form kappa(alpha)}
    \kappa_{\alpha}(X,Y)=\alpha\mathrm{Tr}(XY)+\mathrm{Tr}(X)\mathrm{Tr}(Y)=\begin{cases}
    \alpha\mathrm{Tr}(XY),& \text{if }X\in \mathfrak{sl}_K,\\
    (\alpha+K)\mathrm{Tr}(XY),& \text{if }X\in \mathfrak{z}_K.
    \end{cases}
\end{equation}
Explicitly 
\begin{align}
    \kappa_{\alpha}(J^a_b,J^c_d)=\alpha \delta^a_d\delta^c_b+\delta^a_b\delta^c_d.
\end{align}
We define the affine Lie algebra $\widehat{\mathfrak{gl}}_K^{\alpha}$ as the Lie algebra $\mathfrak{gl}_K[t,t^{-1}]\oplus \mathbb C\cdot \mathbf 1$ with commutation relation
\begin{align*}
    [X\otimes t^n,Y\otimes t^m]=[X,Y]\otimes t^{m+n}+n\delta_{n,-m} \kappa_{\alpha}(X,Y)\mathbf 1.
\end{align*}
We use $X_n$ to denote the element $X\otimes t^n$. It is a direct sum of the affine Lie algebra $\widehat{\mathfrak{sl}}_K$ of level $\alpha$ and the Heisenberg Lie algebra $\widehat{\mathfrak{z}}_K$ of level $\alpha+K$. Note that $\widehat{\mathfrak{gl}}_K^{\alpha}$ is usually denoted by $\mathfrak{gl}(K)_{\alpha,1}$ in the literature. 

We define $\mathfrak{U}(\widehat{\mathfrak{gl}}_K^{\alpha})$ as the current (mode) algebra of the vertex algebra $V^{\kappa_\alpha}({\mathfrak{gl}}_K)$. It is the completion of the universal enveloping algebra of $\widehat{\mathfrak{gl}}_K^{\alpha}$ in a certain topology.

$\mathfrak{U}(\widehat{\mathfrak{gl}}_K^{\alpha})$ is the basic building block of the rectangular W-algebra
\begin{align*}
    \mathcal W^{(K)}_L:=\mathcal W^{\kappa}(\mathfrak{gl}_{LK},f_L),
\end{align*}
defined as the quantum Drinfeld-Sokolov reduction of $V^{\kappa}(\mathfrak{gl}_{LK})$ with respect to the unipotent element $f_L$ associated to the partition $(L^K)$. Here the inner product $\kappa$ on $\mathfrak{gl}_{LK}$ is defined as 
\begin{equation}
    \kappa(X,Y)=\begin{cases}
    (\alpha+K-KL)\mathrm{Tr}(XY),& \text{if }X\in \mathfrak{sl}_{KL},\\
    (\alpha+K)\mathrm{Tr}(XY),& \text{if }X\in \mathfrak{z}_{KL}.
    \end{cases}
\end{equation}
Note that $\mathcal W^{(K)}_1$ is $V^{\kappa_{\alpha}}(\mathfrak{gl}_K)$ by definition. We define $\mathfrak{U}(\mathcal W^{(K)}_L)$ as the current (mode) algebra of $\mathcal W^{(K)}_L$.

$\mathcal W^{(K)}_L$ can be characterized using the Miura map \cite{arakawa2017explicit}
\begin{align*}
    \mathcal W^{(K)}_L\hookrightarrow V^{\kappa_{\alpha}}(\mathfrak{gl}_K)^{\otimes L}.
\end{align*}
We recall the construction here. Consider $L$ copies of affine Kac-Moody vertex algebra $V^{\kappa_{\alpha}}(\mathfrak{gl}_K)$, and let $J^{[i]}(z)=\left(J^{a[i]}_b(z)\right)_{1\le a,b\le K}$ be the $K\times K$ matrix whose $(a,b)$ entry is the field $J^{a[i]}_b(z)$ of the $i$-th copy of $V^{\kappa_{\alpha}}(\mathfrak{gl}_K)$. Consider the matrix valued differential operator (Miura operator)
\begin{align}\label{eqn: Miura operator}
    (\alpha\partial-J^{[1]}(z))(\alpha\partial-J^{[2]}(z))\cdots(\alpha\partial-J^{[L]}(z))=(\alpha\partial)^L+\sum_{r=1}^L (-1)^r(\alpha\partial)^{L-r}W^{(r)}(z).
\end{align}
Then $\mathcal W^{(K)}_L$ is the vertex subalgebra of $V^{\kappa_{\alpha}}(\mathfrak{gl}_K)^{\otimes L}$ generated by fields $W^{(1)},W^{(2)},\cdots,W^{(L)}$.

We write down the explicit form of the spin $1,2,3$ fields here:
\begin{equation}
    \begin{split}
        W^{a(1)}_b(z)&=\sum_{i=1}^L J^{a[i]}_b(z),\\
        W^{a(2)}_b(z)&=\sum_{i<j}^L J^{a[i]}_c(z)J^{c[j]}_b(z)+\alpha \sum_{i=1}^L(L-i)\partial J^{a[i]}_b(z),\\
        W^{a(3)}_b(z)&=\sum_{i<j<k}^L J^{a[i]}_c(z)J^{c[j]}_d(z)J^{d[k]}_b(z)+\frac{\alpha^2}{2}\sum_{i=1}^{L}(L-i)(L-i-1)\partial^2 J^{a[i]}_b(z)\\
        &~+\alpha \sum_{i<j}^L \left((L-i-1)\partial J^{a[i]}_c(z)J^{c[j]}_b(z)+(L-j)J^{a[i]}_c(z)\partial J^{c[j]}_b(z)\right).
    \end{split}
\end{equation}
We present the $W^{(1)}W^{(n)}$ OPE here:
\begin{equation}\label{W1Wn OPE_finite L}
    \begin{split}
        W^{a(1)}_b(z)W^{c(n)}_{d}(w)\sim \:& \sum_{i=0}^{n-1}\frac{(L-i)!}{(L-n)!}\frac{\alpha^{n-1-i}}{(z-w)^{n+1-i}}(\alpha\delta^c_bW^{a(i)}_{d}(w)+\delta^a_b W^{c(i)}_{d}(w))\\
&+\frac{\delta^c_bW^{a(n+1)}_{d}(w)-\delta^a_dW^{c(n+1)}_{b}(w)}{z-w}.
    \end{split}
\end{equation}
For the proof, see Appendix \ref{app: W1Wn OPE}.

\bigskip It is shown in \cite{ueda2022affine} that if $\alpha\neq 0$ then $\mathfrak{U}(\mathcal W^{(K)}_L)$ is topologically generated by the modes $W^{a(1)}_{b,n}$ and $W^{a(2)}_{b,n}$, where the modes of the field $W^{(r)}$ are defined as
\begin{align}
    W^{a(r)}_b(z)=\sum_{n\in \mathbb Z}W^{a(r)}_{b,n}z^{-n-r}.
\end{align}
It is known that $\mathcal W^{(K)}_L$ is a conformal vertex algebra, and it possesses a unique stress-energy operator $T(z)$ such that
\begin{itemize}
    \item[(1)] $W^{a(r)}_{b}(z)$ has conformal weight $r$ w.r.t. $T(z)$,
    \item[(2)] $W^{a(1)}_{b}(z)$ are primary of spin $1$ w.r.t $T(z)$.
\end{itemize}
$T(z)$ is given by the equation
\begin{align}\label{eqn: stress-energy operator_finite L}
    T(z)=\frac{1}{2(\alpha+K)}:W^{a(1)}_{b}W^{b(1)}_{a}:(z)+\frac{\alpha(L-1)}{2(\alpha+K)}\partial W^{a(1)}_{a}(z)-\frac{1}{\alpha+K}W^{a(2)}_{a}(z),
\end{align}
with central charge 
\begin{align}\label{eqn: central charge_finite L}
    c=\frac{KL}{\alpha+K}(1+\alpha K-(L^2-1)\alpha^2).
\end{align}

\begin{remark}\label{rmk: shift automorphism of W}
$\mathcal W^{(K)}_L$ is preserved under the automorphism $\eta^{\otimes L}_{\beta}:E^{a[i]}_b(z)\mapsto E^{a[i]}_b(z)+\delta^a_b\frac{\beta}{z}$, in fact $\eta^{\otimes L}_{\beta}$ transforms the Miura operator to
\begin{align*}
    \left(\alpha\partial-\frac{\beta}{z}\right)^L+\sum_{r=1}^L (-1)^r\left(\alpha\partial-\frac{\beta}{z}\right)^{L-r}W^{(r)}(z),
\end{align*}
therefore $\eta^{\otimes L}_{\beta}$ transforms $W^{(r)}(z)$ by
\begin{align}
    W^{(r)}(z)\mapsto W^{(r)}(z)+\sum_{s=1}^r \binom{L+s-r}{s}\binom{{\beta}/{\alpha}}{s}\frac{\alpha^s \cdot s!}{z^s} W^{(r-s)}(z),
\end{align}
where $W^{(0)}(z)$ is set to be the constant identity matrix. For example
\begin{equation*}
    \begin{split}
        W^{(1)}(z)\mapsto & W^{(1)}(z)+\frac{\beta L}{z}W^{(0)}(z),\\
    W^{(2)}(z)\mapsto & W^{(2)}(z)+\frac{\beta (L-1)}{z}W^{(1)}(z)+\frac{\beta (\beta-\alpha) L(L-1)}{2z^2}W^{(0)}(z),\\
    W^{(3)}(z)\mapsto & W^{(3)}(z)+\frac{\beta (L-2)}{z}W^{(2)}(z)+\frac{\beta (\beta-\alpha) (L-1)(L-2)}{2z^2}W^{(1)}(z)\\
    &+ \binom{L}{3}\frac{\beta(\beta-\alpha)(\beta-2\alpha)}{z^3} W^{(0)}(z).
    \end{split}
\end{equation*}
\end{remark}

It is shown in \cite{gaiotto2022miura} that when $K=1$ and $\alpha=\frac{\epsilon_3}{\epsilon_1}$, there exists an algebra homomorphism $\Psi_L:\mathsf A^{(1)}\to \mathfrak{U}(\mathcal W^{(1)}_L)$ which is uniquely determined by
\begin{equation}
    \begin{split}
        \Psi_{L}(\mathsf t_{2,0})&=\frac{\epsilon_1^2}{\epsilon_2}\left(V_{-2}+\frac{\alpha}{2}\sum_{n=-\infty}^{\infty}|n|:W^{(1)}_{-n-1}W^{(1)}_{n-1}:\right)\\
    \Psi_{L}(\mathsf t_{0,m})&=\frac{1}{\epsilon_2}W^{(1)}_m,
    \end{split}
\end{equation}
where $V_{-2}$ is a mode of quasi-primary field $V(z)=\sum_{n\in \mathbb Z}V_{n}z^{-n-3}$ defined as
\begin{align}
    V(z):=\frac{1}{3}\sum_{i=1}^L:J^{[i]}(z)J^{[i]}(z)J^{[i]}(z):+\alpha\sum_{i<j}^L:J^{[i]}(z)\partial J^{[j]}(z):
\end{align}
Here our convention for the normally-ordered product of three operators is that
\begin{align*}
    :ABC:=\frac{1}{2}\big(:A(:BC:):+:(:AB:)C:\big).
\end{align*}
It is also shown in \cite{gaiotto2022miura} that when $K=1$ and $\alpha=\frac{\epsilon_3}{\epsilon_1}$, there exists an algebra homomorphism $\Delta_L:\mathsf A^{(1)}\to \mathsf A^{(1)}\widetilde{\otimes}\mathfrak{U}(\mathcal W^{(1)}_L)$ which is uniquely determined by
\begin{equation}
    \begin{split}
    \Delta_L(\mathsf t_{2,0})&=\square (\mathsf t_{2,0})+2\epsilon_1\epsilon_3\sum_{n=1}^{\infty}n\mathsf t_{0,n-1}\otimes W^{(1)}_{-n-1},\\
    \Delta_L(\mathsf t_{0,m})&=\square (\mathsf t_{0,m}),
    \end{split}
\end{equation}
where $\square(x):=x\otimes 1+1\otimes \Psi_L(x)$ for $x\in \mathsf A^{(1)}$. Here $\widetilde\otimes$ is the completed tensor product defined in the Appendix \ref{sec: Completion of Tensor Product}. It is straightforward to see that 
\begin{align}
    (\mathfrak C_{\mathsf A}\otimes 1)\circ \Delta_L=\Psi_L,
\end{align}
where $\mathfrak C_{\mathsf A}: \mathsf A^{(1)}\to \mathbb C[\epsilon_1,\epsilon_2]$ is the natural augmentation morphism sending all generators $\mathsf t_{m,n}$ to zero. Moreover, it is also straightforward to show that
\begin{align}
    (\Delta_{L_1}\otimes 1)\circ \Delta_{L_2}=(1\otimes \Delta_{L_1,L_2})\circ  \Delta_{L_1+L_2},
\end{align}
where $\Delta_{L_1,L_2}: \mathcal W^{(1)}_{L_1+L_2}\to \mathcal W^{(1)}_{L_1}\otimes \mathcal W^{(1)}_{L_2}$ is the (injective) vertex algebra map induced from the splitting the Miura operator into two parts.

More generally, the splitting of the Miura operator \eqref{eqn: Miura operator} induces an injective vertex algebra map $\Delta_{L_1,L_2}: \mathcal W^{(K)}_{L_1+L_2}\to \mathcal W^{(K)}_{L_1}\otimes \mathcal W^{(K)}_{L_2}$, and its explicit form is given by
\begin{equation}
    \begin{split}
    \Delta_{L_1,L_2}(W^{a(r)}_b(z))=\sum_{\substack{(s,t,u)\in\mathbb N^3\\s+t+u=r}} \binom{L_2-t}{u}\alpha^u \partial^u W^{a(s)}_c(z)\otimes W^{c(t)}_b(z),
\end{split}
\end{equation}
where we have set $W^{a(0)}_b(z)=\delta^a_b$.

The main goal of this section is to generalize the construction of $\Delta_L$ and $\Psi_L$ to the cases when $K>1$.

\bigskip In the rest of this section, we freely extend $\mathcal W^{(K)}_L$ by $\mathbb C[\epsilon_1]$, and set
\begin{align}
    \boxed{\epsilon_1\alpha=\epsilon_3,\quad \epsilon_1\bar\alpha=\epsilon_2.}
\end{align}

\subsection{A map from \texorpdfstring{$\mathsf A^{(K)}$}{Ak} to the mode algebra of affine Kac-Moody vertex algebra}

\begin{proposition}\label{prop: map A to affine Kac-Moody}
For all $K\in\mathbb N_{\ge 1}$, there is an algebra homomorphism $$\Psi_1: \mathsf A^{(K)}\to \mathfrak{U}(\widehat{\mathfrak{gl}}_K^{\alpha})[\bar\alpha^{-1}]$$ which is uniquely determined by the map on generators
\begin{equation}
\boxed{
\begin{aligned}
&\Psi_1(\mathsf T_{0,n}(E^a_b))= J^a_{b,n},\\
&\Psi_1(\mathsf T_{1,n}(E^a_b))= \epsilon_1\sum_{m=n}^{\infty}J^a_{c,n-1-m}J^c_{b,m}+\frac{\epsilon_3 n}{2}J^a_{b,n-1}+\epsilon_1\sum_{k=0}^{n-1}\frac{k+1}{n+1}J^a_{c,n-1-k}J^c_{b,k},\\
&\Psi_1(\mathsf t_{2,0})=\frac{\epsilon_1}{6\bar\alpha}\sum_{k,l\in \mathbb Z}\left(:J^a_{b,-k-l-2}J^b_{c,k} J^c_{a,l}:+:J^b_{a,-k-l-2}J^c_{b,k} J^a_{c,l}:\right)\\
&-\epsilon_1\sum_{n=1}^{\infty}n J^{a}_{b,-n-1}J^{b}_{a,n-1}-\frac{\epsilon_1}{\bar\alpha}\sum_{n=1}^{\infty}nJ^{a}_{a,-n-1}J^{b}_{b,n-1},
\end{aligned}}
\end{equation}
in particular
\begin{equation}
\Psi_1(\mathsf t_{1,n})= \frac{1}{2\bar\alpha}\sum_{m\in \mathbb Z}:J^a_{b,-m+n-1}J^b_{a,m}:+\frac{\alpha n}{2\bar\alpha}J^a_{a,n-1}=-\mathsf L_{n-1}+\frac{\alpha n}{2\bar\alpha}J^a_{a,n-1}.
\end{equation}
Here $T(z)=\sum_{n\in \mathbb Z}\mathsf L_nz^{-n-2}$ is the mode expansion of the stress-energy operator.
\end{proposition}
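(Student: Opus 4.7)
The plan is to verify directly that the given assignments respect the defining relations of $\mathsf A^{(K)}$. To keep this manageable, I will invoke the simpler presentation of $\mathsf A^{(K)}[\epsilon_2^{-1},\epsilon_3^{-1}]$ from Theorem \ref{thm: simpler definition, K>1}, which cuts the required relations down to (A0')--(A3'). First I extend the stated map to all generators: set $\Psi_1(\mathsf t_{n,m}) := (\epsilon_1\bar\alpha)^{-1}\Psi_1(\mathsf T_{n,m}(1))$ (consistent with (A0)) and
$$\Psi_1(\mathsf T_{n,m}(X)) := \frac{m!}{2^n(n+m)!}\,\mathrm{ad}_{\Psi_1(\mathsf t_{2,0})}^n\bigl(\Psi_1(\mathsf T_{0,n+m}(X))\bigr).$$
A preliminary compatibility check is that this definition recovers the explicit formula for $\Psi_1(\mathsf T_{1,n}(X))$ stated in the proposition, which amounts to evaluating $[\Psi_1(\mathsf t_{2,0}), J^a_{b,n+1}]$ via the $\widehat{\mathfrak{gl}}_K^\alpha$ OPE on the cubic normally ordered expression; the terms of that cubic that involve the identity $\delta^c_c$ traces account for the $\bar\alpha^{-1}$ Heisenberg pieces and cancel against the quadratic piece of $\Psi_1(\mathsf t_{2,0})$ to leave the finite sum displayed in the proposition.

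The verifications of (A0') and (A1') are short: (A0') is built into the definition of $\Psi_1(\mathsf t_{n,m})$, and (A1') reduces to $[J^X_m, J^Y_n] = J^{[X,Y]}_{m+n}$, whose central cocycle $m\,\delta_{m+n,0}\,\kappa_\alpha(X,Y)$ vanishes whenever $m, n \ge 0$. Relation (A2') splits into the $\mathfrak{sl}_2$-bracket relations among $\mathsf t_{2,0}, \mathsf t_{1,1}, \mathsf t_{0,2}$ and the action of this $\mathfrak{sl}_2$ on $\mathsf T_{r,s}(X)$ with $r+s \le 1$; the action piece is essentially the preliminary compatibility check above, while the three bracket relations among the $\mathsf t$'s are a standard Sugawara-type computation done by splitting $\Psi_1(\mathsf t_{2,0})$ into cubic and quadratic pieces and using the well-known crossing identities for normally ordered products.

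The main obstacle is relation (A3'), the big commutator
$[\Psi_1(\mathsf T_{1,0}(X)), \Psi_1(\mathsf T_{0,n}(Y))] = \Psi_1(\mathrm{RHS})$. Since $\Psi_1(\mathsf T_{1,0}(E^a_b)) = \epsilon_1 \sum_{m\ge 0} J^a_{c,-1-m} J^c_{b,m}$, this is the commutator of a normally ordered quadratic in currents with a single current $J^Y_n$. The four types of terms on the right of (A3')---the $[X,Y]$ double-commutator term, the $\epsilon_3\{X,Y\}$ anticommutator, the $\epsilon_1\mathrm{tr}(Y)$ piece, and the quadratic $\Omega$-weighted sums---arise systematically from single and double contractions against the level form $\kappa_\alpha(E^a_b, E^c_d) = \alpha\,\delta^a_d\delta^c_b + \delta^a_b\delta^c_d$, the two summands of $\kappa_\alpha$ producing the $\epsilon_3$- and $\epsilon_1\mathrm{tr}(Y)$-prefactored contributions respectively. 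Bookkeeping of the mode shifts introduced by normal ordering turns the naive infinite sums into the finite sums indexed by $m \in \{0,\dots,n-1\}$ in (A3'); this is the technical heart of the proof and is where I expect to spend the bulk of the computation.

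Finally, once $\Psi_1$ is established as an algebra map $\mathsf A^{(K)}[\bar\alpha^{-1}, \epsilon_3^{-1}] \to \mathfrak U(\widehat{\mathfrak{gl}}_K^\alpha)[\bar\alpha^{-1}, \epsilon_3^{-1}]$ by Steps 1--3, I descend to $\mathbb C[\epsilon_1][\bar\alpha^{\pm}]$. By the PBW theorem (Theorem \ref{thm: PBW}) $\mathsf A^{(K)}$ is a free $\mathbb C[\epsilon_1,\epsilon_2]$-module, so $\mathsf A^{(K)}[\bar\alpha^{-1}]$ embeds into the further $\epsilon_3$-localization, and it suffices to check that the images of generators are free of $\epsilon_3^{-1}$. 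The explicit images of $\mathsf T_{0,n}(X), \mathsf T_{1,n}(X), \mathsf t_{2,0}$ are manifestly $\epsilon_3$-regular, and this property is preserved by the iterated brackets with $\Psi_1(\mathsf t_{2,0})$ used to define $\Psi_1$ on the remaining generators, completing the proof.
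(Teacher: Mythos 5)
Your overall strategy coincides with the paper's: assume $K\ge 2$ (the $K=1$ case being known), invoke Theorem \ref{thm: simpler definition, K>1} to reduce to the relations \eqref{eqn: A0'}--\eqref{eqn: A3'}, and verify them by direct computation in $\mathfrak{U}(\widehat{\mathfrak{gl}}_K^{\alpha})$. However, you have mislocated the difficulty, and at the actual hard point your argument has a gap. The relation in \eqref{eqn: A2'} with $(p,q)=(2,0)$ and $(r,s)=(1,0)$ reads $[\mathsf t_{2,0},\mathsf T_{1,0}(X)]=0$, i.e.\ the commutator of the cubic-in-currents expression $\Psi_1(\mathsf t_{2,0})$ with the normally ordered quadratic $\epsilon_1\sum_{m\ge 0}J^a_{c,-1-m}J^c_{b,m}$ must vanish identically. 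You fold this into ``the action of this $\mathfrak{sl}_2$ on $\mathsf T_{r,s}(X)$ with $r+s\le 1$'' and assert it is ``essentially the preliminary compatibility check above.'' It is not: the preliminary check computes $[\Psi_1(\mathsf t_{2,0}),J^a_{b,m}]$ (cubic against a single mode), whereas here you need cubic against a quadratic, and the result must be exactly zero. Nor can you deduce it from $\mathfrak{sl}_2$ representation theory: $\mathsf T_{1,0}(X)=\tfrac12\mathrm{ad}_{\mathsf t_{2,0}}(\mathsf T_{0,1}(X))$ is the image of a weight-$1$ highest weight vector, but without already knowing local nilpotency of $\mathrm{ad}_{\Psi_1(\mathsf t_{2,0})}$ on the image --- which is precisely the claim --- the Verma module of highest weight $1$ does not force $\mathrm{ad}^2=0$. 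In the paper this single identity consumes the bulk of the proof: the commutator decomposes into three families of contour-integral terms, one of which vanishes by a parity argument on $J(x)_\pm$, while the other two cancel against each other only after nontrivial contour deformations that pick up OPE contributions proportional to $\alpha K+1$. Your proposal contains no mechanism for producing this cancellation.

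By contrast, the step you flag as ``the technical heart'' --- relation \eqref{eqn: A3'} --- is comparatively routine: once $[\Psi_1(\mathsf t_{2,0}),J^a_{b,n+1}]$ is known, the identification $\Psi_1(\mathsf T_{1,n}(E^a_b))=\frac{1}{2n+2}[\Psi_1(\mathsf t_{2,0}),\Psi_1(\mathsf T_{0,n+1}(E^a_b))]$ is immediate, and the left-hand side of \eqref{eqn: A3'} is a quadratic-against-linear commutator requiring only single contractions against $\kappa_\alpha$; your description of how the four families of terms arise there is correct. So the plan for \eqref{eqn: A3'} is fine; what is missing is a genuine proof of $[\Psi_1(\mathsf t_{2,0}),\Psi_1(\mathsf T_{1,0}(E^a_b))]=0$. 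Your closing descent argument (checking that the generators' images are $\epsilon_3$-regular and using freeness of $\mathsf A^{(K)}$ over $\mathbb C[\epsilon_1,\epsilon_2]$) is sound, and is in fact slightly more careful than the paper, which leaves this step implicit.
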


\begin{proof}
The case $K=1$ is treated in \cite{gaiotto2022miura}, so we assume $K\ge 2$. By Theorem \ref{thm: simpler definition, K>1}, it is enough to check relations \eqref{eqn: A0'}-\eqref{eqn: A3'}. \eqref{eqn: A0'} and \eqref{eqn: A1'} are straightforward to check. Let us examine \eqref{eqn: A2'} first. Since $-\Psi_1(\mathsf t_{1,1})$ is the stress-energy operator $\mathsf L_0$ (modulo a central term), which satisfies $[\mathsf L_0,J^a_{b,n}]=-nJ^a_{b,n}$, so we have
\begin{align*}
    [\Psi(\mathsf t_{1,1}),\Psi(\mathsf t_{2,0})]=-2\Psi(\mathsf t_{2,0}),\quad [\Psi(\mathsf t_{1,1}),\Psi(\mathsf t_{0,2})]=2\Psi(\mathsf t_{0,2}).
\end{align*}
To prove the other commutation relations, we need the following equation
\begin{equation}\label{eqn: [t[2,0],J(x)]}
\begin{split}
&\frac{1}{\epsilon_1}[\Psi_1(\mathsf t_{2,0}),J^a_b(x)]=\alpha \partial^2(J^a_b(x)_- - J^a_b(x)_+)-\frac{1}{2}\partial(:J^a_c(x)J^c_b(x):+:J^c_b(x)J^a_c(x):)\\
&+\oint_{|x|>|w|}\frac{J^a_c(x)J^c_b(w)-J^c_b(x)J^a_c(w)}{(x-w)^2}\:\frac{dw}{2\pi i} +\oint_{|w|>|x|}\frac{J^c_b(w)J^a_c(x)-J^a_c(w)J^c_b(x)}{(x-w)^2}\:\frac{dw}{2\pi i},
\end{split}
\end{equation}
where $A(x)=A(x)_++A(x)_-=\sum_{n<0}A_n x^{-n-1}+\sum_{n\ge 0}A_n x^{-n-1}$ is the decomposition of a local field $A(x)$ into non-negative powers and negative powers in coordinate $x$. Taking the negative Fourier modes of \eqref{eqn: [t[2,0],J(x)]}, we get
\begin{equation}\label{eqn: [t[2,0],J_n]}
\begin{split}
&\frac{1}{2n}[\Psi_1(\mathsf t_{2,0}),J^a_{b,n}]=\frac{\epsilon_1}{2n}\oint \frac{1}{\epsilon_1}[\Psi_1(\mathsf t_{2,0}),J^a_b(x)]x^n\:\frac{dx}{2\pi i}\\
&=\epsilon_1\sum_{m=n-1}^{\infty}J^a_{c,n-2-m}J^c_{b,m}+\frac{\epsilon_3 (n-1)}{2}J^a_{b,n-2}+\epsilon_1\sum_{k=0}^{n-2}\frac{k+1}{n}J^a_{c,n-2-k}J^c_{b,k},\;(n>0)
\end{split}
\end{equation}
Using \eqref{eqn: [t[2,0],J_n]} we get
\begin{align*}
    [\Psi(\mathsf t_{2,0}),\Psi(\mathsf t_{0,2})]=\frac{2}{\bar\alpha}\sum_{m\in \mathbb Z}:J^a_{b,-m}J^b_{a,m}:+\frac{2\alpha}{\bar\alpha}J^a_{a,0}=4\Psi(\mathsf t_{1,1}),
\end{align*}
so the first line of \eqref{eqn: A2'} is verified. For the second line of \eqref{eqn: A2'}, showing that $[\Psi_1(\mathsf t_{2,0}),\Psi_1(\mathsf T_{1,0}(E^a_b))]$ vanishes is the only term that requires efforts and we present the computation here. First we rewrite 
\begin{align}
    \Psi_1(\mathsf T_{1,0}(E^a_b))=\epsilon_1\oint_{|x|>|y|}\frac{J^a_c(x)J^c_b(y)}{x-y}\:\frac{dx}{2\pi i}\:\frac{dy}{2\pi i}.
\end{align}
Then we expand the commutator
\begin{equation}\label{eqn: commutator to be proven}
\begin{split}
\frac{1}{\epsilon_1^2}[\Psi_1(\mathsf t_{2,0}),\Psi_1(\mathsf T_{1,0}(E^a_b))]=\frac{1}{\epsilon_1}\oint_{|x|>|y|}\frac{[\Psi_1(\mathsf t_{2,0}),J^a_c(x)]J^c_b(y)+J^a_c(x)[\Psi_1(\mathsf t_{2,0}),J^c_b(y)]}{x-y}\:\frac{dx}{2\pi i}\:\frac{dy}{2\pi i}.
\end{split}
\end{equation}
The right-hand-side of \eqref{eqn: commutator to be proven} is the sum of the following three terms
\begin{equation}\label{eqn: 1st term}
\begin{split}
2\alpha\oint_{|x|>|y|}\frac{(J^a_c(x)_--J^a_c(x)_+)J^c_b(y)+J^a_c(x)(J^c_b(y)_--J^c_b(y)_+)}{(x-y)^3}\:\frac{dx}{2\pi i}\:\frac{dy}{2\pi i},
\end{split}
\end{equation}
and
\begin{equation}\label{eqn: 2nd term}
\begin{split}
\frac{1}{2}\oint_{|x|>|y|}\frac{J^a_c(x)(:J^d_b(y)J^c_d(y):+:J^c_d(y)J^d_b(y):)-(:J^a_d(x)J^d_c(x):+:J^d_c(x)J^a_d(x):)J^c_b(y)}{(x-y)^2}\:\frac{dx}{2\pi i}\:\frac{dy}{2\pi i},
\end{split}
\end{equation}
and
\begin{equation}\label{eqn: 3rd term}
\begin{split}
&\oint_{|x|>|y|>|w|}\frac{J^a_c(x)(J^c_d(y)J^d_b(w)-J^d_b(y)J^c_d(w))}{(x-y)(y-w)^2}\:\frac{dx}{2\pi i}\:\frac{dy}{2\pi i}\:\frac{dw}{2\pi i}\\
&+\oint_{|x|>|w|>|y|}\frac{J^a_c(x)(J^d_b(w)J^c_d(y)-J^c_d(w)J^d_b(y))}{(x-y)(y-w)^2}\:\frac{dx}{2\pi i}\:\frac{dw}{2\pi i}\:\frac{dy}{2\pi i}\\
&+\oint_{|x|>|w|>|y|}\frac{(J^a_d(x)J^d_c(w)-J^d_c(x)J^a_d(w))J^c_b(y)}{(x-y)(x-w)^2}\:\frac{dx}{2\pi i}\:\frac{dw}{2\pi i}\:\frac{dy}{2\pi i}\\
&+\oint_{|w|>|x|>|y|}\frac{(J^d_c(w)J^a_d(x)-J^a_d(w)J^d_c(x))J^c_b(y)}{(x-y)(x-w)^2}\:\frac{dw}{2\pi i}\:\frac{dx}{2\pi i}\:\frac{dy}{2\pi i}.
\end{split}
\end{equation}
We will show that \eqref{eqn: 1st term} vanishes and \eqref{eqn: 2nd term} cancels with \eqref{eqn: 3rd term}. We simplify these terms as follows.\\

\noindent $\bullet$ \eqref{eqn: 1st term}. Notice that 
\begin{align}
    \oint_{|x|>|y|}\frac{A(x)_-B(y)}{(x-y)^3}\:\frac{dx}{2\pi i}\:\frac{dy}{2\pi i}=0=\oint_{|x|>|y|}\frac{A(x)B(y)_+}{(x-y)^3}\:\frac{dx}{2\pi i}\:\frac{dy}{2\pi i}
\end{align}
for any pair of local fields $A(x),B(y)$, so \eqref{eqn: 1st term} equals to
\begin{equation}
\begin{split}
2\alpha\oint_{|x|>|y|}\frac{-J^a_c(x)J^c_b(y)+J^a_c(x)J^c_b(y)}{(x-y)^3}\:\frac{dx}{2\pi i}\:\frac{dy}{2\pi i}=0.
\end{split}
\end{equation}
\noindent $\bullet$ \eqref{eqn: 2nd term}. Using the identity
\begin{align}
    :A(z)B(z):=\oint_{|w|>|z|}\frac{A(w)B(z)}{w-z}\:\frac{dw}{2\pi i}\:\frac{dz}{2\pi i}-\oint_{|z|>|w|}\frac{B(z)A(w)}{w-z}\:\frac{dz}{2\pi i}\:\frac{dw}{2\pi i}
\end{align}
for any pair of local fields $A(z),B(z)$, we expand \eqref{eqn: 2nd term} into
\begin{equation*}
\begin{split}
&\frac{1}{2}\oint_{|x|>|w|>|y|}\left(\frac{1}{(x-y)^2(w-y)}+\frac{1}{(x-w)^2(w-y)}\right)\times\\
&\qquad\times (J^a_c(x)J^d_b(w)J^c_d(y)+J^a_c(x)J^c_d(w)J^d_b(y))\:\frac{dx}{2\pi i}\:\frac{dw}{2\pi i}\:\frac{dy}{2\pi i}\\
&-\frac{1}{2}\oint_{|x|>|w|>|y|}\left(\frac{1}{(x-y)^2(x-w)}+\frac{1}{(w-y)^2(x-w)}\right)\times\\
&\qquad\times (J^d_c(x)J^a_d(w)J^c_b(y)+J^a_d(x)J^d_c(w)J^c_b(y))\:\frac{dx}{2\pi i}\:\frac{dw}{2\pi i}\:\frac{dy}{2\pi i}\\
=&\oint_{|x|>|w|>|y|}\frac{J^a_c(x)J^d_b(w)J^c_d(y)-J^c_d(x)J^a_c(w)J^d_b(y)}{(x-w)(x-y)(w-y)}\:\frac{dx}{2\pi i}\:\frac{dw}{2\pi i}\:\frac{dy}{2\pi i}\\
&~+\frac{1}{2}\oint_{|x|>|w|>|y|}\frac{w-y}{(x-y)^2(x-w)^2}(J^a_c(x)J^d_b(w)J^c_d(y)+J^a_c(x)J^c_d(w)J^d_b(y))\:\frac{dx}{2\pi i}\:\frac{dw}{2\pi i}\:\frac{dy}{2\pi i}\\
&~-\frac{1}{2}\oint_{|x|>|w|>|y|}\frac{x-w}{(x-y)^2(w-y)^2}(J^d_c(x)J^a_d(w)J^c_b(y)+J^a_d(x)J^d_c(w)J^c_b(y))\:\frac{dx}{2\pi i}\:\frac{dw}{2\pi i}\:\frac{dy}{2\pi i}.\\
\end{split}
\end{equation*}
The last two lines vanish identically, in fact
\begin{align*}
&\oint_{|x|>|w|>|y|}\frac{w-y}{(x-y)^2(x-w)^2}(J^a_c(x)J^d_b(w)J^c_d(y)+J^a_c(x)J^c_d(w)J^d_b(y))\:\frac{dx}{2\pi i}\:\frac{dw}{2\pi i}\:\frac{dy}{2\pi i}\\
=&\frac{1}{2}\oint_{|x|>|w|>|y|}\frac{w-y}{(x-y)^2(x-w)^2}(J^a_c(x)J^d_b(w)J^c_d(y)+J^a_c(x)J^c_d(w)J^d_b(y))\:\frac{dx}{2\pi i}\:\frac{dw}{2\pi i}\:\frac{dy}{2\pi i}\\
&~+\frac{1}{2}\oint_{|x|>|y|>|w|}\frac{y-w}{(x-y)^2(x-w)^2}(J^a_c(x)J^d_b(w)J^c_d(y)+J^a_c(x)J^c_d(w)J^d_b(y))\:\frac{dx}{2\pi i}\:\frac{dy}{2\pi i}\:\frac{dw}{2\pi i},
\end{align*}
by deforming the integration contour we pick up the OPE and the above simplifies to
\begin{align*}
\frac{1}{2}\oint_{|x|>|y|}\frac{1}{(x-y)^4}2J^a_c(x)(\alpha\delta^d_d\delta^c_b+\delta^d_b\delta^c_d)\:\frac{dx}{2\pi i}\:\frac{dy}{2\pi i}=\oint_{|x|>|y|}\frac{\alpha K+1}{(x-y)^4}J^a_b(x)\:\frac{dx}{2\pi i}\:\frac{dy}{2\pi i}=0.
\end{align*}
The last line vanishes for similar reason. Therefore we conclude that
\begin{align}
    {\eqref{eqn: 2nd term}}=\oint_{|x|>|w|>|y|}\frac{J^a_c(x)J^d_b(w)J^c_d(y)-J^c_d(x)J^a_c(w)J^d_b(y)}{(x-w)(x-y)(w-y)}\:\frac{dx}{2\pi i}\:\frac{dw}{2\pi i}\:\frac{dy}{2\pi i}.
\end{align}
\noindent $\bullet$ \eqref{eqn: 3rd term}. We substitute variables in each line of \eqref{eqn: 3rd term} into the order $|x|>|w|>|y|$, then summing all terms and arriving at
\begin{align}
    {\eqref{eqn: 3rd term}}=\oint_{|x|>|w|>|y|}\frac{J^c_d(x)J^a_c(w)J^d_b(y)-J^a_c(x)J^d_b(w)J^c_d(y)}{(x-w)(x-y)(w-y)}\:\frac{dx}{2\pi i}\:\frac{dw}{2\pi i}\:\frac{dy}{2\pi i}.
\end{align}
Plugging \eqref{eqn: 1st term}, \eqref{eqn: 2nd term} and \eqref{eqn: 3rd term} back into \eqref{eqn: commutator to be proven}, we conclude that $[\Psi_1(\mathsf t_{2,0}),\Psi_1(\mathsf T_{1,0}(E^a_b))]=0$, thus we have verified all commutation relations in \eqref{eqn: A2'}.\\

The relations \eqref{eqn: A3'} are much easier to show. In fact \eqref{eqn: [t[2,0],J_n]} implies that 
\begin{align}
    \frac{1}{2n+2}[\Psi_1(\mathsf t_{2,0}),\Psi_1(\mathsf T_{0,n+1}(E^a_b))]=\Psi_1(\mathsf T_{1,n}(E^a_b)),
\end{align}
and it is straightforward to compute that
\begin{equation}
\begin{split}
&[\Psi_1(\mathsf T_{1,0}(E^a_b)),\Psi_1(\mathsf T_{0,n}(E^c_d))]=\\
&\delta^c_b \Psi_1(\mathsf T_{1,n}(E^a_d))-\delta^a_d \Psi_1(\mathsf T_{1,n}(E^c_b))-\frac{\epsilon_3 n}{2}\left(\delta^c_b J^a_{d,n-1}+\delta^a_d J^c_{b,n-1}\right)-n\epsilon_1\delta^c_d J^a_{b,n-1}\\
&+\epsilon_1\sum_{m=0}^{n-1}J^a_{d,m}J^c_{b,n-1-m}
-\epsilon_1\sum_{m=0}^{n-1}\frac{m+1}{n+1}\delta^a_d J^c_{f,m}J^f_{b,n-1-m}-\epsilon_1\sum_{m=0}^{n-1}\frac{n-m}{n+1}\delta^c_b J^a_{f,m}J^f_{d,n-1-m},
\end{split}
\end{equation}
which implies \eqref{eqn: A3'}. This finishes the proof of the Proposition \ref{prop: map A to affine Kac-Moody}.
\end{proof}

\subsection{Mixed coproduct: the basic case}
In the Appendix \ref{sec: Completion of Tensor Product} we have defined a notion of completed tensor product of $\mathbb Z$-graded $\mathbb C[\epsilon_1,\epsilon_2]$-algebras, we apply this construction to the algebras $\mathsf A^{(K)}$ and $\mathfrak{U}(\widehat{\mathfrak{gl}}_K^{\alpha})$ and get the $\mathbb Z$-graded $\mathbb C[\epsilon_1,\epsilon_2]$-algebra $\mathsf A^{(K)}\widetilde\otimes\mathfrak{U}(\widehat{\mathfrak{gl}}_K^{\alpha})$, where the $\mathbb Z$-grading is determined by
\begin{align}\label{eqn: grading on tensor product}
    \deg \mathsf T_{n,m}(E^a_b)=\deg \mathsf t_{n,m}=m-n,\quad \deg J^{a}_{b,n}=n.
\end{align}
\begin{proposition}\label{prop: mixed coproduct L=1}
For all $K\in\mathbb N_{\ge 1}$, there is an algebra homomorphism $$\Delta_1: \mathsf A^{(K)}\to \mathsf A^{(K)}\widetilde\otimes\mathfrak{U}(\widehat{\mathfrak{gl}}_K^{\alpha})[\bar\alpha^{-1}]$$ which is uniquely determined by the map on generators
\begin{equation}
\boxed{
\begin{aligned}
&\Delta_1(\mathsf T_{0,n}(E^a_b))= \square(\mathsf T_{0,n}(E^a_b)),\\
&\Delta_1(\mathsf T_{1,n}(E^a_b))=\square(\mathsf T_{1,n}(E^a_b))+ \epsilon_1\sum_{m=n}^{\infty}\left(\mathsf T_{0,m}(E^c_b)\otimes J^{a}_{c,n-m-1}-\mathsf T_{0,m}(E^a_c)\otimes J^{c}_{b,n-m-1}\right)\\
&~+\epsilon_3n\mathsf T_{0,n-1}(E^a_b)\otimes 1+\epsilon_1\sum_{m =0}^{n-1}\frac{m+1}{n+1}\left(\mathsf T_{0,m}(E^c_b)\otimes J^{a}_{c,n-m-1}-\mathsf T_{0,m}(E^a_c)\otimes J^{c}_{b,n-m-1}\right),\\
&\Delta_1(\mathsf t_{2,0})=\square(\mathsf t_{2,0})-2\epsilon_1\sum_{n=1}^{\infty}n \left(\mathsf T_{0,n-1}(E^a_b)\otimes J^{b}_{a,-n-1}+\epsilon_1\mathsf t_{0,n-1}\otimes J^{a}_{a,-n-1}\right),
\end{aligned}}
\end{equation}
where $\square(x):=x\otimes 1+1\otimes \Psi_1(x)$. In particular $\Delta_1(\mathsf t_{1,n})=\square(\mathsf t_{1,1})+\epsilon_3 n \mathsf t_{0,0}\otimes 1$.
\end{proposition}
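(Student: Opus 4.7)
The plan is to verify that $\Delta_1$ respects the defining relations of $\mathsf A^{(K)}$. Since the $K=1$ case is established in \cite{gaiotto2022miura}, I would focus on $K\geq 2$ and invoke Theorem \ref{thm: simpler definition, K>1} to reduce the task to checking the smaller set of relations \eqref{eqn: A0'}--\eqref{eqn: A3'} after inverting $\epsilon_2\epsilon_3$; this localization is harmless by the PBW Theorem \ref{thm: PBW}, since $\mathsf A^{(K)}$ embeds into its localization at $\epsilon_2\epsilon_3$.

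Write $\square(x):= x\otimes 1 + 1\otimes \Psi_1(x)$ and decompose $\Delta_1(x) = \square(x) + R(x)$, where $R(x)$ denotes the explicit correction terms displayed in the statement. The key structural observation is that $\square$ is a Lie homomorphism, i.e.\ $[\square(x),\square(y)] = \square([x,y])$, which follows directly from $\Psi_1$ being an associative algebra homomorphism (Proposition \ref{prop: map A to affine Kac-Moody}). Consequently, verifying a commutation relation $[\Delta_1(x),\Delta_1(y)] = \Delta_1([x,y])$ reduces to the equality
\begin{equation*}
[\square(x),R(y)] + [R(x),\square(y)] + [R(x),R(y)] \;=\; \Delta_1([x,y]) - \square([x,y]).
\end{equation*}
Since $R\equiv 0$ on the subalgebra generated by $\{\mathsf T_{0,n}(E^a_b)\}$, relations \eqref{eqn: A0'} and \eqref{eqn: A1'} are immediate from the corresponding identities for $\Psi_1$.

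For \eqref{eqn: A2'}, the nontrivial checks are the $\mathfrak{sl}_2$ commutators among $\{\mathsf t_{2,0},\mathsf t_{1,1},\mathsf t_{0,2}\}$ and their action on $\mathsf T_{0,n}$ and $\mathsf T_{1,0}$. The commutator $[\Delta_1(\mathsf t_{2,0}),\Delta_1(\mathsf T_{0,n}(X))]$ is computed by expanding $R(\mathsf t_{2,0})$ and using the $\widehat{\mathfrak{gl}}_K^\alpha$ brackets; the output should be $2n\,\Delta_1(\mathsf T_{1,n-1}(X))$, which simultaneously verifies this case of \eqref{eqn: A2'} and reproduces the stated formula for $\Delta_1(\mathsf T_{1,n}(E^a_b))$ via the recursion $\mathsf T_{1,n-1}(X) = \tfrac{1}{2n}[\mathsf t_{2,0},\mathsf T_{0,n}(X)]$ implicit in Theorem \ref{thm: simpler definition, K>1}. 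The brackets with $\mathsf t_{1,1}$ and $\mathsf t_{0,2}$ are parallel, using the stress-energy/primary-field data identified in the proof of Proposition \ref{prop: map A to affine Kac-Moody}.

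The main obstacle is relation \eqref{eqn: A3'}, whose right-hand side involves quadratic terms $\mathsf T_{0,m}(A)\mathsf T_{0,n-1-m}(B)$. The identity $\square(uv) = \square(u)\square(v) - u\otimes \Psi_1(v) - v\otimes \Psi_1(u)$ produces nontrivial cross terms that must be balanced against $[R(\mathsf T_{1,0}(X)),\square(\mathsf T_{0,n}(Y))]$ from the left-hand side. Concretely, I would expand $R(\mathsf T_{1,0}(E^a_b)) = \epsilon_1\sum_m\bigl(\mathsf T_{0,m}(E^c_b)\otimes J^a_{c,-m-1} - \mathsf T_{0,m}(E^a_c)\otimes J^c_{b,-m-1}\bigr)$, use the basic bracket $[\mathsf T_{0,m}(Z),\mathsf T_{0,n}(Y)] = \mathsf T_{0,m+n}([Z,Y])$ in the first tensor factor together with the $\widehat{\mathfrak{gl}}_K^\alpha$ commutation relations in the second factor, and then regroup the resulting double sums using the combinatorial coefficient $\tfrac{m+1}{n+1}$. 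The contraction of $\Omega = E^a_b\otimes E^b_a$ appearing on the right of \eqref{eqn: A3'} is precisely what emerges from this regrouping after the indices $a,b,c$ are summed appropriately; the scalar terms $-\tfrac{\epsilon_3 n}{2}\mathsf T_{0,n-1}(\{X,Y\}) - n\epsilon_1\mathrm{tr}(Y)\mathsf T_{0,n-1}(X)$ materialize from the anomaly $n\kappa_\alpha(E^a_b, J^c_d)$ in the current algebra. The computation is lengthy but purely mechanical, paralleling the corresponding computation in Proposition \ref{prop: map A to affine Kac-Moody}, and requires no new algebraic input beyond that proposition and Theorem \ref{thm: simpler definition, K>1}.
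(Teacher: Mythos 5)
Your overall strategy coincides with the paper's: cite \cite{gaiotto2022miura} for $K=1$, use Theorem \ref{thm: simpler definition, K>1} to reduce to the relations \eqref{eqn: A0'}--\eqref{eqn: A3'}, and verify them through the decomposition $\Delta_1=\square+R$. The reduction, the treatment of \eqref{eqn: A0'} and \eqref{eqn: A1'}, and your mechanism for \eqref{eqn: A3'} --- balancing the cross terms $u\otimes\Psi_1(v)+v\otimes\Psi_1(u)$ produced by $\Delta_1$ of the quadratic terms against $[R(\mathsf T_{1,0}(X)),\square(\mathsf T_{0,n}(Y))]$ --- are all sound. Note, though, that \eqref{eqn: A3'} is comparatively routine precisely because $R(\mathsf T_{0,n}(Y))=0$, so no $[R,R]$ contribution arises there.

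The gap is that you have misplaced the difficulty, and it sits exactly where the paper spends its entire proof: the identity $[\Delta_1(\mathsf t_{2,0}),\Delta_1(\mathsf T_{1,0}(E^a_b))]=0$, i.e.\ the case $(p,q)=(2,0)$, $(r,s)=(1,0)$ of \eqref{eqn: A2'}. You name ``the action on $\mathsf T_{1,0}$'' among the nontrivial checks but never return to it; the only bracket you actually sketch is $[\Delta_1(\mathsf t_{2,0}),\Delta_1(\mathsf T_{0,n}(X))]$, and the remark that the brackets with $\mathsf t_{1,1}$ and $\mathsf t_{0,2}$ are ``parallel'' does not cover it. This case is qualitatively harder than \eqref{eqn: A3'} because \emph{both} arguments carry nonzero corrections, so one must show that $[\square(\mathsf t_{2,0}),R(\mathsf T_{1,0})]+[R(\mathsf t_{2,0}),\square(\mathsf T_{1,0})]+[R(\mathsf t_{2,0}),R(\mathsf T_{1,0})]$ vanishes, with $\Psi_1(\mathsf t_{2,0})$ cubic and $\Psi_1(\mathsf T_{1,0})$ quadratic in the currents. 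The paper's proof organizes this into the three contour-integral expressions \eqref{eqn: 1st term_coproduct}--\eqref{eqn: 3rd term_coproduct} and the cancellation rests on identities such as $[J^a_b(x)_+,\partial J^c_d(x)_+]=\tfrac12\partial^2(\delta^c_bJ^a_d(x)_+-\delta^a_dJ^c_b(x)_+)$ and its mirror for the first tensor factor; none of this follows ``in parallel'' from the computations you do describe, nor is it formally implied by the analogous cancellation inside Proposition \ref{prop: map A to affine Kac-Moody}. Until this cancellation is carried out, the proof is incomplete at its hardest point.
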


\begin{proof}
The case $K=1$ is treated in \cite{gaiotto2022miura}, so we assume $K\ge 2$. By Theorem \ref{thm: simpler definition, K>1}, it is enough to check equations \eqref{eqn: A0'}-\eqref{eqn: A3'}. The only term that requires efforts is showing that $[\Delta_1(\mathsf t_{2,0}),\Delta_1(\mathsf T_{1,0}(E^a_b))]=0$, and we present the computation here.

First we introduce notation
\begin{align}
    E^a_b(z):=\sum_{m=0}^{\infty}\mathsf T_{0,m}(E^a_b)z^{-m-1},\quad \mathsf t(z):=\sum_{m=0}^{\infty}\mathsf t_{0,m}z^{-m-1},
\end{align}
and rewrite
\begin{equation}
\begin{split}
&\Delta_1(\mathsf T_{1,0}(E^a_b))=\square(\mathsf T_{1,0}(E^a_b))+\epsilon_1\oint_{|x|<|y|}\frac{E^a_c(x)\otimes J^c_b(y)-E^c_b(x)\otimes J^a_c(y)}{x-y}\:\frac{dx}{2\pi i}\:\frac{dy}{2\pi i}\\
&\Delta_1(\mathsf t_{2,0})=\square(\mathsf t_{2,0})-2\epsilon_1\oint_{|x|<|y|}\frac{E^a_b(x)\otimes J^b_a(y)+\epsilon_1\mathsf t(x)\otimes J^a_a(y)}{(x-y)^2}\:\frac{dx}{2\pi i}\:\frac{dy}{2\pi i}.
\end{split}
\end{equation}
Note that only the non-negative part $J(y)_+:=\sum_{n<0}J_n y^{-n-1}$ of the local field $J(y)$ in the integrand contributes to the integral. Since $[\square(\mathsf t_{2,0}),\square(\mathsf T_{1,0}(E^a_b))]=0$, the commutator $\frac{1}{2\epsilon_1^2}[\Delta_1(\mathsf T_{1,0}(E^a_b)),\Delta_1(\mathsf t_{2,0})]$ can be written as the sum of the following three terms:
\begin{equation}\label{eqn: 1st term_coproduct}
\begin{split}
    &\frac{1}{\epsilon_1}\oint_{|x|<|y|}\frac{E^d_c(x)}{x-y}\otimes \left([\Psi_1(\mathsf T_{1,0}(E^a_b)),\partial J^c_d(y)]+\frac{1}{2}[\Psi_1(\mathsf t_{2,0}),\delta^c_bJ^a_d(y)-\delta^a_dJ^c_b(y)]\right)\:\frac{dx}{2\pi i}\:\frac{dy}{2\pi i}\\
    &-\epsilon_2\oint_{|x|<|y|}\frac{\mathsf t(x)\otimes \partial^2 J^a_b(y)}{x-y}\:\frac{dx}{2\pi i}\:\frac{dy}{2\pi i},
\end{split}
\end{equation}
and
\begin{equation}\label{eqn: 2nd term_coproduct}
\begin{split}
    &-\frac{1}{\epsilon_1}\oint_{|x|<|y|}\left([\mathsf T_{1,0}(E^a_b),\partial E^c_d(x)]+\frac{1}{2}[\mathsf t_{2,0},\delta^c_bE^a_d(y)-\delta^a_dE^c_b(y)]\right)\otimes \frac{J^d_c(y)}{x-y}\:\frac{dx}{2\pi i}\:\frac{dy}{2\pi i}\\
    &+\oint_{|x|<|y|}\frac{\partial^2 E^a_b(x)\otimes J^c_c(y)}{x-y}\:\frac{dx}{2\pi i}\:\frac{dy}{2\pi i},
\end{split}
\end{equation}
and
\begin{equation}\label{eqn: 3rd term_coproduct}
\begin{split}
    &\oint_{|x|<|y|}\frac{E^d_c(x)\otimes (J^a_d(y)_+\partial J^c_b(y)_+ +J^c_b(y)_+\partial J^a_d(y)_+)}{x-y}\:\frac{dx}{2\pi i}\:\frac{dy}{2\pi i}\\
    &-\oint_{|x|<|y|}\frac{E^a_c(x)\otimes J^e_b(y)_+\partial J^c_e(y)_+ + E^d_b(x)\otimes J^a_e(y)_+\partial J^e_d(y)_+}{x-y}\:\frac{dx}{2\pi i}\:\frac{dy}{2\pi i}\\
    &+\oint_{|x|<|y|}\frac{(\partial E^c_b(x) E^a_d(x)+\partial E^a_d(x)E^c_b(x))\otimes J^d_c(y)}{x-y}\:\frac{dx}{2\pi i}\:\frac{dy}{2\pi i}\\
    &-\oint_{|x|<|y|}\frac{\partial E^c_e(x)E^e_b(x) \otimes J^a_c(y)+\partial E^e_d(x)E^a_e(x)\otimes J^d_b(y)}{x-y}\:\frac{dx}{2\pi i}\:\frac{dy}{2\pi i}.
\end{split}
\end{equation}
It is straightforward to compute:
\begin{equation}\label{eqn: positive part}
\begin{split}
&\frac{1}{\epsilon_1}[\Psi_1(\mathsf T_{1,0}(E^a_b)),\partial J^c_d(z)_+]+\frac{1}{2\epsilon_1}[\Psi_1(\mathsf t_{2,0}),\delta^c_bJ^a_d(z)_+-\delta^a_dJ^c_b(z)_+]=\delta^c_d\partial^2J^a_b(z)_+\\
&+\frac{\alpha}{2}\partial^2(\delta^c_bJ^a_d(z)_++\delta^a_dJ^c_b(z)_+)-\partial(J^a_d(z)_+J^c_b(z)_+)\\
&+\delta^c_b J^a_e(z)_+\partial J^e_d(z)_+ + \delta^a_d \partial J^c_e(z)_+J^e_b(z)_+,
\end{split}
\end{equation}
and
\begin{equation}\label{eqn: negative part}
\begin{split}
&\frac{1}{\epsilon_1}[\mathsf T_{1,0}(E^a_b),\partial E^c_d(z)]+\frac{1}{2\epsilon_1}[\mathsf t_{2,0},\delta^c_bE^a_d(z)-\delta^a_dE^c_b(z)]=\delta^c_d\partial^2E^a_b(z)\\
&+\frac{\alpha}{2}\partial^2(\delta^c_bE^a_d(z)+\delta^a_dE^c_b(z))+\partial(E^a_d(z)E^c_b(z))\\
&-\delta^c_b 
E^a_e(z)\partial E^e_d(z) - \delta^a_d \partial E^c_e(z)E^e_b(z).
\end{split}
\end{equation}
Plugging \eqref{eqn: positive part} into \eqref{eqn: 1st term_coproduct} and \eqref{eqn: negative part} into \eqref{eqn: 2nd term_coproduct}, and taking the sum of \eqref{eqn: 1st term_coproduct} and \eqref{eqn: 2nd term_coproduct} and \eqref{eqn: 3rd term_coproduct}, we arrive at
\begin{equation}\label{eqn: check coproduct_final}
\begin{split}
&\oint_{|x|<|y|}\frac{E^d_c(x)\otimes [J^c_b(y)_+,\partial J^a_d(y)_+]}{x-y}\:\frac{dx}{2\pi i}\:\frac{dy}{2\pi i}-\oint_{|x|<|y|}\frac{E^a_c(x)\otimes [J^e_b(y)_+,\partial J^c_e(y)_+]}{x-y}\:\frac{dx}{2\pi i}\:\frac{dy}{2\pi i}\\
&+\oint_{|x|<|y|}\frac{[\partial E^c_b(x),E^a_d(x)]\otimes J^d_c(y)}{x-y}\:\frac{dx}{2\pi i}\:\frac{dy}{2\pi i}-\oint_{|x|<|y|}\frac{[\partial E^e_d(x),E^a_e(x)]\otimes J^d_b(y)}{x-y}\:\frac{dx}{2\pi i}\:\frac{dy}{2\pi i}.
\end{split}
\end{equation}
Using the following identifies
\begin{align*}
[J^a_b(x)_+,\partial J^c_d(x)_+]=\frac{1}{2}\partial^2(\delta^c_bJ^a_d(x)_+-\delta^a_dJ^c_b(x)_+),\quad [\partial E^a_b(x),E^c_d(x)]=\frac{1}{2}\partial^2(\delta^a_dE^c_b(x)-\delta^c_bE^a_d(x))
\end{align*}
\eqref{eqn: check coproduct_final} reduces to zero. This proves that $[\Delta_1(\mathsf T_{1,0}(E^a_b)),\Delta_1(\mathsf t_{2,0})]=0$. The other relations are much easier and we omit the details.
\end{proof}

\subsection{Bootstrap the general mixed coproducts from the basic one}

The $\mathbb Z$-grading on $\mathfrak{U}(\widehat{\mathfrak{gl}}_K^{\alpha})$ naturally induces a $\mathbb Z$-grading on $\mathfrak{U}(\mathcal W^{(K)}_L)\subset\mathfrak{U}(\widehat{\mathfrak{gl}}_K^{\alpha})^{\widehat{\otimes}L}$, thus we have a completed tensor product algebra $ \mathsf A^{(K)}\widetilde\otimes\mathfrak{U}(\mathcal W^{(K)}_L)$.

Using the algebra homomorphism $\Delta_1: \mathsf A^{(K)}\to \mathsf A^{(K)}\widetilde\otimes\mathfrak{U}(\widehat{\mathfrak{gl}}_K^{\alpha})[\bar\alpha^{-1}]$, we can bootstrap a series of algebra homomorphisms $\Delta_L:\mathsf A^{(K)}\to \mathsf A^{(K)}\widetilde\otimes\mathfrak{U}(\mathcal W^{(K)}_L)[\bar\alpha^{-1}]$ for all positive integers $L$.

\begin{theorem}\label{thm: mixed coproduct}
For all $K\in\mathbb N_{\ge 1}$, there is an algebra homomorphism $$\Psi_L: \mathsf A^{(K)}\to \mathfrak{U}(\mathcal W^{(K)}_L)[\bar\alpha^{-1}]$$ which is uniquely determined by the map on generators
\begin{equation}\label{eqn: Psi_L}
\boxed{
\begin{aligned}
&\Psi_L(\mathsf T_{0,n}(E^a_b))= W^{a(1)}_{b,n},\\
&\Psi_L(\mathsf t_{1,n})=\frac{1}{\bar\alpha}\left(\frac{1}{2}\sum_{k\in \mathbb Z}:W^{a(1)}_{b,n-1-k}W^{b(1)}_{a,k}:+\frac{\alpha n}{2} W^{a(1)}_{a,n-1}-W^{a(2)}_{a,n-1}\right)=-\mathsf L_{n-1}+\frac{\alpha n L}{2\bar\alpha}W^{a(1)}_{a,n-1},\\
&\Psi_L(\mathsf T_{1,0}(E^a_b))=\epsilon_1\sum_{m\ge 0}W^{a(1)}_{c,-m-1}W^{c(1)}_{b,m}-\epsilon_1 W^{a(2)}_{b,-1},\\
&\Psi_L(\mathsf t_{2,0})=\frac{\epsilon_1}{\bar\alpha}\left(V_{-2}-\bar\alpha\sum_{n=1}^{\infty}n\:W^{a(1)}_{b,-n-1}W^{b(1)}_{a,n-1}-\sum_{n=1}^{\infty}n\: W^{a(1)}_{a,-n-1}W^{b(1)}_{b,n-1}\right).
\end{aligned}}
\end{equation}
where $V_{-2}$ is a mode of quasi-primary field $V(z)=\sum_{n\in \mathbb Z}V_{n}z^{-n-3}$ defined as
\begin{equation}\label{eqn: Delta_L}
\begin{split}
V(z):=&\frac{1}{6}\sum_{i=1}^L\left(:J^{a[i]}_b(z)J^{b[i]}_c(z)J^{c[i]}_a(z):+:J^{b[i]}_a(z)J^{c[i]}_b(z)J^{a[i]}_c(z):\right)\\
&-\bar\alpha\sum_{i<j}^L:J^{a[i]}_b(z)\partial J^{b[j]}_a(z):-\sum_{i<j}^L:J^{a[i]}_a(z)\partial J^{b[j]}_b(z):\\
=&\frac{1}{6}\left(:W^{a(1)}_b(z)W^{b(1)}_c(z)W^{c(1)}_a(z):+:W^{b(1)}_a(z)W^{c(1)}_b(z)W^{a(1)}_c(z):\right)\\
&+W^{a(3)}_a(z)-:W^{a(1)}_b(z)W^{b(2)}_a(z):+\text{ total derivatives},
\end{split}
\end{equation}
from the last equation we see that $V_{-2}\in \mathfrak{U}(\mathcal W^{(K)}_L)$.

Moreover there is an algebra homomorphism
$$\Delta_L: \mathsf A^{(K)}\to \mathsf A^{(K)}\widetilde\otimes\mathfrak{U}(\mathcal W^{(K)}_L)[\bar\alpha^{-1}]$$ which is uniquely determined by the map on generators
\begin{equation}
\boxed{
\begin{aligned}
&\Delta_L(\mathsf T_{0,n}(E^a_b))=\square( \mathsf T_{0,n}(E^a_b),\\
&\Delta_L(\mathsf t_{1,n})=\square(\mathsf t_{1,n})+\epsilon_3 nL\mathsf t_{0,n-1}\otimes 1,\\
&\Delta_L(\mathsf T_{1,0}(E^a_b))=\square(\mathsf T_{1,0}(E^a_b))+\epsilon_1\sum_{m=0}^{\infty}\left(\mathsf T_{0,m}(E^c_b)\otimes W^{a(1)}_{c,-m-1}-\mathsf T_{0,m}(E^a_c)\otimes W^{c(1)}_{b,-m-1}\right),\\
&\Delta_L(\mathsf t_{2,0})=\square(\mathsf t_{2,0})-2\epsilon_1\sum_{n=1}^{\infty}n \left(\mathsf T_{0,n-1}(E^a_b)\otimes W^{b(1)}_{a,-n-1}+\epsilon_1\mathsf t_{0,n-1}\otimes W^{a(1)}_{a,-n-1}\right),
\end{aligned}}
\end{equation}
where $\square(x):=x\otimes 1+1\otimes \Psi_L(x)$.
\end{theorem}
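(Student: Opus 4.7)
The plan is to induct on $L$, with the base case $L=1$ given by Propositions~\ref{prop: map A to affine Kac-Moody} and~\ref{prop: mixed coproduct L=1}. The inductive mechanism uses the Miura factorization: splitting off the leading factor of $(\alpha\partial-J^{[1]})(\alpha\partial-J^{[2]})\cdots(\alpha\partial-J^{[L+1]})$ induces a vertex algebra embedding $\Delta_{1,L}:\mathcal W^{(K)}_{L+1}\hookrightarrow \mathcal W^{(K)}_1 \otimes \mathcal W^{(K)}_L$, and hence, on completed mode algebras, an algebra embedding $\iota_{1,L}:\mathfrak U(\mathcal W^{(K)}_{L+1})\hookrightarrow \mathfrak U(\widehat{\gl}_K^{\alpha})\,\widehat\otimes\,\mathfrak U(\mathcal W^{(K)}_L)$. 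The explicit image of each $W^{(r)}$ under $\Delta_{1,L}$, read off from expanding the Miura product, is the main computational input.

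For the first claim, I would define the candidate
\[
\widetilde\Psi_{L+1}\;:=\;(\Psi_1\otimes\mathrm{id})\circ\Delta_L:\;\mathsf A^{(K)}\longrightarrow \mathfrak U(\widehat{\gl}_K^{\alpha})\,\widehat\otimes\,\mathfrak U(\mathcal W^{(K)}_L)[\bar\alpha^{-1}],
\]
which is automatically an algebra homomorphism. The goal is to show that (a) the image lies in $\iota_{1,L}(\mathfrak U(\mathcal W^{(K)}_{L+1}))$, and (b) the resulting map $\mathsf A^{(K)}\to \mathfrak U(\mathcal W^{(K)}_{L+1})$ agrees with \eqref{eqn: Psi_L}. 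By Theorem~\ref{thm: simpler definition, K>1} it suffices to verify (a) and (b) on the four families $\mathsf T_{0,n}(E^a_b)$, $\mathsf t_{1,n}$, $\mathsf T_{1,0}(E^a_b)$, $\mathsf t_{2,0}$. The spin-$1$ case is immediate from the additivity of modes $W^{a(1)}_{b,n}$ under Miura, and $\mathsf t_{1,n}$ follows by comparing with \eqref{eqn: stress-energy operator_finite L}. For $\mathsf T_{1,0}(E^a_b)$ one matches the quadratic correction coming from $\Delta_L$ against the Miura-expansion coefficient of $W^{a(2)}_b$ in $\mathcal W^{(K)}_{L+1}$; for $\mathsf t_{2,0}$ one identifies the cubic contributions, together with the inherited $W^{(2)},W^{(3)}$ pieces from $\mathcal W^{(K)}_L$, with $V_{-2}$ in $\mathcal W^{(K)}_{L+1}$ as defined in \eqref{eqn: Delta_L}, up to the quadratic corrections of \eqref{eqn: Psi_L}.

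The mixed coproduct is treated in parallel: set
\[
\widetilde\Delta_{L+1}\;:=\;\bigl(\mathrm{id}\otimes\iota_{1,L}^{-1}\bigr)\circ\bigl(\mathrm{id}\otimes\Psi_1\otimes\mathrm{id}\bigr)\circ(\Delta_1\otimes\mathrm{id})\circ\Delta_L,
\]
after first checking that the image of $(\mathrm{id}\otimes\Psi_1\otimes\mathrm{id})\circ(\Delta_1\otimes\mathrm{id})\circ\Delta_L$ lies in $\mathsf A^{(K)}\,\widetilde\otimes\,\iota_{1,L}(\mathfrak U(\mathcal W^{(K)}_{L+1}))$. The generator-by-generator checks for $\Delta_{L+1}$ mirror those for $\Psi_{L+1}$, with $\square$ tracking the left action on $\mathsf A^{(K)}$ and the corrections arising from cross terms produced by iterating $\Delta_1$.

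The main obstacle is the cubic bookkeeping for $\mathsf t_{2,0}$. Iterating $(\Psi_1\otimes\mathrm{id})\circ\Delta_L$ produces Miura-like cubic products, cross-quadratic terms pairing $J^{[1]}$ modes against $W^{(1)}$ modes of $\mathcal W^{(K)}_L$, and the inherited $W^{(2)},W^{(3)}$ contributions of $\mathcal W^{(K)}_L$. I would reassemble these, up to normal-ordering, into $V_{-2}$ of $\mathcal W^{(K)}_{L+1}$ together with the explicit quadratic corrections in \eqref{eqn: Psi_L}. The $L=1$ computation \eqref{eqn: [t[2,0],J(x)]}, the OPE \eqref{W1Wn OPE_finite L}, and the explicit Miura formulas for $W^{(2)}$ and $W^{(3)}$ should drive this matching.
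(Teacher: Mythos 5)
Your proposal is correct and follows essentially the same route as the paper: induction on $L$ with base case $L=1$ (Propositions \ref{prop: map A to affine Kac-Moody} and \ref{prop: mixed coproduct L=1}), using the injective Miura splitting $\Delta_{1,L-1}$ to transport the homomorphism property, and reducing everything to generator-by-generator checks. The only organizational difference is that the paper writes the explicit formula for $\Delta_L$ first and uses injectivity of $1\otimes\Delta_{1,L-1}$ to verify it is a homomorphism (then obtains $\Psi_L=(\mathfrak C_{\mathsf A}\otimes 1)\circ\Delta_L$ for free), whereas you define the maps as compositions (automatically homomorphisms) and must separately verify the image containment in $\iota_{1,L}(\mathfrak U(\mathcal W^{(K)}_{L+1}))$ --- but these amount to the same generator computation.
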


\begin{proof}
The case $K=1$ is treated in \cite{gaiotto2022miura}, so we assume $K\ge 2$. We prove by induction on $L$. The initial case $L=1$ is proven in Proposition \ref{prop: map A to affine Kac-Moody} and \ref{prop: mixed coproduct L=1}. For general $L$, let us first show that $\Delta_L$ generates an algebra homomorphism. In fact, consider the injective vertex algebra map $\Delta_{1,L-1}: \mathcal W^{(K)}_{L}\to V^{\kappa_\alpha}(\gl_K)\otimes \mathcal W^{(K)}_{L-1}$ induced from the splitting the Miura operator into two parts, then it is easy to check that $(1\otimes \Delta_{1,L-1})\circ \Delta_L$ agrees with $(\Delta_1\otimes 1)\circ\Delta_{L-1}$ on the generators $\mathsf T_{0,n}(E^a_b)$, $\mathsf T_{1,0}(E^a_b)$ and $\mathsf t_{2,0}$, thus $(1\otimes \Delta_{1,L-1})\circ \Delta_L$ generates an algebra homomorphism. Since $\Delta_{1,L-1}$ is injective, $\Delta_L$ must generates an algebra homomorphism. For the first statement of the theorem, we notice that 
\begin{align}
    (\mathfrak C_{\mathsf A}\otimes 1)\circ \Delta_L=\Psi_L,
\end{align}
where $\mathfrak C_{\mathsf A}: \mathsf A^{(K)}\to \mathbb C[\epsilon_1,\epsilon_2]$ is the natural augmentation morphism sending all generators $\mathsf T_{m,n}(X)$ and $\mathsf t_{m,n}$ to zero. Therefore $\Psi_L$ is also an algebra homomorphism.
\end{proof}

\begin{corollary}
The maps $\Delta_L$ and $\Delta_{L_1,L_2}$ are compatible in the following sense:
\begin{equation}
(\Delta_{L_1}\otimes 1)\circ \Delta_{L_2}=(1\otimes\Delta_{L_1,L_2})\circ \Delta_{L_1+L_2}
\end{equation}
\end{corollary}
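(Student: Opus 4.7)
The plan is induction on $L_1$, with the base case being the compatibility already embedded in the construction of $\Delta_L$ during the proof of Theorem \ref{thm: mixed coproduct}, and with the inductive step reducing to co-associativity of the Miura factorization.

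\textbf{Base case} $L_1 = 1$. The identity
\begin{align*}
(\Delta_1 \otimes 1)\circ \Delta_{L_2} \;=\; (1 \otimes \Delta_{1, L_2})\circ \Delta_{L_2+1}
\end{align*}
is precisely what was verified in the proof of Theorem \ref{thm: mixed coproduct}: the two sides were shown to agree on the generators $\mathsf T_{0,n}(E^a_b)$, $\mathsf T_{1,0}(E^a_b)$, $\mathsf t_{2,0}$ (together with the remaining $\mathfrak{sl}_2$-triple elements). Since both sides are algebra homomorphisms out of $\mathsf A^{(K)}$, Theorem \ref{thm: simpler definition, K>1} (respectively Proposition \ref{prop: simpler definition, K=1} when $K=1$) propagates the equality to all of $\mathsf A^{(K)}$.

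\textbf{Inductive step.} Assume the claim for $L_1 - 1$ and every $L_2' \geq 1$. I would apply the injection $1 \otimes \Delta_{1, L_1-1} \otimes 1$ to both sides and reduce. On the left, using the base case rewrite $(1 \otimes \Delta_{1, L_1-1})\circ \Delta_{L_1} = (\Delta_1 \otimes 1)\circ \Delta_{L_1-1}$, then the inductive hypothesis $(\Delta_{L_1-1} \otimes 1)\circ \Delta_{L_2} = (1 \otimes \Delta_{L_1-1, L_2})\circ \Delta_{L_1+L_2-1}$, and the base case once more in the form $(\Delta_1 \otimes 1)\circ \Delta_{L_1+L_2-1} = (1 \otimes \Delta_{1, L_1+L_2-1})\circ \Delta_{L_1+L_2}$, one obtains
\begin{align*}
(1 \otimes \Delta_{L_1-1, L_2})\circ (1 \otimes \Delta_{1, L_1+L_2-1}) \circ \Delta_{L_1+L_2}.
\end{align*}
On the right, one simply gets
\begin{align*}
(1 \otimes \Delta_{1, L_1-1} \otimes 1)\circ (1 \otimes \Delta_{L_1, L_2})\circ \Delta_{L_1+L_2}.
\end{align*}
The two expressions coincide by co-associativity of the Miura factorization,
\begin{align*}
(1 \otimes \Delta_{L_1-1, L_2})\circ \Delta_{1, L_1+L_2-1} \;=\; (\Delta_{1, L_1-1} \otimes 1) \circ \Delta_{L_1, L_2},
\end{align*}
which just records that the two ways of splitting the degree-$(L_1+L_2)$ Miura product into blocks of lengths $1$, $L_1-1$, $L_2$ agree. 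Since $\Delta_{1, L_1-1}$ is an injection of vertex algebras, so is $1 \otimes \Delta_{1, L_1-1} \otimes 1$, and the desired identity for $(L_1, L_2)$ follows.

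The bulk of the effort is bookkeeping to keep track of which tensor slot each map acts on; the underlying co-associativity of $\Delta_{\bullet,\bullet}$ is immediate from associativity of the Miura operator product, so no substantive obstacle is expected beyond notational care.
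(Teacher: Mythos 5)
Your argument is correct, but it is organized differently from what the paper intends. The paper states this corollary without proof; the implicit route is a direct verification that the two sides, both being algebra homomorphisms out of $\mathsf A^{(K)}$, agree on the generating set $\mathsf T_{0,n}(E^a_b)$, $\mathsf t_{1,n}$, $\mathsf T_{1,0}(E^a_b)$, $\mathsf t_{2,0}$ using the explicit formulas for $\Delta_L$, $\Psi_L$ and $\Delta_{L_1,L_2}$ — the same kind of check that was carried out for the special case $(1\otimes\Delta_{1,L-1})\circ\Delta_L=(\Delta_1\otimes 1)\circ\Delta_{L-1}$ inside the proof of Theorem \ref{thm: mixed coproduct}. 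You instead take that special case as a base case and bootstrap by induction on $L_1$, pushing everything through $1\otimes\Delta_{1,L_1-1}\otimes 1$ and invoking co-associativity of the Miura splitting $\Delta_{\bullet,\bullet}$ together with its injectivity. This is a legitimate and arguably cleaner route: it requires no new generator computations beyond those already in the theorem's proof, at the cost of the tensor-slot bookkeeping you acknowledge and of explicitly recording the co-associativity $(1\otimes\Delta_{L_1-1,L_2})\circ\Delta_{1,L_1+L_2-1}=(\Delta_{1,L_1-1}\otimes 1)\circ\Delta_{L_1,L_2}$, which is indeed immediate from associativity of the composition of Miura factors. Two small points worth making explicit if you write this up: the base-case identity holds on all of $\mathsf A^{(K)}$ (not just on generators) because both sides are algebra homomorphisms and the generators suffice after localizing $\epsilon_2,\epsilon_3$ and invoking flatness; and injectivity of $1\otimes\Delta_{1,L_1-1}\otimes 1$ on the completed tensor products follows from degree-wise injectivity on free $\mathbb C[\epsilon_1,\epsilon_2]$-modules.
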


Recall that $\mathsf D^{(K)}$ is the subalgebra of $\mathsf A^{(K)}$ generated by $\{\mathsf T_{n,m}(X)\:|\: X\in \gl_K,(n,m)\in \mathbb N^2\}$.
\begin{proposition}\label{prop: image of D^K}
The image of $\mathsf D^{(K)}$ under the map $\Delta_L$ is contained in $\mathsf D^{(K)}\widetilde\otimes\mathfrak{U}(\mathcal W^{(K)}_L)$, and the image of $\mathsf D^{(K)}$ under the map $\Psi_L$ is contained in $\mathfrak{U}(\mathcal W^{(K)}_L)$.
\end{proposition}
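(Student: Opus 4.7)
The $\Psi_L$ statement reduces to the $\Delta_L$ statement by post-composing with $\mathfrak C_{\mathsf A} \otimes 1$: since $\Psi_L = (\mathfrak C_{\mathsf A} \otimes 1) \circ \Delta_L$ (as recorded in the proof of Theorem \ref{thm: mixed coproduct}) and $\mathfrak C_{\mathsf A}(\mathsf D^{(K)}) \subset \mathbb C[\epsilon_1, \epsilon_2]$, applying $(\mathfrak C_{\mathsf A} \otimes 1)$ to any element of $\mathsf D^{(K)} \widetilde{\otimes} \mathfrak U(\mathcal W^{(K)}_L)$ yields an element of $\mathfrak U(\mathcal W^{(K)}_L)$. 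I therefore focus on the $\Delta_L$ claim. Since $\Delta_L$ is a $\mathbb C[\epsilon_1,\epsilon_2]$-algebra homomorphism and $\mathsf D^{(K)}$ is generated by $\{\mathsf T_{n,m}(X) : X \in \gl_K, (n,m) \in \mathbb N^2\}$, it suffices to verify $\Delta_L(\mathsf T_{n,m}(X)) \in \mathsf D^{(K)} \widetilde{\otimes} \mathfrak U(\mathcal W^{(K)}_L)$ on each such generator. Using $\gl_K = \mathfrak{sl}_K \oplus \mathbb C \cdot \mathbf 1$ and linearity, I treat the cases $X \in \mathfrak{sl}_K$ and $X = \mathbf 1$ separately.

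For $X \in \mathfrak{sl}_K$ I induct on the diagonal filtration degree. The base cases $(0, k)$ and $(1, 0)$ follow directly from the formulas in Theorem \ref{thm: mixed coproduct}: every summand has first factor a $\mathsf T_{i, j}(Y)$ or $1$ (visibly in $\mathsf D^{(K)}$) and second factor a polynomial in the modes $W^{a(r)}_{b, k}$ (visibly in $\mathfrak U(\mathcal W^{(K)}_L)$). For the inductive step with $n \ge 1$, choose $Y, Z \in \gl_K$ with $[Y, Z] = X$ (possible because $\mathfrak{sl}_K = [\gl_K, \gl_K]$) and invoke Proposition \ref{prop: filtration}:
\[
\mathsf T_{n, m}(X) = [\mathsf T_{1, 0}(Y), \mathsf T_{n-1, m}(Z)] - \bar f^{Y, Z}_{1, 0, n-1, m},
\]
where $\bar f$ is a polynomial in $\mathsf T_{i, j}(W)$'s of strictly smaller diagonal filtration, and in particular lies in $\mathsf D^{(K)}$. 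Applying $\Delta_L$ and using the inductive hypothesis, together with the closure of $\mathsf D^{(K)} \widetilde{\otimes} \mathfrak U(\mathcal W^{(K)}_L)$ under commutators and products, completes this case.

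The delicate case is $X = \mathbf 1$, since brackets in $\gl_K$ never produce $\mathbf 1$. Here I use $\mathsf T_{n, m}(\mathbf 1) = \epsilon_2 \mathsf t_{n, m}$, so that $\Delta_L(\mathsf T_{n, m}(\mathbf 1)) = \epsilon_2 \Delta_L(\mathsf t_{n, m})$. The factor $\epsilon_2 = \epsilon_1 \bar\alpha$ plays a double role: it converts any $\mathsf t_{p, q}$ appearing in $\Delta_L(\mathsf t_{n, m})$ into $\mathsf T_{p, q}(\mathbf 1) \in \mathsf D^{(K)}$, and it absorbs the $\bar\alpha^{-1}$ singularities hidden in $\Psi_L(\mathsf t_{n, m})$. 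The ``seeds'' $(0, k), (1, k), (2, 0)$ are handled by direct substitution into the formulas of Theorem \ref{thm: mixed coproduct}; for instance
\[
\Delta_L(\mathsf T_{1, k}(\mathbf 1)) = \square(\mathsf T_{1, k}(\mathbf 1)) + \epsilon_3 k L \cdot \mathsf T_{0, k-1}(\mathbf 1) \otimes 1,
\]
with $\Psi_L(\mathsf T_{1, k}(\mathbf 1)) = -\epsilon_2 \mathsf L_{k-1} + \tfrac{\epsilon_3 k L}{2} W^{a(1)}_{a, k-1}$ manifestly in $\mathfrak U(\mathcal W^{(K)}_L)$. For $n \ge 2$ I induct on $n$ via the $\mathfrak{sl}_2$-relation $\mathsf T_{n, m}(\mathbf 1) = \frac{1}{2(m+1)}[\mathsf t_{2, 0}, \mathsf T_{n-1, m+1}(\mathbf 1)]$.

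The principal obstacle is to certify that the $\bar\alpha^{-1}$ poles of $\Delta_L(\mathsf t_{2, 0})$ systematically cancel when commuted against the inductively established $\Delta_L(\mathsf T_{n-1, m+1}(\mathbf 1))$. The cleanest route is an outer induction on $L$ using the compatibility $(1 \otimes \Delta_{1, L-1}) \circ \Delta_L = (\Delta_1 \otimes 1) \circ \Delta_{L-1}$ (from the corollary following Theorem \ref{thm: mixed coproduct}): the injectivity of the Miura splitting $\Delta_{1, L-1}: \mathcal W^{(K)}_L \hookrightarrow V^{\kappa_\alpha}(\gl_K) \otimes \mathcal W^{(K)}_{L-1}$ passes to an injection of the corresponding mode algebras compatible with the $\mathbb C[\bar\alpha]$-structure, so membership in $\mathsf D^{(K)} \widetilde{\otimes} \mathfrak U(\mathcal W^{(K)}_L)$ can be tested against the inductive inclusions for $\Delta_1$ and $\Delta_{L-1}$. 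This reduces everything to the base case $L = 1$, where Proposition \ref{prop: map A to affine Kac-Moody} supplies explicit formulas for $\Psi_1(\mathsf T_{0, n}(X))$ and $\Psi_1(\mathsf T_{1, n}(X))$ that are visibly in $\mathfrak U(\widehat{\gl}_K^\alpha)$ and free of $\bar\alpha^{-1}$, with the higher $\mathsf T_{n, m}(X)$ then controlled by the filtration induction of the second paragraph.
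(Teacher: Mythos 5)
Your reduction of the $\Psi_L$ statement to the $\Delta_L$ statement via $\Psi_L=(\mathfrak C_{\mathsf A}\otimes 1)\circ\Delta_L$ matches the paper, and your treatment of $\mathsf T_{n,m}(X)$ for $X\in\mathfrak{sl}_K$ via Proposition \ref{prop: filtration} is workable. But there is a genuine gap exactly at the point you yourself flag as ``the principal obstacle,'' and your proposed route does not close it. The inductive step $\mathsf T_{n,m}(\mathbf 1)=\frac{1}{2(m+1)}[\mathsf t_{2,0},\mathsf T_{n-1,m+1}(\mathbf 1)]$ requires commuting against $\Delta_L(\mathsf t_{2,0})$, whose cross terms $\epsilon_1\mathsf t_{0,k}\otimes W^{a(1)}_{a,-k-2}=\mathsf T_{0,k}(1)\otimes\frac{1}{\bar\alpha}W^{a(1)}_{a,-k-2}$ carry a genuine $\bar\alpha^{-1}$; here there is no spare factor of $\epsilon_2$ to absorb it, so the ``double role of $\epsilon_2$'' argument that works for your seed cases does not apply. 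The outer induction on $L$ via the Miura splitting does not remove this pole: it is already present at $L=1$ in $\Delta_1(\mathsf t_{2,0})$ and $\Psi_1(\mathsf t_{2,0})$, and your closing appeal to ``the filtration induction of the second paragraph'' cannot rescue the $L=1$ base case because that induction only produces $\mathsf T_{n,m}(X)$ for $X\in[\gl_K,\gl_K]=\mathfrak{sl}_K$, never $X=\mathbf 1$. (Moreover, since the correction terms $\bar f^{Y,Z}_{1,0,n-1,m}$ of Proposition \ref{prop: filtration} are polynomials in $\mathsf T_{i,j}(W)$ with $W$ ranging over all of $\gl_K$, including trace parts, your $\mathfrak{sl}_K$ induction is itself entangled with the unproven $\mathbf 1$ case.) The $L$-induction step also silently requires that $\bar\alpha$-integrality can be tested after applying $\Delta_{1,L-1}$, i.e.\ a torsion-freeness statement about its cokernel on mode algebras, which you assert but do not justify.

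What actually resolves the obstacle in the paper is a different and more direct observation: since $\mathsf D^{(K)}$ is spanned by iterated $\mathrm{ad}_{\mathsf t_{2,0}}$ applied to the $\mathsf T_{0,n}(X)$, it suffices to show that $\mathrm{ad}_{\Delta_L(\mathsf t_{2,0})}$ preserves $\mathsf D^{(K)}\widetilde\otimes\mathfrak{U}(\mathcal W^{(K)}_L)$ even though $\Delta_L(\mathsf t_{2,0})$ itself is not integral. One checks the three pieces separately: $\mathsf t_{2,0}\otimes 1$ acts integrally by Lemma \ref{lem: t[m,n] preserves D^K}, $1\otimes\Psi_L(\mathsf t_{2,0})$ acts integrally by equation \eqref{eqn: [t[2,0],J(x)]}, and for the problematic cross term one computes
\begin{equation*}
\Bigl[\mathsf T_{0,n-1}(1)\otimes\tfrac{1}{\bar\alpha}W^{a(1)}_{a,-n-1},\,A\otimes B\Bigr]
=\epsilon_1[\mathsf t_{0,n-1},A]\otimes W^{a(1)}_{a,-n-1}B
+A\,\mathsf T_{0,n-1}(1)\otimes\tfrac{1}{\bar\alpha}[W^{a(1)}_{a,-n-1},B],
\end{equation*}
which is integral because $[\mathsf t_{0,n-1},A]\in\mathsf D^{(K)}$ and $\tfrac{1}{\bar\alpha}[W^{a(1)}_{a,-n-1},J^{b[i]}_{c,m}]=(n+1)\delta^b_c\delta_{n+1,m}$ is a scalar. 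The key point your proposal misses is that the $\bar\alpha^{-1}$ is cancelled by the \emph{commutator} structure (the trace of $W^{(1)}$ is a Heisenberg field whose brackets with the currents are proportional to $\bar\alpha$), not by any multiplication by $\epsilon_2$ or by a reduction in $L$; without this computation the inductive step for $\mathsf T_{n,m}(\mathbf 1)$, $n\ge 2$, is unproven.
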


\begin{proof}
Since $\mathsf D^{(K)}$ is generated by the adjoint actions of $\mathsf t_{2,0}$ on $\{\mathsf T_{0,n}(X)\:|\: X\in \gl_K,n\in \mathbb N\}$, and obviously $\Delta_L(\mathsf T_{0,n}(X))\in \mathsf D^{(K)}\widetilde\otimes\mathfrak{U}(\mathcal W^{(K)}_L)$, it suffices to show that $\mathrm{ad}_{\Delta_L(\mathsf t_{2,0})}$ preserves the subalgebra $\mathsf D^{(K)}\widetilde\otimes\mathfrak{U}(\mathcal W^{(K)}_L)$. Write $\Delta_L(\mathsf t_{2,0})=\mathsf t_{2,0}\otimes 1+1\otimes \Psi_L(\mathsf t_{2,0})+$cross terms. Then the adjoint action of $\mathsf t_{2,0}\otimes 1$ obviously preserves $\mathsf D^{(K)}\widetilde\otimes\mathfrak{U}(\mathcal W^{(K)}_L)$, and equation \eqref{eqn: [t[2,0],J(x)]} implies that $1\otimes \Psi_L(\mathsf t_{2,0})$ also preserves $\mathsf D^{(K)}\widetilde\otimes\mathfrak{U}(\mathcal W^{(K)}_L)$. For the cross terms, we only need to check $\epsilon_1\mathsf t_{0,n-1}\otimes W^{a(1)}_{a,-n-1}$, which also equals to $\mathsf T_{0,n-1}(1)\otimes \frac{1}{\bar\alpha}W^{a(1)}_{a,-n-1}$, and its commutator with an element $A\otimes B\in \mathsf D^{(K)}\widetilde\otimes\mathfrak{U}(\mathcal W^{(K)}_L)$ can be written as 
\begin{align*}
    \epsilon_1[\mathsf t_{0,n-1},A]\otimes W^{a(1)}_{a,-n-1}B+A\mathsf T_{0,n-1}(1)\otimes \frac{1}{\bar\alpha}[W^{a(1)}_{a,-n-1},B].
\end{align*}
$[\mathsf t_{0,n-1},A]\in \mathsf D^{(K)}$ by Lemma \ref{lem: t[m,n] preserves D^K}, and $\frac{1}{\bar\alpha}[W^{a(1)}_{a,-n-1},B]\in \mathfrak{U}(\mathcal W^{(K)}_L)$ because $\frac{1}{\bar\alpha}[W^{a(1)}_{a,-n-1},J^{b[i]}_{c,m}]=(n+1)\delta^b_c\delta_{n+1,m}$. Thus $\Delta_L(\mathsf D^{(K)})\subset\mathsf D^{(K)}\widetilde\otimes\mathfrak{U}(\mathcal W^{(K)}_L)$. The second statement follows from the first by applying augmentation map.
\end{proof}

\subsection{Vertical filtration on the rectangular W-algebra}
For every W-algebra, there is a natural increasing filtration \cite[4.9-4.11]{arakawa2007representation} attached to it. Let us denote it by $0=V_{-1}\mathcal W^{(K)}_L\subset V_{0}\mathcal W^{(K)}_L\subset\cdots \subset \mathcal W^{(K)}_L$. In our notation, $V_{\bullet}\mathcal W^{(K)}_L$ is induced from the filtration degree assignment
\begin{align*}
    \deg_V \alpha=0,\quad  \deg_V W^{a(r)}_{b}(z)=r-1.
\end{align*}
It is known that the associated graded vertex algebra $\mathrm{gr}_F\: \mathcal W^{(K)}_L$ is isomorphic to $V^{\kappa^{\sharp}}(\mathfrak{gl}_{KL}^{f_L})$ for a specific level $\kappa^{\sharp}$. The level $\kappa^{\sharp}$ in the rectangular case can be read out from \cite{arakawa2017explicit}, and we summarize it in the following lemma.
\begin{lemma}\label{lem: associated graded of W-algebra}
$\mathrm{gr}_V\mathcal W^{(K)}_L\cong V^{\alpha L,L}(\gl_{K}\otimes\mathbb C[z]/(z^L))$. Equivalently, the OPEs in $\mathcal W^{(K)}_L$ have the following form:
\begin{align*}
    W^{a(r)}_{b}(z) W^{c(s)}_{d}(w)\sim \frac{\delta^c_b W^{a(r+s-1)}_{d}(w)-\delta^a_d W^{c(r+s-1)}_{b}(w)}{z-w}+\delta_{r,1}\delta_{s,1}\frac{\alpha L\delta^a_d\delta^c_b+L\delta ^a_b\delta^c_d}{(z-w)^2}\pmod{V_{r+s-3}\mathcal W^{(K)}_L}.
\end{align*}
\end{lemma}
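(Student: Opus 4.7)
The plan is to apply the general structure theorem for W-algebras equipped with the Kac--Wakimoto good filtration \cite[4.9-4.11]{arakawa2007representation}, which identifies $\mathrm{gr}_V \mathcal W^{\kappa}(\mathfrak{g}, f)$ with the universal affine vertex algebra $V^{\kappa^{\sharp}}(\mathfrak{g}^f)$ on the centralizer of $f$, for an induced invariant bilinear form $\kappa^{\sharp}$. For the rectangular nilpotent $f_L$ with partition $(L^K)$ in $\gl_{KL}$, the centralizer is canonically $\gl_{KL}^{f_L} \cong \gl_K \otimes \mathbb{C}[z]/(z^L)$, with the summand $\gl_K \otimes z^{r-1}$ sitting in the $\ad x$-eigenspace of eigenvalue $-(r-1)$ for the adapted $\mathfrak{sl}_2$-triple. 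Under the associated-graded isomorphism, the symbol $\overline{W^{a(r)}_b}$ at filtration degree $r-1$ corresponds to the affine current attached to $E^a_b \otimes z^{r-1}$.

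To identify $\kappa^{\sharp}$, observe that any invariant symmetric bilinear form on the truncated current Lie algebra $\gl_K \otimes \mathbb{C}[z]/(z^L)$ has the form $\kappa^{\sharp}(Xz^i, Yz^j) = c_{i+j}(X,Y)$ for symmetric $\gl_K$-invariant forms $c_m$. The component $c_0$ is read off directly from the Miura formula $W^{a(1)}_b(z) = \sum_{i=1}^L J^{a[i]}_b(z)$: since the $L$ Miura factors $V^{\kappa_\alpha}(\gl_K)$ commute pairwise,
\begin{equation*}
W^{a(1)}_b(z)\, W^{c(1)}_d(w) \sim \frac{\delta^c_b W^{a(1)}_d(w) - \delta^a_d W^{c(1)}_b(w)}{z-w} + \frac{L(\alpha\,\delta^a_d \delta^c_b + \delta^a_b \delta^c_d)}{(z-w)^2},
\end{equation*}
so $c_0 = L\kappa_\alpha$, which is precisely the level $(\alpha L, L)$ in the notation of the lemma. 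The higher components $c_m$ with $m\ge 1$ do not enter the statement: the central term in any affine OPE is a scalar multiple of the vacuum and hence lives in filtration degree $0$, so for $r+s \ge 3$ it lies in $V_{r+s-3} \supset V_0$ and vanishes in the quotient $V_{r+s-2}/V_{r+s-3}$. Equivalently, conformal-weight counting forces the $(z-w)^{-2}$ coefficient of $W^{(r)}W^{(s)}$ to carry weight $r+s-2$, which is a pure scalar only when $r=s=1$.

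With $\kappa^{\sharp}$ in hand, the OPE in $\mathrm{gr}_V\mathcal W^{(K)}_L \cong V^{\kappa^{\sharp}}(\gl_K[z]/(z^L))$ is controlled by the Kac--Moody structure: the $1/(z-w)$ pole is the Lie bracket
\begin{equation*}
[E^a_b \otimes z^{r-1},\; E^c_d \otimes z^{s-1}] = (\delta^c_b E^a_d - \delta^a_d E^c_b) \otimes z^{r+s-2},
\end{equation*}
which corresponds to $\delta^c_b \overline{W^{a(r+s-1)}_d} - \delta^a_d \overline{W^{c(r+s-1)}_b}$ (vanishing for $r+s-1 > L$ since $z^{r+s-2}=0$ in $\mathbb{C}[z]/(z^L)$), while the $(z-w)^{-2}$ central pole contributes only at $r=s=1$ by the preceding argument. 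Combining the two gives exactly the OPE claimed in the lemma.

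The main non-trivial input is Arakawa's structure theorem, used here as a black box. A more elementary alternative bypasses it by working directly in the Miura realization: expand $W^{(r)}(z) = \sum_{|T|=r}\prod_{t\in T} J^{[t]}(z)$ plus $\alpha\partial$-corrections and compute $W^{(r)}W^{(s)}$ by Wick contractions. Single contractions at $1/(z-w)$ produce a symmetrized sum of products of $r+s-1$ currents that reconstructs $\overline{W^{(r+s-1)}}$ with the correct index structure and multiplicity (as is already visible in $W^{(1)}W^{(2)}$ once one absorbs the $\alpha(L-i)\partial J^{[i]}$ correction sitting inside $W^{(2)}$), while single contractions at $1/(z-w)^2$ produce a product of $r+s-2$ currents of filtration $r+s-3$ that vanishes modulo $V_{r+s-3}$ except in the boundary case $r=s=1$. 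Multi-contraction terms fall strictly below this filtration. The bookkeeping is routine but tedious, so we prefer the black-box route.
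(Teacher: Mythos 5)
Your proposal is correct and follows essentially the same route as the paper, which likewise invokes the Kac--Wakimoto/Arakawa good-filtration structure theorem and reads off the level $\kappa^{\sharp}$ from the rectangular case in \cite{arakawa2017explicit}. Your added details --- pinning down $c_0 = L\kappa_\alpha$ from the $W^{(1)}W^{(1)}$ OPE in the Miura realization and the degree-counting argument showing the higher central components cannot survive in $V_{r+s-2}/V_{r+s-3}$ --- are accurate and simply make explicit what the paper leaves to the citation.
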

Consider the stress-energy operator $T(z)=\sum_{n\in \mathbb Z}\mathsf L_n z^{-n-2}$. Since $\mathsf L_{n}W^{a(r)}_{b,-r}|0\rangle$ has conformal weight $n-r$, we see that $\mathsf L_{n}W^{a(r)}_{b,-r}|0\rangle\in V_{r-2}\mathcal W^{(K)}_L$ whenever $n>0$, thus we get the following.
\begin{proposition}\label{prop: TW OPE}
The OPE between stress-energy operator and $W^{a(r)}_{b}$ has the form
\begin{align}
    T(z)W^{a(r)}_{b}(w)\sim \frac{r W^{a(r)}_{b}(w)}{(z-w)^2}+\frac{\partial W^{a(r)}_{b}(w)}{z-w}\pmod{V_{r-2}\mathcal W^{(K)}_L}.
\end{align}
\end{proposition}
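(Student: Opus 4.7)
The plan is to combine the state-field dictionary for the Virasoro action with the associated-graded identification of Lemma \ref{lem: associated graded of W-algebra} and a PBW bookkeeping, along the lines already indicated in the paragraph preceding the proposition.

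First I would expand the singular part of the OPE using the reconstruction theorem:
\[
T(z) W^{a(r)}_b(w) \sim \sum_{k\ge 0} \frac{(\mathsf L_{k-1}\cdot W^{a(r)}_{b,-r}|0\rangle)(w)}{(z-w)^{k+1}},
\]
where each numerator denotes the field corresponding to the indicated state. The $k=0$ contribution equals $\partial W^{a(r)}_b(w)/(z-w)$ because $\mathsf L_{-1}$ implements translation; the $k=1$ contribution equals $r\,W^{a(r)}_b(w)/(z-w)^2$ because $W^{a(r)}_b$ is of conformal weight $r$ with respect to $T$, i.e.\ $\mathsf L_0 W^{a(r)}_{b,-r}|0\rangle = r\,W^{a(r)}_{b,-r}|0\rangle$. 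These two terms reproduce the right-hand side of the proposition exactly.

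For $k\ge 2$ the state $\mathsf L_{k-1} W^{a(r)}_{b,-r}|0\rangle$ has conformal weight $r-(k-1)\le r-1$. It would then suffice to prove the general bound: every state of conformal weight $\le h$ in $\mathcal W^{(K)}_L$ lies in $V_{h-1}\mathcal W^{(K)}_L$. I would establish this by passing to $\mathrm{gr}_V \mathcal W^{(K)}_L \cong V^{\alpha L,L}(\gl_K\otimes\mathbb C[z]/(z^L))$ from Lemma \ref{lem: associated graded of W-algebra}, where each generator $\bar W^{a(r)}_b$ sits in filtration degree $r-1$ and conformal weight $r$. A PBW basis of this affine Kac-Moody vertex algebra consists of ordered monomials $\bar W^{a_1(r_1)}_{b_1,-m_1}\cdots \bar W^{a_n(r_n)}_{b_n,-m_n}|0\rangle$ with $m_i\ge r_i$. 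Such a state has conformal weight $\sum_i m_i$ and filtration degree $\sum_i (r_i-1)$, so
\[
(\text{weight}) - (\text{filtration degree}) \;=\; \sum_{i=1}^n (m_i - r_i + 1) \;\ge\; n \;\ge\; 1
\]
for any non-vacuum state. This bound then lifts to $\mathcal W^{(K)}_L$ itself, giving filtration $\le r-k\le r-2$ for the $k\ge 2$ OPE coefficients.

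The main obstacle, and the point I would want to verify carefully, is the compatibility of the filtration $V_\bullet \mathcal W^{(K)}_L$ with the conformal weight grading, so that the bound established on the associated graded transfers back. This amounts to checking that the isomorphism of Lemma \ref{lem: associated graded of W-algebra} intertwines conformal weights on both sides, which is standard for Kac-Roan-Wakimoto style filtrations on W-algebras: the stress-energy tensor \eqref{eqn: stress-energy operator_finite L} lies in $V_1\mathcal W^{(K)}_L$ (only the $-\tfrac{1}{\alpha+K}W^{a(2)}_a$ contribution has nontrivial filtration degree), and its image in $\mathrm{gr}_V$ is precisely the element that assigns each $\bar W^{a(r)}_b$ the conformal weight $r$ on the associated-graded side. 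Once this compatibility is spelled out, the proof is essentially the combinatorial counting above.
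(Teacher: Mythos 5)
Your proposal is correct and follows essentially the same route as the paper: expand the OPE via the state--field correspondence, identify the first- and second-order poles using $\mathsf L_{-1}=\partial$ and $\mathsf L_0 W^{a(r)}_{b,-r}|0\rangle = r\,W^{a(r)}_{b,-r}|0\rangle$, and observe that the coefficients of the higher poles have conformal weight at most $r-1$ and hence lie in $V_{r-2}\mathcal W^{(K)}_L$. The paper states this last implication without proof, whereas you supply the PBW counting (weight minus filtration degree is at least $1$ for non-vacuum monomials) that justifies it; this is a correct and welcome elaboration of the same argument.
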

It follows from Lemma \ref{lem: associated graded of W-algebra} and the formula \eqref{eqn: Psi_L} of the map $\Psi_L$ that $\Psi_L$ respects the vertical filtrations on $\mathsf A^{(K)}$ and $\mathcal W^{(K)}_L$.
\begin{proposition}
For all $n\in \mathbb Z$, $\Psi_L(V_n\mathsf A^{(K)})\subset V_n\mathfrak U(\mathcal W^{(K)}_L)[\bar\alpha^{-1}]$.
\end{proposition}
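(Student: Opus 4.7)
The plan is to verify the inclusion $\Psi_L(V_n\mathsf A^{(K)})\subset V_n\mathfrak U(\mathcal W^{(K)}_L)[\bar\alpha^{-1}]$ in three steps: a direct check on the explicit generators from \eqref{eqn: Psi_L}, an inductive extension to all generators $\mathsf T_{n,m}(X)$ and $\mathsf t_{n,m}$ via the iterated $\mathfrak{sl}_2$-action, and a conclusion by multiplicativity of both filtrations together with the PBW theorem. For the first step I inspect \eqref{eqn: Psi_L}: the image $\Psi_L(\mathsf T_{0,n}(X))=W^{a(1)}_{b,n}$ is clearly in $V_0\mathfrak U$; the image $\Psi_L(\mathsf t_{1,n})=-\mathsf L_{n-1}+\frac{\alpha nL}{2\bar\alpha}W^{a(1)}_{a,n-1}$ is in $V_1\mathfrak U$ since by \eqref{eqn: stress-energy operator_finite L} the stress-energy tensor lies in $V_1\mathcal W^{(K)}_L$ (combination of $:W^{(1)}W^{(1)}:$ in $V_0$ and $W^{(2)}$ in $V_1$); the image $\Psi_L(\mathsf T_{1,0}(X))$ combines $W^{a(2)}_{b,-1}$ (in $V_1$) with a quadratic expression in $W^{(1)}$-modes (in $V_0$); and $\Psi_L(\mathsf t_{2,0})$ involves $V_{-2}$, where $V(z)\in V_2\mathcal W^{(K)}_L$ by the expansion in \eqref{eqn: Delta_L} in terms of $W^{a(3)}_a$, $:W^{(1)}W^{(2)}:$, and a triple product of $W^{(1)}$, together with a quadratic correction in $V_0$; hence $\Psi_L(\mathsf t_{2,0})\in V_2\mathfrak U[\bar\alpha^{-1}]$, matching $\deg_v \mathsf t_{2,0}=2$.

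For the second step I extend by induction on $n$ to all generators, using the exact identities $\mathsf T_{n,m}(X)=\frac{m!}{2^n(n+m)!}\operatorname{ad}_{\mathsf t_{2,0}}^n(\mathsf T_{0,n+m}(X))$ and its analogue for $\mathsf t_{n,m}$, which follow from the $\mathfrak{sl}_2$-action in $\mathsf A^{(K)}$. The inductive step reduces to a refined commutator bound: $\operatorname{ad}_{\Psi_L(\mathsf t_{2,0})}$ must increase the vertical filtration on the image of $\Psi_L$ by at most one, whereas the generic bound from additivity on the target side yields an increase of two. To establish the refined bound, I compute the OPE of $V(z)$ with an arbitrary $W^{a(r)}_b(w)$ modulo $V_{r}\mathcal W^{(K)}_L$ and verify, using Lemma \ref{lem: associated graded of W-algebra}, that the potentially $V_{r+1}$-valued contribution from the $W^{a(3)}_a$ piece of $V(z)$ cancels against the analogous contribution coming from commuting the quadratic $W^{(1)}$ correction of $\Psi_L(\mathsf t_{2,0})$ with $W^{a(r)}_b(w)$. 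This is precisely the same cancellation mechanism already at work in Proposition \ref{prop: TW OPE}, where the tuned combination of $W^{(2)}$ and $:W^{(1)}W^{(1)}:$ in $T$ ensures $[T,W^{(r)}]\in V_{r-1}$ rather than the naive $V_r$.

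For the third step I invoke PBW and multiplicativity. By Theorem \ref{thm: PBW}, $\mathsf A^{(K)}$ is a free $\mathbb C[\epsilon_1,\epsilon_2]$-module on ordered monomials in $\mathfrak G(\mathsf A^{(K)})$; since $V_\bullet\mathsf A^{(K)}$ is induced from a multiplicative degree assignment on the tensor algebra, an ordered monomial $\mathcal O_1\cdots\mathcal O_k$ lies in $V_n\mathsf A^{(K)}$ iff $\sum_i\deg_v\mathcal O_i\le n$. On the target, $V_\bullet\mathfrak U(\mathcal W^{(K)}_L)$ is also multiplicative ($V_a\cdot V_b\subset V_{a+b}$) since it is generated as a filtered algebra by modes of $V_\bullet\mathcal W^{(K)}_L$. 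Combining with Steps 1 and 2, one concludes $\Psi_L(\mathcal O_1\cdots\mathcal O_k)\in V_n\mathfrak U[\bar\alpha^{-1}]$, which proves the proposition.

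The main obstacle is the refined commutator bound in Step 2. On the source side, Proposition \ref{prop: filtration_vertical and horizontal} guarantees $[V_a\mathsf A^{(K)},V_b\mathsf A^{(K)}]\subset V_{a+b-1}\mathsf A^{(K)}$, whereas on the target the generic bound is only $V_{a+b}$, so there is a genuine one-step gap that must be closed. The resolution lies in the specific fine-tuning of $\Psi_L(\mathsf t_{2,0})$: the quadratic $W^{(1)}$ correction to $V_{-2}$ is calibrated so that the would-be top-degree piece of commutators with $\Psi_L(\mathsf t_{2,0})$ is systematically absent. This cancellation is structurally parallel to, though logically weaker than, the explicit vanishing $[\Psi_L(\mathsf t_{2,0}),\Psi_L(\mathsf T_{1,0}(X))]=0$ verified by direct OPE manipulation in Proposition \ref{prop: map A to affine Kac-Moody}, and should be amenable to a similar if more schematic OPE computation organized by filtration degree.
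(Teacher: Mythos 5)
Your overall reduction is the same as the paper's: since every generator $\mathsf T_{p,q}(X)$, $\mathsf t_{p,q}$ contributing to $V_n\mathsf A^{(K)}$ is a multiple of $\mathrm{ad}_{\mathsf t_{2,0}}^{p}$ applied to a degree-zero generator, and both filtrations are multiplicative, the whole statement comes down to the single claim that $\mathrm{ad}_{\Psi_L(\mathsf t_{2,0})}$ raises $V_\bullet\mathfrak U(\mathcal W^{(K)}_L)$ by at most one. Your Step 1 (the generators) and Step 3 (PBW plus multiplicativity) are fine.

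The gap is in the mechanism you propose for that key claim. You assert that the potentially $V_{r+1}$-valued part of $[W^{a(3)}_{a,-2},W^{c(r)}_{d,m}]$ cancels against a contribution from commuting the quadratic $W^{(1)}$ corrections of $\Psi_L(\mathsf t_{2,0})$ with $W^{c(r)}_{d,m}$. This cannot happen for degree reasons: those quadratic corrections, such as $\sum_{n\ge 1}n\,W^{a(1)}_{b,-n-1}W^{b(1)}_{a,n-1}$, lie in $V_0\mathfrak U(\mathcal W^{(K)}_L)$, and since commutators respect the additive bound $[V_a,V_b]\subset V_{a+b}$ by Lemma \ref{lem: associated graded of W-algebra}, their bracket with $W^{c(r)}_{d,m}\in V_{r-1}$ lands in $V_{r-1}$ --- two levels below the $V_{r+1}$ term you need to kill, so there is nothing there to cancel against. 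The correct reason is internal to the cubic term and requires no OPE computation: by Lemma \ref{lem: associated graded of W-algebra},
\[
[W^{a(3)}_{a,-2},W^{c(r)}_{d,m}]\equiv \delta^{c}_{a}W^{a(r+2)}_{d,m-2}-\delta^{a}_{d}W^{c(r+2)}_{a,m-2}=W^{c(r+2)}_{d,m-2}-W^{c(r+2)}_{d,m-2}=0\pmod{V_{r}\mathfrak U(\mathcal W^{(K)}_L)},
\]
because the index $a$ is summed; equivalently, the trace element $1\otimes z^{2}$ is central in the associated-graded current Lie algebra $\gl_K\otimes\mathbb C[z]/(z^L)$. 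Combined with $\Psi_L(\mathsf t_{2,0})\equiv\frac{\epsilon_1}{\bar\alpha}W^{a(3)}_{a,-2}\pmod{V_1\mathfrak U(\mathcal W^{(K)}_L)}$ and $[V_1,V_n]\subset V_{n+1}$, this one observation closes the argument. (The same caveat applies to your reading of Proposition \ref{prop: TW OPE}: the drop there is obtained in the paper by a conformal-weight argument, not by a cancellation between the $W^{(2)}$ and $:W^{(1)}W^{(1)}:$ pieces of $T$.)
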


\begin{proof}
It suffices to show that $\mathrm{ad}_{\Psi_L(\mathsf t_{2,0})}(V_n\mathfrak U(\mathcal W^{(K)}_L))\subset V_{n+1}\mathfrak U(\mathcal W^{(K)}_L)$. We note that $\Psi_L(\mathsf t_{2,0})\equiv W^{a(3)}_{a,-2}\pmod{V_1\mathfrak U(\mathcal W^{(K)}_L)}$, thus it is enough to show that $\mathrm{ad}_{W^{a(3)}_{a,-2}}(V_n\mathfrak U(\mathcal W^{(K)}_L))\subset V_{n+1}\mathfrak U(\mathcal W^{(K)}_L)$, which obviously follows from Lemma \ref{lem: associated graded of W-algebra}.
\end{proof}

\begin{proposition}\label{prop: Psi_L filtered}
$\Psi_L(\mathsf T_{n,m}(E^a_b))\equiv (-\epsilon_1)^n W^{a(n+1)}_{b,m-n}\pmod{V_{n-1}\mathfrak U(\mathcal W^{(K)}_L)}$.
\end{proposition}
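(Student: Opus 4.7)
The plan is to prove the statement by induction on $n$. The base case $n=0$ is immediate from the defining formula $\Psi_L(\mathsf T_{0,m}(E^a_b)) = W^{a(1)}_{b,m}$ in \eqref{eqn: Psi_L}. For $n=1$, $m=0$ the explicit formula for $\Psi_L(\mathsf T_{1,0}(E^a_b))$ in \eqref{eqn: Psi_L} gives $-\epsilon_1 W^{a(2)}_{b,-1}$ modulo a sum of products of $W^{(1)}$ modes, and the latter all lie in $V_0\mathfrak{U}(\mathcal W^{(K)}_L)$ since $V_0$ is multiplicatively closed. For the inductive step, use the commutation relation $[\mathsf t_{2,0}, \mathsf T_{n-1,m+1}(X)] = 2(m+1)\mathsf T_{n,m}(X)$ from \eqref{eqn: A2} to reduce to
\[
\Psi_L(\mathsf T_{n,m}(E^a_b)) = \tfrac{1}{2(m+1)}\bigl[\Psi_L(\mathsf t_{2,0}),\,\Psi_L(\mathsf T_{n-1,m+1}(E^a_b))\bigr],
\]
and substitute the inductive hypothesis $\Psi_L(\mathsf T_{n-1,m+1}(E^a_b)) \equiv (-\epsilon_1)^{n-1}W^{a(n)}_{b,m-n+2}\pmod{V_{n-2}\mathfrak{U}}$.

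Two auxiliary claims then carry the induction. First, $[\Psi_L(\mathsf t_{2,0}),\,V_k\mathfrak{U}]\subset V_{k+1}\mathfrak{U}$ for every $k$, which ensures that the $V_{n-2}$ correction from the inductive hypothesis contributes only to $V_{n-1}\mathfrak{U}$ and hence is negligible modulo $V_{n-1}\mathfrak{U}$. This follows from Lemma \ref{lem: associated graded of W-algebra}: the class of $\Psi_L(\mathsf t_{2,0})$ in $V_2/V_1$ is $\tfrac{\epsilon_1}{\bar\alpha}W^{a(3)}_{a,-2}$, since the quadratic-in-$W^{(1)}$ correction terms appearing in $\Psi_L(\mathsf t_{2,0})$ lie in $V_0\mathfrak{U}$ and the expression \eqref{eqn: Delta_L} for $V(z)$ shows that $V_{-2}\equiv W^{a(3)}_{a,-2}\pmod{V_1\mathfrak{U}}$; under the identification $\mathrm{gr}_V\mathfrak{U}(\mathcal W^{(K)}_L)\cong U(\widehat{\gl_K\otimes \mathbb C[z]/(z^L)})$ this class is the trace current, whose adjoint action preserves filtration degrees (the bracket with any generator has no Lie-algebra contribution, and the central extension in Lemma \ref{lem: associated graded of W-algebra} occurs only when $r=s=1$).

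The second, main claim is the key identity
\[
[\Psi_L(\mathsf t_{2,0}),\,W^{a(n)}_{b,s}] \equiv -2(n+s-1)\epsilon_1\, W^{a(n+1)}_{b,s-2}\pmod{V_{n-1}\mathfrak{U}},\qquad s:=m-n+2,
\]
which combined with the previous reduction yields the desired conclusion (one checks the $n=1$ case reproduces $-2s\epsilon_1 W^{a(2)}_{b,s-2}$, matching the known formula for $\Psi_L(\mathsf T_{1,m}(E^a_b))$). By the trace argument above the $V_2/V_1$ class of $\Psi_L(\mathsf t_{2,0})$ contributes zero to the $V_n/V_{n-1}$ class of the commutator, so the leading term must come from the $V_1/V_0$ piece of $\Psi_L(\mathsf t_{2,0})$, which by \eqref{eqn: Delta_L} is (a multiple of) $-\tfrac{\epsilon_1}{\bar\alpha}:W^{c(1)}_d W^{d(2)}_c:_{-2}$ together with total-derivative corrections. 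Computing the commutator via Wick contractions, using the $W^{(1)}W^{(n)}$ OPE from \eqref{W1Wn OPE_finite L} together with the $W^{(2)}W^{(n)}$ simple pole $\delta^c_b W^{a(n+1)}_d - \delta^a_d W^{c(n+1)}_b$ modulo $V_{n-1}\mathcal W$ obtained from Lemma \ref{lem: associated graded of W-algebra}, one extracts a multiple of $W^{a(n+1)}_{b,s-2}$ as the leading $V_n/V_{n-1}$ representative.

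The main obstacle is the careful Wick-theorem bookkeeping for the commutator of $:W^{(1)}W^{(2)}:_{-2}$ with $W^{a(n)}_{b,s}$: both single-contraction channels produce $W^{(n+1)}$ modes, and these must be combined with the contributions from the total-derivative terms in $V(z)$ and with the subleading corrections to $V_{-2}$ in order to assemble the precise coefficient $-2(n+s-1)\epsilon_1$ (which is independent of $\alpha$, $\bar\alpha$, $L$, even though many intermediate steps are not). A cleaner alternative, which avoids this bookkeeping altogether, is to pass to the Miura realization $\mathcal W^{(K)}_L \hookrightarrow V^{\kappa_\alpha}(\gl_K)^{\otimes L}$ and verify inductively that the leading Miura term of $\Psi_L(\mathsf T_{n,m}(E^a_b))$ agrees with the leading Miura term $(-\epsilon_1)^n\!\!\sum_{i_1<\cdots<i_{n+1}}\!:\!J^{a[i_1]}_{c_1} J^{c_1[i_2]}_{c_2}\cdots J^{c_n[i_{n+1}]}_b\!:_{\,m-n}$ of $(-\epsilon_1)^n W^{a(n+1)}_{b,m-n}$; on this side the computation reduces to elementary Kac--Moody OPEs.
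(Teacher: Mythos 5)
Your overall skeleton (induction on $n$ via $\mathrm{ad}_{\Psi_L(\mathsf t_{2,0})}$, plus the fact that $\mathrm{ad}_{\Psi_L(\mathsf t_{2,0})}$ raises the vertical filtration by at most one) matches the paper, but the core of your inductive step rests on a false claim. You assert that, because the $V_2/V_1$ class of $\Psi_L(\mathsf t_{2,0})$ is the trace current $\tfrac{\epsilon_1}{\bar\alpha}W^{a(3)}_{a,-2}$, this piece ``contributes zero to the $V_n/V_{n-1}$ class of the commutator'' with $W^{a(n)}_{b,s}$, so that the leading term must come from the $:W^{(1)}W^{(2)}:$ part of $V(z)$. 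The trace argument only shows that the $V_{n+1}/V_n$ class of $[W^{a(3)}_{a,-2},W^{c(n)}_{d,s}]$ vanishes, i.e.\ that the bracket drops one filtration step; it says nothing about the $V_n/V_{n-1}$ class, which is a \emph{subleading} OPE coefficient not controlled by Lemma \ref{lem: associated graded of W-algebra}. In fact this class is nonzero and is precisely the source of the linear term you need: the paper's \eqref{eqn: W^3W^r preliminary, K>1} and Lemma \ref{lem: induction argument} show $[W^{a(3)}_{a,-2},W^{b(n+1)}_{b,m}]\equiv -(n+m)\mu\, W^{b(n+2)}_{b,m-2}$ with $\mu=2\bar\alpha$ (equivalently $\mathds W^{a(3)}_{a,-2}\mathds W^{b(r)}_{c,-r}|0\rangle\equiv 2\epsilon_2\mathds W^{b(r+1)}_{c,-r-2}|0\rangle$ mod $V_{r-1}$). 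By contrast, the $:W^{(1)}W^{(2)}:$ piece has degree $1$, and at the level of $V_n/V_{n-1}$ its commutator with $W^{(n)}$ is given by the Leibniz rule in the graded Lie algebra, producing \emph{quadratic} expressions such as $W^{(1)}W^{(n+1)}$ and $W^{(n)}W^{(2)}$, not a single $W^{(n+1)}$; single-mode terms from double contractions land only in $V_{n-1}$. So your proposed Wick computation starts from the wrong piece of $\Psi_L(\mathsf t_{2,0})$ and cannot produce the coefficient $-2(n+s-1)\epsilon_1$.

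What is actually needed, and what the paper supplies, is: (a) an a priori ansatz \eqref{eqn: W^3W^r preliminary, K>1} for $W^{a(3)}_{a,-2}W^{b(r)}_{b,-r}|0\rangle$ modulo $V_{r-1}$ as (linear)\,$+$\,(quadratic); (b) a mechanism to kill the quadratic coefficients --- the Zhu $C_2$-algebra count of Lemma \ref{lem: vanishing of nu_s} for $K=1$, and the identity $\Psi_L([\mathsf t_{2,0},\mathsf T_{n,0}(1)])=0$ combined with the inductive hypothesis for $K>1$; and (c) a way to pin down $\mu$ and rule out the residual $f_{n+1}\delta^a_b W^{c(n+2)}_{c}$ contamination, which the paper does only after passing to $\mathsf W^{(K)}_\infty$, using the grading argument and the $\epsilon_1\to 0$ linear degeneration (Lemma \ref{lem: map to W(infinity) modulo epsilon_1}, Proposition \ref{prop: Psi_inf filtered}); this is why the proof is deferred to the next section. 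None of these ingredients appear in your proposal, and your finite-$L$ ``careful bookkeeping'' plan has no handle on (c). Note also that for $X\in\mathfrak{sl}_K$ the paper avoids $\mathsf t_{2,0}$ entirely: since $\mathsf T_{n,m}(X)$ is an iterated commutator of $\mathsf T_{1,0}$'s and $\mathsf T_{0,m}$'s with nonvanishing $[X,Y]$, the claim follows directly from the associated graded Lie algebra; only the $\gl_1$ (trace) direction requires the hard argument. Your Miura-realization ``cleaner alternative'' is too vague to assess: you would need a filtration on $V^{\kappa_\alpha}(\gl_K)^{\otimes L}$ compatible with $V_\bullet\mathcal W^{(K)}_L$ and a precise meaning of ``leading Miura term'' before the claimed reduction to elementary Kac--Moody OPEs becomes a proof.
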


The proof of Proposition \ref{prop: Psi_L filtered} will be given in the next section. Some preliminary steps are given here. We discuss the $K=1$ case and $K>1$ case separately.

\bigskip \noindent $\bullet\; K=1$. Since $W^{(r)}_{-n}|0\rangle$ has degree $-n$ under the grading \eqref{eqn: grading on tensor product}, we have $\deg W^{(3)}_{-2}W^{(r)}_{-r}|0\rangle=-r-2$. Modulo $V_{r-1}\mathcal W^{(1)}_L$, $W^{(3)}_{-2}W^{(r)}_{-r}|0\rangle$ contains only linear terms in $W^{(r+1)}_{-r-2}|0\rangle$ or quadratic terms in $W^{(s)}_{-s}W^{(t)}_{-t}|0\rangle$ with $s+t=r+2$, i.e.
\begin{equation}\label{eqn: W^3W^r preliminary, K=1}
\begin{split}
    W^{(3)}_{-2}W^{(r)}_{-r}|0\rangle\equiv  \mu W^{(r+1)}_{-r-2}|0\rangle+\sum_{s=1}^{\lfloor \frac{r+2}{2}\rfloor}\nu_{s}W^{(s)}_{-s}W^{(r+2-s)}_{s-r-2}|0\rangle \pmod{V_{r-1}\mathcal W^{(1)}_L},
\end{split}
\end{equation}
for some $\mu,\nu_s\in \mathbb C[\alpha]$.

\begin{lemma}\label{lem: vanishing of nu_s}
Assume that $L>r>1$, then $\nu_s=0$ for all $1\le s\le \lfloor \frac{r+2}{2}\rfloor$. 
\end{lemma}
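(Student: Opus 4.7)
The plan is to turn the ansatz \eqref{eqn: W^3W^r preliminary, K=1} into a linear system in the unknowns $(\mu,\nu_1,\dots,\nu_{\lfloor(r+2)/2\rfloor})$ by testing both sides against a well-chosen family of annihilation operators. A first observation is that $W^{(3)}_{-2}|0\rangle=0$ (since $W^{(3)}$ has conformal weight $3$), so
\[
W^{(3)}_{-2}W^{(r)}_{-r}|0\rangle=[W^{(3)}_{-2},W^{(r)}_{-r}]|0\rangle
\]
and the left hand side is really a commutator applied to the vacuum. Lemma \ref{lem: associated graded of W-algebra} specialised to $K=1$ and $(p,q)=(3,r)$ gives that every pole coefficient of the OPE $W^{(3)}(z)W^{(r)}(w)$ lies in $V_{r}\mathcal W^{(1)}_L$ (the displayed $(z-w)^{-1}$ term vanishes for $K=1$, and the $(z-w)^{-2}$ term is killed by $\delta_{p,1}\delta_{q,1}=0$). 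This confirms that $W^{(3)}_{-2}W^{(r)}_{-r}|0\rangle\in V_r\mathcal W^{(1)}_L$, which is exactly what justifies the form of the ansatz in \eqref{eqn: W^3W^r preliminary, K=1}.

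Next I would apply the annihilation operators $W^{(1)}_n$ for $n\ge 1$ to both sides. Using $W^{(1)}_n|0\rangle=0$ and the Leibniz rule,
\[
W^{(1)}_n\,W^{(3)}_{-2}W^{(r)}_{-r}|0\rangle=[W^{(1)}_n,W^{(3)}_{-2}]\,W^{(r)}_{-r}|0\rangle+W^{(3)}_{-2}\,[W^{(1)}_n,W^{(r)}_{-r}]|0\rangle,
\]
and each bracket $[W^{(1)}_n,W^{(k)}_{-k}]$ is given in closed form by specialising equation \eqref{W1Wn OPE_finite L} to $K=1$: it is a sum over $i=0,\dots,k-1$ of factorial-weighted modes $W^{(i)}_{n-k}$ plus a central $\delta$-term. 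The same operation on the right hand side of \eqref{eqn: W^3W^r preliminary, K=1} is also explicit: the linear term $\mu W^{(r+1)}_{-r-2}|0\rangle$ produces a single bracket, while each quadratic term $\nu_s W^{(s)}_{-s}W^{(r+2-s)}_{s-r-2}|0\rangle$ produces a two-piece Leibniz expansion. Modulo $V_{r-2}\mathcal W^{(1)}_L$, both sides then lie in the finite-dimensional piece of $V_{r-1}/V_{r-2}$ at conformal weight $r+1$, whose basis I would fix by one further application of Lemma \ref{lem: associated graded of W-algebra}; comparing coefficients of these basis vectors yields a linear equation in $(\mu,\nu_1,\dots,\nu_{\lfloor(r+2)/2\rfloor})$ for each choice of $n$.

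Varying $n$ over a sufficiently large range (say $n=1,\dots,r+2$) produces an overdetermined linear system. The hypothesis $L>r>1$ guarantees that all the factorials $(L-i)!/(L-k)!$ occurring in the commutator formula are non-vanishing for the relevant range $k\le r+1$, so the coefficient matrix has maximal rank; this forces every $\nu_s=0$ and simultaneously pins down $\mu$ as a rational function of $\alpha$ and $L$. The main obstacle I anticipate is the bookkeeping: each reduction modulo a lower filtration piece can, in principle, rewrite a single-mode basis vector as a combination of two-mode ones via the affine current algebra structure of $\mathrm{gr}_V\mathcal W^{(1)}_L$. Working throughout with the generating function $W^{(1)}(z)$ rather than mode-by-mode, and tracking only the top-filtration representative in each step, should keep the computation tractable; the base case $r=2$ (where the ansatz contains at most two quadratic terms) can be verified directly as a consistency check before the general pattern is assembled.
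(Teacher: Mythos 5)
Your strategy of probing the ansatz with the modes $W^{(1)}_n$ is legitimate in principle (and the observation that $W^{(1)}_n$, $n\ge 1$, lowers the vertical filtration by one, so that the unspecified correction in $V_{r-1}\mathcal W^{(1)}_L$ drops out modulo $V_{r-2}$, is the right reason such a test could work). But the key claim --- that varying $n$ over $n=1,\dots,r+2$ produces an overdetermined linear system of maximal rank in $(\mu,\nu_1,\dots,\nu_{\lfloor (r+2)/2\rfloor})$ --- fails. For $K=1$ the commutator $[W^{(1)}_n,W^{(k)}_{-k}]$ is a combination of modes $W^{(i)}_{n-k}$ with $i\le k-1$ plus a central term, and $W^{(i)}_{n-k}$ applied to a vector of the form $W^{(t)}_{-t}|0\rangle$ survives modulo $V_{r-2}$ only when $n-k\le -i$, i.e.\ only when $n\le 1$. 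Consequently every quadratic term $\nu_s W^{(s)}_{-s}W^{(r+2-s)}_{s-r-2}|0\rangle$ is annihilated modulo $V_{r-2}$ by $W^{(1)}_n$ for all $n\ge 2$, so those equations carry no information about the $\nu_s$. What remains is the pair $n=0,1$, which yields a small, roughly bidiagonal system (each $\nu_s$ feeds into the symmetrized monomials indexed by the unordered pairs $\{s-1,r+2-s\}$ and $\{s,r+1-s\}$, with the pair $\{0,r+1\}$ degenerating into a linear monomial); whether this forces all $\nu_s=0$ requires computing its entries and also the quadratic part of $W^{(1)}_1W^{(3)}_{-2}W^{(r)}_{-r}|0\rangle$ modulo $V_{r-2}$, none of which is done. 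The role you assign to the hypothesis $L>r$ (nonvanishing of the factorials $(L-i)!/(L-k)!$) is also not where that hypothesis is actually needed.

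For comparison, the paper's argument avoids mode-by-mode probing entirely: it passes to Zhu's $C_2$-algebra of $V^{\kappa_\alpha}(\gl_1)^{\otimes L}$, the free polynomial ring on $J^{[1]},\dots,J^{[L]}$. There the symbol of the left-hand side $W^{(3)}_{-2}W^{(r)}_{-r}|0\rangle=[W^{(3)}_{-2},W^{(r)}_{-r}]|0\rangle$ has polynomial degree at most $r+1$ in the $J^{[i]}$ (the zero-mode action $W^{(3)}_{(0)}$ descends to a derivation dropping the degree by $2$), whereas each quadratic term contributes $\nu_s\overline W^{(s)}\overline W^{(r+2-s)}$ in degree exactly $r+2$; the hypothesis $L>r$ guarantees these degree-$(r+2)$ monomials are part of a free basis of the $\mathfrak S_L$-invariants, so linear independence forces $\nu_s=0$ in one step. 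If you want to salvage your approach you would need to carry out the $n=0,1$ computations explicitly and verify invertibility of the resulting small system, which is considerably more work than the degree count.
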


\begin{proof}
Consider the image of two sides of \eqref{eqn: W^3W^r preliminary, K=1} in the Zhu's $C_2$-algebra of $V^{-\bar\alpha}(\gl_1)^{\otimes L}$ (which is the polynomial ring over $\mathbb C[\alpha]$ freely generated by $J^{[1]},\cdots,J^{[L]}$). The left-hand-side is a polynomial of at most $r+1$ $J$'s, then so is the right-hand-side. On the other hand, the right-hand-side equals to $\sum_{s=1}^{\lfloor \frac{r+2}{2}\rfloor}\nu_{s}\overline{W}^{(s)}\overline{W}^{(r+2-s)}$ modulo $V_{r-2}(\mathbb C[\alpha,\overline{W}^{(1)},\cdots,\overline{W}^{(L)}])$, where 
\begin{align*}
    \overline{W}^{(s)}=\sum_{i_1<\cdots<i_s}J^{[i_1]}\cdots J^{[i_s]},
\end{align*}
and the vertical filtration $V_{\bullet}(\mathbb C[\alpha,\overline{W}^{(1)},\cdots,\overline{W}^{(L)}])$ is induced by setting $\deg_V(\alpha)=0,\deg_V \overline{W}^{(s)}=s-1$. Let us grade $\mathbb C[\alpha,J^{[1]},\cdots,J^{[L]}]$ by $\deg\alpha=0,\deg J^{[i]}=1$, then the degree $r+2$ subspace of $\mathfrak S_L$-invariant subspace of $\mathbb C[\alpha,J^{[1]},\cdots,J^{[L]}]$ is a free $\mathbb C[\alpha]$-module with a basis given by monomials $\overline W^{(s_1)}\cdots\overline W^{(s_n)}$ such that $\sum_{i=1}^n s_i=r+2$. Since the left-hand-side has trivial degree $r+2$ component, we conclude that $\nu_s=0$ for all $1\le s\le \lfloor \frac{r+2}{2}\rfloor$. 
\end{proof}


\begin{lemma}\label{lem: mu independent of L}
Assume that $L>r>1$, then $\mu$ does not depend on $L$.
\end{lemma}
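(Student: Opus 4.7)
The plan is to propagate the defining equation for $\mu$ from $\mathcal W^{(1)}_L$ down to $\mathcal W^{(1)}_{L-1}$ by using the vertex algebra embedding $\Delta_{L-1,1}\colon \mathcal W^{(1)}_L\hookrightarrow \mathcal W^{(1)}_{L-1}\otimes V^{-\bar\alpha}(\mathfrak{gl}_1)$ obtained from splitting off the last factor of the Miura operator, and then extracting the vacuum component in the second tensor factor. Induction on $L$ then gives the claim.

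First I would verify that $\Delta_{L-1,1}$ strictly preserves the vertical filtration $V_\bullet$. From the explicit coproduct formula
\begin{equation*}
    \Delta_{L-1,1}(W^{(r)}(z)) = \sum_{s+t+u=r}\binom{1-t}{u}\alpha^u\partial^u W^{(s)}(z)\otimes W^{(t)}(z),
\end{equation*}
each summand $\partial^u W^{(s)}\otimes W^{(t)}$ has combined vertical filtration degree $s+t-2\le r-1$, with equality only when $u=0$ and either $s=0$ or $t=0$. In particular, the leading part of $\Delta_{L-1,1}(W^{(r)})$ modulo $V_{r-2}$ is $W^{(r)}\otimes 1+1\otimes W^{(r)}+\sum_{s+t=r,\,s,t\ge 1}W^{(s)}\otimes W^{(t)}$ (with coefficients read off from the formula).

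Second, by Lemma \ref{lem: vanishing of nu_s} the equation \eqref{eqn: W^3W^r preliminary, K=1} simplifies, for $L>r>1$, to $W^{(3)}_{-2}W^{(r)}_{-r}|0\rangle \equiv \mu_L W^{(r+1)}_{-r-2}|0\rangle \pmod{V_{r-1}\mathcal W^{(1)}_L}$, where I write $\mu_L$ to emphasise the $L$-dependence. Applying $\Delta_{L-1,1}$ to both sides and then the linear projection $p=\mathrm{id}\otimes\langle 0|\colon \mathcal W^{(1)}_{L-1}\otimes V^{-\bar\alpha}(\mathfrak{gl}_1)\to \mathcal W^{(1)}_{L-1}$ that extracts the vacuum coefficient of the second factor yields an identity in $\mathcal W^{(1)}_{L-1}$. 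The projection $p$ is linear rather than a VOA map, but it is strictly compatible with $V_\bullet$, so the identity descends to a congruence modulo $V_{r-1}\mathcal W^{(1)}_{L-1}$.

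Third, I would show that the output of $p\circ\Delta_{L-1,1}$ reproduces the left- and right-hand sides of the defining equation modulo $V_{r-1}\mathcal W^{(1)}_{L-1}$, with the same coefficient $\mu_L$. The key observation is that for $t\ge 1$ the state $W^{(t)}_n|0\rangle$ is either zero (for $n>-t$) or a non-vacuum primary descendant (for $n\le -t$); hence any summand of $\Delta_{L-1,1}(W^{(r)})$ whose right factor is $W^{(t)}$ with $t\ge 1$ contributes zero under $p$ when acting on $|0\rangle\otimes|0\rangle$. This leaves only the $t=0$ contributions $W^{(r)}\otimes 1+\alpha\partial W^{(r-1)}\otimes 1$, of which the second is automatically in $V_{r-2}$. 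Combining this with the filtration analysis gives $W^{(3)}_{-2}W^{(r)}_{-r}|0\rangle\equiv\mu_L W^{(r+1)}_{-r-2}|0\rangle\pmod{V_{r-1}\mathcal W^{(1)}_{L-1}}$, which forces $\mu_L=\mu_{L-1}$ whenever $L-1>r$. Iterating establishes the claim.

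The main obstacle lies in Step 3: one must carefully track how the cross terms $W^{(s)}\otimes W^{(t)}$ with $s,t\ge 1$ arising in both $\Delta_{L-1,1}(W^{(3)}_{-2})$ and $\Delta_{L-1,1}(W^{(r)}_{-r})$ interact. Although each individual cross term moves the second-factor state away from the vacuum, the composition $\Delta_{L-1,1}(W^{(3)}_{-2})\Delta_{L-1,1}(W^{(r)}_{-r})(|0\rangle\otimes|0\rangle)$ contains products in which the second factor may be brought back to the vacuum via the OPE/commutation of $W^{(t)}$ with $W^{(t')}$ in $V^{-\bar\alpha}(\mathfrak{gl}_1)$. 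The filtration bound in Step 1 is what keeps this bookkeeping finite: only finitely many configurations of $(s,t,u)$ land at vertical degree $r$, and the remaining contributions can be collected and shown, via the Heisenberg OPE between the $W^{(t)}$'s in the second factor, to either vanish or fall into $V_{r-1}\mathcal W^{(1)}_{L-1}$ after the projection.
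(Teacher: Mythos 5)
Your proposal is correct and follows essentially the same route as the paper: both apply the splitting map $\Delta_{L-1,1}$ to the defining congruence, use that it is filtered for $V_\bullet$ and reduces to $W^{(r)}_{-r}|0\rangle\otimes|0\rangle$ modulo lower filtration, and compare coefficients of $W^{(r+1)}_{-r-2}|0\rangle\otimes|0\rangle$ to get $\mu_L=\mu_{L-1}$; your extra bookkeeping with the projection $\mathrm{id}\otimes\langle 0|$ and the Heisenberg cross terms only makes the paper's terse argument more explicit. The one small slip is the degree count "$s+t-2$" for $\partial^u W^{(s)}\otimes W^{(t)}$, which fails when $s=0$ or $t=0$ (since $W^{(0)}$ has vertical degree $0$, not $-1$), but your stated conclusion about which summands attain the top degree is still the right one.
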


\begin{proof}
We write $\mu[L]$ to indicate the $L$-dependence. Consider the coproduct $\Delta_{L-1,1}:\mathcal W^{(1)}_L\to \mathcal W^{(1)}_{L-1}\otimes \mathcal W^{(K)}_1$, then we have
\begin{align*}
    \Delta_{L-1,1}(W^{(r)}_{-r}|0\rangle)\equiv W^{(r)}_{-r}|0\rangle\otimes  |0\rangle\pmod{V_{r-2}(\mathcal W^{(1)}_{L-1}\otimes \mathcal W^{(1)}_1)}.
\end{align*}
Then it follows that
\begin{align*}
    \Delta_{L-1,1}(W^{(3)}_{-2}W^{(r)}_{-r}|0\rangle)&\equiv \Delta_{L-1,1}(W^{(3)}_{-2})W^{(r)}_{-r}|0\rangle\otimes  |0\rangle \equiv W^{(3)}_{-2}W^{(r)}_{-r}|0\rangle\otimes  |0\rangle \\
    &\equiv \mu[L-1] W^{(r+1)}_{-r-1}|0\rangle|0\rangle\otimes |0\rangle \quad \pmod{V_{r-2}(\mathcal W^{(1)}_{L-1}\otimes \mathcal W^{(1)}_1)}.
\end{align*}
From the above we conclude that $\mu[L]=\mu[L-1]$, thus $\mu$ does not depend on $L$.
\end{proof}
Now we read from \eqref{eqn: W^3W^r preliminary, K=1} that
\begin{align*}
    [W^{(3)}_{-2},W^{(r)}_{n}]\equiv -(n+r-1)\mu W^{(r+1)}_{n-2}\pmod{V_{r-1}\mathfrak{U}(\mathcal W^{(1)}_L)}.
\end{align*}
The above formula together with $\Psi_L(\mathsf T_{0,n}(1))=W^{(1)}_{n}$ and $\Psi_L(\mathsf t_{2,0})\equiv \frac{\epsilon_1}{\bar\alpha} W^{(3)}_{-2}\pmod{V_1\mathfrak{U}(\cal W^{(K)}_L)}$ imply that $\exists \mu_{n}\in \mathbb C[\epsilon_1,\alpha]$ such that
\begin{align*}
    \Psi_L(\mathsf T_{n,m}(1))\equiv \mu_{n}  W^{(n+1)}_{m-n}\pmod{V_{n-1}\mathfrak{U}(\mathcal W^{(1)}_L)}.
\end{align*}
Moreover, according to Lemma \ref{lem: mu independent of L}, $\mu_{n}$ is independent of $L$ as long as $L > n$.

\bigskip \noindent $\bullet\; K>1$. We notice that
\begin{align*}
    [W^{a(r)}_{b,n},W^{c(s)}_{d,m}]\equiv \delta^c_b W^{a(r+s-1)}_{d,n+m}-\delta^a_d W^{c(r+s-1)}_{b,n+m}\pmod{V_{r+s-3}\mathfrak{U}(\cal W^{(K)}_L)}.
\end{align*}
The above formula together with $\Psi_L(\mathsf T_{0,n}(E^a_b))=W^{a(1)}_{b,n}$ and $\Psi_L(\mathsf T_{1,0}(E^a_b))\equiv -\epsilon_1 W^{a(2)}_{b,-1}\pmod{V_0\mathfrak{U}(\cal W^{(K)}_L)}$ imply that 
\begin{align*}
    \Psi_L(\mathsf T_{n,m}(E^a_b))\equiv (-\epsilon_1)^n  W^{a(n+1)}_{b,m-n}\pmod{V_{n-1}\mathfrak{U}(\mathcal W^{(K)}_L)}, \quad \forall {E}^a_b\in \mathfrak{sl}_K.
\end{align*}
For the $\gl_1$ part, one might proceed as follows. First, we have generalization of \eqref{eqn: W^3W^r preliminary, K=1} to $K>1$:
\begin{equation}\label{eqn: W^3W^r preliminary, K>1}
\begin{split}
    W^{a(3)}_{a,-2}W^{b(r)}_{b,-r}|0\rangle &\equiv  \mu W^{b(r+1)}_{b,-r-2}|0\rangle+\sum_{s=1}^{\lfloor \frac{r+2}{2}\rfloor}\left(\nu_{s,1}W^{c(s)}_{c,-s}W^{d(r+2-s)}_{d,s-r-2}+\nu_{s,2}W^{c(s)}_{d,-s}W^{d(r+2-s)}_{c,s-r-2}\right)|0\rangle\\
    &\pmod{V_{r-1}\mathcal W^{(K)}_L},
\end{split}
\end{equation}
for some $\mu,\nu_{s,1},\nu_{s,2}\in \mathbb C[\alpha]$. However, the argument in the proof of Lemma \ref{lem: vanishing of nu_s} does not work when $K>1$, because the image of left-hand-side of \eqref{eqn: W^3W^r preliminary, K>1} in the Zhu's $C_2$ algebra of $V^{\alpha,1}(\gl_K)^{\otimes L}$ has leading order $r+2$, in contrast to $r+1$ when $K=1$. So we proceed in an inductive way instead.

\begin{lemma}\label{lem: induction argument}
Assume that for a fixed $L>n>1$ and arbitrary $m\in \mathbb N$, $\Psi_L(\mathsf T_{n,m}(E^a_b))\equiv (-\epsilon_1)^n W^{a(n+1)}_{b,m-n}\pmod{V_{n-1}\mathfrak U(\mathcal W^{(K)}_L)}$, then $\exists f_{n+1}\in \mathbb C[\epsilon_1,\alpha]$ such that 
\begin{align*}
    \Psi_L(\mathsf T_{n+1,m}(E^a_b))\equiv (-\epsilon_1)^{n+1} W^{a(n+2)}_{b,m-n-1}+f_{n+1}\delta^a_bW^{c(n+2)}_{c,m-n-1}\pmod{V_{n}\mathfrak U(\mathcal W^{(K)}_L)}.
\end{align*}
Moreover, $f_{n+1}$ does not depend on $L$.
\end{lemma}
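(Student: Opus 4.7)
The plan is to analyze the leading $V_{n+1}$-part of $\Psi_L(\mathsf T_{n+1,m}(E^a_b))$ by splitting $\mathfrak{gl}_K = \mathfrak{sl}_K \oplus \mathbb{C}\cdot 1$, and then to establish $L$-independence of the scalar correction using the Miura splitting $\Delta_{L-1,1}$.

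For $X \in \mathfrak{sl}_K$, I would write $X = [X_1, X_2]$ with $X_1, X_2 \in \mathfrak{sl}_K$ (possible since $\mathfrak{sl}_K$ is perfect). By Proposition \ref{prop: filtration_vertical and horizontal}, $\mathsf T_{n+1,m}(X) \equiv [\mathsf T_{1,m}(X_1), \mathsf T_{n,0}(X_2)] \pmod{V_n\mathsf A^{(K)}}$. Applying the $V$-filtered map $\Psi_L$, invoking the induction hypothesis for the level-$n$ factor, and using the base-case identity $\Psi_L(\mathsf T_{1,m}(X_1)) \equiv -\epsilon_1 X_1{}^a_b W^{b(2)}_{a,m-1} \pmod{V_0}$ (read off from \eqref{eqn: Psi_L} at $m=0$ and extended to general $m$ via translation by $\mathsf L_{-1}$), the task reduces to computing $[W^{b(2)}_{a,k_1}, W^{d(n+1)}_{c,k_2}]$ modulo $V_n$. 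By Lemma \ref{lem: associated graded of W-algebra} this commutator equals $\delta^d_a W^{b(n+2)}_{c,k_1+k_2} - \delta^b_c W^{d(n+2)}_{a,k_1+k_2}$ modulo $V_n$. Contracting with $X_1, X_2$ and using $\mathrm{tr}[X_1,X_2] = 0$ produces $\Psi_L(\mathsf T_{n+1,m}(X)) \equiv (-\epsilon_1)^{n+1} X^a_b W^{a(n+2)}_{b,m-n-1} \pmod{V_n}$, with no delta correction.

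For the scalar part, $\mathfrak{gl}_K$-equivariance of the leading $V_{n+1}$ piece forces an ansatz $\Psi_L(\mathsf T_{n+1,m}(1)) \equiv \nu_{n+1}[L]\, W^{a(n+2)}_{a,m-n-1} \pmod{V_n}$ for some $\nu_{n+1}[L] \in \mathbb C[\epsilon_1, \alpha]$. Decomposing $E^a_b = (E^a_b - \delta^a_b/K) + \delta^a_b/K$ and combining with the traceless calculation yields the stated formula with $f_{n+1}[L] = K^{-1}(\nu_{n+1}[L] - (-\epsilon_1)^{n+1})$.

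To show $\nu_{n+1}[L]$ does not depend on $L$, I would apply $\mathfrak{C}_{\mathsf A} \otimes 1 \otimes 1$ to the compatibility $(\Delta_{L-1} \otimes 1)\Delta_1 = (1 \otimes \Delta_{L-1,1})\Delta_L$ from the corollary after Theorem \ref{thm: mixed coproduct}, obtaining $\Delta_{L-1,1} \circ \Psi_L = (\Psi_{L-1} \otimes 1) \circ \Delta_1$. Evaluate on $\mathsf T_{n+1,0}(1)$ and take the top $V_{n+1}$-graded component. On the LHS, only the two extreme terms of the Miura splitting formula contribute at top $V$-degree, giving $\nu_{n+1}[L]\, (W^{a(n+2)}_{a,-n-2} \otimes 1 + 1 \otimes W^{a(n+2)}_{a,-n-2})$. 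For the RHS one must verify that $\Delta_1$ is filtered for the sum-of-degrees vertical filtration on $\mathsf A^{(K)} \widetilde\otimes \mathfrak U(\mathcal W^{(K)}_1)$, with diagonal top-graded part $x \mapsto x \otimes 1 + 1 \otimes \Psi_1(x)^{\mathrm{top}}$; this is checked generator by generator from the explicit formulae of Theorem \ref{thm: mixed coproduct}, noting that the cross terms $\mathsf T_{0,p}(E) \otimes W^{b(1)}_{a,-p-1}$ appearing in $\Delta_1(\mathsf t_{2,0})$ have total $V$-degree $0$, strictly below the diagonal's contribution of $V$-degree $2$. Consequently the RHS top-graded piece equals $\nu_{n+1}[L-1]\, W^{a(n+2)}_{a,-n-2} \otimes 1 + \nu_{n+1}[1] \cdot 1 \otimes W^{a(n+2)}_{a,-n-2}$, and matching components gives $\nu_{n+1}[L] = \nu_{n+1}[L-1] = \nu_{n+1}[1]$. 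The main obstacle is this last verification: showing that the iterated adjoint action $\mathrm{ad}^{n+1}_{\Delta_1(\mathsf t_{2,0})}(\Delta_1(\mathsf t_{0,n+1}))$ yields only the diagonal contribution in top $V$-degree, which amounts to a careful filtered-degree count through the Leibniz expansion of repeated commutators, verifying that each cross-term insertion strictly reduces the total $V$-degree.
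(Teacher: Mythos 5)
Your treatment of the traceless part is essentially the paper's own (the paper carries it out in the discussion immediately preceding the lemma, using $\Psi_L(\mathsf T_{1,0}(E^a_b))\equiv-\epsilon_1 W^{a(2)}_{b,-1}$ and the graded bracket of Lemma \ref{lem: associated graded of W-algebra}), and it works because the graded Lie bracket of two linear modes is again linear, so no quadratic corrections can appear for $X=[X_1,X_2]$. The real content of the lemma is the trace part, and there your argument has a genuine gap.

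The step ``$\mathfrak{gl}_K$-equivariance of the leading $V_{n+1}$ piece forces an ansatz $\Psi_L(\mathsf T_{n+1,m}(1))\equiv\nu_{n+1}[L]\,W^{a(n+2)}_{a,m-n-1}$'' is not justified, and it is precisely the hard point. The graded component $V_{n+1}/V_n$ of $\mathfrak U(\mathcal W^{(K)}_L)$ in conformal weight $m-n-1$ is spanned by normally ordered \emph{monomials}, not just linear modes: any $\sum_k :W^{c(s)}_{c,-k}W^{d(n+3-s)}_{d,k+m-n-1}:$ or $\sum_k :W^{c(s)}_{d,-k}W^{d(n+3-s)}_{c,k+m-n-1}:$ with $2\le s+(n+3-s)$ has $V$-degree $(s-1)+(n+2-s)=n+1$, the right weight, and is $\gl_K$-invariant. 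Equivariance therefore cannot exclude them; this is exactly the shape of the undetermined terms $\nu_{s,1},\nu_{s,2}$ in equation \eqref{eqn: W^3W^r preliminary, K>1}. Moreover the trace part genuinely has to pass through this subleading order: since $1$ is not a commutator in $\mathfrak{sl}_K$, one must produce $\mathsf T_{n+1,m}(1)$ via $\mathrm{ad}_{\mathsf t_{2,0}}$, and the top graded bracket $[W^{a(3)}_{a,-2},W^{b(n+1)}_{b,k}]$ vanishes by Lemma \ref{lem: associated graded of W-algebra}, so the answer lives entirely in the layer where the quadratics can occur. The paper kills them by exploiting the relation $[\mathsf t_{2,0},\mathsf T_{n,0}(1)]=0$ in $\mathsf A^{(K)}$: applying $\Psi_L$ and the induction hypothesis, the quadratic coefficients in the $W^{(3)}W^{(n+1)}$ OPE contribute linearly independent normally ordered sums to this commutator with nothing available to cancel them, forcing $\nu_{s,1}=\nu_{s,2}=0$; only then does $[W^{a(3)}_{a,-2},W^{b(n+1)}_{b,k}]\equiv-(n+k)\mu W^{b(n+2)}_{b,k-2}$ follow and the scalar $f_{n+1}$ get defined. (The paper also remarks that the Zhu $C_2$-algebra argument which disposes of these quadratics when $K=1$ fails for $K>1$, which is why this inductive commutator argument is needed at all.) Your $L$-independence argument via $\Delta_{L-1,1}\circ\Psi_L=(\Psi_{L-1}\otimes1)\circ\Delta_1$ is reasonable and close in spirit to the paper's Lemma \ref{lem: mu independent of L}, but it is moot until the linear ansatz is actually established.
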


\begin{proof}
Consider the identity $\Psi_L([\mathsf t_{2,0},\mathsf T_{n,0}(1)])=0$, and let us replace its left-hand-side by the image in $\mathfrak{U}(\mathcal W^{(K)}_L)$, namely
\begin{align*}
    0=[\Psi_L(\mathsf t_{2,0}),\Psi_L(\mathsf T_{n,0}(1))]\equiv [\frac{\epsilon_1}{\bar\alpha}W^{a(3)}_{a,-2},(-\epsilon_1)^n W^{b(n+1)}_{b,-n} ]\pmod{V_{n}\mathfrak{U}(\mathcal W^{(K)}_L)}.
\end{align*}
Plug \eqref{eqn: W^3W^r preliminary, K>1} into the above equation, and we find
\begin{align*}
    0\equiv \sum_{s=1}^{\lfloor \frac{n+3}{2}\rfloor}\left(\nu_{s,1}\sum_{k\in \mathbb Z}:W^{c(s)}_{c,-k}W^{d(n+3-s)}_{d,k-n-2}:+\nu_{s,2}\sum_{l\in \mathbb Z}:W^{c(s)}_{d,-l}W^{d(n+3-s)}_{c,l-n-2}:\right)\pmod{V_{n}\mathfrak{U}(\mathcal W^{(K)}_L)}.
\end{align*}
This implies that $\nu_{s,1}=\nu_{s,2}=0$ for all $1\le s\le \lfloor \frac{n+3}{2}\rfloor$.  Then one deduces from \eqref{eqn: W^3W^r preliminary, K>1} that 
\begin{align*}
    [W^{a(3)}_{a,-2},W^{b(n+1)}_{b,m}]\equiv -(n+m)\mu W^{b(n+2)}_{b,m-2}\pmod{V_{n-1}\mathfrak{U}(\mathcal W^{(K)}_L)}.
\end{align*}
Thus we have
\begin{align*}
    \Psi_L(\mathsf T_{n+1,m}(1))&=\frac{1}{2(m+1)}[\Psi_L(\mathsf t_{2,0}),\Psi_L(\mathsf T_{n,m+1}(1))]\equiv \frac{-(-\epsilon_1)^{n+1}}{2(m+1)\bar\alpha}[W^{a(3)}_{a,-2},W^{b(n)}_{b,m+1-n}]\\
    &\equiv \frac{(-\epsilon_1)^{n+1}\mu}{2\bar\alpha}W^{b(n+2)}_{b,m-1-n}.
\end{align*}
This proves the first statement, with $f_{n+1}=\frac{(-\epsilon_1)^{n+1}}{K}(\frac{\mu}{2\bar\alpha}-1)$. Note that $f_{n+1}\in \mathbb C[\epsilon_1,\alpha]$ by Proposition \ref{prop: image of D^K}. Finally, the statement that $f_{n+1}$ does not depend on $L$ is proven in the same way as Lemma \ref{lem: mu independent of L} and we omit the detail.
\end{proof}



\subsection{Duality isomorphism of the rectangular W-algebra}\label{subsec: duality of W}
The rectangular $\mathcal W$-algebra that we have discussed so far is denoted by $\mathcal W^{(K)}_{0,0,L}$ in the literature \cite{gaiotto2022miura}. In this subsection we introduce another realization of the rectangular $\mathcal W$-algebra, which is the $\mathcal W^{(K)}_{0,L,0}$ in \cite{gaiotto2022miura}.

We define the $\mathbb C[\bar\alpha]$-vertex algebra $\widetilde{\mathcal W}^{(K)}_{L}$ to be the vertex subalgebra of $V^{\kappa_{\bar\alpha}}(\mathfrak{gl}_K)^{\otimes L}$ which is strongly generated by the fields $U^{a(r)}_{b}(z)=\sum_{m\in \mathbb Z}U^{a(r)}_{b,m}z^{-m-r}$ in the Miura operator:
\begin{equation}
\begin{split}
 (\bar\alpha\partial+J^{a[1]}_b(z)E^b_a)&(\bar\alpha\partial+J^{a[2]}_b(z)E^b_a)\cdots(\bar\alpha\partial+J^{a[L]}_b(z)E^b_a)\\
    &=(\bar\alpha\partial)^L+\sum_{r=1}^L (\bar\alpha\partial)^{L-r}U^{a(r)}_b(z)E^b_a,
\end{split}
\end{equation}
where $E^b_a$ is the elementary matrix, and $J^{a[i]}_b(z),i=1,\cdots,L$ are $L$ copies of affine Kac-Moody currents in $V^{\kappa_{\bar\alpha}}(\mathfrak{gl}_K)$. $\widetilde V_{-2}$ is a mode of quasi-primary field $\widetilde V(z)=\sum_{n\in \mathbb Z}\widetilde V_{n}z^{-n-3}$ defined as
\begin{equation}
\begin{split}
\widetilde V(z):=&\frac{1}{6}\sum_{i=1}^L\left(:J^{a[i]}_b(z)J^{b[i]}_c(z)J^{c[i]}_a(z):+:J^{b[i]}_a(z)J^{c[i]}_b(z)J^{a[i]}_c(z):\right)\\
&+\alpha\sum_{i<j}^L:J^{a[i]}_b(z)\partial J^{b[j]}_a(z):+\sum_{i<j}^L:J^{a[i]}_a(z)\partial J^{b[j]}_b(z):\\
=&\frac{1}{6}\left(:U^{a(1)}_b(z)U^{b(1)}_c(z)U^{c(1)}_a(z):+:U^{b(1)}_a(z)U^{c(1)}_b(z)U^{a(1)}_c(z):\right)\\
&+U^{a(3)}_a(z)-:U^{a(1)}_b(z)U^{b(2)}_a(z):+\text{ total derivatives},
\end{split}
\end{equation}
from the last equation we see that $\widetilde V_{-2}\in \mathfrak{U}(\widetilde{\mathcal W}^{(K)}_L)$. 

There exists an isomorphism of vertex algebras $\sigma_L:\mathcal W^{(K)}_L\cong \widetilde{\mathcal W}^{(K)}_L$ which is induced by 
\begin{align*}
    \sigma_L(\alpha)=\bar\alpha,\quad \sigma_L(J^{a[i]}_{b}(z))=-J^{b[i]}_{a}(z).
\end{align*}
Composing the duality transform \eqref{eqn: duality for A} $\sigma:\mathsf A^{(K)}\cong \mathsf A^{(K)}$ with the representation $\Psi_L: \mathsf A^{(K)}\to \mathfrak{U}(\mathcal W^{(K)}_L)[\bar\alpha^{-1}]$, and then applying the isomorphism $\sigma_L:\mathcal W^{(K)}_L\cong \widetilde{\mathcal W}^{(K)}_L$, we get a new map $$\widetilde\Psi_L=\sigma_L\circ\Psi_L\circ\sigma:\mathsf A^{(K)}\to \mathfrak{U}(\widetilde{\mathcal W}^{(K)}_L)[\alpha^{-1}]$$ which is uniquely determined by
\begin{equation}\label{eqn: another map to W-algebra}
\boxed{
\begin{aligned}
    &\widetilde\Psi_L(\mathsf T_{0,n}(E^a_b))= U^{a(1)}_{b,n}+\frac{1}{\alpha}\delta^a_bU^{c(1)}_{c,n},\\
    &\widetilde\Psi_L(\mathsf t_{1,n})=-\mathsf L_{n-1}-\frac{\bar\alpha n L}{2\alpha}U^{a(1)}_{a,n-1}\\
    &\widetilde\Psi_L(\mathsf T_{1,0}(E^a_b))= -\epsilon_1\sum_{m\ge 0}\left(U^{c(1)}_{b,-m-1}U^{a(1)}_{c,m}+\frac{1}{\alpha}\delta^a_bU^{c(1)}_{d,-m-1}U^{d(1)}_{c,m}\right)+\epsilon_1 U^{a(2)}_{b,-1}+\frac{\epsilon_1}{\alpha}\delta^a_bU^{c(2)}_{c,-1}\\
    &\widetilde\Psi_L(\mathsf t_{2,0})=-\frac{\epsilon_1}{\alpha}\left(\widetilde V_{-2}+\alpha\sum_{n=1}^{\infty}n\:U^{a(1)}_{b,-n-1}U^{b(1)}_{a,n-1}+\sum_{n=1}^{\infty}n\: U^{a(1)}_{a,-n-1}U^{b(1)}_{b,n-1}\right).
\end{aligned}}
\end{equation}
Similarly we can define algebra homomorphism
$$\widetilde\Delta_L=(\sigma\otimes\sigma_L)\circ\Delta_L\circ\sigma: \mathsf A^{(K)}\to \mathsf A^{(K)}\widetilde\otimes\mathfrak{U}(\widetilde{\mathcal W}^{(K)}_L)[\alpha^{-1}]$$ which is uniquely determined by
\begin{equation}
\boxed{
\begin{aligned}
&\widetilde\Delta_L(\mathsf T_{0,n}(E^a_b))=\square( \mathsf T_{0,n}(E^a_b)),\\
&\widetilde\Delta_L(\mathsf t_{1,n})=\square(\mathsf t_{1,n})+\epsilon_3 nL\mathsf t_{0,n-1}\otimes 1,\\
&\widetilde\Delta_L(\mathsf T_{1,0}(E^a_b))=\square(\mathsf T_{1,0}(E^a_b))+\epsilon_1\sum_{m=0}^{\infty}\left(\mathsf T_{0,m}(E^c_b)\otimes U^{a(1)}_{c,-m-1}-\mathsf T_{0,m}(E^a_c)\otimes U^{c(1)}_{b,-m-1}\right),\\
&\widetilde\Delta_L(\mathsf t_{2,0})=\square(\mathsf t_{2,0})-2\epsilon_1\sum_{n=1}^{\infty}n \:\mathsf T_{0,n-1}(E^a_b)\otimes U^{b(1)}_{a,-n-1},
\end{aligned}}
\end{equation}
where $\square(x):=x\otimes 1+1\otimes \widetilde\Psi_L(x)$.

\bigskip The maps $\widetilde\Delta_L$ and $\Delta_{L_1,L_2}$ are compatible in the following sense:
\begin{equation}\label{compatible between coproducts_cd tilde}
(\widetilde\Delta_{L_1}\otimes 1)\circ \widetilde\Delta_{L_2}=(1\otimes\Delta_{L_1,L_2})\circ \widetilde\Delta_{L_1+L_2}
\end{equation}
Recall that $\widetilde{\mathsf D}^{(K)}$ is the subalgebra of $\mathsf A^{(K)}$ defined as $\sigma(\mathsf D^{(K)})$ (Definition \ref{def: D tilde}). By the Proposition \ref{prop: image of D^K} and our construction of $\widetilde\Delta_L$ and $\widetilde\Psi_L$, the image of $\widetilde{\mathsf D}^{(K)}$ under the map $\widetilde\Delta_L$ is contained in the subalgebra $\widetilde{\mathsf D}^{(K)}\widetilde\otimes\mathfrak{U}(\widetilde{\mathcal W}^{(K)}_L)$, and the image of $\widetilde{\mathsf D}^{(K)}$ under the map $\widetilde\Psi_L$ is contained in the subalgebra $\mathfrak{U}(\widetilde{\mathcal W}^{(K)}_L)$.

\subsection{An anti-involution of the mode algebra of the rectangular W-algebra}

Consider the following anti-involution of $\mathfrak{U}(\widehat{\gl}^\alpha_K)^{\widehat{\otimes} L}$:
\begin{align}\label{eqn: anti-involution s_L}
    \mathfrak s_L: J^{a[i]}_{b,n}\mapsto J^{b[L+1-i]}_{a,-n}+(2i-L-1)\alpha\delta^a_b\delta_{n,0}, \quad 1\le i\le L,1\le a,b\le K,n\in \mathbb Z.
\end{align}
\begin{lemma}
The mode algebra of rectangular W-algebra is invariant under $\mathfrak s_L$, i.e. $\mathfrak s_L(\mathfrak{U}(\mathcal{\mathcal W}^{(K)}_L))=\mathfrak{U}(\mathcal{\mathcal W}^{(K)}_L)$.
\end{lemma}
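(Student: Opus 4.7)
The plan is to show that $\mathfrak s_L$ sends each generating field $W^{a(r)}_b(z)$ of the Miura presentation into $\mathfrak U(\mathcal W^{(K)}_L)$, which suffices since these modes topologically generate $\mathfrak U(\mathcal W^{(K)}_L)$. As a first step, translate $\mathfrak s_L$ from modes to fields: using $J^{a[i]}_b(z) = \sum_n J^{a[i]}_{b,n} z^{-n-1}$, a direct computation gives
\begin{equation*}
\mathfrak s_L(J^{a[i]}_b(z)) = z^{-2}\,J^{b[L+1-i]}_a(z^{-1}) + (2i-L-1)\alpha\,\delta^a_b\,z^{-1}.
\end{equation*}
This realizes $\mathfrak s_L$ as the composition of matrix transposition $E^a_b\mapsto E^b_a$, tensor factor reversal $i\mapsto L+1-i$, the formal inversion $z\mapsto z^{-1}$ weighted appropriately for a conformal weight one primary, and a tensor-factor-dependent zero-mode shift. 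The sanity check for spin one is immediate: since $\sum_{i=1}^L (2i-L-1)=0$, summing over $i$ collapses the shift and one obtains $\mathfrak s_L(W^{a(1)}_{b,n}) = W^{b(1)}_{a,-n}\in\mathfrak U(\mathcal W^{(K)}_L)$.

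For the higher $W^{(r)}$, the strategy is to interpret $\mathfrak s_L$ as the ``formal adjoint'' of the Miura operator $M(z) = \mathcal L_1(z)\cdots \mathcal L_L(z)$ with $\mathcal L_i(z) = \alpha\partial - J^{[i]}(z)$. Since $\mathfrak s_L$ is an anti-involution that simultaneously transposes matrix indices and reverses tensor labels, applying it to a matrix entry of $M(z)$ produces, up to the change of coordinates $z\mapsto z^{-1}$, the corresponding entry of the reversed product $\mathcal L_L^\vee(z)\cdots \mathcal L_1^\vee(z)$, where $\mathcal L_i^\vee$ incorporates both the shift $(2i-L-1)\alpha z^{-1}$ and the commutator of $\alpha\partial$ with $J^{[i]}(z^{-1})$. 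The specific coefficient $(2i-L-1)\alpha$ is pinned down precisely so that these commutator corrections, combined with the shifts, reassemble into a Miura-type operator whose coefficients are again expressible in the original $W^{(s)}(z)$; the upshot is that $\mathfrak s_L(W^{a(r)}_b(z))$ equals $\pm z^{-2r} W^{b(r)}_a(z^{-1})$ plus corrections lying in the span of lower-spin modes.

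An efficient way to organize the verification is induction on $L$ via the coproduct $\Delta_{1,L-1}:\mathcal W^{(K)}_L \hookrightarrow \mathcal W^{(K)}_1 \otimes \mathcal W^{(K)}_{L-1}$. The base case $L=1$ reduces to the Cartan anti-involution $J^a_{b,n}\mapsto J^b_{a,-n}$ of $V^{\kappa_\alpha}(\gl_K)=\mathcal W^{(K)}_1$ with vanishing shift $(2\cdot 1 - 2)=0$, which is standard. For the inductive step, the compatibility $(1\otimes\Delta_{L-1,1})\circ \mathfrak s_L = (\mathfrak s_{L-1}\otimes \mathfrak s_1)\circ \Delta_{1,L-1}$ (combined with a shift automorphism $\eta^{\otimes *}_\beta$ from Remark \ref{rmk: shift automorphism of W} to absorb the difference between the shift $(2i-L-1)\alpha$ at position $i\le L-1$ inside $\mathcal W^{(K)}_L$ versus position $i$ inside $\mathcal W^{(K)}_{L-1}$) reduces the problem to the $L=1$ base case.

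The main obstacle will be the explicit matching of the shifts with the commutator corrections produced by pushing $\alpha\partial$ past the inverted currents $J^{[i]}(z^{-1})$ in the reversed Miura expansion. This is largely a combinatorial bookkeeping exercise once the inductive framework is in place; the coefficient $(2i-L-1)$ is precisely the arithmetic progression of shifts needed so that the reversed Miura operator matches, tensor factor by tensor factor, the uniform shift $\eta^{\otimes L}_\beta$ on the appropriate piece, guaranteeing that the sum of the shifts telescopes correctly across the inductive splitting.
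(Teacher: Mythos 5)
Your first two paragraphs identify exactly the mechanism of the paper's proof: extend $\mathfrak s_L$ to an anti-involution of mode-valued differential operators and apply it to the Miura operator. But as written, the proof never establishes the one identity that carries all the content, namely that the reversed product
\begin{equation*}
\bigl(\alpha z^2\partial-(L-1)\alpha z-z^2J^{[1]}(z)\bigr)\cdots\bigl(\alpha z^2\partial+(L-1)\alpha z-z^2J^{[L]}(z)\bigr)
\end{equation*}
(with transposed external $\gl_K$ indices) equals $z^{L+1}\,\mathcal L^L_1(z)\,z^{L-1}$. You assert that the coefficients $(2i-L-1)\alpha$ are ``pinned down precisely'' so that the corrections reassemble, but that assertion \emph{is} the proof, and it is a one-line check you should actually perform: set $k_i=L+1-2i$, insert $z^{-k_i}z^{k_i}$ between consecutive factors, and use $z^{k}(\alpha\partial)z^{-k+2}=\alpha z^2\partial+(2-k)\alpha z$ with $k=k_{i-1}=L+3-2i$, so that $2-k=2i-L-1$; the conjugations telescope and reproduce exactly the shifts in \eqref{eqn: anti-involution s_L}. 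Once this is written down, comparing with $\mathfrak s_L$ applied to the expanded form $(\alpha z^2\partial)^L+\sum_r(-1)^r\sum_n\mathfrak s_L(W^{a(r)}_{b,n})z^{n+r}(\alpha z^2\partial)^{L-r}$ shows that $\mathfrak s_L(W^{a(r)}_{b,n})$ is a linear combination of $W^{b(s)}_{a,-n}$ for $s\le r$ and $\delta^a_b\delta_{n,0}$, and the lemma follows uniformly in $L$ with no induction.

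The inductive scaffolding in your third paragraph is where I would push back harder. The displayed compatibility $(1\otimes\Delta_{L-1,1})\circ\mathfrak s_L=(\mathfrak s_{L-1}\otimes\mathfrak s_1)\circ\Delta_{1,L-1}$ does not typecheck: $\mathfrak s_L$ lands in a single copy of $\mathfrak{U}(\mathcal W^{(K)}_L)$, and since $\mathfrak s_L$ reverses the tensor factors ($i\mapsto L+1-i$) any such identity must also flip the two output slots. Moreover, $\mathfrak s_L$ restricted to the factor $i=1$ carries the shift $-(L-1)\alpha$ rather than the zero shift of $\mathfrak s_1$, so the twist by the shift automorphism $\eta_\beta$ is not a cosmetic correction, and verifying that the twisted coproduct intertwines the anti-involutions is itself equivalent to the telescoping identity above. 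In short, the induction buys nothing: its inductive step presupposes the direct computation it was meant to organize. Do the direct computation.
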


\begin{proof}
Let us extend $\mathfrak s_L$ to an anti-involution on the mode-valued differential operators by defining
\begin{align*}
    \mathfrak s_L(z)=z^{-1},\quad \mathfrak s_L(\partial_z)=z^{2}\partial_z.
\end{align*}
Then $\mathfrak s_L$ acts on the Miura operator by
\begin{align*}
&\mathfrak s_L\left((\alpha\partial-J^{[1]}(z))^{a_1}_{a_2}(\alpha\partial-J^{[2]}(z))^{a_2}_{a_3}\cdots(\alpha\partial-J^{[L]}(z))^{a_L}_{a_{L+1}}\right)=\\
=&(\alpha z^2\partial-(L-1)\alpha z-z^2J^{[1]}(z))^{a_{L+1}}_{a_L}\cdots(\alpha z^2\partial+(L-3)\alpha z-z^2J^{[L-1]}(z))^{a_3}_{a_2}\times\\
&\times(\alpha z^2\partial+(L-1)\alpha z-z^2J^{[L]}(z))^{a_2}_{a_1}\\
=& z^{L+1} (\alpha \partial-J^{[1]}(z))^{a_{L+1}}_{a_L}\cdots(\alpha \partial-J^{[L]}(z))^{a_2}_{a_1} z^{L-1}\\
=&z^{L+1} \left((\alpha\partial)^L+\sum_{r=1}^L (-1)^r(\alpha\partial)^{L-r}W^{a_{L+1}(r)}_{a_{1}}(z)\right)z^{L-1}.
\end{align*}
On the other hand,
\begin{align*}
&\mathfrak s_L\left((\alpha\partial)^L+\sum_{r=1}^L (-1)^r(\alpha\partial)^{L-r}W^{a_1(r)}_{a_{L+1}}(z)\right)=\\
=&(\alpha z^2\partial)^L+\sum_{r=1}^L (-1)^r\sum_{n\in \mathbb Z}\mathfrak s_L\left(W^{a_1(r)}_{a_{L+1},n}\right)z^{n+r} (\alpha z^2\partial)^{L-r},
\end{align*}
therefore we see that $\mathfrak s_L\left(W^{a(r)}_{b,n}\right)$ is a linear combination of $W^{b(r)}_{a,-n}, W^{b(r-1)}_{a,-n},\cdots ,W^{b(1)}_{a,-n}$ and $\delta^{a}_{b}\delta_{n,0}$. This proves the lemma.
\end{proof}

Consider the anti-involution $\mathfrak{s}_{\mathsf A}:\mathsf A^{(K)}\cong \mathsf A^{(K)\mathrm{op}}$ such that
\begin{align}\label{eqn: anti-involution s_A}
    \mathfrak{s}_{\mathsf A}(\mathsf t_{n,m})=(-1)^n \mathsf t_{n,m},\quad \mathfrak{s}_{\mathsf A}(\mathsf T_{n,m}(X))=(-1)^n \mathsf T_{n,m}(X^{\mathrm{t}}).
\end{align}

\begin{definition}\label{def: Psi minus}
We define the algebra homomorphism $\Psi^-_L:\mathsf A^{(K)}\to \mathfrak{U}(\mathcal W^{(K)}_L)[\bar\alpha^{-1}]$ to be the composition
\begin{align}
    \Psi^-_L:=\mathfrak{s}_L\circ \Psi_L\circ \mathfrak{s}_{\mathsf A}.
\end{align}
\end{definition}
Direct computation shows that
\begin{equation}\label{eqn: Psi^- obvious equations}
\begin{split}
&\Psi^-_L(\mathsf T_{0,n}(E^a_b))=W^{a(1)}_{b,-n},\quad \Psi^-_L(\mathsf t_{0,n})=\frac{1}{\epsilon_2}W^{a(1)}_{a,-n},\\
&\Psi^-_L(\mathsf t_{1,n})=\mathsf L_{1-n}-\frac{\alpha nL}{2\bar\alpha}W^{a(1)}_{a,1-n},\\
&\Psi^-_L(\mathsf T_{1,1}(E^a_b))=-\Psi_L(\mathsf T_{1,1}(E^a_b)).
\end{split}
\end{equation}
A less obvious equation is the following
\begin{equation}\label{eqn: Psi^-(t[2,2])}
    \Psi^-_L(\mathsf t_{2,2})=\Psi_L(\mathsf t_{2,2}).
\end{equation}
To see this equation, let us write $\Psi_L(\mathsf t_{2,2})$:
\begin{equation}
\begin{split}
&\Psi_L(\mathsf t_{2,2})=\frac{1}{6}[\Psi_L(\mathsf t_{2,0}),\Psi_L(\mathsf t_{1,3})]\\
=&\frac{\epsilon_1}{\bar\alpha}\sum_{i=1}^L\Bigg(\frac{1}{6}\sum_{k,l\in \mathbb Z}\left(:J^{a[i]}_{b,-k-l}J^{b[i]}_{c,k} J^{c[i]}_{a,l}:+:J^{b[i]}_{a,-k-l}J^{c[i]}_{b,k} J^{a[i]}_{c,l}:\right)\\
&~+\left(L-i+\frac{1}{2}\right)\alpha\sum_{m\in \mathbb Z}:J^{a[i]}_{b,-m}J^{b[i]}_{a,m}:+\left((L-i)(L+1-i)+\frac{1}{2}\right)\alpha^2J^{a[i]}_{a,0}\\
&~-\sum_{n=1}^{\infty}n(\bar\alpha J^{a[i]}_{b,-n}J^{b[i]}_{a,n}+ J^{a[i]}_{a,-n}J^{b[i]}_{b,n})-\frac{1}{3}(\bar\alpha J^{a[i]}_{b,0}J^{b[i]}_{a,0}+J^{a[i]}_{a,0}J^{b[i]}_{b,0})\Bigg)\\
&~-\frac{2\epsilon_1}{\bar\alpha}\sum_{i<j}^L\bigg(\sum_{m=1}^\infty m(\bar\alpha J^{a[i]}_{b,m}J^{b[j]}_{a,-m}+ J^{a[i]}_{a,m}J^{b[j]}_{b,-m})+\frac{1}{3}(\bar\alpha J^{a[i]}_{b,0}J^{b[j]}_{a,0}+J^{a[i]}_{a,0}J^{b[j]}_{b,0})\bigg),
\end{split}
\end{equation}
then it is straightforward to check that $\mathfrak{s}_L(\Psi_L(\mathsf t_{2,2}))=\Psi_L(\mathsf t_{2,2})$, which implies \eqref{eqn: Psi^-(t[2,2])}.

\subsection{Restricted mode algebra of the rectangular W-algebra}
For any graded vertex algebra $\mathcal V$, there is a notion of restricted mode algebra $U(\mathcal V)$ (Definition \ref{def: restricted mode alg}), together with an algebra homomorphism $U(\mathcal V)\to \mathfrak{U}(\mathcal V)$ which is injective when $\mathcal V$ satisfies certain reasonable technical assumptions (Proposition \ref{prop: U(V) is a sub of mode algebra}). It is known that W-algebras satisfy the assumptions in Proposition \ref{prop: U(V) is a sub of mode algebra}, so we can regard $U(\mathcal W^{(K)}_L)$ as a subalgebra of $\mathfrak{U}(\mathcal W^{(K)}_L)$. Using the notation of Definition \ref{def: restricted mode alg} we can rewrite the image of $\mathsf t_{2,0}$ under the maps $\Psi_L$ and $\widetilde\Psi_L$ as
\begin{equation}\label{eqn: image of t[2,0] in restricted modes}
\begin{split}
\Psi_L(\mathsf t_{2,0})&=\frac{\epsilon_1}{\bar\alpha} \left(V_{-2}- \mathcal O\left(W^{a(1)}_{b,-1}|0\rangle,W^{b(1)}_{a,-1}|0\rangle;\frac{\bar\alpha}{(z_1-z_2)^2}\right)-\mathcal O\left(W^{a(1)}_{a,-1}|0\rangle,W^{b(1)}_{b,-1}|0\rangle;\frac{1}{(z_1-z_2)^2}\right)\right),\\
\widetilde\Psi_L(\mathsf t_{2,0})&=-\frac{\epsilon_1}{\alpha} \left(\widetilde V_{-2}+\mathcal O\left(U^{a(1)}_{b,-1}|0\rangle,U^{b(1)}_{a,-1}|0\rangle;\frac{\alpha}{(z_1-z_2)^2}\right)+\mathcal O\left(U^{a(1)}_{a,-1}|0\rangle,U^{b(1)}_{b,-1}|0\rangle;\frac{1}{(z_1-z_2)^2}\right)\right),
\end{split}
\end{equation}
in particular, the images $\Psi_L(\mathsf A^{(K)})$ (respectively $\widetilde\Psi_L(\mathsf A^{(K)})$) is in the positive modes subalgebra $U_+(\mathcal W^{(K)}_L)$ (respectively $U_+(\widetilde{\mathcal W}^{(K)}_L)$).

\section{Matrix Extended \texorpdfstring{$\mathcal W_{\infty}$}{W inf} Vertex Algebra}\label{sec: W(infinity)}
In this section, we construct the large-$L$-limit of the rectangular W-algebra $\mathcal W^{(K)}_L$, denoted by $\mathcal W^{(K)}_{\infty}$, then all results in the previous section about $\mathcal W^{(K)}_L$ can be packaged into statements about $\mathcal W^{(K)}_{\infty}$. This construction generalizes the $K=1$ case in \cite{linshaw2021universal}. The q-deformed version of matrix extended $\mathcal W_\infty$ is studied in \cite{neguct2022deformed}.

$\mathcal W^{(K)}_{\infty}$ will be a vertex algebra over the base ring $\mathbb C[\alpha,\lambda]$. We define the underlying $\mathbb C[\alpha,\lambda]$ module using the large-$L$ limit of the PBW basis of $\mathcal W^{(K)}_L$, where $L$ will be promoted to the formal variable $\lambda$. The state-operators map is roughly speaking the analytic continuation of state-operators maps of $\mathcal W^{(K)}_L$, after treating $L$ as a formal variable. As we will show later in this section, all structure constants in the $W^{a(r)}_{b}$ basis are polynomials in $\alpha$ and $L$, and this enables us to define the structure constants of $\mathcal W^{(K)}_{\infty}$ by these polynomials. Note that we do not provide the explicit formula of these polynomials, instead we only show their existence, i.e. our approach is implicit. An explicit formula of the structure constants was conjectured in \cite{eberhardt2019matrix}.

We state the main result of this section as follows.

\begin{theorem}\label{thm: matrix extended W(infty)}
For every $K\in \mathbb N_{\ge 1}$, there exists a $\mathbb Z$-graded vertex algebra $\mathcal W^{(K)}_{\infty}$ over the base ring $\mathbb C[\alpha,\lambda]$ with strong generators $W^{a(r)}_b(z),1\le a,b\le K,r=1,2,\cdots$ and
\begin{itemize}
    \item a $\mathbb C[\alpha]$-vertex algebra map $\Delta_{\mathcal W}:\mathcal W^{(K)}_{\infty}\to \mathcal W^{(K)}_{\infty}\otimes \mathcal W^{(K)}_{\infty}$,
    \item a $\mathbb C[\alpha]$-vertex algebra map $\pi_L:\mathcal W^{(K)}_{\infty}\to \mathcal W^{(K)}_{L}$ for every positive integer $L$,
\end{itemize}
such that 
\begin{itemize}
    \item[(1)] $W^{a(1)}_b(z)$ generates a $\mathbb C[\alpha,\lambda]$-vertex subalgebra $V^{\kappa_{\lambda\alpha,\lambda}}(\gl_K)\hookrightarrow \mathcal W^{(K)}_{\infty}$, where $\kappa_{\lambda\alpha,\lambda}$ is the inner form $\kappa_{\lambda\alpha,\lambda}(E^a_b,E^c_d)=\lambda\alpha\delta^c_b\delta^a_d+\lambda\delta^a_b\delta^c_d$.
    \item[(2)] $ \Delta_{\mathcal W}(\lambda)=\lambda\otimes 1+1\otimes \lambda$, and $\Delta_{\mathcal W}$ acts on strong generators by 
    \begin{equation}\label{eqn: W(infinity) coproduct}
    \begin{split}
    \Delta_{\mathcal W}(W^{a(r)}_b(z))=\sum_{\substack{(s,t,u)\in\mathbb N^3\\s+t+u=r}} \binom{1\otimes \lambda-t}{u}(\alpha\partial)^u W^{a(s)}_c(z)\otimes W^{c(t)}_b(z),
    \end{split}
    \end{equation}
    where we set $W^{a(0)}_b(z)=\delta^a_b$.
    \item[(3)] $\pi_L(\lambda)=L$ and $\pi_L(W^{a(r)}_b(z))=W^{a(r)}_b(z)$ for $r=1,\cdots,L$ and $\pi_L(W^{a(r)}_b(z))=0$ for $r>L$.
\end{itemize}
We call $\mathcal W^{(K)}_{\infty}$ the $\mathfrak{gl}_K$-extended $\mathcal W_{\infty}$ algebra\footnote{In the usual convention, this will be called the matrix extended $\mathcal W_{1+\infty}$ algebra, and the notation $\mathcal W_{\infty}$ is reserved for the decomposition $\mathcal W_{1+\infty}=V(\gl_1)\otimes \mathcal W_{\infty}$. Here we drop ``$1+$'' to simplify notation because we do not consider quotienting out the $\gl_1$ component in this paper.}. 
\end{theorem}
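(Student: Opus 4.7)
The plan is to construct $\mathcal W^{(K)}_\infty$ by viewing the family $\{\mathcal W^{(K)}_L\}_{L\in \mathbb Z_{\ge 1}}$ as depending polynomially on $L$ and analytically continuing $L$ to the formal parameter $\lambda$, then verifying all required identities by polynomial-identity / Zariski-density arguments. Concretely, by Lemma \ref{lem: associated graded of W-algebra} the Li filtration $V_\bullet\mathcal W^{(K)}_L$ has associated graded $V^{\alpha L,L}(\gl_K\otimes\mathbb C[z]/z^L)$, so $\mathcal W^{(K)}_L$ is a free $\mathbb C[\alpha]$-module with a PBW basis of ordered monomials in $\partial^j W^{a(r)}_{b,-r}|0\rangle$ with $1\le r\le L$. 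For every fixed conformal weight $N$, the subspace of weight $\le N$ of $\mathcal W^{(K)}_L$ is uniformly described by the same list of monomials as soon as $L\ge N$, providing the uniform basis we will interpolate.

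The crucial technical step is to show that the structure constants of the state-field map, expressed in this PBW basis, are polynomials in $\alpha$ and $L$ (stabilizing once $L$ is large relative to the weights involved). I would prove this by induction on conformal weight, bootstrapping from the explicit OPE \eqref{W1Wn OPE_finite L} (whose coefficients are manifestly polynomial in $L$ and $\alpha$) together with the Miura embedding into $V^{\kappa_\alpha}(\gl_K)^{\otimes L}$, and exploiting the unipotent triangularity of the transition from Miura composite monomials to PBW monomials supplied by Lemma \ref{lem: associated graded of W-algebra} and Proposition \ref{prop: TW OPE}. Consistency of the interpolation across different $L$ is guaranteed by the splitting morphisms $\Delta_{L_1,L_2}:\mathcal W^{(K)}_{L_1+L_2}\to \mathcal W^{(K)}_{L_1}\otimes\mathcal W^{(K)}_{L_2}$, whose explicit formula (given just before Theorem \ref{thm: mixed coproduct}) is itself polynomial in $L_1,L_2$.

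With polynomial structure constants in hand, define $\mathcal W^{(K)}_\infty$ as the free $\mathbb C[\alpha,\lambda]$-module on the abstract ordered monomials in $\partial^j W^{a(r)}_{b,-r}|0\rangle$ (now with $r$ unbounded), equipped with the state-operator map $Y_\infty$ whose structure constants are obtained by substituting $\lambda$ in place of $L$. The vertex-algebra axioms are polynomial identities in $\lambda$; they hold after specializing $\lambda=L$ for every $L\in \mathbb Z_{\ge 1}$ because $\mathcal W^{(K)}_L$ is an honest vertex algebra, and since $\mathbb Z_{\ge 1}$ is Zariski-dense in $\mathbb A^1$ they hold identically. The projection $\pi_L$ is then defined by specializing $\lambda\mapsto L$ and quotienting by the strong generators of height $>L$; the PBW basis matches on both sides by construction, so the quotient is identified with $\mathcal W^{(K)}_L$. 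Claim (1) follows because, for every positive integer $L$, the image under $\pi_L$ of the subalgebra generated by $W^{a(1)}_b$ is the affine Kac-Moody sub-VOA at level $(\alpha L,L)$, and polynomial interpolation promotes this to the embedding $V^{\kappa_{\lambda\alpha,\lambda}}(\gl_K)\hookrightarrow \mathcal W^{(K)}_\infty$. The coproduct $\Delta_{\mathcal W}$ is defined by the explicit formula \eqref{eqn: W(infinity) coproduct}; to see it is a vertex algebra homomorphism, specialize $\lambda\mapsto L_1+L_2$ in the source and $\lambda\mapsto L_1,\lambda\mapsto L_2$ in the two tensor factors, so that $(\pi_{L_1}\otimes\pi_{L_2})\circ\Delta_{\mathcal W}$ reduces to the known homomorphism $\Delta_{L_1,L_2}\circ\pi_{L_1+L_2}$; the compatibility with $Y$, being polynomial in $\lambda$ and true for every $(L_1,L_2)\in \mathbb Z_{\ge 1}^2$, is therefore a formal identity.

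The main obstacle I anticipate is in the second step: ruling out denominators in $L$ when one changes basis between Miura composite monomials in the $J^{a[i]}_b$ and the intrinsic PBW monomials in the $W^{a(r)}_b$. Naively such a change of basis involves matrices whose entries depend on $L$, and inversion could introduce rational dependence on $L$. The safeguard is the strict triangularity supplied by Lemma \ref{lem: associated graded of W-algebra} and Proposition \ref{prop: TW OPE}: the transition matrix is unipotent upper-triangular with respect to the Li filtration, so its inverse is again polynomial in $L$. One must execute this induction carefully and compatibly with the splitting maps $\Delta_{L_1,L_2}$ to ensure that the interpolated structure constants are well-defined and independent of the (sufficiently large) value of $L$ used to compute them.
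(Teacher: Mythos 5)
Your overall architecture (interpolate the family $\{\mathcal W^{(K)}_L\}_{L\ge 1}$ in $L$, then verify every identity by specializing at the Zariski-dense set $\mathbb Z_{\ge 1}$) is the same as the paper's, and the parts of your argument reducing the vertex-algebra axioms, claim (1), and the homomorphism property of $\Delta_{\mathcal W}$ to polynomial identities checked at integer $\lambda$ are sound. The gap is in the step you yourself identify as the crux: polynomiality of the structure constants in $L$.

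Your proposed mechanism --- induction on conformal weight starting from the $W^{(1)}W^{(n)}$ OPE, with the change of basis from Miura composites to $W$-PBW monomials controlled by ``unipotent triangularity'' in the Li filtration --- does not close. First, strong generation by $W^{(1)},W^{(2)}$ (hence the ability to bootstrap all OPEs from the low ones) is only available after inverting $\alpha$, so this route would at best land you in $\mathbb C[\alpha^{\pm 1},L]$. Second, and more seriously, Lemma \ref{lem: associated graded of W-algebra} and Proposition \ref{prop: TW OPE} only pin down the leading term of $A_{(n)}B$ modulo lower Li filtration; the subleading terms are precisely the structure constants at issue, and to express them in the $W$-PBW basis you must invert the inclusion of the span of $W$-PBW monomials into the ambient span of $J$-monomials of $V^{\kappa_\alpha}(\gl_K)^{\otimes L}$ --- a non-square linear problem whose solution a priori carries denominators in $L$. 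The paper confronts exactly this: it first obtains only \emph{rationality} in $\lambda$ (Lemma \ref{lem: eventually rational}, via the nondegenerate pairing with $V^{\kappa_\alpha}(\gl_K^{\oplus\mathbb N})^{\vee}$), and then removes the poles by a separate argument (Proposition \ref{prop: polynomiality}) exploiting the injectivity of the two-parameter coproduct $\Delta_{\lambda_1,\lambda_2}$ and the freedom to split any would-be pole $\mu=\lambda_1+\lambda_2$ into two regular values. It also gets polynomiality of all correlators for free --- with no OPE induction --- from the factorization $\langle\mathcal L^L_1(x_1)\cdots\mathcal L^L_1(x_n)\rangle=\alpha^L\mathcal D^L$ together with the formal $\lambda$-th power of a pseudodifferential symbol (Proposition \ref{prop: formal powers of PsiDS}). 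Without an analogue of the pole-removal step your interpolated structure constants live only in $\mathbb C(\lambda)[\alpha]$, and the specializations $\pi_L$ at small $L$ (in particular $\pi_1$, which the paper must treat separately via the pairing and then bootstrap through $\Delta_{\mathcal W}$, since the interpolation is determined only by large $L$) are not even defined.
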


It is obvious from the theorem that $\Delta_{\mathcal W}$ is compatible with the finite coproduct $\Delta_{L_1,L_2}$:
\begin{align}
    (\pi_{L_1}\otimes\pi_{L_2})\circ \Delta_{\mathcal W}=\Delta_{L_1,L_2}\circ \pi_{L_1+L_2}.
\end{align}
And it is coassociative: $(\Delta_{\mathcal W}\otimes 1)\circ \Delta_{\mathcal W}=(1\otimes \Delta_{\mathcal W})\circ \Delta_{\mathcal W}$.\\

We also note that the automorphism $\eta_{\beta}^{\otimes L}$ defined in the Remark \ref{rmk: shift automorphism of W} lifts to an automorphism $\boldsymbol{\eta}_{\beta}:\mathfrak{U}(\mathcal W^{(K)}_{\infty})\cong \mathfrak{U}(\mathcal W^{(K)}_{\infty})$ such that 
\begin{align}
    \pi_L\circ\boldsymbol{\eta}_{\beta}=\eta_{\beta}^{\otimes L}\circ \pi_L,
\end{align}
and $\tau_{\beta}$ maps the currents as follows:
\begin{align}\label{eqn: shift automorphism of W}
    \boldsymbol{\eta}_\beta(W^{(r)}(z))= W^{(r)}(z)+\sum_{s=1}^r \binom{\lambda+s-r}{s}
    \binom{{\beta}/{\alpha}}{s}\frac{\alpha^s \cdot s!}{z^s}  W^{(r-s)}(z),
\end{align}
where $W^{(0)}(z)$ is set to be the constant identity matrix.\\

The rest of this section is devoted to the proof of Theorem \ref{thm: matrix extended W(infty)}. The difficult part is to show that structure constants in $W^{a(r)}_{b}$ basis are polynomials in $\alpha$ and $L$. The main tool that we use is the (matrix extended) pseododifferential symbols, namely we consider the correlation functions of the Miura operators of $\mathcal W^{(K)}_L$ and show that their dependence on $L$ are polynomials, and this allows us to promote these correlation functions to pseododifferential operators from which we extract the set of correlation functions between $W^{a(r)}_b,r=1,2\cdots$ such that $L$ becomes a parameter $\lambda$ and these correlation functions depend on $\lambda$ in a polynomial way, and these correlators are essentially equivalent to the state-operator map. 

\subsection{Matrix-valued pseudodifferential symbols}

\begin{definition}
Let $R$ be a unital associative (possibly non-commutative) $\mathbb C$-algebra, an $R$-valued pseudodifferential symbol $D$ of $n$ variables is the following formal expression
\begin{align}
    D=\partial_{x_1}^{\mu_1}\cdots \partial_{x_n}^{\mu_n}+\sum_{r\in \mathbb N_{\ge 0}^{n}\backslash 0}\partial_{x_1}^{\mu_1-r_1}\cdots \partial_{x_n}^{\mu_n-r_n}\cdot D_r,
\end{align}
where $(\mu_1,\cdots,\mu_n)$ is a set of $n$ formal variables which will be called the leading order, $D_r\in \mathbb C(x_1,\cdots,x_n)\otimes R$, i.e. $R$-valued rational functions in variables $x_1,\cdots,x_n$.
The space of such symbols is denoted by $\Psi\mathrm{DS}_n(R)$. We also define a multiplication on $\Psi\mathrm{DS}_n(R)$ by extending the commutation relation
\begin{align}\label{eq: PDS commutator}
    [f(x),\partial_{x_i}^{\mu}]=\sum_{s\ge 1}(-1)^s \binom{\mu}{s}\partial_{x_i}^{\mu-s}\cdot\left(\partial_{x_i}^sf(x)\right).
\end{align}
\end{definition}

\begin{proposition}\label{prop: formal powers of PsiDS}
Let $D\in \Psi\mathrm{DS}_n(R)$ be an $R$-valued pseudodifferential symbol with leading order $(\mu_1,\cdots,\mu_n)$, then there exists a unique $R$-valued pseudo-differential symbol $D^{\lambda}\in \Psi\mathrm{DS}_n(R[\lambda])$ with leading order $(\mu_1\lambda,\cdots,\mu_n\lambda)$, written as
\begin{align}
    D^{\lambda}=\partial_{x_1}^{\mu_1\lambda}\cdots \partial_{x_n}^{\mu_n\lambda}+\sum_{r\in \mathbb N_{\ge 0}^{n}\backslash 0}\partial_{x_1}^{\mu_1\lambda-r_1}\cdots \partial_{x_n}^{\mu_n\lambda-r_n}\cdot D_r(\lambda),
\end{align}
such that
\begin{itemize}
    \item[(1)] $D_r(\lambda)\in \mathbb C(x_1,\cdots,x_n)\otimes R[\lambda]$, i.e. $\mathbb C(x_1,\cdots,x_n)\otimes R$-valued polynomials in $\lambda$.
    \item[(2)] For every positive integer $l$, the $l$-th power of $D$ is obtained from $D^{\lambda}$ by specializing $\lambda=l$.
\end{itemize}
\end{proposition}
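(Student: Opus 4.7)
The plan is to reduce the construction of $D^\lambda$ to a linear recursion on the coefficients of the integer powers $D^l$, and then identify the iterating operator as $I+N$ with $N$ ``graded nilpotent'', so that polynomial dependence on $\lambda$ becomes automatic.

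First I would establish a multiplication rule in canonical form (with rational functions placed to the right of $\partial$'s). Iterating \eqref{eq: PDS commutator} once for each variable yields, for any formal tuple $\tau=(\tau_1,\ldots,\tau_n)$,
\begin{align*}
f\,\partial^\tau=\sum_{s\in\mathbb N_{\ge 0}^n}(-1)^{|s|}\binom{\tau}{s}\partial^{\tau-s}(\partial^s f),
\end{align*}
where $\partial^\tau:=\partial_{x_1}^{\tau_1}\cdots\partial_{x_n}^{\tau_n}$ and $\binom{\tau}{s}:=\prod_i \binom{\tau_i}{s_i}$. Writing $D^l=\sum_q\partial^{l\mu-q}A_q^{(l)}$ with $\mu=(\mu_1,\ldots,\mu_n)$, the product rule $D^{l+1}=D^l\cdot D$ then translates into the recursion
\begin{align*}
A_p^{(l+1)}=\sum_{q+r+t=p}(-1)^{|t|}\binom{\mu-r}{t}(\partial^t A_q^{(l)})\,D_r,
\end{align*}
with initial data $A_q^{(1)}=D_q$ (declaring $D_0:=1$). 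The crucial observation is that the coefficients of this recursion do not depend on $l$, so $A^{(l+1)}=M(A^{(l)})$ for one and the same $\mathbb C(x_1,\ldots,x_n)\otimes R$-linear operator $M$.

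Next I would split $M=I+N$, where $I$ records the $(q,r,t)=(p,0,0)$ contribution (using $\binom{\mu}{0}=1$ and $D_0=1$) and $N$ collects the rest. The constraint $(r,t)\neq 0$ forces $|q|<|p|$, so $N$ strictly lowers the grading $|\cdot|$; in particular $(N^kA)_p$ depends only on $A_{q'}$ with $|q'|\le |p|-k$, and vanishes once $k>|p|$. Therefore $(I+N)^{\lambda-1}:=\sum_{k\ge 0}\binom{\lambda-1}{k}N^k$ is well-defined as a formal operator, truncating to a finite sum on each graded piece. I then set
\begin{align*}
A_p(\lambda):=\sum_{k=0}^{|p|}\binom{\lambda-1}{k}(N^k A^{(1)})_p\in \mathbb C(x_1,\ldots,x_n)\otimes R[\lambda]
\end{align*}
and define $D^\lambda:=\partial_{x_1}^{\mu_1\lambda}\cdots\partial_{x_n}^{\mu_n\lambda}+\sum_{p\ne 0}\partial_{x_1}^{\mu_1\lambda-p_1}\cdots\partial_{x_n}^{\mu_n\lambda-p_n}A_p(\lambda)$. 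Property (1) is manifest since each $A_p(\lambda)$ is a polynomial of degree $\le |p|$, and property (2) holds because at $\lambda=l\in\mathbb N_{>0}$ the formula reproduces $A^{(l)}=M^{l-1}A^{(1)}$. Uniqueness is immediate: any two candidates would differ by polynomials in $\lambda$ vanishing on every positive integer.

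The only genuine technical obstacle, modest as it is, lies in establishing the canonical-form multiplication rule above cleanly, since it requires iterating the single-variable commutator relation across several variables while keeping track of multi-index binomial factors; once that rule is in place, the ``identity plus graded-nilpotent'' structure of $M$ makes the polynomial interpolation in $\lambda$ entirely formal.
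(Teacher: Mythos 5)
Your proof is correct. It starts from the same recursion as the paper --- expanding $D^{l+1}=D^{l}\cdot D$ in canonical form via the multi-variable version of \eqref{eq: PDS commutator} --- but extracts polynomiality in $l$ by a different mechanism. The paper isolates the $(q,r,t)=(p,0,0)$ term on the left-hand side, obtaining $D_p(l+1)-D_p(l)=(\text{terms involving only }D_u(l)\text{ with }|u|<|p|)$, and then runs an induction on $|p|$ using the discrete-calculus fact that a sequence whose first difference is a polynomial in $l$ is itself a polynomial in $l$. You instead observe that the transition map $A^{(l)}\mapsto A^{(l+1)}$ is a single $l$-independent linear operator $M=I+N$ with $N$ strictly lowering the grading $|p|$, hence locally nilpotent, so $A^{(l)}=(I+N)^{l-1}A^{(1)}=\sum_{k}\binom{l-1}{k}N^{k}A^{(1)}$ truncates to a finite sum on each graded piece. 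This buys you an explicit closed formula for $D^{\lambda}$ (replace $\binom{l-1}{k}$ by $\binom{\lambda-1}{k}$), a degree bound $\deg_{\lambda}A_{p}(\lambda)\le |p|$, and dispenses with the induction; the paper's route is more pedestrian but avoids having to justify the operator-valued binomial expansion. Both arguments handle uniqueness identically (a polynomial is determined by its values at the positive integers), and both take the multi-variable canonical-form product rule essentially for granted --- your closing remark correctly identifies that as the only step requiring real bookkeeping.
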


\begin{proof}
The uniqueness follows from the polynomial dependence of $D_r$ on $\lambda$ together with the condition that $D^{l}=D^{\lambda}|_{\lambda=l}$ for all $l\in \mathbb N_{>0}$. It remains to show the existence. Let us write
\begin{align*}
    D^{l}=\partial_{x_1}^{\mu_1 l}\cdots \partial_{x_n}^{\mu_n l}+\sum_{r\in \mathbb N_{\ge 0}^{n}\backslash 0}\partial_{x_1}^{\mu_1 l-r_1}\cdots \partial_{x_n}^{\mu_n l-r_n}\cdot D_r(l), \;(l\in \mathbb N_{>0}),
\end{align*}
then it suffices to show that $D_r(l)$ depends on $l$ in a polynomial way. We proceed by induction on $|r|:=\sum_{i=1}^nr_i$. For simplicity we define $D_0(l)=\mathrm{id}_{R}$ which is constant in $l$.

Consider $D_r(l+1)-D_r(l)$, it is a polynomial in $l$ if and only if $D_r(l)$ is a polynomial in $l$. By direct computation of $D^{l+1}=D^l\cdot D$, we obtain:
\begin{align}\label{eqn: D(l+1)-D(l)}
D_r(l+1)-D_r(l)=\sum_{\substack{u,v,s\in \mathbb N^{n}_{\ge 0}\\u+v+s=r\\u\neq r}}(-1)^{|s|}\binom{\mu_1-v_1}{s_1}\cdots \binom{\mu_n-v_n}{s_n}\left(\partial_{x_1}^{s_1}\cdots \partial_{x_n}^{s_n}D_u(l)\right)\cdot D_{v}(1).
\end{align}
By the induction hypothesis, $D_u(l)$ are polynomials in $l$ for all $u\in \mathbb N^{n}_{\ge 0}$ such that $|u|<|r|$, so every summand in the right-hand-side of \eqref{eqn: D(l+1)-D(l)} is a polynomial in $l$, thus $D_r(l)$ is a polynomial in $l$. This completes the proof.
\end{proof}

To apply the above general results to our construction of matrix extended $\mathcal W_{\infty}$ algebra, let us consider the following $\gl_K^{\otimes n}$-valued differential symbol in $n$ variables: 
\begin{align}
    \mathcal D:=\langle(\partial_{x_1}\delta^{a_1}_{b_1}-\alpha^{-1}J^{a_1}_{b_1}(x_1))(\partial_{x_2}\delta^{a_2}_{b_2}-\alpha^{-1}J^{a_2}_{b_2}(x_2))\cdots (\partial_{x_n}\delta^{a_n}_{b_n}-\alpha^{-1}J^{a_n}_{b_n}(x_n))\rangle,
\end{align}
where $a_i$ and $b_i$ for $i=1,\cdots,n$ are $\gl_K$ indice. $\mathcal D$ is the correlator between $n$ copies of elementary Miura operators $\mathcal L^1_1(x_i)=\alpha\partial_{x_i}-J(x_i)$ (up to scaling), and we will write down an explicit formula of $\mathcal D$ in terms of Cherednik algebra elements in the next section, but its explicit form will not play a role in the construction of $\mathcal W^{(K)}_{\infty}$ algebra. 

According to Proposition \ref{prop: formal powers of PsiDS}, there exists a $\gl_K^{\otimes n}$-valued pseudodifferential symbol $\mathcal D^{\lambda}\in \Psi\mathrm{DS}_n(\gl_K^{\otimes n}[\lambda])$ written as
\begin{align}
    \mathcal D^{\lambda}=\partial_{x_1}^{\lambda}\cdots \partial_{x_n}^{\lambda}+\sum_{r\in \mathbb N_{\ge 0}^{n}\backslash 0}\partial_{x_1}^{\lambda-r_1}\cdots \partial_{x_n}^{\lambda-r_n}\cdot \mathcal D_r(\lambda),
\end{align}
such that
\begin{itemize}
    \item[(1)] $\mathcal D_r(\lambda)\in \mathbb C(x_1,\cdots,x_n)\otimes \gl_K^{\otimes n}[\lambda]$.
    \item[(2)] For every positive integer $L$, the $L$-th power of $\mathcal D$ is obtained from $\mathcal D^{\lambda}$ by specializing $\lambda=l$.
\end{itemize}
On the other hand, we can consider the composition between $L$ copies of elementary Miura operators $\mathcal L^L_1(x):=\mathcal L^1_1(x)^{[1]}\mathcal L^1_1(x)^{[2]}\cdots \mathcal L^1_1(x)^{[L]}$ where the superscript $[i]$ means $J^a_b(x)$ current takes value in the $i$-th copy of affine Kac-Moody algebra, and by the definition of the $\mathcal W^{(K)}_L$ algebra we have
\begin{align}\label{eqn: finite L Miura operator}
    \mathcal L^L_1(x)=(\alpha\partial_x)^L+\sum_{r=1}^L (-1)^r(\alpha\partial_x)^{L-r}W^{(r)}(x).
\end{align}
Since different copies of affine Kac-Moody algebras do not interact with each other, the correlators between these composite Miura operators completely decouples:
\begin{align}\label{eqn: correlators decouple}
    \langle \mathcal L^L_1(x_1)\mathcal L^L_1(x_2)\cdots \mathcal L^L_1(x_n)\rangle=\alpha^L\mathcal D^L.
\end{align}

\begin{proposition}
The correlator $\langle W^{a_1(r_1)}_{b_1}(x_1)\cdots W^{a_n(r_n)}_{b_n}(x_n)\rangle$ is defined for all $L\in \mathbb Z_{\ge 0}$ (including the cases when $L<r_i$) and its value depends on $L$ in a polynomial way. Moreover it vanishes for $L< \max(r_1,\cdots,r_n)$.
\end{proposition}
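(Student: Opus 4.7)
The approach is to identify the correlator with an explicit coefficient of the pseudodifferential symbol $\mathcal D^L$, whose dependence on $L$ is polynomial by Proposition \ref{prop: formal powers of PsiDS}.

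First I would expand each factor $\mathcal L^L_1(x_i)$ using \eqref{eqn: finite L Miura operator} and multiply them out. Since within each factor every $\partial_{x_i}$ already stands to the left of the operator coefficient $W^{a_i(r)}_{b_i}(x_i)$, and since the pseudodifferential commutator \eqref{eq: PDS commutator} vanishes when the coefficient does not depend on $x_i$, every derivative $\partial_{x_i}$ can be moved freely past every factor $\mathcal L^L_1(x_j)$ with $j\neq i$ to the leftmost position, yielding
\begin{align*}
\prod_{i=1}^n \mathcal L^L_1(x_i) = \sum_{0 \le r_i \le L} (-1)^{|r|} \left(\prod_{i=1}^n (\alpha\partial_{x_i})^{L-r_i}\right) \cdot \prod_{i=1}^n W^{a_i(r_i)}_{b_i}(x_i),
\end{align*}
with $W^{a(0)}_b = \delta^a_b$ and $|r| = \sum_i r_i$. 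Taking vacuum expectation value and invoking \eqref{eqn: correlators decouple}, then matching the coefficient of $\prod_i \partial_{x_i}^{L-r_i}$ on both sides, I would obtain an identification of $\langle W^{a_1(r_1)}_{b_1}(x_1)\cdots W^{a_n(r_n)}_{b_n}(x_n)\rangle$ with a fixed scalar multiple (depending polynomially on $\alpha$ and $L$) of the component of $\mathcal D_{(r_1,\ldots,r_n)}(L)$ in $\gl_K^{\otimes n}$ carrying the prescribed indices, valid for $L\ge \max_i r_i$.

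Next I would invoke Proposition \ref{prop: formal powers of PsiDS} to assert that $\mathcal D_r(\lambda) \in \mathbb C(x_1,\ldots,x_n) \otimes \gl_K^{\otimes n}[\lambda]$, i.e.\ depends polynomially on $\lambda$. Specialising $\lambda = L$ in the identification above extends the correlator to an element of $\mathbb C[\alpha, L] \otimes \mathbb C(x_1,\ldots,x_n)$ for every $L \in \mathbb Z_{\ge 0}$ and establishes polynomial $L$-dependence. For the vanishing claim, the crucial observation is that the symbol $\mathcal D$ has leading order exactly one in each variable (its subleading coefficients arising only from vacuum two-point contractions $\langle J^{a}_b(x_i) J^{c}_d(x_j)\rangle$, which are rational functions of the $x_i$'s and carry no $\partial_{x_i}$'s). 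Hence for every positive integer $L$ the product $\mathcal D^L$ is an honest differential operator of order $\le L$ in each variable, which forces $\mathcal D_{(r_1,\ldots,r_n)}(L) = 0$ whenever $r_i > L$ for some $i$. Equivalently, the polynomial $\mathcal D_{(r_1,\ldots,r_n)}(\lambda)$ vanishes at the integer points $\lambda = 0, 1, \ldots, \max_i r_i - 1$; via the coefficient identification this gives the desired vanishing of the correlator for $L<\max_i r_i$.

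The principal technical obstacle is the coefficient extraction in the first step: one must verify that the monomials $\prod_i \partial_{x_i}^{L-r_i}$ are linearly independent in the relevant pseudodifferential symbol module so that coefficient matching is unambiguous, and that the $\alpha$-powers appearing on both sides of \eqref{eqn: correlators decouple} are tracked consistently through the matrix-index contractions. Both are routine bookkeeping once the commuting nature of the different-site derivatives is established.
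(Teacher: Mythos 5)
Your proof is correct and follows essentially the same route as the paper: expand the product of Miura operators, match coefficients of $\prod_i\partial_{x_i}^{L-r_i}$ against $\mathcal D^L$ in \eqref{eqn: correlators decouple}, invoke Proposition \ref{prop: formal powers of PsiDS} for polynomiality, and deduce the vanishing from the fact that $\mathcal D^L$ is a genuine differential operator of order $\le L$ in each variable for integer $L\ge 0$. (Only an inessential aside is slightly off: for $K>1$ the subleading coefficients of $\mathcal D$ are full current correlators rather than just two-point Wick contractions, but this does not affect the argument.)
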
\label{prop: U-correlators are polynomials}
\begin{proof}
If $L\ge \max(r_1,\cdots,r_n)$, then expand the left-hand-side of \eqref{eqn: correlators decouple} and we find
\begin{align*}
\sum_{r\in \mathbb N^n_{[0,L]}}(-1)^{\sum_{i=1}^nr_i}(\alpha\partial_{x_1})^{L-r_1}\cdots (\alpha\partial_{x_n})^{L-r_n}\cdot \langle W^{a_1(r_1)}_{b_1}(x_1)\cdots W^{a_n(r_n)}_{b_n}(x_n)\rangle ,
\end{align*}
compare with the right-hand-side and we find
\begin{align}
\langle W^{a_1(r_1)}_{b_1}(x_1)\cdots W^{a_n(r_n)}_{b_n}(x_n)\rangle =\alpha^{|r|}\mathcal D_r(L)^{a_1\cdots a_n}_{b_1\cdots b_n},
\end{align}
thus $\langle W^{a_1(r_1)}_{b_1}(x_1)\cdots W^{a_n(r_n)}_{b_n}(x_n)\rangle$ depend on $L$ in a polynomial way. Now the polynomial $\mathcal D_r(L)$ vanishes if $L\in \mathbb Z_{\ge 1}$ and $L<\max(r_1,\cdots,r_n)$ so we can define $$\langle W^{a_1(r_1)}_{b_1}(x_1)\cdots W^{a_n(r_n)}_{b_n}(x_n)\rangle:=0$$ in this case.
\end{proof}

The vanishing of the correlator $\langle W^{a_1(r_1)}_{b_1}(x_1)\cdots W^{a_n(r_n)}_{b_n}(x_n)\rangle$ when $L\in \mathbb Z_{\ge 1}$ and $L<\max(r_1,\cdots,r_n)$ is compatible with the fact that there is no operator $W^{a(r)}_{b}(z)$ in $\mathcal W^{(K)}_L$ if $r>L$. However, the Proposition \ref{prop: U-correlators are polynomials} should be better understood as a result of analytic continuation at this point. 

Similar technique can be used to prove the following generalization.
\begin{proposition}\label{prop: correlators are polynomials in L}
The correlator $\langle J^{c_1[s_1]}_{d_1}(y_1)\cdots J^{c_m[s_m]}_{d_m}(y_m) W^{a_1(r_1)}_{b_1}(x_1)\cdots W^{a_n(r_n)}_{b_n}(x_n)\rangle$ is defined for all $L\ge \max(s_1,\cdots,s_m)$ and its value depends on $L$ in a polynomial way. Moreover it vanishes for $L< \max(r_1,\cdots,r_n)$.

Equivalently, $\langle 0| J^{c_1[s_1]}_{d_1,k_1}\cdots J^{c_m[s_m]}_{d_m,k_m} W^{a_1(r_1)}_{b_1,i_1}\cdots W^{a_n(r_n)}_{b_n,i_n}|0\rangle$ is defined for all $L\ge \max(s_1,\cdots,s_m)$ and its value depends on $L$ in a polynomial way. Moreover it vanishes for $L<\max(r_1,\cdots,r_n)$.
\end{proposition}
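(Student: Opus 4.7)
The plan is to extend the pseudodifferential-symbol argument of Proposition~\ref{prop: U-correlators are polynomials} by decoupling the Kac-Moody copies that host the specified $J$-insertions from the remaining copies. Set $S = \max(s_1,\ldots,s_m)$ and take $L \geq S$. I would factor the composite Miura operator as
$$\mathcal{L}^L_1(x) = \mathcal{N}^S(x)\cdot \mathcal{N}^{L-S}(x),$$
where $\mathcal{N}^S(x) := \prod_{i=1}^{S}(\alpha\partial_x - J^{[i]}(x))$ involves only copies $1,\ldots,S$ (which contain all the insertions $J^{c_j[s_j]}_{d_j}(y_j)$), and $\mathcal{N}^{L-S}(x) := \prod_{i=S+1}^{L}(\alpha\partial_x - J^{[i]}(x))$ involves the remaining copies. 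Since operators on disjoint Kac-Moody factors commute, we may reorder
$$\prod_{j=1}^n \mathcal{L}^L_1(x_j) = \Big(\prod_{j=1}^n \mathcal{N}^S(x_j)\Big)\cdot\Big(\prod_{j=1}^n \mathcal{N}^{L-S}(x_j)\Big).$$

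Next I would use the tensor-product decomposition $V^{\kappa_\alpha}(\gl_K)^{\otimes L} = V^{\kappa_\alpha}(\gl_K)^{\otimes S}\otimes V^{\kappa_\alpha}(\gl_K)^{\otimes (L-S)}$ to factorize the vacuum expectation:
$$\Big\langle \prod_{k=1}^m J^{c_k[s_k]}_{d_k}(y_k)\prod_{j=1}^n \mathcal{L}^L_1(x_j)\Big\rangle = G(x,y)\cdot \alpha^{L-S}\mathcal{D}^{L-S},$$
where $G(x,y) := \langle \prod_k J^{c_k[s_k]}_{d_k}(y_k)\prod_j \mathcal{N}^S(x_j)\rangle$ is a matrix-valued differential operator in $x_1,\ldots,x_n$ with coefficients in $\mathbb{C}(x,y)$, manifestly independent of $L$, and the second factor is exactly the pseudodifferential-symbol power computed as in Proposition~\ref{prop: U-correlators are polynomials}. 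By Proposition~\ref{prop: formal powers of PsiDS}, the coefficients of $\alpha^{L-S}\mathcal{D}^{L-S}$ are polynomials in $L-S$, hence in $L$. Left multiplication by the $L$-independent operator $G(x,y)$ preserves polynomial dependence on $L$.

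To conclude, I would extract the coefficient of $(\alpha\partial_{x_1})^{L-r_1}\cdots(\alpha\partial_{x_n})^{L-r_n}$ from both sides using the expansion $\mathcal{L}^L_1(x_j) = \sum_{r_j=0}^L (-1)^{r_j}(\alpha\partial_{x_j})^{L-r_j}W^{(r_j)}(x_j)$ (with $W^{(0)}=1$). This exhibits the field correlator $\langle \prod_k J^{c_k[s_k]}_{d_k}(y_k)\prod_j W^{a_j(r_j)}_{b_j}(x_j)\rangle$ as a polynomial in $L$ (up to overall signs and combinatorial factors), and the equivalent mode statement follows by taking Fourier coefficients in each variable. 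For the vanishing assertion: when $L\in\mathbb{Z}_{\geq S}$ satisfies $L<\max_i r_i$, some $r_i$ exceeds $L$, but $\mathcal{L}^L_1(x_i)$ as a genuine differential operator of order $L$ contains no $W^{(r_i)}$ term with $r_i>L$, so the coefficient extraction on the left returns zero; this is how we extend the definition by continuity for such $L$.

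The main bookkeeping obstacle will be the non-commutativity between $(\alpha\partial_{x_j})^{L-r_j}$ and $W^{(r_j)}(x_j)$ when extracting coefficients, but this only produces lower-order $\partial_{x_j}$-terms involving the same or smaller $W^{(r_j)}$ correlators, so a triangular inversion identifies each individual correlator as a polynomial in $L$; no genuine analytic or algebraic difficulty arises beyond the tensor-factor decoupling.
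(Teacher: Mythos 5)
Your proof is correct and is essentially the intended argument: the paper omits the proof with the remark ``similar technique,'' and your splitting $\mathcal L^L_1(x)=\mathcal N^S(x)\mathcal N^{L-S}(x)$ is exactly the right way to isolate an $L$-independent factor carrying the $J$-insertions, after which Proposition \ref{prop: formal powers of PsiDS} and normal-form comparison (with the binomial coefficients $\binom{L-S-k}{s}$ from reordering being polynomials in $L$, as your ``triangular inversion'' remark anticipates) give polynomiality, and the absence of negative powers of $\partial_{x_i}$ in the genuine order-$L$ differential operator gives the vanishing for $L<\max(r_1,\dots,r_n)$.
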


\subsection{The underlying vector space of \texorpdfstring{$\mathcal W^{(K)}_{\infty}$}{Wk inf}}
In this subsection, we define the underlying vector space of $\mathcal W^{(K)}_{\infty}$.

\begin{definition}\label{def: vector space of W(infinity)}
Let $\mathbf W$ be the $\mathbb C$-vector space generated by the basis
$\Omega$ and $A^{a_m,s_m}_{b_m,n_m}\cdots A^{a_1,s_1}_{b_1,n_1}$ for those integer indices $ 1\le a_i,b_i\le K$ and $s_m\ge \cdots \ge s_1\ge 1$ and $n_i\le -s_i$ such that for every $1\le i<m$ either $s_{i+1}>s_i$ or $n_{i+1}\le n_i$ holds.

We give a $\mathbb Z$-grading on $\mathbf W$ by setting $\deg \Omega=0$ and $\deg A^{a_m,s_m}_{b_m,n_m}\cdots A^{a_1,s_1}_{b_1,n_1}=\sum_{i=1}^m n_i$. Write the homogeneous decomposition $\mathbf W=\bigoplus_{d\in \mathbb Z_{\le 0}}\mathbf W_d$, note that each $\mathbf W_d$ is finite dimensional. Write $\mathbf W_{\ge n}:=\bigoplus_{d=n}^0\mathbf W_d$ for $n\in \mathbb Z_{\le 0}$. 

We also give an increasing $\mathbb N$-filtration $F_{\bullet}\mathbf W$ by letting $F_l \mathbf W$ be the span of the basis elements $\Omega$ and $A^{a_m,s_m}_{b_m,n_m}\cdots A^{a_1,s_1}_{b_1,n_1}$ such that $s_m\le l$, in particular $F_0\mathbf W$ is spanned by the element $\Omega$. Write $F_l\mathbf W_d:=F_l\mathbf W\cap \mathbf W_d$ and $F_l\mathbf W_{\ge n}:=F_l\mathbf W\cap \mathbf W_{\ge n}$.
\end{definition}

It follows immediately from the definition that $\mathbf W_{\ge -n}\subset F_n\mathbf W$. Moreover, $F_{l}\mathbf W$ has a linear complement $K_l \mathbf W$ spanned by the basis elements $A^{a_m,s_m}_{b_m,n_m}\cdots A^{a_1,s_1}_{b_1,n_1}$ such that $s_m> l$, let $\pi_l:\mathbf W\to F_l\mathbf W$ be the projection with kernel $K_l\mathbf W$.

\begin{proposition}[PBW theorem for W-algebras {\cite[Theorem 3.1]{arakawa2017explicit}}]
The $\mathbb C[\alpha]$-module map $F_L\mathbf W\otimes\mathbb C[\alpha]\to \mathcal W^{(K)}_L:$ $$\Omega\mapsto|0\rangle,\quad A^{a_m,s_m}_{b_m,n_m}\cdots A^{a_1,s_1}_{b_1,n_1}\mapsto W^{a_m(s_m)}_{b_m,n_m}\cdots W^{a_1(s_1)}_{b_1,n_1}|0\rangle$$ is an isomorphism.
\end{proposition}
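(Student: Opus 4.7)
The plan is to deduce the PBW basis statement from the analogous classical PBW theorem for affine Kac-Moody vertex algebras by passing to the associated graded with respect to the vertical filtration $V_\bullet\mathcal W^{(K)}_L$ introduced above. The crucial input is Lemma~\ref{lem: associated graded of W-algebra}, which identifies
\[
\mathrm{gr}_V \mathcal W^{(K)}_L \;\cong\; V^{\alpha L, L}\bigl(\gl_K\otimes \mathbb C[z]/(z^L)\bigr),
\]
and under which the leading symbol of the mode $W^{a(s)}_{b,n}$ becomes the affine mode $(E^a_b\otimes z^{s-1})_n$ of the truncated-current Kac-Moody vertex algebra.

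First I would transport the filtration to the source: assign each letter $A^{a,s}_{b,n}$ the weight $s-1$ and extend multiplicatively to ordered monomials, producing a filtration $F'_\bullet$ on $F_L\mathbf W$. The candidate map $\Phi_L$ sending $\Omega\mapsto |0\rangle$ and $A^{a,s}_{b,n}\mapsto W^{a(s)}_{b,n}$ is then filtered, since by construction $\deg_V W^{a(s)}_b(z)=s-1$. Both filtrations restrict to exhaustive bounded filtrations on each fixed conformal grading subspace $F_L\mathbf W_d$ (which is finite-dimensional, by the constraints $n_i\le -s_i$ in Definition~\ref{def: vector space of W(infinity)}), so the map $\Phi_L$ will be an isomorphism of $\mathbb C[\alpha]$-modules provided $\mathrm{gr}\,\Phi_L$ is one.

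Under the identification of Lemma~\ref{lem: associated graded of W-algebra}, the ordering prescribed in Definition~\ref{def: vector space of W(infinity)} --- lexicographic first by $s$, then by $n$, then by the $\gl_K$-index $(a,b)$ --- refines a total order on the basis $\{E^a_b\otimes z^{s-1}\}$ of $\gl_K\otimes\mathbb C[z]/(z^L)$, and the constraint $n\le -s$ selects exactly the modes acting nontrivially on the vacuum in the conformal grading inherited from the W-algebra (where $W^{a(s)}_b(z)$ has conformal weight $s$). Consequently $\mathrm{gr}\,\Phi_L$ is the standard PBW map for the vacuum module of the affine Kac-Moody vertex algebra $V^{\alpha L,L}\bigl(\gl_K\otimes\mathbb C[z]/(z^L)\bigr)$, and is therefore an isomorphism by the usual PBW theorem for affine vertex algebras.

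The main external input, and the chief potential obstacle, is Lemma~\ref{lem: associated graded of W-algebra} itself (the rectangular case of the Arakawa--Molev computation of the associated graded of a W-algebra); once this is in hand the filtered-to-graded argument is essentially bookkeeping. The one subtle point is verifying compatibility of conformal gradings --- namely that the W-algebra conformal weight $s$ of $W^{a(s)}_b$ matches a shifted conformal structure on the truncated-current affine vertex algebra under the identification of Lemma~\ref{lem: associated graded of W-algebra} --- but this compatibility is forced by the fact that the isomorphism in that lemma is realized as a filtration-preserving, and thus grading-preserving, vertex algebra isomorphism.
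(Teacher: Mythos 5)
Your argument is essentially sound, but it is worth noting that the paper does not prove this proposition at all: it is imported verbatim as \cite[Theorem 3.1]{arakawa2017explicit}. What you have written is the standard filtered-to-graded derivation that underlies that cited theorem, so you are not so much taking a different route as supplying the route the reference itself takes. The reduction works: each graded piece $F_L\mathbf W_d$ is a finitely generated free $\mathbb C[\alpha]$-module carrying a finite filtration (since $n_i\le -s_i\le -1$ forces the number of letters to be at most $|d|$ and $s_i\le L$), the candidate map is filtered for the vertical filtration, and an isomorphism on associated graded of a finite filtration of free modules implies an isomorphism. The mode-index bookkeeping you worry about is also fine, because the Hamiltonian descends to $\mathrm{gr}_V$ and the condition $n\le -s$ is exactly the condition that $W^{a(s)}_{b,n}$ not annihilate the vacuum in the convention $W^{a(s)}_b(z)=\sum_n W^{a(s)}_{b,n}z^{-n-s}$.

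The one point where you should be more careful is your use of Lemma \ref{lem: associated graded of W-algebra}. The displayed OPE relations modulo $V_{r+s-3}$ (the ``equivalently'' part of that lemma) only show that the symbols of the $W^{a(s)}_{b,n}$ generate $\mathrm{gr}_V\mathcal W^{(K)}_L$ and satisfy the truncated-current relations; by themselves they yield a surjection from $V^{\alpha L,L}(\gl_K\otimes\mathbb C[z]/(z^L))$ onto $\mathrm{gr}_V\mathcal W^{(K)}_L$, not injectivity. The linear independence of the ordered monomials --- which is the whole content of the PBW statement --- requires the full isomorphism assertion of that lemma, i.e.\ the fact (from the BRST cohomology computation in \cite{arakawa2007representation}) that the associated graded is the \emph{free} affine vertex algebra on $\gl_K\otimes\mathbb C[z]/(z^L)$. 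So your proof is correct provided you invoke the isomorphism statement rather than the OPE form, but it does not make the proposition more elementary: the deep input (freeness of the associated graded) is of the same nature as the theorem being proved, and is established in the literature as part of the same package. The paper's choice to cite the result directly and your choice to cite the associated-graded lemma are therefore two equally dependent phrasings of the same external input.
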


Let $V^{\kappa_{\alpha}}(\gl_K^{\oplus \mathbb N})$ be the affine Kac-Moody vertex algebra associated to the countable infinite sum $\gl_K^{\oplus \mathbb N}$, with the inner product $\kappa_{\alpha}$ \eqref{eqn: inner form kappa(alpha)} on each direct summand. Denote its dual vacuum module by $V^{\kappa_{\alpha}}(\gl_K^{\oplus \mathbb N})^{\vee}$. Explicitly, $V^{\kappa_{\alpha}}(\gl_K^{\oplus \mathbb N})^{\vee}$ has a basis $$\langle 0|J^{a_1[s_1]}_{b_1,n_1}\cdots J^{a_m[s_m]}_{b_m,n_m}$$ for those integer indices $ 1\le a_i,b_i\le K$ and $1\le s_1\le\cdots \le s_m$ and $n_i\ge 1$ such that for every $1\le i<m$ either $s_{i}<s_{i+1}$ or $n_{i}\le n_{i+1}$ holds.

\begin{definition}
Define a bilinear map $G:V^{\kappa_{\alpha}}(\gl_K^{\oplus \mathbb N})^{\vee}\otimes \mathbf W\to \mathbb C[\alpha,\lambda]$ as follows: 
\begin{equation}
\begin{split}
&G(\langle 0|J^{c_1[s_1]}_{d_1,k_1}\cdots J^{c_m[s_m]}_{d_m,k_m}, A^{a_n,r_n}_{b_n,i_n}\cdots A^{a_1,r_1}_{b_1,i_1}):=\\
&\text{the polynomial in $\alpha$ and $L$ of }\langle 0|J^{c_1[s_1]}_{d_1,k_1}\cdots J^{c_m[s_m]}_{d_m,k_m}W^{a_n(r_n)}_{b_n,i_n}\cdots W^{a_1(r_1)}_{b_1,i_1}|0\rangle,
\end{split}
\end{equation}
and replace the variable $L$ by $\lambda$, using the Proposition \ref{prop: correlators are polynomials in L}.
\end{definition}

\begin{lemma}\label{lem: nondegenerate pairing_finite L}
Let $\langle 0|J^{c_1[s_1]}_{d_1,k_1}\cdots J^{c_m[s_m]}_{d_m,k_m}$ be an elements in $V^{\kappa_{\alpha}}(\gl_K^{\oplus \mathbb N})^{\vee}$ and take $L\ge s_m$, then the map
$G(\langle 0|J^{c_1[s_1]}_{d_1,k_1}\cdots J^{c_m[s_m]}_{d_m,k_m},-)|_{\lambda=L}: \mathbf W\to \mathbb C[\alpha]$ factors through the projection $\pi_L:\mathbf W\twoheadrightarrow F_L\mathbf W$.
\end{lemma}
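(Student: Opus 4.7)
The plan is to reduce the statement directly to the vanishing clause of Proposition \ref{prop: correlators are polynomials in L}. By construction, $G(\langle 0|J^{c_1[s_1]}_{d_1,k_1}\cdots J^{c_m[s_m]}_{d_m,k_m},-)$ is $\mathbb{C}[\alpha]$-linear on $\mathbf W$ (modulo polynomials in $\lambda$), and is determined on the PBW basis by the polynomial in $\alpha,\lambda$ whose value at $\lambda=L'$ (for $L'$ sufficiently large) reproduces the correlator $\langle 0|J^{c_1[s_1]}_{d_1,k_1}\cdots J^{c_m[s_m]}_{d_m,k_m}W^{a_n(r_n)}_{b_n,i_n}\cdots W^{a_1(r_1)}_{b_1,i_1}|0\rangle$ in $\mathcal W^{(K)}_{L'}$. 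So it suffices to show that $G(\cdot,-)|_{\lambda=L}$ vanishes on every basis element of $K_L\mathbf W$.

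First I would recall the ordering convention in Definition \ref{def: vector space of W(infinity)}: a basis element of $\mathbf W$ has the form $A^{a_n,r_n}_{b_n,i_n}\cdots A^{a_1,r_1}_{b_1,i_1}$ with $r_n\geq \cdots\geq r_1\geq 1$, so $r_n=\max_i r_i$. By definition of $K_L\mathbf W$, such an element lies in $K_L\mathbf W$ precisely when $r_n>L$.

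Next I would invoke the final clause of Proposition \ref{prop: correlators are polynomials in L}: the polynomial $\langle 0|J^{c_1[s_1]}_{d_1,k_1}\cdots J^{c_m[s_m]}_{d_m,k_m}W^{a_n(r_n)}_{b_n,i_n}\cdots W^{a_1(r_1)}_{b_1,i_1}|0\rangle$ in $L$ vanishes whenever $L<\max_i r_i$. Specializing at $\lambda=L$ with $L<r_n$ therefore yields zero. Hence $G(\langle 0|J^{c_1[s_1]}_{d_1,k_1}\cdots J^{c_m[s_m]}_{d_m,k_m},-)|_{\lambda=L}$ annihilates $K_L\mathbf W$ and thus factors through $\pi_L:\mathbf W\twoheadrightarrow F_L\mathbf W$, as required. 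The hypothesis $L\geq s_m$ ensures that $L$ is also in the regime where the defining correlator on the Kac--Moody side is valid (i.e.\ all the $J^{c_i[s_i]}$ make sense inside $V^{\kappa_\alpha}(\mathfrak{gl}_K)^{\otimes L}$), which is what lets us interpret the polynomial evaluation as an actual correlator in $\mathcal W^{(K)}_L$.

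There is no real obstacle here: the essential content has already been extracted into Propositions \ref{prop: formal powers of PsiDS} and \ref{prop: correlators are polynomials in L}, which are themselves the consequence of the polynomiality of the pseudodifferential symbol $\mathcal D^\lambda$ and the fact that the finite-$L$ Miura operator \eqref{eqn: finite L Miura operator} is a polynomial of degree exactly $L$ in $\alpha\partial$, so its coefficients $W^{(r)}$ with $r>L$ are identically zero. The present lemma is just the bookkeeping that packages this vanishing into a statement about factorization through $F_L\mathbf W$, and it sets up the next step of the construction where the pairing $G$ will be used to endow $\mathbf W$ with the structure maps of $\mathcal W^{(K)}_\infty$ by taking the $\lambda$-dependence as a formal variable.
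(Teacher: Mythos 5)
Your proof is correct and follows the same route as the paper: both reduce the claim to the vanishing clause of Proposition \ref{prop: correlators are polynomials in L}, noting that a basis element of $K_L\mathbf W$ has $r_n=\max_i r_i>L$, so the pairing vanishes on $K_L\mathbf W$ and hence factors through $\pi_L$. Your additional remarks on the ordering convention and on the role of the hypothesis $L\ge s_m$ are accurate but not needed beyond what the paper records.
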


\begin{proof}
This follows from the Proposition \ref{prop: correlators are polynomials in L} that $\langle 0|J^{c_1[s_1]}_{d_1,k_1}\cdots J^{c_m[s_m]}_{d_m,k_m}W^{a_n(r_n)}_{b_n,i_n}\cdots W^{a_1(r_1)}_{b_1,i_1}|0\rangle$ vanishes for $r_n>L$, i.e. $G(\langle 0|J^{c_1[s_1]}_{d_1,k_1}\cdots J^{c_m[s_m]}_{d_m,k_m},-)|_{\lambda=L}$ is zero on $K_L\mathbf W$, thus it factors through $\pi_L:\mathbf W\twoheadrightarrow F_L\mathbf W$.
\end{proof}

\begin{lemma}\label{lem: nondegenerate pairing_global}
The pairing $G:V^{\kappa_{\alpha}}(\gl_K^{\oplus \mathbb N})^{\vee}\otimes \mathbf W\to \mathbb C[\alpha,\lambda]$ is non-degenerate on the second component.
\end{lemma}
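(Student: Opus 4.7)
The plan is to reduce the non-degeneracy statement over $\mathbb C[\alpha,\lambda]$ to the non-degeneracy of the vacuum-dual pairing in the honest finite-$L$ W-algebras, using the PBW theorem (cited just above) plus Lemma~\ref{lem: nondegenerate pairing_finite L}.

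First, fix a nonzero element $w \in \mathbf W$. Since $w$ is a finite linear combination of basis elements, there is some integer $l$ with $w \in F_l \mathbf W$. For every $L \ge l$ the PBW theorem for W-algebras quoted above gives a $\mathbb C[\alpha]$-module isomorphism $F_L \mathbf W \otimes \mathbb C[\alpha] \xrightarrow{\sim} \mathcal W^{(K)}_L$ sending $A^{a_m,s_m}_{b_m,n_m}\cdots A^{a_1,s_1}_{b_1,n_1}$ to $W^{a_m(s_m)}_{b_m,n_m}\cdots W^{a_1(s_1)}_{b_1,n_1}|0\rangle$. In particular $w$ is sent to a nonzero element $w_L \in \mathcal W^{(K)}_L$ for every $L \ge l$.

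Next, compose with the Miura embedding $\mathcal W^{(K)}_L \hookrightarrow V^{\kappa_\alpha}(\mathfrak{gl}_K)^{\otimes L}$. This is an injection of vertex subalgebras, so $w_L$ remains nonzero when viewed inside the vacuum module $V^{\kappa_\alpha}(\mathfrak{gl}_K)^{\otimes L}$. The natural bilinear pairing between this vacuum module and its restricted dual $(V^{\kappa_\alpha}(\mathfrak{gl}_K)^{\otimes L})^{\vee}$, the latter having a basis of the form $\langle 0|J^{c_1[s_1]}_{d_1,k_1}\cdots J^{c_m[s_m]}_{d_m,k_m}$ with $1\le s_i \le L$ and $k_i \ge 1$, is non-degenerate (since a freely generated vacuum Verma module pairs non-degenerately with its graded dual). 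Therefore there exists a basis vector $\xi = \langle 0|J^{c_1[s_1]}_{d_1,k_1}\cdots J^{c_m[s_m]}_{d_m,k_m}$ with all $s_i \le L$ such that $\xi(w_L)\neq 0$ in $\mathbb C[\alpha]$. By the very definition of $G$, this number equals $G(\xi,w)|_{\lambda=L}$.

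Viewing $\xi$ now as an element of $V^{\kappa_\alpha}(\mathfrak{gl}_K^{\oplus \mathbb N})^{\vee}$ independently of $L$, the function $G(\xi,w) \in \mathbb C[\alpha,\lambda]$ is a polynomial in $\lambda$ (for each $\alpha$, by Proposition~\ref{prop: correlators are polynomials in L}) which takes a nonzero value at $\lambda = L$. Hence $G(\xi,w) \neq 0$ in $\mathbb C[\alpha,\lambda]$, which is exactly what we needed. The only subtle point is ensuring the non-degeneracy used in the middle step—the pairing between $V^{\kappa_\alpha}(\mathfrak{gl}_K)^{\otimes L}$ and its restricted dual—but since this is just the standard graded-dual pairing on a tensor product of vacuum Verma modules, all of whose graded pieces are finite dimensional with a PBW basis, this is immediate. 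No routine obstacle is expected beyond bookkeeping the two PBW isomorphisms and the index ranges.
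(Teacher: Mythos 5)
Your argument is correct and is essentially the paper's own proof stated in contrapositive form: both reduce to the PBW identification $F_L\mathbf W\otimes\mathbb C[\alpha]\cong\mathcal W^{(K)}_L$ for $L$ large enough, the Miura embedding into $V^{\kappa_\alpha}(\mathfrak{gl}_K)^{\otimes L}$, and the non-degeneracy of the canonical pairing there, then lift back via the polynomiality of $G(\xi,w)$ in $\lambda$. No substantive difference.
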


\begin{proof}
Suppose that $A\in \mathbf W$ such that $\forall v\in V^{\kappa_{\alpha}}(\gl_K^{\oplus \mathbb N})^{\vee}$ we have $G(v,A)=0$. Take $L$ such that $A\in F_L\mathbf W$, and consider the subspace $V^{\kappa_{\alpha}}(\oplus_{i=1}^L\gl_K^{[i]})^{\vee}\subset V^{\kappa_{\alpha}}(\gl_K^{\oplus \mathbb N})^{\vee}$, then $G|_{\lambda=L}:V^{\kappa_{\alpha}}(\oplus_{i=1}^L\gl_K^{[i]})^{\vee}\otimes F_L\mathbf W\to \mathbb C[\alpha]$ agrees with the canonical pairing $V^{\kappa_{\alpha}}(\oplus_{i=1}^L\gl_K^{[i]})^{\vee}\otimes \mathcal W^{(K)}_L\to \mathbb C[\alpha]$ after identifying $F_{L}\mathbf W\otimes\mathbb C[\alpha]\cong \mathcal W^{(K)}_L$ and treating $\mathcal W^{(K)}_L$ as a subspace of $V^{\kappa_{\alpha}}(\oplus_{i=1}^L\gl_K^{[i]})$. Since the pairing $V^{\kappa_{\alpha}}(\oplus_{i=1}^L\gl_K^{[i]})^{\vee}\otimes V^{\kappa_{\alpha}}(\oplus_{i=1}^L\gl_K^{[i]})\to \mathbb C[\alpha]$ is non-degenerate, we must have $A=0$.
\end{proof}

\begin{lemma}\label{lem: eventually rational}
Suppose that $A[L]$ is an $L$-dependent element of $\mathbf W_{-d}\otimes\mathbb C[\alpha]$ for a fixed $d$, where $L$ takes values in $\mathbb Z_{\ge n}$ for a fixed positive integer $n$. Assume moreover that for all $v\in V^{\kappa_{\alpha}}(\bigoplus_{i=1}^d\gl_K^{[i]})^{\vee}$, $G(v,A[L])|_{\lambda=L}$ is a polynomial in $L$ when $L\ge \max(n,d)$, then the $L$-dependence of $A[L]$ is eventually rational. More specifically, there exists a unique $A'[\lambda]\in \mathbf W_{-d}\otimes\mathbb C(\lambda)[\alpha]$ such that $A'[L]=A[L]$ when $L\ge \max(n,d)$.
\end{lemma}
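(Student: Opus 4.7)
\emph{Setup and strategy.} Since $\mathbf{W}_{-d}$ is finite-dimensional over $\mathbb C$, fix a basis $e_1,\ldots,e_N$ and write $A[L] = \sum_j c_j(L)\, e_j$ with $c_j(L) \in \mathbb C[\alpha]$. The plan is to select test vectors $v_1,\ldots,v_N \in V^{\kappa_\alpha}(\bigoplus_{i=1}^d \gl_K^{[i]})^\vee$ so that the pairing matrix $M(\lambda) := (G(v_i,e_j))_{i,j} \in M_N(\mathbb C[\alpha,\lambda])$ is non-singular over $\mathbb C(\alpha,\lambda)$, then invert by Cramer's rule using the polynomials $w_i(\lambda) \in \mathbb C[\alpha,\lambda]$ supplied by the hypothesis from $G(v_i, A[L])|_{\lambda=L}$; finally, a separate refinement rules out any $\alpha$-denominators in the resulting rational functions.

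\emph{Choosing the test vectors.} The first task is to select $v_i$'s from the restricted space $V^{\kappa_\alpha}(\bigoplus_{i=1}^d \gl_K^{[i]})^\vee$ (only the first $d$ copies of $\gl_K$, rather than arbitrarily many as in Lemma \ref{lem: nondegenerate pairing_global}). The crucial input is the specialization $\lambda = d$: as in the proof of Lemma \ref{lem: nondegenerate pairing_global}, $G(\cdot,\cdot)|_{\lambda=d}$ coincides with the canonical pairing between $V^{\kappa_\alpha}(\bigoplus_{i=1}^d \gl_K^{[i]})^\vee$ and $\mathcal{W}_d^{(K)} \cong F_d\mathbf{W} \otimes \mathbb C[\alpha]$, with $\mathcal{W}_d^{(K)}$ sitting inside $V^{\kappa_\alpha}(\bigoplus_{i=1}^d \gl_K^{[i]})$ via the Miura embedding. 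Non-degeneracy of the ambient pairing restricts to non-degeneracy on $\mathbf{W}_{-d} \subset \mathcal{W}_d^{(K)}$, so after extending scalars to $\mathbb C(\alpha)$ we can pick $v_1,\ldots,v_N$ with $\det M(d) \ne 0$ in $\mathbb C(\alpha)$; consequently $\det M(\lambda) \in \mathbb C[\alpha,\lambda]$ is not the zero polynomial.

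\emph{Cramer's rule and the $\alpha$-polynomiality refinement.} Cramer's rule yields $c_j(\lambda) := \det M'_j(\lambda)/\det M(\lambda) \in \mathbb C(\alpha,\lambda)$, where $M'_j(\lambda)$ replaces the $j$-th column of $M(\lambda)$ with $(w_1(\lambda),\ldots,w_N(\lambda))^T$; this rational function agrees with $c_j(L)$ at every integer $L \ge \max(n,d)$ where $\det M(L) \ne 0$. The main obstacle is the upgrade from $c_j(\lambda) \in \mathbb C(\alpha,\lambda)$ to $c_j(\lambda) \in \mathbb C(\lambda)[\alpha]$, i.e.\ ruling out any surviving $\alpha$-denominator. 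Write $c_j = P/Q$ with $P, Q \in \mathbb C[\alpha,\lambda]$ coprime. Integer specializations force $Q(\alpha,L) \mid P(\alpha,L)$ in $\mathbb C[\alpha]$ for all but finitely many $L \ge \max(n,d)$; if $\deg_\alpha Q > 0$, then outside the finite zero locus of the $\alpha$-leading coefficient of $Q$ (a nonzero element of $\mathbb C[\lambda]$) the divisibility compels the resultant $\mathrm{Res}_\alpha(P,Q) \in \mathbb C[\lambda]$, a nonzero polynomial, to vanish at infinitely many integers, a contradiction. Hence $Q \in \mathbb C[\lambda]$, so $c_j(\lambda) \in \mathbb C(\lambda)[\alpha]$, and $A'[\lambda] := \sum_j c_j(\lambda)\,e_j$ is the desired lift. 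Uniqueness is immediate: any other candidate would differ by an element of $\mathbf{W}_{-d} \otimes \mathbb C(\lambda)[\alpha]$ vanishing on infinitely many integer values of $\lambda$, hence vanishing identically.
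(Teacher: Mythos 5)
Your overall strategy mirrors the paper's: use non-degeneracy of $G$ to invert the pairing via a finite set of test vectors, then clear the $\alpha$-denominators by a specialization argument (your resultant computation is a concrete and correct implementation of the step where the paper writes $g=\sum_i g_i(\lambda)\alpha^i$ and concludes $g=g_0(\lambda)$). The test-vector selection via the specialization $\lambda=d$, where $F_d\mathbf W_{-d}=\mathbf W_{-d}$ and $G|_{\lambda=d}$ is the canonical pairing restricted to $\mathcal W^{(K)}_d$, is also sound.

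There is, however, a genuine gap at the exceptional integers. Your construction only yields $A'[L]=A[L]$ for those $L\ge\max(n,d)$ with $\det M(L)\ne 0$ in $\mathbb C[\alpha]$, and you acknowledge this restriction but never remove it. The lemma asserts agreement at \emph{every} $L\ge\max(n,d)$, and this full strength is what Lemma \ref{lem: state-operator map on W(infinity)} and the truncations $\pi_L$ (e.g.\ $\pi_1$) rely on. At an exceptional $L_0$ your argument gives nothing: the denominator $Q\in\mathbb C[\lambda]$ produced by the resultant step could a priori vanish at $L_0$, so $A'[\lambda]$ need not even be evaluable there, and even if it is, nothing identifies its value with $A[L_0]$. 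This is why the paper works with a localizing divisor $f$ satisfying $f|_{\lambda=L}\ne 0$ for \emph{all} $L\in\mathbb Z_{\ge d}$, which requires invoking the non-degeneracy of $G|_{\lambda=L}$ for every such $L$ (Lemma \ref{lem: nondegenerate pairing_finite L}), not just at $\lambda=d$. Your proof can be repaired without changing test vectors: for each $v$, the rational function $G(v,A'[\lambda])=\sum_j c_j(\lambda)G(v,e_j)$ agrees with the polynomial $w_v(\lambda)$ at infinitely many integers, hence equals it identically; if some $c_j$ had a pole of order $m$ at $L_0$, multiplying by $(\lambda-L_0)^m$ and evaluating at $L_0$ would produce a nonzero element $\sum_j b_je_j\in\mathbf W_{-d}\otimes\mathbb C[\alpha]$ pairing to zero with all of $V^{\kappa_\alpha}(\bigoplus_{i=1}^d\gl_K^{[i]})^\vee$ under $G|_{\lambda=L_0}$, contradicting non-degeneracy at $L_0\ge d$; and once $A'[L_0]$ is defined, $G(v,A'[L_0])=w_v(L_0)=G(v,A[L_0])|_{\lambda=L_0}$ for all $v$ forces $A'[L_0]=A[L_0]$ by the same non-degeneracy. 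Without some such step the proof does not establish the stated conclusion.
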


\begin{proof}
Notice that $F_L\mathbf W_{-d}=\mathbf W_{-d}$ when $L\ge d$, then Lemma \ref{lem: nondegenerate pairing_finite L} together with Lemma \ref{lem: nondegenerate pairing_global} implies that the pairing $G: V^{\kappa_{\alpha}}(\bigoplus_{i=1}^d\gl_K^{[i]})^{\vee}_d\otimes \mathbf W_{-d}\to \mathbb C[\alpha,\lambda]$ is nondegenerate on the second component, and for all $L\ge d$ the pairing $G|_{\lambda=L}: V^{\kappa_{\alpha}}(\bigoplus_{i=1}^d\gl_K^{[i]})^{\vee}_d\otimes \mathbf W_{-d}\to \mathbb C[\alpha]$ is also nondegenerate on the second component. Now the nondegeneracy of the pairings $G$ and $G|_{\lambda=L}$ for all $L\in \mathbb Z_{\ge d}$ implies that there exists a divisor $f\in \mathbb C[\alpha,\lambda]$ such that $f|_{\lambda=L}\neq 0$ for all $L\in \mathbb Z_{\ge d}$, and that the induced map $V^{\kappa_{\alpha}}(\bigoplus_{i=1}^d\gl_K^{[i]})^{\vee}_d \otimes \mathbb C[\alpha,\lambda]_{(f)}\to \mathbf (W_{-d})^{\vee}\otimes \mathbb C[\alpha,\lambda]_{(f)}$ is surjective. In particular, if we allow base change with $f$ being inverted, then the dual basis of $W_{-d}$ can be obtained by the pairing $G$ using elements in $V^{\kappa_{\alpha}}(\bigoplus_{i=1}^d\gl_K^{[i]})^{\vee}_d$. This implies that there exists a unique $A'[\lambda]\in \mathbf W_{-d}\otimes \mathbb C[\alpha,\lambda]_{(f)}$ such that $A'[L]=A[L]$ for all $L\ge \max(n,d)$. This in turn implies that there exists a divisor $g\in \mathbb C[\alpha,\lambda]$ such that $g|_{\lambda=L}\in \mathbb C^{\times}$ for all $L\in \mathbb Z_{\ge \max(n,d)}$ and that $A'[\lambda]\in \mathbf W_{-d}\otimes \mathbb C[\alpha,\lambda]_{(g)}$. Write $g=\sum_{i=0}^m g_i(\lambda)\alpha^i$, then for all $i>0$ and all $L\in \mathbb Z_{\ge \max(n,d)}$, $g_i(L)=0$, thus $g=g_0(\lambda)$, in particular $A'[\lambda]\in \mathbf W_{-d}\otimes\mathbb C(\lambda)[\alpha]$. This finishes the proof.
\end{proof}

\subsection{The state-operator map}
\begin{lemma}\label{lem: state-operator map on W(infinity)}
Let $A\in F_r\mathbf W_{-p}, B\in F_s\mathbf W_{-q}$, then there exists a unique element $A_{(n)}B[\lambda]\in\mathbf W_{-p-q+n+1}\otimes \mathbb C(\lambda)[\alpha]$ such that $A_{(n)}B[L]=Y_L(A,z)B[z^n]$ for all $L\ge \max(r,s,p+q-n-1)$, where $Y_L(-,z)$ is the state-operator map for $\mathcal W^{(K)}_L$ (identified with $F_L\mathbf W\otimes\mathbb C[\alpha]$) and $Y_L(A,z)B[z^n]$ is the Fourier coefficient of $z^n$ in $Y_L(A,z)B$. 
\end{lemma}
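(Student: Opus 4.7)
The plan is to apply Lemma \ref{lem: eventually rational} to the family
$$A[L] \;:=\; Y_L(A,z)B[z^n] \;\in\; F_L\mathbf{W}_{-p-q+n+1}\otimes \mathbb{C}[\alpha],$$
indexed by integers $L\geq L_0:=\max(r,s,p+q-n-1)$. Uniqueness of $A_{(n)}B[\lambda]$ is then automatic from the non-degeneracy of $G$ proven in Lemma \ref{lem: nondegenerate pairing_global}. First I would verify that $A[L]$ takes values in $\mathbf{W}_{-p-q+n+1}\otimes\mathbb{C}[\alpha]$: the $(n)$-th product of a conformal-weight-$p$ state with a weight-$q$ state has conformal weight $p+q-n-1$, which in our convention corresponds to the grading $-p-q+n+1$; and for $L\geq p+q-n-1$ the PBW subspace $F_L\mathbf{W}_{-p-q+n+1}$ already exhausts all of $\mathbf{W}_{-p-q+n+1}$, because every generator $W^{a(s)}_b$ contributing to a state of conformal weight $\leq p+q-n-1$ must have $s\leq p+q-n-1\leq L$.

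The substantive step is to verify the polynomiality hypothesis of Lemma \ref{lem: eventually rational}: for every fixed
$$v\in V^{\kappa_{\alpha}}\Big(\bigoplus_{i=1}^{p+q-n-1}\gl_K^{[i]}\Big)^{\vee},$$
the pairing $G(v,A[L])|_{\lambda=L}$ is polynomial in $L$ for $L\geq L_0$. To see this, fix once-and-for-all PBW representatives
$$A=W^{a_1(\sigma_1)}_{b_1,-\mu_1}\cdots W^{a_{m}(\sigma_{m})}_{b_{m},-\mu_{m}}|0\rangle,\qquad B=W^{c_1(\tau_1)}_{d_1,-\nu_1}\cdots W^{c_{l}(\tau_{l})}_{d_{l},-\nu_{l}}|0\rangle,$$
with $\sigma_i\leq r$, $\tau_j\leq s$, and integers $\mu_i,\nu_j\geq 1$ that do not depend on $L$. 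By the reconstruction theorem for vertex algebras, $Y_L(A,z)$ is an iterated normally ordered product of the fields $W^{a_i(\sigma_i)}_{b_i}(z)$ and their derivatives, with combinatorial coefficients (coming from the Borcherds identity and the iterated residues defining the normally ordered products) that depend only on the mode labels $\mu_i$ and not on $L$. Consequently $Y_L(A,z)B[z^n]$ expands as a finite linear combination, with $L$-independent rational coefficients, of states of the form $W^{a'_1(s'_1)}_{b'_1,k'_1}\cdots W^{a'_N(s'_N)}_{b'_N,k'_N}|0\rangle$. Pairing against $v$ then produces
$$G(v,A[L])\big|_{\lambda=L}\;=\;\sum_{\text{finite}}\,c_\bullet\,\Big\langle 0\Big|\,v\cdot W^{a'_1(s'_1)}_{b'_1,k'_1}\cdots W^{a'_N(s'_N)}_{b'_N,k'_N}\Big|0\Big\rangle,$$
with $L$-independent coefficients $c_\bullet$, and each correlator on the right-hand side is polynomial in $L$ by Proposition \ref{prop: correlators are polynomials in L}.

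The hypotheses of Lemma \ref{lem: eventually rational} (with its parameter ``$d$'' set to $p+q-n-1$ and its parameter ``$n$'' set to $L_0$) are thereby met, and we obtain a unique $A_{(n)}B[\lambda]\in\mathbf{W}_{-p-q+n+1}\otimes\mathbb{C}(\lambda)[\alpha]$ with $A_{(n)}B[L]=A[L]$ for all $L\geq L_0$. The main obstacle I anticipate is the bookkeeping in the previous paragraph: namely, articulating precisely that the reconstruction of $Y_L(A,z)$ from its PBW expression has combinatorial structure independent of $L$, so that all $L$-dependence is quarantined inside the $W$-correlators to which Proposition \ref{prop: correlators are polynomials in L} applies. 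Once this separation of $L$-dependent from $L$-independent data is clearly formulated, the conclusion is immediate.
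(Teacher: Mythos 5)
Your proof is correct and follows essentially the same route as the paper: both reduce the problem to verifying the polynomiality-in-$L$ hypothesis of Lemma \ref{lem: eventually rational} by expressing the pairing $G(v,Y_L(A,z)B[z^n])|_{\lambda=L}$ through correlators covered by Proposition \ref{prop: correlators are polynomials in L}, and then invoke that lemma together with the non-degeneracy of $G$. The only difference is that you spell out the reconstruction-theorem bookkeeping showing the combinatorial coefficients are $L$-independent, which the paper leaves implicit in the phrase ``equals the correlator $\oint_{|z|=1}\langle v\,Y_L(A,z)B\rangle z^n\,\frac{dz}{2\pi i}$.''
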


\begin{proof}
Let $d=p+q-n-1$. If $L\ge \max(r,s)$, then both $A$ and $B$ are in the subspace $F_L\mathbf W$, so we have the vertex algebra action $Y_L(A,z)B\in F_L\mathbf W\otimes\mathbb C[\alpha](\!(z)\!)$, in particular $Y_L(A,z)B[z^n]\in F_L\mathbf W_{-d}$. Now suppose that $L\ge \max(r,s,d)$, then for all $v\in V^{\kappa_{\alpha}}(\bigoplus_{i=1}^d\gl_K^{[i]})^{\vee}$, the pairing $G(v,Y_L(A,z)B[z^n])|_{\lambda=L}$ equals to the correlator
$$\oint_{|z|=1}\langle vY_L(A,z)B\rangle z^n\frac{dz}{2\pi i}$$
in the vertex algebra $V^{\kappa_{\alpha}}(\bigoplus_{i=1}^L\gl_K^{[i]})$, in particular it is a polynomial in $L$ by the Proposition \ref{prop: correlators are polynomials in L}. Applying the Lemma \ref{lem: eventually rational} we obtain the unique $A_{(n)}B[\lambda]\in\mathbf W_{-d}\otimes \mathbb C(\lambda)[\alpha]$ such that $A_{(n)}B[L]=Y_L(A,z)B[z^n]$ for all $L\ge \max(r,s,d)$.
\end{proof}

Finally we can define the state-operator map on $\mathbf W$. For homogeneous elements $A,B\in \mathbf W$, we define 
\begin{align}\label{eqn: state-operator map on W(infinity)}
Y_{\lambda}(A,z)B:=\sum_{n\in \mathbb Z} \frac{A_{(n)}B[\lambda]}{z^{n+1}}\in \mathbf W\otimes\mathbb C(\lambda)[\alpha](\!(z)\!),
\end{align}
where $A_{(n)}B[\lambda]$ is obtained by Lemma \ref{lem: state-operator map on W(infinity)}.

\begin{proposition}\label{prop: state-operator map on W(infinity)}
The data $(\mathbf W\otimes\mathbb C(\lambda)[\alpha], Y_{\lambda}(-,z),\Omega )$ is a $\mathbb Z$-graded vertex algebra over the base ring $\mathbb C(\lambda)[\alpha]$.
\end{proposition}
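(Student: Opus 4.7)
The plan is to verify the three vertex-algebra axioms — vacuum, translation covariance, and the Jacobi identity (or equivalently, locality together with associativity) — by an analytic-continuation argument that reduces each axiom to the corresponding axiom inside $\mathcal W^{(K)}_L$ for sufficiently large $L$. The governing principle, which is a direct consequence of Lemma \ref{lem: eventually rational} combined with Lemma \ref{lem: state-operator map on W(infinity)}, is the following: an identity between two elements of $\mathbf W_{-d}\otimes \mathbb C(\lambda)[\alpha]$ holds as soon as the two elements agree after specialization $\lambda=L$ for all sufficiently large integers $L$. Since the definition of $Y_\lambda(A,z)B$ was arranged so that its specialization at $\lambda=L$ coincides with $Y_L(A,z)B$ for $L$ beyond an explicit threshold, any axiom tested against a fixed triple $A,B,C\in\mathbf W$ and restricted to a fixed Fourier coefficient in the formal variables becomes a finite identity that holds tautologically inside $\mathcal W^{(K)}_L$ for $L$ large enough.

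I would begin with the easy axioms. The identity $Y_\lambda(\Omega,z)=\mathrm{id}$ is immediate because $Y_L(|0\rangle,z)=\mathrm{id}$ on $F_L\mathbf W$ for every $L$, so the uniqueness clause of Lemma \ref{lem: state-operator map on W(infinity)} forces $\Omega_{(n)}A[\lambda]=\delta_{n,-1}A$. The creation axiom $Y_\lambda(A,z)\Omega\in A+z\,\mathbf W\otimes\mathbb C(\lambda)[\alpha][[z]]$ follows in the same way. A translation endomorphism $T$ is then defined by $T A:=A_{(-2)}\Omega[\lambda]$; the identities $T\Omega=0$ and $[T,Y_\lambda(A,z)]=\partial_z Y_\lambda(A,z)$ again reduce, coefficient by coefficient, to corresponding statements in $\mathcal W^{(K)}_L$ for large $L$, hence they hold by the density principle stated above.

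The substantive step is the Jacobi identity. Fix homogeneous $A\in F_r\mathbf W_{-p}$, $B\in F_s\mathbf W_{-q}$, $C\in F_t\mathbf W_{-u}$ and a monomial $z_0^{i}z_1^{j}z_2^{k}$. After applying the Jacobi identity to $C$ and extracting this monomial, both sides become finite sums of iterated operations $(-)_{(m)}(-)[\lambda]$ landing in some graded piece $\mathbf W_{-d}\otimes\mathbb C(\lambda)[\alpha]$. Clearing the denominators (which are nonzero polynomials in $\lambda$, by Lemma \ref{lem: state-operator map on W(infinity)}), the difference of the two sides becomes an element of $\mathbf W_{-d}\otimes\mathbb C[\alpha,\lambda]$. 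For every $L$ exceeding a uniform bound that depends on $(r,s,t,p,q,u,i,j,k)$, all nested $Y_\lambda$'s specialize to the corresponding $Y_L$'s and the denominators do not vanish, so the identity at $\lambda=L$ is literally the Jacobi identity in $\mathcal W^{(K)}_L$ applied to $A,B,C$ viewed inside $F_L\mathbf W\otimes\mathbb C[\alpha]\cong \mathcal W^{(K)}_L$. This gives vanishing at infinitely many integers $L$, so the polynomial vanishes identically, which is exactly the Jacobi identity for $Y_\lambda$.

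The main obstacle is essentially bookkeeping rather than substance: confirming that at each fixed Fourier coefficient only finitely many iterated operations occur, and that the intermediate elements $A_{(m)}B[\lambda]$ appearing inside the nested expression $Y_\lambda(Y_\lambda(A,z_0)B,z_2)C$ all lie in some uniform $F_{L_0}\mathbf W\otimes\mathbb C(\lambda)[\alpha]$ so that the specialization $Y_\lambda|_{\lambda=L}=Y_L$ can be applied to them coherently. This requires tracking, via the definition of $F_\bullet\mathbf W$ and the grading, how the filtration degree and grading of $A_{(m)}B[\lambda]$ are controlled by those of $A,B$ and by $m$. Once this uniform bound on $L$ is made explicit, every remaining step is a direct appeal to the Jacobi identity inside a single $\mathcal W^{(K)}_L$ and to the Zariski density argument above.
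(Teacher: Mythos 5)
Your proposal is correct and follows essentially the same strategy as the paper: define everything so that specialization at $\lambda=L$ recovers $Y_L$ for all sufficiently large integers $L$, check each axiom coefficient-by-coefficient at those specializations where it holds tautologically in $\mathcal W^{(K)}_L$, and conclude by the fact that a rational function in $\lambda$ vanishing at infinitely many integers vanishes identically. The only cosmetic difference is that the paper verifies the Borcherds commutator formula $[A_{(n)},B_{(m)}]=\sum_{k\ge 0}\binom{n}{k}(A_{(k)}B)_{(n+m-k)}$ (which, with the vacuum and creation properties, yields locality and hence the full structure) rather than the full Jacobi identity, but the underlying density argument is identical.
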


\begin{proof}
We need to check the vacuum axiom and the commutation identity
\begin{align}\label{eqn: Borcherds identity}
    [A_{(n)},B_{(m)}]=\sum_{k\ge 0}\binom{n}{k}(A_{(k)}B)_{(n+m-k)},
\end{align}
note that this identity implies the locality axiom. The strategy is to check these axioms for $\lambda$ taking values in all sufficiently large $L$, then use the fact that two rational functions $f(\lambda),g(\lambda)$ are identical if and only if $f(L)=g(L)$ for all sufficiently large integer $L$.

For the vacuum axiom, notice that $Y_L(\Omega,z)A=A$ for all $L$ such that $A\in F_L\mathbf W$ (so that the left-hand-side is defined), thus $Y_{\lambda}(\Omega,z)A=A$; on the other hand, $Y_L(A,z)\Omega[z^0]=A$ and $Y_L(A,z)\Omega[z^n]=0$ for all $n>0$ and $L$ such that $A\in F_L\mathbf W$, thus $Y_{\lambda}(A,z)\Omega\in \mathbf W\otimes\mathbb C(\lambda)[\alpha][z]$ and $Y_{\lambda}(A,z)\Omega|_{z=0}=A$.

For the commutation identity \eqref{eqn: Borcherds identity}, it suffices to check the equation holds when it acts on elements, i.e. to check $[A_{(n)},B_{(m)}]C=\sum_{k\ge 0}\binom{n}{k}(A_{(k)}B)_{(n+m-k)}C$ holds. This holds when $\lambda=L$ and $L$ being sufficiently large (e.g. $A,B,C\in F_L\mathbf W$), thus the equation holds as identity between rational functions in $\lambda$.

Finally, the state-operator map $Y_{\lambda}(-,z)$ is compatible with the natural $\mathbb Z$-grading on $\mathbf W\otimes\mathbb C(\lambda)[\alpha]$, thus $(\mathbf W\otimes\mathbb C(\lambda)[\alpha], Y_{\lambda}(-,z),\Omega )$ is a $\mathbb Z$-graded vertex algebra over the base ring $\mathbb C(\lambda)[\alpha]$.
\end{proof}

From this point, we shall use the standard notation $W^{a(r)}_{b}(z)=\sum_{n\in \mathbb Z}W^{a(r)}_{b,n}z^{-n-r}$ to denote the vertex operator $Y_{\lambda}(A^{a,r}_{b,-r},z)$, and also write $|0\rangle:=\Omega$. Note that
\begin{align*}
    A^{a_m,s_m}_{b_m,n_m}\cdots A^{a_1,s_1}_{b_1,n_1}=W^{a_m(s_m)}_{b_m,n_m}\cdots W^{a_1(s_1)}_{b_1,n_1}|0\rangle,
\end{align*}
since this equation holds for all sufficiently large $L$. In particular, $W^{a(r)}_{b}(z),r=1,2,\cdots$ is a set of strong generators of $\mathbf W\otimes\mathbb C(\lambda)[\alpha]$. 


By construction, $W^{a(1)}_b(z)$ generates a $\mathbb C[\alpha,\lambda]$-vertex subalgebra $V^{\kappa_{\lambda\alpha,\lambda}}(\gl_K)\hookrightarrow \mathbf W\otimes\mathbb C(\lambda)[\alpha]$, where $\kappa_{\lambda\alpha,\lambda}$ is the inner form $\kappa_{\lambda\alpha,\lambda}(E^a_b,E^c_d)=\lambda\alpha\delta^c_b\delta^a_d+\lambda\delta^a_b\delta^c_d$.

Using the reduction to sufficiently large $L$ technique again, we see that $\mathbf W\otimes\mathbb C(\lambda)[\bar\alpha^{\pm}]$ is conformal and it has stress-energy operator 
\begin{align}\label{eqn: stress-energy operator}
    T(z)=-\frac{1}{2\bar\alpha}:W^{a(1)}_{b}W^{b(1)}_{a}:(z)-\frac{\alpha(\lambda-1)}{2\bar\alpha}\partial W^{a(1)}_{a}(z)+\frac{1}{\bar\alpha}W^{a(2)}_{a}(z)
\end{align}
with central charge 
\begin{align}\label{eqn: central charge}
    c=\frac{K\lambda}{\bar\alpha}((\lambda^2-1)\alpha^2-\alpha K-1).
\end{align}
Moreover, $W^{a(r)}_{b}(z)$ has conformal weight $r$ w.r.t. $T(z)$,
and $W^{a(1)}_{b}(z)$ are primary of spin $1$ w.r.t $T(z)$.

\subsection{Polynomiality of the structure constants}
So far we have constructed a candidate for $\mathcal W^{(K)}_{\infty}$ over the base ring $\mathbb C(\lambda)[\alpha]$, we would like to show that all the structure constants in the $U^{a(r)}_b$ basis actually takes value in polynomial ring $\mathbb C[\lambda,\alpha]$.

Define a $\mathbb C$-linear map $\Delta_{\lambda_1,\lambda_2}:\mathbf W\to \mathbf W\otimes \mathbb C(\lambda_1)\otimes \mathbf W\otimes\mathbb C(\lambda_2)[\alpha]$ as follows: $\Omega\mapsto \Omega\otimes\Omega$ and
\begin{align*}
A^{a_m,s_m}_{b_m,n_m}\cdots A^{a_1,s_1}_{b_1,n_1}\mapsto \oint_{|z_m|>\cdots>|z_1|}\Delta(W^{a_m(s_m)}_{b_m}(z_m))\cdots \Delta(W^{a_1(s_1)}_{b_1}(z_1))|0\rangle\otimes|0\rangle \prod_{j=1}^m\frac{z_j^{n_j+s_j-1}dz_j}{2\pi i},
\end{align*}
where $\Delta(W^{a(r)}_{b}(z))$ is defined by the formula
\begin{align*}
\Delta(W^{a(r)}_b(z))=\sum_{\substack{(s,t,u)\in\mathbb N^3\\s+t+u=r}} \binom{\lambda_2-t}{u}(\alpha\partial)^u W^{c(s)}_b(z)\otimes W^{a(t)}_c(z).
\end{align*}

\begin{lemma}\label{lem: injectivity of coproduct}
$\Delta_{\lambda_1,\lambda_2}$ is injective.
\end{lemma}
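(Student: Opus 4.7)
The plan is to reduce injectivity to the injectivity of the finite coproducts $\Delta_{L_1,L_2}\colon\mathcal W^{(K)}_{L_1+L_2}\to\mathcal W^{(K)}_{L_1}\otimes\mathcal W^{(K)}_{L_2}$ by specialization at positive integer values of $\lambda_1,\lambda_2$. Fix a nonzero $A\in\mathbf W$; then $A\in F_L\mathbf W$ for some $L$, and for any pair of positive integers $(L_1,L_2)$ with $L_1+L_2\geq L$, the PBW identification $F_{L_1+L_2}\mathbf W\otimes\mathbb C[\alpha]\cong \mathcal W^{(K)}_{L_1+L_2}$ lets us view $A$ as a nonzero element of $\mathcal W^{(K)}_{L_1+L_2}$.

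First I would establish the key compatibility: for all sufficiently large integer pairs $(L_1,L_2)$, the rational-function element $\Delta_{\lambda_1,\lambda_2}(A)$ is regular at $(\lambda_1,\lambda_2)=(L_1,L_2)$ and, under the identifications $F_{L_i}\mathbf W\otimes\mathbb C[\alpha]\cong\mathcal W^{(K)}_{L_i}$ on the two tensor factors of the target, its specialization coincides with $\Delta_{L_1,L_2}(A)$. This is a direct unfolding of definitions. The iterated contour integral in the definition of $\Delta_{\lambda_1,\lambda_2}$ is built out of normally ordered products of the fields $Y_{\lambda_i}(-,z)$ on each tensor factor; by Lemma \ref{lem: state-operator map on W(infinity)}, for large enough $L_i$ these modes specialize to the corresponding mode computations in $\mathcal W^{(K)}_{L_i}$. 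The binomial factor $\binom{\lambda_2-t}{u}$ in the formula for $\Delta(W^{a(r)}_b(z))$ specializes at $\lambda_2=L_2$ to $\binom{L_2-t}{u}$, so $\Delta(W^{a(r)}_b(z))\big|_{\lambda_2=L_2}$ reproduces (up to conventions) the finite expression for $\Delta_{L_1,L_2}(W^{a(r)}_b(z))$ recorded just before Theorem \ref{thm: matrix extended W(infty)}.

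With this compatibility in hand, injectivity is immediate. If $\Delta_{\lambda_1,\lambda_2}(A)=0$ then its specialization vanishes for all sufficiently large integer pairs $(L_1,L_2)$, forcing $\Delta_{L_1,L_2}(A)=0$ in $\mathcal W^{(K)}_{L_1}\otimes\mathcal W^{(K)}_{L_2}$. But $\Delta_{L_1,L_2}$ is an injective vertex algebra homomorphism, realized by splitting the Miura operator $\prod_{i=1}^{L_1+L_2}(\alpha\partial-J^{[i]}(z))$ into its first $L_1$ and last $L_2$ factors; equivalently, $\mathcal W^{(K)}_{L_1+L_2}$ sits inside $V^{\kappa_\alpha}(\gl_K)^{\otimes(L_1+L_2)}$ and the two successive Miura embeddings factor through $\mathcal W^{(K)}_{L_1}\otimes\mathcal W^{(K)}_{L_2}$. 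Hence $A=0$ in $\mathcal W^{(K)}_{L_1+L_2}$, contradicting the PBW identification.

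The main obstacle is the compatibility verification in the first step: one has to track, through the PBW identifications, that the iterated residues defining $\Delta_{\lambda_1,\lambda_2}(A)$ and the Fourier modes defining $\Delta_{L_1,L_2}(A)$ agree coefficient-by-coefficient, and that no spurious poles in $\lambda_1$ or $\lambda_2$ appear for the integer specializations considered. This is a bookkeeping exercise of the same flavor as Lemma \ref{lem: nondegenerate pairing_finite L} and Lemma \ref{lem: eventually rational}: both sides, once cleared of a common polynomial denominator, are polynomial in $L_1,L_2$, and they agree on infinitely many integer points by construction of $Y_\lambda(-,z)$, hence agree as rational functions.
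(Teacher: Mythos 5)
Your proposal is correct and follows essentially the same route as the paper: specialize $\Delta_{\lambda_1,\lambda_2}$ at sufficiently large integer pairs $(L_1,L_2)$, identify the specialization with the finite coproduct $\Delta_{L_1,L_2}$ under the PBW identification $F_{L_1+L_2}\mathbf W\otimes\mathbb C[\alpha]\cong\mathcal W^{(K)}_{L_1+L_2}$, and invoke injectivity of the Miura splitting. The paper's version merely adds the observation that $\Delta_{\lambda_1,\lambda_2}$ preserves the $\mathbb Z$-grading, so one may work on a single homogeneous component $\mathbf W_{-d}$ where there are only finitely many rational-function coefficients to clear of poles — a point your regularity discussion handles in substance.
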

\begin{proof}
Obviously $\Delta_{\lambda_1,\lambda_2}$ preserves the $\mathbb Z$-grading, so we only need to show the injectivity for homogeneous component. Let $L_1,L_2\gg d$ be sufficiently large integers, then for a nonzero element $A\in \mathbf W_{-d}$ we have $$\Delta_{\lambda_1,\lambda_2}(A)\bigg\rvert_{\substack{\lambda_1=L_1\\ \lambda_2=L_2}}=\Delta_{L_1,L_2}(A),$$ where on the right-hand-side we treat $A$ as element in $\mathcal W^{(K)}_{L_1+L_2}=F_{L_1+L_2}\mathbf W\otimes\mathbb C[\alpha]$. Since $\Delta_{L_1,L_2}$ is injective, $\Delta_{\lambda_1,\lambda_2}(A)\neq 0$. 
\end{proof}

\begin{lemma}
For all $A,B\in \mathbf W$ and all $n\in \mathbb Z$ the equation 
\begin{align}\label{eqn: coproduct preserves state-operator map}
\Delta_{\lambda_1,\lambda_2}(A_{(n)}B[\lambda_1+\lambda_2])=\Delta_{\lambda_1,\lambda_2}(A)_{(n)}\Delta_{\lambda_1,\lambda_2}(B)[\lambda_1,\lambda_2]
\end{align}
holds in $\mathbf W\otimes  \mathbf W\otimes\mathbb C(\lambda_1,\lambda_2)[\alpha]$.
\end{lemma}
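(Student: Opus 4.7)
The plan is to apply the reduction-to-large-$L$ technique that has been used repeatedly in this section. Both sides of \eqref{eqn: coproduct preserves state-operator map} are, for each fixed weight component, elements of a finite-rank free module over $\mathbb{C}(\lambda_1, \lambda_2)[\alpha]$, so it suffices to verify that the difference vanishes upon specializing $(\lambda_1, \lambda_2)$ at infinitely many (and in particular a Zariski-dense set of) integer pairs.

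First I would observe, exactly as in the proof of Lemma \ref{lem: injectivity of coproduct}, that for sufficiently large integers $L_1, L_2$ (large enough that $A$, $B$, and $A_{(n)}B[L_1+L_2]$ all lie in $F_{\min(L_1, L_2)}\mathbf{W}$, and that the denominators appearing in the rational-function specializations of the state-operator maps do not vanish), the specialization $\Delta_{\lambda_1,\lambda_2}\rvert_{\lambda_1=L_1,\lambda_2=L_2}$ agrees on the relevant finite-dimensional subspaces with the honest vertex algebra homomorphism $\Delta_{L_1,L_2}:\mathcal W^{(K)}_{L_1+L_2}\to \mathcal W^{(K)}_{L_1}\otimes\mathcal W^{(K)}_{L_2}$, under the identification $F_L\mathbf W\otimes\mathbb C[\alpha]\cong \mathcal W^{(K)}_L$.

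Next, by Lemma \ref{lem: state-operator map on W(infinity)}, the specialization of the LHS at $(\lambda_1,\lambda_2)=(L_1,L_2)$ is $\Delta_{L_1,L_2}\bigl(Y_{L_1+L_2}(A,z)B[z^n]\bigr)$. Since $\Delta_{L_1,L_2}$ is a vertex algebra homomorphism, this equals the $z^n$-coefficient of
\begin{equation*}
Y_{\mathcal W^{(K)}_{L_1}\otimes\mathcal W^{(K)}_{L_2}}\bigl(\Delta_{L_1,L_2}(A),z\bigr)\,\Delta_{L_1,L_2}(B),
\end{equation*}
where the right-hand side is computed using the tensor-product vertex algebra structure, i.e.\ the rule $(C_1\otimes C_2)_{(n)}(D_1\otimes D_2)=\sum_k (C_1)_{(k)}D_1\otimes (C_2)_{(n-k-1)}D_2$. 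Applying Lemma \ref{lem: state-operator map on W(infinity)} again, factor by factor, identifies this with the specialization of the RHS at $(\lambda_1,\lambda_2)=(L_1,L_2)$.

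Since both sides are rational functions in $(\lambda_1,\lambda_2)$ and agree on all sufficiently large pairs of positive integers (avoiding at most the finitely many hyperplanes where some denominator vanishes, per weight component), they must agree identically in $\mathbf W\otimes\mathbf W\otimes \mathbb C(\lambda_1,\lambda_2)[\alpha]$. The only mild obstacle is bookkeeping: one must justify that the infinite sum arising in the tensor-product Borcherds formula, when applied to $\Delta_{\lambda_1,\lambda_2}(A)$ and $\Delta_{\lambda_1,\lambda_2}(B)$, truncates weight-by-weight to a finite sum (which follows from the fact that $\Delta_{\lambda_1,\lambda_2}$ preserves $\mathbb Z$-grading, and that each homogeneous component of $\mathbf W$ is finite-dimensional), so that both sides of \eqref{eqn: coproduct preserves state-operator map} are genuinely well-defined elements of $\mathbf W\otimes\mathbf W\otimes \mathbb C(\lambda_1,\lambda_2)[\alpha]$ on which the specialization argument applies.
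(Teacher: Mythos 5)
Your proposal is correct and follows essentially the same route as the paper's proof: specialize $(\lambda_1,\lambda_2)$ at sufficiently large integer pairs $(L_1,L_2)$, use that $\Delta_{L_1,L_2}$ is a vertex algebra homomorphism on $\mathcal W^{(K)}_{L_1+L_2}=F_{L_1+L_2}\mathbf W\otimes\mathbb C[\alpha]$, and conclude by the identity theorem for rational functions agreeing on infinitely many integer pairs. The extra bookkeeping you include (finite truncation of the tensor-product Borcherds sum, avoidance of denominator loci) is harmless and only makes explicit what the paper leaves implicit.
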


\begin{proof}
Suppose that $A\in \mathbf W_{-p}, B\in \mathbf W_{-q}$, then for $L_1,L_2\gg p+q+|n|$, $$\Delta_{\lambda_1,\lambda_2}(A_{(n)}B[\lambda_1+\lambda_2])\bigg\rvert_{\substack{\lambda_1=L_1\\ \lambda_2=L_2}}=\Delta_{L_1,L_2}(Y_{L_1+L_2}(A,z)B[z^n]),$$ where on the right-hand-side we treat $A, B$ and $Y_{L_1+L_2}(A,z)B[z^n]$ as element in $\mathcal W^{(K)}_{L_1+L_2}=F_{L_1+L_2}\mathbf W\otimes\mathbb C[\alpha]$. Since $\Delta_{L_1,L_2}$ is a vertex algebra map, i.e. preserves state-operator map, thus \eqref{eqn: coproduct preserves state-operator map} holds when $\lambda_1=L_1,\lambda_2=L_2$. Since it holds for infinitely many pairs $(L_1,L_2)$, \eqref{eqn: coproduct preserves state-operator map} must hold as rational functions in $\lambda_1,\lambda_2$. 
\end{proof}

\begin{proposition}\label{prop: polynomiality}
For all $A,B\in \mathbf W$ and all $n\in \mathbb Z$, $A_{(n)}B[\lambda]\in \mathbf W\otimes\mathbb C[\lambda,\alpha]$.
\end{proposition}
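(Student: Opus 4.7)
The plan is to upgrade the rational extension $A_{(n)}B[\lambda] \in \mathbf{W} \otimes \mathbb{C}(\lambda)[\alpha]$ provided by Lemma \ref{lem: state-operator map on W(infinity)} to a polynomial one. The key reduction: if the specialization $A_{(n)}B[L] \in \mathbf{W}_{-d} \otimes \mathbb{C}[\alpha]$ (where $d = p+q-n-1$) depends polynomially on the integer parameter $L$ for all $L$ sufficiently large, then the uniqueness part of Lemma \ref{lem: eventually rational} forces the rational function $A_{(n)}B[\lambda]$ to coincide with that polynomial, since a rational function agreeing with a polynomial on infinitely many integers is itself polynomial.

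To establish polynomial $L$-dependence of $A_{(n)}B[L]$, I would work inside the ambient vertex algebra $V^{\kappa_\alpha}(\bigoplus_{i=1}^L \gl_K^{[i]})$ containing $\mathcal{W}^{(K)}_L$ via the Miura map. For $L$ sufficiently large, the elements $A$ and $B$ expand as polynomials in the Kac-Moody currents $J^{a[i]}_{b,n}$ whose coefficients depend polynomially on $L$ and $\alpha$: this is exactly the content of the pseudodifferential-symbol analysis (Proposition \ref{prop: formal powers of PsiDS}) applied to the Miura operator. Fixing a basis $\{e_i\}$ of the finite-dimensional space $\mathbf{W}_{-d}$ and writing $A_{(n)}B[L] = \sum_i c_i(L,\alpha) e_i$, I would then extract the coefficients $c_i(L,\alpha)$ by pairing against dual vectors $v_j \in V^{\kappa_\alpha}(\gl_K^{\oplus \mathbb{N}})^\vee$: both the pairing matrix $M_{ij}(\lambda,\alpha) = G(v_j, e_i)$ and the correlators $G(v_j, A_{(n)}B[L])|_{\lambda=L}$ are polynomial in $L$ by Proposition \ref{prop: correlators are polynomials in L}, so Cramer's rule expresses each $c_i(L,\alpha)$ as a rational function in $L$ with denominator $\det M(L,\alpha)$.

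The remaining step is to eliminate this denominator. Here I would combine two sources of rigidity. First, the coefficient $c_i(L,\alpha)$ lies in $\mathbb{C}[\alpha]$ (no $\alpha$-poles) for each large integer $L$, since $A_{(n)}B[L]$ is a genuine element of $\mathcal{W}^{(K)}_L$. Second, I would invoke the injective coproduct $\Delta_{\lambda_1, \lambda_2}$ together with the compatibility
\[
\Delta_{\lambda_1, \lambda_2}\bigl(A_{(n)}B[\lambda_1 + \lambda_2]\bigr) = \Delta_{\lambda_1, \lambda_2}(A)_{(n)}\Delta_{\lambda_1, \lambda_2}(B)[\lambda_1, \lambda_2]
\]
to obtain additional constraints that a hypothetical pole at $\lambda = \lambda_0$ would have to be consistent with, across a two-parameter family of decompositions $\lambda_0 = L_1 + L_2$. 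Alternatively, one may proceed by induction on the filtration level $\max(r,s)$ of $A \in F_r\mathbf{W}$, $B \in F_s\mathbf{W}$: in the base case $r = s = 1$, both $A$ and $B$ lie in the sub-vertex algebra $V^{\kappa_{\lambda\alpha, \lambda}}(\gl_K)$ generated by $W^{a(1)}_b(z)$, whose OPE structure constants are manifestly polynomial in $\lambda$ and $\alpha$ since the levels $\lambda\alpha$ and $\lambda$ enter linearly in the central extension.

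The main obstacle will be the denominator-elimination step: showing that $\det M(L,\alpha)$ divides the numerator produced by Cramer's rule. I expect this to require either a judicious choice of dual vectors $v_j$ adapted to the PBW filtration so that the pairing matrix becomes triangular with scalar diagonal, or a careful inductive argument via the coproduct tracking how boundary terms in the decomposition $\Delta_{\lambda_1, \lambda_2}(A) = A \otimes 1 + 1 \otimes A + (\text{lower filtration cross terms})$ combine. The binomial prefactors $\binom{\lambda_2 - t}{u}$ appearing in the coproduct formula are already manifestly polynomial, which is the source of the rigidity needed to conclude.
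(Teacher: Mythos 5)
Your proposal correctly identifies the two ingredients the paper actually uses --- the injectivity of $\Delta_{\lambda_1,\lambda_2}$ (Lemma \ref{lem: injectivity of coproduct}) and its compatibility with the state-operator map (equation \eqref{eqn: coproduct preserves state-operator map}) --- but neither of your two routes is carried through, and the one you develop in most detail stalls exactly where you say it does. Your primary route (establish polynomial $L$-dependence of $A_{(n)}B[L]$, then upgrade the rational function) is circular in its hard step: extracting the coefficients $c_i(L,\alpha)$ by Cramer's rule against dual vectors reproduces precisely the argument of Lemma \ref{lem: eventually rational}, and that argument inherently yields only \emph{rational} dependence, with the determinant of the pairing matrix as denominator. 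Your suggested fix --- choosing dual vectors so that the pairing matrix is triangular with scalar diagonal --- would amount to the Shapovalov-type pairing on $\mathbf W_{-d}$ being unimodular over $\mathbb C[\alpha,\lambda]$, which is not something you can arrange and is essentially equivalent to the statement being proved. The filtration-induction alternative is also not viable as sketched: the base case $F_1\mathbf W$ is fine, but the induction step would require expressing $A_{(n)}B$ for higher-filtration $A,B$ through lower-filtration data, and no mechanism for that is given.

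The coproduct route you gesture at is the paper's actual proof, but the decisive mechanism is missing from your sketch, and your phrasing ``decompositions $\lambda_0 = L_1+L_2$'' suggests integer splittings, which cannot reach a hypothetical pole at a non-integer $\mu$ (poles at large integers are already excluded by Lemma \ref{lem: eventually rational}). The correct argument is: suppose $A_{(n)}B[\lambda]$ has a pole at $\lambda=\mu\in\mathbb C$. The two-variable expression $\Delta_{\lambda_1,\lambda_2}(A)_{(n)}\Delta_{\lambda_1,\lambda_2}(B)[\lambda_1,\lambda_2]$ and the map $\Delta_{\lambda_1,\lambda_2}|_{\mathbf W_{-d}}$ have coefficients that are rational of the separated form $f(\lambda_1)g(\lambda_2)$, so their singular loci are proper closed subsets of $\mathbb C^2$ that cannot contain the entire line $\lambda_1+\lambda_2=\mu$; hence one can choose complex $\lambda_1,\lambda_2$ with $\lambda_1+\lambda_2=\mu$ at which both are regular. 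The compatibility \eqref{eqn: coproduct preserves state-operator map} then forces $\Delta_{\lambda_1,\lambda_2}\bigl(A_{(n)}B[\mu]\bigr)$ to be regular, while the singularity of $A_{(n)}B[\mu]$ combined with the regularity and injectivity of $\Delta_{\lambda_1,\lambda_2}|_{\mathbf W_{-d}}$ forces it to be singular --- a contradiction. Without this pole-avoidance step on the line $\lambda_1+\lambda_2=\mu$ and the resulting regular-versus-singular contradiction, your proposal does not close the argument.
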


\begin{proof}
Suppose that $A\in \mathbf W_{-p}, B\in \mathbf W_{-q}$ and $A_{(n)}B[\lambda]$ has a pole at $\lambda=\mu\in \mathbb C$, then we can tune $\lambda_1,\lambda_2\in \mathbb C$ such that $\lambda_1+\lambda_2=\mu$, and that $\Delta_{\lambda_1,\lambda_2}(A)_{(n)}\Delta_{\lambda_1,\lambda_2}(B)[\lambda_1,\lambda_2]$ is regular, and that $\Delta_{\lambda_1,\lambda_2}|_{\mathbf W_{-p-q+n+1}}$ is also regular (it is possible because coefficients are rational functions of the form $f(\lambda_1)g(\lambda_2)$). Using the equation \eqref{eqn: coproduct preserves state-operator map}, we see that $\Delta_{\lambda_1,\lambda_2}(A_{(n)}B[\mu])$ is regular. However, $A_{(n)}B[\mu]$ is singular and the map $\Delta_{\lambda_1,\lambda_2}|_{\mathbf W_{-p-q+n+1}}$ is regular by our assumption, and $\Delta_{\lambda_1,\lambda_2}|_{\mathbf W_{-p-q+n+1}}$ is injective according to the Lemma \ref{lem: injectivity of coproduct}, this is a contradiction. Therefore $A_{(n)}B[\lambda]$ has no pole, i.e. it is an element of $\mathbf W\otimes\mathbb C[\lambda,\alpha]$.
\end{proof}

Finally, we can define the $\mathcal W^{(K)}_{\infty}$ algebra to be the $\mathbb Z$-graded vertex algebra $(\mathbf W\otimes\mathbb C[\lambda,\alpha], Y_{\lambda}(-,z),\Omega)$, which is defined over the base ring $\mathbb C[\alpha,\lambda]$ with strong generators $W^{a(r)}_b(z),r=1,2,\cdots$. Note that $W^{a(1)}_b(z)$ generates a $\mathbb C[\alpha,\lambda]$-vertex subalgebra $V^{\kappa_{\lambda\alpha,\lambda}}(\gl_K)\hookrightarrow \mathcal W^{(K)}_{\infty}$. Moreover, the $\Delta_{\lambda_1,\lambda_2}$ map that we define in the beginning of this subsection becomes a $\mathbb C[\alpha]$-vertex algebra map $\Delta_{\mathcal W}:\mathcal W^{(K)}_{\infty}\to \mathcal W^{(K)}_{\infty}\otimes \mathcal W^{(K)}_{\infty}$, such that $ \Delta_{\mathcal W}(\lambda)=\lambda\otimes 1+1\otimes \lambda$, and it acts on strong generators by \eqref{eqn: W(infinity) coproduct}.

\bigskip To conclude this subsection, we note that the correlator $\langle 0| W^{a_1(r_1)}_{b_1,i_1}\cdots W^{a_n(r_n)}_{b_n,i_n}|0\rangle$ in the $\mathcal W^{(K)}_{\infty}$ algebra is a polynomial in $\alpha$ and $\lambda$, and it agrees with the one determined by finite $L$ correlators in the Proposition \ref{prop: U-correlators are polynomials}.




\subsection{Proof of Theorem \ref{thm: matrix extended W(infty)}}
It remains to construct the truncation map, we define $\pi_{L}:\mathcal W^{(K)}_{\infty}\to \mathcal W^{(K)}_{L}$ to be the $\mathbb C[\alpha]$-linear map induced by setting $\lambda=L$ and composing with the projection $\mathbf W\to F_L\mathbf W$ and the identification $\mathcal W^{(K)}_{L}=F_L\mathbf W\otimes\mathbb C[\alpha]$.

\begin{lemma}
$\pi_1: \mathcal W^{(K)}_{\infty}\to \mathcal W^{(K)}_{1}=V^{\kappa_{\alpha}}(\gl_K)$ is a vertex algebra map.
\end{lemma}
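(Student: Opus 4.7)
The plan is to show $\pi_1$ is a vertex algebra map by establishing that its kernel $K_1\mathbf W\otimes\mathbb C[\alpha]$ is a vertex algebra ideal of $\mathcal W^{(K)}_\infty|_{\lambda=1}$, so that $\pi_1$ becomes the induced quotient map. By Lemma~\ref{lem: nondegenerate pairing_finite L} applied at $L=1$ (combined with the global non-degeneracy Lemma~\ref{lem: nondegenerate pairing_global}), an element $C\in\mathcal W^{(K)}_\infty$ lies in $K_1\mathbf W\otimes\mathbb C[\alpha]$ at $\lambda=1$ if and only if $G(v,C)|_{\lambda=1}=0$ for every $v\in V^{\kappa_\alpha}(\gl_K^{[1]})^\vee$. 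Hence it suffices to show that for all $A\in K_1\mathbf W$, $B\in\mathbf W$, $n\in\mathbb Z$, and $v\in V^{\kappa_\alpha}(\gl_K^{[1]})^\vee$, one has $G(v,A_{(n)}B)|_{\lambda=1}=0$ and $G(v,B_{(n)}A)|_{\lambda=1}=0$.

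The proof of this vanishing will proceed by reducing to Proposition~\ref{prop: correlators are polynomials in L}. By Proposition~\ref{prop: polynomiality}, $A_{(n)}B[\lambda]$ is polynomial in $\lambda$, so $G(v,A_{(n)}B)$ is a polynomial in $\lambda$. For $L$ larger than the maximal spin in $A$ and $B$, Lemma~\ref{lem: state-operator map on W(infinity)} gives
\[
G(v,A_{(n)}B)\big|_{\lambda=L}=\oint_{|z|=1}\big\langle v,\,Y_L(A,z)\,Y_L(B,0)\,|0\rangle\,z^n\,\frac{dz}{2\pi i},
\]
where the right-hand side is a correlator in $\mathcal W^{(K)}_L\subset V^{\kappa_\alpha}(\bigoplus_{i=1}^L\gl_K^{[i]})$. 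Expanding $A$ and $B$ in the PBW basis, this correlator becomes a linear combination of terms of the form $\langle 0|(\text{string of }J^{c[1]}_d)(\text{string of }W^{(r_j)})|0\rangle$, which are exactly the correlators controlled by Proposition~\ref{prop: correlators are polynomials in L}. Since $A\in K_1\mathbf W$ contributes at least one factor $W^{(r)}$ with $r\ge 2$, the maximum spin is $\ge 2$, and Proposition~\ref{prop: correlators are polynomials in L} gives that the polynomial in $L$ vanishes for $L<\max r$, in particular at $L=1$. The symmetric argument handles $G(v,B_{(n)}A)$.

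To finish, I identify the quotient. By Theorem~\ref{thm: matrix extended W(infty)}(1), the $\mathbb C[\alpha,\lambda]$-vertex subalgebra of $\mathcal W^{(K)}_\infty$ generated by $W^{a(1)}_b(z)$ is $V^{\kappa_{\lambda\alpha,\lambda}}(\gl_K)$, whose underlying module under the PBW identification equals $F_1\mathbf W\otimes\mathbb C[\alpha,\lambda]$. Specializing $\lambda=1$ identifies $F_1\mathbf W\otimes\mathbb C[\alpha]$ with $V^{\kappa_{\alpha,1}}(\gl_K)=V^{\kappa_\alpha}(\gl_K)=\mathcal W^{(K)}_1$, so $\mathcal W^{(K)}_\infty|_{\lambda=1}/(K_1\mathbf W\otimes\mathbb C[\alpha])\cong\mathcal W^{(K)}_1$ as vertex algebras over $\mathbb C[\alpha]$, and $\pi_1$ is the corresponding quotient map. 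The main technical point is the explicit identification of $G(v,A_{(n)}B)|_{\lambda=L}$ with a correlator in the form demanded by Proposition~\ref{prop: correlators are polynomials in L}; once that is in place, the spin-counting vanishing is immediate.
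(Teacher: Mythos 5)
Your proposal is correct and follows essentially the same route as the paper: both arguments use the pairing of $\mathcal W^{(K)}_\infty$ against $(\mathcal W^{(K)}_1)^{\vee}$, identify $K_1\mathbf W$ with the kernel of that pairing at $\lambda=1$ via the nondegeneracy lemmas, and then invoke the vanishing of correlators $\langle 0|(\text{modes of }J^{[1]})\,W^{(r_1)}\cdots W^{(r_n)}|0\rangle$ for $L<\max(r_i)$ (Proposition \ref{prop: U-correlators are polynomials}) to conclude that the state-operator map at $\lambda=1$ preserves $K_1\mathbf W$. The only cosmetic difference is that you verify closure under $A_{(n)}B$ and $B_{(n)}A$ separately and spell out the identification of the quotient with $V^{\kappa_\alpha}(\gl_K)$, whereas the paper checks one side and leaves the quotient identification implicit.
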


\begin{proof}
Consider the pairing $P_{\lambda}: (\mathcal W^{(K)}_{1})^{\vee}\otimes \mathbf W\to \mathbb C[\alpha,\lambda]$ given by the following
\begin{equation}
\begin{split}
P_{\lambda}(\langle 0|W^{c_1(1)}_{d_1,k_1}\cdots W^{c_m(1)}_{d_m,k_m}, A^{a_n,r_n}_{b_n,i_n}\cdots A^{a_1,r_1}_{b_1,i_1}):=\langle 0|W^{c_1(1)}_{d_1,k_1}\cdots W^{c_m(1)}_{d_m,k_m}W^{a_n(r_n)}_{b_n,i_n}\cdots W^{a_1(r_1)}_{b_1,i_1}|0\rangle,
\end{split}
\end{equation}
where the right-hand-side is the correlator in $\mathcal W^{(K)}_{\infty}$. By the Proposition \ref{prop: U-correlators are polynomials}, $P_1:(\mathcal W^{(K)}_{1})^{\vee}\otimes \mathbf W\to \mathbb C[\alpha]$ vanishes identically on the subspace $(\mathcal W^{(K)}_{1})^{\vee}\otimes K_1\mathbf W$, and the restriction of $P_1$ on the subspace $(\mathcal W^{(K)}_{1})^{\vee}\otimes F_1\mathbf W$ is the same as the canonical pairing $V^{\kappa_{\alpha}}(\gl_K)^{\vee}\otimes V^{\kappa_{\alpha}}(\gl_K)\to \mathbb C[\alpha]$ which is nondegenerate on both components. So $K_1\mathbf W$ is the kernel of $P_1$.

Therefore it remains to show that vertex operators in $\mathcal W^{(K)}_{\infty}$ evaluated at $\lambda=1$ maps $K_1\mathbf W$ to the kernel of $P_1$. In fact, let $A=A^{c,t}_{d,-t}$ and $B=A^{a_n,r_n}_{b_n,i_n}\cdots A^{a_1,r_1}_{b_1,i_1}$ be elements in $\mathbf W$ such that $r_k>1$ for some $1\le k\le n$, i.e. $B\in K_1\mathbf W$, then for all $v\in (\mathcal W^{(K)}_{1})^{\vee}$, we have
\begin{align*}
    P_{\lambda}(v, Y_{\lambda}(A,z)B)|_{\lambda=1}=\langle 0|v \:W^{c(t)}_{d}(z)W^{a_n(r_n)}_{b_n,i_n}\cdots W^{a_1(r_1)}_{b_1,i_1}|0\rangle|_{\lambda=1},
\end{align*}
which is zero by Proposition \ref{prop: U-correlators are polynomials}. Thus $Y_{\lambda}(A,z)B|_{\lambda=1}\in K_1\mathbf W\otimes\mathbb C[\alpha](\!(z)\!)$. This finishes the proof.
\end{proof}

Now let us bootstrap the general truncation from the $\pi_1$, using the coproduct $\Delta_{\mathcal W}$. Suppose that $\pi_1,\cdots,\pi_{n-1}$ are known to be vertex algebra maps for some $n>1$, then consider the composition $(\pi_{n-1}\otimes\pi_1)\circ \Delta_{\mathcal W}:\mathcal W^{(K)}_{\infty}\to \mathcal W^{(K)}_{n-1}\otimes \mathcal W^{(K)}_{1}$, this maps $\lambda$ to $n$, and it is a vertex algebra map by our assumption. Using coproduct formula \eqref{eqn: W(infinity) coproduct} we find that $(\pi_{n-1}\otimes\pi_1)\circ \Delta_{\mathcal W}(W^{a(r)}_{b}(z))=0$ for all $r>n$, so this map factors through the projection $\pi_n:\mathbf W\to F_n\mathbf W$. Moreover, the coproduct formula \eqref{eqn: W(infinity) coproduct} also implies that $(\pi_{n-1}\otimes\pi_1)\circ \Delta_{\mathcal W}(W^{a(r)}_{b}(z))=\Delta_{n-1,1}(W^{a(r)}_{b}(z))$ for all $r=1,\cdots,n$, where on the right-hand-side $\Delta_{n-1,1}$ is the coproduct $\mathcal W^{(K)}_n\to \mathcal W^{(K)}_{n-1}\otimes\mathcal W^{(K)}_1$, in particular it is a vertex algebra embedding. This implies that $\pi_n$ is also a vertex algebra map, and finishes the proof of Theorem \ref{thm: matrix extended W(infty)}. 



\subsection{An integral form of \texorpdfstring{$\mathcal W^{(K)}_{\infty}$}{Wk inf}}
Define $\mathds W^{a(r)}_b(z):=\epsilon_1^{r-1}W^{a(r)}_b(z)$, $\mathsf c:=\lambda/\epsilon_1$, then this basis provide an ``integral'' form of the $\mathcal W^{(K)}_{\infty}$ algebra over the base ring $\mathbb C[\epsilon_1,\epsilon_2,\mathsf c]$. According to \eqref{W^1W^n OPE}, the $\mathds W^{(1)}\mathds W^{(n)}$ OPE is
\begin{equation}\label{eqn:  dsW1 dsWn OPE}
    \begin{split}
        \mathds W^{a(1)}_b(z)\mathds W^{c(1)}_d(w)\sim &\frac{\mathsf c(\epsilon_3\delta^a_d\delta^c_b+\epsilon_1\delta^a_b\delta^c_d)}{(z-w)^2}+\frac{\delta^c_b\mathds W^{a(1)}_d(w)-\delta^a_d\mathds W^{c(1)}_b(w)}{z-w},\\       
        \mathds W^{a(1)}_b(z)\mathds W^{c(n)}_d(w)\sim & \frac{\mathsf c(\epsilon_1\mathsf c-1)\cdots(\epsilon_1\mathsf c-n+1)\epsilon_3^{n-1}(\epsilon_3\delta^a_d\delta^c_b+\epsilon_1\delta^a_b\delta^c_d)}{(z-w)^{n+1}}\\
        &+\sum_{i=1}^{n-1}\frac{(\epsilon_1\mathsf c-i)\cdots(\epsilon_1\mathsf c-n+1)\epsilon_3^{n-1-i}(\epsilon_3\delta^c_b\mathds W^{a(i)}_d(w)+\epsilon_1\delta^a_b\mathds W^{c(i)}_d(w))}{(z-w)^{n+1-i}}\\
        &+\frac{\delta^c_b\mathds W^{a(n)}_d(w)-\delta^a_d\mathds W^{c(n)}_b(w)}{z-w}.\qquad (n>1)
    \end{split}
\end{equation}
We notice that OPE coefficients in the above formulae are polynomials in $\epsilon_1,\epsilon_2,\mathsf c$. This is true in general, proven in the next lemma.

\begin{lemma}\label{lem: integral basis}
The structure constants in the $\mathds W^{a(r)}_b, r=1,2,\cdots$ basis are polynomials in $\epsilon_1,\epsilon_2$, and $\mathsf c$.
\end{lemma}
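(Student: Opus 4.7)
By Proposition~\ref{prop: polynomiality}, the OPE $W^{a(r)}_b(z)W^{c(s)}_d(w)$ has structure constants polynomial in $\alpha=\epsilon_3/\epsilon_1$ and $\lambda=\epsilon_1\mathsf c$, so the content of the lemma is that the rescaling $\mathds W^{a(r)}_b := \epsilon_1^{r-1}W^{a(r)}_b$ eliminates all potential inverse powers of $\epsilon_1$ introduced by the substitutions $\alpha=\epsilon_3/\epsilon_1,\lambda=\epsilon_1\mathsf c$. It suffices to verify this after specializing $\lambda$ to every sufficiently large positive integer $L$, since Lemma~\ref{lem: eventually rational} recovers the full $\mathbb C[\alpha,\lambda]$-polynomial dependence from its values at such $L$.

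The plan is to work inside $V^{\kappa_\alpha}(\gl_K)^{\otimes L}$ with rescaled currents $\tilde J^{a[i]}_b := \epsilon_1 J^{a[i]}_b$, whose OPE
\begin{equation*}
    \tilde J^{a[i]}_b(z)\tilde J^{c[j]}_d(w) \sim \delta^{ij}\left[\frac{\epsilon_1\epsilon_3\delta^a_d\delta^c_b + \epsilon_1^2\delta^a_b\delta^c_d}{(z-w)^2} + \frac{\delta^c_b\tilde J^{a[i]}_d(w) - \delta^a_d\tilde J^{c[i]}_b(w)}{z-w}\right]
\end{equation*}
has $\mathbb Z[\epsilon_1,\epsilon_3]$-coefficients. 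The rescaled Miura factorization
\begin{equation*}
    \prod_{i=1}^L\bigl(\epsilon_3\partial - \tilde J^{[i]}(z)\bigr) = \epsilon_3^L\partial^L + \sum_{r=1}^L (-1)^r\epsilon_3^{L-r}\partial^{L-r}\tilde W^{(r)}(z), \qquad \tilde W^{(r)} := \epsilon_1\mathds W^{(r)},
\end{equation*}
presents each $\tilde W^{(r)}$ as a $\mathbb Z[\epsilon_3,L]$-polynomial in normal-ordered monomials in the $\tilde J^{[i]}$'s and their derivatives, with the $s$-current sector carrying a prefactor $\epsilon_3^{r-s}$ times binomial combinatorial coefficients in $L-i$. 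Compute $\tilde W^{(r)}(z)\tilde W^{(s)}(w)$ by Wick's theorem: each contraction of a $\tilde J$-pair contributes an overall factor $\epsilon_1$ (from any term of the $\tilde J\tilde J$ OPE above) and imposes the copy-index identification $i=j$, so a $p$-fold contraction channel carries $\epsilon_1^p$ together with an outer index sum $\sum_{i_1<\cdots<i_p}\binom{L-i_1}{\bullet}\cdots\binom{L-i_p}{\bullet}$ that is empty whenever $L<p$, hence divisible by $L(L-1)\cdots(L-p+1)$; under $L = \epsilon_1\mathsf c$ the latter contributes one further factor of $\epsilon_1$.

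The residual uncontracted $\tilde J^{[i]}$'s reassemble into normal-ordered products $\tilde W^{(t_1)}\cdots\tilde W^{(t_m)}$ and their derivatives via the Miura inversion, a triangular polynomial change of basis over $\mathbb Z[\epsilon_3,L]$ guaranteed by the PBW theorem for $\mathcal W^{(K)}_L$. Substituting $\tilde W^{(t_l)} = \epsilon_1\mathds W^{(t_l)}$ yields an extra $\epsilon_1^m$, and the division by $\epsilon_1^2$ needed to pass from $\tilde W^{(r)}\tilde W^{(s)}$ to $\mathds W^{(r)}\mathds W^{(s)}$ produces a net factor $\epsilon_1^{p+m-1}$ multiplying a polynomial in $\epsilon_1,\epsilon_3,\mathsf c$. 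Every singular channel has $p\ge 1$, so $p+m-1\ge 0$, and the non-singular channel $p=0$ produces $:\mathds W^{(r)}\mathds W^{(s)}:(w)$ directly. In all cases the structure constant in the $\mathds W$-basis is polynomial in $\epsilon_1,\epsilon_3,\mathsf c$, equivalently in $\epsilon_1,\epsilon_2,\mathsf c$.

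The \textbf{main obstacle} is the Miura-inversion bookkeeping, i.e., confirming that re-expressing a normal-ordered product of residual $\tilde J^{[i]}$'s in the $\tilde W^{(t_l)}$-basis hides no inverse powers of $\epsilon_1$, which amounts to the transition being triangular with unit diagonal over $\mathbb Z[\epsilon_3,L]$. A conceptually cleaner route avoids this by exploiting the coproduct: the formula \eqref{eqn: W(infinity) coproduct} is manifestly polynomial in $(\epsilon_1,\epsilon_2,\mathsf c_1,\mathsf c_2)$ when rewritten in the $\mathds W$-basis, and combined with the injectivity of $\Delta_{\lambda_1,\lambda_2}$ (Lemma~\ref{lem: injectivity of coproduct}) together with induction on $r+s$ anchored at the base case \eqref{eqn:  dsW1 dsWn OPE}, it propagates polynomiality of the OPE structure constants throughout $\mathcal W^{(K)}_{\infty}$.
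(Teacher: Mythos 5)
Your overall strategy is the right one, and it is the same rescaling strategy the paper uses: pass to $\tilde J=\epsilon_1 J$ and $\tilde W^{(r)}=\epsilon_1^rW^{(r)}=\epsilon_1\mathds W^{(r)}$, observe that the $\tilde W$-structure constants are polynomial in $\epsilon_1,\epsilon_2,\lambda$, and then account for the division by $\epsilon_1^2$ via two divisibilities: divisibility by $\epsilon_1$ of every structure constant, and divisibility by $\lambda$ of the central term. Your treatment of the $\lambda$-divisibility (the index sum over contracted sites is empty for $L<p$, hence the polynomial extrapolation vanishes at $\lambda=0$) is essentially the paper's argument, which phrases it as the vanishing of the two-point correlator $\langle\tilde W^{(r)}(z)\tilde W^{(s)}(w)\rangle$ at $\lambda=0$.

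The genuine gap is exactly the one you flag: the $\epsilon_1$-divisibility of the $\tilde W$-structure constants is made to rest on a channel-by-channel Wick count plus a Miura-inversion step, and that step is not established. After a partial contraction the residual currents sit on \emph{specific} sites weighted by site-dependent binomials $\binom{L-i}{\bullet}$, so the surviving expression is not a symmetric function of the copies; reorganizing it into the $\tilde W^{(t)}$-basis is not simply "triangular with unit diagonal," and the reorganization itself generates further normal-ordering corrections whose $\epsilon_1$-content would also have to be tracked. (There is also a small error in your displayed $\tilde J\tilde J$ OPE: since $[\tilde J,\tilde J]=\epsilon_1^2[J,J]$ closes on $\epsilon_1\tilde J$, the simple-pole term must carry an explicit prefactor $\epsilon_1$; without it your claim that "any term of the OPE contributes a factor $\epsilon_1$" is not visible from the formula you wrote.) All of this bookkeeping is unnecessary: the one observation that closes the gap is that at $\epsilon_1=0$ the rescaled currents $\tilde J$ have a completely regular OPE, so the $\epsilon_1=0$ specialization of the $\tilde W$-vertex algebra is \emph{commutative}; since the $\tilde W^{(r)}$ are $\mathbb C[\epsilon_3]$-polynomials in the $\tilde J$'s and their derivatives, every singular structure constant in the $\tilde W$-basis vanishes at $\epsilon_1=0$ and is therefore divisible by $\epsilon_1$, with no reference to individual contraction channels or to inverting the Miura map. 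Combined with your $\lambda$-divisibility of the central term (and coprimality of $\epsilon_1$ and $\lambda$ in $\mathbb C[\epsilon_1,\epsilon_2,\lambda]$), this yields the lemma. Your proposed coproduct-based alternative is too sketchy to assess as written — the identity $\Delta_{\mathsf W}(X)=X\otimes 1+1\otimes X+(\text{known terms})$ for $X=\mathds W^{(r)}_{(n)}\mathds W^{(s)}$ does not by itself isolate the coefficients of $X$ in the PBW basis — so I would not rely on it.
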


\begin{proof}
Consider the scaled $\gl_K$ affine Kac-Moody generator $\tilde{J}^{a}_b(z):=\epsilon_1 {J}^{a}_b(z)$, and scale the Miura operator accordingly $\tilde{\mathcal L}(z):=\epsilon_1\mathcal L(z)=\epsilon_3\partial_z-\tilde{J}(z)$, then the scaled generators $\tilde W^{a(r)}_b(z):=\epsilon_1^r W^{a(r)}_b(z)$ of $\mathcal W^{(K)}_L$ is given by
\begin{align*}
    (\epsilon_3\partial_z)^L+\sum_{r=1}^L (-1)^r(\epsilon_3\partial_z)^{L-r}\tilde W^{(r)}(z):=\tilde{\mathcal L}^L(z):=\tilde{\mathcal L}(z)^{[1]}\cdots \tilde{\mathcal L}(z)^{[L]}.
\end{align*}
Then it is easy to see that structure constants of $\mathcal W^{(K)}_L$ in the $\tilde W^{a(r)}_b$ basis are polynomials in $\epsilon_1,\epsilon_2$. Moreover, if we set $\epsilon_1\to 0$ then all structure constants become zero and we a commutative vertex algebra, which implies that structure constants of $\mathcal W^{(K)}_L$ in the $\tilde W^{a(r)}_b$ basis are divisible by $\epsilon_1$. Passing to $\mathcal W^{(K)}_{\infty}$, we find that structure constants in the $\tilde W^{a(r)}_b$ basis are polynomials in $\epsilon_1,\epsilon_2,\lambda$ and are divisible by $\epsilon_1$.

Now $\mathds W^{a(r)}_b$ is related to $\tilde W^{a(r)}_b$ by $\mathds W^{a(r)}_b=\tilde W^{a(r)}_b/\epsilon_1$, so it remains to show that the leading order pole in the OPE between $\tilde W^{a(r)}_b$, schematically written as 
\begin{align*}
\tilde W^{a(r)}_b(z)\tilde W^{c(s)}_d(w)\sim \frac{C^{ac(r,s)}_{bd}(\epsilon_1,\epsilon_2,\lambda)}{(z-w)^{r+s}}+\text{linear in $\tilde W$ + higher order in $\tilde W$ terms},
\end{align*}
has the property that $C^{ac(r,s)}_{bd}(\epsilon_1,\epsilon_2,\lambda)$ is divisible by $\lambda$. This can be proven as follows. Consider the correlator $\langle \tilde W^{a(r)}_b(z)\tilde W^{c(s)}_d(w)\rangle$, which equals to $\frac{C^{ac(r,s)}_{bd}(\epsilon_1,\epsilon_2,\lambda)}{(z-w)^{r+s}}$. On the other hand, the correlator vanishes when $\lambda=0$ as explained in Proposition \ref{prop: U-correlators are polynomials}. Thus $C^{ac(r,s)}_{bd}(\epsilon_1,\epsilon_2,\lambda)$ must be divisible by $\lambda$. Since $C^{ac(r,s)}_{bd}(\epsilon_1,\epsilon_2,\lambda)$ is also divisible by $\epsilon_1$, $C^{ac(r,s)}_{bd}(\epsilon_1,\epsilon_2,\epsilon_1 \mathsf c)/\epsilon_1^2$ is therefore a polynomial in $\epsilon_1,\epsilon_2$ and $\mathsf c$. This finishes the proof.
\end{proof}

\begin{remark}\label{rmk: E1 to 0 limit}
In the limit $\epsilon_1\to 0$, the OPE between $\mathds W^{(r)}$ and $\mathds W^{(s)}$ can be written as
\begin{align*}
\mathds W^{a(r)}_b(z)\mathds W^{c(s)}_d(w)\sim \frac{\mathds C^{ac(r,s)}_{bd}(\epsilon_2,\mathsf c)}{(z-w)^{r+s}}+\text{linear in $\mathds W$ },
\end{align*}
where $\mathds C^{ac(r,s)}_{bd}(\epsilon_2,\mathsf c)=\underset{\epsilon_1\shortrightarrow 0}{\lim} \epsilon_1^{-2}C^{ac(r,s)}_{bd}(\epsilon_1,\epsilon_2,\epsilon_1\mathsf c)$.
\end{remark}

\begin{definition}
Denote by $\mathsf W^{(K)}_{\infty}$ the $\mathbb C[\epsilon_1,\epsilon_2,\mathsf c]$-vertex algebra strongly generated by $\mathds W^{a(r)}_b(z)$. We call it the integral form of $\mathcal W^{(K)}_{\infty}$. 
\end{definition}
As a $\mathbb C[\epsilon_1,\epsilon_2,\mathsf c]$-module $\mathsf W^{(K)}_{\infty}$ is isomorphic to the free module $\mathbf W\otimes \mathbb C[\epsilon_1,\epsilon_2,\mathsf c]$. The spin $1$ fields $\mathds W^{a(1)}_{b}(z)$ generate a affine Kac-Moody vertex subalgebra $V^{\kappa_{\epsilon_3\mathsf c,\epsilon_1 \mathsf c}}(\gl_K)$, where $\kappa_{\epsilon_3\mathsf c,\epsilon_1 \mathsf c}$ is the inner product $\kappa_{\epsilon_3\mathsf c,\epsilon_1 \mathsf c}(E^a_b,E^c_d)=\epsilon_3\mathsf c\delta^c_b\delta^a_d+\epsilon_1 \mathsf c\delta^a_b\delta^c_d$. Moreover it has $\mathbb C[\epsilon_1,\epsilon_2]$-vertex algebra map $\Delta_{\mathsf W}: \mathsf W^{(K)}_{\infty}\to \mathsf W^{(K)}_{\infty}\otimes \mathsf W^{(K)}_{\infty}$ such that $ \Delta_{\mathsf W}(\mathsf c)=\mathsf c\otimes 1+1\otimes \mathsf c$, and $\Delta_{\mathsf W}$ acts on strong generators by 
\begin{equation}\label{eqn: W(infinity) coproduct_integral form}
    \begin{split}
    \Delta_{\mathsf W}(\mathds W^{a(r)}_b(z))=\square(\mathds W^{a(r)}_b(z))+\epsilon_1\sum_{\substack{u\in \mathbb N,(s,t)\in\mathbb N^2_{>0}\\s+t+u=r}} \binom{\epsilon_1\otimes \mathsf c-t}{u}(\epsilon_3\partial)^u \mathds W^{c(s)}_b(z)\otimes \mathds W^{a(t)}_c(z),
    \end{split}
\end{equation}
where $\square(\mathds W^{a(r)}_b(z))=\mathds W^{a(r)}_b(z)\otimes 1+1\otimes \mathds W^{a(r)}_b(z)$.\\

We also note that the subalgebra $\mathfrak U(\mathsf  W^{(K)}_\infty)\subset \mathfrak{U}(\mathcal W^{(K)}_\infty)$ is invariant under the automorphism $\boldsymbol{\eta}_{\beta}: \mathfrak{U}(\mathcal W^{(K)}_\infty)\cong \mathfrak{U}(\mathcal W^{(K)}_\infty)$ in \eqref{eqn: shift automorphism of W}, and we scale the shifting parameter as follows:
\begin{align}
\boldsymbol{\tau}_\beta:=\boldsymbol{\eta}_{\beta/\epsilon_1},
\end{align}
then $\boldsymbol{\tau}_\beta$ shifts the currents by
\begin{equation}\label{eqn: shift automorphism of W_integral basis}
\begin{split}
\boldsymbol{\tau}_\beta(\mathds W^{(r)}(z))= &\mathds W^{(r)}(z)+\sum_{s=1}^{r} \binom{\epsilon_1\mathsf c+s-r}{s} \binom{{\beta}/{\epsilon_3}}{s}\frac{\epsilon_3^s \cdot s!}{z^s}\mathds W^{(r-s)}(z),
\end{split}
\end{equation}
where $\mathds W^{(0)}(z)$ is set to be $\frac{1}{\epsilon_1}$ times the constant identity matrix.\\

In the $\mathds W^{(s)}$ basis, the stress-energy operator $T(z)$ in \eqref{eqn: stress-energy operator} is written as
\begin{align}\label{eqn: stress-energy operator_integral basis}
    T(z)=-\frac{\epsilon_1}{2\epsilon_2}:\mathds W^{a(1)}_{b}\mathds W^{b(1)}_{a}:(z)-\frac{\epsilon_3(\epsilon_1\mathsf c-1)}{2\epsilon_2}\partial \mathds W^{a(1)}_{a}(z)+\frac{1}{\epsilon_2}\mathds W^{a(2)}_{a}(z)
\end{align}
with central charge 
\begin{align}\label{eqn: central charge_integral basis}
    c=\frac{K\mathsf c}{\epsilon_2}((\epsilon_1^2\mathsf c^2-1)\epsilon_3^2-\epsilon_1\epsilon_3 K-\epsilon_1^2).
\end{align}
\begin{lemma}\label{lem: restricted mode alg of W(infinity)}
$\mathsf W^{(K)}_{\infty}$ satisfies the technical assumptions in the Proposition \ref{prop: U(V) is a sub of mode algebra}, i.e.
$\mathsf W^{(K)}_{\infty}$ has a Hamiltonian $H$, and an increasing filtration $F$ such that $\mathrm{gr}_F\mathsf W^{(K)}_{\infty}$ is commutative, and an $H$-invariant subspace $U$ of $\mathcal W^{(K)}_{\infty}$ such that its image $\overline U$ in $\mathrm{gr}_F\mathsf W^{(K)}_{\infty}$ generate a PBW basis of $\mathrm{gr}_F\mathsf W^{(K)}_{\infty}$. In particular the canonical map $U(\mathsf W^{(K)}_{\infty})\to \mathfrak U(\mathsf W^{(K)}_{\infty})$ is injective.
\end{lemma}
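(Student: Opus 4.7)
The plan is to exhibit the required data directly on $\mathsf W^{(K)}_{\infty}$ and then reduce commutativity of the associated graded to the corresponding property for the finite truncations $\mathcal W^{(K)}_L$, where it follows from the standard PBW theorem for rectangular W-algebras.

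First, take the Hamiltonian $H$ to be the operator that acts on the PBW basis of $\mathbf{W}$ from Definition \ref{def: vector space of W(infinity)} by
\begin{equation*}
H\bigl(A^{a_m,s_m}_{b_m,n_m}\cdots A^{a_1,s_1}_{b_1,n_1}\Omega\bigr) = -\Bigl(\sum_{i=1}^m n_i\Bigr)\,A^{a_m,s_m}_{b_m,n_m}\cdots A^{a_1,s_1}_{b_1,n_1}\Omega,
\end{equation*}
so that $H$ is integer-valued (hence well defined over $\mathbb C[\epsilon_1,\epsilon_2,\mathsf c]$ without inverting $\epsilon_2$), and coincides with the mode $L_0$ coming from \eqref{eqn: stress-energy operator_integral basis} after localizing $\epsilon_2$. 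Take $U$ to be the $\mathbb C[\epsilon_1,\epsilon_2,\mathsf c]$-span of the strong generators $\mathds W^{a(r)}_{b,-r}|0\rangle$, which is $H$-invariant since each such vector is an eigenvector of $H$ with eigenvalue $r$.

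Define the PBW filtration $F_p\mathsf W^{(K)}_{\infty}$ to be the $\mathbb C[\epsilon_1,\epsilon_2,\mathsf c]$-submodule spanned by those basis elements $A^{a_m,s_m}_{b_m,n_m}\cdots A^{a_1,s_1}_{b_1,n_1}\Omega$ with $m\le p$. This is a well-defined exhaustive increasing filtration because $\mathsf W^{(K)}_{\infty}$ is the free $\mathbb C[\epsilon_1,\epsilon_2,\mathsf c]$-module on $\mathbf{W}$. By construction, $\overline U\subset \mathrm{gr}_F^1\mathsf W^{(K)}_{\infty}$ is spanned by the symbols of $\mathds W^{a(r)}_{b,-r}|0\rangle$, and the PBW basis of $\mathbf W$ descends to a polynomial basis of $\mathrm{gr}_F\mathsf W^{(K)}_{\infty}$ freely generated by $\overline{U}$.

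The main step is to verify that $\mathrm{gr}_F\mathsf W^{(K)}_{\infty}$ is commutative as a vertex algebra, i.e. that $A_{(n)}B\in F_{p+q-1}\mathsf W^{(K)}_{\infty}$ whenever $A\in F_p\mathsf W^{(K)}_{\infty}$, $B\in F_q\mathsf W^{(K)}_{\infty}$ and $n\ge 0$. Using the Borcherds identity \eqref{eqn: Borcherds identity} and induction on $p+q$, it is enough to check this in the universal case $A=\mathds W^{a(r)}_{b,-r}|0\rangle$, $B=\mathds W^{c(s)}_{d,-s}|0\rangle$, i.e. that the singular part of the OPE $\mathds W^{a(r)}_b(z)\mathds W^{c(s)}_d(w)$ consists only of central scalars and terms linear in the strong generators $\mathds W^{e(t)}_f$. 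For the truncations $\mathcal W^{(K)}_L$ this is precisely the statement of the PBW theorem for rectangular W-algebras, since their OPE structure constants relative to the strong generating set $\{W^{a(r)}_b\}_{r\le L}$ are linear in the generators modulo the analogous PBW filtration. Since every structure constant of $\mathsf W^{(K)}_{\infty}$ in the $\mathds W^{a(r)}_b$-basis is a polynomial in $\epsilon_1,\epsilon_2,\mathsf c$ by Lemma \ref{lem: integral basis} and agrees with the corresponding structure constant of $\mathcal W^{(K)}_L$ after the substitution $\mathsf c=L/\epsilon_1$ for all sufficiently large $L$, the linearity lifts from each finite $L$ to $\mathsf W^{(K)}_{\infty}$.

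The hard part will be keeping track of the filtration bookkeeping when translating between the $L\to\infty$ truncation and the abstract mode algebra: concretely, one must check that the PBW filtration on $\mathcal W^{(K)}_L$ (in the sense of Arakawa--Molev) matches the restriction of $F_\bullet\mathsf W^{(K)}_{\infty}$ under $\pi_L$ for $L$ large enough to resolve the degree in question, and that the polynomial-in-$\lambda$ coincidence of OPE coefficients preserves membership in the filtration. Once the commutativity of $\mathrm{gr}_F\mathsf W^{(K)}_{\infty}$ is established, the three hypotheses of Proposition \ref{prop: U(V) is a sub of mode algebra} are satisfied, whence the injectivity of $U(\mathsf W^{(K)}_{\infty})\to \mathfrak U(\mathsf W^{(K)}_{\infty})$ follows.
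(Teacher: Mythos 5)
Your $H$ and $U$ are essentially the paper's choices (and your remark that the identification with $L_0$ requires inverting $\epsilon_2$ is the accurate one), but the filtration you propose is the wrong one, and the step you invoke to make it work is false. You filter by the length of PBW monomials, and you then need the singular part of every OPE $\mathds W^{a(r)}_b(z)\mathds W^{c(s)}_d(w)$ to consist of central terms plus terms \emph{linear} in the strong generators, which you assert is ``precisely the statement of the PBW theorem for rectangular W-algebras.'' It is not: the PBW theorem says that ordered monomials in the strong generators form a basis, and says nothing about linearity of OPEs; rectangular W-algebras are genuinely nonlinear. The paper itself records this: the OPE in the proof of Lemma \ref{lem: integral basis} is written as ``linear in $\tilde W$ + higher order in $\tilde W$ terms,'' and \eqref{eqn: W^3W^r preliminary, K=1} exhibits candidate quadratic terms $\nu_s W^{(s)}_{-s}W^{(r+2-s)}_{s-r-2}|0\rangle$ in a singular product of $W^{(3)}$ with $W^{(r)}$ (their vanishing there is a nontrivial lemma about one particular product, not a general fact). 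Already for $K=1$, $L\ge 3$ the second-order pole of the $W^{(3)}W^{(3)}$ OPE contains a quadratic quasi-primary for generic parameters. Such a term is a length-two monomial, hence lies in $F_2\setminus F_1$ of your filtration, so $A_{(n)}B\notin F_{p+q-1}$ for $A,B\in F_1$ and $n\ge 0$, and $\mathrm{gr}_F$ of the length filtration is \emph{not} commutative.

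The repair is to filter by the maximal spin of the generators occurring in a monomial, i.e. to use the filtration $F_\bullet\mathbf W$ already introduced in Definition \ref{def: vector space of W(infinity)} (where $F_l$ is spanned by monomials with all $s_i\le l$), which is exactly what the paper does. Then $\mathds W^{a(r)}_{b,-r}|0\rangle\in F_r$, the leading linear term $\mathds W^{(r+s-1)}$ of the singular OPE lies in $F_{r+s-1}=F_{p+q-1}$, and a degree count using the $\mathbb Z$-grading shows that any normally ordered monomial $\mathds W^{(t_1)}_{m_1}\cdots\mathds W^{(t_a)}_{m_a}|0\rangle$ appearing in $A_{(k)}B$ with $k\ge 0$ satisfies $\sum_i t_i\le r+s-1$, so for $a\ge 2$ each $t_i\le r+s-2$ and the term lies in $F_{r+s-2}$. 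Commutativity of the associated graded is then immediate, with none of the $L\to\infty$ filtration bookkeeping you flag as the hard part. (The paper also takes $U$ to be the span of all $\mathds W^{a(r)}_{b,-s}|0\rangle$ with $s\ge 1$ rather than only the top modes; this difference is harmless, since translates are absorbed by the relation \eqref{eqn: U(V) translation relation} in the restricted mode algebra.)
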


\begin{proof}
We take $H$ to be the negative of grading, then $H$ is a Hamiltonian since $H=L_0$ when $\epsilon_3$ is invertible (so that stress-energy operator $T(z)$ is defined) and $\mathsf W^{(K)}_{\infty}$ is flat over the $\mathbb C[\epsilon_3]$. Next we take the filtration $F$ to be the one in the Definition \ref{def: vector space of W(infinity)}, and take $U$ to be the $\mathbb C[\epsilon_1,\epsilon_2,\mathsf c]$-span of $\mathds W^{a(r)}_{b,-s}|0\rangle$ where $1\le a,b\le K$ and $s\in \mathbb Z_{\ge 1}$. Then $\mathrm{gr}_F\mathsf W^{(K)}_{\infty}$ is obviously commutative, and the image $\overline U$ in $\mathrm{gr}_F\mathsf W^{(K)}_{\infty}$ generate a PBW basis of $\mathrm{gr}_F\mathsf W^{(K)}_{\infty}$ by the construction.
\end{proof}

Another feature of the integral form $\mathsf W^{(K)}_{\infty}$ is that its OPEs modulo $\mathsf c$ only involve nontrivial operators, i.e. $\mathds W^{a(r)}_{s,n}\mathds W^{c(s)}_{d,m}|0\rangle \in \mathsf W_{<0}\otimes\mathbb C[\epsilon_1,\epsilon_2]$ modulo $\mathsf c$, where $\mathsf W_{<0}=\bigoplus_{d=1}^{\infty}\mathsf W_{-d}$. This allows us to define an augmentation
\begin{align}\label{eqn: augmentation of W(infinity)}
    \mathfrak C_{\mathsf W}: \mathsf W^{(K)}_{\infty}\to \mathbb C[\epsilon_1,\epsilon_2]
\end{align}
where the right hand side is treated as the trivial vertex algebra over the base ring $\mathbb C[\epsilon_1,\epsilon_2]$. $\mathfrak C_{\mathsf W}$ simply maps all nontrivial generators $\mathds W^{a(s)}_{b}(z)$ to zero, and also maps $\mathsf c$ to zero. By the above discussions $\mathfrak C_{\mathsf W}$ is a vertex algebra map. Moreover, $\mathfrak C_{\mathsf W}$ is a counit for the coproduct $\Delta_{\mathsf W}$, in fact it is straightforward to see from \eqref{eqn: W(infinity) coproduct_integral form} that
\begin{equation}
(\mathfrak C_{\mathsf W}\otimes 1)\circ \Delta_{\mathsf W}=\mathrm{id}=( 1\otimes \mathfrak C_{\mathsf W})\circ \Delta_{\mathsf W}.
\end{equation}

The maps $\Psi_L:\mathsf A^{(K)}\to \mathfrak{U}(\mathcal W^{(K)}_L)[\bar\alpha^{-1}]$ can be promoted in a unique way to a map $$\Psi_{\infty}:\mathsf A^{(K)}\to \mathfrak{U}(\mathsf W^{(K)}_\infty)[\epsilon_2^{-1}]$$ such that $\Psi_L=\pi_L\circ \Psi_{\infty}$:
\begin{equation}\label{eqn: map to W(infinity)}
\boxed{
\begin{aligned}
    &\Psi_{\infty}(\mathsf T_{0,n}(E^a_b))=\mathds W^{a(1)}_{b,n},\quad \Psi_{\infty}(\mathsf t_{0,n})=\frac{1}{\epsilon_2}\mathds W^{a(1)}_{a,n},\\
    &\Psi_\infty(\mathsf T_{1,0}(E^a_b))= \epsilon_1\sum_{m\ge 0}\mathds W^{a(1)}_{c,-m-1}\mathds W^{c(1)}_{b,m}- \mathds W^{a(2)}_{b,-1}\\
    &\Psi_{\infty}(\mathsf t_{2,0})=\frac{1}{\epsilon_2}\left(\mathds V_{-2}-\epsilon_1\epsilon_2\sum_{n=1}^{\infty}n\:\mathds W^{a(1)}_{b,-n-1}\mathds W^{b(1)}_{a,n-1}-\epsilon_1^2\sum_{n=1}^{\infty}n\: \mathds W^{a(1)}_{a,-n-1}\mathds W^{b(1)}_{b,n-1}\right).
\end{aligned}}
\end{equation}
Here $\mathds V_{-2}$ is a mode of quasi-primary field $\mathds V(z)=\sum_{n\in \mathbb Z}\mathds V_{n}z^{-n-3}$ defined as
\begin{equation}
\begin{split}
\mathds V(z):=&\frac{\epsilon_1^2}{6}\left(:\mathds W^{a(1)}_b(z)\mathds W^{b(1)}_c(z)\mathds W^{c(1)}_a(z):+:\mathds W^{b(1)}_a(z)\mathds W^{c(1)}_b(z)\mathds W^{a(1)}_c(z):\right)\\
&+\mathds W^{a(3)}_a(z)-\epsilon_1:\mathds W^{a(1)}_b(z)\mathds W^{b(2)}_a(z):.
\end{split}
\end{equation}
Note that the image $\Psi_\infty(\mathsf A^{(K)})$ is contained in the positive restricted mode algebra $U_+(\mathsf W^{(K)}_\infty)[\epsilon_2^{-1}]$. It also follows from \eqref{eqn: map to W(infinity)} that the action of $\Psi_\infty(\mathsf A^{(K)})$ on the vacuum $|0\rangle$ factors through the augmentation $\mathfrak C_{\mathsf A}:\mathsf A^{(K)}\to \mathbb C[\epsilon_1,\epsilon_2]$ such that all generators $\mathsf T_{m,n}(X)$ and $\mathsf t_{m,n}$ are mapped to zero.

\smallskip Similarly the maps $\Delta_{L}:\mathsf A^{(K)}\to\mathsf A^{(K)}\widetilde\otimes \mathfrak U(\mathcal W^{(K)}_L)[\epsilon_2^{-1}]$ can be promoted in a unique way to a map 
\begin{align*}
    \Delta_{\infty}:\mathsf A^{(K)}\to\mathsf A^{(K)}\widetilde\otimes \mathfrak U(\mathsf W^{(K)}_\infty)[\epsilon_2^{-1}]
\end{align*}
such that $\Delta_{L}=(1\otimes \pi_L)\circ \Delta_{\infty}$:
\begin{equation}\label{eqn: AW(infinity) coproduct}
\boxed{
\begin{aligned}
&\Delta_\infty(\mathsf T_{0,n}(E^a_b))=\square( \mathsf T_{0,n}(E^a_b),\\
&\Delta_\infty(\mathsf t_{1,n})=\square(\mathsf t_{1,n})+\epsilon_1\epsilon_3 n\:\mathsf t_{0,n-1}\otimes \mathsf c,\\
&\Delta_\infty(\mathsf T_{1,0}(E^a_b))=\square(\mathsf T_{1,0}(E^a_b))+\epsilon_1\sum_{m=0}^{\infty}\left(\mathsf T_{0,m}(E^c_b)\otimes \mathds W^{a(1)}_{c,-m-1}-\mathsf T_{0,m}(E^a_c)\otimes \mathds W^{c(1)}_{b,-m-1}\right),\\
&\Delta_\infty(\mathsf t_{2,0})=\square(\mathsf t_{2,0})-2\epsilon_1\sum_{n=1}^{\infty}n \left(\mathsf T_{0,n-1}(E^a_b)\otimes\mathds W^{b(1)}_{a,-n-1}+\epsilon_1\mathsf t_{0,n-1}\otimes\mathds W^{a(1)}_{a,-n-1}\right),
\end{aligned}}
\end{equation}
where $\square(x):=x\otimes 1+1\otimes \Psi_\infty(x)$. Comparing \eqref{eqn: AW(infinity) coproduct} with \eqref{eqn: W(infinity) coproduct_integral form}, we observe the following.

\begin{proposition}\label{prop: AW coproduct compatible with WW coproduct}
$\Delta_{\infty}$ is compatible with $\Delta_{\mathsf W}:\mathfrak U(\mathsf W^{(K)}_\infty)\to \mathfrak U(\mathsf W^{(K)}_\infty\otimes \mathsf W^{(K)}_\infty)$, i.e.
\begin{align*}
    (\Psi_{\infty}\otimes 1)\circ\Delta_{\infty}=\Delta_{\mathsf W}\circ\Psi_{\infty}.
\end{align*}
\end{proposition}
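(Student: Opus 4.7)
The plan is to reduce the infinite-$L$ identity to a finite-$L$ identity by composing with the truncation maps $\pi_L$, and to derive the finite-$L$ statement from the coassociativity of the mixed coproducts $\Delta_L$ that was already established as a consequence of Theorem \ref{thm: mixed coproduct}. Both $(\Psi_{\infty}\otimes 1)\circ\Delta_{\infty}$ and $\Delta_{\mathsf W}\circ\Psi_{\infty}$ are $\mathbb C[\epsilon_1,\epsilon_2]$-algebra homomorphisms $\mathsf A^{(K)}\to\mathfrak U(\mathsf W^{(K)}_{\infty}\otimes \mathsf W^{(K)}_{\infty})[\epsilon_2^{-1}]$, so it is enough to show that they agree after every truncation $(\pi_{L_1}\otimes\pi_{L_2})$ and then to detect equality from these truncations.

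Concretely, the three triangles $\pi_{L}\circ\Psi_{\infty}=\Psi_{L}$, $(1\otimes\pi_{L})\circ\Delta_{\infty}=\Delta_{L}$, and $(\pi_{L_1}\otimes\pi_{L_2})\circ\Delta_{\mathsf W}=\Delta_{L_1,L_2}\circ\pi_{L_1+L_2}$ commute by Theorem \ref{thm: matrix extended W(infty)} and the constructions of $\Psi_\infty$ and $\Delta_\infty$ as lifts of the finite-$L$ maps. Applying $(\pi_{L_1}\otimes\pi_{L_2})$ to the desired identity therefore reduces the claim, for each pair $(L_1,L_2)\in\mathbb Z_{>0}^{2}$, to the finite-$L$ statement
\begin{equation*}
(\Psi_{L_1}\otimes 1)\circ\Delta_{L_2}=\Delta_{L_1,L_2}\circ\Psi_{L_1+L_2}.
\end{equation*}
I would prove this by starting from the coassociativity $(\Delta_{L_1}\otimes 1)\circ\Delta_{L_2}=(1\otimes\Delta_{L_1,L_2})\circ\Delta_{L_1+L_2}$ (the corollary to Theorem \ref{thm: mixed coproduct}), applying $\mathfrak C_{\mathsf A}\otimes 1\otimes 1$ to the outermost $\mathsf A^{(K)}$-slot, and collapsing each side via the counit relation $(\mathfrak C_{\mathsf A}\otimes 1)\circ\Delta_L=\Psi_L$; the left-hand side then becomes $(\Psi_{L_1}\otimes 1)\circ\Delta_{L_2}$, while the right-hand side becomes $\Delta_{L_1,L_2}\circ\Psi_{L_1+L_2}$ because $\mathfrak C_{\mathsf A}$ does not interact with the $(1\otimes\Delta_{L_1,L_2})$ slot.

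The final step is to pass from agreement under all $(\pi_{L_1}\otimes\pi_{L_2})$ back to agreement as elements of $\mathfrak U(\mathsf W^{(K)}_{\infty}\otimes\mathsf W^{(K)}_{\infty})[\epsilon_2^{-1}]$. For each fixed $x\in\mathsf A^{(K)}$, the explicit formulas \eqref{eqn: AW(infinity) coproduct} and \eqref{eqn: W(infinity) coproduct_integral form} show that the graded components of both $(\Psi_{\infty}\otimes 1)\Delta_{\infty}(x)$ and $\Delta_{\mathsf W}\Psi_{\infty}(x)$ lie in $\mathbf W\otimes\mathbf W\otimes\mathbb C[\epsilon_1,\epsilon_2,\lambda_1,\lambda_2]$, where $\lambda_i$ is the central parameter of the $i$-th factor. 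A bidegree-wise, two-variable analogue of Lemma \ref{lem: eventually rational} then forces the two elements to coincide, since they have the same value at all sufficiently large integer pairs $(L_1,L_2)$. The step I expect to require the most care is exactly this two-variable separation property: proving it amounts to applying Lemmas \ref{lem: nondegenerate pairing_finite L} and \ref{lem: nondegenerate pairing_global} to each tensor factor after observing that correlators in $\mathsf W^{(K)}_{\infty}\otimes\mathsf W^{(K)}_{\infty}$ factor as products of one-variable correlators with independent $\lambda$-parameters, but rigorously assembling these ingredients inside the completed mode algebra is the technical heart of the argument.
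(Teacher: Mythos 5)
Your argument is correct, but it is a genuinely different route from the paper's. The paper treats this proposition as an immediate observation: since both sides are algebra homomorphisms out of $\mathsf A^{(K)}$, it suffices to compare them on the generators $\mathsf T_{0,n}(E^a_b)$, $\mathsf t_{1,n}$, $\mathsf T_{1,0}(E^a_b)$, $\mathsf t_{2,0}$, and the explicit formula \eqref{eqn: AW(infinity) coproduct} for $\Delta_\infty$ visibly matches what \eqref{eqn: W(infinity) coproduct_integral form} produces when $\Delta_{\mathsf W}$ is applied to the images $\mathds W^{(1)}$, $\mathds W^{(2)}$, $\mathds W^{(3)}$ and their normally ordered products under $\Psi_\infty$ — the text literally says ``comparing \eqref{eqn: AW(infinity) coproduct} with \eqref{eqn: W(infinity) coproduct_integral form}, we observe the following.'' You instead deduce the finite-$L$ identity $(\Psi_{L_1}\otimes 1)\circ\Delta_{L_2}=\Delta_{L_1,L_2}\circ\Psi_{L_1+L_2}$ from the coassociativity $(\Delta_{L_1}\otimes 1)\circ\Delta_{L_2}=(1\otimes\Delta_{L_1,L_2})\circ\Delta_{L_1+L_2}$ by hitting the outer $\mathsf A^{(K)}$-slot with $\mathfrak C_{\mathsf A}$ and using $(\mathfrak C_{\mathsf A}\otimes 1)\circ\Delta_L=\Psi_L$; that step is clean and correct, and it is not circular since the corollary you invoke is established independently of this proposition (the genuinely infinite-$L$ coassociativity of $\Delta_{\mathsf A,\mathsf Y}$ is proven \emph{later}, using this proposition). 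Your approach buys structural robustness — it never looks at the explicit cross terms — at the price of the final interpolation step. On that step, you are over-engineering slightly: you do not need any factorization of two-variable correlators. For a fixed generator, each graded component of the difference of the two sides lies in $\mathbf W\otimes\mathbf W\otimes\mathbb C[\epsilon_1,\epsilon_2,\lambda_1,\lambda_2]$ and is supported in $F_N\mathbf W\otimes F_N\mathbf W$ for some fixed $N$; for $L_1,L_2\ge N$ the map $\pi_{L_1}\otimes\pi_{L_2}$ restricted to that subspace is just evaluation at $(\lambda_1,\lambda_2)=(L_1,L_2)$ followed by an injection, so vanishing under all large truncations forces the polynomial coefficients to vanish at all large integer pairs and hence identically. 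This is exactly the ``reduction to sufficiently large $L$'' device already used in the proof of Proposition \ref{prop: state-operator map on W(infinity)}, applied termwise in the completion, and it closes your argument without appeal to Lemmas \ref{lem: nondegenerate pairing_finite L} or \ref{lem: nondegenerate pairing_global}.
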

By the Lemma \ref{lem: restricted mode alg of W(infinity)} and our previous results \eqref{eqn: image of t[2,0] in restricted modes}, the image of $\Psi_{\infty}$ lies in the restricted mode algebra $U(\mathsf W^{(K)}_\infty)[\epsilon_2^{-1}]$, and the image of $\Delta_{\infty}$ lies in the subalgebra $\mathsf A^{(K)}\widetilde\otimes U(\mathsf W^{(K)}_\infty)[\epsilon_2^{-1}]$. The uniform-in-$L$ version of Proposition \ref{prop: image of D^K} is the following:
\begin{proposition}
$\Psi_\infty({\mathsf D}^{(K)})\subset {U}(\mathsf W^{(K)}_\infty)$ and $\Delta_\infty( {\mathsf D}^{(K)})\subset \mathsf D^{(K)}\widetilde\otimes {U}(\mathsf W^{(K)}_\infty)$.
\end{proposition}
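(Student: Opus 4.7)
The plan is to mirror the argument for the finite-$L$ Proposition \ref{prop: image of D^K}, taking care that the restricted-mode condition and $\epsilon_2$-integrality are preserved uniformly in $L$. I will prove the $\Delta_\infty$ statement first; the $\Psi_\infty$ statement then follows by applying $\mathfrak C_{\mathsf A}\otimes 1$, since $(\mathfrak C_{\mathsf A}\otimes 1)\circ\Delta_\infty=\Psi_\infty$ (checked on generators via \eqref{eqn: AW(infinity) coproduct} and \eqref{eqn: map to W(infinity)}) and the augmentation $\mathfrak C_{\mathsf A}$ sends all nontrivial elements of $\mathsf D^{(K)}$ to zero.

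Since $\mathsf D^{(K)}$ is generated as an algebra by iterated adjoint actions of $\mathsf t_{2,0}$ on $\{\mathsf T_{0,n}(X)\,:\,X\in\gl_K,\,n\in\mathbb N\}$, and since $\Delta_\infty(\mathsf T_{0,n}(E^a_b))=\mathsf T_{0,n}(E^a_b)\otimes 1+1\otimes\mathds W^{a(1)}_{b,n}$ manifestly lies in $\mathsf D^{(K)}\widetilde\otimes U(\mathsf W^{(K)}_\infty)$, it suffices to show that $\mathrm{ad}_{\Delta_\infty(\mathsf t_{2,0})}$ preserves this subalgebra. I decompose $\Delta_\infty(\mathsf t_{2,0})=\mathsf t_{2,0}\otimes 1+1\otimes\Psi_\infty(\mathsf t_{2,0})+(\text{cross terms})$ from \eqref{eqn: AW(infinity) coproduct}. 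The summand $\mathsf t_{2,0}\otimes 1$ preserves the subalgebra by Lemma \ref{lem: t[m,n] preserves D^K}. Among the cross terms, the summands $\mathsf T_{0,n-1}(E^a_b)\otimes\mathds W^{b(1)}_{a,-n-1}$ clearly preserve it; the only delicate one is
\[
-2\epsilon_1^2 n\,\mathsf t_{0,n-1}\otimes\mathds W^{a(1)}_{a,-n-1}=-2\epsilon_1 n\,\mathsf T_{0,n-1}(1)\otimes\tfrac{\epsilon_1}{\epsilon_2}\mathds W^{a(1)}_{a,-n-1},
\]
whose commutator with $A\otimes B\in\mathsf D^{(K)}\widetilde\otimes U(\mathsf W^{(K)}_\infty)$ splits via Leibniz into $[\mathsf t_{0,n-1},A]\otimes\epsilon_1\mathds W^{a(1)}_{a,-n-1}B\;+\;A\,\mathsf T_{0,n-1}(1)\otimes[\tfrac{\epsilon_1}{\epsilon_2}\mathds W^{a(1)}_{a,-n-1},B]$. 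The first summand lies in $\mathsf D^{(K)}\widetilde\otimes U(\mathsf W^{(K)}_\infty)$ by Lemma \ref{lem: t[m,n] preserves D^K}; for the second, direct inspection of the OPE \eqref{eqn:  dsW1 dsWn OPE} shows that after contracting $a=b$, every non-central singular term acquires an overall factor $\epsilon_3+K\epsilon_1=-\epsilon_2$, while the central term already carries an explicit $\epsilon_2$, so $\mathrm{ad}_{\frac{1}{\epsilon_2}\mathds W^{a(1)}_{a,-n-1}}$ preserves $U(\mathsf W^{(K)}_\infty)$.

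The remaining and principal obstacle is to establish that $\mathrm{ad}_{1\otimes\Psi_\infty(\mathsf t_{2,0})}$ preserves the subalgebra despite the explicit $\epsilon_2^{-1}$ factor appearing in $\Psi_\infty(\mathsf t_{2,0})$ via \eqref{eqn: map to W(infinity)}. The cleanest route is to compute the OPE of $\Psi_\infty(\mathsf t_{2,0})(z)$ with each strong generator $\mathds W^{c(s)}_d(w)$ — the uniform-in-$L$ analog of equation \eqref{eqn: [t[2,0],J(x)]} — and verify term-by-term that each singular contribution is divisible by $\epsilon_2$, cancelling the overall $\epsilon_2^{-1}$. Equivalently, one may reduce to finite $L$: since $\pi_L\circ\Psi_\infty=\Psi_L$ and the finite-$L$ Proposition \ref{prop: image of D^K} guarantees that $[\Psi_L(\mathsf t_{2,0}),\pi_L(\mathds W^{c(s)}_{d,m})]$ lies in $\mathfrak U(\mathcal W^{(K)}_L)$ without inverting $\bar\alpha$, the polynomial-in-$\lambda$ dependence of the OPE structure constants (Lemma \ref{lem: integral basis}) allows lifting the $\epsilon_2$-integrality uniformly to $U(\mathsf W^{(K)}_\infty)$. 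Once this is shown, the first tensor factor is unchanged under $\mathrm{ad}_{1\otimes\Psi_\infty(\mathsf t_{2,0})}$ and remains in $\mathsf D^{(K)}$, completing the proof of the $\Delta_\infty$ statement and hence, via the augmentation, the $\Psi_\infty$ statement.
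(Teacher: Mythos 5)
Your overall architecture is the right one and matches the paper's in spirit: reduce everything to the statement that $\mathrm{ad}_{\Psi_\infty(\mathsf t_{2,0})}$ preserves $U(\mathsf W^{(K)}_\infty)$, handle the cross terms in $\Delta_\infty(\mathsf t_{2,0})$ as in the finite-$L$ case, and deduce the $\Psi_\infty$ statement from the $\Delta_\infty$ one by applying the augmentation. Your treatment of the cross term via the trace of the $\mathds W^{(1)}\mathds W^{(n)}$ OPE (contracting $a=b$ produces the factor $\epsilon_3+K\epsilon_1=-\epsilon_2$) is a correct and even somewhat more self-contained justification than the paper's, which instead uses that $\frac{1}{\bar\alpha}[W^{a(1)}_{a,m},J^{b[i]}_{c,n}]$ is a scalar.

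The gap is in the central step. Neither of your two proposed routes is actually executed, and the second one, as stated, skips the one genuinely new ingredient needed at $L=\infty$. The finite-$L$ Proposition \ref{prop: image of D^K} gives you $\bar\alpha$-integrality of $[\Psi_L(\mathsf t_{2,0}),-]$ in the $W^{(r)}$-basis over $\mathbb C[\alpha]$, but what you need is $\epsilon_2$-integrality in the $\mathds W^{(r)}=\epsilon_1^{r-1}W^{(r)}$-basis over $\mathbb C[\epsilon_1,\epsilon_2,\mathsf c]$. These are not the same condition: rewriting a monomial $W^{(r_1)}\cdots W^{(r_k)}$ in the $\mathds W$-basis introduces the factor $\epsilon_1^{k-\sum r_i}$, so ``no $\bar\alpha^{-1}$'' over $\mathbb C[\alpha,\lambda]$ does not by itself control the powers of $\epsilon_1$, and $\epsilon_2=\epsilon_1\bar\alpha$. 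Citing Lemma \ref{lem: integral basis} gives you that the commutator lies in $U(\mathsf W^{(K)}_\infty)[\epsilon_2^{-1}]$ with polynomial coefficients, but the passage from ``divisible by $\bar\alpha$ after inverting $\epsilon_1$'' to ``divisible by $\epsilon_2$'' requires an extra argument. The paper supplies it: one shows $\mathrm{ad}_{\epsilon_2\Psi_\infty(\mathsf t_{2,0})}\equiv 0\pmod{\epsilon_2}$ by observing that $U(\mathsf W^{(K)}_\infty/(\epsilon_2))$ is torsion-free over $\mathbb C[\epsilon_1]$, so it suffices to check after localizing $\epsilon_1$, where $\mathsf W^{(K)}_\infty[\epsilon_1^{-1}]/(\epsilon_2)\cong\mathcal W^{(K)}_\infty/(\bar\alpha)$ and the claim reduces to the finite-$L$ statement. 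You should either insert this torsion-freeness/localization step explicitly, or carry out your first route (the term-by-term $\epsilon_2$-divisibility of the OPE of $\epsilon_2\Psi_\infty(\mathsf t_{2,0})$ with each strong generator, plus a derivation argument extending this to arbitrary elements $\mathcal O(A_1,\dots,A_n;f)$ of the restricted mode algebra); as written, the key claim is asserted rather than proved.
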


\begin{proof}
It suffices to show that $\mathrm{ad}_{\Psi_{\infty}(\mathsf t_{2,0})}({U}(\mathsf W^{(K)}_\infty))\subset {U}(\mathsf W^{(K)}_\infty)$, equivalently $\mathrm{ad}_{\epsilon_2\Psi_{\infty}(\mathsf t_{2,0})}({U}(\mathsf W^{(K)}_\infty))\subset \epsilon_2\cdot{U}(\mathsf W^{(K)}_\infty)$. So we need to show that $\mathrm{ad}_{\epsilon_2\Psi_{\infty}(\mathsf t_{2,0})}\equiv 0\pmod{\epsilon_2}$. Since $ {U}(\mathsf W^{(K)}_\infty/(\epsilon_2))$ is torsion-free over $\mathbb C[\epsilon_1]$, so it is enough to localize $\epsilon_1$ and show that $\mathrm{ad}_{\epsilon_2\Psi_{\infty}(\mathsf t_{2,0})}\equiv 0\pmod{\epsilon_2}$ on $\mathbb C[\epsilon_1^\pm]$. Notice that $\mathsf W^{(K)}_\infty[\epsilon_1^{-1}]/(\epsilon_2)$ is isomorphic to $\mathcal W^{(K)}_{\infty}/(\bar\alpha)$, thus we only need to show that $\mathrm{ad}_{\bar\alpha\Psi_{\infty}(\mathsf t_{2,0})}\equiv 0\pmod{\bar\alpha}$ when acting on ${U}(\mathcal W^{(K)}_\infty)$, which follows from Proposition \ref{prop: image of D^K}.
\end{proof}

Another remark is that the constants $C^{ac(r,s)}_{bd}(\epsilon_1,\epsilon_2,\lambda)$ in the proof of Lemma \ref{lem: integral basis} can be determined completely by the following formula:
\begin{equation}
\begin{split}
(\epsilon_3\partial_z)^{\lambda}(\epsilon_3\partial_w)^{\lambda}+&\sum_{(r,s)\in \mathbb N^2\backslash 0}(-1)^{r+s}(\epsilon_3\partial_z)^{\lambda-r}(\epsilon_3\partial_w)^{\lambda-s}\cdot \frac{C^{ac(r,s)}_{bd}(\epsilon_1,\epsilon_2,\lambda)}{(z-w)^{r+s}}E^a_b\otimes E^c_d\\
&=\left(\epsilon_3^2\partial_z\partial_w+\frac{\epsilon_1\epsilon_3E^e_f\otimes E^f_e+\epsilon_1^2}{(z-w)^2}\right)^{\lambda},
\end{split}
\end{equation}
where we treat both sides as $\gl_K^{\otimes 2}$-valued pseudodifferential symbols. This formula is derived from Corollary \ref{cor: correlator of Miura}. Let us assume this result for now and defer the proof until Section \ref{sec: Miura Operators as Intertwiners}, then direct computation shows that $C^{ac(r,s)}_{bd}(\epsilon_1,\epsilon_2,\lambda)$ can be written as 
$$C^{ac(r,s)}_{bd}(\epsilon_1,\epsilon_2,\lambda)=(-\epsilon_1)^{r+s}C^{0,0}_{r,s}(\epsilon_3/\epsilon_1,\lambda)\delta^a_b\delta^c_d+(-\epsilon_1)^{r+s}\tilde C^{0,0}_{r,s}(\epsilon_3/\epsilon_1,\lambda)\delta^a_d\delta^b_c,$$
where $C^{0,0}_{r,s}$ and $\tilde C^{0,0}_{r,s}$ are given by \cite[(3.9), (3.10)]{eberhardt2019matrix}. Moreover the $\mathcal O(\epsilon_1\lambda)$-term in $C^{ac(r,s)}_{bd}(\epsilon_1,\epsilon_2,\lambda)$ is
\begin{equation}
\begin{split}
(-1)^{s+1}\epsilon_1\epsilon_3^{s+r-1} \lambda(r+s-2)!\setlength\arraycolsep{1pt}
{}_2 F_1\left[\begin{matrix}&1-r,1-s\\&2-r-s\end{matrix};1\right]\delta^a_d\delta^c_b.
\end{split}
\end{equation}
On the other hand, the $\mathcal O(\epsilon_1\lambda^0)$-term in the $\tilde W$-linear terms of the $\tilde W^{a(r)}_b(z)\tilde W^{c(s)}_d(w)$ OPE is $$\epsilon_1\frac{\delta^c_b\tilde W^{a(r+s-1)}_d(w)-\delta^a_d\tilde W^{c(r+s-1)}_b(w)}{z-w}+\mathcal{O}(\epsilon_1\epsilon_3)\text{-terms}.$$ Note that the $\mathcal{O}(\epsilon_1\epsilon_3)$-terms in the above equation only involve $\tilde W^{(k)}$ for $k<r+s-1$. Combine the above OPE computation with the Remark \ref{rmk: E1 to 0 limit}, we see that
\begin{lemma}\label{lem: E1 to 0 limit}
$\mathfrak{U}(\mathsf W^{(K)}_\infty)/(\epsilon_1=0)$ is a completion of the universal enveloping algebra of a Lie algebra $\mathscr O_{\epsilon_2,\mathsf c}(\mathbb C\times \mathbb C^{\times})\otimes\gl_K$, such that $\mathscr O_{0,0}(\mathbb C\times \mathbb C^{\times})\otimes\gl_K$ is the tensor product of the function ring on $\mathbb C\times \mathbb C^{\times}$ and the matrix algebra $\gl_K$. 
\end{lemma}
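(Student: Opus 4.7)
The idea is to track the $\epsilon_1$-divisibility of structure constants in two natural bases and to isolate the terms surviving at $\epsilon_1=0$. First I would revisit the proof of Lemma \ref{lem: integral basis}: in the scaled basis $\tilde W^{a(r)}_b=\epsilon_1^r W^{a(r)}_b=\epsilon_1\mathds W^{a(r)}_b$, the OPE structure constants of $\mathsf W^{(K)}_\infty$ are polynomial in $\epsilon_1,\epsilon_2,\lambda$ and are divisible by $\epsilon_1$ (the vertex algebra becomes commutative at $\epsilon_1=0$). Rescaling back to the $\mathds W$ basis, a normally ordered monomial of order $n$ in $\mathds W$ appearing in a $\mathds W\mathds W$ OPE acquires a factor $\epsilon_1^{n-1}$ relative to the polynomial coefficient in $\epsilon_1,\epsilon_2,\mathsf c$: dividing the $\tilde W\tilde W$ OPE by $\epsilon_1^2$ produces the $\mathds W\mathds W$ OPE, while a normal-ordered product of $n$ factors $\tilde W$ equals $\epsilon_1^n$ times the corresponding product of $\mathds W$'s. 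Hence, modulo $\epsilon_1$, only $n\leq 1$ contributions survive: the scalar leading poles and the $\mathds W$-linear terms. This is what Remark \ref{rmk: E1 to 0 limit} asserts for the leading pole, and the same bookkeeping extends to every order of the OPE.

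Next, I would extract the resulting Lie algebra structure. Modulo $\epsilon_1$ the commutator of any two modes satisfies
\begin{equation*}
[\mathds W^{a(r)}_{b,k},\mathds W^{c(s)}_{d,l}]\;\in\;\mathbb C[\epsilon_2,\mathsf c]\cdot \mathbf 1\;+\;\sum_{r',n'}\mathbb C[\epsilon_2,\mathsf c]\cdot \mathds W^{a'(r')}_{b',n'},
\end{equation*}
with only finitely many nonzero terms on the right, and the Jacobi identity follows from the Borcherds identity in $\mathsf W^{(K)}_\infty$. This defines a $\mathbb C[\epsilon_2,\mathsf c]$-Lie algebra $\mathfrak g$ spanned by the symbols $\mathds W^{a(r)}_{b,n}$ (for $1\le a,b\le K$, $r\ge 1$, $n\in \mathbb Z$) together with a central $\mathbf 1$. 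The $\mathbb C[\epsilon_2,\mathsf c]$-linear map
\begin{equation*}
\Phi:\; E^a_b\otimes w^{r-1}z^n\longmapsto \mathds W^{a(r)}_{b,n},\qquad 1\mapsto \mathbf 1,
\end{equation*}
identifies the underlying module of $\mathfrak g$ with $\gl_K\otimes\mathbb C[w,z^{\pm 1}]\otimes \mathbb C[\epsilon_2,\mathsf c]\oplus \mathbb C[\epsilon_2,\mathsf c]\mathbf 1$. At $\epsilon_2=\mathsf c=0$ the leading central pole drops out, and the $\mathds W$-linear part of the OPE \eqref{eqn:  dsW1 dsWn OPE} (dominated by the regular piece $(\delta^c_b \mathds W^{a(r+s-1)}_d-\delta^a_d \mathds W^{c(r+s-1)}_b)/(z-w)$) produces exactly the commutator $(\delta^c_b E^a_d-\delta^a_d E^c_b)\otimes w^{r+s-2}z^{k+l}$, so $\Phi$ becomes a Lie algebra isomorphism onto $\gl_K\otimes\mathscr O(\mathbb C\times\mathbb C^\times)$. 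One then defines $\mathscr O_{\epsilon_2,\mathsf c}(\mathbb C\times\mathbb C^\times)\otimes\gl_K:=\mathfrak g$ as the $\mathbb C[\epsilon_2,\mathsf c]$-deformation so obtained.

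Finally I would invoke PBW. By Lemma \ref{lem: restricted mode alg of W(infinity)}, $\mathsf W^{(K)}_\infty$ admits a PBW-type $\mathbb C[\epsilon_1,\epsilon_2,\mathsf c]$-basis built from ordered monomials in the negative modes $\mathds W^{a(r)}_{b,n}$, and this lifts to a PBW basis of the full mode algebra $\mathfrak U(\mathsf W^{(K)}_\infty)$ after choosing any refinement of the obvious ordering on all modes. Reducing mod $\epsilon_1$, this PBW basis becomes literally a PBW basis of the completed universal enveloping algebra $\widehat U(\mathfrak g)$, the completion being that induced by the natural $\mathbb Z$-grading, i.e.\ allowing infinite sums whose summands have gradings tending to $-\infty$ and which are locally finite when paired with the grading. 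Comparing the two PBW decompositions yields the desired identification $\mathfrak U(\mathsf W^{(K)}_\infty)/(\epsilon_1)\cong \widehat U(\mathscr O_{\epsilon_2,\mathsf c}(\mathbb C\times\mathbb C^\times)\otimes\gl_K)$.

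\textbf{Main obstacle.} The principal technical point is the bookkeeping that rules out nonlinear terms mod $\epsilon_1$: one must be sure that the substitution $\lambda=\epsilon_1\mathsf c$ in the polynomial-in-$\lambda$ structure constants of the $\tilde W$ basis does not conspire to cancel a power of $\epsilon_1$ and resurrect, say, a quadratic term at order $\epsilon_1^0$ in the $\mathds W$ basis. Lemma \ref{lem: integral basis} already ensures the $\mathds W$-OPE coefficients are genuine polynomials in $\epsilon_1,\epsilon_2,\mathsf c$, so the above $\epsilon_1^{n-1}$ counting is conclusive, but this argument deserves to be carried out carefully for the central pole (where $\mathsf c$ and $\epsilon_1$ mingle nontrivially, cf.\ \eqref{eqn:  dsW1 dsWn OPE}). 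A secondary issue is pinning down the precise completion topology on $\widehat U(\mathfrak g)$; the grading-based completion used for $\mathfrak U$ of any graded vertex algebra (Appendix \ref{sec: Restricted Mode Algebra}) is the natural choice and matches on both sides.
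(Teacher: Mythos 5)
Your proposal is correct and follows essentially the same route as the paper: the $\epsilon_1^{\,n-1}$ bookkeeping in passing from the $\tilde W$ to the $\mathds W$ basis is precisely the content of Remark \ref{rmk: E1 to 0 limit}, and your identification of the $\epsilon_2=\mathsf c=0$ specialization with $\gl_K\otimes\mathscr O(\mathbb C\times\mathbb C^{\times})$ agrees, after the coordinate change $x=z$, $-y=w/z$, with the paper's \eqref{eqn: mode algebra of W(infinity) at degenerate limit}. The one step you assert rather than argue is that at $\epsilon_2=\mathsf c=0$ the linear part of the general $\mathds W^{(r)}\mathds W^{(s)}$ OPE reduces to the simple-pole term alone: the explicit formula you cite covers only $r=1$, and for general $r,s$ the higher-order poles do carry linear terms that survive at $\epsilon_1=0$ (they are $\mathcal O(\epsilon_2)$ and only die when $\epsilon_2=0$). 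The paper disposes of these by observing that the coefficient of $\tilde W^{(k)}$ in the order-$(r+s-k)$ pole, being divisible by $\epsilon_1$ and homogeneous of degree $r+s-k\ge 2$ in the $\epsilon_i$ (with $\lambda$ of degree zero), lies in the ideal $(\epsilon_1^2,\epsilon_1\epsilon_3,\epsilon_1\lambda)$, so that after dividing by $\epsilon_1$ and setting $\epsilon_1=0$, $\lambda=\epsilon_1\mathsf c=0$ it is a multiple of $\epsilon_3=-\epsilon_2$; this is the same weight counting you already use for the nonlinear terms, so supplying it closes the argument.
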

The identification between $\mathfrak{U}(\mathsf W^{(K)}_\infty)/(\epsilon_1=\epsilon_2=\mathsf c=0)$ and $U(\mathscr O(\mathbb C\times \mathbb C^{\times})\otimes\gl_K)$ is that 
\begin{align}\label{eqn: mode algebra of W(infinity) at degenerate limit}
    \mathds W^{a(s)}_{b,n}\mapsto  x^{n+s-1}(-y)^{s-1}E^a_b,
\end{align}
where $x$ is the coordinate on $\mathbb C^{\times}$ and $y$ is the coordinate of $\mathbb C$ and $E^a_b\in \gl_K$ is the elementary matrix. We will figure out the structure of the Lie algebra $\mathscr O_{\epsilon_2,\mathsf c}(\mathbb C\times \mathbb C^{\times})\otimes\gl_K$ shortly in the next subsection.

It follows from \eqref{eqn: map to W(infinity)} that $\Psi_{\infty}(\mathsf t_{2,0})=\frac{1}{\epsilon_2}\mathds W^{a(3)}_{a,-2}$, and $\Psi_{\infty}(\mathsf T_{0,n}(E^a_b))=\mathds W^{a(1)}_{b,n}$. The first order pole in the OPE between $\mathds W^{a(3)}_a$ and $\mathds W^{c(s)}_d$ in $\mathsf W^{(K)}_\infty/(\epsilon_1=\mathsf c=0)$ can be computed:
\begin{equation}\label{eqn: W^3W^s OPE modulo epsilon_1}
\begin{split}
\mathds W^{a(3)}_{a,-2}\mathds W^{c(s)}_{d,-s}|0\rangle =\epsilon_2\left(\mu_{s,1}\mathds W^{c(s+1)}_{d,-s-2}+\mu_{s,2}\delta^c_d\mathds W^{e(s+1)}_{e,-s-2}\right)|0\rangle+\mathcal O(\epsilon_2^2)\text{-terms},
\end{split}
\end{equation}
where $\mu_{s,1},\mu_{s,2}\in \mathbb C$ are two complex numbers which are going to be determined. From the OPE we compute the commutator: 
\begin{align}\label{eqn: [W^3,W^s] modulo epsilon_1}
    \frac{1}{\epsilon_2}[\mathds W^{a(3)}_{a,-2},\mathds W^{c(s)}_{d,n}]=-(n+s-1)(\mu_{s,1}\mathds W^{c(s+1)}_{d,n-2}+\mu_{s,2}\delta^c_d\mathds W^{e(s+1)}_{e,n-2})+\mathcal O(\epsilon_2)\text{-terms}
\end{align}
such that the $\mathcal O(\epsilon_2)$-terms are linear combinations of $\mathds W^{(i)}_{n-2}$ for $i\le s$. Therefore we find 
\begin{equation}\label{eqn: map to W(infinity) modulo epsilon_1}
\begin{split}
\Psi_{\infty}(\mathsf T_{n,m}(E^a_b))&=\frac{m!}{2^n(m+n)!}\mathrm{ad}_{\frac{1}{\epsilon_2}\mathds W^{c(3)}_{c,-2}}^{n}(\mathds W^{a(1)}_{b,m+n})\\
&=h_{n,1}\mathds W^{a(n+1)}_{b,m-n}+h_{n,2}\delta^a_b\mathds W^{d(n+1)}_{d,m-n}+\epsilon_2\cdot(\text{linear combination of $\mathds W^{(i)}_{m-n}$ for $i\le n$})
\end{split}
\end{equation}
in $\mathfrak{U}(\mathsf W^{(K)}_\infty)/(\epsilon_1=\mathsf c=0)$, where $h_{n,1},h_{n,2}\in \mathbb C$ are certain complex numbers which are going to be determined.

\begin{lemma}\label{lem: map to W(infinity) modulo epsilon_1}
In \eqref{eqn: W^3W^s OPE modulo epsilon_1}, $\mu_{s,1}=2$ and $\mu_{s,2}=0$. In \eqref{eqn: map to W(infinity) modulo epsilon_1}, $h_{n,1}=(-1)^n$ and $h_{n,2}=0$.
\end{lemma}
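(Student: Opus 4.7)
The plan is to prove the four equalities by simultaneous induction on $n$ for $(h_{n,1},h_{n,2})$ and on $s$ for $(\mu_{s,1},\mu_{s,2})$, with the two induction steps interleaved. The key ingredients are (i) the explicit formula \eqref{eqn: map to W(infinity)} for $\Psi_\infty$, together with the reduction $\Psi_\infty(\mathsf t_{2,0})\equiv \frac{1}{\epsilon_2}\mathds W^{c(3)}_{c,-2}\pmod{\epsilon_1}$; (ii) the DDCA recursion $\mathsf T_{n,m}(X)=\frac{m!}{2^n(m+n)!}\mathrm{ad}^n_{\mathsf t_{2,0}}(\mathsf T_{0,m+n}(X))$ derived from \eqref{eqn: A2}; and (iii) the Lie algebra structure on $\mathfrak U(\mathsf W^{(K)}_\infty)/(\epsilon_1)$ provided by Lemma \ref{lem: E1 to 0 limit} together with the identification \eqref{eqn: mode algebra of W(infinity) at degenerate limit}.

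The base case $n=1,s=1$ is a direct computation: applying the recursion, reducing modulo $\epsilon_1$, and using the $\mathds W^{(1)}\mathds W^{(3)}$ OPE at $\mathsf c=0$ from \eqref{eqn:  dsW1 dsWn OPE} gives
\[
\Psi_\infty(\mathsf T_{1,m}(E^a_b))\equiv -\mathds W^{a(2)}_{b,m-1}-\tfrac{\epsilon_2 m}{2}\mathds W^{a(1)}_{b,m-1}\pmod{\epsilon_1,\mathsf c},
\]
yielding $h_{1,1}=-1, h_{1,2}=0$, and, by matching with the definition \eqref{eqn: [W^3,W^s] modulo epsilon_1}, also $\mu_{1,1}=2, \mu_{1,2}=0$.

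For the inductive step, assume the claim up to level $n$. Iterating the recursion formula \eqref{eqn: [W^3,W^s] modulo epsilon_1} $n+1$ times on $\mathds W^{a(1)}_{b,m+n+1}$ and separating the traceless ($a\ne b$, so the $\mu_{s,2}$-branch drops) and trace ($X=I$, which collapses to the combination $\mu_{s,1}+K\mu_{s,2}$) cases yields
\[
h_{n+1,1}=\tfrac{(-1)^{n+1}}{2^{n+1}}\prod_{k=1}^{n+1}\mu_{k,1},\qquad h_{n+1,1}+Kh_{n+1,2}=\tfrac{(-1)^{n+1}}{2^{n+1}}\prod_{k=1}^{n+1}(\mu_{k,1}+K\mu_{k,2}).
\]
By induction $\mu_{k,1}=2, \mu_{k,2}=0$ for $k\le n$, so these reduce to $h_{n+1,1}=\frac{(-1)^{n+1}\mu_{n+1,1}}{2}$ and $h_{n+1,2}=\frac{(-1)^{n+1}\mu_{n+1,2}}{2}$. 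The DDCA identity $[\mathsf t_{2,0},\mathsf T_{n,0}(I)]=0$ (from \eqref{eqn: A2} with $p+q=2$, $m=0$), together with the vanishing of the quadratic terms $\nu_{s,1}=\nu_{s,2}=0$ in \eqref{eqn: W^3W^r preliminary, K>1} established by the argument of Lemma \ref{lem: induction argument} now applied at the new inductive level, forces the OPE $\mathds W^{a(3)}_{a,-2}\mathds W^{c(n+1)}_{c,-n-1}|0\rangle$ modulo $\epsilon_1,\mathsf c$ to have only a linear $\mathds W^{(n+2)}$ contribution.

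The remaining task, and the main obstacle, is to pin down $\mu_{n+1,1}$ and $\mu_{n+1,2}$ separately. I would do this by applying the truncation $\pi_L$ with $L\ge n+2$ and using the uniform-in-$L$ argument (analogous to Lemmas \ref{lem: mu independent of L} and \ref{lem: induction argument}) to reduce the OPE coefficient to a finite Miura computation in $\mathcal W^{(K)}_L$. Using the Miura representation \eqref{eqn: Miura operator} together with the combinatorics of pseudodifferential symbols from Proposition \ref{prop: formal powers of PsiDS}, the leading $\epsilon_1^0$-contribution to the OPE $W^{(3)}W^{(n+1)}$ at $\mathsf c=0$ is explicitly extractable and gives $\mu_{n+1,1}=2$, while the vanishing of a $\delta^c_d$-symmetrized trace contribution (which corresponds to summing the Miura cycles over all diagonal matrix elements) yields $\mu_{n+1,2}=0$. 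Feeding these into the two equations above produces $h_{n+1,1}=(-1)^{n+1}$ and $h_{n+1,2}=0$, closing the induction.
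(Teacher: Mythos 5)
Your bookkeeping of how the raising operator $\mathrm{ad}_{\frac{1}{\epsilon_2}\mathds W^{c(3)}_{c,-2}}$ propagates through the recursion is correct: separating the traceless and trace channels does give
$h_{n,1}=\frac{(-1)^n}{2^n}\prod_{k=1}^n\mu_{k,1}$ and $h_{n,1}+Kh_{n,2}=\frac{(-1)^n}{2^n}\prod_{k=1}^n(\mu_{k,1}+K\mu_{k,2})$, and the base case is fine. But the induction does not close: at level $n+1$ you have two equations in the four unknowns $h_{n+1,1},h_{n+1,2},\mu_{n+1,1},\mu_{n+1,2}$, and the two missing relations are exactly where you write that "the leading $\epsilon_1^0$-contribution to the OPE $W^{(3)}W^{(n+1)}$ at $\mathsf c=0$ is explicitly extractable and gives $\mu_{n+1,1}=2$" while "the vanishing of a $\delta^c_d$-symmetrized trace contribution yields $\mu_{n+1,2}=0$." That assertion \emph{is} the lemma (for the $\mu$ half); deferring it to an unperformed "finite Miura computation" leaves the proof circular at its only nontrivial point. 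Carrying that computation out is genuinely delicate, because after the rescaling $\mathds W^{(r)}=\epsilon_1^{r-1}W^{(r)}$ one must track precisely which terms of the finite-$L$ $W^{(3)}W^{(n+1)}$ OPE survive modulo both $\epsilon_1$ and $\mathsf c$ (with $\lambda=\epsilon_1\mathsf c$), and nothing in your outline does this.

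The paper closes the system with an independent pair of relations coming from the \emph{lowering} direction rather than the raising one: the OPE $W^{a(1)}_{a,1}W^{b(r+1)}_{c,-r-1}|0\rangle=-\bar\alpha(L-r)W^{b(r)}_{c,-r}|0\rangle$ is an immediate consequence of the explicitly known $W^{(1)}W^{(n)}$ OPE \eqref{W1Wn OPE_finite L}, and it gives $\frac{1}{\epsilon_2}[\mathds W^{a(1)}_{a,1},\mathds W^{b(r+1)}_{c,n}]=r\,\mathds W^{b(r)}_{c,n+1}$ in $\mathfrak U(\mathsf W^{(K)}_\infty)/(\epsilon_1=\mathsf c=0)$ with \emph{no} undetermined coefficient. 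Applying $\mathrm{ad}_{\Psi_\infty(\mathsf t_{0,1})}$ $n$ times to \eqref{eqn: map to W(infinity) modulo epsilon_1} then collapses the right-hand side to spin $1$, where $\Psi_\infty(\mathsf T_{0,m}(E^a_b))=\mathds W^{a(1)}_{b,m}$ is known exactly; this pins down $h_{n,1}=(-1)^n$, $h_{n,2}=0$ first, after which $\mu_{s,1}=2$, $\mu_{s,2}=0$ fall out of the DDCA relation $[\mathsf t_{2,0},\mathsf T_{s-1,m}(X)]=2m\,\mathsf T_{s,m-1}(X)$. If you want to salvage your argument, replace the Miura step by this lowering-operator computation; as written, the proposal has a genuine gap.
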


\begin{proof}
The following OPE in $\mathcal W^{(K)}_L$ is straightforward to compute:
\begin{align*}
    W^{a(1)}_{a,1}W^{b(r+1)}_{c,-r-1}|0\rangle=-\bar\alpha (L-r) W^{b(r)}_{c,-r}|0\rangle.
\end{align*}
Then we have $\mathds W^{a(1)}_{a,1}\mathds W^{b(r+1)}_{c,-r-1}|0\rangle=\epsilon_2 (r-\epsilon_1\mathsf c) W^{b(r)}_{c,-r}|0\rangle$ in $\mathsf W^{(K)}_\infty$, therefore
\begin{align*}
    \frac{1}{\epsilon_2}[\mathds W^{a(1)}_{a,1},\mathds W^{b(r+1)}_{c,n}]= r \mathds W^{b(r)}_{c,n+1}
\end{align*}
in $\mathfrak{U}(\mathsf W^{(K)}_\infty)/(\epsilon_1=\mathsf c=0)$. Let us apply the adjoint action of $\Psi_\infty(\mathsf t_{0,1})=\frac{1}{\epsilon_2}\mathds W^{a(1)}_{a,1}$ $n$ times to the two sides of \eqref{eqn: map to W(infinity) modulo epsilon_1} and we find
\begin{align*}
    \Psi_{\infty}(\mathsf T_{0,m}(E^a_b))=(-1)^n h_{n,1}\mathds W^{a(1)}_{b,m}+(-1)^nh_{n,2}\delta^a_b\mathds W^{d(1)}_{d,m}.
\end{align*}
Comparing with \eqref{eqn: map to W(infinity)}, we find $h_{n,1}=(-1)^n$ and $h_{n,2}=0$. Next we plug \eqref{eqn: map to W(infinity) modulo epsilon_1} with $h_{n,1}=(-1)^n,h_{n,2}=0$ to the left-hand-side of \eqref{eqn: [W^3,W^s] modulo epsilon_1} and get
\begin{align*}
    \frac{1}{\epsilon_2}[\mathds W^{a(3)}_{a,-2},\mathds W^{c(s)}_{d,n}]&=(-1)^{s-1}[\Psi_\infty(\mathsf t_{2,0}),\Psi_\infty(\mathsf T_{s-1,n+s-1}(E^c_d))]+\mathcal O(\epsilon_2)\text{-terms}\\
    &=(-1)^{s-1}2(n+s-1)\Psi_\infty(\mathsf T_{s,n+s-2}(E^c_d))+\mathcal O(\epsilon_2)\text{-terms}\\
    &=-2(n+s-1)\mathds W^{c(s+1)}_{d,n-2}+\mathcal O(\epsilon_2)\text{-terms}.
\end{align*}
Comparing with right-hand-side of \eqref{eqn: [W^3,W^s] modulo epsilon_1}, we find $\mu_{s,1}=2$ and $\mu_{s,2}=0$.
\end{proof}

\begin{proposition}\label{prop: embed A^(K) into mode algebra of W(infinity)}
The composition of ${\mathsf D}^{(K)}\overset{ \Psi_{\infty}}{\longrightarrow}\mathfrak{U}(\mathsf W^{(K)}_\infty)\to \mathfrak{U}(\mathsf W^{(K)}_\infty)/(\mathsf c=0)$ is injective. In particular, $\Psi_{\infty}:{\mathsf A}^{(K)}\to \mathfrak{U}(\mathsf W^{(K)}_\infty)[\epsilon_2^{-1}]$ is an embedding.
\end{proposition}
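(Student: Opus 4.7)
The plan is to reduce injectivity of $\bar\Psi_\infty := \Psi_\infty \bmod (\mathsf c)$ on $\mathsf D^{(K)}$ all the way down to the degenerate limit $\epsilon_1 = \epsilon_2 = 0$, where the map becomes a transparent inclusion of Lie algebras, by two successive flatness arguments.

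First I would show that it suffices to prove injectivity modulo the ideal $(\epsilon_1,\epsilon_2)$. Theorem \ref{thm: PBW} gives that $\mathsf D^{(K)}$ is a free $\mathbb C[\epsilon_1,\epsilon_2]$-module, and the restricted mode algebra modulo $\mathsf c$, $U(\mathsf W^{(K)}_\infty)/(\mathsf c)$, is $\mathbb C[\epsilon_1,\epsilon_2]$-flat by the PBW structure established in Lemma \ref{lem: restricted mode alg of W(infinity)} together with the freeness of $\mathsf W^{(K)}_\infty$ over $\mathbb C[\epsilon_1,\epsilon_2,\mathsf c]$. Under these flatness assumptions, the standard trick—taking a hypothetical nonzero $f\in\ker\bar\Psi_\infty$, factoring out maximal powers of $\epsilon_1$ and then $\epsilon_2$, and contradicting the mod-$(\epsilon_1,\epsilon_2)$ injectivity—performs the reduction. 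It also suffices to land in $U/(\mathsf c)$ rather than $\mathfrak U/(\mathsf c)$, since the embedding $U\hookrightarrow\mathfrak U$ from Lemma \ref{lem: restricted mode alg of W(infinity)} descends to an embedding modulo $\mathsf c$ by the $\mathsf c$-flatness of $\mathsf W^{(K)}_\infty$.

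At the specialization $\epsilon_1=\epsilon_2=\mathsf c=0$, Corollary \ref{cor: A is DDCA} identifies $\mathsf D^{(K)}/(\epsilon_1,\epsilon_2)$ with $U(\mathbb C[x,y]\otimes\gl_K)$, while Lemma \ref{lem: E1 to 0 limit} together with \eqref{eqn: mode algebra of W(infinity) at degenerate limit} identifies the target with (a completion of) $U(\mathscr O(\mathbb C\times\mathbb C^\times)\otimes\gl_K)$. Lemma \ref{lem: map to W(infinity) modulo epsilon_1} then identifies the specialized map with the $U$-functor applied to the natural Lie algebra inclusion
\begin{equation*}
    \mathbb C[x,y]\otimes\gl_K \hookrightarrow \mathscr O(\mathbb C\times\mathbb C^\times)\otimes\gl_K, \qquad y^nx^m\otimes E^a_b \mapsto y^nx^mE^a_b,
\end{equation*}
the sign $(-1)^n$ of Lemma \ref{lem: map to W(infinity) modulo epsilon_1} cancelling the $(-y)^n$ of \eqref{eqn: mode algebra of W(infinity) at degenerate limit}. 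This inclusion is manifestly injective, so PBW for enveloping algebras of free Lie modules lifts injectivity to $U(-)$.

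The ``in particular'' clause then follows: the relation $\mathsf T_{n,m}(1)=\epsilon_2\mathsf t_{n,m}$ yields $\mathsf A^{(K)}[\epsilon_2^{-1}]=\mathsf D^{(K)}[\epsilon_2^{-1}]$, so injectivity on $\mathsf D^{(K)}$ propagates to the localization, and $\mathsf A^{(K)}$ embeds in $\mathsf A^{(K)}[\epsilon_2^{-1}]$ by the $\mathbb C[\epsilon_2]$-torsion-freeness from Theorem \ref{thm: PBW}. The main obstacle will be making the flatness step rigorous—verifying that $U(\mathsf W^{(K)}_\infty)/(\mathsf c)$ is indeed free over $\mathbb C[\epsilon_1,\epsilon_2]$ and that the degenerate limit interacts correctly with the completion defining $\mathfrak U$, so that no element surviving in the classical quotient is killed by completion effects when re-interpreted in $\mathfrak U/(\mathsf c)$.
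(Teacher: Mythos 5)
Your proposal is correct and follows essentially the same route as the paper's proof: reduce by $\mathbb C[\epsilon_1,\epsilon_2]$-flatness of both sides to the specialization $\epsilon_1=\epsilon_2=\mathsf c=0$, where Corollary \ref{cor: A is DDCA}, Lemma \ref{lem: map to W(infinity) modulo epsilon_1} and \eqref{eqn: mode algebra of W(infinity) at degenerate limit} identify the map with the one induced by the obvious inclusion $\mathscr O(\mathbb C\times\mathbb C)\otimes\gl_K\hookrightarrow\mathscr O(\mathbb C\times\mathbb C^{\times})\otimes\gl_K$. The only divergence is your detour through $U(\mathsf W^{(K)}_\infty)/(\mathsf c)$: the paper works directly in $\mathfrak U(\mathsf W^{(K)}_\infty)/(\mathsf c)$, which sidesteps the extra (and not entirely immediate) claim that $U\hookrightarrow\mathfrak U$ remains injective after reduction modulo $\mathsf c$.
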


\begin{proof}
By the flatness of ${\mathsf D}^{(K)}$ and $\mathfrak{U}(\mathsf W^{(K)}_\infty)/(\mathsf c=0)$ over the base ring $\mathbb C[\epsilon_1,\epsilon_2]$, to show the injectivity of the map ${\mathsf D}^{(K)}\to \mathfrak{U}(\mathsf W^{(K)}_\infty)/(\mathsf c=0)$, it suffices to show the injectivity after modulo $\epsilon_1,\epsilon_2$.

Identifying $\mathfrak{U}(\mathsf W^{(K)}_\infty)/(\epsilon_1=\epsilon_2=\mathsf c=0)$ with (a completion of) $U(\mathscr O(\mathbb C\times \mathbb C^{\times})\otimes\gl_K)$ via \eqref{eqn: mode algebra of W(infinity) at degenerate limit}, and identifying ${\mathsf D}^{(K)}/(\epsilon_1=\epsilon_2=0)$ with $U(\mathscr O(\mathbb C\times \mathbb C)\otimes\gl_K)$ using Corollary \ref{cor: A is DDCA}, then the equation \eqref{eqn: map to W(infinity) modulo epsilon_1} implies that $\Psi_{\infty}: {\mathsf D}^{(K)}/(\epsilon_1=\epsilon_2=0)\to \mathfrak{U}(\mathsf W^{(K)}_\infty)/(\epsilon_1=\epsilon_2=\mathsf c=0)$ is nothing but the one induced by the natural embedding $\mathscr O(\mathbb C\times \mathbb C)\otimes\gl_K\hookrightarrow \mathscr O(\mathbb C\times \mathbb C^{\times})\otimes\gl_K$. This proves the injectivity of $\Psi_\infty$ modulo $\epsilon_1,\epsilon_2$, which in turn implies the injectivity of $\Psi_\infty$ by the flatness of ${\mathsf D}^{(K)}$ and $\mathfrak{U}(\mathsf W^{(K)}_\infty)/(\mathsf c=0)$ over $\mathbb C[\epsilon_1,\epsilon_2]$.
\end{proof}

\subsection{Vertical filtration on \texorpdfstring{$\mathsf W^{(K)}_{\infty}$}{Wk inf}}\label{subsec: vertical filtration on W_inf}

Let us equip $\mathbf W$ with a filtration $0=V_{-1}\mathbf W\subset V_0\mathbf W\subset V_1\mathbf W\subset\cdots \subset \mathbf W$, where $V_s\mathbf W$ is spanned by the vectors $A^{a_n,r_n}_{b_n,i_n}\cdots A^{a_1,r_1}_{b_1,i_1}$ such that $\sum_{j=1}^n(r_j-1)\le s$. Then $V_{\bullet}\mathbf W$ induces a filtration $V_{\bullet}\mathsf W^{(K)}_{\infty}$.
\begin{lemma}
$V_{\bullet}\mathsf W^{(K)}_{\infty}$ is a vertex algebra filtration, i.e. $\partial V_{s}\mathsf W^{(K)}_{\infty}\subset V_{s}\mathsf W^{(K)}_{\infty}$ and $Y(V_{s}\mathsf W^{(K)}_{\infty},z)V_{t}\mathsf W^{(K)}_{\infty}\subset V_{s+t}\mathsf W^{(K)}_{\infty}[\![z^{\pm}]\!]$. Moreover, there is vertex algebra isomorphism $\mathrm{gr}_V\mathsf W^{(K)}_{\infty}\cong V^{\epsilon_3\mathsf c,\epsilon_1\mathsf c}(\gl_K[z])$, equivalently the OPEs in $\mathsf W^{(K)}_{\infty}$ have the following form:
\begin{align*}
    \mathds W^{a(r)}_{b}(z) \mathds W^{c(s)}_{d}(w)\sim \frac{\delta^c_b \mathds W^{a(r+s-1)}_{d}(w)-\delta^a_d \mathds W^{c(r+s-1)}_{b}(w)}{z-w}+\delta_{r,1}\delta_{s,1}\frac{\epsilon_3\mathsf c\delta^a_d\delta^c_b+\epsilon_1\mathsf c\delta ^a_b\delta^c_d}{(z-w)^2}\pmod{V_{r+s-3}\mathsf W^{(K)}_{\infty}}.
\end{align*}.
\end{lemma}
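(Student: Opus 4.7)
The plan is to reduce every claim to the corresponding finite$-L$ statement (Arakawa's Lemma \ref{lem: associated graded of W-algebra}) via the truncation maps $\pi_L$, and then use the polynomiality of structure constants in $\lambda$ (equivalently in $\mathsf c$), already established in Proposition \ref{prop: polynomiality} and Lemma \ref{lem: integral basis}. The filtration $V_{\bullet}\mathsf W^{(K)}_{\infty}$ is defined combinatorially on the basis of $\mathbf W$ indexed by tuples $(a_i,b_i,r_i,i_i)$, and for every $L\geq \max_i r_i$ the truncation $\pi_L$ identifies a generator in $V_s\mathsf W^{(K)}_{\infty}$ with the corresponding generator of $V_s\mathcal W^{(K)}_L$ in the sense of \cite{arakawa2007representation}. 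In particular $\pi_L$ is filtered on a cofinal exhaustion of $\mathsf W^{(K)}_{\infty}$.

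To verify that $V_{\bullet}$ is a vertex algebra filtration, fix $A\in V_s\mathsf W^{(K)}_{\infty}$ and $B\in V_t\mathsf W^{(K)}_{\infty}$, and consider $A_{(n)}B[\lambda]\in\mathbf W\otimes\mathbb{C}[\epsilon_1,\epsilon_2,\mathsf c]$. Expand it in the $\mathbf W$-basis. For each sufficiently large integer $L$, specialization $\lambda=L$ (hence $\mathsf c=L/\epsilon_1$) recovers $\pi_L(A)_{(n)}\pi_L(B)$, which by Arakawa's filtration theorem lies in $V_{s+t}\mathcal W^{(K)}_L$. Hence the coefficient in front of any basis vector not belonging to $V_{s+t}\mathbf W$ is a polynomial in $(\epsilon_1,\epsilon_2,\mathsf c)$ vanishing at $\mathsf c=L/\epsilon_1$ for all sufficiently large integers $L$; such a polynomial is identically zero. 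This shows $A_{(n)}B[\lambda]\in V_{s+t}\mathsf W^{(K)}_{\infty}$. The inclusion $\partial V_s\subset V_s$ is a special case ($B=|0\rangle$, $n=-2$).

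For the OPE, translate Arakawa's formula from Lemma \ref{lem: associated graded of W-algebra} into the integral normalization $\mathds W^{a(r)}_b=\epsilon_1^{r-1}W^{a(r)}_b$. The first-order pole rescales cleanly to $(\delta^c_b\mathds W^{a(r+s-1)}_d-\delta^a_d\mathds W^{c(r+s-1)}_b)/(z-w)$, and for $(r,s)=(1,1)$ the second-order coefficient $\alpha L\delta^a_d\delta^c_b+L\delta^a_b\delta^c_d$ becomes $\epsilon_3\mathsf c\,\delta^a_d\delta^c_b+\epsilon_1\mathsf c\,\delta^a_b\delta^c_d$ under $L=\epsilon_1\mathsf c$, $\alpha L=\epsilon_3\mathsf c$. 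All other higher order poles lie in $V_{r+s-3}\mathcal W^{(K)}_L$ for each sufficiently large $L$, and the same polynomial-vanishing argument as in the previous paragraph then lifts this congruence to $\mathsf W^{(K)}_{\infty}$. The induced identification of strong generators and matching OPEs with $V^{\epsilon_3\mathsf c,\epsilon_1\mathsf c}(\gl_K[z])$ gives a surjective vertex algebra map; it is injective because both sides have PBW bases of the same size, indexed by the same tuples $(a_i,b_i,r_i,i_i)$, over the base ring $\mathbb{C}[\epsilon_1,\epsilon_2,\mathsf c]$.

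The main subtlety is to keep careful track of which truncations $\pi_L$ to use for a given pair $(A,B)$: one must pick $L$ large compared both to the indices appearing in $A$ and $B$ and to any basis element whose coefficient one wants to rule out. Once one works inside a fixed finite-dimensional homogeneous subspace of $\mathbf W$ this becomes a bounded combinatorial condition, after which the polynomial identity argument is essentially immediate. No new vertex algebra computation is needed beyond what is encoded in Arakawa's Lemma \ref{lem: associated graded of W-algebra}.
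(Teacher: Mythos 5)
Your proof is correct and follows essentially the same route as the paper, whose own proof is the one-line remark that both statements follow from their finite-$L$ counterparts in Lemma \ref{lem: associated graded of W-algebra}; you have simply made explicit the standard reduction (specialize $\lambda=L$ for all large integers $L$, invoke Arakawa's filtration result for $\mathcal W^{(K)}_L$, and use polynomiality of the structure constants to kill any coefficient outside $V_{s+t}\mathbf W$) that the paper uses repeatedly throughout Section 4. No gaps.
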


\begin{proof}
The two statements follow directly from their finite-$L$ counterparts, which are known, see Lemma \ref{lem: associated graded of W-algebra}.
\end{proof}

Now we are ready to prove Proposition \ref{prop: Psi_L filtered}, in fact it is deduced from its uniform-in-$L$ version.

\begin{proposition}\label{prop: Psi_inf filtered}
$\Psi_\infty(\mathsf T_{n,m}(E^a_b))\equiv (-1)^n \mathds W^{a(n+1)}_{b,m-n}\pmod{V_{n-1}\mathfrak U(\mathsf W^{(K)}_\infty)}$.
\end{proposition}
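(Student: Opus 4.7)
I prove the proposition by induction on $n$, lifting the finite-$L$ analysis already developed in Lemmas \ref{lem: induction argument} and \ref{lem: map to W(infinity) modulo epsilon_1} uniformly to $\mathsf W^{(K)}_\infty$.

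\emph{Base cases.} For $n=0$, the identity $\Psi_\infty(\mathsf T_{0,m}(E^a_b))=\mathds W^{a(1)}_{b,m}$ from \eqref{eqn: map to W(infinity)} is the claim itself. For $n=1$ and $m=0$, the explicit formula $\Psi_\infty(\mathsf T_{1,0}(E^a_b))=\epsilon_1\sum_{k\ge 0}\mathds W^{a(1)}_{c,-k-1}\mathds W^{c(1)}_{b,k}-\mathds W^{a(2)}_{b,-1}$ together with the observation that the quadratic sum is a product of spin-$1$ modes and hence lies in $V_0\mathfrak U(\mathsf W^{(K)}_\infty)$ yields $\Psi_\infty(\mathsf T_{1,0}(E^a_b))\equiv -\mathds W^{a(2)}_{b,-1}\pmod{V_0}$. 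General $m$ for $n=1$ follows by iterating $\mathrm{ad}_{\Psi_\infty(\mathsf t_{0,1})}$, which shifts only the mode index without affecting filtration.

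\emph{Inductive step.} Assume the formula holds through filtration $n$. From the identity $\mathsf T_{n+1,m}(X)=\tfrac{1}{2(m+1)}[\mathsf t_{2,0},\mathsf T_{n,m+1}(X)]$ (valid modulo lower diagonal-filtration corrections) and $\Psi_\infty(\mathsf t_{2,0})\equiv\tfrac{1}{\epsilon_2}\mathds W^{c(3)}_{c,-2}\pmod{V_1\mathfrak U(\mathsf W^{(K)}_\infty)[\epsilon_2^{-1}]}$, the proof of Lemma \ref{lem: induction argument} carries over verbatim inside $\mathsf W^{(K)}_\infty$—the crucial input being the cancellation of the leading $\frac{1}{z-w}$-coefficient of the traced OPE $\mathds W^{c(3)}_c(z)\mathds W^{a(s)}_b(w)$ modulo $V_s$, a direct consequence of the associated-graded structure in Lemma \ref{lem: associated graded of W-algebra}. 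This yields
\begin{equation*}
\Psi_\infty(\mathsf T_{n+1,m}(E^a_b))\equiv(-1)^{n+1}\mathds W^{a(n+2)}_{b,m-n-1}+f_{n+1}\,\delta^a_b\,\mathds W^{c(n+2)}_{c,m-n-1}\pmod{V_n\mathfrak U(\mathsf W^{(K)}_\infty)},
\end{equation*}
where $f_{n+1}\in\mathbb C[\epsilon_1,\epsilon_2,\mathsf c]$ by the free $\mathbb C[\epsilon_1,\epsilon_2,\mathsf c]$-module structure of $\mathfrak U(\mathsf W^{(K)}_\infty)$.

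\emph{Vanishing of $f_{n+1}$.} Run a parallel induction for the scalar trace $\mathsf t_{n,m}$, where only a single coefficient $\mu_{n+1}\in\mathbb C[\epsilon_1,\epsilon_2,\mathsf c,\epsilon_2^{-1}]$ parametrizes the ambiguity (there being no further matrix trace). An $L$-independence argument of the type of Lemma \ref{lem: mu independent of L}, performed uniformly using the coproduct $\Delta_{\mathsf W}$ and the counit $\mathfrak C_{\mathsf W}$, forces $\mu_{n+1}$ to be $\mathsf c$-independent; matching at $\epsilon_1=\mathsf c=0$ via Lemma \ref{lem: map to W(infinity) modulo epsilon_1} together with the recursion $\mathsf t_{n+1,0}=\tfrac{1}{2}[\mathsf t_{2,0},\mathsf t_{n,1}]+\text{lower}$ and the OPE constant $\mu=2\bar\alpha$ (computable directly from \eqref{eqn:  dsW1 dsWn OPE} via a single Jacobi identity, or equivalently from the Miura pseudodifferential symbol) pins down $\mu_{n+1}=(-1)^{n+1}/\epsilon_2$. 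Tracing the matrix formula over $a=b$ gives $\epsilon_2\Psi_\infty(\mathsf t_{n+1,m})\equiv\bigl((-1)^{n+1}+Kf_{n+1}\bigr)\mathds W^{c(n+2)}_{c,m-n-1}\pmod{V_n}$; comparing with the trace statement yields $Kf_{n+1}=0$, hence $f_{n+1}=0$ since $K\ge 1$.

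\emph{Main obstacle.} The principal technical difficulty is pinning down the trace-sector OPE constant $\mu=2\bar\alpha$ and uniformly propagating the finite-$L$ $L$-independence of $\mu_{n+1}$ to $\mathsf c$-independence in $\mathsf W^{(K)}_\infty$, while carefully avoiding circular invocation of Proposition \ref{prop: Psi_L filtered}, whose proof is logically downstream of the present result.
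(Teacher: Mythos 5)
Your induction skeleton and your use of Lemma \ref{lem: induction argument} to reduce everything to the vanishing of the trace coefficient $f_{n+1}$ match the paper, but the argument you give for $f_{n+1}=0$ has a genuine gap. The paper's route is short: Lemma \ref{lem: induction argument} already shows $f_{n+1}\in \mathbb C[\epsilon_1,\epsilon_2]$ (no $\mathsf c$, via the $L$-independence), and then a homogeneity argument --- $\Psi_\infty$ is graded for $\deg\epsilon_1=\deg\epsilon_2=1$, $\deg\mathsf c=-1$, $\deg\mathsf T_{n,m}(X)=n$, $\deg\mathds W^{a(n)}_{b,m}=n-1$ --- forces $f_{n+1}$ to have degree zero, hence to be a bare complex constant; the mod-$\epsilon_1$ computation of Lemma \ref{lem: map to W(infinity) modulo epsilon_1} then gives $f_{n+1}\equiv 0\pmod{\epsilon_1}$, so $f_{n+1}=0$. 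You never invoke this degree argument, and without it your combination of ``$\mathsf c$-independence plus matching at $\epsilon_1=\mathsf c=0$'' only determines $\mu_{n+1}$ (equivalently $f_{n+1}$) modulo $\epsilon_1$, which is not enough to conclude.

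The ingredient you propose as a substitute --- the exact value $\mu=2\bar\alpha$ of the leading linear coefficient in the traced $\mathds W^{(3)}\mathds W^{(s)}$ OPE, ``computable directly from \eqref{eqn:  dsW1 dsWn OPE} via a single Jacobi identity'' --- is not available at this stage. In the paper that identity (equation \eqref{eqn: W^3W^r}) is recorded as a corollary of the very proposition you are proving, not as an input. A Jacobi identity with the spin-one currents does not pin this coefficient down at generic parameters: the zero modes only enforce $\gl_K$-covariance, and the downward induction via $\mathrm{ad}_{\mathds W^{a(1)}_{a,1}}$ fails to annihilate the $V_{n-1}$-corrections when $\epsilon_1\mathsf c\neq 0$ (products of spin-one modes in $V_0$ survive repeated application, picking up factors of $\epsilon_1\mathsf c$), which is precisely why Lemma \ref{lem: map to W(infinity) modulo epsilon_1} runs that argument only in the quotient $\epsilon_1=\mathsf c=0$. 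Finally, your trace comparison is circular as stated: the traced matrix recursion and the scalar recursion for $\Psi_\infty(\mathsf t_{n+1,m})$ both pass through the same doubly-traced commutator $[\mathds W^{c(3)}_{c,-2},\mathds W^{a(n+1)}_{a,\cdot}]$, so equating them yields a tautology unless $\mu_{n+1}$ has been computed by genuinely independent means. Replacing this whole block by the grading argument, applied to the output of Lemma \ref{lem: induction argument} together with \eqref{eqn: map to W(infinity)}, closes the gap.
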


\begin{proof}
We prove the statement by induction on $n$. The cases when $n=0$ or $n=1$ are automatically true by \eqref{eqn: map to W(infinity)}. Now assume that the statement is true for all $n$ such that $n\le r$, and let us deduce that it also holds for $n=r+1$. By Lemma \ref{lem: induction argument}, $\exists f_{r+1}\in \mathbb C[\epsilon_1,\epsilon_2]$ such that 
\begin{align}\label{eqn: induction step}
    \Psi_\infty(\mathsf T_{r+1,m}(E^a_b))\equiv (-1)^{r+1} \mathds W^{a(r+2)}_{b,m-r-1}+f_{r+1}\delta^a_b \mathds W^{c(r+2)}_{c,m-r-1}\pmod{V_{r}\mathfrak U(\mathcal W^{(K)}_\infty)}.
\end{align}
Notice that $\Psi_\infty$ is a graded homomorphism, where we set $\deg\epsilon_1=\deg\epsilon_2=1,\deg\mathsf c=-1$, and the grading on $\mathsf D^{(K)}$ is given by $\deg\mathsf T_{n,m}(X)=n$, and the grading on $\mathfrak U(\mathcal W^{(K)}_\infty)$ is given by $\deg\mathds W^{a(n)}_{b,m}=n-1$. Comparing the degrees of two sides of \eqref{eqn: induction step}, we see that $f_{r+1}\in \mathbb C$. By \eqref{eqn: map to W(infinity) modulo epsilon_1} and Lemma \ref{lem: map to W(infinity) modulo epsilon_1}, we see that $f_{r+1}\equiv 0\pmod{\epsilon_1}$, which implies that $f_{r+1}=0$. This finishes the induction step.
\end{proof}

\begin{remark}
As a corollary, we see the following relation holds:
\begin{equation}\label{eqn: W^3W^r}
\begin{split}
    \mathds W^{a(3)}_{a,-2}\mathds W^{b(r)}_{c,-r}|0\rangle &\equiv  2\epsilon_2 \mathds W^{b(r+1)}_{c,-r-2}|0\rangle\pmod{V_{r-1}\mathsf W^{(K)}_\infty}, \quad\forall r>1.
\end{split}
\end{equation}
For $r=1$, it can be computed directly that
\begin{align}
    \mathds W^{a(3)}_{a,-2}\mathds W^{b(1)}_{c,-r}|0\rangle &\equiv  (\epsilon_1\mathsf c-2)\left(\epsilon_3 \mathds W^{b(2)}_{c,-r-2}+\epsilon_1 \delta^b_c\mathds W^{d(2)}_{d,-r-2}\right)|0\rangle\pmod{V_{0}\mathsf W^{(K)}_\infty}
\end{align}
\end{remark}

\subsection{Zhu algebra of \texorpdfstring{$\mathsf W^{(K)}_{\infty}$}{Wk inf}}

The Zhu algebra $\mathrm{Zhu}(\mathcal V)$ of a graded vertex algebra $\mathcal V$ is defined to be the $\mathcal B$-algebra of the mode algebra, i.e. $\mathrm{Zhu}(\mathcal V)=\mathcal B(\mathfrak{U}(\mathcal V))=\mathfrak{U}(\mathcal V)_0/\sum_{i>0}\mathfrak{U}(\mathcal V)_{i}\mathfrak{U}(\mathcal V)_{-i}$, where $\mathfrak{U}(\mathcal V)_{d}$ is the homogeneous degree $d$ component of $\mathfrak{U}(\mathcal V)$.

\begin{theorem}
There is an algebra isomorphism 
\begin{align}
    \mathrm{Zhu}(\mathsf W^{(K)}_{\infty})\cong Y_{\epsilon_1}(\gl_K)[\epsilon_2,\mathsf c]
\end{align}
between the Zhu algebra of $\mathsf W^{(K)}_{\infty}$ and the Yangian of $\gl_K$.
\end{theorem}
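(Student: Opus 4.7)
The strategy is to apply Lemma \ref{lem: B-algbera of polynomial-like algebra} to $\mathfrak U(\mathsf W^{(K)}_\infty)$ equipped with its $\mathbb Z$-grading $\deg\mathds W^{a(r)}_{b,n}=n$, and then identify the resulting degree-zero generator subalgebra with the Yangian via the composition $Y_{\epsilon_1}(\gl_K)\hookrightarrow \mathsf D^{(K)}\hookrightarrow \mathfrak U(\mathsf W^{(K)}_\infty)$ coming from Lemma \ref{lem: embedding of Yangian} and Proposition \ref{prop: embed A^(K) into mode algebra of W(infinity)}. Setting $R=\mathbb C[\epsilon_1,\epsilon_2,\mathsf c]$ and $\mathfrak G(\mathfrak U):=\{\mathds W^{a(r)}_{b,n}\mid r\ge 1,\,1\le a,b\le K,\,n\in\mathbb Z\}$, totally ordered first by the conformal weight $n$ and then lexicographically on $(r,a,b)$ within each $n$-layer, the PBW basis of $\mathsf W^{(K)}_\infty$ from Definition \ref{def: vector space of W(infinity)} induces (in the appropriate topological sense) an $R$-basis of $\mathfrak U(\mathsf W^{(K)}_\infty)$ consisting of ordered monomials in $\mathfrak G(\mathfrak U)$. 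The degree-zero subset is $\mathfrak G(\mathfrak U)^0=\{\mathds W^{a(r)}_{b,0}\}_{r\ge 1,a,b}$, and the $R$-span of ordered monomials in $\mathfrak G(\mathfrak U)^0$ is a subalgebra, because the commutator $[\mathds W^{a(r)}_{b,0},\mathds W^{c(s)}_{d,0}]$ is computed as a contour residue of the $\mathds W\mathds W$-OPE and so lies in the $R$-span of zero modes. Lemma \ref{lem: B-algbera of polynomial-like algebra} then identifies
\begin{align*}
\mathrm{Zhu}(\mathsf W^{(K)}_\infty)\cong R\langle \mathds W^{a(r)}_{b,0}\mid r\ge 1,\,1\le a,b\le K\rangle\subset \mathfrak U(\mathsf W^{(K)}_\infty)_0.
\end{align*}

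Next, I define the candidate $\phi:Y_{\epsilon_1}(\gl_K)[\epsilon_2,\mathsf c]\to\mathrm{Zhu}(\mathsf W^{(K)}_\infty)$ on generators by $\phi(T^a_{b;n})=\overline{\Psi_\infty(\mathsf T_{\mathbf r_n}(E^a_b))}$ (with $\overline{(\cdot)}$ the image in Zhu) and $\phi(\epsilon_i)=\epsilon_i,\ \phi(\mathsf c)=\mathsf c$. This is a well-defined algebra map: Lemma \ref{lem: embedding of Yangian} produces an algebra embedding $Y_{\epsilon_1}(\gl_K)\hookrightarrow \mathsf D^{(K)}$ into the degree-zero subalgebra, and Proposition \ref{prop: embed A^(K) into mode algebra of W(infinity)} gives a graded embedding $\Psi_\infty:\mathsf D^{(K)}\hookrightarrow\mathfrak U(\mathsf W^{(K)}_\infty)$, after which the projection to Zhu is a homomorphism. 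The vertical filtration $V_\bullet$ of Subsection \ref{subsec: vertical filtration on W_inf} descends to $\mathrm{Zhu}$, and combining Remark \ref{rmk: unsymmetrized basis} (which gives $\mathsf T_{\mathbf r_n}(E^a_b)\equiv\mathsf T_{n,n}(E^a_b)$ modulo lower vertical filtration in $\mathsf D^{(K)}$) with Proposition \ref{prop: Psi_inf filtered} yields the key triangular relation
\begin{align*}
\phi(T^a_{b;n})\equiv (-1)^n\,\overline{\mathds W^{a(n+1)}_{b,0}}\pmod{V_{n-1}\mathrm{Zhu}(\mathsf W^{(K)}_\infty)}.
\end{align*}
An induction on $V_\bullet$ then recovers every generator $\overline{\mathds W^{a(r)}_{b,0}}$ as an $R$-polynomial in $\{\phi(T^{a'}_{b';k})\}$, establishing surjectivity.

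For injectivity, take $f\in\ker\phi$ and expand in the Yangian PBW basis of ordered monomials in $\{T^a_{b;n}\}$. The same triangular relation implies that the image of an ordered monomial $T^{a_1}_{b_1;n_1}\cdots T^{a_k}_{b_k;n_k}$ has vertical-filtration leading term $\pm\,\overline{\mathds W^{a_1(n_1+1)}_{b_1,0}\cdots\mathds W^{a_k(n_k+1)}_{b_k,0}}$, and these leading terms are $R$-linearly independent by the PBW description of $\mathrm{Zhu}$ established above. A standard highest-filtration-coefficient argument forces all coefficients of $f$ to vanish.

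The main obstacle is the technical step of setting up Lemma \ref{lem: B-algbera of polynomial-like algebra} in the completed topological setting of the mode algebra: one must verify that the ordered monomials in $\mathfrak G(\mathfrak U)$ really do form an $R$-(topological) basis of $\mathfrak U(\mathsf W^{(K)}_\infty)$ in the precise form the lemma demands, and that the span of ordered monomials in zero modes is closed under zero-mode multiplication. The latter amounts to a careful analysis of the zero-mode component of the full $\mathds W^{a(r)}_b\mathds W^{c(s)}_d$ OPE (generalizing \eqref{eqn: dsW1 dsWn OPE}), checking that only polynomial combinations of zero modes $\{\mathds W^{a(t)}_{e,0}\}$ and of $\mathsf c,\epsilon_1,\epsilon_2$ are produced, with no residual dependence on higher modes. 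If one prefers to avoid the completion issue, an alternative is to first carry out the same argument with $U(\mathsf W^{(K)}_\infty)$ in place of $\mathfrak U(\mathsf W^{(K)}_\infty)$ using Lemma \ref{lem: restricted mode alg of W(infinity)}, and then observe that the $\mathcal B$-algebras of the two coincide since the relevant ideal is generated in bounded degree. Once these foundational points are secured, the remainder of the proof is purely the bookkeeping sketched above.
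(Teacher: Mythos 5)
Your proposal is correct and takes essentially the same route as the paper: the same candidate map $T^a_{b;n}\mapsto\overline{\Psi_\infty(\mathsf T_{\mathbf r_n}(E^a_b))}$, the same triangularity $\phi(T^a_{b;n})\equiv(-1)^n\overline{\mathds W^{a(n+1)}_{b,0}}$ modulo the vertical filtration obtained from Remark \ref{rmk: unsymmetrized basis} and Proposition \ref{prop: Psi_inf filtered}, and a comparison against the PBW description of the zero-mode subalgebra. The paper merely packages your separate surjectivity and injectivity steps into a single associated-graded isomorphism with $U(\gl_K[z])\otimes\mathbb C[\epsilon_1,\epsilon_2,\mathsf c]$, and it glosses over (rather than resolves) the completed-mode-algebra PBW point that you correctly flag as the one delicate foundational step.
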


\begin{proof}
Recall the RTT generators $T^a_{b;n}$ of $Y_{\epsilon_1}(\gl_K)$ in \eqref{eqn: RTT relations}, and consider the map $T^a_{b;n}\mapsto \Psi_{\infty}(\mathsf T_{\mathbf r_n}(E^a_b))$ where $\mathsf T_{\mathbf r_n}(E^a_b)$ is defined in \ref{lem: embedding of Yangian}, then it uniquely determines a $\mathbb C[\epsilon_1]$-algebra homomorphism $Y_{\epsilon_1}(\gl_K)\to \mathfrak{U}(\mathsf W^{(K)}_{\infty})$. Since $\Psi_{\infty}(\mathsf T_{\mathbf r_n}(E^a_b))$ has degree zero, this map induces a map $Y_{\epsilon_1}(\gl_K)[\epsilon_2,\mathsf c]\to \mathrm{Zhu}(\mathsf W^{(K)}_{\infty})$. We claim that this map is an isomorphism.

To prove this claim, we consider the filtration $V_{\bullet}\mathrm{Zhu}(\mathsf W^{(K)}_{\infty})$ induced from the vertical filtration $V_{\bullet}\mathfrak{U}(\mathsf W^{(K)}_{\infty})$. By Remark \ref{rmk: unsymmetrized basis} $\mathsf T_{\mathbf r_n}(E^a_b)\equiv \mathsf T_{n,n}(E^a_b)\pmod{V_{n-1}\mathsf A^{(K)}}$, then Proposition \ref{prop: Psi_inf filtered} implies that $\Psi_\infty(\mathsf T_{\mathbf r_n}(E^a_b))\equiv (-1)^n\mathds W^{a(n+1)}_{b,0}\pmod{V_{n-1}\mathfrak{U}(\mathsf W^{(K)}_{\infty})}$.
By the construction of $\mathsf W^{(K)}_\infty$, $\mathds W^{a(s)}_{b}$ strongly generates $\mathsf W^{(K)}_{\infty}$ and form a PBW basis of $\mathsf W^{(K)}_{\infty}$, thus $\mathrm{Zhu}(\mathsf W^{(K)}_{\infty})$ is generated by $\{\mathds W^{a(n)}_{b,0}\:|\: 1\le a,b\le K, n\in\mathbb Z_{\ge 1}\}$. Moreover $\mathrm{gr}_V\mathrm{Zhu}(\mathsf W^{(K)}_{\infty})$ is isomorphic to the universal enveloping algebra $U(\gl_K[z])\otimes\mathbb C[\epsilon_1,\epsilon_2,\mathsf c]$. The pullback of the vertical filtration $V_\bullet\mathsf A^{(K)}$ induces a filtration $V_{\bullet}Y_{\epsilon_1}(\gl_K)$ on the Yangian, and $\mathrm{gr}_VY_{\epsilon_1}(\gl_K)\cong U(\gl_K[z])\otimes\mathbb C[\epsilon_1]$. It follows that the map $Y_{\epsilon_1}(\gl_K)[\epsilon_2,\mathsf c]\to \mathrm{Zhu}(\mathsf W^{(K)}_{\infty})$ becomes an isomorphism after passing to associated graded algebra with respect to $V_\bullet$, whence itself is an isomorphism.
\end{proof}

\subsection{Linear degeneration limit}

\begin{proposition}\label{prop: linear degeneration}
$\mathfrak{U}(\mathsf W^{(K)}_\infty)/(\epsilon_1=0)$ is a completion of the universal enveloping algebra of the Lie algebra $D_{\epsilon_2}(\mathbb C^{\times})\otimes \gl_K$ centrally extended by $\epsilon_2\mathsf c$ times the standard 2-cocycle.
\end{proposition}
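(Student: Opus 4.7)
The plan is to combine the near-commutativity of OPEs at $\epsilon_1 = 0$ with an explicit identification of the resulting Lie algebra, extending Lemma \ref{lem: E1 to 0 limit} from the fully degenerate locus $\epsilon_2 = \mathsf c = 0$ to general parameters.

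First, I would establish that only linear (and central) terms survive in the $\mathds W^{(r)}(z) \mathds W^{(s)}(w)$ OPE at $\epsilon_1 = 0$. This follows by extending the argument in the proof of Lemma \ref{lem: integral basis}: the rescaled currents $\tilde W^{a(r)}_b = \epsilon_1^{r-1}\mathds W^{a(r)}_b$ satisfy $\tilde J^a_b(z)\tilde J^c_d(w) = \mathcal O(\epsilon_1)$, so every structure constant in the $\tilde W$-basis is divisible by $\epsilon_1$. Rewriting $\mathds W^{a(r)}_b(z) \mathds W^{c(s)}_d(w) = \epsilon_1^{-2}\tilde W^{a(r)}_b(z) \tilde W^{c(s)}_d(w)$ and using $:\tilde W^{(r_1)}\cdots \tilde W^{(r_k)}: = \epsilon_1^k :\mathds W^{(r_1)}\cdots \mathds W^{(r_k)}:$ shows the $k$-fold normally-ordered terms carry a factor $\epsilon_1^{k-1}$ for $k \geq 2$, hence vanish at $\epsilon_1 = 0$. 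Consequently $\mathfrak U(\mathsf W^{(K)}_\infty)/(\epsilon_1)$ is the completed universal enveloping algebra of the Lie algebra $\mathfrak L$ spanned by the modes $\mathds W^{a(s)}_{b,n}$, whose bracket is encoded by the surviving linear and central OPE terms.

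Next, I would construct the Lie algebra isomorphism $\phi$ to the claimed model. Set $D_{\epsilon_2}(\mathbb C^\times) = \mathbb C[\epsilon_2]\langle x^{\pm 1}, y\rangle/([y,x]=\epsilon_2)$, and let $\mathfrak L_0$ be the central extension of $D_{\epsilon_2}(\mathbb C^\times) \otimes \gl_K$ by $\epsilon_2 \mathsf c$ times the standard 2-cocycle. Define
\[ \phi:\; x^{n+s-1}(-y)^{s-1} E^a_b \,\longmapsto\, \mathds W^{a(s)}_{b,n}\ \pmod{\epsilon_1}, \qquad K \mapsto \epsilon_2 \mathsf c, \]
extending \eqref{eqn: mode algebra of W(infinity) at degenerate limit}. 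To check $\phi$ is a Lie algebra homomorphism, one reads off $[\mathds W^{a(1)}_{b,k}, \mathds W^{c(s)}_{d,m}]$ from \eqref{eqn:  dsW1 dsWn OPE} at $\epsilon_1 = 0$: the first-order pole contributes the action of $\gl_K$ on $D_{\epsilon_2}(\mathbb C^\times)$, the intermediate poles encode the relation $[y,x] = \epsilon_2$ through the binomial coefficients, and the highest pole $1/(z-w)^{s+1}$ reproduces the central cocycle with coefficient proportional to $\epsilon_2 \mathsf c$. For general $[\mathds W^{(r)}, \mathds W^{(s)}]$ with $r \geq 2$, I would bootstrap via Proposition \ref{prop: Psi_inf filtered} and Lemma \ref{lem: map to W(infinity) modulo epsilon_1}, which express $\mathds W^{a(r+1)}_{b,\cdot}$ inductively as a bracket with $\mathds W^{c(3)}_{c,-2}$ modulo lower vertical filtration, and then conclude by Jacobi together with the previously established $\mathds W^{(1)}$ case.

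Finally, $\phi$ is an isomorphism: both sides are $\mathbb Z$-graded and $\mathbb C[\epsilon_2, \mathsf c]$-flat (by the Lie-algebra PBW theorem and the explicit PBW basis of $\mathsf W^{(K)}_\infty$ respectively), and Lemma \ref{lem: E1 to 0 limit} identifies both sides modulo $(\epsilon_2, \mathsf c)$ with $U(\mathscr O(\mathbb C\times \mathbb C^\times)\otimes \gl_K)$. A graded Nakayama argument on each finite-dimensional graded component upgrades this to an isomorphism over $\mathbb C[\epsilon_2, \mathsf c]$, which after passing to completions in the mode-algebra topology gives the proposition. The main obstacle will be the explicit cocycle bookkeeping in the previous paragraph: while $\mathds W^{(1)}\mathds W^{(s)}$ is handled directly by \eqref{eqn:  dsW1 dsWn OPE}, fixing the precise normalization of the "standard" 2-cocycle on $D_{\epsilon_2}(\mathbb C^\times) \otimes \gl_K$ and matching it term-by-term with the top pole of $\mathds W^{(r)}\mathds W^{(s)}$ for general $r,s$ is delicate; one may need to supplement the OPE data from \eqref{eqn:  dsW1 dsWn OPE} with the pseudodifferential-symbol expression for $C^{ac(r,s)}_{bd}(\epsilon_1,\epsilon_2,\lambda)$ sketched after the definition of the integral form.
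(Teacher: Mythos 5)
Your first step (only linear and central terms survive in the OPE at $\epsilon_1=0$, so the quotient is a completed enveloping algebra of a Lie algebra spanned by the modes) is correct and is exactly Lemma \ref{lem: E1 to 0 limit}. The gap is in how you identify that Lie algebra with $D_{\epsilon_2}(\mathbb C^{\times})\otimes\gl_K$. The map $\phi: x^{n+s-1}(-y)^{s-1}E^a_b\mapsto \mathds W^{a(s)}_{b,n}$ is \emph{not} a Lie algebra homomorphism at nonzero $\epsilon_2$: the correct identification necessarily carries correction terms of the form $\epsilon_2\cdot(\text{linear combination of }\mathds W^{(i)}_{m-n},\ i\le n)$, as in \eqref{eqn: map to W(infinity) modulo epsilon_1}, and the existence of a consistent choice of such corrections is essentially the content of the proposition — it cannot be written down in advance and then ``checked''. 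Moreover your verification strategy does not close: the bootstrap via $\mathrm{ad}_{\mathds W^{a(3)}_{a,-2}}$ together with Proposition \ref{prop: Psi_inf filtered} only controls $[\mathds W^{(3)},\mathds W^{(s)}]$ \emph{modulo} undetermined $\mathcal O(\epsilon_2)$ lower-order terms (see \eqref{eqn: [W^3,W^s] modulo epsilon_1}), and it only reaches modes $\mathds W^{a(n+1)}_{b,m-n}$ with $m\ge 0$; the brackets involving deeply negative modes, which are precisely the new content beyond $D_{\epsilon_2}(\mathbb C)\otimes\gl_K$, are never pinned down. The paper avoids all of this: it first identifies the non-negative-mode subalgebra with $U(D_{\epsilon_2}(\mathbb C)\otimes\gl_K)$ using the already-established injectivity and surjectivity of $\Psi_\infty$ on $\mathsf D^{(K)}/(\epsilon_1)$ (Corollary \ref{cor: A is DDCA}, Proposition \ref{prop: embed A^(K) into mode algebra of W(infinity)}), and then extends to $\mathbb C^{\times}$ by comparing the meromorphic shift map $S_{\mathsf W}(w)$ with the shift embedding $\Delta(w)_{0,1}$ of $D_{\epsilon_2}(\mathbb C^{\times})\otimes\gl_K$ into $D_{\epsilon_2}(\mathbb C)\otimes\gl_K(\!(w^{-1})\!)$, which transports the known structure to the negative modes without any further OPE computation.

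Two smaller problems: the graded components of $U(D_{\epsilon_2}(\mathbb C^{\times})\otimes\gl_K)$ are infinite-dimensional over $\mathbb C[\epsilon_2,\mathsf c]$ (e.g.\ degree zero contains all $x^ny^n$), so the ``graded Nakayama'' step as stated does not apply; and matching the top pole ``term by term'' with a standard cocycle is only well-posed up to coboundary, which shifts whenever the $\epsilon_2$-corrections above are adjusted. The paper instead characterizes the cocycle by $\GL_K$-invariance, single-trace form, and its reduction mod $\epsilon_2$, which determines it up to coboundary and forces it to equal $\epsilon_2\mathsf c\cdot\omega$.
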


\begin{proof}
By the Lemma \ref{lem: E1 to 0 limit}, $\mathfrak{U}(\mathsf W^{(K)}_\infty)/(\epsilon_1=0)$ is a completion of the universal enveloping algebra of the Lie algebra $\mathscr O_{\epsilon_2,\mathsf c}(\mathbb C\times \mathbb C^{\times})\otimes\gl_K$ which is a flat deformation of $\mathscr O(\mathbb C\times \mathbb C^{\times})\otimes\gl_K$ over the base ring $\mathbb C[\epsilon_2,\mathsf c]$. 

We claim that $\mathscr O_{\epsilon_2,0}(\mathbb C\times \mathbb C^{\times})\otimes\gl_K$ is isomorphic to $D_{\epsilon_2}(\mathbb C^{\times})\otimes \gl_K$. To prove this claim, we begin with the positive mode $\mathscr O_{\epsilon_2,0}(\mathbb C\times \mathbb C)\otimes\gl_K$. It follows from \eqref{eqn: map to W(infinity) modulo epsilon_1} that $\Psi_{\infty}$ maps ${\mathsf D}^{(K)}/(\epsilon_1)$ to $ U(\mathscr O_{\epsilon_2,0}(\mathbb C\times \mathbb C)\otimes\gl_K)$ and according to Proposition \ref{prop: embed A^(K) into mode algebra of W(infinity)} this map is injective. It is also surjective because every generator $\mathds W^{a(s)}_{b,n},\;(n\ge 1-s)$ is in the image of $\Psi_{\infty}$ by \eqref{eqn: map to W(infinity) modulo epsilon_1}. Thus $\Psi_{\infty}$ induces an isomorphism ${\mathsf D}^{(K)}/(\epsilon_1)\cong U(\mathscr O_{\epsilon_2,0}(\mathbb C\times \mathbb C)\otimes\gl_K)$, and using the Corollary \ref{cor: A is DDCA} we arrive at an isomorphism
\begin{align}\label{eqn: mode alg mod c and epsilon_1}
    D_{\epsilon_2}(\mathbb C)\otimes \gl_K\cong \mathscr O_{\epsilon_2,0}(\mathbb C\times \mathbb C)\otimes\gl_K
\end{align} 
which maps $E^a_bx^m(\epsilon_2\partial_x)^n$ to $(-1)^n\mathds W^{a(n+1)}_{b,m-n}+\epsilon_2\cdot(\text{linear combination of $\mathds W^{(i)}_{m-n}$ for $i\le n$})$.

Next, consider the meromorphic coproduct for the restricted mode algebra
$$\Delta_{\mathsf W^{(K)}_{\infty}}:U(\mathsf W^{(K)}_{\infty})\to U(\mathsf W^{(K)}_{\infty})\otimes U_+(\mathsf W^{(K)}_{\infty})(\!(w^{-1})\!)$$
which is a special case of the general construction in Lemma \ref{lem: meromorphic coproduct for restricted modes}, and compose it with the map $\mathfrak C_{\mathsf W}\otimes 1$ where $\mathfrak C_{\mathsf W}$ is the truncation map defined in \eqref{eqn: augmentation of W(infinity)}, then modulo $\epsilon_1,\mathsf c$ on both the domain and codomain, we get a $\mathbb C[\epsilon_2]$-algebra map $S_{\mathsf W}(w):U(\mathsf W^{(K)}_{\infty}/(\epsilon_1,\mathsf c))\to U_+(\mathsf W^{(K)}_{\infty}/(\epsilon_1,\mathsf c))(\!(w^{-1})\!)$. Restrict the domain of $\Delta_{\mathsf W^{(K)}_{\infty}}$ to the subalgebra $U(\mathscr O_{\epsilon_2,0}(\mathbb C\times \mathbb C^{\times})\otimes\gl_K)$, then we find the image is contained in $U(\mathscr O_{\epsilon_2,0}(\mathbb C\times \mathbb C)\otimes\gl_K)(\!(w^{-1})\!)$. In fact it is easy to compute that
\begin{align}\label{eqn: shift map of W(infinity) modulo epsilon_1}
   S_{\mathsf W}(w)(\mathds W^{a(s)}_{b,n})=\sum_{m=1-s}^{\infty} \binom{n+s-1}{m+s-1}w^{n-m}\mathds W^{a(s)}_{b,m},\quad (n\in \mathbb Z).
\end{align}
In particular, $S_{\mathsf W}(w)$ is induced by a Lie algebra map $\mathscr O_{\epsilon_2,0}(\mathbb C\times \mathbb C^{\times})\otimes\gl_K\to \mathscr O_{\epsilon_2,0}(\mathbb C\times \mathbb C)\otimes\gl_K (\!(w^{-1})\!)$. Modulo $\epsilon_2$, this map is exactly a special case of the map $\Delta(w)_{0,1}$ that will be constructed in \eqref{eqn: meromorphic coproduct finite N}, in particular $S_{\mathsf W}(w)$ is injective modulo $\epsilon_2$. Since both $\mathscr O_{\epsilon_2,0}(\mathbb C\times \mathbb C^{\times})\otimes\gl_K$ and $\mathscr O_{\epsilon_2,0}(\mathbb C\times \mathbb C)\otimes\gl_K (\!(w^{-1})\!)$ are flat over $\mathbb C[\epsilon_2]$, we conclude that $S_{\mathsf W}(w)$ is injective. Direct computation using the formula \eqref{eqn: shift map of W(infinity) modulo epsilon_1} and the identification \eqref{eqn: mode alg mod c and epsilon_1} shows that 
$$S_{\mathsf W}(w)(\mathds W^{a(1)}_{b,n})=\Delta(w)_{0,1}(E^a_bx^{n}),\;\forall n\in \mathbb Z,$$
where $\Delta(w)_{0,1}: D_{\epsilon_2}(\mathbb C^{\times})\otimes\gl_K\hookrightarrow D_{\epsilon_2}(\mathbb C)\otimes\gl_K(\!(w^{-1})\!)$ is the map constructed in \eqref{eqn: meromorphic coproduct finite N}. Since $D_{\epsilon_2}(\mathbb C^{\times})\otimes\gl_K$ is generated by the subalgebras $D_{\epsilon_2}(\mathbb C)\otimes\gl_K$ and $\mathscr O(\mathbb C^{\times})\otimes \gl_K$ when $\epsilon_2$ is invertible, we deduce that $\Delta(w)_{0,1}(D_{\epsilon_2}(\mathbb C^{\times})\otimes\gl_K[\epsilon_2^{-1}])\subset S_{\mathsf W}(w)(\mathscr O_{\epsilon_2,0}(\mathbb C\times \mathbb C^{\times})\otimes\gl_K[\epsilon_2^{-1}])$. This inclusion further implies that $\Delta(w)_{0,1}(D_{\epsilon_2}(\mathbb C^{\times})\otimes\gl_K)\subset S_{\mathsf W}(w)(\mathscr O_{\epsilon_2,0}(\mathbb C\times \mathbb C^{\times})\otimes\gl_K)$ because the cokernel of $S_{\mathsf W}(w)$ has no $\epsilon_2$-torsion (since $S_{\mathsf W}(w)$ is injective modulo $\epsilon_2$). Taking the composition $S_{\mathsf W}(w)^{-1}\circ \Delta(w)_{0,1}$ we get an embedding $D_{\epsilon_2}(\mathbb C^{\times})\otimes\gl_K\subset \mathscr O_{\epsilon_2,0}(\mathbb C\times \mathbb C^{\times})\otimes\gl_K$. Moreover, by inductively applying adjoint actions of $\frac{1}{\epsilon_2}x(\epsilon_2\partial_x)^2=\frac{1}{\epsilon_2}\mathds W^{c(3)}_{c,-1}+h\mathds W^{c(2)}_{c,-1}$ for some $h\in \mathbb C[\epsilon_2]$, starting from $E^a_bx^{-m}=\mathds W^{a(1)}_{b,-m}$, we can show that for all $m>0$, $$E^a_bx^{-m}(\epsilon_2\partial_x)^n\mapsto (-1)^n\mathds W^{a(n+1)}_{b,-m-n}+\epsilon_2\cdot(\text{linear combination of $\mathds W^{(i)}_{-m-n}$ for $i\le n$}).$$ In particular the embedding $D_{\epsilon_2}(\mathbb C^{\times})\otimes\gl_K\subset \mathscr O_{\epsilon_2,0}(\mathbb C\times \mathbb C^{\times})\otimes\gl_K$ is surjective. This proves our claim.

Finally, there is a unique cocycle $\omega$ up to coboundary on the Lie algebra $D_{\epsilon_2}(\mathbb C^{\times})\otimes\gl_K$ such that it is $\GL_K$ invariant, involves a single trace in $\gl_K$, and modulo $\epsilon_2$ it equals to 
\begin{align*}
    \omega (f,g)=\frac{1}{2\pi i}\oint_{|x|=1,y=0}\mathrm{Tr}f\partial g.
\end{align*}
The cocycle that governs the central extension $\mathscr O_{\epsilon_2,\mathsf c}(\mathbb C\times \mathbb C^{\times})\otimes\gl_K$ has the same property as $\epsilon_2\mathsf c\cdot \omega$, so it must be equal to $\epsilon_2\mathsf c\cdot \omega$. This finishes the proof.
\end{proof}

From the above discussions we conclude that
\begin{align}\label{eq: linear degeneration}
    \mathsf W^{(K)}_\infty/(\epsilon_1=0)\cong U_{\epsilon_2\mathsf c}(D_{\epsilon_2}(\mathbb C^{\times})\otimes \gl_K)\otimes_{U(D_{\epsilon_2}(\mathbb C)\otimes \gl_K)}\mathbb C[\epsilon_2],
\end{align}
this vertex algebra is also known as matrix extended linear $\mathcal W_{1+\infty}$ algebra \cite{Costello:2016nkh,ben2010symmetry}.

\bigskip Using the free $\beta\gamma$-$bc$ system description of linear $\mathcal W_{1+\infty}$ algebra in \cite[Section 15]{Costello:2016nkh}, we can write down the isomorphism \eqref{eq: linear degeneration} explicitly. Namely, let $D^{\mathrm{ch}}(T^*\Hom(\mathbb C^K,\mathbb C^{N|M}))$ be $(N+M)K$ pairs of $\beta\gamma$-$bc$ systems, i.e. the vertex algebra freely generated by $\{\beta^a_i,\gamma^i_a\:|\:1\le a\le K,1\le i\le N+M\}$ where $\beta^a_i,\gamma^i_a$ is bosonic for $1\le i\le N$ and is fermionic for $N<i\le N+M$, with the OPEs:
\begin{align*}
    \gamma^i_a(z)\beta^b_j(w)\sim \frac{\delta^b_a\delta^i_j}{z-w}.
\end{align*}
Denote $$\mathscr O^{a(m)}_{b}(z)=:\partial^{m-1}\beta^a_i(z)\gamma^i_b(z):,\quad 1\le a,b,\le K,\quad m\in \mathbb N_{>0}.$$ 
Then $\mathscr O^{a(m)}_{b}(z)$ generates the $\GL_{N|M}$-invariants $\mathcal F^{(K)}_{N|M}=D^{\mathrm{ch}}(T^*\Hom(\mathbb C^K,\mathbb C^{N|M}))^{\GL_{N|M}}$.
It is straightforward to compute the OPEs:
\begin{equation}\label{OmOn OPE}
\begin{split}
\mathscr O^{a(m)}_{b}(z)\mathscr O^{c(n)}_{d}(w)\sim\: &(-1)^m(N-M)\frac{m!n!\delta^a_d\delta^c_b}{(z-w)^{m+n}}+\sum_{\ell=0}^{n-1}\frac{n!\delta^c_b}{\ell!(z-w)^{n-\ell}}\mathscr O^{a(m+\ell)}_{d}(w)\\
&+\sum_{\substack{i,j\in \mathbb Z_{\ge 0}\\i+j<m}}\frac{(-1)^{m-j}m!\delta^a_d}{i!j!(z-w)^{m-i-j}}\partial^i\mathscr O^{c(n+j)}_{b}(w).
\end{split}
\end{equation}
We note that the structure constants in $\mathcal F^{(K)}_{N|M}$ only depends on the difference $N-M$. In fact, there is a natural vertex algebra projection $\mathcal F^{(K)}_{N+1|M+1}\twoheadrightarrow \mathcal F^{(K)}_{N|M}$ that maps $\mathscr O^{a(m)}_{b}(z)$ to $\mathscr O^{a(m)}_{b}(z)$. Let us fix $L=N-M$, and define $\mathcal F^{(K)}_{L+\infty|\infty}$ to be the vertex algebra freely generated by $\{\mathscr O^{a(m)}_{b}\:|\:  1\le a,b,\le K,\: m\in \mathbb N_{>0}\}$ with OPE \eqref{OmOn OPE}. Combine \cite[Proposition 15.3.7]{Costello:2016nkh} with Proposition \ref{prop: linear degeneration}, we arrive at the following identification.
\begin{lemma}
There exists a vertex algebra isomorphism
\begin{align}
    \mathsf W^{(K)}_\infty/(\epsilon_1=0,\epsilon_2=1,\mathsf c=L)\cong \mathcal F^{(K)}_{L+\infty|\infty}.
\end{align}
Moreover, the map is such that
\begin{align}\label{W to O}
    \mathds W^{a(n)}_b(z)\mapsto \mathscr O^{a(n)}_b(z)+\sum_{\ell=1}^{n-1}\lambda_{n,\ell} \partial^{n-\ell}\mathscr O^{a(\ell)}_b(z)+\delta^a_b\sum_{\ell=0}^{n-1}\mu_{n,\ell} \partial^{n-\ell}\mathscr O^{c(\ell)}_c(z).
\end{align}
where $\lambda_{n,\ell},\mu_{n,\ell}\in \mathbb C$.
\end{lemma}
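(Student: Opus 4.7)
My plan is to combine Proposition~\ref{prop: linear degeneration} with the free field realization of the linear matrix $\mathcal W_{1+\infty}$ algebra from \cite[Proposition 15.3.7]{Costello:2016nkh}. Setting $\epsilon_2=1$ and $\mathsf c=L$ in Proposition~\ref{prop: linear degeneration}, we see that the mode algebra of $\mathsf W^{(K)}_\infty/(\epsilon_1=0,\epsilon_2=1,\mathsf c=L)$ is (a completion of) the universal enveloping algebra of $D(\mathbb C^{\times})\otimes\gl_K$ centrally extended by $L$ times the standard $2$-cocycle. This is precisely the algebra whose vertex-algebra incarnation is identified in \emph{loc.~cit.} with the $\GL_{N|M}$-invariants of the $\beta\gamma$-$bc$ system on $T^*\Hom(\mathbb C^K,\mathbb C^{N|M})$ whenever $N-M=L$. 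Taking the inverse limit along the projections $\mathcal F^{(K)}_{N+1|M+1}\twoheadrightarrow \mathcal F^{(K)}_{N|M}$ (which are well-defined because the OPE \eqref{OmOn OPE} depends on $N-M$ only) yields the claimed vertex algebra isomorphism with $\mathcal F^{(K)}_{L+\infty|\infty}$.

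To determine the explicit form \eqref{W to O} of the isomorphism on the generators $\mathds W^{a(n)}_b(z)$, I would compare the two natural filtrations. On the left hand side, the vertical filtration from Subsection~\ref{subsec: vertical filtration on W_inf} descends modulo $\epsilon_1$ to the order filtration on $D(\mathbb C^{\times})\otimes\gl_K$ under the identification of Proposition~\ref{prop: linear degeneration}; on the right hand side, the analogous filtration assigning $\partial^k\mathscr O^{a(\ell)}_b$ degree $\ell-1$ corresponds to the same order filtration, since $\mathscr O^{a(\ell)}_b$ realizes the symbol $E^a_b\partial^{\ell-1}$. By Proposition~\ref{prop: Psi_inf filtered} together with \eqref{eqn: mode alg mod c and epsilon_1}, the leading term of $\mathds W^{a(n)}_b(z)$ in this filtration is $E^a_b\partial^{n-1}$, which maps to $\mathscr O^{a(n)}_b(z)$. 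Hence the image of $\mathds W^{a(n)}_b(z)$ starts with $\mathscr O^{a(n)}_b(z)$ and the remaining corrections lie in filtration degree strictly less than $n-1$.

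The form of these corrections is then forced by two constraints. First, conformal weight: both $\mathds W^{a(n)}_b(z)$ and each correction term must have weight $n$, which rigidly determines the power of $\partial$ applied to $\mathscr O^{c(\ell)}_d$ in terms of $\ell$. Second, $\gl_K$-equivariance of the isomorphism together with the fact that there are exactly two independent $\gl_K$-tensor structures connecting $E^a_b$ with the invariants $\mathscr O^{c(\ell)}_d$ and $\delta^c_d \mathscr O^{e(\ell)}_e$ (namely the identity on the traceless part and the trace times $\delta^a_b$) leaves only scalar freedom, which is absorbed into $\lambda_{n,\ell}$ and $\mu_{n,\ell}$. This gives the claimed shape \eqref{W to O}.

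The main obstacle is the rigorous invocation of Costello's free field realization: one must confirm that the $\epsilon_2$-corrections in \eqref{eqn: map to W(infinity) modulo epsilon_1} match the quantum (normal-ordered) rather than classical symbol map, and that the completions agree. The cleanest way around this is a generating argument: check by direct OPE computation (using \eqref{eqn: dsW1 dsWn OPE} modulo $\epsilon_1$ and \eqref{OmOn OPE} with $m=n=1$) that the spin-one currents $\mathds W^{a(1)}_b$ and $\mathscr O^{a(1)}_b$ generate isomorphic affine Kac--Moody vertex subalgebras at level $L$, and then extend the isomorphism to the entire algebra using the strong generation of both sides together with the uniqueness of the extension forced by the filtration and equivariance arguments above.
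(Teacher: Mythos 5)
Your proposal is correct and follows essentially the same route as the paper, which likewise obtains the isomorphism by combining Proposition~\ref{prop: linear degeneration} with \cite[Proposition 15.3.7]{Costello:2016nkh} and leaves the triangular form \eqref{W to O} to the implicit weight/filtration/$\gl_K$-equivariance considerations you spell out. The extra detail you supply (matching the vertical filtration with the order filtration and counting the two available $\gl_K$-tensor structures) is a faithful elaboration of what the paper takes for granted rather than a different argument.
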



The coefficients $\lambda_{n,\ell},\mu_{n,\ell}$ in \eqref{W to O} can be constrained using the OPE. In fact, we have the following explicit form of an isomorphism.
\begin{theorem}
The map $\mathds W^{a(n)}_b(z)\mapsto \mathscr O^{a(n)}_b(z)$ generates a vertex algebra isomorphism between $\mathsf W^{(K)}_\infty/(\epsilon_1=0,\epsilon_2=1,\mathsf c=L)$ and $\mathcal F^{(K)}_{L+\infty|\infty}$.
\end{theorem}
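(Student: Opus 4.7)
The preceding lemma already provides an isomorphism $\phi_0 : \mathsf W^{(K)}_\infty/(\epsilon_1=0,\epsilon_2=1,\mathsf c=L) \xrightarrow{\sim} \mathcal F^{(K)}_{L+\infty|\infty}$ of the triangular form \eqref{W to O}. The plan is to show that the correction coefficients $\lambda_{n,\ell}, \mu_{n,\ell}$ can be arranged to vanish, so that the specific diagonal assignment $\phi(\mathds W^{a(n)}_b)=\mathscr O^{a(n)}_b$ also works. My approach is Lie-algebraic: I would realize both vertex algebras as universal enveloping vertex algebras of a single Lie algebra $\widehat{\mathsf L}$, and exhibit a Lie-algebra isomorphism implementing the desired identification on modes.

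By Proposition \ref{prop: linear degeneration}, the source is the universal enveloping vertex algebra of the central extension $\widehat{\mathsf L}$ of $\mathsf L := D(\mathbb C^\times)\otimes \gl_K$ by $L$ times the standard 2-cocycle, with the modes $\mathds W^{a(s)}_{b,n}$ forming a topological basis of $\widehat{\mathsf L}$ via the extension of \eqref{eqn: map to W(infinity) modulo epsilon_1} to all modes. I would perform the parallel analysis for the target: use the $\beta\gamma$-$bc$ free-field realization to identify $\mathscr O^{a(s)}_{b,n}$ as (the mode-algebra action of) a specific element of $\widehat{\mathsf L}$. Concretely, under the symbol substitution $\beta\leftrightarrow\partial_x$, $\gamma\leftrightarrow x$, the composite $:\partial^{s-1}\beta^a_i\gamma^i_b:$ corresponds to $(-\partial_x)^{s-1}x^{n+s-1}\otimes E^a_b$ modulo lower-order normal-ordering corrections, and the central charge is $L = N-M$ because that is the superdimension of the Fock-space index set.

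Both sides are therefore identified with the universal enveloping vertex algebra of $\widehat{\mathsf L}$, and the remaining task is to verify that the two identifications differ only by a Lie-algebra automorphism that fixes every generator $\mathscr O^{a(s)}_{b,n}\leftrightarrow \mathds W^{a(s)}_{b,n}$. Since both identifications agree on the spin-$1$ Kac–Moody subalgebra (reflecting the fact that the correction term in \eqref{W to O} vanishes at $n=1$) and both assign the same leading symbol $(-\partial_x)^{s-1}x^{n+s-1}\otimes E^a_b$ to the spin-$s$ generator (consistent with Proposition \ref{prop: Psi_inf filtered} and the associated-graded description $\mathrm{gr}_V\mathsf W^{(K)}_{\infty}\cong V^{\epsilon_3\mathsf c,\epsilon_1\mathsf c}(\gl_K[z])$ from Subsection \ref{subsec: vertical filtration on W_inf}), the lower-order corrections are forced to match by the irreducibility of the adjoint action of the spin-$1$ subalgebra on the higher-spin quotients $V_s/V_{s-1}$ in the vertical filtration. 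The principal obstacle is the Fock-space computation of the central charge and the precise form of the normal-ordering corrections in the second step; once those are in hand, the conclusion follows essentially by symmetry and dimension count.
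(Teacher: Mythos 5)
Your reduction of both sides to the enveloping vertex algebra of the centrally extended Lie algebra $\widehat{\mathsf L}=D(\mathbb C^{\times})\otimes\gl_K\oplus\mathbb C L$ is fine (it is Proposition \ref{prop: linear degeneration} on one side and the cited free-field identification on the other), but the step where you claim the diagonal assignment is forced ``by the irreducibility of the adjoint action of the spin-$1$ subalgebra on the higher-spin quotients $V_s/V_{s-1}$'' is a genuine gap, and it is exactly where the content of the theorem lives. The candidate correction terms $\lambda_{n,\ell}\,\partial^{n-\ell}\mathscr O^{a(\ell)}_b$ and $\mu_{n,\ell}\,\delta^a_b\partial^{n-\ell}\mathscr O^{c(\ell)}_c$ in \eqref{W to O} lie in $V_{\ell-1}\subset V_{n-2}$, so they vanish in $\mathrm{gr}_V$ and are invisible to any leading-symbol comparison; likewise, every map of the form $\mathds W^{(n)}\mapsto \mathscr O^{(n)}+(\text{corrections})$ restricts to the identity on the spin-$1$ Kac--Moody subalgebra. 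Hence agreement of symbols plus agreement on spin $1$ cannot distinguish the diagonal map from the corrected ones, and no ``symmetry and dimension count'' closes this. (Also note $V_{n-1}/V_{n-2}\cong\gl_K\otimes z^{n-1}$ carries the adjoint $\gl_K$-action, which is not irreducible, but the real problem is that the corrections do not live there at all.)

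What actually kills the corrections is an OPE computation, and it is sensitive to the central charge: the paper compares the $\mathds W^{(1)}\mathds W^{(n)}$ OPE with \eqref{OmOn OPE} and finds, inductively in $n$, that the second-order pole forces $\lambda_{n,\ell}=\mu_{n,\ell}=0$ for $1<\ell<n$, and then the top-order pole of $\mathds W^{c(1)}_d\mathds W^{a(n)}_b$ produces the obstruction $-n!\,L(\delta^c_b\delta^a_d\lambda_{n,1}+\delta^a_b\delta^c_d\mu_{n,1})$, as in \eqref{W1Wn-O1On}. This forces $\lambda_{n,1}=\mu_{n,1}=0$ only when $L\neq 0$; when $L=0$ the constraint is vacuous and your argument (in any form) cannot conclude. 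The paper's fix for $L=0$ is to observe that the resulting OPE \eqref{WmWn OPE, epsilon_1=0} holds for all nonzero integers $\mathsf c$ and that the structure constants of $\mathsf W^{(K)}_\infty$ are polynomial in $\mathsf c$ (Proposition \ref{prop: polynomiality}), so the identity extends to $\mathsf c=0$. Your proposal addresses neither the $L$-dependence of the constraint nor the degenerate case $L=0$, so as written it does not establish the theorem.
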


\begin{proof}
Our strategy is to compare the $\mathds W^{(1)}\mathds W^{(n)}$ OPE with the $\mathscr O^{(1)}\mathscr O^{(n)}$ OPE, it turns out that if $L\neq 0$ then this will completed fix the coefficients $\lambda_{n,\ell},\mu_{n,\ell}$ in \eqref{W to O} to be zero. The case $L=0$ will be deduced from the $L\neq 0$ cases using the polynomiality of the OPE coefficients in $\mathsf W^{(K)}_\infty$.

Let us first assume that $L\neq 0$. Then we claim that $\lambda_{n,\ell},\mu_{n,\ell}$ in \eqref{W to O} must be zero for all $n,\ell$. We prove the claim by induction on $n$. The claim automatically holds for $n=0$. Suppose that $n>0$ and the claim holds for all $n'$ such that $n'<n$. We shall abuse the notation by denoting the image of $\mathds W$ by the same symbol. Using \eqref{W^1W^n OPE} and \eqref{OmOn OPE}, we get
\begin{align*}
   \mathds W^{c(1)}_c(z)\mathds W^{a(n)}_b(w)-\mathscr O^{c(1)}_{c}(z)\mathscr O^{a(n)}_{b}(w)\sim & \left[\frac{-\mathsf c(n-1)!\delta^a_b}{(z-w)^{n+1}}+\sum_{i=1}^{n-1}\frac{(n-1)!}{(i-1)!}\frac{\mathds W^{a(i)}_b(w)}{(z-w)^{n+1-i}}\right]\\
   &-\left[\frac{-\mathsf c(n-1)!\delta^a_b}{(z-w)^{n+1}}+\sum_{i=1}^{n-1}\frac{(n-1)!}{(i-1)!}\frac{\mathscr O^{a(i)}_b(w)}{(z-w)^{n+1-i}}\right].
\end{align*}
By the induction hypothesis, the right-hand-side of the above OPE vanishes. On the other hand, $\mathds W^{a(n)}_b(z)$ is mapped to $\mathscr O^{a(n)}_b(z)+\sum_{\ell=1}^{n-1}\lambda_{n,\ell} \partial^{n-\ell}\mathscr O^{a(\ell)}_b(z)+\delta^a_b\sum_{\ell=1}^{n-1}\mu_{n,\ell} \partial^{n-\ell}\mathscr O^{c(\ell)}_c(z)$. Thus we have
\begin{align*}
     \mathds W^{c(1)}_c(z)\mathds W^{a(n)}_b(w)-\mathscr O^{c(1)}_{c}(z)\mathscr O^{a(n)}_{b}(w)=\sum_{\ell=1}^{n-1}\lambda_{n,\ell} \mathscr O^{c(1)}_{c}(z) \partial^{n-\ell}\mathscr O^{a(\ell)}_b(w)+\delta^a_b\sum_{\ell=1}^{n-1}\mu_{n,\ell} \mathscr O^{c(1)}_{c}(z)\partial^{n-\ell}\mathscr O^{d(\ell)}_d(w),
\end{align*}
and the $\frac{1}{(z-w)^2}$ term on the right-hand-side is
\begin{align*}
    \frac{1}{(z-w)^2}\left[\sum_{\ell=2}^{n-1}\lambda_{n,\ell} (\ell-1) \partial^{n-\ell}\mathscr O^{a(\ell-1)}_b(w)+\delta^a_b\sum_{\ell=2}^{n-1}\mu_{n,\ell}(\ell-1) \partial^{n-\ell}\mathscr O^{d(\ell-1)}_d(w)\right],
\end{align*}
which must vanish. Thus $\lambda_{n,\ell}=\mu_{n,\ell}=0$ for all $1<\ell<n$. Using the aforementioned vanishing result, the following OPE can be simplified as
\begin{align}\label{W1Wn-O1On}
   \mathds W^{c(1)}_d(z)\mathds W^{a(n)}_b(w)-\mathscr O^{c(1)}_{d}(z)\mathscr O^{a(n)}_{b}(w)\sim \frac{\lambda_{n,1}}{z-w}(\delta^a_d\partial^{n-1}\mathscr O^{c(1)}_b(w)-\delta^c_b\partial^{n-1}\mathscr O^{a(1)}_d(w)).
\end{align}
On the hand hand, we can replace $\mathds W^{a(n)}_b(z)$ by $\mathscr O^{a(n)}_b(z)+\lambda_{n,1} \partial^{n-1}\mathscr O^{a(\ell)}_b(z)+\delta^a_b\mu_{n,1} \partial^{n-1}\mathscr O^{c(\ell)}_c(z)$, and get
\begin{align*}
    \mathds W^{c(1)}_d(z)\mathds W^{a(n)}_b(w)-\mathscr O^{c(1)}_{d}(z)\mathscr O^{a(n)}_{b}(w)=\lambda_{n,1} \mathscr O^{c(1)}_{d}(z) \partial^{n-1}\mathscr O^{a(1)}_b(w)+\delta^a_b\mu_{n,1} \mathscr O^{c(1)}_{d}(z)\partial^{n-1}\mathscr O^{e(1)}_e(w),
\end{align*}
and the $\frac{1}{(z-w)^{n+1}}$ term on the right-hand-side is
\begin{align*}
    -n!L\frac{\delta^c_b\delta^a_b\lambda_{n,1}+\delta^a_b\delta^c_d\mu_{n,1}}{(z-w)^{n+1}},
\end{align*}
which must vanish according to \eqref{W1Wn-O1On}. Note that $L\neq 0$ by the assumption, so $\delta^c_b\delta^a_b\lambda_{n,1}+\delta^a_b\delta^c_d\mu_{n,1}=0$. Since $\delta^c_b\delta^a_b$ and $\delta^a_b\delta^c_d$ are linearly independent, we conclude that $\lambda_{n,1}=\mu_{n,1}=0$, thus $\mathds W^{a(n)}_b(z)$ is mapped to $\mathscr O^{a(n)}_b(z)$. This finishes the induction step, thus proving the theorem in the case $L\neq 0$.

In particular, the $\mathds W^{(m)}\mathds W^{(n)}$ OPE in $\mathsf W^{(K)}_{\infty}/(\epsilon_1=0,\epsilon_2=1)$ is
\begin{equation}\label{WmWn OPE, epsilon_1=0}
\begin{split}
\mathds W^{a(m)}_{b}(z)\mathds W^{c(n)}_{d}(w)\sim\: &(-1)^m\mathsf c\frac{m!n!\delta^a_d\delta^c_b}{(z-w)^{m+n}}+\sum_{\ell=0}^{n-1}\frac{n!\delta^c_b}{\ell!(z-w)^{n-\ell}}\mathds W^{a(m+\ell)}_{d}(w)\\
&+\sum_{\substack{i,j\in \mathbb Z_{\ge 0}\\i+j<m}}\frac{(-1)^{m-j}m!\delta^a_d}{i!j!(z-w)^{m-i-j}}\partial^i\mathds W^{c(n+j)}_{b}(w),
\end{split}
\end{equation}
whenever $\mathsf c\in \mathbb Z\backslash\{0\}$. By the polynomiality of the OPE coefficients of $\mathsf W^{(K)}_\infty$ (Proposition \ref{prop: polynomiality}), the OPE \eqref{WmWn OPE, epsilon_1=0} holds for all $\mathsf c$. This implies the theorem in the case $L=0$.
\end{proof}




\subsection{Compatibility between coproduct and meromorphic coproduct}

Applying the functoriality of the meromorphic coproduct of the restricted mode algebra (Proposition \ref{prop: functoriality of meromorphic coproduct}) to the W-algebra coproduct \eqref{eqn: W(infinity) coproduct_integral form}
$\Delta_{\mathsf W}:\mathsf W^{(K)}_{\infty}\to \mathsf W^{(K)}_{\infty}\otimes \mathsf W^{(K)}_{\infty}$, we get the compatibility between the meromorphic coproduct and usual coproduct for the W-algebras:
\begin{equation}\label{cd: compatible W-infinity coproducts}
\Delta_{\mathsf W\otimes \mathsf W}(w)\circ \Delta_{\mathsf W}=(\Delta_{\mathsf W}\otimes \Delta_{\mathsf W})\circ \Delta_{\mathsf W}(w)
\end{equation}

\subsection{Duality isomorphism of \texorpdfstring{$\mathcal W^{(K)}_{\infty}$}{Wk inf}}\label{subsec: duality for W}
The rectangular $\mathcal W_{\infty}$-algebra that we have discussed so far is denoted by $\mathcal W^{(K)}_{0,0,\infty}$ in the literature \cite{gaiotto2022miura}. Previously in Section \ref{subsec: duality of W}, we have discussed the $\widetilde{\mathcal W}^{(K)}_L$, which is denoted by $\mathcal W^{(K)}_{0,L,0}$ in \cite{gaiotto2022miura}. By the construction in Section \ref{subsec: duality of W}, there is a vertex algebra isomorphism $\sigma_L:\mathcal W^{(K)}_L\cong \widetilde{\mathcal W}^{(K)}_L$ such that
\begin{align*}
    \sigma_L(\alpha)=\bar\alpha,\quad \sigma_L(W^{a(r)}_b(z))=(-1)^rU^{b(r)}_{a}(z).
\end{align*}
Such construction is apparently uniform in $L$, thus we obtain the following.
\begin{corollary}\label{cor: W(0,infty,0)}
For every $K\in \mathbb N_{\ge 1}$, there exists a $\mathbb Z$-graded vertex algebra $\widetilde{\mathcal W}^{(K)}_{\infty}$ over the base ring $\mathbb C[\alpha,\lambda]$ with strong generators $U^{a(r)}_b(z),1\le a,b\le K,r=1,2,\cdots$. Moreover there is a vertex algebra isomorphism $\sigma_{\infty}:{\mathcal W}^{(K)}_{\infty}\cong \widetilde{\mathcal W}^{(K)}_{\infty}$ such that
\begin{align*}
    \sigma_{\infty}(\lambda)=\lambda,\quad\sigma_\infty(\alpha)=\bar\alpha:=-\alpha-K,\quad \sigma_\infty(W^{a(r)}_b(z))=(-1)^rU^{b(r)}_{a}(z).
\end{align*}
\end{corollary}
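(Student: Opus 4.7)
The plan is to construct $\widetilde{\mathcal W}^{(K)}_{\infty}$ by repeating the construction of Section \ref{sec: W(infinity)} verbatim, using the dual Miura operator $(\bar\alpha\partial+J^{a[1]}_b(z)E^b_a)\cdots(\bar\alpha\partial+J^{a[L]}_b(z)E^b_a)$ in place of the one used to define $W^{a(r)}_b(z)$. First I would note that the pseudodifferential-symbol machinery of Proposition \ref{prop: formal powers of PsiDS} applies equally well to this dual Miura operator, since the argument only used the fact that $L$ is a positive-integer exponent of a first-order symbol. Consequently, the analogues of Propositions \ref{prop: U-correlators are polynomials} and \ref{prop: correlators are polynomials in L} hold: correlators of the $U^{a(r)}_b$'s in $\widetilde{\mathcal W}^{(K)}_L$ are polynomials in $\bar\alpha$ and $L$, vanishing when $L<\max(r_i)$. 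Running through Subsections 4.2--4.5 with $U$'s in place of $W$'s, one then gets a $\mathbb Z$-graded vertex algebra $\widetilde{\mathcal W}^{(K)}_{\infty}$ over $\mathbb C[\bar\alpha,\lambda]=\mathbb C[\alpha,\lambda]$ strongly generated by $U^{a(r)}_b(z)$, together with truncation maps $\tilde\pi_L:\widetilde{\mathcal W}^{(K)}_{\infty}\to\widetilde{\mathcal W}^{(K)}_L$.

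Next I would build $\sigma_\infty$ by bootstrapping from the finite-$L$ isomorphisms $\sigma_L$ of Subsection \ref{subsec: duality of W}. Define the $\mathbb C[\alpha,\lambda]$-linear map
\[
\sigma_\infty\colon \mathcal W^{(K)}_\infty\longrightarrow \widetilde{\mathcal W}^{(K)}_\infty,\qquad \sigma_\infty(\alpha):=\bar\alpha,\quad \sigma_\infty(\lambda):=\lambda,\quad \sigma_\infty(W^{a(r)}_b(z)):=(-1)^rU^{b(r)}_a(z),
\]
extended to the PBW basis $A^{a_m,s_m}_{b_m,n_m}\cdots A^{a_1,s_1}_{b_1,n_1}$ of Definition \ref{def: vector space of W(infinity)} in the obvious way. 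By construction, for every sufficiently large positive integer $L$, the specialization $\sigma_\infty|_{\lambda=L}$ factors through the truncation $\pi_L$ and recovers $\sigma_L\circ\pi_L$, which is a vertex algebra isomorphism onto $\widetilde{\mathcal W}^{(K)}_L$ composed with $\tilde\pi_L$. Using the nondegeneracy of the pairing $G$ (Lemmas \ref{lem: nondegenerate pairing_finite L}, \ref{lem: nondegenerate pairing_global}) and the polynomial-determination principle of Lemma \ref{lem: eventually rational}, any polynomial identity in $\alpha$ and $\lambda$ that holds at all sufficiently large integers $L$ is an identity. Applying this to the Borcherds identity \eqref{eqn: Borcherds identity} for the images $\sigma_\infty(W^{a_i(r_i)}_b)$, one concludes that $\sigma_\infty$ intertwines the state-operator maps $Y_\lambda$ on both sides and hence is a vertex algebra map. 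It sends the vacuum to the vacuum and admits an obvious inverse constructed the same way (using $\sigma_L^{-1}$), so it is an isomorphism.

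The main (and essentially only) obstacle is verifying that the bijection of PBW generators indeed defines a $\mathbb C[\alpha,\lambda]$-linear map that respects the state-operator maps uniformly in $\lambda$, rather than merely at each integer point. This is overcome by the polynomial-rigidity argument above: both structure-constant tensors (on $\mathcal W^{(K)}_\infty$ and on $\widetilde{\mathcal W}^{(K)}_\infty$) take values in $\mathbb C[\alpha,\lambda]$ by Proposition \ref{prop: polynomiality} and its analogue, and they agree after specialization $\lambda=L$ for all large $L$ because of the finite-$L$ isomorphism $\sigma_L$, so they agree as polynomials. Once $\sigma_\infty$ is known to be a vertex algebra map, the listed properties $\sigma_\infty(\alpha)=\bar\alpha$, $\sigma_\infty(\lambda)=\lambda$, and $\sigma_\infty(W^{a(r)}_b(z))=(-1)^rU^{b(r)}_a(z)$ hold by definition, completing the proof.
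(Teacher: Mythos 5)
Your proposal is correct and follows essentially the same route as the paper, which justifies the corollary precisely by observing that the finite-$L$ duality isomorphisms $\sigma_L$ are uniform in $L$ and hence pass to the large-$L$ limit; your elaboration via the pseudodifferential-symbol polynomiality and the specialize-at-infinitely-many-integers rigidity argument is exactly the mechanism the paper's one-line justification implicitly relies on. No gaps.
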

\noindent $\widetilde{\mathcal W}^{(K)}_{\infty}$ is denoted by $\mathcal W^{(K)}_{0,\infty,0}$ in \cite{gaiotto2022miura}.\\

We define the vertex algebra coproduct 
\begin{align}
    \Delta_{\widetilde{\mathcal W}}:=(\sigma_{\infty}\otimes\sigma_{\infty})\circ\Delta_{\mathcal W}\circ \sigma_{\infty}: \widetilde{\mathcal W}^{(K)}_{\infty}\to \widetilde{\mathcal W}^{(K)}_{\infty}\otimes \widetilde{\mathcal W}^{(K)}_{\infty},
\end{align}
$\Delta_{\widetilde{\mathcal W}}$ maps the strong generators $U^{a(r)}_{b}$ by
\begin{align*}
\Delta_{\widetilde{\mathcal W}}(U^{a(r)}_b(z))=\sum_{\substack{(s,t,u)\in\mathbb N^3\\s+t+u=r}} \binom{1\otimes \lambda-t}{u}(-\bar\alpha\partial)^u U^{c(s)}_b(z)\otimes W^{a(t)}_c(z),
\end{align*}
where we set $U^{a(0)}_{b}(z)=\delta^a_b$.\\

$\widetilde{\mathcal W}^{(K)}_{\infty}[\alpha^{-1}]$ has stress-energy operator 
\begin{align*}
    T(z)=-\frac{1}{2\alpha}:U^{a(1)}_{b}U^{b(1)}_{a}:(z)+\frac{\bar\alpha(\lambda-1)}{2\alpha}\partial U^{a(1)}_{a}(z)+\frac{1}{\alpha}U^{a(2)}_{a}(z)
\end{align*}
with central charge 
\begin{align*}
    c=\frac{K\lambda}{\alpha}((\lambda^2-1)\bar\alpha^2-\bar\alpha K-1).
\end{align*}
Moreover, $U^{a(r)}_{b}(z)$ has conformal weight $r$ w.r.t. $T(z)$,
and $U^{a(1)}_{b}(z)$ are primary of spin $1$ w.r.t $T(z)$.\\

We can define the integral form of $\widetilde{\mathcal W}^{(K)}_{\infty}$ over the base ring $\mathbb C[\epsilon_1,\epsilon_2,\mathsf c]$ by setting
\begin{align*}
    \mathds U^{a(r)}_b(z):=\epsilon_1^{r-1}U^{a(r)}_b(z),\quad \mathsf c:=\frac{\lambda}{\epsilon_1},
\end{align*}
then it follows from the isomorphism $\sigma_{\infty}$ and Lemma \ref{lem: integral basis} that structure constants in the basis $\mathds U^{a(r)}_b,r=1,2,\cdots$ are polynomials in $\epsilon_1,\epsilon_2$, and $\mathsf c$. Denote by $\widetilde{\mathsf W}^{(K)}_{\infty}$ the $\mathbb C[\epsilon_1,\epsilon_2,\mathsf c]$-vertex algebra strongly generated by $\mathds U^{a(r)}_b(z)$, and we call it the integral form of $\widetilde{\mathcal W}^{(K)}_{\infty}$. Note that $\sigma_{\infty}$ induces an isomorphism of vertex algebras ${\mathsf W}^{(K)}_{\infty}\cong \widetilde{\mathsf W}^{(K)}_{\infty}$ such that
\begin{align*}
    \sigma_{\infty}(\epsilon_1)=\epsilon_1,\quad \sigma_{\infty}(\epsilon_2)=\epsilon_3,\quad\sigma_{\infty}(\mathsf c)=\mathsf c,\quad \sigma_\infty(\mathds W^{a(r)}_b(z))=(-1)^r\mathds U^{b(r)}_{a}(z).
\end{align*}
Composing the duality transform \eqref{eqn: duality for A} $\sigma:\mathsf A^{(K)}\cong \mathsf A^{(K)}$ with the representation $\Psi_\infty: \mathsf A^{(K)}\to \mathfrak{U}(\mathsf W^{(K)}_\infty)[\epsilon_2^{-1}]$, and then applying the isomorphism $\sigma_\infty:{\mathsf W}^{(K)}_{\infty}\cong \widetilde{\mathsf W}^{(K)}_{\infty}$, we get a new map $$\widetilde\Psi_\infty=\sigma_\infty\circ\Psi_\infty\circ\sigma:\mathsf A^{(K)}\to \mathfrak{U}(\widetilde{\mathsf W}^{(K)}_{\infty})[\epsilon_3^{-1}]$$ which is uniquely determined by
\begin{equation}\label{eqn: Psi tilde}
\boxed{
\begin{aligned}
    &\widetilde\Psi_\infty(\mathsf T_{0,n}(E^a_b))= \mathds U^{a(1)}_{b,n}+\frac{\epsilon_1}{\epsilon_3}\delta^a_b\mathds U^{c(1)}_{c,n},\\
    &\widetilde\Psi_\infty(\mathsf t_{1,n})=-\mathsf L_{n-1}-\frac{\epsilon_1\epsilon_2 n \mathsf c}{2\epsilon_3}\mathds U^{a(1)}_{a,n-1}\\
    &\widetilde\Psi_\infty(\mathsf T_{1,0}(E^a_b))= -\epsilon_1\sum_{m\ge 0}\left(\mathds U^{c(1)}_{b,-m-1}\mathds U^{a(1)}_{c,m}+\frac{\epsilon_1}{\epsilon_3}\delta^a_b\mathds U^{c(1)}_{d,-m-1}\mathds U^{d(1)}_{c,m}\right)+\mathds U^{a(2)}_{b,-1}+\frac{1}{\epsilon_3}\delta^a_b\mathds U^{c(2)}_{c,-1}\\
    &\widetilde\Psi_\infty(\mathsf t_{2,0})=-\frac{1}{\epsilon_3}\left(\widetilde{\mathds V}_{-2}+\epsilon_1\epsilon_3\sum_{n=1}^{\infty}n\:\mathds U^{a(1)}_{b,-n-1}\mathds U^{b(1)}_{a,n-1}+\epsilon_1^2\sum_{n=1}^{\infty}n\: \mathds U^{a(1)}_{a,-n-1}\mathds U^{b(1)}_{b,n-1}\right).
\end{aligned}}
\end{equation}
Here $\widetilde{\mathds V}_{-2}$ is the mode of quasi-primary field $\widetilde{\mathds V}(z)=\sum_{n\in \mathbb Z}\widetilde{\mathds V}_{n}z^{-n-3}$ defined as
\begin{equation}
\begin{split}
\widetilde{\mathds V}(z):=&\frac{\epsilon_1^2}{6}\left(:\mathds U^{a(1)}_b(z)\mathds U^{b(1)}_c(z)\mathds U^{c(1)}_a(z):+:\mathds U^{b(1)}_a(z)\mathds U^{c(1)}_b(z)\mathds U^{a(1)}_c(z):\right)\\
&+\mathds U^{a(3)}_a(z)-\epsilon_1:\mathds U^{a(1)}_b(z)\mathds U^{b(2)}_a(z):.
\end{split}
\end{equation}
Similarly we can define algebra homomorphism
$$\widetilde\Delta_\infty=(\sigma\otimes\sigma_\infty)\circ\Delta_\infty\circ\sigma: \mathsf A^{(K)}\to \mathsf A^{(K)}\widetilde\otimes\mathfrak{U}(\widetilde{\mathsf W}^{(K)}_{\infty})[\epsilon_3^{-1}]$$ which is uniquely determined by
\begin{equation}
\boxed{
\begin{aligned}
&\widetilde\Delta_\infty(\mathsf T_{0,n}(E^a_b))=\square( \mathsf T_{0,n}(E^a_b)),\\
&\widetilde\Delta_\infty(\mathsf t_{1,n})=\square(\mathsf t_{1,n})+\epsilon_1\epsilon_3 n\:\mathsf t_{0,n-1}\otimes \mathsf c,\\
&\widetilde\Delta_\infty(\mathsf T_{1,0}(E^a_b))=\square(\mathsf T_{1,0}(E^a_b))+\epsilon_1\sum_{m=0}^{\infty}\left(\mathsf T_{0,m}(E^c_b)\otimes \mathds U^{a(1)}_{c,-m-1}-\mathsf T_{0,m}(E^a_c)\otimes \mathds U^{c(1)}_{b,-m-1}\right),\\
&\widetilde\Delta_\infty(\mathsf t_{2,0})=\square(\mathsf t_{2,0})-2\epsilon_1\sum_{n=1}^{\infty}n \:\mathsf T_{0,n-1}(E^a_b)\otimes \mathds U^{b(1)}_{a,-n-1},
\end{aligned}}
\end{equation}
where $\square(x):=x\otimes 1+1\otimes \widetilde\Psi_\infty(x)$. Proposition \ref{prop: AW coproduct compatible with WW coproduct} implies that $\widetilde{\Delta}_{\infty}$ is compatible with the vertex algebra coproduct $\Delta_{\widetilde{\mathsf W}}:\widetilde{\mathsf W}^{(K)}_{\infty}\to \widetilde{\mathsf W}^{(K)}_{\infty}\otimes \widetilde{\mathsf W}^{(K)}_{\infty}$ in the sense that 
\begin{align*}
    (\widetilde\Psi_{\infty}\otimes 1)\circ\widetilde\Delta_{\infty}=\Delta_{\widetilde{\mathsf W}}\circ\widetilde\Psi_{\infty}.
\end{align*}

\subsection{An anti-involution of the mode algebra of \texorpdfstring{$\mathsf W^{(K)}_{\infty}$}{Wk inf}}

Consider the anti-involution $\mathfrak{s}_L:\mathfrak{U}({\cal W}^{(K)}_L)\cong \mathfrak{U}({\cal W}^{(K)}_L)$ in \eqref{eqn: anti-involution s_L}, its action on $W^{a(r)}_{b,n}$ is determined by the equation:
\begin{multline*}
    z^{L+1} \left((\alpha\partial)^L+\sum_{r=1}^L (-1)^r(\alpha\partial)^{L-r}W^{b(r)}_{a}(z)\right)z^{L-1}=\\
    (\alpha z^2\partial)^L+\sum_{r=1}^L (-1)^r\sum_{n\in \mathbb Z}\mathfrak s_L\left(W^{a(r)}_{b,n}\right)z^{n+r} (\alpha z^2\partial)^{L-r},
\end{multline*}
so there exist $f_{r,n,i}(L)\in \mathbb C[\alpha]$ which depends on $L$ in a polynomial way and such that
\begin{align*}
    \mathfrak s_L\left(W^{a(r)}_{b,n}\right)=\sum_{i=0}^r f_{r,n,i}(L) \cdot W^{b(r-i)}_{a,-n}, 
\end{align*}
where we set $W^{b(0)}_{a,m}:=\delta^b_a\delta_{m,0}$. Note that $f_{r,n,0}(L)=1$.

We define $\mathfrak{s}_{\infty}:\mathfrak{U}({\cal W}^{(K)}_\infty)\cong \mathfrak{U}({\cal W}^{(K)}_\infty)$ to be the anti-involution uniquely determined by
\begin{align*}
    \mathfrak s_\infty\left(W^{a(r)}_{b,n}\right)=W^{b(r)}_{a,-n}+\sum_{i=1}^r f_{r,n,i}(\lambda) \cdot W^{b(r-i)}_{a,-n},\quad f_{r,n,i}(\lambda) \in \mathbb C[\alpha,\lambda].
\end{align*}
Then obviously we have
\begin{align}\label{eqn: compatible anti-involution}
    \pi_L\circ s_\infty=s_L\circ\pi_L.
\end{align}
\begin{lemma}\label{lem: s_inf}
$\mathfrak{s}_{\infty}$ induces anti-involution on $\mathfrak{U}(\mathsf W^{(K)}_\infty)$, i.e. there exist $\mathsf f_{r,n,i}\in \mathbb C[\epsilon_1,\epsilon_2,\mathsf c]$ such that
\begin{align*}
    \mathfrak s_\infty\left(\mathds W^{a(r)}_{b,n}\right)=\mathds W^{b(r)}_{a,-n}+\sum_{i=1}^r \mathsf f_{r,n,i} \cdot \mathds W^{b(r-i)}_{a,-n}.
\end{align*}
\end{lemma}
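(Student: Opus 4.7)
The plan is to track polynomial dependence on $L$ and $\epsilon_3$ through a rescaled Miura operator living over $\mathbb C[\epsilon_1,\epsilon_2]$ rather than over $\mathbb C[\alpha]$. First I would set $\tilde J^{a[i]}_b(z) := \epsilon_1 J^{a[i]}_b(z)$ and $\tilde{\mathcal L}(z) := \epsilon_3\partial_z - \tilde J(z) = \epsilon_1\mathcal L(z)$, so that $\tilde W^{(r)} := \epsilon_1^r W^{(r)} = \epsilon_1 \mathds W^{(r)}$ for $r\ge 1$ and the scaled Miura operator $\tilde{\mathcal L}^L(z) = (\epsilon_3\partial_z)^L + \sum_{r=1}^L (-1)^r(\epsilon_3\partial_z)^{L-r}\tilde W^{(r)}(z)$ has all coefficients in $\mathbb C[\epsilon_1,\epsilon_2]$. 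Multiplying \eqref{eqn: anti-involution s_L} by $\epsilon_1$ gives
\[
\mathfrak s_L(\tilde J^{a[i]}_{b,n}) = \tilde J^{b[L+1-i]}_{a,-n} + (2i-L-1)\epsilon_3\,\delta^a_b\delta_{n,0},
\]
and the computation in the proof of the preceding anti-involution lemma for $\mathcal L^L(z)$ carries over verbatim, yielding the integral identity $\mathfrak s_L(\tilde{\mathcal L}^L(z)) = z^{L+1}\,\tilde{\mathcal L}^L(z)^{\top}\,z^{L-1}$, where $^\top$ denotes the $K{\times}K$ matrix transpose.

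Next I would expand both sides as right-ordered pseudodifferential symbols in $(\epsilon_3 z^2\partial_z)$. Using $\partial_z^k z^{L-1} = \sum_{j=0}^k \binom{k}{j}(L-1)(L-2)\cdots(L-j)\,z^{L-1-j}\partial_z^{k-j}$ together with the interchange identities between $\partial_z^m$ and $(z^2\partial_z)^m$ mediated by $\delta := z\partial_z$ (all provable by induction and all with coefficients polynomial in $L$), one sees that $z^{L+1}(\epsilon_3\partial_z)^{L-r}\tilde W^{(r)}(z)^\top z^{L-1}$ rewrites as a $\mathbb C[L,\epsilon_3,z^{\pm 1}]$-linear combination of $\partial_z^s\tilde W^{(r)}(z)^\top(\epsilon_3 z^2\partial_z)^{L-r''}$ for finitely many $s, r''$. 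Matching the $(\epsilon_3 z^2\partial_z)^{L-r}$ coefficient in the integral identity and extracting the $z^{n+r}$-Fourier mode (using $\mathfrak s_L(\tilde W^{(r)}(z)) = \sum_n \mathfrak s_L(\tilde W^{(r)}_n)z^{n+r}$) will yield
\[
\mathfrak s_L(\tilde W^{a(r)}_{b,n}) = \sum_{i=0}^r C_{r,n,i}(L,\epsilon_3)\,\tilde W^{b(r-i)}_{a,-n}, \qquad C_{r,n,i}(L,\epsilon_3) \in \mathbb C[L,\epsilon_3],
\]
with $C_{r,n,0}=1$ and the convention $\tilde W^{b(0)}_{a,-n}:=\delta^b_a\delta_{n,0}$.

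Finally I would divide by $\epsilon_1$, using the convention $\mathds W^{b(0)}_{a,-n}:=\tfrac{1}{\epsilon_1}\delta^b_a\delta_{n,0}$ from \eqref{eqn: shift automorphism of W_integral basis}, so that the formula reads
\[
\mathfrak s_L(\mathds W^{a(r)}_{b,n}) = \mathds W^{b(r)}_{a,-n} + \sum_{i=1}^r C_{r,n,i}(L,\epsilon_3)\,\mathds W^{b(r-i)}_{a,-n},
\]
showing that $\mathfrak s_L$ already preserves the integral form with coefficients polynomial in $L,\epsilon_1,\epsilon_2$. The compatibility \eqref{eqn: compatible anti-involution} and the polynomial dependence of $\mathfrak s_\infty$ on $\lambda = \epsilon_1\mathsf c$ then force $\mathsf f_{r,n,i} := C_{r,n,i}(\epsilon_1\mathsf c,\epsilon_3) \in \mathbb C[\epsilon_1,\epsilon_2,\mathsf c]$ (via $\epsilon_3 = -K\epsilon_1-\epsilon_2$), proving the claim. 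The hard part will be the explicit polynomial-in-$L$ rewriting of $z^{L+1}(\epsilon_3\partial_z)^{L-r}\bullet z^{L-1}$ as a right-ordered $(\epsilon_3 z^2\partial_z)$-pseudodifferential symbol: it is purely combinatorial, but must be tracked carefully so that no reciprocals of $L$, $\epsilon_1$, or $\epsilon_3$ appear.
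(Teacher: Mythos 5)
Your setup coincides with the paper's: rescale to $\tilde J^{[i]}=\epsilon_1 J^{[i]}$, $\tilde W^{(r)}=\epsilon_1^rW^{(r)}$, use the determining identity $\mathfrak s_L(\tilde{\mathcal L}^L(z))=z^{L+1}\tilde{\mathcal L}^L(z)^{\top}z^{L-1}$, rewrite the right-hand side as a right-ordered symbol in $(\epsilon_3 z^2\partial_z)$, and read off $\mathfrak s_L(\tilde W^{a(r)}_{b,n})=\sum_{i=0}^rC_{r,n,i}(L,\epsilon_3)\,\tilde W^{b(r-i)}_{a,-n}$ with $C_{r,n,0}=1$, $\tilde W^{(0)}=\delta$, and all $C_{r,n,i}$ polynomial in $L$ and $\epsilon_3$. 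Up to that point the argument is sound and is the same as the paper's.

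The gap is in the final division by $\epsilon_1$. For $1\le i\le r-1$ the terms divide cleanly because $\tilde W^{(s)}=\epsilon_1\mathds W^{(s)}$ when $s\ge 1$, but the $i=r$ term is the pure scalar $C_{r,n,r}(L,\epsilon_3)\,\delta^b_a\delta_{n,0}$, so after dividing, the scalar part of $\mathfrak s_\infty(\mathds W^{a(r)}_{b,n})$ is $\epsilon_1^{-1}C_{r,n,r}(\epsilon_1\mathsf c,\epsilon_3)$. Invoking the convention $\mathds W^{(0)}:=\epsilon_1^{-1}\delta$ only relabels this quantity; it does not place the image in $\mathfrak U(\mathsf W^{(K)}_\infty)$, whose coefficient ring $\mathbb C[\epsilon_1,\epsilon_2,\mathsf c]$ does not contain $\epsilon_1^{-1}$. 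You still must prove that $C_{r,n,r}(\epsilon_1\mathsf c,\epsilon_3)$ is divisible by $\epsilon_1$, and nothing in your write-up addresses this. The paper closes exactly this point by setting $\epsilon_1=0$ in the determining identity: the left-hand side degenerates to $\sum_{r\ge1}(-1)^r z(\epsilon_3\partial)^{-r}z^{-1}\tilde W^{b(r)}_{a}(z)$, every term of which carries a $\tilde W^{(s)}$ with $s\ge 1$ and contributes no pure-scalar symbol, so comparing with the right-hand side forces $C_{r,n,r}\equiv 0\pmod{\epsilon_1}$. Equivalently, in your language you would need to show that $C_{r,n,r}(L,\epsilon_3)$ has vanishing constant term in $L$ (plausible, since at $L=0$ the conjugation $z^{L+1}\bullet\, z^{L-1}$ is trivial), but that is a claim requiring proof, not a bookkeeping convention.
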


\begin{proof}
Let $\tilde W^{a(r)}_{b,n}:=\epsilon_1^rW^{a(r)}_{b,n}$, then $\mathfrak s_\infty\left(\tilde W^{a(r)}_{b,n}\right)$ is determined by
\begin{multline*}
    z^{\epsilon_1\mathsf c+1} \left((\epsilon_3\partial)^{\epsilon_1\mathsf c}+\sum_{r=1}^\infty (-1)^r(\epsilon_3\partial)^{\epsilon_1\mathsf c-r}\tilde W^{b(r)}_{a}(z)\right)z^{\epsilon_1\mathsf c-1}=\\
    (\epsilon_3 z^2\partial)^{\epsilon_1\mathsf c}+\sum_{r=1}^\infty (-1)^r\sum_{n\in \mathbb Z}\mathfrak s_\infty\left(\tilde W^{a(r)}_{b,n}\right)z^{n+r} (\epsilon_3 z^2\partial)^{\epsilon_1\mathsf c-r},
\end{multline*}
then there exists $\tilde f_{r,n,i}\in \mathbb C[\epsilon,\epsilon_2,\mathsf c]$ such that
\begin{align*}
    \mathfrak s_\infty\left(\tilde W^{a(r)}_{b,n}\right)=\tilde W^{b(r)}_{a,-n}+\sum_{i=1}^r \tilde f_{r,n,i} \cdot \tilde W^{b(r-i)}_{a,-n}.
\end{align*}
Moreover, setting $\epsilon_1=0$ implies that 
\begin{align*}
    \sum_{r=1}^\infty (-1)^r z(\epsilon_3\partial)^{-r}z^{-1}\tilde W^{b(r)}_{a}(z)=
    \sum_{r=1}^\infty (-1)^r\sum_{n\in \mathbb Z}\mathfrak s_\infty\left(\tilde W^{a(r)}_{b,n}\right)z^{n+r} (\epsilon_3 z^2\partial)^{-r},
\end{align*}
and we deduce from the above equation that $\tilde f_{r,n,r}\equiv 0\pmod{\epsilon_1}$, in other words $\tilde f_{r,n,r}$ is divisible by $\epsilon_1$ in $\mathbb C[\epsilon,\epsilon_2,\mathsf c]$. Setting
\begin{align*}
    \mathsf f_{r,n,i}=\begin{cases}
        \tilde f_{r,n,i}, & 0< i<r,\\
        \frac{1}{\epsilon_1} \tilde f_{r,n,r}, & i=r,
    \end{cases}
\end{align*}
and we are done.
\end{proof}

\begin{definition}\label{def: Psi_infinity minus}
We define the algebra homomorphism $\Psi^-_{\infty}:\mathsf A^{(K)}\to \mathfrak U(\mathsf W^{(K)}_{\infty})[\epsilon_2^{-1}]$ to be the composition
\begin{align}
    \Psi^-_\infty:=\mathfrak{s}_\infty\circ \Psi_\infty\circ \mathfrak{s}_{\mathsf A},
\end{align}
where $\mathfrak{s}_{\mathsf A}$ is the anti-involution \eqref{eqn: anti-involution s_A} on $\mathsf A^{(K)}$.
\end{definition}

It follows from the definition of $\Psi^-_L$ and \eqref{eqn: compatible anti-involution} that 
\begin{align*}
    \pi_L\circ \Psi^-_\infty=\Psi^-_L.
\end{align*}
Using \eqref{eqn: Psi^- obvious equations} and \eqref{eqn: Psi^-(t[2,2])}, we see that $\Psi^-_\infty$ is uniquely determined by the image of following generators
\begin{equation}\label{eqn: Psi_infinity minus on generators}
\begin{split}
\Psi^-_{\infty}(\mathsf T_{0,n}(E^a_b))=\mathds W^{a(1)}_{b,-n},&\quad \Psi^-_{\infty}(\mathsf t_{0,n})=\frac{1}{\epsilon_2}\mathds W^{a(1)}_{a,-n},\\
\Psi^-_{\infty}(\mathsf t_{1,0})=-\Psi_{\infty}(\mathsf t_{1,2})+\frac{\epsilon_1\epsilon_3\mathsf c}{\epsilon_2}\mathds W^{a(1)}_{a,1},& \quad \Psi^-_{\infty}(\mathsf t_{1,2})=-\Psi_{\infty}(\mathsf t_{1,0})-\frac{\epsilon_1\epsilon_3\mathsf c}{\epsilon_2}\mathds W^{a(1)}_{a,-1},\\
\Psi^-_{\infty}(\mathsf T_{1,1}(E^a_b))=-\Psi_{\infty}(\mathsf T_{1,1}(E^a_b)), &\quad \Psi^-_{\infty}(\mathsf t_{2,2})= \Psi_{\infty}(\mathsf t_{2,2}),
\end{split}
\end{equation}

\section{Meromorphic Coproduct of \texorpdfstring{$\mathsf A^{(K)}$}{Ak}} 
Consider the rational map $\mathbb C^{N_1}_{\mathrm{disj}}\times \mathbb C^{N_2}_{\mathrm{disj}}\dashrightarrow \mathbb C^{N_1+N_2}_{\mathrm{disj}}$ which acts on coordinates by $$(x^{(1)}_1,\cdots,x^{(1)}_{N_1})\times (x^{(2)}_1,\cdots,x^{(2)}_{N_2})\mapsto (x^{(1)}_1,\cdots,x^{(1)}_{N_1},x^{(2)}_1,\cdots,x^{(2)}_{N_2}).$$ This is not a globally-defined map since $x^{(1)}_i$ might collide with $x^{(2)}_j$. Alternatively, one can consider the parametrized version of the above rational map $m:\mathbb C^{N_1}_{\mathrm{disj}}\times \mathbb C^{N_2}_{\mathrm{disj}}\times \mathbb P^1\dashrightarrow \mathbb C^{N_1+N_2}_{\mathrm{disj}}$ sending $(x^{(1)}_1,\cdots,x^{(1)}_{N_1})\times (x^{(2)}_1,\cdots,x^{(2)}_{N_2})\times (w)$ to $(x^{(1)}_1,\cdots,x^{(1)}_{N_1},x^{(2)}_1+w,\cdots,x^{(2)}_{N_2}+w)$, where $w$ is the coordinate on $\mathbb P^1$. Then the non-defined loci for $m$ on $\mathbb C^{N_1}_{\mathrm{disj}}\times \mathbb C^{N_2}_{\mathrm{disj}}\times \mathbb P^1$ is union of hyperplanes $x^{(1)}_i=x^{(2)}_j+w$ and the infinity divisor $w=\infty$. Since the hyperplanes do not intersect with the infinity divisor, we can take the formal neighborhood of $w=\infty$ and localize to get a genuine map
\begin{align*}
    m:\mathbb C^{N_1}_{\mathrm{disj}}\times \mathbb C^{N_2}_{\mathrm{disj}}\times \Spec \mathbb C(\!(w^{-1})\!)\to \mathbb C^{N_1+N_2}_{\mathrm{disj}}.
\end{align*}
It maps the function ring $\mathbb C[x^{(1)}_i,x^{(2)}_j,(x^{(1)}_{i_1}-x^{(1)}_{i_2})^{-1},(x^{(2)}_{j_1}-x^{(2)}_{j_2})^{-1},(x^{(1)}_i-x^{(2)}_j)^{-1}]$ to $ \mathbb C[x^{(1)}_i,x^{(2)}_j,(x^{(1)}_{i_1}-x^{(1)}_{i_2})^{-1},(x^{(2)}_{j_1}-x^{(2)}_{j_2})^{-1}](\!(w^{-1})\!)$ by
\begin{equation}
    \begin{split}
        x^{(1)}_i\mapsto x^{(1)}_i,&\quad x^{(2)}_j\mapsto x^{(2)}_j+w,\\ \frac{1}{x^{(1)}_{i_1}-x^{(1)}_{i_2}}\mapsto \frac{1}{x^{(1)}_{i_1}-x^{(1)}_{i_2}},&\quad
    \frac{1}{x^{(2)}_{j_1}-x^{(2)}_{j_2}}\mapsto \frac{1}{x^{(2)}_{j_1}-x^{(2)}_{j_2}}\\
    \frac{1}{x^{(1)}_i-x^{(2)}_j}\mapsto &-\sum_{n=0}^{\infty} w^{-n-1}(x^{(1)}_i-x^{(2)}_j)^n.
    \end{split}
\end{equation}
We call such map a \textit{meromorphic coproduct}, denoted by $\Delta(w)_{N_1,N_2}$. It is coassociative in the obvious sense, in fact it satisfies a more basic property:
\begin{lemma}\label{lemma: Locality}
Meromorphic coproducts are local in the sense that, if we decompose $N=N_1+N_2+N_3$ into three clusters, then for any $f\in \mathscr O(\mathbb C^N_{\mathrm{disj}})$, two elements
\begin{align*}
    (\Delta(w)_{N_1,N_2}\otimes \mathrm{id})\circ\Delta(z)_{N_1+N_2,N_3}f,\quad (\mathrm{id}\otimes P)\circ(\Delta(z)_{N_1,N_3}\otimes \mathrm{id})\circ\Delta(w)_{N_1+N_3,N_2}f,
\end{align*}
are expansions of the same element in $\mathscr O(\mathbb C^{N_1}_{\mathrm{disj}}\times \mathbb C^{N_2}_{\mathrm{disj}}\times \mathbb C^{N_3}_{\mathrm{disj}})[\![z^{-1},w^{-1},(z-w)^{-1}]\!][z,w]$, where $P:\mathscr O(\mathbb C^{N_3}_{\mathrm{disj}})\otimes \mathscr O(\mathbb C^{N_2}_{\mathrm{disj}})\to \mathscr O(\mathbb C^{N_2}_{\mathrm{disj}})\otimes \mathscr O(\mathbb C^{N_3}_{\mathrm{disj}})$ is the permutation operator.
\end{lemma}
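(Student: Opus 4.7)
The plan is to recognise both sides as two different expansions of a single rational function obtained by pulling back $f$ along a geometric shift, and then to reduce to the case of one inverse difference, where the identity becomes the standard three-region expansion from vertex algebra theory.

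First I would observe that by construction $\Delta(w)_{M,N}$ is the formal expansion at $w=\infty$ of the pullback under the geometric shift $(x^{(1)},x^{(2)},w)\mapsto(x^{(1)},x^{(2)}+w)$; consequently both iterated compositions in the statement arise as formal expansions of the pullback of $f$ along the single rational map
\begin{equation*}
\mu\colon (x^{(1)},x^{(2)},x^{(3)},w,z)\longmapsto (x^{(1)},x^{(2)}+w,x^{(3)}+z),
\end{equation*}
defined away from the collision divisors $\{x^{(1)}_i=x^{(2)}_j+w\}$, $\{x^{(1)}_i=x^{(3)}_k+z\}$ and $\{x^{(2)}_j+w=x^{(3)}_k+z\}$. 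In the ambient ring $R:=\mathscr O(\mathbb C^{N_1}_{\mathrm{disj}}\times \mathbb C^{N_2}_{\mathrm{disj}}\times \mathbb C^{N_3}_{\mathrm{disj}})[\![z^{-1},w^{-1},(z-w)^{-1}]\!][z,w]$ each of $z$, $w$, $z-w$ is invertible, so $\mu^*f$ has a canonical image in $R$, and the lemma amounts to the assertion that both iterated expansions reproduce that canonical image.

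Next I would use that each $\Delta(w)_{M,N}$ is an algebra homomorphism, so the same is true of both iterated compositions, reducing the claim to the generators of $\mathscr O(\mathbb C^N_{\mathrm{disj}})$, namely the coordinates $x_i$ and the inverse differences $1/(x_i-x_j)$. The coordinates, the inverse differences with both indices in a single cluster, and the inverse differences between the pairs $(1,2)$ and $(1,3)$ are all handled directly because at most one of the two shifts acts nontrivially and no iterated expansion occurs. The only substantive case is $f=1/(x^{(2)}_j-x^{(3)}_k)$, for which a direct calculation gives
\begin{align*}
(\Delta(w)_{N_1,N_2}\otimes 1)\circ\Delta(z)_{N_1+N_2,N_3}f &= -\sum_{n\ge 0} z^{-n-1}\bigl(x^{(2)}_j-x^{(3)}_k+w\bigr)^n, \\
(1\otimes P)\circ(\Delta(z)_{N_1,N_3}\otimes 1)\circ\Delta(w)_{N_1+N_3,N_2}f &= \sum_{n\ge 0} w^{-n-1}\bigl(z-x^{(2)}_j+x^{(3)}_k\bigr)^n.
\end{align*}
Writing $a:=x^{(2)}_j-x^{(3)}_k$, both are formal Laurent expansions of the single rational function $1/(w-z+a)$ in the iterated rings $\mathscr O(\cdots)[w](\!(z^{-1})\!)$ and $\mathscr O(\cdots)[z](\!(w^{-1})\!)$ respectively, and both map to the same element of $R$, namely the image of $1/(w-z+a)$, which in the $(z-w)^{-1}$-expansion reads $-\sum_{n\ge 0} a^n(z-w)^{-n-1}$.

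The hard part will be to make precise the ambient ring $R$ so that both iterated Laurent series inject into it and there coincide with the $(z-w)^{-1}$-expansion of $1/(w-z+a)$; this is the classical three-region (``delta-function'' / Jacobi) expansion argument familiar from vertex algebra theory. Once that formal framework is pinned down, the reduction to generators above leaves nothing further to check.
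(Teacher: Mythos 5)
Your proposal is correct and follows essentially the same route as the paper: both iterated coproducts are recognized as expansions of the pullback of $f$ along the single shift $(x^{(1)},x^{(2)}+w,x^{(3)}+z)$, hence expansions of the same rational function. Your reduction to generators and the explicit check for $1/(x^{(2)}_j-x^{(3)}_k)$ (both sides expanding $1/(w-z+a)$) just makes explicit what the paper's one-line proof leaves implicit.
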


\begin{proof}
After taking two-step meromorphic coproduct, $x^{(1)}_i\mapsto x^{(1)}_i, x^{(2)}_j\mapsto x^{(2)}_j+w,x^{(3)}_k\mapsto x^{(3)}_k+z$, and those $(x_i-x_j)^{-1}$ are mapped accordingly and then expanded in power series. Thus we immediately see that $ \Delta(w)_{N_1,N_2}\circ\Delta(z)_{N_1+N_2,N_3}f$ and $ \Delta(z)_{N_1,N_3}\circ\Delta(w)_{N_1+N_3,N_2}f$ are expansions of the same rational function.
\end{proof}

The meromorphic coproduct can be defined for differential operators as well, i.e. there exists 
\begin{align*}
    \Delta(w)_{N_1,N_2}:D(\mathbb C^{N_1+N_2}_{\mathrm{disj}})\otimes \mathfrak{gl}_K^{\otimes N_1+N_2}\to D(\mathbb C^{N_1}_{\mathrm{disj}})\otimes \mathfrak{gl}_K^{\otimes N_1}\otimes D(\mathbb C^{N_2}_{\mathrm{disj}})\otimes \mathfrak{gl}_K^{\otimes N_2}(\!(w^{-1})\!),
\end{align*}
which also satisfies the locality in the Lemma \ref{lemma: Locality}. Restricted to the image of spherical Cherednik algebras via Dunkl embeddings, we get an algebra homomorphism:
\begin{align}
    \Delta(w)_{N_1,N_2}: \mathrm{S}\mathcal H^{(K)}_{N_1+N_2}\to  \mathrm{S}\mathcal H^{(K)}_{N_1}\otimes  \mathrm{S}\mathcal H^{(K)}_{N_2}(\!(w^{-1})\!),
\end{align}
and the formula for $\Delta(w)_{N_1,N_2}$ on the generators of $\mathrm{S}\mathcal H^{(K)}_{N_1+N_2}$ reads
\begin{equation}\label{eqn: AA coproduct_finite N}
\begin{split}
&\Delta(w)(\rho_{N_1+N_2}(\mathsf T_{0,n}(E^a_b)))=\rho_{N_1}(\mathsf T_{0,n}(E^a_b))\otimes 1+\sum_{m=0}^n \binom{n}{m}w^{n-m} 1\otimes\rho_{N_2}(\mathsf T_{0,n}(E^a_b)), \\
&\Delta(w)(\rho_{N_1+N_2}(\mathsf T_{1,0}(E^a_b))=\rho_{N_1}(\mathsf T_{1,0}(E^a_b)\otimes 1+1\otimes \rho_{N_2}(\mathsf T_{1,0}(E^a_b)\\
&~+\epsilon_1\sum_{m,n\ge 0}\frac{(-1)^m}{w^{n+m+1}}\binom{m+n}{n} (\rho_{N_1}(\mathsf T_{0,n}(E^c_b))\otimes \rho_{N_2}(\mathsf T_{0,m}(E^a_c))-\rho_{N_1}(\mathsf T_{0,n}(E^a_c))\otimes \rho_{N_2}(\mathsf T_{0,m}(E^c_b)))\\
&\Delta(w)(\rho_{N_1+N_2}(\mathsf t_{2,0}))=\rho_{N_1}(\mathsf t_{2,0})\otimes 1+1\otimes \rho_{N_2}(\mathsf t_{2,0}),\\
&~ -2\epsilon_1\sum_{m,n\ge 0}\frac{(m+n+1)!}{m!n!w^{n+m+2}}(-1)^m (\rho_{N_1}(\mathsf T_{0,n}(E^a_b))\otimes \rho_{N_2}(\mathsf T_{0,m}(E^b_a))+\epsilon_1\epsilon_2 \rho_{N_1}(\mathsf t_{0,n})\otimes \rho_{N_2}(\mathsf t_{0,m})),
\end{split}
\end{equation}
Since the formula \eqref{eqn: AA coproduct_finite N} are uniform in $N_1$ and $N_2$, the Corollary \ref{cor: truncation of D} implies that the uniform-in-$N_1,N_2$ formula produces an algebra homomorphism.
\begin{proposition}\label{prop: AA coproduct}
There is an algebra homomorphism $\Delta_{\mathsf A}(w): \mathsf A^{(K)}\to \mathsf A^{(K)}\otimes\mathsf A^{(K)}(\!(w^{-1})\!)$ which map the generators as
\begin{equation}\label{eqn: AA coproduct}
\boxed{
\begin{aligned}
&\Delta_{\mathsf A}(w)(\mathsf T_{0,n}(E^a_b))=\mathsf T_{0,n}(E^a_b)\otimes 1+\sum_{m=0}^n \binom{n}{m}w^{n-m} 1\otimes\mathsf T_{0,m}(E^a_b), \\
&\Delta_{\mathsf A}(w)(\mathsf T_{1,0}(E^a_b))=\square(\mathsf T_{1,0}(E^a_b))\\
&~+\epsilon_1\sum_{m,n\ge 0}\frac{(-1)^m}{w^{n+m+1}}\binom{m+n}{n}(\mathsf T_{0,n}(E^c_b)\otimes \mathsf T_{0,m}(E^a_c)-\mathsf T_{0,n}(E^a_c)\otimes \mathsf T_{0,m}(E^c_b)),\\
&\Delta_{\mathsf A}(w)(\mathsf t_{2,0})=\square(\mathsf t_{2,0})-2\epsilon_1\sum_{m,n\ge 0}\frac{(m+n+1)!}{m!n!w^{n+m+2}}(-1)^m (\mathsf T_{0,n}(E^a_b)\otimes \mathsf T_{0,m}(E^b_a)+\epsilon_1\epsilon_2 \mathsf t_{0,n}\otimes \mathsf t_{0,m}),
    \end{aligned}}
\end{equation}
where $\square(X)=X\otimes 1+1\otimes X$. 
\end{proposition}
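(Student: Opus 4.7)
The plan is to bootstrap $\Delta_{\mathsf A}(w)$ from a geometric meromorphic coproduct on spherical Cherednik algebras via the truncation maps $\rho_N$. The translation gluing
\[
m\colon \mathbb C^{N_1}_{\mathrm{disj}} \times \mathbb C^{N_2}_{\mathrm{disj}} \times \Spec \mathbb C(\!(w^{-1})\!) \longrightarrow \mathbb C^{N_1+N_2}_{\mathrm{disj}}, \quad (\vec x^{(1)}, \vec x^{(2)}; w) \mapsto (\vec x^{(1)}, \vec x^{(2)} + w),
\]
becomes globally defined after restricting to the formal neighborhood of $w = \infty$, since the potential collisions $x^{(1)}_i = x^{(2)}_j + w$ are inverted there. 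Its pullback on $\mathfrak{gl}_K^{\otimes \bullet}$-valued differential operators, restricted to $\mathfrak S_{N_1+N_2}$-invariants via the Dunkl embedding and evaluated against $\mathfrak S_{N_1} \times \mathfrak S_{N_2}$-invariants on the target, produces an algebra homomorphism
\[
\Delta(w)_{N_1, N_2}\colon \mathrm{S}\mathcal H^{(K)}_{N_1+N_2} \longrightarrow \mathrm{S}\mathcal H^{(K)}_{N_1} \otimes \mathrm{S}\mathcal H^{(K)}_{N_2}(\!(w^{-1})\!);
\]
the homomorphism property is automatic from functoriality of pullback along $m$.

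Next I would compute $\Delta(w)_{N_1, N_2}$ on the $\rho_{N_1+N_2}$-images of $\mathsf T_{0, n}(E^a_b)$, $\mathsf T_{1, 0}(E^a_b)$, and $\mathsf t_{2, 0}$ to verify \eqref{eqn: AA coproduct_finite N}. The $\mathsf T_{0, n}$ formula follows from a direct $(x + w)^n$ binomial expansion. For $\mathsf T_{1, 0}$ and $\mathsf t_{2, 0}$ I would insert the Dunkl expressions $y_i = \epsilon_2 \partial_i + \epsilon_1\sum_{j \neq i}(x_i - x_j)^{-1}s_{ij}\Omega_{ij}$ and $\sum_i y_i^2 \mathbf e = \epsilon_2^2\sum_i \partial_i^2 \mathbf e - \sum_{i\neq j}\epsilon_1(\epsilon_2\Omega_{ij} + \epsilon_1)(x_i - x_j)^{-2}\mathbf e$, split the pair-indices into intra- and cross-cluster parts, and expand the cross-cluster denominators $(x_i - x_j \mp w)^{-k}$ as power series in $w^{-1}$. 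The intra-cluster pieces reassemble into $\square(-)$; the cross-cluster double sums (with contributions from both orderings $(I_1, I_2)$ and $(I_2, I_1)$) produce the combinatorial coefficients $\binom{m+n}{n}(-1)^m$ for $\mathsf T_{1, 0}$ and $(m+n+1)!(-1)^m/(m!n!)$ for $\mathsf t_{2, 0}$; and the compensating factor $\epsilon_2^2$ supplied by $\rho_{N_i}(\mathsf t_{0, \cdot})$ on each side rescales the $\epsilon_1^2/\epsilon_2$ from the Dunkl formula to the integral $\epsilon_1^2\epsilon_2$ in front of $\mathsf t_{0, n} \otimes \mathsf t_{0, m}$.

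Finally, the resulting formulas are manifestly uniform in $(N_1, N_2)$: every summand is the $\rho_{N_1} \otimes \rho_{N_2}$-image of a fixed element of $\mathsf A^{(K)} \otimes \mathsf A^{(K)}(\!(w^{-1})\!)$ dictated by \eqref{eqn: AA coproduct}. The composite $\hat\Delta(w)(a) := \prod_{N_1, N_2} \Delta(w)_{N_1, N_2}(\rho_{N_1+N_2}(a))$ is then an algebra homomorphism $\mathsf A^{(K)} \to \prod_{N_1, N_2}\mathrm{S}\mathcal H^{(K)}_{N_1} \otimes \mathrm{S}\mathcal H^{(K)}_{N_2}(\!(w^{-1})\!)$. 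By uniformity its image on $\{\mathsf T_{0, n}(E^a_b), \mathsf T_{1, 0}(E^a_b), \mathsf t_{2, 0}\}$ lies in the subspace $\mathsf A^{(K)} \otimes \mathsf A^{(K)}(\!(w^{-1})\!)$; since Theorem \ref{thm: simpler definition, K>1} (respectively Proposition \ref{prop: simpler definition, K=1} when $K = 1$) shows that these elements generate $\mathsf A^{(K)}$ after inverting $\epsilon_2, \epsilon_3$, the same holds for the entire image. The triviality of $\ker(\prod_N \rho_N)$ on $\mathsf D^{(K)}$ from Corollary \ref{cor: truncation of D}, combined with $\mathsf A^{(K)}[\epsilon_2^{-1}] \cong \mathsf D^{(K)}[\epsilon_2^{-1}]$ via $\epsilon_2\mathsf t_{n,m} = \mathsf T_{n,m}(1)$ and with the $\mathbb C[\epsilon_1, \epsilon_2]$-flatness of $\mathsf A^{(K)}$ (Theorem \ref{thm: PBW}), upgrades to injectivity of $\prod_{N_1, N_2}(\rho_{N_1} \otimes \rho_{N_2})$ on $\mathsf A^{(K)} \otimes \mathsf A^{(K)}(\!(w^{-1})\!)$. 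This identifies a unique factorization $\hat\Delta(w) = \bigl(\prod_{N_1, N_2}\rho_{N_1} \otimes \rho_{N_2}\bigr) \circ \Delta_{\mathsf A}(w)$, producing the desired algebra homomorphism. The main obstacle is the second-paragraph generator computation, which is conceptually standard but combinatorially delicate for $\mathsf t_{2, 0}$; everything else is formal.
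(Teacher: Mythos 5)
Your proposal is correct and follows essentially the same route as the paper: the paper also constructs $\Delta(w)_{N_1,N_2}$ on spherical Cherednik algebras by pulling back along the translation gluing $m$ localized at $w=\infty$, computes the finite-$N$ formulas \eqref{eqn: AA coproduct_finite N} on the generators via the Dunkl embedding, and concludes by uniformity in $(N_1,N_2)$ together with Corollary \ref{cor: truncation of D}. Your extra care in spelling out why $\prod_{N_1,N_2}(\rho_{N_1}\otimes\rho_{N_2})$ is injective on $\mathsf A^{(K)}\otimes\mathsf A^{(K)}(\!(w^{-1})\!)$ (via flatness and $\mathsf A^{(K)}[\epsilon_2^{-1}]=\mathsf D^{(K)}[\epsilon_2^{-1}]$) only makes explicit what the paper leaves implicit.
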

We shall call $\Delta_{\mathsf A}(w)$ the meromorphic coproduct on $\mathsf A^{(K)}$.

\subsection{Vertex coalgebra structure on \texorpdfstring{$\mathsf A^{(K)}$}{Ak}}

The locality for the meromorphic coproduct can be put into more general framework called the \textit{vertex coalgebra}. We recall its definition in the Appendix \ref{sec: vertex coalgebras and vertex comodules}. The following theorem generalizes the $K=1$ case in \cite{oh2021twisted}.

\begin{theorem}\label{thm: A^K is vertex coalgebra}
The meromorphic coproduct induces a vertex coalgebra structures on $\mathsf A^{(K)}$ over the base ring $\mathbb C[\epsilon_1,\epsilon_2]$. Moreover $\Psi_{\infty}:\mathsf A^{(K)}\to U_+(\mathsf W^{(K)}_{\infty})[\epsilon_2^{-1}]$ is a vertex coalgebra map.
\end{theorem}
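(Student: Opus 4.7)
The plan is to establish the vertex coalgebra axioms for the triple $(\mathsf A^{(K)}, \Delta_{\mathsf A}(w), \mathfrak C_{\mathsf A})$ and then verify that $\Psi_\infty$ preserves all the relevant structure. The axioms reduce to three things: the counit identities, a translation-compatibility (involving a derivation $T$ on $\mathsf A^{(K)}$), and co-locality (the vertex-coalgebra analogue of the Jacobi identity). The first two are straightforward checks on generators using \eqref{eqn: AA coproduct}: $(\mathrm{id}\otimes \mathfrak C_{\mathsf A})\circ\Delta_{\mathsf A}(w)=\mathrm{id}$ is immediate, and $(\mathfrak C_{\mathsf A}\otimes \mathrm{id})\circ\Delta_{\mathsf A}(w)$ implements a translation whose infinitesimal generator $T$ acts on generators by $\mathsf T_{0,n}(X)\mapsto n\,\mathsf T_{0,n-1}(X)$, $\mathsf t_{0,n}\mapsto n\,\mathsf t_{0,n-1}$, etc. One identifies $T$ with $\mathrm{ad}_{\mathsf t_{1,0}}$ (which by \eqref{eqn_schematic commutators 2} acts correctly on $V_0\mathsf A^{(K)}$) and checks that the two sides of the $T$-compatibility relation agree on generators. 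Since $\Delta_{\mathsf A}(w)$ is an algebra map, these checks extend to all of $\mathsf A^{(K)}$.

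The heart of the argument is co-locality. The strategy is to bootstrap it from the finite-$N$ case via the Dunkl embedding. At finite $N$, the coproduct $\Delta_{\mathsf A}(w)$ was defined in the proof of Proposition \ref{prop: AA coproduct} by restricting the meromorphic coproduct $\Delta(w)_{N_1,N_2}$ of the algebra of differential operators on the configuration space $\mathbb C^{N_1+N_2}_{\mathrm{disj}}$, which is manifestly local in the sense of Lemma \ref{lemma: Locality}. For any decomposition $N=N_1+N_2+N_3$, the two iterated coproducts
\[
[\Delta(z)_{N_1,N_2}\otimes\mathrm{id}]\circ\Delta(w)_{N_1+N_2,N_3}\quad\text{and}\quad \sigma_{23}\circ[\Delta(w)_{N_1,N_3}\otimes\mathrm{id}]\circ\Delta(z)_{N_1+N_3,N_2}
\]
are therefore two expansions of a single common element in the relevant ring of formal Laurent series in $z^{-1},w^{-1},(z-w)^{-1}$ with polynomial part in $z,w$. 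Applying $\rho_{N_1}\otimes\rho_{N_2}\otimes\rho_{N_3}$ gives the analogous statement for $\mathsf D^{(K)}$ modulo $\bigcap_{N_1,N_2,N_3}\ker(\rho_{N_1}\otimes\rho_{N_2}\otimes\rho_{N_3})$; by Corollary \ref{cor: truncation of D} and flatness of $\mathsf D^{(K)}$ this intersection is zero, yielding co-locality on $\mathsf D^{(K)}$. Since $\mathsf A^{(K)}[\epsilon_2^{-1}]=\mathsf D^{(K)}[\epsilon_2^{-1}]$ we obtain co-locality on $\mathsf A^{(K)}[\epsilon_2^{-1}]$, and since $\mathsf A^{(K)}$ is $\epsilon_2$-torsion-free by the PBW theorem (Theorem \ref{thm: PBW}), it descends to co-locality on $\mathsf A^{(K)}$ itself.

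For the second statement, co-locality on $U_+(\mathsf W^{(K)}_\infty)$ is a general fact about restricted mode algebras (Proposition \ref{prop: vertex coalgebra from vertex algebra} of Appendix \ref{sec: vertex coalgebras and vertex comodules}), and the counit is the vacuum augmentation $\mathfrak C_{\mathsf W}$ of \eqref{eqn: augmentation of W(infinity)}. It therefore remains to verify
\[
(\Psi_\infty\otimes\Psi_\infty)\circ\Delta_{\mathsf A}(w) \;=\; \Delta_{\mathcal W}(w)\circ\Psi_\infty, \qquad \mathfrak C_{\mathsf W}\circ\Psi_\infty=\mathfrak C_{\mathsf A},
\]
and $\Psi_\infty\circ T=T_{\mathcal W}\circ\Psi_\infty$. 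The counit compatibility is immediate from \eqref{eqn: map to W(infinity)}: every generator is sent to a sum of nontrivial modes, which are killed by $\mathfrak C_{\mathsf W}$. Since both sides of the first equation are algebra maps in the first tensor slot, it suffices to check equality on the generators $\mathsf T_{0,n}(E^a_b),\,\mathsf T_{1,0}(E^a_b),\,\mathsf t_{2,0}$, using the explicit images under $\Psi_\infty$ and the formula for the linear meromorphic coproduct $\Delta_{\mathcal W}(w)$ applied to the corresponding modes $\mathds W^{a(1)}_{b,n}$, $\mathds W^{a(2)}_{b,-1}$, $\mathds V_{-2}$. The translation compatibility follows once one identifies $\Psi_\infty(\mathsf t_{1,0})$ (up to a central term that is killed by the derivation axiom) with $-\mathsf L_{-1}$, which is the translation operator on $\mathsf W^{(K)}_\infty$.

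The main obstacle is the generator $\mathsf t_{2,0}$: the nonlinear cross-terms $\mathsf T_{0,n}(E^a_b)\otimes \mathsf T_{0,m}(E^b_a)$ and $\epsilon_1\epsilon_2\,\mathsf t_{0,n}\otimes\mathsf t_{0,m}$ appearing in \eqref{eqn: AA coproduct} must match the cross-contractions produced by applying $\Delta_{\mathcal W}(w)$ to the cubic expression $\mathds V_{-2}$ together with the corrections $\sum n\,\mathds W^{(1)}_{-n-1}\mathds W^{(1)}_{n-1}$ in the formula \eqref{eqn: map to W(infinity)}. Verifying this requires a careful residue computation using the OPEs \eqref{eqn:  dsW1 dsWn OPE} and the footnote-formula for the mode coproduct
\[
\mathcal O_{n,m}\mapsto \mathcal O_{n,m}\otimes 1+\sum_{s\ge 0}\binom{n+m-1}{s}w^{n+m-1-s}(1\otimes\mathcal O_{n,s-n+1}),
\]
applied sector by sector to the components of $\mathds V(z)$; the binomial coefficients match those in \eqref{eqn: AA coproduct} via standard combinatorial identities, and the $\epsilon_1\epsilon_2\,\mathsf t_{0,n}\otimes\mathsf t_{0,m}$ piece arises precisely from the $\mathds W^{(1)}_{a,a}\mathds W^{(1)}_{b,b}$ trace contractions, which are of central-charge type on the $\mathcal W$-side.
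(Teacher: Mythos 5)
Your proposal is correct and follows essentially the same route as the paper: the counit, cocreation and translation axioms are checked on generators with $T=\mathrm{ad}_{\mathsf t_{1,0}}$, locality is inherited from the finite-$N$ locality of the configuration-space coproduct (Lemma \ref{lemma: Locality}) via injectivity of $\prod_N\rho_N$, and the compatibility of $\Psi_{\infty}$ is a direct check on the generators using \eqref{eqn: AA coproduct} and \eqref{eqn: map to W(infinity)}. The only cosmetic difference is that the paper verifies the translation axiom by computing with the explicit Dunkl form $\rho_N(\mathsf t_{1,0})=\sum_i\partial_{x_i}$ at each finite $N$, whereas you check it on generators and extend by the fact that both sides are derivations along the algebra map $\Delta_{\mathsf A}(w)$; both arguments are valid.
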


\begin{proof}
Let us define the covacuum $\mathfrak{C}_{\mathsf A}:\mathsf A^{(K)}\to \mathbb C[\epsilon_1,\epsilon_2]$ by mapping on generators $\mathfrak{C}_{\mathsf A}(\mathsf t_{n,m})=\mathfrak{C}_{\mathsf A}(\mathsf T_{n,m}(E^a_b))=0$ and extending it to an algebra map. Then the counit and cocreation axioms are easily checked for \eqref{eqn: AA coproduct}, thus these two axioms are satisfied for all elements in $\mathsf A^{(K)}$ since $\mathfrak{C}_{\mathsf A}\otimes \mathrm{id}$ and $\mathrm{id}\otimes \mathfrak{C}_{\mathsf A}$ are algebra homomorphisms. The locality axiom is a consequence of Lemma \ref{lemma: Locality} treated as uniform-in $N$ equations. It remains to check the translation axiom. 

Note that the operator $T=( \mathfrak{C}_{\mathsf A}\otimes \mathrm{id})\circ\Delta_{-2}:\mathsf A^{(K)}\to \mathsf A^{(K)}$ is a derivation, since we can write $$T=\lim_{w\to 0}\frac{\mathrm{d}}{\mathrm{d}w}(\mathfrak{C}_{\mathsf A}\otimes \mathrm{id})\circ\Delta_{\mathsf A}(w),$$
and the operator $\frac{\mathrm{d}}{\mathrm{d}w}$ is a derivation and $(\mathfrak{C}_{\mathsf A}\otimes \mathrm{id})\Delta(_{\mathsf A}w)$ is algebra homomorphism. Since $T(\mathsf t_{2,0})=0$ and $T(\mathsf T_{0,n}(E^a_b))=n\mathsf T_{0,n-1}(E^a_b)$, we conclude that $T$ is the same as the adjoint action of $\mathsf t_{1,0}$. Using the representation $\rho_N$, the operator $T$ can be written explicitly, in fact $\mathsf t_{1,0}$ is mapped to $\sum_{i=1}^N\partial_{x_i}$ in $(D(\mathbb C^{N}_{\mathrm{disj}})\otimes \gl_K^{\otimes N})^{\mathfrak S_N}$, thus 
\begin{align*}
    &\Delta(w)_{N_1,N_2}\circ T(f(x^{(1)}_i,x^{(2)}_j))-( T\otimes \mathrm{id})\circ\Delta(w)_{N_1,N_2}f(x^{(1)}_i,x^{(2)}_j)\\
    &=\left(\sum_{k=1}^{N_1}\frac{\partial f}{\partial {x^{(1)}_k}}\right)(x^{(1)}_i,x^{(2)}_j+w)+\left(\sum_{k=1}^{N_2}\frac{\partial f}{\partial {x^{(2)}_k}}\right)(x^{(1)}_i,x^{(2)}_j+w)-\sum_{k=1}^{N_1}\partial_{x^{(1)}_k}\left(f(x^{(1)}_i,x^{(2)}_j+w)\right)\\
    &=\frac{\mathrm{d}}{\mathrm{d}w}f(x^{(1)}_i,x^{(2)}_j+w),
\end{align*}
for all functions $f$ on $\mathbb C^{N_1+N_2}_{\mathrm{disj}}$, and this equation extends to hold for differential operators in $(D(\mathbb C^{N}_{\mathrm{disj}})\otimes \gl_K^{\otimes N})^{\mathfrak S_N}$ by linearity. In particular, the translation axiom is satisfied for all $\Delta(w)_{N_1,N_2}$, and it is therefore satisfied for the uniform-in-$N$ coproduct $\Delta_{\mathsf A}(w)$.

Finally, the statement that $\Psi_{\infty}$ is a vertex coalgebra map is checked by direct computation using the formula \eqref{eqn: AA coproduct}.
\end{proof}

\begin{remark}
Theorem \ref{thm: A^K is vertex coalgebra} implies that $\mathsf A^{(K)}$ is a vertex coalgebra object in the category of $\mathbb C[\epsilon_1,\epsilon_2]$-algebras, this is analog to the notion of a bialgebra which is a coalgebra object in the category of algebras.
\end{remark}

\section{The Algebra \texorpdfstring{$\mathsf L^{(K)}$}{Lk} and Coproducts}\label{sec: L^K and coproducts}
Let $\mathcal H^{(K)}_N$ be the extended rational Cherednik algebra defined in subsection \ref{subsec: PBW}, then define the extended trigonometric Cherednik algebra \cite{guay2005cherednik,etingof2010lecture}
\begin{align}
    \mathbb H^{(K)}_N=\mathcal H^{(K)}_N\underset{\mathbb C[x_1,\cdots,x_N]}{\otimes} \mathbb C[x_1^{\pm},\cdots,x_N^{\pm}],
\end{align}
and similarly its spherical subalgebra $\mathrm{S}\mathbb{H}^{(K)}_N=\mathbf{e}\mathbb H^{(K)}_N\mathbf e$, where $\mathbf e=\frac{1}{N!}\sum_{g\in \mathfrak{S}_N}g$ is an idempotent element of group algebra $\mathbb C[\mathfrak{S}_N]$. $\mathbb H^{(K)}_N$ and its spherical subalgebra have Dunkl embeddings
\begin{equation}
    \begin{split}
        \mathbb H^{(K)}_N\hookrightarrow &\mathbb C[\mathfrak{S}_N]\ltimes \left(D(\mathbb C^{\times N}_{\mathrm{disj}})\otimes \gl_K^{\otimes N}\right)[\epsilon_1,\epsilon_2],\\
        \mathrm{S}\mathbb{H}^{(K)}_N&\hookrightarrow \left(D(\mathbb C^{\times N}_{\mathrm{disj}})\otimes \gl_K^{\otimes N}\right)^{\mathfrak{S}_N}[\epsilon_1,\epsilon_2].
    \end{split}
\end{equation}
Consider the algebra embedding $D(\mathbb C^{\times N}_{\mathrm{disj}})\hookrightarrow D(\mathbb C^{ N}_{\mathrm{disj}})(\!(w^{-1})\!)$ given by 
\begin{align}
    x_i\mapsto x_i+w,\quad x_i^{-1}\mapsto \sum_{n=0}^{\infty} w^{-n-1} (-x_i)^n,
\end{align}
this map induces algebra embeddings: $\mathbb H^{(K)}_N\hookrightarrow \mathcal H^{(K)}_N(\!(w^{-1})\!)$ and
\begin{align}\label{eqn: meromorphic embedding_finite N}
    S_N(w):\mathrm{S}\mathbb{H}^{(K)}_N\hookrightarrow \mathrm{S}\mathcal{H}^{(K)}_N (\!(w^{-1})\!).
\end{align}
The restriction of $S_N(w)$ on $\mathrm{S}\mathcal{H}^{(K)}_N$ gives a map $\mathrm{S}\mathcal{H}^{(K)}_N\hookrightarrow \mathrm{S}\mathcal{H}^{(K)}_N [w]$ and its formation is independent of $N$, so we can take its uniform-in-$N$ limit and obtain a $\mathbb C[\epsilon_1,\epsilon_2]$-algebra embedding $S(w):\mathsf A^{(K)}\hookrightarrow\mathsf A^{(K)}[w]$ such that
\begin{equation}
\begin{split}
S(w)(\mathsf T_{0,n}(E^a_b))&=\sum_{m=0}^n \binom{n}{m}w^{n-m}\mathsf T_{0,m}(E^a_b),\\
S(w)(\mathsf t_{2,0})&=\mathsf t_{2,0}.
\end{split}
\end{equation}
In fact one can derive that
\begin{align}
    S(w)=(\mathfrak C_{\mathsf A}\otimes 1)\circ \Delta_{\mathsf A}(w).
\end{align}
The image of $S(w)$ is characterized by the annihilator of the derivation operator $\mathrm{ad}_{\mathsf t_{1,0}}-\partial_w$ acting on $\mathsf A^{(K)}[w]$. In fact, $S(-w)$ transports the zero set of $\mathrm{ad}_{\mathsf t_{1,0}}-\partial_w$ to the zero set of $\partial_w$ acting on $\mathsf A^{(K)}[w]$, which is exactly $\mathsf A^{(K)}$, thus $S(w)(\mathsf A^{(K)})$ equals to the zero set of $\mathrm{ad}_{\mathsf t_{1,0}}-\partial_w$.

\begin{remark}
Let $V$ be a vertex coalgbera with translation operator $T$ (see Appendix \ref{sec: vertex coalgebras and vertex comodules}), then we define $\mathcal L(V)$ to be the set of $\mathbb C^{\times }$-finite annihilators of $T-\partial_z$ in $V(\!(z^{-1})\!)$. We expect that there exists a Lie coalgebra structure on $\mathcal L(V)$, which is the dual notion of mode Lie algebra of a vertex algebra. We intend to call $\mathcal L(V)$ the mode Lie coalgebra of $V$. We also expect that there should be a notion of mode coalgebra $\mathfrak U(V)$ which is built from the universal enveloping coalgebra of $\mathcal L(V)$. If $V$ is a vertex coalgebra object in the category of algebras, then we expect that $\mathfrak U(V)$ is a bialgebra.

For the vertex coalgebra $\mathsf A^{(K)}$, the translation operator $T=\mathrm{ad}_{\mathsf t_{1,0}}$. The previous discussions imply that $S(w)(\mathsf A^{(K)})\subset \mathcal L(\mathsf A^{(K)})$. In the following definition, we introduce a certain subset $\mathsf L^{(K)}$ of $\mathcal L(\mathsf A^{(K)})$ which contains $S(w)(\mathsf A^{(K)})$ and will be shown to have a natural bialgebra structure. We expect that $\mathsf L^{(K)}$ can be naturally identified as a sub-bialgebra of the mode coalgebra (bialgebra) $\mathfrak U(\mathsf A^{(K)})$, if the latter is appropriately defined.
\end{remark}

\begin{definition}\label{def: the algebra L^K}
The algebra $\mathsf L^{(K)}$ is the $\mathbb C[\epsilon_1,\epsilon_2]$-subalgebra of $\mathsf A^{(K)}(\!(w^{-1})\!)$ generated by $S(w)(\mathsf A^{(K)})$ and $\mathsf t_{0,-1}$ defined by
\begin{equation}
\begin{split}
\mathsf t_{0,-1}=\sum_{n=0}^{\infty} (-1)^n w^{-n-1} \mathsf t_{0,n}.
\end{split}
\end{equation}
We introduce the notation $\mathsf T_{0,-1}(X):=[\mathsf t_{0,-1},S(w)(\mathsf T_{1,1}(X))]$ for $X\in\gl_K$, and recursively define 
\begin{align}
\mathsf T_{0,-n}(X)=\frac{1}{1-n}[\mathsf t_{1,0},\mathsf T_{0,-n+1}(X)],\quad \mathsf t_{0,-n}=\frac{1}{1-n}[\mathsf t_{1,0},\mathsf t_{0,-n+1}],
\end{align}
for all $n\in \mathbb N_{>1}$. We still use $S(w)$ to denote the canonical embedding $\mathsf L^{(K)}\hookrightarrow \mathsf A^{(K)}(\!(w^{-1})\!)$.
\end{definition}
Note that $\mathsf L^{(K)}$ is a $\mathbb Z$-graded subalgebra of $\mathsf A^{(K)}(\!(w^{-1})\!)$ with $\deg \mathsf T_{0,-n}(X)=\deg \mathsf t_{0,-n}=-n$. Since $\mathsf t_{0,-1}$ is also annihilated by $\mathrm{ad}_{\mathsf t_{1,0}}-\partial_w$, we see that the whole $S(w)(\mathsf L^{(K)})$ is annihilated by $\mathrm{ad}_{\mathsf t_{1,0}}-\partial_w$. Therefore 
\begin{align}\label{eqn: positive part of L is A}
    S(w)(\mathsf L^{(K)})\cap \mathsf A^{(K)}[w]=S(w)(\mathsf A^{(K)}).
\end{align}

\begin{remark}
Our construction of $S(w):\mathsf L^{(K)}\hookrightarrow \mathsf A^{(K)}(\!(w^{-1})\!)$ resembles similarity to that of the formal shift map from the Yangian double $\mathrm{DY}_{\hbar}(\mathfrak g)$ to the formal power series ring $Y_{\hbar}(\mathfrak g)(\!(z^{-1})\!)$ where $Y_{\hbar}(\mathfrak g)$ is the Yangian of a Kac-Moody Lie algebra $\mathfrak g$, see \cite{wendlandt2022formal,wendlandt2022restricted} for relevant discussions.
\end{remark}

\begin{lemma}\label{lem: rho_N extends to loop Yangian}
The map $\rho_N: \mathsf A^{(K)}\to \mathrm{S}\mathcal{H}^{(K)}_N[\epsilon_2^{-1}]$ extends to a map $\rho_N: \mathsf L^{(K)}\to \mathrm{S}\mathbb{H}^{(K)}_N[\epsilon_2^{-1}]$ such that 
\begin{align}
    \rho_N(\mathsf T_{0,-n}(E^a_b))=\sum_{i=1}^N E^a_{b,i} x_i^{-n},\quad \rho_N(\mathsf t_{0,-n})=\frac{1}{\epsilon_2}\sum_{i=1}^N x_i^{-n}.
\end{align}
Moreover $\ker(\prod_N\rho_N)=0$.
\end{lemma}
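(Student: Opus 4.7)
The plan is to construct the extension of $\rho_N$ to $\mathsf L^{(K)}$ by descent along the embedding $S_N(w):\mathrm{S}\mathbb H^{(K)}_N[\epsilon_2^{-1}]\hookrightarrow \mathrm{S}\mathcal H^{(K)}_N[\epsilon_2^{-1}](\!(w^{-1})\!)$ of \eqref{eqn: meromorphic embedding_finite N}. The original $\rho_N:\mathsf A^{(K)}\to\mathrm{S}\mathcal H^{(K)}_N[\epsilon_2^{-1}]$ extends coefficient-wise to an algebra homomorphism $\rho_N^{\sharp}:\mathsf A^{(K)}(\!(w^{-1})\!)\to \mathrm{S}\mathcal H^{(K)}_N[\epsilon_2^{-1}](\!(w^{-1})\!)$. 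I will verify on the generating set $S(w)(\mathsf A^{(K)})\cup\{\mathsf t_{0,-1}\}$ of $\mathsf L^{(K)}$ that $\rho_N^{\sharp}$ lands in the subalgebra $S_N(w)(\mathrm{S}\mathbb H^{(K)}_N[\epsilon_2^{-1}])$. For $\beta\in\mathsf A^{(K)}$ the identity $\rho_N^{\sharp}(S(w)(\beta))=S_N(w)(\rho_N(\beta))$ is immediate from the fact that $S(w)$ is the uniform-in-$N$ limit of the restrictions $S_N(w)|_{\mathrm{S}\mathcal H^{(K)}_N}$. For $\mathsf t_{0,-1}=\sum_{n\geq 0}(-1)^nw^{-n-1}\mathsf t_{0,n}$, a direct expansion gives $\rho_N^{\sharp}(\mathsf t_{0,-1})=\frac{1}{\epsilon_2}\sum_i(w+x_i)^{-1}\mathbf e=S_N(w)\bigl(\tfrac{1}{\epsilon_2}\sum_i x_i^{-1}\mathbf e\bigr)$. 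Since $S_N(w)$ is an algebra embedding, its image is a subalgebra, and I define $\rho_N:=S_N(w)^{-1}\circ\rho_N^{\sharp}|_{\mathsf L^{(K)}}$; this is automatically a $\mathbb C[\epsilon_1,\epsilon_2]$-algebra homomorphism.

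Given $\rho_N$ as an algebra map, the closed formulas for $\rho_N(\mathsf T_{0,-n}(X))$ and $\rho_N(\mathsf t_{0,-n})$ follow by induction on $n\geq 1$. The case $\rho_N(\mathsf t_{0,-1})=\frac{1}{\epsilon_2}\sum_i x_i^{-1}\mathbf e$ is the construction above. For $\rho_N(\mathsf T_{0,-1}(X))$, I will compute $\rho_N^{\sharp}$ of the defining commutator $[\mathsf t_{0,-1},S(w)(\mathsf T_{1,1}(X))]$ inside the Dunkl embedding and compare with $S_N(w)(\sum_i X_i x_i^{-1}\mathbf e)$; injectivity of $S_N(w)$ then yields $\rho_N(\mathsf T_{0,-1}(X))=\sum_i X_i x_i^{-1}\mathbf e$. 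The induction step is routine once one notes that $S(w)$ fixes $\mathsf t_{1,0}$ (since in the Dunkl representation $\rho_N(\mathsf t_{1,0})$ acts as $\sum_i\partial_{x_i}$, which is invariant under the shift $x_i\mapsto x_i+w$), and that the Dunkl correction $\epsilon_1\sum_{k\neq i}\frac{s_{ik}\Omega_{ik}}{x_i-x_k}$ in $y_i$ acts trivially on the symmetric expressions $\sum_j X_j x_j^{-n+1}$ and $\sum_j x_j^{-n+1}$; thus $\rho_N([\mathsf t_{1,0},\mathsf T_{0,-n+1}(X)])=(1-n)\sum_i X_i x_i^{-n}\mathbf e$ and analogously for $\mathsf t_{0,-n}$, matching the prefactor $\frac{1}{1-n}$ in the recursive definition.

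For the kernel statement, observe that Corollary \ref{cor: truncation of D} gives $\ker(\prod_N\rho_N)=0$ on $\mathsf D^{(K)}$; combining this with $\mathsf D^{(K)}[\epsilon_2^{-1}]=\mathsf A^{(K)}[\epsilon_2^{-1}]$ and the freeness of $\mathsf A^{(K)}$ over $\mathbb C[\epsilon_1,\epsilon_2]$ (Theorem \ref{thm: PBW}), clearing denominators and using the absence of $\epsilon_2$-torsion in $\mathsf A^{(K)}$ upgrades this to $\ker(\prod_N\rho_N)=0$ on $\mathsf A^{(K)}$, and hence (coefficient-wise) on $\mathsf A^{(K)}(\!(w^{-1})\!)$. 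Now if $\alpha\in\mathsf L^{(K)}$ satisfies $\rho_N(\alpha)=0$ for all $N$, then $\rho_N^{\sharp}(\alpha)=S_N(w)(\rho_N(\alpha))=0$ for all $N$, so $\alpha=0$ in $\mathsf A^{(K)}(\!(w^{-1})\!)$ and therefore in $\mathsf L^{(K)}$.

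The main obstacle I anticipate is the base-case identification $\rho_N(\mathsf T_{0,-1}(X))=\sum_i X_i x_i^{-1}\mathbf e$: the explicit commutator $[\frac{1}{\epsilon_2}\sum_i(w+x_i)^{-1},\sum_j\sym((x_j+w)y_j)X_j]$ in the Dunkl embedding produces an intricate web of terms involving the cross-operators $s_{ik}\Omega_{ik}$, and showing that they conspire to reproduce precisely $\sum_i X_i(x_i+w)^{-1}\mathbf e$ requires a careful symmetrization argument. Once this base case is secured, the inductive bootstrapping and the kernel triviality are purely formal.
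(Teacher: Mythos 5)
Your proposal is correct and follows essentially the same route as the paper's (very terse) proof: extend $\rho_N$ coefficient-wise in $w^{-1}$, check that $\mathsf t_{0,-1}$ and hence all of $\mathsf L^{(K)}$ lands in $S_N(w)(\mathrm{S}\mathbb H^{(K)}_N[\epsilon_2^{-1}])$, and deduce kernel triviality from Corollary \ref{cor: truncation of D}. The obstacle you anticipate in the base case largely dissolves: since $\frac{1}{\epsilon_2}\sum_i(w+x_i)^{-1}$ is a symmetric function, it commutes with every $s_{ik}\Omega_{ik}$, so only the $\epsilon_2\partial_i$ part of the Dunkl operator contributes and the commutator collapses directly to $\sum_j X_j(w+x_j)^{-1}\mathbf e$.
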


\begin{proof}
Extending $\rho_N$ by formal power series in $w^{-1}$, we see that the image of $\mathsf T_{0,-1}(E^a_b)$ is exactly $S_N(w)(\sum_{i=1}^N E^a_{b,i} x_i^{-1})$, thus $\rho_N(\mathsf L^{(K)})\subset S_N(w)(\mathrm{S}\mathbb{H}^{(K)}_N[\epsilon_2^{-1}])$. By Corollary \ref{cor: truncation of D}, the intersection of kernels $\ker(\rho_N)\subset \mathsf L^{(K)}$ is trivial.
\end{proof}

Lemma \ref{lem: rho_N extends to loop Yangian} implies that $\mathsf L^{(K)}$ can be equivalently defined as the $\mathbb C[\epsilon_1,\epsilon_2]$-subalgebra of $\prod_N \mathrm{S}\mathbb{H}^{(K)}_N[\epsilon_2^{-1}]$ generated by $\left(\rho_N(\mathsf A^{(K)})\right)_{N=1}^{\infty}$ and $\frac{1}{\epsilon_2}\left(\sum_{i=1}^N x_i^{-1}\right)_{N=1}^{\infty}$.

\begin{lemma}\label{lem: rho_N for loop Yangian is surjective}
After inverting $\epsilon_2$, the map $\rho_N:\mathsf L^{(K)}[\epsilon_2^{-1}]\to \mathrm{S}\mathbb{H}^{(K)}_N[\epsilon_2^{-1}]$ is surjective.
\end{lemma}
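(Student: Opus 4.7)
The plan is to exhibit enough elements in $\rho_N(\mathsf L^{(K)})[\epsilon_2^{-1}]$ to fill out the associated graded of $\mathrm{S}\mathbb{H}^{(K)}_N[\epsilon_2^{-1}]$ under the order filtration (degree in the $y_i$'s, which on the Dunkl side is the order of the differential operator). On that associated graded, $\mathrm{S}\mathbb{H}^{(K)}_N[\epsilon_2^{-1}]$ is isomorphic to $(\mathbb{C}[x_i^{\pm 1},y_i]\otimes\gl_K^{\otimes N})^{\mathfrak S_N}[\epsilon_1,\epsilon_2^{-1}]$, and this commutative algebra is generated (by a trigonometric analogue of Schiffmann's PBW argument already invoked in Proposition \ref{prop: characterize vertical filtration}) by the $\mathfrak S_N$-symmetrizations $\sum_{i}y_i^n x_i^m E^a_{b,i}\cdot\mathbf e$ for $m\in\mathbb Z$, $n\in\mathbb N$, $E^a_b\in\mathfrak{gl}_K$, together with $\tfrac{1}{\epsilon_2}\sum_i y_i^n x_i^m\cdot\mathbf e$. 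So it suffices to lift each of these to $\rho_N(\mathsf L^{(K)})[\epsilon_2^{-1}]$ modulo lower-order-in-$y$ terms.

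For $m\ge 0$ the required lifts are already in $\rho_N(\mathsf A^{(K)})[\epsilon_2^{-1}]\subset\rho_N(\mathsf L^{(K)})[\epsilon_2^{-1}]$ by Lemma \ref{lem: map rho_N}. For $m<0$, I would induct on the pair $(n,|m|)$ with the base case $n=0$ handed to us directly by Lemma \ref{lem: rho_N extends to loop Yangian}: namely $\rho_N(\mathsf T_{0,-n}(E^a_b))=\sum_i E^a_{b,i}\,x_i^{-n}\mathbf e$ and $\rho_N(\mathsf t_{0,-n})=\tfrac{1}{\epsilon_2}\sum_i x_i^{-n}\mathbf e$. For the induction step, I would use the adjoint action of $\rho_N(\mathsf t_{2,0})=\epsilon_2\sum_i\partial_i^2-\sum_{i\neq j}\tfrac{\epsilon_1}{(x_i-x_j)^2}(\epsilon_2\Omega_{ij}+\epsilon_1)$, which lies in $\mathrm{S}\mathcal H^{(K)}_N[\epsilon_2^{-1}]\subset \rho_N(\mathsf L^{(K)})[\epsilon_2^{-1}]$. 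The key computation is the Weyl-algebra identity $[\partial_i^2,x_i^{-m}]=-2m\,x_i^{-m-1}\partial_i+m(m+1)x_i^{-m-2}$ (and its descendants after more brackets), which shows that each bracket $\mathrm{ad}_{\rho_N(\mathsf t_{2,0})}$ against a known generator $\sum_i y_i^n x_i^{-m} E^a_{b,i}\mathbf e$ produces $-2m\epsilon_2\sum_i y_i^{n+1} x_i^{-m-1}E^a_{b,i}\mathbf e$ up to strictly lower $y$-degree plus Dunkl correction terms of strictly lower $y$-degree. Both sorts of remainders are in $\rho_N(\mathsf L^{(K)})[\epsilon_2^{-1}]$ by the inductive hypothesis (since the Dunkl correction involves $(x_i-x_j)^{-1}\Omega_{ij}$ which is already in $\mathrm{S}\mathcal H^{(K)}_N[\epsilon_2^{-1}]$), and after inverting $\epsilon_2$ the prefactor $-2m\epsilon_2$ is a unit for $m\neq 0$, so we can solve for the new generator.

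With all generators of $\mathrm{gr}(\mathrm{S}\mathbb{H}^{(K)}_N[\epsilon_2^{-1}])$ produced, the image $\rho_N(\mathsf L^{(K)})[\epsilon_2^{-1}]$ and $\mathrm{S}\mathbb{H}^{(K)}_N[\epsilon_2^{-1}]$ have the same associated graded, so a standard filtered-surjectivity argument (the order filtration is exhaustive and complete in each graded component) finishes the proof.

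The main obstacle I anticipate is two-fold. First, making the inductive bookkeeping precise: one needs a double induction on $(n,|m|)$ together with a secondary induction that the Dunkl corrections — which involve $(x_i-x_j)^{-1}$ and matrix factors $\Omega_{ij}$ — can be resolved inside $\mathrm{S}\mathcal H^{(K)}_N[\epsilon_2^{-1}]$ without raising the $y$-degree. Second, verifying the PBW-type statement that $\mathrm{gr}\mathrm{S}\mathbb{H}^{(K)}_N$ is generated (over $\mathbb C[\epsilon_1,\epsilon_2]$) by the symmetrizations listed above; this is the trigonometric, matrix-extended cousin of Corollary \ref{cor: truncation of D} and follows by the same ultraproduct/Dunkl argument, but it should be stated and used carefully to justify that controlling the associated graded suffices.
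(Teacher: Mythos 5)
Your reduction to the associated graded under the order filtration is reasonable in principle, but the induction you propose does not reach all of the required generators. The leading symbol of $\mathrm{ad}_{\rho_N(\mathsf t_{2,0})}$ acting on $\sum_i y_i^n x_i^{-m}E^a_{b,i}\mathbf e$ is $-2m\epsilon_2\sum_i y_i^{n+1}x_i^{-m-1}E^a_{b,i}\mathbf e$: it raises the $y$-degree by one \emph{and simultaneously lowers the $x$-exponent by one}, and the coefficient $-2m$ vanishes precisely at $m=0$. Starting from the base cases ($y$-degree $0$ with arbitrary negative $x$-exponent, together with all of $\mathrm S\mathcal H^{(K)}_N$ at nonnegative $x$-exponent), repeated application of this single operator only produces the symbols $\sum_i y_i^n x_i^{-j}E^a_{b,i}$ with $j\ge n+1$ (or $j\le 0$). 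The entire wedge $1\le j\le n$ — including $\sum_i y_ix_i^{-1}E^a_{b,i}$ and $\sum_i y_i^2x_i^{-2}E^a_{b,i}$ — is never reached, because the chain of sources needed to land there always passes through an $m=0$ step where the leading coefficient dies. Since these polarized power sums are genuinely needed to generate the diagonal invariants $(\mathbb C[x_i^{\pm 1},y_i]\otimes\gl_K^{\otimes N})^{\mathfrak S_N}$, the surjectivity onto the associated graded is not established as written. The gap is fixable within your framework: replace (or supplement) $\mathrm{ad}_{\rho_N(\mathsf t_{2,0})}$ by $\mathrm{ad}_{\rho_N(\mathsf t_{2,1})}$, whose leading symbol sends $(n,-m)$ to $(n+1,-m)$ with coefficient $-(2m+n)\epsilon_2$, nonvanishing for all $m\ge 1$; this sweeps out every column $x^{-m}$, $m\ge 1$, from the known base case $\rho_N(\mathsf T_{0,-m}(E^a_b))$. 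You would also still need to nail down the trigonometric polarized-power-sum generation statement for $\mathrm{gr}\,\mathrm S\mathbb H^{(K)}_N$, which you flag but do not prove.

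For comparison, the paper avoids all of this filtration bookkeeping by a different device: it uses the inversion involution $\iota$ of $\mathrm S\mathbb H^{(K)}_N$ given by $x_i\mapsto x_i^{-1}$, $y_i\mapsto -x_iy_ix_i$, checks that the image of $\rho_N$ is $\iota$-stable (the only nontrivial point being $\iota(\rho_N(\mathsf t_{2,0}))=\frac{1}{\epsilon_2}\rho_N(\mathsf T_{\mathbf r}(1))$ for the word $\mathbf r(X,Y)=YXYYXY$, via the unsymmetrized generators of Proposition \ref{prop: other generators}), and then observes that every $\sum_i x_i^my_i^nE^a_{b,i}$ with $m<0$ is carried by $\iota$ into $\rho_N(\mathsf D^{(K)})$. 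That argument lands exactly on the needed generators in one stroke, with no associated-graded analysis.
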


\begin{proof}
Consider the algebra involution $\iota: \mathrm{S}\mathbb{H}^{(K)}_N\cong \mathrm{S}\mathbb{H}^{(K)}_N$ induced by
\begin{align}
    x_i\mapsto x_i^{-1},\;y_i\mapsto -x_iy_ix_i.
\end{align}
We claim that the image of $\rho_N:\mathsf L^{(K)}[\epsilon_2^{-1}]\to \mathrm{S}\mathbb{H}^{(K)}_N[\epsilon_2^{-1}]$ is invariant under the involution $\iota$. In fact, it follows from definition that $\iota(\rho_N(\mathsf T_{0,\pm n}(E^a_b)))=\rho_N(\mathsf T_{0,\mp n}(E^a_b))$, and moreover we have
\begin{align}
    \iota(\rho_N(\mathsf t_{2,0}))=\frac{1}{\epsilon_2}\sum_{i=1}^Ny_i^2=\frac{1}{\epsilon_2}\sum_{i=1}^Nx_iy_ix_i^2y_ix_i=\frac{1}{\epsilon_2}\rho_N(\mathsf T_{\mathbf r}(1)),
\end{align}
where $\mathbf r$ is the word of two letters: $\mathbf r(X,Y)=YXYYXY$, see Proposition \ref{prop: other generators}. It is elementary to see that $\mathrm{S}\mathbb{H}^{(K)}_N$ is generated by elements of the form $\sum_{i=1}^N x_i^m y_i^n E^a_{b,i}$, where $m\in \mathbb Z,n\in \mathbb Z_{\ge 0}$, and those which have $m\ge 0$ are inside the subalgebra $\mathrm{S}\mathcal{H}^{(K)}_N$. Corollary \ref{cor: truncation of D} implies that $\mathrm{S}\mathcal{H}^{(K)}_N$ is contained in the image of $\rho_N$, thus it remains to show that $\sum_{i=1}^N x_i^m y_i^n E^a_{b,i}$ for $m<0$ are in the image of $\rho_N$. Notice that
\begin{align*}
    \iota(\sum_{i=1}^N x_i^m y_i^n E^a_{b,i})=\sum_{i=1}^N x_i^{-m+1} y_ix_i^2y_i\cdots x_i^2 y_ix_i E^a_{b,i}=\rho_N(\mathsf T_{\mathbf r'}(E^a_b)),
\end{align*}
where $\mathbf r'$ is the word of two letters: $\mathbf r'(X,Y)=Y^{-m+1}XY^2X\cdots XY^2XY$, see Proposition \ref{prop: other generators}, thus $\iota(\sum_{i=1}^N x_i^m y_i^n E^a_{b,i})\in \mathrm{im}(\rho_N)$. Since $\mathrm{im}(\rho_N)$ is invariant under the involution $\iota$, we have $\sum_{i=1}^N x_i^m y_i^n E^a_{b,i}\in \mathrm{im}(\rho_N)$, this concludes the proof.
\end{proof}

Combining Lemma \ref{lem: rho_N extends to loop Yangian} and \ref{lem: rho_N for loop Yangian is surjective}, we see that $\mathsf L^{(K)}$ can be regarded as a uniform-in-$N$ algebra of the matrix extended spherical trigonometric Cherednik algebra $\mathrm{S}\mathbb{H}^{(K)}_N$.

A byproduct of the proof of the Lemma \ref{lem: rho_N for loop Yangian is surjective} is that the involution $\iota:x_i\mapsto x_i^{-1},\;y_i\mapsto -x_iy_ix_i$ generalizes to uniform-in-$N$ limit:
\begin{proposition}\label{prop: involution on L}
Let $\mathbf r(x,y)=yxyyxy$, then the element $\mathsf T_{\mathbf r}(1)\in \epsilon_2\cdot\mathsf A^{(K)}$, therefore there exists a unique $\mathbb C[\epsilon_1,\epsilon_2]$-algebra involution $\iota: \mathsf L^{(K)}\cong \mathsf L^{(K)}$ such that $\iota(\mathsf T_{0,n}(X))=\mathsf T_{0,-n}(X)$ and $\iota(\mathsf t_{2,0})=\frac{1}{\epsilon_2}\mathsf T_{\mathbf r}(1)$.
\end{proposition}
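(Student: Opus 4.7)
The plan divides into two pieces: first, establish the divisibility claim $\mathsf T_{\mathbf r}(1) \in \epsilon_2 \mathsf A^{(K)}$ so that $\tfrac{1}{\epsilon_2}\mathsf T_{\mathbf r}(1)$ is a well-defined element of $\mathsf A^{(K)} \subset \mathsf L^{(K)}$; and second, construct the involution $\iota$ by restricting the componentwise involution of $\prod_N \mathrm{S}\mathbb{H}^{(K)}_N[\epsilon_2^{-1}]$ and verifying that it preserves $\mathsf L^{(K)}$.

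For the divisibility, I would work inside $\mathsf D^{(K)}$, which contains $\mathsf T_{\mathbf r}(1)$ by Proposition \ref{prop: other generators}, and use the embedding $\prod_N \rho_N: \mathsf D^{(K)} \hookrightarrow \prod_N \mathrm{S}\mathcal{H}^{(K)}_N$ of Corollary \ref{cor: truncation of D} together with flatness of $\mathsf D^{(K)}$ over $\mathbb{C}[\epsilon_1,\epsilon_2]$ from the PBW theorem \ref{thm: PBW}. The key identity, already implicit in the proof of Lemma \ref{lem: rho_N for loop Yangian is surjective}, is
\[
\rho_N(\mathsf T_{\mathbf r}(1)) = \sum_i x_i y_i x_i^2 y_i x_i = \iota_N\!\left(\sum_i y_i^2\right) = \epsilon_2\cdot \iota_N(\rho_N(\mathsf t_{2,0}))
\]
in the trigonometric algebra $\mathrm{S}\mathbb{H}^{(K)}_N$, where $\iota_N$ is the involution with $\iota_N(x_i) = x_i^{-1}$ and $\iota_N(y_i) = -x_i y_i x_i$. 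This already gives $\mathsf T_{\mathbf r}(1)/\epsilon_2 \in \mathsf D^{(K)}[\epsilon_2^{-1}]$. To descend to integrality, I would expand $\sum_i x_i y_i x_i^2 y_i x_i$ in the symmetrized generators via iterated application of \eqref{eqn: transformation to unsymmetrized basis}, moving from the unsymmetrized word $YXY^2XY$ toward $\sym(Y^4X^2) = \epsilon_2 \mathsf t_{4,2}$; each reordering step produces a leading $\mathsf T_{n,m}(1) = \epsilon_2\mathsf t_{n,m}$ contribution plus lower-filtered quadratic corrections, and an induction on the diagonal filtration of Proposition \ref{prop: filtration} shows that the full expression remains divisible by $\epsilon_2$ in $\mathsf D^{(K)}$. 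Injectivity of $\prod_N \rho_N|_{\mathsf D^{(K)}}$ then certifies the divisibility abstractly.

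For the involution, the componentwise assignment $\iota := \prod_N \iota_N$ defines an involution of $\prod_N \mathrm{S}\mathbb{H}^{(K)}_N[\epsilon_2^{-1}]$ preserving $\epsilon_1,\epsilon_2$, and by Lemma \ref{lem: rho_N extends to loop Yangian} it suffices to check $\iota(\mathsf L^{(K)}) \subseteq \mathsf L^{(K)}$ on generators. One computes directly $\iota(\mathsf T_{0,n}(X)) = \mathsf T_{0,-n}(X)$ for all $n\in\mathbb{Z}$, $\iota(\mathsf t_{0,-1}) = \mathsf t_{0,1}$, and $\iota(\mathsf t_{2,0}) = \tfrac{1}{\epsilon_2}\mathsf T_{\mathbf r}(1) \in \mathsf A^{(K)} \subset \mathsf L^{(K)}$ using the first step. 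The remaining generators of $\mathsf A^{(K)}$ (such as $\mathsf T_{1,0}(X), \mathsf t_{1,1}, \mathsf t_{0,2}$ and their higher-index counterparts) are iterated commutators of $\mathsf t_{2,0}$ with the $\mathsf T_{0,n}(X)$ by Theorem \ref{thm: simpler definition, K>1} (for $K > 1$) or Proposition \ref{prop: simpler definition, K=1} (for $K = 1$), so their $\iota$-images lie in $\mathsf L^{(K)}$ automatically. The identity $\iota^2 = \mathrm{id}$ descends from $\iota_N^2 = \mathrm{id}$, and uniqueness follows because $\mathsf L^{(K)}$ is generated over $\mathbb{C}[\epsilon_1,\epsilon_2]$ by $\{\mathsf T_{0,n}(X) : X \in \gl_K,\ n\in\mathbb{Z}\}\cup\{\mathsf t_{2,0}\}$.

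The principal obstacle is the integrality claim in the first step: ensuring $\mathsf T_{\mathbf r}(1)/\epsilon_2 \in \mathsf A^{(K)}$ rather than only in $\mathsf A^{(K)}[\epsilon_2^{-1}]$. The telescoping corrections produced by \eqref{eqn: transformation to unsymmetrized basis} carry factors of $\epsilon_1$ and $\epsilon_3$ rather than $\epsilon_2$, so the divisibility is visible only after the entire sum is assembled and combined with $\mathsf T_{n,m}(1) = \epsilon_2\mathsf t_{n,m}$. The cleanest resolution is to verify divisibility at each $N$ inside $\mathrm{S}\mathcal{H}^{(K)}_N$ and lift via the PBW-flatness of $\mathsf D^{(K)}$ together with the injectivity of $\prod_N \rho_N$.
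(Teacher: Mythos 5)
Your construction of $\iota$ itself (restricting the componentwise involution $\iota_N\colon x_i\mapsto x_i^{-1},\ y_i\mapsto -x_iy_ix_i$ through $\prod_N\rho_N$ and checking generators) is the same as the paper's, and your reduction of everything to the single integrality claim $\mathsf T_{\mathbf r}(1)\in\epsilon_2\mathsf A^{(K)}$ is also the right skeleton. But your argument for that integrality claim — the crux of the proposition — has a genuine gap. Route (a) cannot work: the correction terms produced by \eqref{eqn: transformation to unsymmetrized basis} after tracing are of the form $-\epsilon_3\mathsf T_{\mathbf r_1\mathbf r_2}(1)-\epsilon_1\mathsf T_{\mathbf r_1}(E^a_c)\mathsf T_{\mathbf r_2}(E^c_a)$, and the quadratic $\Omega$-type pieces are \emph{not} divisible by $\epsilon_2$ (they are honest PBW monomials with unit coefficients). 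Divisibility of $\mathsf T_{\mathbf r}(1)$ by $\epsilon_2$ is in fact \emph{false} for a general word — already for $\mathbf r=xy$ one has $\sum_i y_ix_i=\epsilon_2\rho_N(\mathsf t_{1,1})+\tfrac{N\epsilon_2}{2}-\tfrac{\epsilon_1}{2}\sum_{i\neq l}s_{il}\Omega_{il}$, whose last term survives modulo $\epsilon_2$ — so no filtration-by-filtration induction can establish it; it is an exact cancellation special to $yxyyxy$. Route (b) is also incomplete as stated: you never actually verify the divisibility in $\mathrm S\mathcal H^{(K)}_N$, and even granting it, descending from ``$\rho_N(f)\in\epsilon_2\mathrm S\mathcal H^{(K)}_N$ for all $N$'' to ``$f\in\epsilon_2\mathsf D^{(K)}$'' requires injectivity of $\mathsf D^{(K)}/(\epsilon_2)\to\prod_N\mathrm S\mathcal H^{(K)}_N/(\epsilon_2)$, which is strictly stronger than the $\ker(\prod_N\rho_N)=0$ of Corollary \ref{cor: truncation of D} and is not supplied by PBW-flatness alone.

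The paper's actual mechanism is a short algebraic identity you are missing: by \eqref{eqn: A2} one has $\mathsf t_{2,0}=\tfrac12[\mathsf t_{1,0},[\mathsf t_{1,0},\mathsf t_{2,2}]]$, and direct computation in the Dunkl picture gives $\iota(\mathsf t_{1,0})=-\mathsf t_{1,2}$ and $\iota(\mathsf t_{2,2})=\mathsf t_{2,2}$; applying the already-constructed involution of $\mathsf L^{(K)}[\epsilon_2^{-1}]$ yields $\iota(\mathsf t_{2,0})=\tfrac12[\mathsf t_{1,2},[\mathsf t_{1,2},\mathsf t_{2,2}]]$, which is manifestly in $\mathsf A^{(K)}$, and this \emph{is} the statement $\mathsf T_{\mathbf r}(1)\in\epsilon_2\mathsf A^{(K)}$. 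A smaller point: to propagate $\iota$ over all of $\mathsf A^{(K)}$ you invoke Theorem \ref{thm: simpler definition, K>1}, but that presentation is only valid over $\mathbb C[\epsilon_1,\epsilon_2^{\pm},\epsilon_3^{\pm}]$; what you actually need is the unlocalized relation \eqref{eqn: A2}, which exhibits every $\mathsf T_{n,m}(X)$ and $\mathsf t_{n,m}$ as an integer multiple of an iterated $\mathrm{ad}_{\mathsf t_{2,0}}$ applied to $\mathsf T_{0,n+m}(X)$ or $\mathsf t_{0,n+m}$, so that $\iota(\mathsf t_{2,0})\in\mathsf A^{(K)}$ and $\iota(\mathsf T_{0,k}(X))=\mathsf T_{0,-k}(X)$ suffice.
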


\begin{proof}
According to the proof of Lemma \ref{lem: rho_N for loop Yangian is surjective}, we already know that there exists a unique $\mathbb C[\epsilon_1,\epsilon_2^{\pm}]$-algebra involution $\iota: \mathsf L^{(K)}[\epsilon_2^{-1}]\cong \mathsf L^{(K)}[\epsilon_2^{-1}]$ such that $\iota(\mathsf T_{0,n}(X))=\mathsf T_{0,-n}(X)$ and $\iota(\mathsf t_{2,0})=\frac{1}{\epsilon_2}\mathsf T_{\mathbf r}(1)$. It remains to show that $\iota$ preserves the subalgebra $\mathsf L^{(K)}$, which will follow from that $\mathsf T_{\mathbf r}(1)\in \epsilon_2\cdot\mathsf A^{(K)}$. We compute directly that $\iota(\mathsf t_{1,0})=-\mathsf t_{1,2}$ and $\iota(\mathsf t_{2,2})=\mathsf t_{2,2}$, thus $$\iota(\mathsf t_{2,0})=\frac{1}{2}\iota([\mathsf t_{1,0},[\mathsf t_{1,0},\mathsf t_{2,2}]])=\frac{1}{2}[\mathsf t_{1,2},[\mathsf t_{1,2},\mathsf t_{2,2}]]\in \mathsf A^{(K)}.$$ This finishes the proof.
\end{proof}

As a corollary, we see that $\iota(\mathsf t_{n,0})\in \mathsf L^{(K)}$. Since $\epsilon_2\iota(\mathsf t_{n,0})\in \mathsf A^{(K)}$, we conclude that $S(w)(\iota(\mathsf t_{n,0}))\in S(w)(\mathsf L^{(K)})\cap \mathsf A^{(K)}[w]=S(w)(\mathsf A^{(K)})$ by \eqref{eqn: positive part of L is A}, whence $\iota(\mathsf t_{n,0})\in \mathsf A^{(K)}$.\\

Composing the canonical embedding $S(w): \mathsf L^{(K)}\hookrightarrow \mathsf A^{(K)}(\!(w^{-1})\!)$ with the augmentation $\mathfrak C_{\mathsf A}:\mathsf A^{(K)}(\!(w^{-1})\!)\to \mathbb C[\epsilon_1,\epsilon_2](\!(w^{-1})\!)$, we obtain a homomorphism $\mathfrak C_{\mathsf L}:\mathsf L^{(K)}\to \mathbb C[\epsilon_1,\epsilon_2](\!(w^{-1})\!)$, which maps all generators $\mathsf T_{n,m}(E^a_b)$, $\mathsf t_{n,m}$, $\mathsf T_{0,-1}(E^a_b)$ and $\mathsf t_{0,-1}$ to zero, so the image of $\mathfrak C_{\mathsf L}$ is the coefficient ring $\mathbb C[\epsilon_1,\epsilon_2]$. In other words, the augmentation $\mathfrak C_{\mathsf A}$ of $\mathsf A^{(K)}$ extends to an augmentation $\mathfrak C_{\mathsf L}$ of $\mathsf L^{(K)}$.\\

Finally, the duality automorphism $\sigma:\mathsf A^{(K)}\cong\mathsf A^{(K)}$ extends naturally to $\sigma:\mathsf L^{(K)}\cong\mathsf L^{(K)}$ such that $\sigma(\epsilon_2)= \epsilon_3$ and
\begin{equation}
\begin{split}
\sigma(\mathsf t_{n,m})=\mathsf t_{n,m}&,\quad \sigma(\mathsf T_{n,m}(X))= -\mathsf T_{n,m}(X^{\mathrm t})-\epsilon_1\mathrm{tr}(X)\mathsf t_{n,m},
\end{split}
\end{equation}
for all $(n,m)\in \mathbb N\times \mathbb Z$.

\subsection{PBW theorem for \texorpdfstring{$\mathsf L^{(K)}$}{Lk}}
Let us define the following elements in $\mathsf A^{(K)}$: $$\mathbf T_{n,2n}(X)=(-1)^n\iota(\mathsf T_{n,0}(X)),\quad \mathbf t_{n,2n}=(-1)^n\iota(\mathsf t_{n,0}),$$ for all $n\in \mathbb N$, using the involution $\iota$ constructed in Proposition \ref{prop: involution on L}. Then we recursively define
\begin{align*}
    \mathbf T_{n,m}(X):=\frac{1}{m+1}[\mathsf t_{1,0},\mathbf T_{n,m+1}(X)],\quad \mathbf t_{n,m}:=\frac{1}{m+1}[\mathsf t_{1,0},\mathbf t_{n,m+1}],
\end{align*}
for all $0\le m<2n$. And recursively define
\begin{align*}
    \mathbf T_{m,2n}(X):=\frac{1}{m+1}[\mathbf T_{m+1,2n}(X),\mathsf t_{0,1}]&,\quad \mathbf t_{m,2n}:=\frac{1}{m+1}[\mathbf t_{m+1,2n},\mathsf t_{0,1}],\\
    \mathbf T_{m,2n-1}(X):=\frac{1}{m+1}[\mathbf T_{m+1,2n-1}(X),\mathsf t_{0,1}]&,\quad \mathbf t_{m,2n-1}:=\frac{1}{m+1}[\mathbf t_{m+1,2n-1},\mathsf t_{0,1}],
\end{align*}
for all $0\le m<n$.
\begin{lemma}\label{lem: generators mathbf T}
$\mathbf T_{n,0}(X)=\mathsf T_{n,0}(X),\mathbf t_{n,0}=\mathsf t_{n,0}$, and $\mathbf T_{0,n}(X)=\mathsf T_{0,n}(X),\mathbf t_{0,n}=\mathsf t_{0,n}$ for all $n\in \mathbb N$. More generally, 
\begin{align*}
    \mathbf T_{n,m}(X)&\equiv\mathsf T_{n,m}(X)\pmod{V_{n-1}\mathsf D^{(K)}},\\
    \mathbf t_{n,m}&\equiv\mathsf t_{n,m}\pmod{V_{n-1}\mathsf A^{(K)}},
\end{align*}
where $V_{\bullet}\mathsf A^{(K)}$ and $V_{\bullet}\mathsf D^{(K)}$ are the vertical filtrations introduced in Definition \ref{def: filtration_vertical and horizontal}. In particular $\mathbf T_{n,m}(X)$ belongs to the subalgebra $\mathsf D^{(K)}$.
\end{lemma}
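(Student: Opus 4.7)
The plan is to first identify the starting element $\mathbf T_{n,2n}(X)=(-1)^n\iota(\mathsf T_{n,0}(X))$ with an explicit unsymmetrized generator in $\mathsf D^{(K)}$. Since $\iota$ acts by $y_i\mapsto -x_iy_ix_i$, one has $\rho_N(\mathbf T_{n,2n}(X))=\sum_{i=1}^N(x_iy_ix_i)^n X_i$, so by Proposition~\ref{prop: other generators} $\mathbf T_{n,2n}(X)=\mathsf T_{(YXY)^n}(X)\in\mathsf D^{(K)}$ (using Lemma~\ref{lem: rho_N extends to loop Yangian} that $\ker(\prod_N\rho_N)=0$ on $\mathsf L^{(K)}$), and the parallel argument gives $\mathbf t_{n,2n}\in\mathsf A^{(K)}$. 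Because the derivations $\mathrm{ad}_{\mathsf t_{1,0}}$ and $\mathrm{ad}_{\mathsf t_{0,1}}$ preserve these subalgebras (by \eqref{eqn: A2} with $(p,q)=(1,0)$ or $(0,1)$), the recursive definitions force every $\mathbf T_{n,m}(X)\in\mathsf D^{(K)}$ and $\mathbf t_{n,m}\in\mathsf A^{(K)}$, yielding the last assertion of the lemma.

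Next I would compute $\rho_N(\mathbf T_{n,0}(X))$ directly. Using the Cherednik relations, $\rho_N(\mathsf t_{1,0})=\frac{1}{\epsilon_2}\sum_j y_j=:\Delta$ satisfies $[\Delta,x_i]=1$ (because $\sum_j[y_j,x_i]=\epsilon_2$ after the Dunkl-type terms telescope) and $[\Delta,y_i]=0$ (since $[y_j,y_i]=0$); dually $\widetilde\Delta:=\rho_N(\mathsf t_{0,1})=\frac{1}{\epsilon_2}\sum_j x_j$ satisfies $[\widetilde\Delta,x_i]=0$ and $[\widetilde\Delta,y_i]=-1$. Applying Leibniz with only these derivation rules (no $x$-$y$ commutation assumed), the combinatorial identity $\mathrm{ad}_\Delta^{2n}\bigl((xyx)^n\bigr)=(2n)!\,y^n$ follows by counting the $(2n)!$ orderings of removing the $2n$ occurrences of $x$ from the word. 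Therefore $\rho_N(\mathbf T_{n,0}(X))=\frac{1}{(2n)!}\mathrm{ad}_\Delta^{2n}\rho_N(\mathbf T_{n,2n}(X))=\sum_i y_i^n X_i=\rho_N(\mathsf T_{n,0}(X))$, and analogously with $\widetilde\Delta$ one gets $\rho_N(\mathbf T_{0,n}(X))=\rho_N(\mathsf T_{0,n}(X))$. Since both elements lie in $\mathsf D^{(K)}$, Corollary~\ref{cor: truncation of D} ($\ker(\prod_N\rho_N)=0$ on $\mathsf D^{(K)}$) forces the exact equalities $\mathbf T_{n,0}(X)=\mathsf T_{n,0}(X)$ and $\mathbf T_{0,n}(X)=\mathsf T_{0,n}(X)$; the $\mathbf t$-versions are identical.

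For the filtered statement, $\iota$ preserves the vertical filtration because $y_i\mapsto -x_iy_ix_i$ has $y$-degree $1$, and in the associated graded one reads off $\iota(y_i^n)\equiv(-1)^n x_i^{2n} y_i^n$ modulo lower $y$-degree, so $\mathbf T_{n,2n}(X)\equiv\mathsf T_{n,2n}(X)\pmod{V_{n-1}\mathsf D^{(K)}}$. Proposition~\ref{prop: filtration_vertical and horizontal} then propagates this congruence along both recursion chains via $[\mathsf t_{1,0},\mathsf T_{n,m+1}(X)]\equiv(m+1)\mathsf T_{n,m}(X)$ and $[\mathsf T_{m+1,2n'}(X),\mathsf t_{0,1}]\equiv(m+1)\mathsf T_{m,2n'}(X)$ modulo the appropriate $V_\bullet\mathsf D^{(K)}$; the $\mathbf t_{n,m}$ case is identical after replacing $\mathsf D^{(K)}$ by $\mathsf A^{(K)}$. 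The main technical obstacle is the combinatorial identity $\mathrm{ad}_\Delta^{2n}\bigl((xyx)^n\bigr)=(2n)!\,y^n$ together with the verifications $[\Delta,x_i]=1$ and $[\Delta,y_i]=0$ at the level of the Cherednik algebra (i.e.\ checking the telescoping of the Dunkl correction terms); once these facts are in place, the rest of the argument is routine injectivity and filtration bookkeeping.
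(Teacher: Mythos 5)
Your proof is correct, and its skeleton coincides with the paper's: identify $\mathbf T_{n,2n}(X)$ with the unsymmetrized generator $\mathsf T_{\mathbf r_n}(X)$, $\mathbf r_n(x,y)=(yxy)^n$, via Proposition~\ref{prop: other generators} and the injectivity of $\prod_N\rho_N$, and then push the congruence through the two recursion chains using Proposition~\ref{prop: filtration_vertical and horizontal}. Where you genuinely diverge is in how the exact boundary equalities $\mathbf T_{n,0}(X)=\mathsf T_{n,0}(X)$ and $\mathbf T_{0,n}(X)=\mathsf T_{0,n}(X)$ are obtained. The paper gets them for free by tracking the vertical and horizontal filtrations \emph{simultaneously}: Remark~\ref{rmk: unsymmetrized basis} gives $\mathbf T_{n,2n}(X)\equiv\mathsf T_{n,2n}(X)$ modulo $V_{n-1}\cap H_{2n-1}$, each application of $\mathrm{ad}_{\mathsf t_{1,0}}$ (resp.\ $\mathrm{ad}_{-,\mathsf t_{0,1}}$) drops the $H$-bound (resp.\ $V$-bound) by one, and at $m=0$ the modulus becomes $H_{-1}=0$ (resp.\ $V_{-1}=0$). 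You instead compute the Dunkl images at the endpoints explicitly, using $[\mathrm{ad}_{\rho_N(\mathsf t_{1,0})},x_i]=1$, $[\mathrm{ad}_{\rho_N(\mathsf t_{1,0})},y_i]=0$ and the counting identity $\mathrm{ad}_\Delta^{2n}\bigl((xyx)^n\bigr)=(2n)!\,y^n$, then invoke $\ker(\prod_N\rho_N)=0$; this is more computational but entirely valid (I checked the odd-index case $\mathbf T_{0,2n-1}$, which you only gesture at: the single $\mathrm{ad}_\Delta$ producing $\mathbf T_{n,2n-1}$ contributes a factor $2n$ that cancels the $\tfrac{1}{2n}$, and the subsequent $n$ applications of $-\mathrm{ad}_{\widetilde\Delta}$ give $n!\,x^{2n-1}$ as needed). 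Likewise your derivation of the initial congruence $\mathbf T_{n,2n}(X)\equiv\mathsf T_{n,2n}(X)\pmod{V_{n-1}}$ via the order filtration and \eqref{eqn: verticcal vs order filtration} is a re-proof of what the paper cites directly from \eqref{eqn: transformation to unsymmetrized basis}. The paper's bi-filtration bookkeeping is slicker and reuses existing machinery; your route is more self-contained at the endpoints and makes the Calogero-level mechanism visible. Both buy the same lemma.
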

\begin{proof}
Let $\mathbf r_n(x,y)=(yxy)^n$, then by the definition of $\iota$, we have $\mathbf T_{n,2n}(X)=\mathsf T_{\mathbf r_n}(X)$ and $\mathbf t_{n,2n}=\frac{1}{\epsilon_2}\mathsf T_{\mathbf r_n}(1)$, thus 
\begin{align*}
    \mathbf T_{n,2n}(X)&\equiv\mathsf T_{n,2n}(X)\pmod{V_{n-1}\mathsf D^{(K)}\cap H_{2n-1}\mathsf D^{(K)}},\\
    \mathbf t_{n,2n}&\equiv\mathsf t_{n,2n}\pmod{V_{n-1}\mathsf A^{(K)}\cap H_{2n-1}\mathsf A^{(K)}}
\end{align*}
by the equation \eqref{eqn: transformation to unsymmetrized basis}. Then the rest of claims follow from Proposition \ref{prop: filtration_vertical and horizontal}.
\end{proof}

\begin{lemma}
\begin{align}\label{eqn: [t(1,2), mathbf T(n,m)]}
    \mathbf T_{n,m}(X)=\frac{1}{m-2n-1}[\mathsf t_{1,2},\mathbf T_{n,m-1}(X)],\quad \mathbf t_{n,m}=\frac{1}{m-2n-1}[\mathsf t_{1,2},\mathbf t_{n,m-1}]
\end{align}
for all $0<m\le 2n$. 
\end{lemma}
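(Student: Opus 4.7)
The plan is to prove the two equations in parallel by descending induction on $m$, by exploiting the approximate $\mathfrak{sl}_2$-triple $\{\mathsf t_{1,0}, \mathsf t_{1,1}, \mathsf t_{1,2}\}$: from \eqref{eqn: A2} one reads off $[\mathsf t_{1,0}, \mathsf t_{1,2}] = 2\mathsf t_{1,1}$, $[\mathsf t_{1,1}, \mathsf t_{1,0}] = -\mathsf t_{1,0}$, and $[\mathsf t_{1,1}, \mathsf t_{1,2}] = \mathsf t_{1,2}$. The defining recursion $\mathbf T_{n,m}(X) = \tfrac{1}{m+1}[\mathsf t_{1,0}, \mathbf T_{n, m+1}(X)]$ (and its $\mathsf t$-analog) gives the ``lowering'' action of $\mathsf t_{1,0}$; the claim says $\mathsf t_{1,2}$ acts as the corresponding ``raising'' operator with coefficient $m-2n-1$.

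As a preliminary step I would establish the weight formula
\begin{equation*}
[\mathsf t_{1,1}, \mathbf T_{n,m}(X)] = (m - n)\mathbf T_{n,m}(X), \qquad [\mathsf t_{1,1}, \mathbf t_{n,m}] = (m - n)\mathbf t_{n,m},
\end{equation*}
for $0 \le m \le 2n$. At $m = 2n$ this follows by pulling the identity through the involution $\iota$ of Proposition \ref{prop: involution on L}: writing $\mathsf t_{1,1} = \tfrac{1}{2}[\mathsf t_{1,0}, \mathsf t_{1,2}]$ together with $\iota(\mathsf t_{1,0}) = -\mathsf t_{1,2}$ and $\iota(\mathsf t_{1,2}) = -\mathsf t_{1,0}$ gives $\iota(\mathsf t_{1,1}) = -\mathsf t_{1,1}$, after which $[\mathsf t_{1,1}, \mathsf T_{n,0}(X)] = -n\,\mathsf T_{n,0}(X)$ (from \eqref{eqn: A2}) supplies the desired eigenvalue $n$ for $\mathbf T_{n,2n}(X) = (-1)^n\iota(\mathsf T_{n,0}(X))$; the descent in $m$ is then a two-term Jacobi computation using $[\mathsf t_{1,1}, \mathsf t_{1,0}] = -\mathsf t_{1,0}$.

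Next I would verify the base case $[\mathsf t_{1,2}, \mathbf T_{n, 2n}(X)] = 0$ (and the scalar analog), which again reduces via $\iota$ to the vanishing $[\mathsf t_{1,0}, \mathsf T_{n,0}(X)] = 0$ of \eqref{eqn: A2}. The inductive step is then a single Jacobi expansion: assuming $[\mathsf t_{1,2}, \mathbf T_{n, k}(X)] = (k-2n)\mathbf T_{n, k+1}(X)$ for some $1 \le k \le 2n-1$, I substitute $\mathbf T_{n,k-1}(X) = \tfrac{1}{k}[\mathsf t_{1,0}, \mathbf T_{n,k}(X)]$ and compute
\begin{equation*}
[\mathsf t_{1,2}, \mathbf T_{n,k-1}(X)] = \tfrac{1}{k}\bigl([[\mathsf t_{1,2},\mathsf t_{1,0}], \mathbf T_{n,k}(X)] + [\mathsf t_{1,0}, [\mathsf t_{1,2}, \mathbf T_{n,k}(X)]]\bigr),
\end{equation*}
using $[\mathsf t_{1,2},\mathsf t_{1,0}] = -2\mathsf t_{1,1}$, the weight formula on the first term, and the inductive hypothesis together with the defining recursion on the second; the arithmetic
\begin{equation*}
-2(k-n) + (k-2n)(k+1) = k(k-1-2n)
\end{equation*}
then yields the desired coefficient $k-1-2n$. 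The scalar equation $\mathbf t_{n,m} = \tfrac{1}{m-2n-1}[\mathsf t_{1,2}, \mathbf t_{n,m-1}]$ follows by the identical argument, since the defining recursion, the base case, and the weight formula all carry over verbatim (the relevant brackets of $\mathsf t_{1,0}, \mathsf t_{1,1}, \mathsf t_{1,2}$ with $\mathsf t_{n,m}$ have the same shape as with $\mathsf T_{n,m}(X)$ by \eqref{eqn: A2}). The only nontrivial ingredient is the involution computation at the top level; everything else is Jacobi bookkeeping.
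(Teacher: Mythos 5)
Your proof is correct and follows essentially the same route as the paper: the base case $[\mathsf t_{1,2},\mathbf T_{n,2n}(X)]=0$ via the involution $\iota$, then descending induction on $m$ using the Jacobi identity with $[\mathsf t_{1,2},\mathsf t_{1,0}]=-2\mathsf t_{1,1}$ and the defining recursion. The only cosmetic difference is that you isolate the weight formula $[\mathsf t_{1,1},\mathbf T_{n,m}(X)]=(m-n)\mathbf T_{n,m}(X)$ as an explicit preliminary step, which the paper uses without separate comment.
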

\begin{proof}
First of all 
\begin{align*}
    \iota([\mathsf t_{1,2},\mathbf T_{n,2n}(X)])=(-1)^{n+1}[\mathsf t_{1,0},\mathbf T_{n,0}(X)]=0,
\end{align*}
thus $[\mathsf t_{1,2},\mathbf T_{n,2n}(X)]=0$, and similarly $[\mathsf t_{1,2},\mathbf t_{n,2n}]=0$; then 
\begin{align*}
    &\frac{1}{m-2n-1}[\mathsf t_{1,2},\mathbf T_{n,m-1}(X)]=\frac{1}{m(m-2n-1)}[\mathsf t_{1,2},[\mathsf t_{1,0},\mathbf T_{n,m}(X)]]\\
    &=\frac{2}{m(2n+1-m)}[\mathsf t_{1,1},\mathbf T_{n,m}(X)]+\frac{1}{m(m-2n-1)}[\mathsf t_{1,0},[\mathsf t_{1,2},\mathbf T_{n,m}(X)]]\\
    &=\frac{2(m-n)}{m(2n+1-m)}\mathbf T_{n,m}(X)+\frac{m-2n}{m(m-2n-1)}[\mathsf t_{1,0},\mathbf T_{n,m+1}(X)]\\
    &=\mathbf T_{n,m}(X),
\end{align*}
and similarly $\frac{1}{m-2n-1}[\mathsf t_{1,2},\mathbf t_{n,m-1}]=\mathbf t_{n,m}$.
\end{proof}
As a corollary to \eqref{eqn: [t(1,2), mathbf T(n,m)]}, we have
\begin{align}
    \iota(\mathbf T_{n,m}(X))=(-1)^n\mathbf T_{n,2n-m}(X),\quad \iota(\mathbf t_{n,m})=(-1)^n\mathbf t_{n,2n-m}
\end{align}
for all $0\le m\le 2n$. Another corollary to \eqref{eqn: [t(1,2), mathbf T(n,m)]} is that 
\begin{align}
    \mathbf T_{n,1}(X)=\mathsf T_{n,1}(X),\quad \mathbf t_{n,1}=\mathsf t_{n,1}, \quad \forall n\in \mathbb N.
\end{align}

\begin{definition}
For $(n,m)\in \mathbb N\times\mathbb N_{>0}$, we define
\begin{align*}
    \mathbf T_{n,-m}(X):=(-1)^n\iota(\mathbf T_{n,2n+m}(X)),\quad \mathbf t_{n,-m}:=(-1)^n\iota(\mathbf t_{n,2n+m}).
\end{align*}
We define $\mathfrak L^{(K)}$ to be the $\mathbb C[\epsilon_1,\epsilon_2]$-subalgebra of $\mathsf L^{(K)}$ generated by $\{\mathbf T_{n,m}(X)\:|\: X\in \gl_K,(n,m)\in \mathbb N\times \mathbb Z\}$.
\end{definition}

\begin{remark}
Equivalently, $\mathfrak L^{(K)}$ to be the $\mathbb C[\epsilon_1,\epsilon_2]$-subalgebra of $\mathsf L^{(K)}$ generated by $\mathsf D^{(K)}$ and $\iota(\mathsf D^{(K)})$. Note that $\rho_N(\mathsf D^{(K)})= \mathrm{S}\mathcal{H}^{(K)}_N$, and $\mathrm{S}\mathbb{H}^{(K)}_N$ is generated by $\mathrm{S}\mathcal{H}^{(K)}_N$ and $\iota(\mathrm{S}\mathcal{H}^{(K)}_N)$, thus the surjective map $\rho_N:\mathsf L^{(K)}[\epsilon_2^{-1}]\twoheadrightarrow \mathrm{S}\mathbb{H}^{(K)}_N[\epsilon_2^{-1}]$ restricts to a surjective map $\rho_N:\mathfrak L^{(K)}\twoheadrightarrow  \mathrm{S}\mathbb{H}^{(K)}_N$.
\end{remark}

It follows from the Lemma \ref{lem: generators mathbf T} that $\mathbf T_{0,-m}(X)=\mathsf T_{0,-m}(X)$ and $\mathbf t_{0,-m}=\mathsf t_{0,-m}$. Moreover, it follows from the definition that 
\begin{align}
    \mathbf T_{n,-m}(X)=\frac{1}{n+1}[\mathsf t_{0,-1},\mathbf T_{n+1,-m+2}(X)],\quad \mathbf t_{n,-m}=\frac{1}{n+1}[\mathsf t_{0,-1},\mathbf t_{n+1,-m+2}]
\end{align}
for all $m>0$. Since $[\mathbf T_{n,0}(X),\mathsf t_{0,1}]=n\mathbf T_{n-1,0}(X)$ and  $[\mathbf T_{n,1}(X),\mathsf t_{0,1}]=n\mathbf T_{n-1,1}(X)$, it follows that
\begin{align}\label{eqn: [mathbf T(n,m), t(0,1)]}
    [\mathbf T_{n,m}(X),\mathsf t_{0,1}]=n\mathbf T_{n-1,m}(X),\quad [\mathbf t_{n,m},\mathsf t_{0,1}]=n\mathbf t_{n-1,m}
\end{align}
for all $m\le 1$.

\begin{definition}\label{def: vertical filtration on L}
The vertical filtration $0=V_{-1}\mathsf L^{(K)}\subset V_0\mathsf L^{(K)}\subset V_1\mathsf L^{(K)}\subset\cdots$ is an exhaustive increasing filtration induced by setting the degree on generators as
\begin{align}
    \deg_v\epsilon_1=\deg_v\epsilon_2=0,\quad \deg_v\mathbf T_{n,m}(X)=\deg_v\mathbf t_{n,m}=n.
\end{align}
And we define $V_\bullet\mathfrak L^{(K)}:=\mathfrak L^{(K)}\cap V_\bullet\mathsf L^{(K)}$.
\end{definition}

By the Lemma \ref{lem: generators mathbf T} there are inclusions $V_n\mathsf A^{(K)}\subset V_n\mathsf L^{(K)}\cap \mathsf A^{(K)}$, later we will see that the inclusion is actually an equality (Remark \ref{rmk: vertival filtrations on L vs A}). Since $\iota(\mathbf T_{n,m}(X))=(-1)^n \mathbf T_{n,2n-m}(X)$ and $\iota(\mathbf t_{n,m})=(-1)^n \mathbf t_{n,2n-m}$ for all $(n,m)\in \mathbb N\times\mathbb Z$, it follows that $\iota (V_n\mathsf L^{(K)})=V_n\mathsf L^{(K)}$.

The following is analogous to Proposition \ref{prop: filtration_vertical and horizontal}.
\begin{proposition}\label{prop: filtration on L}
The commutators between generators of $\mathsf L^{(K)}$ can be schematically written as
\begin{equation}\label{eqn: schematic commutators in L}
\begin{split}
&[\mathbf T_{n,m}(X),\mathbf T_{p,q}(Y)]=\mathbf T_{n+p,m+q}([X,Y])\pmod{V_{n+p-1}\mathfrak L^{(K)}},\\
&[\mathbf t_{n,m},\mathbf T_{p,q}(X)]=(nq-mp)\mathbf T_{n+p-1,m+q-1}(X)\pmod{V_{n+p-2}\mathfrak L^{(K)}},\\
&[\mathbf t_{n,m},\mathbf t_{p,q}]=(nq-mp)\mathbf t_{n+p-1,m+q-1}\pmod{V_{n+p-2}\mathsf L^{(K)}},
\end{split}
\end{equation}
for all $(n,m,p,q)\in \mathbb N\times \mathbb Z\times \mathbb N\times \mathbb Z$, and all $X,Y\in \gl_K$.
\end{proposition}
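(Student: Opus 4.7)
The plan is to reduce the statement for general $(n,m,p,q)\in \mathbb N\times\mathbb Z\times\mathbb N\times\mathbb Z$ to Proposition~\ref{prop: filtration_vertical and horizontal}, in three stages organised by the signs of the second indices $m$ and $q$. The involution $\iota$ of Proposition~\ref{prop: involution on L} and a Jacobi-identity bootstrap do all the work.

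\emph{Stage 1 ($m,q\ge 0$).} Lemma~\ref{lem: generators mathbf T} gives $\mathbf T_{n,m}(X)\equiv \mathsf T_{n,m}(X)$ and $\mathbf t_{n,m}\equiv \mathsf t_{n,m}$ modulo $V_{n-1}\mathsf D^{(K)}\subset V_{n-1}\mathfrak L^{(K)}$. Substituting into Proposition~\ref{prop: filtration_vertical and horizontal} and using $V_{\bullet}\mathsf A^{(K)}\subset V_{\bullet}\mathsf L^{(K)}$ yields \eqref{eqn: schematic commutators in L} on this quadrant.

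\emph{Stage 2 ($m\le 2n$, $q\le 2p$).} The involution $\iota$ preserves the vertical filtration and sends $\mathbf T_{n,m}(X)\mapsto (-1)^n\mathbf T_{n,2n-m}(X)$, $\mathbf t_{n,m}\mapsto (-1)^n\mathbf t_{n,2n-m}$. Applying $\iota$ to the congruences of Stage~1 gives the same congruences with indices reflected by $m\mapsto 2n-m$, $q\mapsto 2p-q$; in particular the ``both negative'' quadrant is covered. Together Stages~1 and 2 leave only the mixed regions $\{m<0,\ q>2p\}$ and its swap, which are related by antisymmetry of the commutator.

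\emph{Stage 3 (mixed case).} For $m<0$, $q>2p$, I would induct on $q-2p\ge 1$. Since the pair $(n',m')=(1,2)$ satisfies $m'\le 2n'$, Stage~2 already yields
\[
[\mathsf t_{1,2},\mathbf T_{p,q-1}(Y)]\equiv (q-1-2p)\,\mathbf T_{p,q}(Y)\pmod{V_{p-1}\mathfrak L^{(K)}},
\]
so one can invert $\mathrm{ad}_{\mathsf t_{1,2}}$ whenever $q\neq 2p+1$ and deduce the $(p,q)$ case from the $(p,q-1)$ case via the Jacobi identity
\[
(q-1-2p)\,[\mathbf T_{n,m}(X),\mathbf T_{p,q}(Y)]=[[\mathbf T_{n,m}(X),\mathsf t_{1,2}],\mathbf T_{p,q-1}(Y)]+[\mathsf t_{1,2},[\mathbf T_{n,m}(X),\mathbf T_{p,q-1}(Y)]].
\]
The outer commutator on the right has one argument at $(1,2)$, controlled by Stages~1--2, while the inner commutator is at strictly smaller $q-2p$. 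The edge case $q=2p+1$ is treated using $\mathrm{ad}_{\mathsf t_{0,2}}$ instead, which extracts $\mathbf T_{p,q}(Y)$ from $\mathbf T_{p+1,2p}(Y)$ (in Stage~2) with nonzero factor $-2(p+1)$. The second and third lines of \eqref{eqn: schematic commutators in L} are handled identically.

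The main obstacle is the filtration bookkeeping inside Stage~3: the two leading-order contributions on the right-hand side of the Jacobi identity must cancel so as to produce exactly $(q-1-2p)\mathbf T_{n+p,m+q}([X,Y])$ modulo $V_{n+p-1}\mathfrak L^{(K)}$, with no spurious $V_{n+p}\mathfrak L^{(K)}$ tail. This is a short algebraic verification, but must be tracked at every inductive step; it is the same subtlety that already appears in the proof of Proposition~\ref{prop: filtration_vertical and horizontal}, so no fundamentally new idea is required.
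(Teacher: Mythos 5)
Your proposal is correct and follows the same architecture as the paper's proof: reduce to Proposition~\ref{prop: filtration_vertical and horizontal} via Lemma~\ref{lem: generators mathbf T}, transport by the filtration-preserving involution $\iota$, and then close the remaining mixed region (one generator with negative second index against one with second index exceeding twice the first) by a Jacobi-identity bootstrap. The only substantive difference is the choice of ladder in the mixed case. The paper fixes the second index $m$ of the generator with $m>2n$ and descends in the first index via $\mathrm{ad}_{\mathsf t_{0,1}}$, starting from $n_0=\lceil m/2\rceil$; this is slightly cleaner because $[\mathbf T_{n,m}(X),\mathsf t_{0,1}]=n\mathbf T_{n-1,m}(X)$ is \emph{exact by definition} in that range, so the ladder itself contributes no error term and the coefficient $n$ never degenerates until the induction terminates at $n=0$. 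Your ladder $\mathrm{ad}_{\mathsf t_{1,2}}$ ascends in $q$ but is only a congruence modulo $V_{p-1}\mathfrak L^{(K)}$, which forces you to control $[\mathbf T_{n,m}(X),V_{p-1}\mathfrak L^{(K)}]$; since $V_{p-1}\mathfrak L^{(K)}$ contains generators $\mathbf T_{p',q'}$ with $q'-2p'$ arbitrarily large, the inner induction on $q-2p$ alone does not close, and you need an outer induction on the total vertical degree $n+p$ (plus Leibniz) wrapped around it. You also have to detour through $\mathrm{ad}_{\mathsf t_{0,2}}$ when the coefficient $q-1-2p$ vanishes. Both complications are routine and you flag them, so this is a valid proof; the paper's choice of $\mathsf t_{0,1}$ simply sidesteps them.
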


\begin{proof}
For $(n,m,p,q)\in\mathbb N^4$, \eqref{eqn: schematic commutators in L} follows from Lemma \ref{lem: generators mathbf T} and Proposition \ref{prop: filtration_vertical and horizontal}. Since $\iota (V_n\mathsf L^{(K)})=V_n\mathsf L^{(K)}$, it follows that \eqref{eqn: schematic commutators in L} holds for $(n,m,p,q)\in\mathbb N^4$ such that $m\le 2n$ and $q\le 2p$. It remains to prove the cases when $m<0$ and $q> 2p$ or $q<0$ and $m> 2n$. These two cases are similar and we elaborate the detail for one of them, namely the case when $q<0$ and $m>2n$. As we has explained  and we proceed by induction as follows. Let us fix $m$ and let $n_0=\lceil m/2\rceil$, then \eqref{eqn: schematic commutators in L} holds for $(n_0,m,p,q)$ such that $q<0$. Assume that \eqref{eqn: schematic commutators in L} holds for a fixed pair $(n,m)$ such that $m\ge 2n-1$ and all $(p,q)$ such that $q<0$, then 
\begin{align*}
    &[\mathbf T_{n-1,m}(X),\mathbf T_{p,q}(Y)]=\frac{-1}{n}[\mathrm{ad}_{\mathsf t_{0,1}}(\mathbf T_{n,m}(X)),\mathbf T_{p,q}(Y)]\\
    &=\frac{1}{n}[\mathbf T_{n,m}(X),\mathrm{ad}_{\mathsf t_{0,1}}(\mathbf T_{p,q}(Y))]-\frac{1}{n}\mathrm{ad}_{\mathsf t_{0,1}}([\mathbf T_{n,m}(X),\mathbf T_{p,q}(Y)])\\
    &=\frac{-p}{n}[\mathbf T_{n,m}(X),\mathbf T_{p-1,q}(Y)]+\frac{n+p}{n}\mathbf T_{n+p-1,m+q}([X,Y])\pmod{V_{n+p-2}\mathsf L^{(K)}}\\
    &=\mathbf T_{n+p-1,m+q}([X,Y])\pmod{V_{n+p-2}\mathsf L^{(K)}},
\end{align*}
by our assumption. Similarly the other two equations in \eqref{eqn: schematic commutators in L} also hold for $(n-1,m)$ and all $(p,q)$ such that $q<0$. Therefore the decreasing induction on $n$ starting from $n_0$ implies that \eqref{eqn: schematic commutators in L} holds for all $(n,m,p,q)$ such that $q<0$ and $m>2n$. This finishes the proof.
\end{proof}

Let us choose a basis $\mathfrak B:=\{X_1,\cdots,X_{K^2-1}\}$ of $\mathfrak{sl}_K$, so that $\mathfrak{B}_+:=\{1\}\cup \mathfrak B$ is a basis of $\mathfrak{gl}_K$. We fix a total order $1\preceq X_1\preceq\cdots\preceq X_{K^2-1}$ on $\mathfrak B_+$. Then we put the dictionary order on the set $\mathfrak{G}(\mathsf L^{(K)}):=\{\mathbf T_{n,m}(X),\mathbf t_{n,m}\:|\: X\in \mathfrak B, (n,m)\in \mathbb N\times\mathbb Z\}$, in other words $\mathbf T_{n,m}(X)\preceq\mathbf T_{n',m'}(X')$ if and only only if $n<n'$ or $n=n'$ and $m<m'$ or $(n,m)=(n',m')$ and $X\preceq X'$. 

We also define the set $\mathfrak{G}(\mathfrak L^{(K)}):=\{\mathbf T_{n,m}(X)\:|\: X\in \mathfrak B_+, (n,m)\in \mathbb N\times\mathbb Z\}$ and put the dictionary order on it.

\begin{definition}\label{def: basis of L}
Define the set of ordered monomials in $\mathfrak{G}(\star)$ as 
\begin{align}
    \mathfrak B(\star):=\{1\}\cup\{\mathcal O_1\cdots\mathcal O_n\:|\: n\in \mathbb N_{>0}, \mathcal O_1\preceq \cdots\preceq\mathcal O_n\in \mathfrak{G}(\star)\},
\end{align}
where $\star=\mathsf L^{(K)}$ or $\mathfrak L^{(K)}$.
\end{definition}

\begin{theorem}\label{thm: PBW for L}
$\mathsf L^{(K)}$ (resp. $\mathfrak L^{(K)}$) is a free $\mathbb C[\epsilon_1,\epsilon_2]$-module with basis $\mathfrak B(\mathsf L^{(K)})$ (resp. $\mathfrak B(\mathfrak L^{(K)})$).
\end{theorem}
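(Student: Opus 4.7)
The plan mirrors the two-step strategy of Theorem~\ref{thm: PBW}: first show that $\mathfrak B(\mathsf L^{(K)})$ and $\mathfrak B(\mathfrak L^{(K)})$ span, then deduce linear independence from a faithful family of representations.

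For spanning I would introduce a shifted vertical filtration $\widetilde V_\bullet$ by setting $\deg_{\widetilde V}\mathbf T_{n,m}(X)=\deg_{\widetilde V}\mathbf t_{n,m}=n+1$. Proposition~\ref{prop: filtration on L} then implies that the commutator of two generators has strictly smaller $\widetilde V$-degree than their product: the product of two generators lies in $\widetilde V_{n+p+2}$, while the leading term $\mathbf T_{n+p,m+q}([X,Y])$ of the commutator is in $\widetilde V_{n+p+1}$ and the error term, belonging to $V_{n+p-1}$, is also in $\widetilde V_{n+p}$. Induction on $\widetilde V$-degree and, within each level, on word length lets one reorder any monomial in generators into a $\mathbb C[\epsilon_1,\epsilon_2]$-linear combination of elements of $\mathfrak B(\mathsf L^{(K)})$ modulo strictly lower filtration, exactly as in Lemma~\ref{lem: ordered monoimials}. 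The same argument restricted to $\mathfrak G(\mathfrak L^{(K)})$ gives spanning of $\mathfrak L^{(K)}$ by $\mathfrak B(\mathfrak L^{(K)})$.

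For linear independence I would first localize to $\mathbb C[\epsilon_1,\epsilon_2^{\pm}]$, where $\mathbf T_{n,m}(1)=\epsilon_2\mathbf t_{n,m}$ identifies the two candidate bases up to rescaling, so it suffices to treat $\mathfrak B(\mathsf L^{(K)})$. By Lemma~\ref{lem: rho_N extends to loop Yangian} the joint kernel $\bigcap_N\ker\rho_N$ is trivial, so linear independence will follow once I show that for any fixed finite subset of $\mathfrak B(\mathsf L^{(K)})$ there exists $N$ such that the images under $\rho_N$ are $\mathbb C[\epsilon_1,\epsilon_2^{\pm}]$-linearly independent. Using flatness, this reduces to the analogous statement at each specialization $(\epsilon_1,\epsilon_2)=(k,t)$ with $t\neq 0$, where it becomes a consequence of the PBW theorem for the trigonometric Cherednik algebra $\mathbb H^{(K)}_N$ and its spherical subalgebra (standard, see \cite{guay2005cherednik,etingof2010lecture}): via the Dunkl embedding and the explicit formulae of Lemma~\ref{lem: rho_N extends to loop Yangian}, the leading symbols of $\rho_N(\mathbf T_{n,m}(X))$ and $\rho_N(\mathbf t_{n,m})$ in the order filtration on $\mathrm{S}\mathbb H^{(K)}_N$ become matrix-valued Laurent power sums $\sum_{i=1}^N x_i^m y_i^n X_i$, whose ordered monomials are linearly independent for $N$ large by the classical algebraic independence of power sums.

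The main obstacle I anticipate is handling the negative-index generators $\mathbf T_{n,-m}(X)$ and $\mathbf t_{n,-m}$ cleanly: they are defined only indirectly, via the involution $\iota$ of Proposition~\ref{prop: involution on L}, and one must check that $V$-filtration and order filtration interact correctly on mixed monomials (positive and negative $m$). My approach would be to pass to the associated graded $\mathrm{gr}_V\mathsf L^{(K)}$, identify it as a quotient of the universal enveloping algebra of a trigonometric analogue of $D_{\epsilon_2}(\mathbb C)\otimes\gl_K^{\sim}$ built from $\mathscr O(\mathbb C^\times_x\times \mathbb C_y)$, invoke PBW for universal enveloping algebras there, and then use the explicit formulae for $\rho_N$ on both positive and negative indices to certify that no additional relations appear, so that the natural surjection from the UEA to $\mathrm{gr}_V\mathsf L^{(K)}$ is an isomorphism. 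Lifting the resulting PBW basis through $\widetilde V_\bullet$ yields the desired linear independence of $\mathfrak B(\mathsf L^{(K)})$, and the same argument restricted to the subalgebra $\mathfrak L^{(K)}$ delivers the statement for $\mathfrak B(\mathfrak L^{(K)})$.
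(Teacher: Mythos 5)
Your spanning argument is essentially the paper's (induction along the vertical filtration using Proposition \ref{prop: filtration on L}, with base case handled by Lie-algebra PBW for the loop algebra generating $V_0$), though one step as stated is off: an element of $V_{n+p-1}\mathsf L^{(K)}$ need \emph{not} lie in $\widetilde V_{n+p}$, since the shifted filtration also counts the number of factors in a monomial and the error terms of Proposition \ref{prop: filtration on L} carry no length bound. The fix is the double induction you already mention — outer on the unshifted $V$-degree (which the error terms strictly lower) and inner on word length (which the leading commutator term lowers) — rather than a single induction on $\widetilde V$.

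The linear-independence half is where your route genuinely diverges, and where the gap lies. Reducing to specializations $(\epsilon_1,\epsilon_2)=(k,t)$ and to the trigonometric Cherednik images $\rho_N$ is fine in principle, but the load-bearing claim — that ordered monomials in the matrix-valued Laurent power sums $\sum_i x_i^m y_i^n X_i$ (with $m\in\mathbb Z$, $X\in\gl_K$) become linearly independent for $N$ large — is not ``classical.'' Even in the polynomial case the paper does not treat this as elementary: Theorem \ref{thm: PBW} outsources exactly this point to Kalinov's ultraproduct argument \cite[5.3.4]{kalinov2021deformed}, and no Laurent analogue is cited or available off the shelf. Your fallback (pass to an associated graded or degeneration, identify it with a universal enveloping algebra, invoke Lie-algebra PBW) is the right idea and is what the paper actually does, but you still propose to certify ``no additional relations'' via $\rho_N$, which circles back to the same unproved large-$N$ statement. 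The paper sidesteps finite $N$ entirely at this stage: since $\mathsf L^{(K)}$ has no $\epsilon_1$-torsion, it suffices to work modulo $\epsilon_1$, where the shift map $S(w)$ lands $\mathsf L^{(K)}/(\epsilon_1)$ inside $U(D_{\epsilon_2}(\mathbb C)\otimes\gl_K^{\sim})(\!(w^{-1})\!)$ with image contained in $S_D(w)\bigl(U(D_{\epsilon_2}(\mathbb C^{\times})\otimes\gl_K^{\sim})\bigr)$; the generators $\mathbf T_{n,m}(X)$, $\mathbf t_{n,m}$ are upper-triangular with respect to the order filtration against the Lie-algebra basis $x^my^n\otimes X$, $x^my^n/\epsilon_2$, and ordinary PBW for the Lie algebra $D_{\epsilon_2}(\mathbb C^{\times})\otimes\gl_K^{\sim}$ finishes the proof. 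If you want to keep your large-$N$ route you must either prove the Laurent power-sum independence (e.g.\ by adapting the ultraproduct argument) or replace that step with the mod-$\epsilon_1$ reduction.
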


\begin{proof}
We first show that $\mathsf L^{(K)}$ is spanned by $\mathfrak B(\mathsf L^{(K)})$. Since $V_0\mathsf L^{(K)}$ is generated by $\mathbf T_{0,n}(X)$ and $\mathbf t_{0,n}$, it follows that $V_0\mathsf L^{(K)}$ is a quotient of the universal enveloping algebra of the loop algebra of $\mathfrak{sl}_K\oplus\gl_1$. In particular $V_0\mathsf L^{(K)}$ is spanned by elements in $\mathfrak B(\mathsf L^{(K)})$ by the PBW theorem for the Lie algebra. Assume that $V_s\mathsf L^{(K)}$ is generated by elements in $\mathfrak B(\mathsf L^{(K)})$, then Proposition \ref{prop: filtration on L} implies that we can reorder any monomials in $\mathfrak{G}(\mathsf L^{(K)})$ with total degree $s+1$ into the non-decreasing order modulo terms in $V_{s}\mathsf L^{(K)}$, therefore $V_{s+1}\mathsf L^{(K)}$ is generated by elements in $\mathfrak B(\mathsf L^{(K)})$. $V_{\bullet}\mathsf L^{(K)}$ is obviously exhaustive, thus $\mathsf L^{(K)}$ is generated by $\mathfrak B(\mathsf L^{(K)})$. 
 
Next we show that there is no nontrivial $\mathbb C[\epsilon_1,\epsilon_2]$-linear relations among elements in $\mathfrak B(\mathsf L^{(K)})$. Since $\mathsf L^{(K)}$ has no $\epsilon_1$-torsion, it suffices to show that the natural map $\mathbb C[\epsilon_2]\cdot\mathfrak B(\mathsf L^{(K)})\to \mathsf L^{(K)}/(\epsilon_1=0)\overset{S(w)}{\longrightarrow}\mathsf A^{(K)}/(\epsilon_1=0)(\!(w^{-1})\!)$ is injective. Note that $\mathsf A^{(K)}/(\epsilon_1=0)$ is isomorphic to $D_{\epsilon_2}(\mathbb C)\otimes\gl_K^{\sim}$ (Corollary \ref{cor: A is DDCA}). 

Consider the shift map $S_D(w):D_{\epsilon_2}(\mathbb C^{\times })\otimes \mathfrak{gl}_K^{\sim}\to D_{\epsilon_2}(\mathbb C)\otimes \mathfrak{gl}_K^{\sim}(\!(w^{-1})\!)$, which induces an embedding $U(D_{\epsilon_2}(\mathbb C^{\times })\otimes \mathfrak{gl}_K^{\sim})\hookrightarrow U(D_{\epsilon_2}(\mathbb C)\otimes \mathfrak{gl}_K^{\sim})(\!(w^{-1})\!)\cong \mathsf A^{(K)}/(\epsilon_1=0)(\!(w^{-1})\!)$. Notice that $\mathsf t_{0,-1}$ is contained in the image of $S_D(w)$, thus the image of $\mathbb C[\epsilon_2]\cdot\mathfrak B(\mathsf L^{(K)})$ in $\mathsf A^{(K)}/(\epsilon_1=0)(\!(w^{-1})\!)$ is contained in the image of $S_D(w)$, and it remains to show that the induced map $\mathbb C[\epsilon_2]\cdot\mathfrak B(\mathsf L^{(K)})\to U(D_{\epsilon_2}(\mathbb C^{\times })\otimes \mathfrak{gl}_K^{\sim})$ is injective. It is straightforward to see that the image of $\mathbf T_{n,m}(X)$ in $U(D_{\epsilon_2}(\mathbb C^{\times })\otimes \mathfrak{gl}_K^{\sim})$ is $x^my^n\otimes X$ modulo $\mathrm{Ord}^{n-1}D_{\epsilon_2}(\mathbb C^{\times })\otimes \mathfrak{gl}_K^{\sim}$, where $\mathrm{Ord}^{\bullet}$ is the filtration by order of differential operators on $D_{\epsilon_2}(\mathbb C^{\times })\otimes \mathfrak{gl}_K^{\sim}$. Similarly $\mathbf t_{n,m}\mapsto x^ny^m/\epsilon_2$ modulo $\mathrm{Ord}^{n-1}D_{\epsilon_2}(\mathbb C^{\times })\otimes \mathfrak{gl}_K^{\sim}$. Therefore the image of $\mathbb C[\epsilon_2]\cdot\mathfrak{G}(\mathsf L^{(K)})$ in $U(D_{\epsilon_2}(\mathbb C^{\times })\otimes \mathfrak{gl}_K^{\sim})$ is exactly the primitive part $D_{\epsilon_2}(\mathbb C^{\times })\otimes \mathfrak{gl}_K^{\sim}$, and the map induces an isomorphism $\mathbb C[\epsilon_2]\cdot\mathfrak{G}(\mathsf L^{(K)})\cong D_{\epsilon_2}(\mathbb C^{\times })\otimes \mathfrak{gl}_K^{\sim}$, whence $\mathbb C[\epsilon_2]\cdot\mathfrak B(\mathsf L^{(K)})\to U(D_{\epsilon_2}(\mathbb C^{\times })\otimes \mathfrak{gl}_K^{\sim})$ is an isomorphism by the PBW theorem for the Lie algebra. This proves that $\mathsf L^{(K)}$ is a free $\mathbb C[\epsilon_1,\epsilon_2]$-module with basis $\mathfrak B(\mathsf L^{(K)})$.

\bigskip Similarly, we show that $\mathfrak L^{(K)}$ is spanned by $\mathfrak B(\mathfrak L^{(K)})$. Since $V_0\mathfrak L^{(K)}$ is generated by $\mathbf T_{0,n}(X)$, it follows that $V_0\mathfrak L^{(K)}$ is a quotient of the universal enveloping algebra of the loop algebra of $\mathfrak{gl}_K$. In particular $V_0\mathfrak L^{(K)}$ is spanned by elements in $\mathfrak B(\mathfrak L^{(K)})$ by the PBW theorem for the Lie algebra. Assume that $V_s\mathfrak L^{(K)}$ is generated by elements in $\mathfrak B(\mathfrak L^{(K)})$, then Proposition \ref{prop: filtration on L} implies that we can reorder any monomials in $\mathfrak{G}(\mathfrak L^{(K)})$ with total degree $s+1$ into the non-decreasing order modulo terms in $V_{s}\mathfrak L^{(K)}$, therefore $V_{s+1}\mathfrak L^{(K)}$ is generated by elements in $\mathfrak B(\mathfrak L^{(K)})$. $V_{\bullet}\mathfrak L^{(K)}$ is obviously exhaustive, thus $\mathfrak L^{(K)}$ is generated by $\mathfrak B(\mathfrak L^{(K)})$. 

Since elements in $\mathfrak B(\mathfrak L^{(K)})$ is obtained from elements in $\mathfrak B(\mathsf L^{(K)})$ by a scaling, and we have shown that there is no nontrivial $\mathbb C[\epsilon_1,\epsilon_2]$-linear relations among elements in $\mathfrak B(\mathsf L^{(K)})$, thus elements in $\mathfrak B(\mathfrak L^{(K)})$ are $\mathbb C[\epsilon_1,\epsilon_2]$-linear independent as well. This proves that $\mathfrak L^{(K)}$ is a free $\mathbb C[\epsilon_1,\epsilon_2]$-module with basis $\mathfrak B(\mathfrak L^{(K)})$.
\end{proof}

\begin{remark}\label{rmk: vertival filtrations on L vs A}
$\mathsf L^{(K)}/(\epsilon_1=0)$ is isomorphic to the universal enveloping algebra of $D_{\epsilon_2}(\mathbb C^{\times})\otimes\gl_K^{\sim}$. The associated graded $\mathrm{gr}_V\mathsf L^{(K)}$ is isomorphic to the universal enveloping algebra of $\mathscr O(\mathbb C\times\mathbb C^{\times})\otimes\gl_K$. Moreover, the associated graded map $\mathrm{gr}_V\mathsf A^{(K)}\to \mathrm{gr}_V\mathsf L^{(K)}$ is identified with the canonical map $U(\mathscr O(\mathbb C^2)\otimes\gl_K)\to U(\mathscr O(\mathbb C\times\mathbb C^{\times})\otimes\gl_K)$, which is injective. In particular $V_n\mathsf A^{(K)}=V_n\mathsf L^{(K)}\cap\mathsf A^{(K)}$.
\end{remark}

\begin{remark}\label{rmk: classical limit of mathfrak L}
$\mathfrak L^{(K)}/(\epsilon_1=0)$ is isomorphic to the universal enveloping algebra of $D_{\epsilon_2}(\mathbb C^{\times})\otimes\gl_K$, and the natural map $\mathfrak L^{(K)}/(\epsilon_1=0)\to \mathsf L^{(K)}/(\epsilon_1=0)$ is identified with the natural inclusion $U(D_{\epsilon_2}(\mathbb C^{\times})\otimes\gl_K)\hookrightarrow U(D_{\epsilon_2}(\mathbb C^{\times})\otimes\gl_K^{\sim})$. In particular, $\mathfrak L^{(K)}/(\epsilon_1=0)\to \mathsf L^{(K)}/(\epsilon_1=0)$ is injective.
\end{remark}

\subsection{Gluing construction of \texorpdfstring{$\mathsf L^{(K)}$}{Lk}}
As we have shown, $\mathsf L^{(K)}$ contains two subalgebras which are isomorphic to $\mathsf A^{(K)}$, the first one is generated by $\{\mathbf T_{n,m}(X),\mathbf t_{n,m}\:|\: X\in \gl_K,(n,m)\in \mathbb N^2\}$, and the second one is the image of the first one under the involution $\iota$, i.e. it is generated by $\{\mathbf T_{n,m}(X),\mathbf t_{n,m}\:|\: X\in \gl_K,(n,m)\in \mathbb N\times \mathbb Z,m\le 2n\}$.

\begin{theorem}\label{thm: gluing construction of L}
$\mathsf L^{(K)}$ is generated over $\mathbb C[\epsilon_1,\epsilon_2]$ by two algebras $\mathsf A^{(K)}_+,\mathsf A^{(K)}_-$, both are isomorphic to $\mathsf A^{(K)}$ (whose generators are indicated by superscripts $+$ or $-$ respectively), with relations
\begin{equation}\label{eqn: gluing relations of L}
\begin{split}
\mathsf t^-_{0,0}= \mathsf t^+_{0,0},\quad\mathsf t^-_{1,2}= -\mathsf t^+_{1,0},&\quad  \mathsf t^-_{1,0}= -\mathsf t^+_{1,2},\quad\mathsf t^-_{2,2}= \mathsf t^+_{2,2},\\
\mathsf T^-_{0,0}(X)= \mathsf T^+_{0,0}(X),&\quad \mathsf T^-_{1,1}(X)= -\mathsf T^+_{1,1}(X),\\
[\mathsf t^-_{0,1}, \mathsf t^+_{0,1}]=0,&\quad [\mathsf t^-_{0,1}, \mathsf T^+_{0,1}(X)]=0,
\end{split}
\end{equation}
for all $X\in \mathfrak{sl}_K$.
\end{theorem}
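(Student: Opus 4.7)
The plan is to construct the abstract algebra $\widetilde{\mathsf L}^{(K)}$ presented as in the theorem, exhibit a surjection $\pi \colon \widetilde{\mathsf L}^{(K)} \twoheadrightarrow \mathsf L^{(K)}$, and then upgrade it to an isomorphism via the vertical filtration of Definition \ref{def: vertical filtration on L}. Concretely, the two embeddings $\mathsf A^{(K)}_\pm \hookrightarrow \mathsf L^{(K)}$ are
\[
\mathsf T^+_{n,m}(X) \mapsto \mathbf T_{n,m}(X), \qquad \mathsf T^-_{n,m}(X) \mapsto \iota(\mathbf T_{n,m}(X)),
\]
and similarly on the $\mathsf t$-generators, where $\iota$ is the involution of Proposition \ref{prop: involution on L}. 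The identifications $\mathsf t^-_{0,0} = \mathsf t^+_{0,0}$, $\mathsf t^-_{1,2} = -\mathsf t^+_{1,0}$, $\mathsf t^-_{1,0} = -\mathsf t^+_{1,2}$, $\mathsf t^-_{2,2} = \mathsf t^+_{2,2}$, $\mathsf T^-_{0,0}(X) = \mathsf T^+_{0,0}(X)$ and $\mathsf T^-_{1,1}(X) = -\mathsf T^+_{1,1}(X)$ all reduce to the formulas $\iota(\mathbf t_{n,m}) = (-1)^n \mathbf t_{n,2n-m}$ and $\iota(\mathbf T_{n,m}(X)) = (-1)^n \mathbf T_{n,2n-m}(X)$ established before the theorem. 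The two commutativity relations translate to $[\mathsf t_{0,-1}, \mathsf t_{0,1}] = 0$ and $[\mathsf t_{0,-1}, \mathsf T_{0,1}(X)] = 0$ in $\mathsf L^{(K)}$, which I would verify through the faithful family $\rho_N \colon \mathsf L^{(K)} \to \mathrm S\mathbb H^{(K)}_N[\epsilon_2^{-1}]$ of Lemma \ref{lem: rho_N extends to loop Yangian}: inside each $\mathrm S\mathbb H^{(K)}_N$ the symmetric functions $\sum_i x_i^{-1}$, $\sum_i x_i$, and $\sum_i x_i E^a_{b,i}$ manifestly commute. Surjectivity of $\pi$ is immediate from Definition \ref{def: the algebra L^K}, since the image contains $\mathsf A^{(K)}$ and also $\mathsf t_{0,-1} = \iota(\mathsf t_{0,1})$.

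For injectivity, the key tool is the vertical filtration. I would transport $V_\bullet \mathsf A^{(K)}$ from Definition \ref{def: filtration_vertical and horizontal} to each copy $\mathsf A^{(K)}_\pm$ and declare $V_s \widetilde{\mathsf L}^{(K)}$ to be the span of products whose total vertical degree is $\le s$; this is well-defined because every gluing relation identifies elements of equal vertical degree. The map $\pi$ is then filtered and by construction lands in $V_s \mathsf L^{(K)}$ of Definition \ref{def: vertical filtration on L}. Since $\mathsf L^{(K)}$ is a free $\mathbb C[\epsilon_1,\epsilon_2]$-module by Theorem \ref{thm: PBW for L}, it suffices to prove that $\mathrm{gr}_V \pi$ is an isomorphism. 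By Remark \ref{rmk: vertival filtrations on L vs A}, $\mathrm{gr}_V \mathsf L^{(K)} \cong U(\mathscr O(\mathbb C \times \mathbb C^\times) \otimes \gl_K^\sim)$, while the associated graded of each of the two copies is $U(\mathscr O(\mathbb C^2) \otimes \gl_K^\sim)$. At this classical level, the degenerated gluing relations amount to the statement that the Lie algebra $\mathscr O(\mathbb C \times \mathbb C^\times) \otimes \gl_K^\sim$ is generated by its subalgebra $\mathscr O(\mathbb C^2) \otimes \gl_K^\sim$ together with its image under the Lie-algebra involution $x \mapsto x^{-1}$, $y \mapsto -xyx$, subject only to the commutativity of the abelian $\gl_1$-loop currents at the relevant low orders --- a direct Lie-algebraic check against the PBW basis of the loop algebra.

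The main obstacle will be executing this PBW reduction rigorously at the non-abelian level, i.e.\ showing that any word in $\mathsf A^{(K)}_+ * \mathsf A^{(K)}_-$ can be placed in $\mathfrak B(\mathsf L^{(K)})$-normal form using only (i) the internal relations of each $\mathsf A^{(K)}_\pm$ and (ii) the six identifications plus the two commutativity relations of \eqref{eqn: gluing relations of L}. Concretely, one must derive all cross-commutators $[\mathsf A^{(K)}_-, \mathsf A^{(K)}_+]$ at successively higher vertical degree by iteratively bracketing the base commutativity relations against the low-order $\mathfrak{sl}_2$-triple generators of each copy; the vertical filtration ensures that each such commutator generates correction terms of strictly smaller vertical degree, so the recursion terminates. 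The careful bookkeeping of these recursive identities, analogous to but more intricate than Propositions \ref{prop: filtration} and \ref{prop: filtration on L}, is where the bulk of the work lies.
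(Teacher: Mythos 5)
Your overall strategy is exactly the paper's: build the abstractly presented algebra, exhibit the surjection onto $\mathsf L^{(K)}$ (checking the eight relations via $\iota(\mathbf T_{n,m})=(-1)^n\mathbf T_{n,2n-m}$ and, for the cross-commutators, via the faithful family $\rho_N$), and then win by showing the ordered monomials $\mathfrak B(\mathsf L^{(K)})$ span the presented algebra so that Theorem \ref{thm: PBW for L} forces injectivity. However, the step you defer as ``the main obstacle'' is not a technicality to be filled in later --- it \emph{is} the proof. The paper spends essentially its entire argument there: starting from only the eight listed relations it derives, by iterated bracketing against $\mathsf t^{\pm}_{1,0}$, $\mathsf t^{\pm}_{1,2}$, $\mathsf t^{\pm}_{2,1}$ and $\mathsf t^{\mp}_{0,1}$, the full family of identifications $\mathbf T^-_{n,m}(X)=\iota(\mathbf T^+_{n,m}(X))$ for $m\le 2n$, the loop-algebra relations $[\mathsf T^-_{0,m}(X),\mathsf T^+_{0,n}(Y)]=\mathsf T^{\pm}_{0,|n-m|}([X,Y])$, and the ladder relations $[\mathsf t^+_{0,1},\mathbf T^-_{n,m}(X)]=n\mathbf T^-_{n-1,m-2}(X)$; only with these in hand does the analogue of Proposition \ref{prop: filtration on L} hold in the presented algebra, and only then does the double induction (on vertical degree and on word length in the generators) produce the spanning set. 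Without carrying out that derivation your proof is incomplete, since a priori the presented algebra could be strictly larger than $\mathsf L^{(K)}$.

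A second, smaller point: the intermediate paragraph in which you reduce injectivity to ``$\mathrm{gr}_V\pi$ is an isomorphism'' and then argue at the classical level is circular as stated. The vertical filtration on the presented algebra is defined by generators, so computing its associated graded \emph{requires} already knowing a spanning set for each filtered piece --- precisely the normal-form reduction you postpone. The classical generation statement for $\mathscr O(\mathbb C\times\mathbb C^{\times})\otimes\gl_K^{\sim}$ only gives surjectivity of $\mathrm{gr}_V\pi$, which you already have. The paper sidesteps this by never forming $\mathrm{gr}_V$ of the presented algebra: it proves spanning directly and then invokes linear independence in $\mathsf L^{(K)}$ from the PBW theorem. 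Your final paragraph in effect retreats to that correct route, so I would drop the associated-graded detour and concentrate the write-up on deriving the cross-relations.
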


\begin{proof}
Denote by $\mathsf L^{(K)}_{\mathrm{new}}$ the $\mathbb C[\epsilon_1,\epsilon_2]$-algebra generated by $\mathsf A^{(K)}_+,\mathsf A^{(K)}_-$ with relations \eqref{eqn: gluing relations of L}. It follows from \eqref{eqn: gluing relations of L} that there is a surjective $\mathbb C[\epsilon_1,\epsilon_2]$-algebra map $\mathsf L^{(K)}_{\mathrm{new}}\to \mathsf L^{(K)}$ such that $$\mathbf T^+_{n,m}(X)\mapsto \mathbf T_{n,m}(X),\quad \mathbf t^+_{n,m}\mapsto \mathbf t_{n,m},\quad\mathbf T^-_{n,m}(X)\mapsto (-1)^n\mathbf T_{n,2n-m}(X),\quad \mathbf t^-_{n,m}\mapsto (-1)^n\mathbf t_{n,2n-m},$$ for all $(n,m)\in \mathbb N^2$. It remains to show that this map is injective.

To this end, let us derive more relations from \eqref{eqn: gluing relations of L}. We claim that the following relations hold in $\mathsf L^{(K)}_{\mathrm{new}}$
\begin{align}\label{eqn: gluing relations of L, I}
\mathbf t^-_{n,m}= \iota(\mathbf t^+_{n,m}),\quad \mathbf T^-_{n,m}(X)= \iota(\mathbf T^+_{n,m}(X)),
\end{align}
for all $(n,m)\in \mathbb N^2$ such that $m\le 2n$. The case when $(n,m)=(0,0)$ and $(1,0)$ are covered in \eqref{eqn: gluing relations of L}, and using the adjoin action of $\mathsf t^-_{1,2}$ we obtain The case when $(n,m)=(1,1)$ and $(1,2)$. Moreover, since $[\mathsf t^-_{2,0},\mathsf t^-_{1,2}]=4\mathsf t^-_{2,1}$, it follows that $\mathbf t^-_{2,1}=\iota(\mathbf t^+_{2,1})$. Using the consecutive adjoint action of $\mathbf t^-_{2,1}$ on $\mathbf T^-_{1,0}(X)$ and $\mathbf t^-_{1,0}$, \eqref{eqn: gluing relations of L, I} are obtained for $m=0$ and all $n\in \mathbb N$. Then the consecutive adjoint action of $\mathbf t^-_{1,2}$ on $\mathbf T^-_{n,0}(X)$ and $\mathbf t^-_{n,0}$ implies that \eqref{eqn: gluing relations of L, I} holds for all $(n,m)\in \mathbb N^2$ such that $m\le 2n$.

Using the consecutive adjoint action of $\mathbf t^+_{1,2}$ (which equals to $-\mathsf t^-_{1,0}$) on $\mathsf T^+_{0,1}(X)$ and $\mathsf t^+_{0,1}$, we see that 
\begin{align*}
    [\mathsf t^-_{0,1}, \mathsf t^+_{0,n}]=[\mathsf t^-_{0,1}, \mathsf T^+_{0,n}(X)]=0,
\end{align*}
for all $n\in \mathbb N_{>0}$. Then using the identity $[\mathsf T^-_{1,1}(X),\mathsf t^-_{0,1}]=\mathsf T^-_{0,1}(X)$ together with the relation $\mathsf T^-_{1,1}(X)=-\mathsf T^+_{1,1}(X)$ we obtain the relations
\begin{align*}
[\mathsf T^-_{0,1}(X),\mathsf T^+_{0,n}(Y)]=\mathsf T^+_{0,n-1}([X,Y]),\quad [\mathsf T^-_{0,1}(X),\mathsf t^+_{0,n}]=0,
\end{align*}
for all $n\in \mathbb N_{>0}$. Next, using the consecutive adjoint action of $\mathsf t^-_{1,2}$ (which equals to $-\mathsf t^+_{1,0}$) on $\mathsf T^-_{0,1}(X)$ and $\mathsf t^-_{0,1}$, we obtain the relations
\begin{align}\label{eqn: gluing relations of L, II}
[\mathsf T^-_{0,m}(X),\mathsf T^+_{0,n}(Y)]=
\begin{cases}
\mathsf T^+_{0,n-m}([X,Y]),& n\ge m,\\
\mathsf T^-_{0,m-n}([X,Y]),& n\le m,
\end{cases}
\quad [\mathsf t^-_{0,m},\mathsf T^+_{0,n}(X)]=[\mathsf T^-_{0,m}(X),\mathsf t^+_{0,n}]=0,
\end{align}
for all $(n,m)\in \mathbb N$.

Using the relations $\mathbf T^-_{n,2n}(X)=(-1)^n\mathbf T^+_{n,0}(X)$ and $\mathbf t^-_{n,2n}=(-1)^n\mathbf t^+_{n,0}$, we derive that $[\mathsf t^+_{0,1},\mathbf T^-_{n,2n}(X)]=n\mathbf T^-_{n-1,2n-2}(X)$ and $[\mathsf t^+_{0,1},\mathbf t^-_{n,2n}]=n\mathbf t^-_{n-1,2n-2}$; similarly we derive that $[\mathsf t^+_{0,1},\mathbf T^-_{n,2n-1}(X)]=n\mathbf T^-_{n-1,2n-3}(X)$ and $[\mathsf t^+_{0,1},\mathbf t^-_{n,2n-1}]=n\mathbf t^-_{n-1,2n-3}$. Then the consecutive adjoint action of $\mathsf t^-_{0,1}$ implies the relations
\begin{align}\label{eqn: gluing relations of L, III}
[\mathsf t^+_{0,1},\mathbf T^-_{n,m}(X)]=n\mathbf T^-_{n-1,m-2}(X),\quad [\mathsf t^+_{0,1},\mathbf t^-_{n,m}]=n\mathbf t^-_{n-1,m-2},
\end{align}
for all $(n,m)\in \mathbb N^2$ such that $m\ge 2n-1$.

Now we are ready to prove the injectivity of the natural map $\mathsf L^{(K)}_{\mathrm{new}}\to \mathsf L^{(K)}$. Let us re-define the notation for the generators of $\mathsf L^{(K)}_{\mathrm{new}}$, namely, define $\mathbf T_{n,m}(X):=\mathbf T^+_{n,m}(X)$ and $\mathbf t_{n,m}:=\mathbf t^+_{n,m}$ for $(n,m)\in \mathbb N^2$ and $X\in \mathfrak{gl}_K$; then define $\mathbf T_{n,m}(X):=(-1)^n\mathbf T^-_{n,2n-m}(X)$ and $\mathbf t_{n,m}:=(-1)^n\mathbf t^-_{n,2n-m}$, for $(n,m)\in \mathbb N\times \mathbb Z_{<0}$ and $X\in \mathfrak{gl}_K$; so that $\mathsf L^{(K)}_{\mathrm{new}}\to \mathsf L^{(K)}$ maps $\mathbf T_{n,m}(X)\mapsto\mathbf T_{n,m}(X)$ and $\mathbf t_{n,m}\mapsto\mathbf t_{n,m}$ for all $(n,m)\in \mathbb N^2$. Define a $\mathbb C[\epsilon_1,\epsilon_2]$-module map $\mathbb C[\epsilon_1,\epsilon_2]\cdot \mathfrak{B}( \mathsf L^{(K)}) \to \mathsf L^{(K)}_{\mathrm{new}}$ by sending an ordered monomial $\mathcal O_1\cdots\mathcal O_n$ to the monomial in $\mathsf L^{(K)}_{\mathrm{new}}$ given by the same symbol. 

We claim that $\mathbb C[\epsilon_1,\epsilon_2]\cdot \mathfrak{B}( \mathsf L^{(K)}) \to \mathsf L^{(K)}_{\mathrm{new}}$ is surjective. In fact, $\mathsf L^{(K)}_{\mathrm{new}}$ inherits the vertical filtration $V_{\bullet}\mathsf L^{(K)}_{\mathrm{new}}$ from two subalgebras $\mathsf A^{(K)}_\pm$ such that the filtration degree is given by the Definition \ref{def: vertical filtration on L}. Moreover, the Proposition \ref{prop: filtration on L} holds for $\mathsf L^{(K)}_{\mathrm{new}}$ as well, because all the essential ingredients in the proof of \ref{prop: filtration on L} are 
\begin{itemize}
    \item Proposition \ref{prop: filtration} for $\mathsf A^{(K)}_\pm$,
    \item and the relations \eqref{eqn: gluing relations of L, III} for the recursion step.
\end{itemize}
Note that $V_0\mathsf L^{(K)}_{\mathrm{new}}$ is generated by $\mathbf T_{0,m}(X),\mathbf t_{0,m}$, which satisfy the relations $$[\mathbf T_{0,m}(X),\mathbf T_{0,n}(Y)]=\mathbf T_{0,m+n}([X,Y]),\quad [\mathbf T_{0,m}(X),\mathbf t_{0,n}]=[\mathbf t_{0,m},\mathbf t_{0,n}]=0,$$ by \eqref{eqn: gluing relations of L, II}, so there is a surjective map from the universal enveloping algebra of the loop algebra of $\mathfrak{sl}_K\oplus\gl_1$ to $V_0\mathsf L^{(K)}_{\mathrm{new}}$, in particular $V_0\mathsf L^{(K)}_{\mathrm{new}}$ is contained in the image of $\mathbb C[\epsilon_1,\epsilon_2]\cdot \mathfrak{B}( \mathsf L^{(K)})$ by the PBW theorem for the Lie algebra. Assume that $V_s\mathsf L^{(K)}_{\mathrm{new}}$ is generated by elements in $\mathfrak B(\mathsf L^{(K)})$, then Proposition \ref{prop: filtration on L} implies that we can reorder any monomials in $\mathfrak{G}(\mathsf L^{(K)})$ with total degree $s+1$ into the non-decreasing order modulo terms in $V_{s}\mathsf L^{(K)}_{\mathrm{new}}$, therefore $V_{s+1}\mathsf L^{(K)}_{\mathrm{new}}$ is generated by elements in $\mathfrak B(\mathsf L^{(K)})$. $V_{\bullet}\mathsf L^{(K)}_{\mathrm{new}}$ is obviously exhaustive, thus $\mathsf L^{(K)}_{\mathrm{new}}$ is generated by $\mathfrak B(\mathsf L^{(K)})$.  

Finally, the composition $\mathbb C[\epsilon_1,\epsilon_2]\cdot \mathfrak{B}( \mathsf L^{(K)})\to \mathsf L^{(K)}_{\mathrm{new}}\to \mathsf L^{(K)}$ is an isomorphism by Theorem \ref{thm: PBW for L}, this implies that $\mathsf L^{(K)}_{\mathrm{new}}\to \mathsf L^{(K)}$ is an isomorphism.
\end{proof}

\subsection{The intersections of \texorpdfstring{$\mathsf A^{(K)}_{\pm}$}{two copies of Ak}}

\begin{lemma}\label{lem: shift automorphism of L}
For $\beta\in \mathbb C$, there exists an automorphism $\tau_{\beta}:\mathsf L^{(K)}\cong \mathsf L^{(K)}$ which is uniquely determined by
\begin{align}\label{eqn: shift automorphism of L}
\tau_{\beta}(\mathsf T_{0,n}(X))=\mathsf T_{0,n}(X),\quad \tau_{\beta}(\mathsf t_{0,n})=\mathsf t_{0,n},\quad  \tau_{\beta}(\mathsf t_{2,0})=\mathsf t_{2,0}+2\beta\mathbf t_{1,-1}+\beta^2\mathsf t_{0,-2},
\end{align}
for all $n\in \mathbb Z$ and all $X\in \gl_K$.
\end{lemma}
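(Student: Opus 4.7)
The plan is to realize $\tau_\beta$ as the uniform-in-$N$ limit of the shift $y_i\mapsto y_i+\beta x_i^{-1}$ on the extended trigonometric Cherednik algebra $\mathbb H^{(K)}_N$. First I would verify that this assignment, together with $x_i\mapsto x_i$ and fixing the $\mathfrak S_N$ and $\gl_K^{\otimes N}$ factors, extends to a $\mathbb C[\epsilon_1,\epsilon_2]$-algebra automorphism $\tau_\beta^N$ of $\mathbb H^{(K)}_N$. The $[y_i,x_j]$ relations are automatic since $x_i^{-1}$ commutes with every $x_j$, so the only nontrivial check is $[y_i+\beta x_i^{-1},y_j+\beta x_j^{-1}]=0$, which reduces to the vanishing of $[y_i,x_j^{-1}]+[x_i^{-1},y_j]$. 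For $i\ne j$ this follows from the symmetry $s_{ij}\Omega_{ij}=s_{ji}\Omega_{ji}$ combined with $s_{ij}x_k^{-1}=x_{s_{ij}(k)}^{-1}s_{ij}$; for $i=j$ it is immediate by antisymmetry. Since $\tau_\beta^N$ commutes with $\mathfrak S_N$ it restricts to $\mathrm{S}\mathbb H^{(K)}_N$, and its inverse is manifestly $\tau_{-\beta}^N$.

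The compatible family $(\tau_\beta^N)_N$ induces an automorphism of $\prod_N\mathrm{S}\mathbb H^{(K)}_N[\epsilon_2^{-1}]$, which I would then pull back through the injective embedding $\mathsf L^{(K)}\hookrightarrow\prod_N\mathrm{S}\mathbb H^{(K)}_N[\epsilon_2^{-1}]$ of Lemma \ref{lem: rho_N extends to loop Yangian}. By the explicit formulae of that lemma, the shift fixes $\rho_N(\mathsf T_{0,n}(X))$ and $\rho_N(\mathsf t_{0,n})$ for all $n\in\mathbb Z$ (these involve only the $x_i$'s), while
\[
\tau_\beta^N\bigl(\rho_N(\mathsf t_{2,0})\bigr)=\rho_N(\mathsf t_{2,0})+\frac{\beta}{\epsilon_2}\sum_{i=1}^N\bigl(y_ix_i^{-1}+x_i^{-1}y_i\bigr)+\beta^2\rho_N(\mathsf t_{0,-2}).
\]

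The main computational step is to identify the middle term as $2\rho_N(\mathbf t_{1,-1})$, which I would do by unfolding the recursive definition $\mathbf t_{1,-1}=-\iota(\mathbf t_{1,3})$ and applying the inversion involution of Proposition \ref{prop: involution on L} ($x_i\mapsto x_i^{-1}$, $y_i\mapsto -x_iy_ix_i$) to the nested expression $\mathbf t_{1,3}=\tfrac12[\mathbf t_{2,3},\mathsf t_{0,1}]$ with $\mathbf t_{2,3}=\tfrac14[\mathsf t_{1,0},\mathbf t_{2,4}]$ and $\mathbf t_{2,4}=\iota(\mathsf t_{2,0})$. Using the Cherednik relation $[y_i,x_i]=\epsilon_2-\epsilon_1\sum_{l\ne i}s_{il}\Omega_{il}$ to normal-order, the anomaly terms collect into $\mathfrak S_N$-antisymmetric sums over pairs $(i,l)$ of the form $(x_l^{-2}-x_i^{-2})s_{il}\Omega_{il}$, which cancel pairwise and leave exactly $\tfrac{1}{2\epsilon_2}\sum_i(y_ix_i^{-1}+x_i^{-1}y_i)$; this bookkeeping through the nested recursion is the principal obstacle. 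Once established, the claimed formula for $\tau_\beta(\mathsf t_{2,0})$ lies in $\mathsf L^{(K)}$ (not merely in its $\epsilon_2$-localization), and since iterated adjoint actions of $\mathsf t_{2,0}$ on the loop subalgebra $V_0\mathsf L^{(K)}$ surject onto each associated graded piece $V_k\mathsf L^{(K)}/V_{k-1}\mathsf L^{(K)}$ by Proposition \ref{prop: filtration on L}, the set $\{\mathsf T_{0,n}(X),\mathsf t_{0,n},\mathsf t_{2,0}\}_{n\in\mathbb Z,\,X\in\gl_K}$ generates $\mathsf L^{(K)}$ as an algebra. This shows both that $\tau_\beta$ preserves the integral form $\mathsf L^{(K)}$ and that it is uniquely determined by the listed images.
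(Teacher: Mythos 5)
Your proposal is correct and takes essentially the same route as the paper: the paper also defines the shift $\tau_{\beta,N}$ on $\mathbb H^{(K)}_N$ by $y_i\mapsto y_i+\beta x_i^{-1}$ (fixing the $x_i$, $\mathfrak S_N$ and $\gl_K^{\otimes N}$), checks that $\tau_{\beta,N}\circ\rho_N$ agrees with $\rho_N\circ\tau_\beta$ on the listed generators, and concludes via the faithfulness of $\prod_N\rho_N$ from Lemma \ref{lem: rho_N extends to loop Yangian}; you simply carry out in detail the identification $2\rho_N(\mathbf t_{1,-1})=\tfrac{1}{\epsilon_2}\sum_i(y_ix_i^{-1}+x_i^{-1}y_i)\mathbf e$ that the paper dismisses as ``easy to see'', and your cancellation mechanism for the $\epsilon_1$-anomalies (antisymmetric sums over pairs) is the right one. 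One small caveat: iterated $\mathrm{ad}_{\mathsf t_{2,0}}$ applied to $V_0\mathsf L^{(K)}$ does \emph{not} reach the classes $\mathbf T_{k,-1}(X)$ in $V_k/V_{k-1}$ (the relevant coefficient $2q$ vanishes at $q=0$), but the generation statement you need is immediate anyway: by Definition \ref{def: the algebra L^K} the algebra $\mathsf L^{(K)}$ is generated by $S(w)(\mathsf A^{(K)})$ and $\mathsf t_{0,-1}$, and relation \eqref{eqn: A2} expresses every $\mathsf T_{n,m}(X)$, $\mathsf t_{n,m}$ with $n,m\ge 0$ as an iterated bracket of $\mathsf t_{2,0}$ with $\mathsf T_{0,n+m}(X)$ or $\mathsf t_{0,n+m}$.
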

\begin{proof}
Consider the automorphism $\tau_{\beta,N}$ of $\mathbb{H}^{(K)}_N$ uniquely determined by $\tau_{\beta,N}(y_i)= y_i+\frac{\beta}{x_i}$ and $\tau_{\beta,N}$ fixes $\mathbb C[x_1,\cdots,x_N]$ and $\mathbb C[\mathfrak{S}_N]$ and $\gl_K^{\otimes N}$. The it is easy to see that $\tau_{\beta,N}\circ \rho_N$ agrees with $\rho_N\circ \tau_{\beta}$ on the set of elements in \eqref{eqn: shift automorphism of L}. By Lemma \ref{lem: rho_N extends to loop Yangian}, $\tau_\beta$ uniquely determines an algebra homomorphism; $\tau_\beta$ is an automorphism with inverse $\tau_{-\beta}$.
\end{proof}
Note that $\tau_{\beta}$ is additive in $\beta$, i.e. $\tau_{\beta+\beta'}=\tau_{\beta}\circ\tau_{\beta'}=\tau_{\beta'}\circ\tau_{\beta}$. Moreover, it is easy to see that $\tau_{\beta}$ commutes with the involution, i.e. $\iota\circ\tau_{\beta}=\tau_{\beta}\circ \iota$.\\

Define the intersection $$\mathsf C^{(K)}_{\nu}:=\mathsf A^{(K)}_+\cap \tau_{\nu\epsilon_2}(\mathsf A^{(K)}_-).$$ Notice that $\tau_{-\nu\epsilon_2}(\mathbf T_{n,2n}(X))\in \mathsf A^{(K)}_+$ by direct computation, then it follows that $\mathbf T_{n,2n}(X)\in \tau_{\nu\epsilon_2}(\mathsf A^{(K)}_+)$, taking the involution we get $\mathbf T_{n,0}(X)\in \tau_{\nu\epsilon_2}(\mathsf A^{(K)}_-)$, thus $\mathbf T_{n,0}(X)\in \mathsf C^{(K)}_{\nu}$. Similarly $\mathbf t_{n,0}\in\mathsf C^{(K)}_{\nu}$. Moreover, $\mathsf C^{(K)}_{\nu}$ contains an $\mathfrak{sl}_2$-triple $\{\mathbf t_{1,0},\mathbf t_{1,1}+\frac{\nu\epsilon_2}{2}\mathbf t_{0,0},\mathbf t_{1,2}+\nu \epsilon_2\mathbf t_{0,1} \}$.

\begin{proposition}
$\mathsf C^{(K)}_{\nu}$ is generated by $\mathbf T_{n,0}(X),\mathbf t_{n,0}$ and $\mathbf t_{1,2}+\nu \epsilon_2\mathbf t_{0,1}$, for $n\in \mathbb N$ and $X\in \gl_K$.
\end{proposition}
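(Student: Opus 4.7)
The plan is to prove the nontrivial inclusion $\mathsf C^{(K)}_\nu \subseteq \mathsf C'$, where $\mathsf C'$ denotes the $\mathbb C[\epsilon_1,\epsilon_2]$-subalgebra generated by the listed elements; the reverse inclusion is established in the discussion immediately preceding the proposition statement.

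The first step will be to pass to the associated graded with respect to the vertical filtration $V_\bullet$ of Definition \ref{def: vertical filtration on L}. By Lemma \ref{lem: shift automorphism of L}, the correction $\tau_\beta(\mathsf t_{2,0})-\mathsf t_{2,0} = 2\beta\mathbf t_{1,-1}+\beta^2\mathsf t_{0,-2}$ lies in $V_1\mathsf L^{(K)}$, while $\tau_\beta$ fixes $V_0\mathsf L^{(K)}$ pointwise; hence $\tau_\beta$ preserves $V_\bullet$ and the induced map $\mathrm{gr}_V\tau_\beta$ is the identity on $\mathrm{gr}_V\mathsf L^{(K)}$. Consequently, for any $f\in V_s\mathsf C^{(K)}_\nu$ the symbol $\bar f$ lies in $\mathrm{gr}_V^s\mathsf A^{(K)}_+\cap \mathrm{gr}_V^s\mathsf A^{(K)}_-$. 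Under the identification $\mathrm{gr}_V\mathsf L^{(K)}\cong U(\mathscr O(\mathbb C\times\mathbb C^\times)\otimes\gl_K^\sim)$ (Remark \ref{rmk: vertival filtrations on L vs A}), the subalgebras $\mathrm{gr}_V\mathsf A^{(K)}_\pm$ correspond to the universal enveloping algebras of the Lie subalgebras with classical generators $x^m y^n\otimes X$ for $m\ge 0$ and $m\le 2n$ respectively, so their intersection is the UEA of the Lie subalgebra spanned by $x^m y^n\otimes X$ with $0\le m\le 2n$; equivalently, $\bar f$ is a PBW polynomial in $\{\bar{\mathbf T}_{n,m}(X),\bar{\mathbf t}_{n,m}: 0\le m\le 2n\}$.

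I will then show that each symbol $\bar{\mathbf T}_{n,k}(X)$ (and $\bar{\mathbf t}_{n,k}$) with $0\le k\le 2n$ appears as the leading symbol of some element of $\mathsf C'$, via iterated adjoint action of $\mathbf t_{1,2}+\nu\epsilon_2\mathbf t_{0,1}$ starting from $\mathbf T_{n,0}(X)$ (or $\mathbf t_{n,0}$). By Proposition \ref{prop: filtration on L},
\[
[\mathbf t_{1,2},\mathbf T_{n,j}(X)] \equiv (j-2n)\mathbf T_{n,j+1}(X) \pmod{V_{n-1}\mathfrak L^{(K)}},
\]
and the correction $\nu\epsilon_2\mathbf t_{0,1}$ contributes only to $V_{n-1}$ since $\mathbf t_{0,1}\in V_0$; iterating $k$ times yields an element of $\mathsf C'$ whose leading symbol equals the nonzero integer scalar $(-2n)(-2n+1)\cdots(-2n+k-1)$ times $\bar{\mathbf T}_{n,k}(X)$, for $1\le k\le 2n$. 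A standard induction on $s$ will then complete the proof: the base case $V_0\mathsf C^{(K)}_\nu$ is generated by $\mathbf T_{0,0}(X),\mathbf t_{0,0}\in\mathsf C'$; for $s\ge 1$, express $\bar f$ as a $\mathbb C[\epsilon_1,\epsilon_2]$-combination of PBW monomials in the generators just produced, lift each monomial to the corresponding product in $\mathsf C'$ to obtain $\tilde f\in\mathsf C'$ with the same leading symbol as $f$, and apply the induction hypothesis to $f-\tilde f\in V_{s-1}\mathsf C^{(K)}_\nu$.

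The main technical hurdle will be verifying that $\mathrm{gr}_V(\mathsf A^{(K)}_+\cap \tau_{\nu\epsilon_2}\mathsf A^{(K)}_-)$ actually equals $\mathrm{gr}_V\mathsf A^{(K)}_+\cap\mathrm{gr}_V\mathsf A^{(K)}_-$, which is not automatic for intersections of filtered submodules. Here the compatibility of PBW bases from Theorem \ref{thm: PBW for L} should be decisive: the PBW bases of $\mathsf A^{(K)}_\pm$ inside $\mathsf L^{(K)}$ are subsets of a common PBW basis (and $\tau_{\nu\epsilon_2}$ acts trivially on associated graded), so the two filtered subalgebras intersect transversally in the appropriate sense. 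Combined with the observation that the prefactors $(-2n)(-2n+1)\cdots(-2n+k-1)$ are nonzero integers, hence invertible in $\mathbb C[\epsilon_1,\epsilon_2]$, the induction can be carried out over the intended base ring without further localization.
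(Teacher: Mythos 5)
Your argument is correct in outline but follows a genuinely different route from the paper's. The paper first observes that $\mathsf C^{(K)}_{\nu}/(\epsilon_1)\to\mathsf L^{(K)}/(\epsilon_1)$ is injective and reduces to the classical limit $\epsilon_1=0$, where $\mathsf A^{(K)}_{+}$ and $\tau_{\nu\epsilon_2}(\mathsf A^{(K)}_{-})$ become universal enveloping algebras of $\epsilon_2$-differential operators on the two charts of $\mathbb P^1$ (tensored with $\gl_K^{\sim}$), with $\tau_{\nu\epsilon_2}\circ\iota$ acting as the chart transition $x\mapsto x^{-1}$, $\partial_x\mapsto -x^2\partial_x-(\nu+1)x$; the intersection is then the global sections of twisted differential operators on $\mathbb P^1$, identified via Beilinson--Bernstein with $U(\mathfrak{sl}_2)/(\ker\chi_\nu)$, whose decomposition into odd-dimensional $\mathfrak{sl}_2$-irreducibles with lowest-weight vectors $\mathbf T_{n,0}(X),\mathbf t_{n,0}$ gives the generation statement. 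You instead work with the vertical filtration and produce, by iterating $\mathrm{ad}_{\mathbf t_{1,2}+\nu\epsilon_2\mathbf t_{0,1}}$ on $\mathbf T_{n,0}(X)$, explicit elements of your subalgebra $\mathsf C'$ whose symbols are nonzero integer multiples of $\bar{\mathbf T}_{n,k}(X)$ for $0\le k\le 2n$, and then run a lifting induction on the filtration degree. Your route is more elementary (no reduction mod $\epsilon_1$, no localization on $\mathbb P^1$) at the cost of more filtration bookkeeping; the paper's route buys the conceptual identification of $\mathsf C^{(K)}_{\nu}$ with matrix-extended global twisted differential operators on $\mathbb P^1$, which it exploits in the subsequent discussion of $T^*\mathbb P^1$.

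One caution about your final paragraph. The justification you offer for $\mathrm{gr}_V(\mathsf A^{(K)}_+\cap\tau_{\nu\epsilon_2}\mathsf A^{(K)}_-)=\mathrm{gr}_V\mathsf A^{(K)}_+\cap\mathrm{gr}_V\mathsf A^{(K)}_-$ --- a common PBW basis together with the triviality of $\mathrm{gr}_V\tau_{\nu\epsilon_2}$ --- does not prove transversality: a filtered automorphism inducing the identity on the associated graded can still move one basis-spanned submodule off another (already for $\mathbb C e_2$ versus $\mathbb C(e_1+e_2)$ inside $\mathbb C^2$ filtered by $\mathbb C e_1\subset\mathbb C^2$), so only the inclusion $\mathrm{gr}_V(\mathsf A^{(K)}_+\cap\tau_{\nu\epsilon_2}\mathsf A^{(K)}_-)\subseteq\mathrm{gr}_V\mathsf A^{(K)}_+\cap\mathrm{gr}_V\mathsf A^{(K)}_-$ comes for free. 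Fortunately that is the only direction your induction actually uses: a symbol of $f\in V_s\mathsf C^{(K)}_{\nu}$ lands in the intersection of graded subalgebras, is a PBW polynomial in the $\bar{\mathbf T}_{n,m}(X),\bar{\mathbf t}_{n,m}$ with $0\le m\le 2n$, and is lifted by the explicit elements you built inside $\mathsf C'$. You should therefore drop the transversality claim as an input and, if desired, record the equality of associated gradeds as a corollary of the completed induction rather than a prerequisite.
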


\begin{proof}
Notice that $\mathsf C^{(K)}_{\nu}/(\epsilon_1=0)\to \mathsf L^{(K)}/(\epsilon_1=0)$ is injective, because if $\epsilon_1\cdot f\in \mathsf C^{(K)}_{\nu}$ then $f\in \mathsf A^{(K)}_+$ and $f\in \tau_{\nu\epsilon_2}(\mathsf A^{(K)}_-)$ thus $f\in \mathsf C^{(K)}_{\nu}$. Therefore it suffices to show that the intersection of $\mathsf A^{(K)}_+/(\epsilon_1=0)$ and $\tau_{\nu\epsilon_2}(\mathsf A^{(K)}_+/(\epsilon_1=0))$ in $\mathsf L^{(K)}/(\epsilon_1=0)$ is generated as a $\mathbb C[\epsilon_2]$-algebra by $\mathbf T_{n,0}(X),\mathbf t_{n,0}$ and $\mathbf t_{1,2}+\nu \epsilon_2\mathbf t_{0,1}$, for $n\in \mathbb N$ and $X\in \gl_K$.

According to Remark \ref{rmk: vertival filtrations on L vs A}, $\mathsf L^{(K)}/(\epsilon_1=0)\cong U(D_{\epsilon_2}(\mathbb C^{\times})\otimes\gl_K^{\sim})$, and $\mathsf A^{(K)}_+/(\epsilon_1=0)$ is the subalgebra $U(D_{\epsilon_2}(\mathbb C)\otimes\gl_K^{\sim})$ and the $\tau_{\nu\epsilon_2}\circ\iota$ is the automorphism induce by $x\mapsto x^{-1},\partial_x\mapsto -x^2\partial_x-(\nu+1)x$. Therefore $\mathsf A^{(K)}_+\cap \mathsf A^{(K)}_-$ is isomorphic to the universal enveloping algebra of the intersection between $D_{\epsilon_2}(\mathbb C)\otimes\gl_K^{\sim}$ and its image under the automorphism $\tau_{\nu\epsilon_2}\circ\iota$, and the intersection is nothing but the global section of $D^{\nu}_{\epsilon_2}(\mathbb P^1)\otimes\gl_K^{\sim}$, where $D^{\nu}_{\epsilon_2}(\mathbb P^1)$ is the sheaf of $\mathcal O_{\mathbb P^1}(-\nu-1)$-twisted\footnote{For non-integral $\nu$ we regard $\mathcal O_{\mathbb P^1}(-\nu-1)$ as a twisted sheaf.} $\epsilon_2$-differential operators on $\mathbb P^1$.

By the theorem of Beilinson-Bernstein, $\Gamma(\mathbb P^1,D^{\nu}(\mathbb P^1))$ is isomorphic to $U(\mathfrak{sl}_2)/(\ker \chi_\nu)$, where $(\ker \chi_\nu)$ is the two-sided ideal of $U(\mathfrak{sl}_2)$ generated by the kernel of the central character $\chi_\nu:Z(U(\mathfrak{sl}_2))\to \mathbb C$ corresponding to the $\mathfrak{sl}_2$-character $\nu\rho$, where $\rho$ is the half of the unique positive root. As an $\mathfrak{sl}_2$-representation via the adjoint action, $U(\mathfrak{sl}_2)/(\ker \chi_\nu)$ is the direct sum of all irreducible representation of odd dimension with multiplicity one; on the other hand, $\mathbf T_{n,0}(X)$ and $\mathbf t_{n,0}$ are lowest weight vector of $\mathfrak{sl}_2$ with spin $-n$, hence $\Gamma(\mathbb P^1,D^{\nu}_{\epsilon_2}(\mathbb P^1)\otimes\gl_K^{\sim})$ is exactly spanned by the $\mathfrak{sl}_2$-action on $\mathbf T_{n,0}(X),\mathbf t_{n,0}$. This finishes the proof. 
\end{proof}

\begin{conjecture}
There exists a $\mathbb C[\epsilon_1,\epsilon_2]$-algebra embedding $\mathsf C^{(K)}_{-1/2}\hookrightarrow \mathsf A^{(K)}$ such that 
\begin{align}\label{eqn: embed C into A}
    \mathbf T_{n,0}(X)\mapsto \frac{1}{2^n}\mathsf T_{2n,0}(X),\quad \mathbf t_{1,2}- \frac{\epsilon_2}{2}\mathbf t_{0,1}\mapsto \frac{1}{2}\mathsf t_{0,2}.
\end{align}
In particular the $\mathfrak{sl}_2$-triple $\{f,h,e\}=\{-\mathbf t_{1,0},2\mathbf t_{1,1}- \frac{\epsilon_2}{2}\mathbf t_{0,0} ,\mathbf t_{1,2}- \frac{\epsilon_2}{2}\mathbf t_{0,1}\}$ is mapped to the $\mathfrak{sl}_2$-triple $\{-\frac{1}{2}\mathsf t_{2,0},\mathsf t_{1,1},\frac{1}{2}\mathsf t_{0,2}\}$.
\end{conjecture}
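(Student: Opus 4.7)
My plan is to construct the embedding via Dunkl-type representations and take a uniform-in-$N$ limit, realizing $\mathsf C^{(K)}_{-1/2}$ at each finite $N$ as part of a folded Cherednik algebra. Recall that $\rho_N$ extends to a surjection $\mathsf L^{(K)}[\epsilon_2^{-1}] \twoheadrightarrow \mathrm S\mathbb H^{(K)}_N[\epsilon_2^{-1}]$ with trivial intersection kernel by Lemmas \ref{lem: rho_N extends to loop Yangian} and \ref{lem: rho_N for loop Yangian is surjective}, while $\rho_M: \mathsf A^{(K)}[\epsilon_2^{-1}] \twoheadrightarrow \mathrm S\mathcal H^{(K)}_M[\epsilon_2^{-1}]$ has trivial intersection kernel by Corollary \ref{cor: truncation of D}. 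The geometric input is the $2{:}1$ branched covering $\pi: \mathbb A^1_z \to \mathbb A^1_x$, $z \mapsto x = z^2$; the embedding will come from pushforward along $\pi^N$ composed with the obvious inclusion into the $2N$-point rational Cherednik algebra.

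First, for each $N$, I would construct an algebra embedding $\widetilde\phi_N : \tau_{-\epsilon_2/2}(\mathrm S\mathbb H^{(K)}_N) \hookrightarrow \mathrm S\mathcal H^{(K)}_{2N}$ realizing the shifted spherical trigonometric Cherednik on $N$ points in $\mathbb C^\times$ as the $\mathbb Z_2^N$-invariants (under the pairwise sign involutions $z_{2i-1} \leftrightarrow z_{2i}$ with $z_{2i} = -z_{2i-1}$) inside the spherical rational Cherednik on $2N$ points $\{z_k\}_{k=1}^{2N}$, via the substitution $x_i = z_{2i-1}^2$. This is the classical folding $\mathfrak S_N \ltimes \mathbb Z_2^N \hookrightarrow \mathfrak S_{2N}$, in which the $BC$-type short-root parameter is precisely $\nu = -1/2$. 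The key Dunkl-level identification is that on $\mathbb Z_2^N$-invariants, the shifted trigonometric Dunkl $\tau_{-\epsilon_2/2}(y_i) = y_i - \tfrac{\epsilon_2}{2 x_i}$ corresponds via the chain rule $\partial_{z_{2i-1}}^2 = 2\partial_{x_i} + 4 x_i \partial_{x_i}^2$ and the Dunkl exchange between $z_{2i-1}$ and $z_{2i}$ to $\tfrac12 p_{2i-1}^2$, where $p_k = \epsilon_2\partial_{z_k} + \epsilon_1\sum_{l\neq k}\tfrac{s_{kl}\Omega_{kl}}{z_k - z_l}$ is the rational Dunkl. Taking $n$-th powers and summing over the $\mathbb Z_2^N$-orbit, one obtains $\tau_{-\epsilon_2/2}(\rho_N(\mathbf T_{n,0}(X))) \mapsto \tfrac{1}{2^n}\sum_{k=1}^{2N} p_k^{2n} E^a_{b,k} = \rho_{2N}(\tfrac{1}{2^n}\mathsf T_{2n,0}(X))$, and a parallel computation on the $\mathfrak{sl}_2$ weight-$+2$ generator yields the second formula in \eqref{eqn: embed C into A}.

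Second, the family $\{\widetilde\phi_N\}$ is compatible with simultaneously adding a trigonometric point on the source and its pair of rational preimages on the target. Combined with the trivial-kernel properties of $\prod_N\rho_N$ on both $\mathsf L^{(K)}$ and $\mathsf A^{(K)}$, this yields a well-defined $\mathbb C[\epsilon_1,\epsilon_2]$-algebra homomorphism $\phi: \mathsf C^{(K)}_{-1/2} \to \mathsf A^{(K)}$ whose values on the generating set $\{\mathbf T_{n,0}(X), \mathbf t_{n,0}, \mathbf t_{1,2} - \tfrac{\epsilon_2}{2}\mathbf t_{0,1}\}$ are those of \eqref{eqn: embed C into A}. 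Injectivity of $\phi$ follows because each $\widetilde\phi_N$ is injective (the $\mathfrak S_N\ltimes\mathbb Z_2^N$-invariants embed into the $A_{2N-1}$ rational Cherednik) and $\bigcap_N\ker\rho_N = 0$ on $\mathsf L^{(K)} \supset \mathsf C^{(K)}_{-1/2}$.

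The main obstacle is the Dunkl-level folding computation in the first step: one must carefully bookkeep the Dunkl exchange between the paired points $(z_{2i-1}, z_{2i})$ on the double cover, the cross-exchanges between different pairs, and the chain-rule anomaly from the ramification of $z \mapsto z^2$, and then check that these conspire to reproduce the trigonometric Dunkl exactly at $\nu = -1/2$, the unique value of $\nu$ for which the anomalous linear-in-$y$ term from the chain rule is absorbed by the paired-point Dunkl exchange contribution. Once this base identification is established, the higher-power statement $y_i^n \leftrightarrow 2^{-n} p_{2i-1}^{2n}$ on invariants follows by induction using commutativity of the Dunkl operators within each Cherednik algebra, and the $\mathfrak{sl}_2$-triple identification then drops out from the low-weight commutation relations already verified.
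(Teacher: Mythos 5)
First, note that this statement is labeled a \emph{Conjecture}: the paper itself only verifies it modulo $\epsilon_1$, via the metaplectic identification of $U_{\epsilon_2}(\mathfrak{sl}_2)/(\ker\chi_{-1/2})$ with the even part of the Weyl algebra ($e=\tfrac12 z^2$, $f=-\tfrac12\partial_z^2$, $h=z\partial_z+\tfrac12$). You are therefore claiming a complete proof of something the authors left open, and the claim rests entirely on your ``key Dunkl-level identification,'' which is false. Already for $N=1$, $K=1$, $\epsilon_1=0$, the substitution $x=z^2$ turns the first-order operator $\tau_{-\epsilon_2/2}(y)=\epsilon_2\partial_x-\tfrac{\epsilon_2}{2x}$ into $\tfrac{\epsilon_2}{2z}\partial_z-\tfrac{\epsilon_2}{2z^2}$, which is still first order and is not $\tfrac12 p^2=\tfrac{\epsilon_2^2}{2}\partial_z^2$; equivalently, $[\tfrac12 p^2,z^2]=\epsilon_2\{p,z\}$ is not central, whereas $[\tau_{-\epsilon_2/2}(y),x]=\epsilon_2$ is. The correspondence underlying \eqref{eqn: embed C into A} is the oscillator one, $\partial_x\leftrightarrow\tfrac12\partial_z^2$ and $x^2\partial_x+\nu x\leftrightarrow\tfrac12 z^2$, which mixes position and momentum and is \emph{not} induced by any change of variables $x=g(z)$ on a base curve. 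The relevant $\mathbb Z_2$ acts on the symplectic plane $(x,y)\mapsto(-x,-y)$ of the double current algebra (the $\mathbb C^2/\mathbb Z_2$ of the paper's remark), not on the configuration space of Cherednik particles.

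This misidentification propagates into the architecture of your argument. Reading the conjectural formula through $\rho_M$ on the target, $\mathbf T_{n,0}(X)\mapsto\tfrac{1}{2^n}\sum_{k=1}^{M}y_k^{2n}X_k$ keeps the number of Cherednik particles fixed and squares each single-particle Dunkl operator; there is no passage from $N$ to $2N$ points. Your doubling $\widetilde\phi_N:\tau_{-\epsilon_2/2}(\mathrm S\mathbb H^{(K)}_N)\hookrightarrow\mathrm S\mathcal H^{(K)}_{2N}$ cannot exist as stated: besides the factor-of-$2$ mismatch in the $\mathbb Z_2^N$-orbit sums, the $\gl_K$-labels do not fold — $\mathrm S\mathbb H^{(K)}_N$ carries $\gl_K^{\otimes N}$ while $\mathrm S\mathcal H^{(K)}_{2N}$ carries $\gl_K^{\otimes 2N}$, and the single Casimir $\Omega_{ij}$ in the relation $[y_i,x_j]=\epsilon_1 s_{ij}\Omega_{ij}$ would have to match a sum of four Casimirs on the doubled side, which is impossible $\mathbb C[\epsilon_1,\epsilon_2]$-linearly for $K>1$. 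Finally, even after correcting the target to $\mathrm S\mathcal H^{(K)}_{M}$ with the same $M$, the genuine obstruction remains untouched: $\mathsf C^{(K)}_{-1/2}$ has no presentation available (its generating set is established only via the $\epsilon_1=0$ degeneration and Beilinson--Bernstein), so specifying a map on generators does not yield a homomorphism unless one exhibits the entire finite-$M$ image as the restriction of an actual algebra map — and producing such a map for $\epsilon_1\neq 0$ is precisely the open content of the conjecture.
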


The conjecture is true when $\epsilon_1=0$. To see this, notice that the $\mathfrak{sl}_2$-triple $\{f,h,e\}=\{-\frac{1}{2}\partial_z^2,z\partial_z+\frac{1}{2},\frac{1}{2}z^2\}$ in $D(\mathbb C)$ generates an algebra map $U(\mathfrak{sl}_2)\to D(\mathbb C)$, and such that the quadratic Casimir $C=\frac{1}{2}h^2+ef+fe$ is mapped to the scalar $-\frac{7}{8}$, which equals to $\chi_{-1/2}(C)$, thus the map factors through $U(\mathfrak{sl}_2)/(\ker \chi_{-1/2})$. This map in turn induces Lie algebra embedding $\left(U_{\epsilon_2}(\mathfrak{sl}_2)/(\ker \chi_{-1/2})\right)\otimes\gl_K^{\sim}\hookrightarrow D_{\epsilon_2}(\mathbb C)\otimes\gl_K^{\sim}$, taking the universal enveloping algebra and we get algebra embedding $\mathsf C^{(K)}_{-1/2}/(\epsilon_1=0)\hookrightarrow \mathsf A^{(K)}/(\epsilon_1=0)$, and the \eqref{eqn: embed C into A} follows from direct computation.

\begin{remark}
It is expected that $\mathsf C^{(K)}_{-1/2}$ should be an algebra of gauge-invariant local observables on a topological defect in the twisted M-theory on $\mathbb R \times T^* \mathbb P^1$, where $\mathsf A^{(K)}_{\pm}$ plays the role of the corresponding algebra on the two coordinate patches of $\mathbb P^1$, and the gluing of two patches amounts to taking intersection of $\mathsf A^{(K)}_+$ and $\tau_{-\epsilon_2/2}(\mathsf A^{(K)}_-)$ inside $\mathsf L^{(K)}$, the algebra corresponding algebra on the $\mathbb C\times \mathbb C^{\times}$. The $T^*\mathbb P^1$ is a resolution of an $A_1$ singularity $\mathbb C\times \mathbb C/\mathbb Z_2$, so it is natural to expect that $\mathsf C^{(K)}_{-1/2}$ is isomorphic to the algebra associated to the latter, which is presented as a subalgebra of $\mathsf A^{(K)}$ in \cite[Section 3]{gaiotto2020twisted}.
\end{remark}

\subsection{Meromorphic coproduct of \texorpdfstring{$\mathsf L^{(K)}$}{Lk}}
Analogous to the construction of $\Delta_{\mathsf A}(w)$, we have a morphism:
\begin{align*}
    m:\mathbb C^{\times N_1}_{\mathrm{disj}}\times \mathbb C^{N_2}_{\mathrm{disj}}\times \Spec \mathbb C(\!(w^{-1})\!)\to (\mathbb C^{\times})^{N_1+N_2}_{\mathrm{disj}},
\end{align*}
which acts on generators of function ring by
\begin{equation}\label{eqn: meromorphic coproduct finite N}
    \begin{split}
        x^{(1)}_i\mapsto x^{(1)}_i+w,\quad x^{(2)}_j\mapsto x^{(2)}_j,&\quad \frac{1}{x^{(1)}_i}\mapsto \frac{1}{x^{(1)}_i} ,\quad \frac{1}{x^{(2)}_j}\mapsto \sum_{n=0}^{\infty} w^{-n-1} (-x^{(2)}_j)^n,\\ \frac{1}{x^{(1)}_{i_1}-x^{(1)}_{i_2}}\mapsto \frac{1}{x^{(1)}_{i_1}-x^{(1)}_{i_2}},&\quad
    \frac{1}{x^{(2)}_{j_1}-x^{(2)}_{j_2}}\mapsto \frac{1}{x^{(2)}_{j_1}-x^{(2)}_{j_2}},\\
    \frac{1}{x^{(1)}_i-x^{(2)}_j}\mapsto & -\sum_{n=0}^{\infty} w^{-n-1}(x^{(1)}_i-x^{(2)}_j)^n.
    \end{split}
\end{equation}
The meromorphic coproduct can be defined for differential operators as well, i.e. there exists 
\begin{align*}
    \Delta(w)_{N_1,N_2}:D(\mathbb C^{\times(N_1+N_2)}_{\mathrm{disj}})\otimes \mathfrak{gl}_K^{\otimes N_1+N_2}\to D(\mathbb C^{\times N_1}_{\mathrm{disj}})\otimes \mathfrak{gl}_K^{\otimes N_1}\otimes D(\mathbb C^{N_2}_{\mathrm{disj}})\otimes \mathfrak{gl}_K^{\otimes N_2}(\!(w^{-1})\!).
\end{align*}
Restricted to the image of spherical Cherednik algebras via Dunkl embeddings, we get an algebra homomorphism:
\begin{align}
    \Delta(w)_{N_1,N_2}: \mathrm{S}\mathbb H^{(K)}_{N_1+N_2}\to  \mathrm{S}\mathbb H^{(K)}_{N_1}\otimes  \mathrm{S}\mathcal H^{(K)}_{N_2}(\!(w^{-1})\!).
\end{align}
By taking the uniform-in-$N$ map, we obtain the following result.
\begin{proposition}
There is an algebra homomorphism $\Delta_{\mathsf L}(w): \mathsf L^{(K)}\to \mathsf L^{(K)}\otimes\mathsf A^{(K)}(\!(w^{-1})\!)$ which map the generators as
\begin{equation}\label{eqn: LL meromorphic coproduct}
\boxed{
\begin{aligned}
&\Delta_{\mathsf L}(w)(\mathsf T_{0,n}(E^a_b))=\mathsf T_{0,n}(E^a_b)\otimes 1+\sum_{m=0}^{\infty} \binom{n}{m} w^{n-m} 1\otimes\mathsf T_{0,m}(E^a_b),\quad (n\in \mathbb Z) \\
&\Delta_{\mathsf L}(w)(\mathsf T_{1,0}(E^a_b))=\square(\mathsf T_{1,0}(E^a_b))\\
&~+\epsilon_1\sum_{m,n\ge 0}\frac{(-1)^m}{w^{n+m+1}}\binom{m+n}{n} (\mathsf T_{0,n}(E^c_b)\otimes \mathsf T_{0,m}(E^a_c)-\mathsf T_{0,n}(E^a_c)\otimes \mathsf T_{0,m}(E^c_b)),\\
&\Delta_{\mathsf L}(w)(\mathsf t_{2,0})=\square(\mathsf t_{2,0})-2\epsilon_1\sum_{m,n\ge 0}\frac{(m+n+1)!}{m!n!w^{n+m+2}}(-1)^m (\mathsf T_{0,n}(E^a_b)\otimes \mathsf T_{0,m}(E^b_a)+\epsilon_1\epsilon_2 \mathsf t_{0,n}\otimes \mathsf t_{0,m}),
\end{aligned}}
\end{equation}
where $\square(X)=X\otimes 1+1\otimes X$. 
\end{proposition}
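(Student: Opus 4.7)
The plan is to descend the finite-$N$ meromorphic coproducts $\Delta(w)_{N_1,N_2}:\mathrm{S}\mathbb H^{(K)}_{N_1+N_2}\to \mathrm{S}\mathbb H^{(K)}_{N_1}\otimes \mathrm{S}\mathcal H^{(K)}_{N_2}(\!(w^{-1})\!)$ to a uniform-in-$N$ map. These finite-$N$ maps are genuine algebra homomorphisms by their geometric origin as pullback along the shift morphism $m$ restricted to spherical subalgebras. The Dunkl embeddings $\prod_N\rho_N:\mathsf L^{(K)}\hookrightarrow\prod_N\mathrm{S}\mathbb H^{(K)}_N[\epsilon_2^{-1}]$ (Lemma~\ref{lem: rho_N extends to loop Yangian}) and $\prod_N\rho_N:\mathsf A^{(K)}\hookrightarrow\prod_N\mathrm{S}\mathcal H^{(K)}_N[\epsilon_2^{-1}]$ (Corollary~\ref{cor: truncation of D}) induce an embedding $\mathsf L^{(K)}\otimes\mathsf A^{(K)}(\!(w^{-1})\!)\hookrightarrow\prod_{N_1,N_2}\mathrm{S}\mathbb H^{(K)}_{N_1}\otimes \mathrm{S}\mathcal H^{(K)}_{N_2}(\!(w^{-1})\!)[\epsilon_2^{-1}]$ through which all comparisons will be made.

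I would first observe that each of the right-hand sides in \eqref{eqn: LL meromorphic coproduct} describes, coefficient by coefficient in $w$, a finite sum of tensors of uniform-in-$N$ generators, hence lives in $\mathsf L^{(K)}\otimes\mathsf A^{(K)}(\!(w^{-1})\!)$ with no further localization needed. The substantive step is then to verify, on each generator class, the identity
\[
(\rho_{N_1}\otimes\rho_{N_2})\circ\Delta_{\mathsf L}(w)=\Delta(w)_{N_1,N_2}\circ\rho_{N_1+N_2}.
\]
For $\mathsf T_{0,n}(E^a_b)$, $n\in\mathbb Z$, this reduces to the generalized binomial expansion $(x^{(2)}_j+w)^n=\sum_{m\geq 0}\binom{n}{m}w^{n-m}(x^{(2)}_j)^m$ in the $w^{-1}$-adic topology, applied to $\sum_k y_k^n E^a_{b,k}$ after the block substitution $y_k\mapsto x^{(1)}_k$ for $k\leq N_1$ and $y_k\mapsto x^{(2)}_{k-N_1}+w$ for $k>N_1$; both the positive powers (handled as in \eqref{eqn: AA coproduct_finite N}) and the negative powers (yielding an infinite series that converges $w^{-1}$-adically) drop out in the stated form. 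For $\mathsf T_{1,0}(E^a_b)$ and $\mathsf t_{2,0}$ the cross Dunkl terms $(y_k-y_l)^{-s}$ with $k\le N_1<l$ expand as
\[
(x^{(1)}_k-x^{(2)}_{l-N_1}-w)^{-s}=-\sum_{n,m\geq 0}\tfrac{(n+m+s-1)!}{n!m!(s-1)!}(-1)^m w^{-n-m-s}(x^{(1)}_k)^n (x^{(2)}_{l-N_1})^m,
\]
and these reproduce exactly the quadratic tensor terms in \eqref{eqn: LL meromorphic coproduct} in the same pattern used for $\Delta_{\mathsf A}(w)$ in Proposition~\ref{prop: AA coproduct}.

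Once this generator-level compatibility is in place, the algebra-homomorphism property of $\Delta_{\mathsf L}(w)$ is forced. Indeed, any relation $f(\text{generators})=0$ in $\mathsf L^{(K)}$ has $\rho_{N_1+N_2}f=0$ in each $\mathrm{S}\mathbb H^{(K)}_{N_1+N_2}$, hence $\Delta(w)_{N_1,N_2}(\rho_{N_1+N_2}f)=0$ for every $(N_1,N_2)$; by the compatibility this equals the image under $\rho_{N_1}\otimes\rho_{N_2}$ of $f(\Delta_{\mathsf L}(w)(\text{generators}))$, and injectivity of the combined embedding $\prod_{N_1,N_2}(\rho_{N_1}\otimes\rho_{N_2})$ then forces $f(\Delta_{\mathsf L}(w)(\text{generators}))=0$ in $\mathsf L^{(K)}\otimes\mathsf A^{(K)}(\!(w^{-1})\!)$. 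The gluing presentation of Theorem~\ref{thm: gluing construction of L} is respected automatically, since each $\rho_N$ factors through the gluing and the finite-$N$ maps are algebra maps.

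The main technical obstacle is the coefficient-by-coefficient verification for $\mathsf T_{1,0}(E^a_b)$ and $\mathsf t_{2,0}$: one must track two nested series (the Dunkl expansion of $(y_k-y_l)^{-s}$ in $w^{-1}$ and the binomial expansion on each factor) and match them against the uniform tensor formulas with the correct signs and binomial weights. However, this is essentially the same bookkeeping already carried out in \eqref{eqn: AA coproduct_finite N}, the only novelty being the presence of trigonometric variables $(x^{(1)}_k)^{\pm 1}$ on the first block; no new identity beyond those used for $\Delta_{\mathsf A}(w)$ is required.
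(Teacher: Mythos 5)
Your proposal is correct and follows essentially the same route as the paper: the paper also obtains $\Delta_{\mathsf L}(w)$ by restricting the geometrically defined finite-$N$ meromorphic coproducts $\Delta(w)_{N_1,N_2}$ to the spherical Cherednik algebras via Dunkl embeddings and then passing to the uniform-in-$N$ limit using the faithfulness of $\prod_N\rho_N$. The only difference is that you spell out the binomial/geometric-series bookkeeping that the paper leaves implicit in the phrase ``by taking the uniform-in-$N$ map.''
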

We shall call $\Delta_{\mathsf L}(w)$ the meromorphic coproduct on $\mathsf L^{(K)}$. It is obvious from the formulae \eqref{eqn: LL meromorphic coproduct} that the restriction of $\Delta_{\mathsf L}(w)$ to $\mathsf A^{(K)}$ equals to the meromorphic coproduct $\Delta_{\mathsf A}(w)$ defined in \eqref{eqn: AA coproduct}.

The locality of the meromorphic coproduct $\Delta_{\mathsf L}(w)$ is captured by the notion of a vertex comodule, which is recalled in Appendix \ref{sec: vertex coalgebras and vertex comodules}.

\begin{theorem}\label{thm: loop Yangian is a vertex comodule}
The pair $(\mathsf{L}^{(K)}, \Delta_{\mathsf L}(w))$ is a vertex comodule of the vertex coalgebra $(\mathsf A^{(K)},\Delta_{\mathsf A}(w),\mathfrak{C}_{\mathsf A})$ over the base ring $\mathbb C[\epsilon_1,\epsilon_2]$.
\end{theorem}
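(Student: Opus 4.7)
The plan is to mirror the strategy of Theorem \ref{thm: A^K is vertex coalgebra}. The meromorphic coproduct $\Delta_{\mathsf L}(w)$ is the uniform-in-$N$ limit of the finite-$N$ maps $\Delta(w)_{N_1,N_2}: \mathrm{S}\mathbb H^{(K)}_{N_1+N_2} \to \mathrm{S}\mathbb H^{(K)}_{N_1}\otimes \mathrm{S}\mathcal H^{(K)}_{N_2}(\!(w^{-1})\!)$, each of which is pulled back from a concrete rational map of configuration spaces, namely $\mathbb C^{\times N_1}_{\mathrm{disj}}\times \mathbb C^{N_2}_{\mathrm{disj}}\times \Spec \mathbb C(\!(w^{-1})\!)\to (\mathbb C^{\times})^{N_1+N_2}_{\mathrm{disj}}$ defined by \eqref{eqn: meromorphic coproduct finite N}. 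Since $\prod_N \rho_N$ is injective on $\mathsf L^{(K)}$ (Lemma \ref{lem: rho_N extends to loop Yangian}) and $\prod_N \rho_N$ is injective on $\mathsf A^{(K)}$ (Corollary \ref{cor: truncation of D}), each comodule axiom reduces to a statement at the level of Cherednik algebras, which in turn is a geometric statement about the configuration-space maps.

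For the counit axiom I would check $(\mathrm{id}\otimes \mathfrak{C}_{\mathsf A})\circ \Delta_{\mathsf L}(w)=S(w)$ directly on the generating set $\{\mathsf T_{0,n}(X),\mathsf t_{0,n}\}_{n\in \mathbb Z}\cup\{\mathsf T_{1,0}(X),\mathsf t_{2,0}\}$ using \eqref{eqn: LL meromorphic coproduct}; the terms with nontrivial second factor all involve $\mathsf T_{0,\ge 0}$ or $\mathsf t_{0,\ge 0}$ or $\mathsf t_{0,n}$ modes and so die under $\mathfrak{C}_{\mathsf A}$. Both sides are algebra maps, so the identity extends from generators to all of $\mathsf L^{(K)}$. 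The cocreation axiom follows from the observation that $\Delta_{\mathsf L}(w)|_{\mathsf A^{(K)}}=\Delta_{\mathsf A}(w)$ and from the cocreation axiom for $\mathsf A^{(K)}$ already proved in Theorem \ref{thm: A^K is vertex coalgebra}. Coassociativity amounts to the analog of Lemma \ref{lemma: Locality}: for a three-cluster decomposition $N=N_1+N_2+N_3$, both iterated expansions
\begin{equation*}
(\Delta(w)_{N_1,N_2}\otimes \mathrm{id})\circ \Delta(z)_{N_1+N_2,N_3}\quad\text{and}\quad (\mathrm{id}\otimes \Delta(z-w)_{N_2,N_3})\circ \Delta(w)_{N_1,N_2+N_3}
\end{equation*}
arise from the same geometric map in which the second cluster is translated by $w$ and the third by $z$; both present it as a power series in the appropriate common localization, and the two expansions therefore agree. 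Passing to the uniform-in-$N$ limit gives the desired coassociativity of $\Delta_{\mathsf L}(w)$ over the vertex coalgebra $\mathsf A^{(K)}$.

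The main obstacle will be the translation axiom. The translation operator on $\mathsf A^{(K)}$ was identified in the proof of Theorem \ref{thm: A^K is vertex coalgebra} with $T=\mathrm{ad}_{\mathsf t_{1,0}}$, and a natural candidate for the translation on $\mathsf L^{(K)}$ is the same derivation, which preserves $\mathsf L^{(K)}\subset \mathsf A^{(K)}(\!(w^{-1})\!)$ because $[\mathsf t_{1,0},\mathsf t_{0,-n}]=-n\,\mathsf t_{0,-n-1}$ and analogously for $\mathsf T_{0,-n}(X)$ by Definition \ref{def: the algebra L^K}. The axiom to check is
\begin{equation*}
\Delta_{\mathsf L}(w)\circ T - (T\otimes 1)\circ \Delta_{\mathsf L}(w)=\partial_w\Delta_{\mathsf L}(w),
\end{equation*}
or equivalently, since the comodule is over a vertex coalgebra, $(1\otimes T)\circ \Delta_{\mathsf L}(w)=\partial_w\Delta_{\mathsf L}(w)$. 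At finite $N$, $T$ is represented by $\sum_i \partial_{x_i}$ and the identity is the chain-rule statement $\partial_{x^{(2)}_j}(f(x^{(2)}_j+w))=\partial_w(f(x^{(2)}_j+w))$, extended to the formal Laurent expansion of $1/x^{(2)}_j=\sum_{n\ge 0}w^{-n-1}(-x^{(2)}_j)^n$ in \eqref{eqn: meromorphic coproduct finite N}; the only subtlety beyond the $\mathsf A^{(K)}$ case is the presence of invertible coordinates on the second factor, but the expansion of $1/x^{(2)}_j$ is compatible with $\partial/\partial x^{(2)}_j$ termwise, so the identity holds on each finite-$N$ quotient and transports to $\mathsf L^{(K)}$ by Lemma \ref{lem: rho_N extends to loop Yangian}.
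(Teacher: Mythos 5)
Your argument for the two axioms that actually constitute the definition of a vertex comodule --- counit and coassociativity (Appendix \ref{sec: vertex coalgebras and vertex comodules}) --- is correct, but you take a different route from the paper. The paper verifies both axioms by direct computation on the generators $\mathsf T_{0,n}(X)$, $\mathsf t_{0,n}$, $\mathsf t_{2,0}$ using the explicit formulae \eqref{eqn: LL meromorphic coproduct}, exhibiting the common element of which the two iterated coproducts are expansions, and then extends to all of $\mathsf L^{(K)}$ because both composites are algebra homomorphisms. You instead reduce to the finite-$N$ Cherednik algebras via the faithful family $\prod_N\rho_N$ (Lemma \ref{lem: rho_N extends to loop Yangian}, Corollary \ref{cor: truncation of D}) and invoke the geometric origin of \eqref{eqn: meromorphic coproduct finite N}, i.e.\ the analogue of Lemma \ref{lemma: Locality}. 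This is legitimate and is in fact the strategy the paper uses for the translation axiom in Theorem \ref{thm: A^K is vertex coalgebra}; what the paper's generator computation buys is that one never has to worry about the three-cluster expansion rings at finite $N$, while your route buys a conceptual explanation for why the identity holds. Note that your parametrization $(\Delta(w)_{N_1,N_2}\otimes\mathrm{id})\circ\Delta(z)_{N_1+N_2,N_3}$ versus $(\mathrm{id}\otimes\Delta(z-w)_{N_2,N_3})\circ\Delta(w)_{N_1,N_2+N_3}$ differs from the axiom's $(\mathrm{id}\otimes\Delta_{\mathsf A}(z))\circ\Delta_{\mathsf L}(w)$ versus $(\Delta_{\mathsf L}(w)\otimes\mathrm{id})\circ\Delta_{\mathsf L}(z+w)$ by the substitution $z\mapsto z-w$; this is harmless but you should say in which ring of expansions you are comparing the two sides.

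Two corrections. First, the counit axiom asserts $(\mathrm{id}\otimes\mathfrak C_{\mathsf A})\circ\Delta_{\mathsf L}(w)=\mathrm{id}$, not $=S(w)$: the left-hand side lands in $\mathsf L^{(K)}(\!(w^{-1})\!)$ whereas $S(w)$ lands in $\mathsf A^{(K)}(\!(w^{-1})\!)$, so your claimed identity does not typecheck. Your actual computation (all terms with a nontrivial second tensor factor die under $\mathfrak C_{\mathsf A}$, leaving $x\otimes 1$) proves the correct statement, so this is a mislabel rather than a gap. Second, roughly half of your proposal --- the cocreation axiom and the translation axiom, which you single out as ``the main obstacle'' --- verifies conditions that are \emph{not} part of the definition of a vertex comodule; they belong to the definition of a vertex coalgebra and were already established for $\mathsf A^{(K)}$ in Theorem \ref{thm: A^K is vertex coalgebra}. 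The identities you state there are plausibly true, but they are not required, and presenting them as the crux of the proof misidentifies what needs to be shown.
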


\begin{proof}
The counit axiom is easily checked for generators using \eqref{eqn: LL meromorphic coproduct}. The coassociativity axiom can be checked by direct computation for the generators $\mathsf T_{0,n}(X),\mathsf t_{0,n}$ and $\mathsf t_{2,0}$: $(\mathrm{id}\otimes\Delta_{\mathsf A}(z))\circ \Delta_{\mathsf L}(w)(\mathsf T_{0,n}(X))$ and $(\Delta_{\mathsf L}(w)\otimes \mathrm{id})\circ \Delta_{\mathsf L}(z+w)(\mathsf T_{0,n}(X))$ are the expansions of the same element
\begin{equation*}
\begin{split}
\mathsf T_{0,n}(X)\otimes 1\otimes 1+\sum_{m=0}^{\infty} \binom{n}{m} w^{n-m} 1\otimes\mathsf T_{0,m}(X)\otimes 1+\sum_{m=0}^{\infty} \binom{n}{m} (w+z)^{n-m} 1\otimes 1\otimes\mathsf T_{0,m}(X),
\end{split}
\end{equation*}
a similar equation holds for $\mathsf t_{0,n}$, and $(\mathrm{id}\otimes\Delta_{\mathsf A}(z))\circ \Delta_{\mathsf L}(w)(\mathsf t_{2,0})$ and $(\Delta_{\mathsf L}(w)\otimes \mathrm{id})\circ \Delta_{\mathsf L}(z+w)(\mathsf t_{2,0})$ are the expansions of the same element
\begin{equation*}
\begin{split}
\square(\mathsf t_{2,0})&-2\epsilon_1\sum_{m,n\ge 0}\frac{(m+n+1)!}{m!n!w^{n+m+2}}(-1)^m (\mathsf T_{0,n}(E^a_b)\otimes \mathsf T_{0,m}(E^b_a)\otimes 1+\epsilon_1\epsilon_2 \mathsf t_{0,n}\otimes \mathsf t_{0,m}\otimes 1)\\
&-2\epsilon_1\sum_{m,n\ge 0}\frac{(m+n+1)!}{m!n!z^{n+m+2}}(-1)^m (1\otimes\mathsf T_{0,n}(E^a_b)\otimes \mathsf T_{0,m}(E^b_a)+\epsilon_1\epsilon_2 1\otimes\mathsf t_{0,n}\otimes \mathsf t_{0,m})\\
&-2\epsilon_1\sum_{m,n\ge 0}\frac{(m+n+1)!}{m!n!(w+z)^{n+m+2}}(-1)^m (\mathsf T_{0,n}(E^a_b)\otimes 1\otimes \mathsf T_{0,m}(E^b_a)+\epsilon_1\epsilon_2 \mathsf t_{0,n}\otimes 1\otimes \mathsf t_{0,m}),
\end{split}
\end{equation*}
where $\square(x)=x\otimes 1\otimes 1+1\otimes x\otimes 1+1\otimes 1\otimes x$. Since both $(\mathrm{id}\otimes\Delta_{\mathsf A}(z))\circ \Delta_{\mathsf L}(w)$ and $(\Delta_{\mathsf L}(w)\otimes \mathrm{id})\circ \Delta_{\mathsf L}(z+w)$ are algebra homomorphisms and they are equal on the generators, thus they are equal on the whole algebra $\mathsf L^{(K)}$.
\end{proof}

\subsection{Coproduct of \texorpdfstring{$\mathsf L^{(K)}$}{Lk}}
Consider the natural embedding $1\otimes S(w): \mathsf L^{(K)}\otimes\mathsf L^{(K)}\hookrightarrow \mathsf L^{(K)}\otimes \mathsf A^{(K)}(\!(w^{-1})\!)$. We endow the $\mathbb C[\epsilon_1,\epsilon_2]$-algebras $\mathsf L^{(K)}$ and $\mathsf A^{(K)}$ the natural $\mathbb Z$-gradings induced by the $\mathrm{ad}_{\mathsf t_{1,1}}$ actions, in other words
\begin{align}
    \deg(\mathsf T_{n,m}(E^a_b))=\deg (\mathsf t_{n,m})=m-n,
\end{align}
and we let the degree of the formal parameter $w$ to be $1$, then it is easy to see that $1\otimes S(w)$ preserves the $\mathbb Z$-gradings. Moreover the map $1\otimes S(w)$ extends naturally to the completion (Definition \ref{def: completed tensor product}):
\begin{align}
    1\otimes S(w): \mathsf L^{(K)}\widetilde\otimes \mathsf L^{(K)}\hookrightarrow \mathsf L^{(K)}\otimes \mathsf A^{(K)}(\!(w^{-1})\!).
\end{align}
\begin{theorem}\label{thm: LL coproduct}
There is a $\mathbb C[\epsilon_1,\epsilon_2]$-algebra homomorphism $\Delta_{\mathsf L}:\mathsf L^{(K)}\to \mathsf L^{(K)}\widetilde\otimes \mathsf L^{(K)}$ which acts on generators as
\begin{equation}\label{eqn: LL coproduct}
\boxed{
\begin{aligned}
&\Delta_{\mathsf L}(\mathsf T_{0,n}(E^a_b))=\square(\mathsf T_{0,n}(E^a_b)),\quad (n\in \mathbb Z) \\
&\Delta_{\mathsf L}(\mathsf T_{1,0}(E^a_b))=\square(\mathsf T_{1,0}(E^a_b))+\epsilon_1\sum_{n= 0}^{\infty}(\mathsf T_{0,n}(E^c_b)\otimes \mathsf T_{0,-n-1}(E^a_c)-\mathsf T_{0,n}(E^a_c)\otimes \mathsf T_{0,-n-1}(E^c_b)),\\
&\Delta_{\mathsf L}(\mathsf t_{2,0})=\square(\mathsf t_{2,0})-2\epsilon_1\sum_{n= 0}^{\infty}(n+1)(\mathsf T_{0,n}(E^a_b)\otimes \mathsf T_{0,-n-2}(E^b_a)+\epsilon_1\epsilon_2 \mathsf t_{0,n}\otimes \mathsf t_{0,-n-2}),
    \end{aligned}}
\end{equation}
where $\square(X)=X\otimes 1+1\otimes X$. $\Delta_{\mathsf L}$ is counital: $(\mathfrak C_{\mathsf L}\otimes 1)\circ \Delta_{\mathsf L}=(1\otimes\mathfrak C_{\mathsf L})\circ \Delta_{\mathsf L}=\mathrm{id}$. Moreover $\Delta_{\mathsf L}$ is coassociative, i.e. the image of $(\Delta_{\mathsf L}\otimes 1)\circ \Delta_{\mathsf L}$ is contained in the intersection between $(\mathsf L^{(K)}\widetilde\otimes \mathsf L^{(K)})\widetilde\otimes \mathsf L^{(K)}$ and $\mathsf L^{(K)}\widetilde\otimes \mathsf L^{(K)}\widetilde\otimes \mathsf L^{(K)}$ in the sense of Lemma \ref{lem: compare different completions} and the image of $(1\otimes \Delta_{\mathsf L})\circ \Delta_{\mathsf L}$ is contained in the intersection between $\mathsf L^{(K)}\widetilde\otimes (\mathsf L^{(K)}\widetilde\otimes \mathsf L^{(K)})$ and $\mathsf L^{(K)}\widetilde\otimes \mathsf L^{(K)}\widetilde\otimes \mathsf L^{(K)}$ in the sense of Lemma \ref{lem: compare different completions}, and $(\Delta_{\mathsf L}\otimes 1)\circ \Delta_{\mathsf L}=(1\otimes \Delta_{\mathsf L})\circ \Delta_{\mathsf L}$. 
\end{theorem}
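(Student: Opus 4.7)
The plan is to bootstrap $\Delta_{\mathsf L}$ from the meromorphic coproduct $\Delta_{\mathsf L}(w)$ of the preceding proposition, using the shift embedding $S(w): \mathsf L^{(K)} \hookrightarrow \mathsf A^{(K)}(\!(w^{-1})\!)$ as the bridge. Concretely, I claim $\Delta_{\mathsf L}(w) = (1 \otimes S(w)) \circ \Delta_{\mathsf L}$ for a unique map $\Delta_{\mathsf L}: \mathsf L^{(K)} \to \mathsf L^{(K)} \widetilde\otimes \mathsf L^{(K)}$. Since $1 \otimes S(w)$ is injective, this $\Delta_{\mathsf L}$ will automatically be an algebra homomorphism by pulling back the algebra-hom property of $\Delta_{\mathsf L}(w)$, and moreover its explicit formulas on generators can be read off by inverting $S(w)$ term by term against \eqref{eqn: LL meromorphic coproduct}.

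The first key step is to show the factorization exists. The image of $1 \otimes S(w)$ inside $\mathsf L^{(K)} \otimes \mathsf A^{(K)}(\!(w^{-1})\!)$ is characterized by the vanishing of the derivation $D := 1 \otimes (\mathrm{ad}_{\mathsf t_{1,0}} - \partial_w)$, together with the grading constraint that fixes the expansion direction. It therefore suffices to verify $D \circ \Delta_{\mathsf L}(w) = 0$. Since $D$ is a derivation and $\Delta_{\mathsf L}(w)$ is an algebra map, I only need to check vanishing on the generating set $\{\mathsf T_{0,n}(E^a_b),\, \mathsf t_{0,n},\, \mathsf T_{1,0}(E^a_b),\, \mathsf t_{2,0},\, \mathsf t_{0,-1}\}$. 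On $\mathsf T_{0,n}$ this reduces to the identity $(n-m)\binom{n}{m} = (m+1)\binom{n}{m+1}$; on $\mathsf T_{1,0}$ and $\mathsf t_{2,0}$ it reduces similarly to a short calculation with $\binom{m+n}{n}$ and factorials, using that $\mathrm{ad}_{\mathsf t_{1,0}} \mathsf T_{0,n}(X) = n\mathsf T_{0,n-1}(X)$. A grading-degree count (with $\deg w = 1$) then confirms that the resulting image lies in the completed tensor product $\mathsf L^{(K)} \widetilde\otimes \mathsf L^{(K)}$.

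Matching the explicit formulas in \eqref{eqn: LL coproduct} with the pullback of \eqref{eqn: LL meromorphic coproduct} is a direct computation: using $S(w)(\mathsf T_{0,-n-1}(X)) = \sum_{m \ge 0}(-1)^m \binom{n+m}{m} w^{-n-1-m} \mathsf T_{0,m}(X)$, applying $1 \otimes S(w)$ to the proposed $\Delta_{\mathsf L}$ reproduces the coefficients in \eqref{eqn: LL meromorphic coproduct} term by term, via the identity $\tfrac{(n+m+1)!}{n!\,m!} = (n+1)\binom{n+m+1}{m}$ for the $\mathsf t_{2,0}$ case. Counitality $(\mathfrak C_{\mathsf L} \otimes 1)\circ \Delta_{\mathsf L} = (1\otimes \mathfrak C_{\mathsf L})\circ \Delta_{\mathsf L} = \mathrm{id}$ is then immediate from the explicit formulas, since $\mathfrak C_{\mathsf L}$ kills every generator and the correction terms beyond $\square(\cdot)$ in each formula contain at least one generator factor on each side of the tensor product.

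The main obstacle is coassociativity in the completed setting. I plan to deduce it from the coassociativity of the meromorphic coproduct established in Theorem \ref{thm: loop Yangian is a vertex comodule}, namely
\begin{equation*}
(1 \otimes \Delta_{\mathsf A}(z)) \circ \Delta_{\mathsf L}(w) = (\Delta_{\mathsf L}(w) \otimes 1) \circ \Delta_{\mathsf L}(z+w),
\end{equation*}
by threading both sides through the shift embeddings $1\otimes 1\otimes S(z)$ and $1 \otimes S(w)$ on one side, and the analogous chain on the other. The subtlety is that $(\mathsf L^{(K)} \widetilde\otimes \mathsf L^{(K)}) \widetilde\otimes \mathsf L^{(K)}$ and $\mathsf L^{(K)} \widetilde\otimes (\mathsf L^{(K)} \widetilde\otimes \mathsf L^{(K)})$ are genuinely different completions of $\mathsf L^{(K)\otimes 3}$, embedding differently into the natural full triple completion. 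Using the comparison results for completed tensor products in the Appendix together with the grading, one checks that both iterates $(\Delta_{\mathsf L}\otimes 1)\circ \Delta_{\mathsf L}$ and $(1\otimes \Delta_{\mathsf L})\circ \Delta_{\mathsf L}$ each land in the declared completion and, after identifying them inside the common enveloping completion via the meromorphic coassociativity above and the injectivity of the iterated shift embeddings, are forced to agree. Modulo that bookkeeping the coassociativity is essentially automatic.
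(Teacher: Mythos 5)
Your overall strategy --- defining $\Delta_{\mathsf L}$ by pulling the meromorphic coproduct $\Delta_{\mathsf L}(w)$ back through the injective embedding $1\otimes S(w)$ and reading off the generator formulas by inverting $S(w)$ --- is the paper's route, and your explicit coefficient checks (via $\binom{-n-1}{m}=(-1)^m\binom{n+m}{m}$ and $(n+1)\binom{n+m+1}{m}=\tfrac{(n+m+1)!}{n!\,m!}$) and the counit argument are correct. The genuine gap is in your ``first key step'': you assert that the image of $1\otimes S(w)$ inside $\mathsf L^{(K)}\otimes\mathsf A^{(K)}(\!(w^{-1})\!)$ is \emph{characterized} by the vanishing of $D=1\otimes(\mathrm{ad}_{\mathsf t_{1,0}}-\partial_w)$ together with a grading constraint, and you use this to conclude that $\Delta_{\mathsf L}(w)$ factors through $1\otimes S(w)$. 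The paper only proves the inclusion $S(w)(\mathsf L^{(K)})\subseteq\ker(\mathrm{ad}_{\mathsf t_{1,0}}-\partial_w)$, and the reverse inclusion should not be expected: the full ($\mathbb C^{\times}$-finite) annihilator of $\mathrm{ad}_{\mathsf t_{1,0}}-\partial_w$ is the mode space $\mathcal L(\mathsf A^{(K)})$ of the vertex coalgebra $\mathsf A^{(K)}$, and the paper explicitly introduces $\mathsf L^{(K)}$ as a \emph{certain subset} of $\mathcal L(\mathsf A^{(K)})$ --- namely the subalgebra generated by $S(w)(\mathsf A^{(K)})$ and the single extra mode $\mathsf t_{0,-1}$, not the whole solution space of the ODE. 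So verifying $D\circ\Delta_{\mathsf L}(w)=0$ does not, by itself, place the image of $\Delta_{\mathsf L}(w)$ inside $\mathrm{im}(1\otimes S(w))$, and the factorization on which you base the definition of $\Delta_{\mathsf L}$ is not established by that argument.

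Fortunately the step is unnecessary, and your own third paragraph supplies the repair, which is exactly what the paper does: take the boxed formulas as the \emph{definition} of $\Delta_{\mathsf L}$ on generators (their right-hand sides visibly lie in $\mathsf L^{(K)}\widetilde\otimes\mathsf L^{(K)}$), and verify by your direct computation that $(1\otimes S(w))\circ\Delta_{\mathsf L}$ agrees with $\Delta_{\mathsf L}(w)$ on those generators. Since $1\otimes S(w)$ is an injective algebra homomorphism and $\Delta_{\mathsf L}(w)$ is an algebra homomorphism, every defining relation of $\mathsf L^{(K)}$ is annihilated by the candidate $\Delta_{\mathsf L}$ (its image under $1\otimes S(w)$ vanishes), so $\Delta_{\mathsf L}$ extends to an algebra homomorphism; the factorization $\Delta_{\mathsf L}(w)=(1\otimes S(w))\circ\Delta_{\mathsf L}$ then holds on all of $\mathsf L^{(K)}$ a posteriori, because both sides are algebra maps agreeing on generators. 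For coassociativity, your plan of threading the vertex-comodule identity through iterated shift embeddings can be made to work but is more roundabout than needed: both iterates $(\Delta_{\mathsf L}\otimes 1)\circ\Delta_{\mathsf L}$ and $(1\otimes\Delta_{\mathsf L})\circ\Delta_{\mathsf L}$ are algebra homomorphisms whose values on the generators land in $\mathsf L^{(K)}\widetilde\otimes\mathsf L^{(K)}\widetilde\otimes\mathsf L^{(K)}$ and can be compared there directly from the boxed formulas, and Lemma \ref{lem: compare different completions} guarantees that the products in the various completions agree on their intersection, so agreement on generators suffices.
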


\begin{proof}
Notice that $(1\otimes S(w))\circ \Delta_{\mathsf L}$ agrees with $\Delta_{\mathsf L}(w)$ on the generators $\mathsf T_{0,n}(E^a_b)$ and $\mathsf t_{2,0}$, and that $\Delta_{\mathsf L}(w)$ is an algebra homomorphism, and that $1\otimes S(w)$ is injective, thus $\Delta_{\mathsf L}$ is an algebra homomorphism as well. The coassociativity and counity are checked using the formula \eqref{eqn: LL coproduct} and we omit the detail.
\end{proof}

The finite-$N$ counterpart of $\Delta_{\mathsf L}$ can be constructed in a similar way. The spherical $\gl_K$-extended trigonometric Cherednik algebra $\mathrm{S}\mathbb H^{(K)}_{N}$ has a natural grading such that $\deg x_i=1,\deg y_i=-1$, then the map $S_N(w):\mathrm{S}\mathbb H^{(K)}_{N}\hookrightarrow \mathrm{S}\mathcal H^{(K)}_{N}(\!(w^{-1})\!)$ preserves the grading, therefore we have natural embedding 
\begin{align}
    1\otimes S_N(w):  \mathrm{S}\mathbb H^{(K)}_{N_1}\widetilde\otimes  \mathrm{S}\mathbb H^{(K)}_{N_2}\hookrightarrow  \mathrm{S}\mathbb H^{(K)}_{N_1}\otimes  \mathrm{S}\mathcal H^{(K)}_{N}(\!(w^{-1})\!).
\end{align}
It is easy to see from the formula \eqref{eqn: AA coproduct_finite N} that the image of $\Delta(w)_{N_1,N_2}$ is contained in the image of $1\otimes S_N(w)$, thus we obtain an algebra homomorphism
\begin{align}
    \Delta_{N_1,N_2}: \mathrm{S}\mathbb H^{(K)}_{N_1+N_2}\to  \mathrm{S}\mathbb H^{(K)}_{N_1}\widetilde\otimes  \mathrm{S}\mathbb H^{(K)}_{N_2}.
\end{align}

\begin{proposition}\label{prop: L coproduct compatible with DAHA coproduct}
$\Delta_{\mathsf L}$ is compatible with $\Delta_{N_1,N_2}$ in the sense that 
\begin{equation}
   (\rho_{N_1}\otimes\rho_{N_2})\circ \Delta_{\mathsf L} =\Delta_{N_1,N_2}\circ \rho_{N_1+N_2}
\end{equation}
\end{proposition}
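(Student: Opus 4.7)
The plan is to reduce the claimed compatibility to the analogous meromorphic identity, and then conclude via the injectivity of a shift embedding. Both $(\rho_{N_1}\otimes\rho_{N_2})\circ \Delta_{\mathsf L}$ and $\Delta_{N_1,N_2}\circ \rho_{N_1+N_2}$ are $\mathbb{C}[\epsilon_1,\epsilon_2]$-algebra homomorphisms $\mathsf L^{(K)}\to \mathrm{S}\mathbb H^{(K)}_{N_1}\widetilde\otimes \mathrm{S}\mathbb H^{(K)}_{N_2}$, so it suffices to show that they coincide after postcomposition with the canonical embedding $1\otimes S_N(w):\mathrm{S}\mathbb H^{(K)}_{N_1}\widetilde\otimes \mathrm{S}\mathbb H^{(K)}_{N_2}\hookrightarrow \mathrm{S}\mathbb H^{(K)}_{N_1}\otimes \mathrm{S}\mathcal H^{(K)}_{N_2}(\!(w^{-1})\!)$, which is injective since $S_N(w)$ is.

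First I would establish the intertwining $\rho_N\circ S(w)=S_N(w)\circ \rho_N$. This is routine: under the Dunkl embedding of $\mathsf L^{(K)}$ into $\mathrm{S}\mathbb H^{(K)}_N$ provided by Lemma \ref{lem: rho_N extends to loop Yangian}, both sides implement the shift $x_i\mapsto x_i+w$ on the Cherednik variables, and both send $\mathsf t_{2,0}$ (which involves only $y_i$'s) to itself. Next, by Theorem \ref{thm: LL coproduct} we have $(1\otimes S(w))\circ \Delta_{\mathsf L}=\Delta_{\mathsf L}(w)$, and by the construction of $\Delta_{N_1,N_2}$ given immediately before the proposition, $(1\otimes S_N(w))\circ \Delta_{N_1,N_2}=\Delta(w)_{N_1,N_2}$ on $\mathrm{S}\mathbb H^{(K)}_{N_1+N_2}$. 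Assembling these with the shift intertwining gives
\begin{equation*}
(1\otimes S_N(w))\circ (\rho_{N_1}\otimes\rho_{N_2})\circ \Delta_{\mathsf L}
=(\rho_{N_1}\otimes\rho_{N_2})\circ (1\otimes S(w))\circ \Delta_{\mathsf L}
=(\rho_{N_1}\otimes\rho_{N_2})\circ \Delta_{\mathsf L}(w),
\end{equation*}
while $(1\otimes S_N(w))\circ \Delta_{N_1,N_2}\circ \rho_{N_1+N_2}=\Delta(w)_{N_1,N_2}\circ \rho_{N_1+N_2}$.

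Thus the entire proposition reduces to the meromorphic identity $(\rho_{N_1}\otimes\rho_{N_2})\circ \Delta_{\mathsf L}(w)=\Delta(w)_{N_1,N_2}\circ \rho_{N_1+N_2}$. Both maps are algebra homomorphisms, so I would verify the identity on the generating set $\{\mathsf T_{0,n}(X),\mathsf T_{1,0}(X),\mathsf t_{2,0}\:|\: n\in\mathbb Z,X\in\gl_K\}$. For $n\ge 0$ and for $\mathsf T_{1,0}(X)$ and $\mathsf t_{2,0}$ this is a direct match of the explicit formulas \eqref{eqn: LL meromorphic coproduct} against the Dunkl-side images produced by the geometric meromorphic map \eqref{eqn: meromorphic coproduct finite N}, and is already implicit in the construction of $\Delta_{\mathsf A}(w)$ from \eqref{eqn: AA coproduct_finite N}. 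For $n<0$, one uses the expansion $\frac{1}{x_i^{(2)}+w}=\sum_{m\ge 0}w^{-m-1}(-x_i^{(2)})^m$ built into \eqref{eqn: meromorphic coproduct finite N}, together with the identification $\rho_N(\mathsf T_{0,-n}(X))=\sum_i E^a_{b,i}x_i^{-n}$ from Lemma \ref{lem: rho_N extends to loop Yangian}, to match the $n<0$ terms in the first line of \eqref{eqn: LL meromorphic coproduct}.

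The main potential obstacle is the bookkeeping for the negative modes: the formulas in \eqref{eqn: LL meromorphic coproduct} define $\Delta_{\mathsf L}$ on $\mathsf T_{0,-n}(X)$ via $\square(\mathsf T_{0,-n}(X))$ (with $\mathsf T_{0,-n}(X)\in \mathsf L^{(K)}\setminus \mathsf A^{(K)}$), and one must check that the power-series expansions produced on the Cherednik side agree after applying $1\otimes S_N(w)$. This is precisely where the intertwining $\rho_N\circ S(w)=S_N(w)\circ \rho_N$ from the first step plays the decisive role, as it guarantees that the expansions of $x^{-1}$ appearing in $\mathsf L^{(K)}$ on one side and in the geometric shift on the other are carried out consistently. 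Beyond this point, the argument is formal, and the injectivity of $1\otimes S_N(w)$ (which reduces to that of $S_N(w)$ itself on each graded piece of the completed tensor product) completes the proof.
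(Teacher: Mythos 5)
Your proof is correct and follows exactly the route the paper intends: the paper's own proof is the one-line remark that the claim is ``clear from the construction of $\Delta_{\mathsf L}$ and $\Delta_{N_1,N_2}$,'' and what you have written is precisely that construction unpacked --- reduce to the meromorphic identity $(\rho_{N_1}\otimes\rho_{N_2})\circ\Delta_{\mathsf L}(w)=\Delta(w)_{N_1,N_2}\circ\rho_{N_1+N_2}$ (which is built into the uniform-in-$N$ definition of $\Delta_{\mathsf L}(w)$) via the injectivity of $1\otimes S_{N_2}(w)$ and the intertwining $S_{N_2}(w)\circ\rho_{N_2}=\rho_{N_2}\circ S(w)$. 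The only cosmetic issues are the subscript ($S_{N_2}(w)$ rather than $S_N(w)$) and that the generating set $\{\mathsf T_{0,n}(X),\mathsf T_{1,0}(X),\mathsf t_{2,0}\}$ strictly suffices only after the localizations used elsewhere in the paper, neither of which affects the argument.
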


\begin{proof}
This is clear from the construction of $\Delta_{\mathsf L}$, $\Delta_{N_1,N_2}$.
\end{proof}

Notice that the involution $\iota: \mathrm{S}\mathbb H^{(K)}_{N}\cong \mathrm{S}\mathbb H^{(K)}_{N}$ intertwines between $\Delta_{N_1,N_2}$ and $\Delta_{N_2,N_1}$, i.e. 
\begin{equation}
    P\circ(\iota\otimes\iota)\circ \Delta_{N_1,N_2}= \Delta_{N_2,N_1}\circ \iota,
\end{equation}
where $P$ is the operator swapping two tensor components. Taking the uniform-in-$N$ version of the above commutative diagram, we see that the involution $\iota: \mathsf L^{(K)}\cong \mathsf L^{(K)}$ intertwines between $\Delta_{\mathsf L}$ and $\Delta_{\mathsf L}^{\mathrm{op}}$, i.e. 
\begin{equation}
    P\circ(\iota\otimes\iota)\circ \Delta_{\mathsf L}= \Delta_{\mathsf L}\circ \iota.
\end{equation}



\section{The Algebra \texorpdfstring{$\mathsf Y^{(K)}$}{Yk} and Coproducts}\label{sec: Y^K and coproducts}
\begin{definition}\label{def: the algebra Y^(K)}
$\mathsf Y^{(K)}$ is the $\mathbb C[\epsilon_1,\epsilon_2,\mathbf c]$-algebra generated by $\mathsf A^{(K)}_+$ and $\mathsf A^{(K)}_-$, both are isomorphic to $\mathsf A^{(K)}$ (whose generators are indicated by superscripts $+$ or $-$ respectively), with relations
\begin{equation}\label{eqn: gluing relations of Y}
\begin{split}
\mathsf t^-_{0,0}= \mathsf t^+_{0,0},\quad \mathsf t^-_{1,2}= -\mathsf t^+_{1,0}-\epsilon_1\epsilon_2\epsilon_3\mathsf t^-_{0,1}\mathbf c,&\quad \mathsf t^-_{1,0}= -\mathsf t^+_{1,2}+\epsilon_1\epsilon_2\epsilon_3\mathsf t^+_{0,1}\mathbf c,\quad \mathsf t^-_{2,2}=\mathsf t^+_{2,2},\\
\mathsf T^-_{0,0}(X)= \mathsf T^+_{0,0}(X),&\quad \mathsf T^-_{1,1}(X)= -\mathsf T^+_{1,1}(X),\\
[\mathsf t^-_{0,1}, \mathsf t^+_{0,1}]=K\mathbf c,&\quad [\mathsf t^-_{0,1}, \mathsf T^+_{0,1}(X)]=0,
\end{split}
\end{equation}
for all $X\in \mathfrak{sl}_K$. We define $\mathfrak{Y}^{(K)}$ to be the $\mathbb C[\epsilon_1,\epsilon_2,\mathbf c]$-subalgebra of $\mathsf Y^{(K)}$ generated by $\mathsf D^{(K)}_+$ and $\mathsf D^{(K)}_-$,
\end{definition}

The following is a list of immediate consequences derived from the definition of $\mathsf Y^{(K)}$.
\begin{itemize}
    \item[(1)] $\mathsf Y^{(K)}$ is a deformation of $\mathsf L^{(K)}$ over the base ring $\mathbb C[\mathbf c]$ (compare with Theorem \ref{thm: gluing construction of L}), i.e. $\mathsf Y^{(K)}/(\mathbf c=0)\cong \mathsf L^{(K)}$.
    \item[(2)] The duality automorphisms $\sigma: \mathsf A^{(K)}_\pm\cong \mathsf A^{(K)}_\pm$ (Proposition \ref{prop: duality for A}) glue to an automorphism $\sigma: \mathsf Y^{(K)}\cong \mathsf Y^{(K)}$.
    \item[(3)] There is a natural $\mathbb C[\epsilon_1,\epsilon_2]$-algebra involution $\iota: \mathsf Y^{(K)}\cong \mathsf Y^{(K)}$ such that
    \begin{align}
        \iota(\mathsf t^\pm_{n,m})=\mathsf t^\mp_{n,m},\quad \iota(\mathsf T^\pm_{n,m}(X))=\mathsf T^\mp_{n,m}(X),\quad\iota(\mathbf c)=-\mathbf c,
    \end{align}
    for all $X\in \gl_K$.
\end{itemize}

Similar to the discussions in the proof of Theorem \ref{thm: gluing construction of L}, we can re-define the generators of $\mathsf Y^{(K)}$. Namely, define $\mathbf T_{n,m}(X):=\mathbf T^+_{n,m}(X)$ and $\mathbf t_{n,m}:=\mathbf t^+_{n,m}$ (and also $\mathsf T_{n,m}(X):=\mathsf T^+_{n,m}(X)$ and $\mathsf t_{n,m}:=\mathsf t^+_{n,m}$) for $(n,m)\in \mathbb N^2$ and $X\in \mathfrak{gl}_K$; then define $\mathbf T_{n,m}(X):=(-1)^n\mathbf T^-_{n,2n-m}(X)$ and $\mathbf t_{n,m}:=(-1)^n\mathbf t^-_{n,2n-m}$, for $(n,m)\in \mathbb N\times \mathbb Z_{<0}$ and $X\in \mathfrak{gl}_K$; and also define $\mathsf T_{0,n}(X):=\mathbf T_{0,n}(X)$ and $\mathsf t_{0,n}:=\mathbf t_{0,n}$ for all $n\in \mathbb Z_{<0}$ and $X\in \gl_K$. Then the subalgebra $\mathfrak{Y}^{(K)}$ is generated over $\mathbb C[\epsilon_1,\epsilon_2,\mathbf c]$ by $\{\mathbf T_{n,m}(X)\:|\: X\in\gl_K,(n,m)\in\mathbb N\times\mathbb Z\}$.

\begin{lemma}\label{lem: A_+ and t[0,-1] generates Y}
$\mathsf Y^{(K)}$ is generated over $\mathbb C[\epsilon_1,\epsilon_2]$ by $\mathsf A^{(K)}_+$ and $\mathsf t_{0,-1}$.
\end{lemma}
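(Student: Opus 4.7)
Let $\mathsf Y'$ denote the $\mathbb C[\epsilon_1,\epsilon_2]$-subalgebra of $\mathsf Y^{(K)}$ generated by $\mathsf A^{(K)}_+$ and $\mathsf t_{0,-1}$. The plan is to show $\mathsf Y^{(K)}=\mathsf Y'$ by exhibiting each defining generator of $\mathsf A^{(K)}_-$ (in the sense of Definition \ref{def: A^K}), as well as the central element $\mathbf c$, as an explicit iterated commutator of elements already placed in $\mathsf Y'$. Under the identification in the paragraph following Definition \ref{def: the algebra Y^(K)} we have $\mathsf t_{0,-1}=\mathsf t^-_{0,1}$, so the cross relation $[\mathsf t^-_{0,1},\mathsf t^+_{0,1}]=K\mathbf c$ gives $\mathbf c\in\mathsf Y'$ immediately. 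The remaining relations of \eqref{eqn: gluing relations of Y} then place $\mathsf t^-_{0,0}$, $\mathsf T^-_{0,0}(X)$, $\mathsf T^-_{1,1}(X)$ and $\mathsf t^-_{2,2}$ in $\mathsf Y'$ trivially, while the two relations involving $\epsilon_1\epsilon_2\epsilon_3$ solve for $\mathsf t^-_{1,0}$ and $\mathsf t^-_{1,2}$ in terms of $\mathsf t^+_{1,0},\mathsf t^+_{1,2},\mathsf t^\pm_{0,1}$ and $\mathbf c$, so these too belong to $\mathsf Y'$.

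Next I would run a short $\mathfrak{sl}_2$-type ladder inside $\mathsf A^{(K)}_-$ using only \eqref{eqn: A2}: from $[\mathsf t^-_{1,0},\mathsf t^-_{2,2}]=2\mathsf t^-_{2,1}$ and $[\mathsf t^-_{1,0},\mathsf t^-_{2,1}]=\mathsf t^-_{2,0}$ one gets $\mathsf t^-_{2,0}\in\mathsf Y'$. I then generate all ``boundary'' generators $\mathsf t^-_{0,n}$ and $\mathsf T^-_{0,n}(X)$. The base cases are handled by \eqref{eqn: A2}: $\mathsf t^-_{0,0}$ and $\mathsf t^-_{0,1}$ are already in $\mathsf Y'$, and $[\mathsf t^-_{0,1},\mathsf T^-_{1,1}(X)]=-\mathsf T^-_{0,1}(X)$ places $\mathsf T^-_{0,1}(X)$ into $\mathsf Y'$. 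To raise the second index, I apply the $\tau$-automorphism \eqref{eqn: automorphism tau} to the auxiliary identities \eqref{eqn: A5}. Since $\tau(\mathsf t_{2,1})=-\mathsf t_{1,2}$, $\tau(\mathsf t_{n,0})=\mathsf t_{0,n}$ and $\tau(\mathsf T_{n,0}(X))=\mathsf T_{0,n}(X)$, one obtains the ``dual ladder''
\begin{equation*}
[\mathsf t_{1,2},\mathsf t_{0,n}]=n\,\mathsf t_{0,n+1}, \qquad [\mathsf t_{1,2},\mathsf T_{0,n}(X)]=n\,\mathsf T_{0,n+1}(X)
\end{equation*}
inside $\mathsf A^{(K)}$. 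Specializing these to $\mathsf A^{(K)}_-$ and iterating $\mathrm{ad}_{\mathsf t^-_{1,2}}$ starting from the base cases yields $\mathsf t^-_{0,n},\mathsf T^-_{0,n}(X)\in\mathsf Y'$ for every $n\ge 0$ and every $X\in\gl_K$.

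Finally, iterating $[\mathsf t_{2,0},\mathsf t_{n,m}]=2m\,\mathsf t_{n+1,m-1}$ from \eqref{eqn: A2} produces the closed formulas
\begin{equation*}
\mathsf t^-_{n,m}=\frac{m!}{2^n(n+m)!}\,\mathrm{ad}^n_{\mathsf t^-_{2,0}}(\mathsf t^-_{0,n+m}), \qquad \mathsf T^-_{n,m}(X)=\frac{m!}{2^n(n+m)!}\,\mathrm{ad}^n_{\mathsf t^-_{2,0}}(\mathsf T^-_{0,n+m}(X))
\end{equation*}
inside $\mathsf A^{(K)}_-$, so every defining generator of $\mathsf A^{(K)}_-$ lies in $\mathsf Y'$. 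Together with $\mathbf c\in\mathsf Y'$ and the presentation of $\mathsf Y^{(K)}$ in Definition \ref{def: the algebra Y^(K)}, this establishes $\mathsf Y^{(K)}=\mathsf Y'$.

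I expect the main obstacle to be the bootstrapping step in the second paragraph: the presenting relations \eqref{eqn: A1}--\eqref{eqn: A4} do not directly supply a commutator that raises the second index of $\mathsf t_{0,n}$ or $\mathsf T_{0,n}(X)$, since the natural candidate $[\mathsf t_{0,n},\mathsf t_{1,2}]$ lies outside the $p+q\le 2$ clause of \eqref{eqn: A2}. Exploiting the $\SL_2$-symmetry \eqref{eqn: sl2 action} to transport the auxiliary identity \eqref{eqn: A5} into this missing relation is what unlocks the induction, and it is the only place where a nontrivial structural feature of $\mathsf A^{(K)}$ is required.
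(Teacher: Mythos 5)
Your proof is correct and follows the same route as the paper's: express the gluing generators of $\mathsf A^{(K)}_-$ through the relations \eqref{eqn: gluing relations of Y} in terms of $\mathsf A^{(K)}_+$, $\mathsf t^-_{0,1}=\mathsf t_{0,-1}$ and $\mathbf c=\frac{1}{K}[\mathsf t^-_{0,1},\mathsf t^+_{0,1}]$. The only difference is that the paper simply asserts that $\mathsf t^-_{0,0},\mathsf t^-_{0,1},\mathsf t^-_{1,0},\mathsf t^-_{1,2},\mathsf t^-_{2,2},\mathsf T^-_{0,0}(X),\mathsf T^-_{1,1}(X)$ generate $\mathsf A^{(K)}_-$, whereas you verify this finite-generation claim explicitly via \eqref{eqn: A2}, the $\tau$-transported form of \eqref{eqn: A5}, and the $\mathrm{ad}_{\mathsf t_{2,0}}$ ladder — a worthwhile addition, but not a different argument.
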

\begin{proof}
$\mathsf A^{(K)}_-$ is generated over $\mathbb C[\epsilon_1,\epsilon_2]$ by $\mathsf t^-_{0,0},\mathsf t^-_{0,1},\mathsf t^-_{1,0},\mathsf t^-_{1,2},\mathsf t^-_{2,2}$ and $\mathsf T^-_{0,0}(X),\mathsf T^-_{1,1}(X)$ for $X\in \mathfrak{sl}_K$. The gluing relations \eqref{eqn: gluing relations of Y} relates the above set of generators except $\mathsf t^-_{0,1}$ to linear combinations of elements of $\mathsf A^{(K)}_+$ and $\mathsf t^-_{0,1}$ and $\mathbf c$. Since $\mathbf c$ can be generated from the commutator between $\mathsf t^\pm _{0,1}$, this proves the lemma.
\end{proof}

Similar inductive argument to that of the proof of Theorem \ref{thm: gluing construction of L} shows that $\mathsf T_{0,n}(X),\mathsf t_{0,n}$ satisfy the affine Lie algebra commutation relations\footnote{We illustrate the derivation of \eqref{eqn: affine Lie algebra} by one example. Using the identity $[\mathsf T^-_{1,1}(X),\mathsf t^-_{0,1}]=\mathsf T^-_{0,1}(X)$, we get $[\mathsf T_{0,-1}(X),\mathsf T_{0,1}(Y)]=[\mathsf t_{0,-1},[\mathsf T_{1,1}(X),\mathsf T_{0,1}(Y)]]-[\mathsf T_{1,1}(X),[\mathsf t_{0,-1},\mathsf T_{0,1}(Y)]]$, which equals to $[\mathsf t_{0,-1},\mathsf T_{1,2}([X,Y])-\frac{\epsilon_3}{2}\mathsf T_{0,1}(\{X,Y\})]=\mathsf T_{0,0}([X,Y])-\epsilon_2\epsilon_3\mathrm{tr}(XY)\mathbf c$.} 
\begin{equation}\label{eqn: affine Lie algebra}
\begin{split}
[\mathsf T_{0,n}(X),\mathsf T_{0,m}(Y)]&=\mathsf T_{0,m+n}([X,Y]])+n\epsilon_2\delta_{n,-m}\kappa_{\epsilon_3,\epsilon_1}(X,Y)\mathbf c,\\
[\mathsf t_{0,n},\mathsf t_{0,m}]&=-nK\delta_{n,-m}\mathbf c,
\end{split}
\end{equation}
where $\kappa_{\epsilon_3,\epsilon_1}$ is the inner product $\kappa_{\epsilon_3,\epsilon_1}(E^a_b,E^c_d)=\epsilon_3\delta^a_d\delta^c_b+\epsilon_1\delta^a_b\delta^c_d$.

\subsection{Map \texorpdfstring{$\mathsf Y^{(K)}$}{Yk} to the mode algebra of \texorpdfstring{$\mathcal W^{(K)}_{\infty}$}{Wk inf}}

\begin{proposition}\label{prop: map Y^K to modes of W(infinity)}
The map $\Psi_{\infty}: \mathsf A^{(K)}\to \mathfrak U(\mathsf W^{(K)}_{\infty})[\epsilon_2^{-1}]$ naturally extends to a $\mathbb C[\epsilon_1,\epsilon_2]$ algebra homomorphism $\Psi_{\infty}: \mathsf Y^{(K)}\to \mathfrak U(\mathsf W^{(K)}_{\infty})[\epsilon_2^{-1}]$ such that
\begin{align*}
    \Psi_{\infty}(\mathsf T_{0,-n}(E^a_b))=\mathds W^{a(1)}_{b,-n},\quad \Psi_{\infty}(\mathsf t_{0,-n})=\frac{1}{\epsilon_2}\mathds W^{a(1)}_{a,-n},\quad \Psi_{\infty}(\mathbf c)=\frac{\mathsf c}{\epsilon_2},
\end{align*}
for all $n\in \mathbb N$.
\end{proposition}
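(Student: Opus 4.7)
The algebra $\mathsf Y^{(K)}$ is generated by two copies $\mathsf A^{(K)}_\pm$ of $\mathsf A^{(K)}$ subject to the gluing relations \eqref{eqn: gluing relations of Y} and a central element $\mathbf c$. By the universal property of this presentation, to construct the desired extension $\Psi_\infty:\mathsf Y^{(K)}\to \mathfrak U(\mathsf W^{(K)}_{\infty})[\epsilon_2^{-1}]$ it suffices to produce two $\mathbb C[\epsilon_1,\epsilon_2]$-algebra maps $\mathsf A^{(K)}_\pm \to \mathfrak U(\mathsf W^{(K)}_{\infty})[\epsilon_2^{-1}]$ together with a choice of image for $\mathbf c$ for which all relations in \eqref{eqn: gluing relations of Y} hold. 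The plan is to send $\mathbf c\mapsto \mathsf c/\epsilon_2$, to use the already-constructed $\Psi_\infty$ of \eqref{eqn: map to W(infinity)} on the $\mathsf A^{(K)}_+$-copy, and to use $\Psi^-_\infty:=\mathfrak{s}_\infty\circ\Psi_\infty\circ \mathfrak{s}_{\mathsf A}$ from Definition~\ref{def: Psi_infinity minus} on the $\mathsf A^{(K)}_-$-copy (noting that $\Psi^-_\infty$ is already an algebra homomorphism, being the composition of an algebra map with two anti-involutions). The images of the negative-index generators claimed in the proposition then follow immediately from the identification $\mathsf T_{0,-n}(X)=\mathsf T^-_{0,n}(X)$, $\mathsf t_{0,-n}=\mathsf t^-_{0,n}$ recorded after Definition~\ref{def: the algebra Y^(K)} together with the first line of \eqref{eqn: Psi_infinity minus on generators}.

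What remains is to verify the seven gluing identities in \eqref{eqn: gluing relations of Y}. Five of them are direct consequences of the explicit formulas in \eqref{eqn: Psi_infinity minus on generators}: namely $\Psi^-_\infty(\mathsf t_{0,0})=\Psi_\infty(\mathsf t_{0,0})$ and $\Psi^-_\infty(\mathsf T_{0,0}(X))=\Psi_\infty(\mathsf T_{0,0}(X))$ are obvious, $\Psi^-_\infty(\mathsf T_{1,1}(X))=-\Psi_\infty(\mathsf T_{1,1}(X))$ and $\Psi^-_\infty(\mathsf t_{2,2})=\Psi_\infty(\mathsf t_{2,2})$ are listed verbatim, and the two relations involving $\mathsf t^-_{1,0}$ and $\mathsf t^-_{1,2}$ unpack directly using $\Psi^-_\infty(\mathsf t_{0,1})=\tfrac{1}{\epsilon_2}\mathds W^{a(1)}_{a,-1}$: for instance $\Psi^-_\infty(\mathsf t_{1,2})+\Psi_\infty(\mathsf t_{1,0})=-\tfrac{\epsilon_1\epsilon_3\mathsf c}{\epsilon_2}\mathds W^{a(1)}_{a,-1}=-\epsilon_1\epsilon_2\epsilon_3\,\Psi^-_\infty(\mathsf t_{0,1})\cdot\tfrac{\mathsf c}{\epsilon_2}$, as required.

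The two remaining relations are commutator identities in the affine Kac-Moody subalgebra generated by the spin-one fields $\mathds W^{a(1)}_b(z)$, which by Theorem~\ref{thm: matrix extended W(infty)}(1) realises $V^{\kappa_{\epsilon_3\mathsf c,\epsilon_1\mathsf c}}(\gl_K)$. For $X\in\mathfrak{sl}_K$ we have $\kappa_{\epsilon_3\mathsf c,\epsilon_1\mathsf c}(I,X)=0$, hence $[\Psi^-_\infty(\mathsf t_{0,1}),\Psi_\infty(\mathsf T_{0,1}(X))]=\tfrac{1}{\epsilon_2}[\mathds W^{c(1)}_{c,-1},\mathds W^{a(1)}_{b,1}]=0$; while $\kappa_{\epsilon_3\mathsf c,\epsilon_1\mathsf c}(I,I)=K\mathsf c(\epsilon_3+K\epsilon_1)=-K\mathsf c\epsilon_2$ using $K\epsilon_1+\epsilon_2+\epsilon_3=0$, giving $[\Psi^-_\infty(\mathsf t_{0,1}),\Psi_\infty(\mathsf t_{0,1})]=\tfrac{1}{\epsilon_2^2}\cdot K\mathsf c\epsilon_2=K\cdot\tfrac{\mathsf c}{\epsilon_2}=K\,\Psi_\infty(\mathbf c)$, matching the last line of \eqref{eqn: gluing relations of Y}.

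The only substantive step hidden in the above is the identity $\Psi^-_\infty(\mathsf t_{2,2})=\Psi_\infty(\mathsf t_{2,2})$, whose verification is essentially the computation \eqref{eqn: Psi^-(t[2,2])}. The main obstacle, if any, is thus not in the extension itself but in having the correct anti-involution $\mathfrak{s}_\infty$ of $\mathfrak{U}(\mathsf W^{(K)}_\infty)$ available; this was supplied by Lemma~\ref{lem: s_inf}. Once all seven gluing relations are checked, the universal property gives the claimed algebra homomorphism $\Psi_\infty:\mathsf Y^{(K)}\to \mathfrak U(\mathsf W^{(K)}_{\infty})[\epsilon_2^{-1}]$ and the proposition follows.
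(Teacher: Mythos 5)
Your proposal is correct and follows essentially the same route as the paper's own proof: both glue $\Psi_\infty$ on $\mathsf A^{(K)}_+$ with $\Psi^-_\infty=\mathfrak{s}_\infty\circ\Psi_\infty\circ\mathfrak{s}_{\mathsf A}$ on $\mathsf A^{(K)}_-$, send $\mathbf c\mapsto\mathsf c/\epsilon_2$, and verify the relations \eqref{eqn: gluing relations of Y} by comparing with \eqref{eqn: Psi_infinity minus on generators}. You merely spell out the checks (in particular the two commutators via the level $\kappa_{\epsilon_3\mathsf c,\epsilon_1\mathsf c}$) that the paper dismisses as ``straightforward to see,'' and your computations are accurate.
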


\begin{proof}
Consider the algebra homomorphism $\Psi^-_{\infty}:\mathsf A^{(K)}\to \mathfrak U(\mathsf W^{(K)}_{\infty})[\epsilon_2^{-1}]$ in Definition \ref{def: Psi_infinity minus}. Comparing \eqref{eqn: Psi_infinity minus on generators} with the Definition \ref{def: the algebra Y^(K)}, it is straightforward to see that $\Psi^-_{\infty}:\mathsf A^{(K)}_-\to \mathfrak U(\mathsf W^{(K)}_{\infty})[\epsilon_2^{-1}]$ and $\Psi_{\infty}:\mathsf A^{(K)}_+\to \mathfrak U(\mathsf W^{(K)}_{\infty})[\epsilon_2^{-1}]$ glue via the relations \eqref{eqn: gluing relations of Y} to an algebra homomorphism $\Psi_{\infty}: \mathsf Y^{(K)}\to \mathfrak U(\mathsf W^{(K)}_{\infty})[\epsilon_2^{-1}]$, such that \begin{align*}
    \Psi_{\infty}(\mathsf T_{0,-n}(E^a_b))=\mathds W^{a(1)}_{b,-n},\quad \Psi_{\infty}(\mathsf t_{0,-n})=\frac{1}{\epsilon_2}\mathds W^{a(1)}_{b,-n},\quad \Psi_{\infty}(\mathbf c)=\frac{\mathsf c}{\epsilon_2},
\end{align*}
for $n\in \mathbb N$.
\end{proof}

\begin{proposition}\label{prop: Psi_inf has dense image}
$\Psi_\infty(\mathsf Y^{(K)})$ is dense in $\mathfrak U(\mathsf W^{(K)}_{\infty})[\epsilon_2^{-1}]$. $\Psi_\infty(\mathfrak Y^{(K)})$ is dense in $\mathfrak U(\mathsf W^{(K)}_{\infty})$.
\end{proposition}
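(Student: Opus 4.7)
The plan is to establish that every mode $\mathds W^{a(s)}_{b,n}$ of the strong generators of $\mathsf W^{(K)}_\infty$ lies in the topological closure of $\Psi_\infty(\mathfrak Y^{(K)})$ inside $\mathfrak U(\mathsf W^{(K)}_\infty)$, by induction on $s$. Since these modes topologically generate $\mathfrak U(\mathsf W^{(K)}_\infty)$ by its very construction as the mode algebra of a strongly generated vertex algebra, this will imply both density statements at once; the $\epsilon_2^{-1}$ version follows a fortiori from the unlocalized one since $\mathfrak Y^{(K)}\subset\mathsf Y^{(K)}$.

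The first step is to extend Proposition \ref{prop: Psi_inf filtered} uniformly to $\mathsf Y^{(K)}$: for every $(n,m)\in\mathbb N\times\mathbb Z$ and $X=E^a_b\in\gl_K$,
\[
    \Psi_\infty(\mathbf T_{n,m}(X)) \equiv (-1)^n \mathds W^{a(n+1)}_{b,m-n} \pmod{V_{n-1}\mathfrak U(\mathsf W^{(K)}_\infty)}.
\]
For $m\ge 0$ this is Proposition \ref{prop: Psi_inf filtered}. For $m<0$, use the identification $\mathbf T_{n,m}(X) = (-1)^n \mathbf T^-_{n,2n-m}(X)$ inside $\mathsf Y^{(K)}$ and the fact, from the proof of Proposition \ref{prop: map Y^K to modes of W(infinity)}, that $\Psi_\infty$ restricted to $\mathsf A^{(K)}_-$ equals $\Psi^-_\infty$ of Definition \ref{def: Psi_infinity minus}. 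Since $\Psi^-_\infty=\mathfrak s_\infty\circ\Psi_\infty\circ\mathfrak s_{\mathsf A}$ and $\mathfrak s_{\mathsf A}$ trivially preserves the vertical filtration, it remains to observe that $\mathfrak s_\infty$ preserves $V_\bullet\mathfrak U(\mathsf W^{(K)}_\infty)$: the explicit formula of Lemma \ref{lem: s_inf} expresses $\mathfrak s_\infty(\mathds W^{a(r)}_{b,n})$ as $\mathds W^{b(r)}_{a,-n}$ plus a $\mathbb C[\epsilon_1,\epsilon_2,\mathsf c]$-linear combination of modes of $\mathds W^{(r')}$ with $r'<r$, which all lie in $V_{r-2}\subset V_{r-1}$.

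With this filtered formula in hand, I would run the induction on $s\ge 1$. The base case $s=1$ is immediate from Proposition \ref{prop: map Y^K to modes of W(infinity)}: $\mathds W^{a(1)}_{b,n}=\Psi_\infty(\mathbf T_{0,n}(E^a_b))$ for every $n\in\mathbb Z$, and $\mathbf T_{0,n}(E^a_b)\in\mathfrak Y^{(K)}$ since it sits in $\mathsf D^{(K)}_+$ for $n\ge 0$ and in $\mathsf D^{(K)}_-$ for $n<0$. For the inductive step, assume $\mathds W^{a(s')}_{b,n}\in\overline{\Psi_\infty(\mathfrak Y^{(K)})}$ for all $s'\le s$ and all $n\in\mathbb Z$. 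The description $\mathrm{gr}_V\mathsf W^{(K)}_\infty\cong V^{\epsilon_3\mathsf c,\epsilon_1\mathsf c}(\gl_K[z])$ from Section \ref{subsec: vertical filtration on W_inf} implies that $V_{s-1}\mathfrak U(\mathsf W^{(K)}_\infty)$ is the closed subalgebra of $\mathfrak U(\mathsf W^{(K)}_\infty)$ topologically generated by the modes $\mathds W^{(s')}_{b,n}$ with $s'\le s$; by induction this is contained in $\overline{\Psi_\infty(\mathfrak Y^{(K)})}$. The filtered formula applied to $\mathbf T_{s,n+s}(E^a_b)\in\mathfrak Y^{(K)}$ now yields
\[
    (-1)^s\mathds W^{a(s+1)}_{b,n} = \Psi_\infty(\mathbf T_{s,n+s}(E^a_b)) - Q,
\]
with $Q\in V_{s-1}\mathfrak U(\mathsf W^{(K)}_\infty)$. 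Both summands on the right lie in $\overline{\Psi_\infty(\mathfrak Y^{(K)})}$, completing the induction.

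The main obstacle is the claim $V_{s-1}\mathfrak U(\mathsf W^{(K)}_\infty)\subseteq\overline{\Psi_\infty(\mathfrak Y^{(K)})}$, which requires upgrading the associated-graded description of the vertical filtration from the restricted mode algebra $U(\mathsf W^{(K)}_\infty)$ to the completion $\mathfrak U$; one must check that the natural topology on $\mathfrak U$ restricts appropriately to $V_{s-1}\mathfrak U$ so that convergent sums of elements of spin $\le s$ genuinely sit there. Should this direct argument prove delicate, an alternative is to first establish density at each finite $L$ by the same inductive scheme applied to $\pi_L\circ\Psi_\infty$ (using that $\mathfrak U(\mathcal W^{(K)}_L)[\bar\alpha^{-1}]$ is topologically generated by modes of $W^{(1)}$ and $W^{(2)}$ by \cite{ueda2022affine}, each of which lies in $\pi_L\circ\Psi_\infty(\mathfrak Y^{(K)})$ via our filtered formula), and then to transfer the result to the inverse limit through the truncation maps $\pi_L$.
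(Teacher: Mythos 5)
Your proof follows essentially the same route as the paper's: an induction along the vertical filtration, using Proposition \ref{prop: Psi_inf filtered} to capture the modes $\mathds W^{a(s+1)}_{b,m}$ with $m\ge -s$, the anti-involution $\mathfrak s_\infty$ of Lemma \ref{lem: s_inf} to reach the remaining negative modes, and the fact that these modes topologically generate $\mathfrak U(\mathsf W^{(K)}_{\infty})$ to conclude. The only cosmetic differences are that you transport the filtered congruence to negative modes through $\Psi^-_\infty=\mathfrak s_\infty\circ\Psi_\infty\circ\mathfrak s_{\mathsf A}$ at the level of the generators $\mathbf T_{n,m}$, whereas the paper applies $\mathfrak s_\infty$ directly to the modes already shown to lie in the closure, and that the topological point you flag about $V_{s-1}\mathfrak U(\mathsf W^{(K)}_{\infty})$ being generated by modes of spin at most $s$ is used implicitly in the paper's argument as well.
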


\begin{proof}
By the definition of $\Psi_\infty$, $\Psi_\infty(\mathsf Y^{(K)})\cap V_0\mathfrak U(\mathsf W^{(K)}_{\infty})[\epsilon_2^{-1}]$ is dense in $V_0\mathfrak U(\mathsf W^{(K)}_{\infty})[\epsilon_2^{-1}]$, where $V_\bullet \mathfrak U(\mathsf W^{(K)}_{\infty})$ is the vertical filtration defined in Section \ref{subsec: vertical filtration on W_inf}. Suppose that $\Psi_\infty(\mathsf Y^{(K)})\cap V_{n-1}\mathfrak U(\mathsf W^{(K)}_{\infty})[\epsilon_2^{-1}]$ is dense in $V_{n-1}\mathfrak U(\mathsf W^{(K)}_{\infty})[\epsilon_2^{-1}]$, for some $n\in \mathbb Z_{>0}$, then we claim that $\mathds W^{a(n+1)}_{b,m}$ is in the closure of $\Psi_\infty(\mathsf Y^{(K)})$, for all $1\le a,b\le K$ and all $m\in \mathbb Z$. In fact, $\Psi_\infty(\mathsf T_{n,s}(E^a_b))\equiv (-1)^n\mathds W^{a(n+1)}_{b,s-n}\pmod{V_{n-1}}\mathfrak U(\mathsf W^{(K)}_{\infty})$ by Proposition \ref{prop: Psi_inf filtered}, then by the induction hypothesis $\mathds W^{a(n+1)}_{b,m}$ is in the closure of $\Psi_\infty(\mathsf Y^{(K)})$ for all $1\le a,b\le K$ and all $m>-n$. Since $\Psi_\infty(\mathsf Y^{(K)})$ is closed under the involution $\mathfrak s_{\infty}$, $\mathfrak s_\infty(\mathds W^{a(n+1)}_{b,m})$ is also in the closure of $\Psi_\infty(\mathsf Y^{(K)})$. By Lemma \ref{lem: s_inf}, $\mathfrak s_\infty(\mathds W^{a(n+1)}_{b,m})\equiv \mathds W^{b(n+1)}_{a,-m}\pmod {V_{n-1}\mathfrak U(\mathsf W^{(K)}_{\infty})}$, thus $\mathds W^{a(n+1)}_{b,m}$ is in the closure of $\Psi_\infty(\mathsf Y^{(K)})$ for all $1\le a,b\le K$ and all $m\in \mathbb Z$ by the induction hypothesis. Therefore $\Psi_\infty(\mathsf Y^{(K)})\cap V_{n}\mathfrak U(\mathsf W^{(K)}_{\infty})[\epsilon_2^{-1}]$ is dense in $V_{n}\mathfrak U(\mathsf W^{(K)}_{\infty})[\epsilon_2^{-1}]$, and by induction on $n$ we see that $\Psi_\infty(\mathsf Y^{(K)})\cap \bigcup_n V_{n}\mathfrak U(\mathsf W^{(K)}_{\infty})[\epsilon_2^{-1}]$ is dense in $\bigcup_n V_{n}\mathfrak U(\mathsf W^{(K)}_{\infty})[\epsilon_2^{-1}]$. We finish the proof by noticing that $\bigcup_n V_{n}\mathfrak U(\mathsf W^{(K)}_{\infty})[\epsilon_2^{-1}]$ is dense in $\mathfrak U(\mathsf W^{(K)}_{\infty})[\epsilon_2^{-1}]$. The second statement is proven similarly.
\end{proof}

\begin{corollary}
If $K>1$, then $\mathfrak U(\mathsf W^{(K)}_{\infty})[\epsilon_3^{-1}]$ is topologically generated by $\{\mathds W^{a(1)}_{b,n},\mathds W^{a(2)}_{b,n}\:|\: 1\le a,b\le K,n\in \mathbb Z\}$. If $K=1$, then $\mathfrak U(\mathsf W^{(1)}_{\infty})[\epsilon_2^{-1}]$ is topologically generated by $\{\mathds W^{(1)}_{n},\mathds W^{(2)}_{n},\mathds W^{(3)}_{n}\:|\: n\in \mathbb Z\}$.
\end{corollary}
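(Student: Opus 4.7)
The plan in both cases is the same in spirit. Write $\mathcal S_K$ for the topological closure, inside $\mathfrak U(\mathsf W^{(K)}_\infty)[\epsilon_3^{-1}]$ (resp.\ $\mathfrak U(\mathsf W^{(1)}_\infty)[\epsilon_2^{-1}]$ when $K=1$), of the subalgebra generated by the listed spin $1,2$ (resp.\ spin $1,2,3$) modes. I will exhibit a finite list of generators for a suitable localization of $\mathsf Y^{(K)}$ whose images under $\Psi_\infty$ visibly land in $\mathcal S_K$, and then invoke the density statement of Proposition \ref{prop: Psi_inf has dense image}: since $\mathcal S_K$ is by definition closed, containing a dense subset forces $\mathcal S_K$ to be everything.

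For $K>1$, the crucial input is Corollary \ref{cor: compare two DDCAs}, which gives $\mathsf D^{(K)}[\epsilon_3^{-1}] = \mathbb D^{(K)}[\epsilon_3^{-1}]$, the right-hand side being generated by $\mathsf T_{n,m}(X)$ with $X\in\mathfrak{sl}_K$ and $(n,m)\in\{0,1\}^2$. By Proposition \ref{prop: Psi_inf filtered}, each such generator satisfies $\Psi_\infty(\mathsf T_{n,m}(E^a_b)) \equiv (-1)^n \mathds W^{a(n+1)}_{b,m-n} \pmod{V_{n-1}\mathfrak U(\mathsf W^{(K)}_\infty)}$, and since $V_0\mathfrak U$ is spanned by polynomials in $\mathds W^{(1)}$-modes (see Section \ref{subsec: vertical filtration on W_inf}), the image lies in $\mathcal S_K$ for $n\in\{0,1\}$. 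Thus $\Psi_\infty(\mathsf D^{(K)}_+[\epsilon_3^{-1}])\subset\mathcal S_K$. For the second copy $\mathsf D^{(K)}_-\subset \mathsf A^{(K)}_-$, the restriction of $\Psi_\infty$ is $\Psi^-_\infty=\mathfrak s_\infty\circ\Psi_\infty\circ\mathfrak s_{\mathsf A}$ by the construction of Proposition \ref{prop: map Y^K to modes of W(infinity)}; the anti-involution $\mathfrak s_{\mathsf A}$ preserves $\mathsf D^{(K)}$, and Lemma \ref{lem: s_inf} ensures that $\mathfrak s_\infty$ sends the spin $r\le 2$ modes to combinations of spin $\le 2$ modes, so it preserves $\mathcal S_K$. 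Combining the two copies gives $\Psi_\infty(\mathfrak Y^{(K)}[\epsilon_3^{-1}])\subset \mathcal S_K$, and the second density statement of Proposition \ref{prop: Psi_inf has dense image} closes the case.

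For $K=1$, I will use Lemma \ref{lem: A_+ and t[0,-1] generates Y} to reduce to checking $\Psi_\infty(\mathsf A^{(1)}[\epsilon_2^{-1}])\subset\mathcal S_1$, as $\Psi_\infty(\mathsf t_{0,-1}) = \mathds W^{(1)}_{-1}/\epsilon_2\in\mathcal S_1$ is immediate. By the compact presentation in Proposition \ref{prop: simpler definition, K=1}, $\mathsf A^{(1)}$ is generated by $\{\mathsf t_{3,0},\mathsf t_{2,0},\mathsf t_{1,0},\mathsf t_{1,1},\mathsf t_{0,n}\}$. The formulae \eqref{eqn: map to W(infinity)}, together with the stress-energy relation \eqref{eqn: stress-energy operator_integral basis} and the explicit cubic form of $\mathds V_{-2}$, show that the images of $\mathsf t_{0,n},\mathsf t_{1,n},\mathsf t_{2,0}$ are polynomial expressions in $\mathds W^{(1)}_n,\mathds W^{(2)}_n,\mathds W^{(3)}_n$, hence in $\mathcal S_1$. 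For the last generator, the defining relation $48\mathsf t_{3,0}=\mathrm{ad}^3_{\mathsf t_{2,0}}(\mathsf t_{0,3})$ of \eqref{eqn: A2', K=1} expresses $\Psi_\infty(\mathsf t_{3,0})$ as a nested triple commutator of elements already in $\mathcal S_1$; in particular, although by Proposition \ref{prop: Psi_inf filtered} this element involves $\mathds W^{(4)}$ at leading vertical order, the specific combination obtained from the triple commutator collapses into $\mathcal S_1$. Density of $\Psi_\infty(\mathsf Y^{(1)}[\epsilon_2^{-1}])$ in $\mathfrak U(\mathsf W^{(1)}_\infty)[\epsilon_2^{-1}]$ from Proposition \ref{prop: Psi_inf has dense image} then finishes the proof. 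The main pitfall I anticipate is the bookkeeping step ``$V_0\mathfrak U$ is built from $\mathds W^{(1)}$-modes only'': this requires spelling out how the vertical filtration of the vertex algebra in Section \ref{subsec: vertical filtration on W_inf} induces a filtration on the mode algebra compatible with normal-ordered products. Once that is carefully stated, everything else is a direct consequence of previously established results.
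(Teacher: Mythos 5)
Your proposal is correct and follows essentially the same route as the paper: show that the closed subalgebra generated by the listed modes contains the $\Psi_\infty$-image of a generating set (using Corollary \ref{cor: compare two DDCAs} and Proposition \ref{prop: Psi_inf filtered} for $K>1$, and the explicit formulae \eqref{eqn: map to W(infinity)} for $K=1$), that it is stable under $\mathfrak s_\infty$ via Lemma \ref{lem: s_inf}, and then conclude by the density of Proposition \ref{prop: Psi_inf has dense image}. The only (harmless) deviation is in the $K=1$ case, where you reach $\mathsf Y^{(1)}$ via Lemma \ref{lem: A_+ and t[0,-1] generates Y} and the presentation of Proposition \ref{prop: simpler definition, K=1} instead of the involution argument and the generating set $\{\mathsf t_{2,0},\mathsf t_{0,n}\}$ used in the paper.
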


\begin{proof}
If $K>1$, then let us denote by $\mathfrak U'$ the subalgebra of $\mathfrak U(\mathsf W^{(K)}_{\infty})[\epsilon_3^{-1}]$ topologically generated by $\{\mathds W^{a(1)}_{b,n},\mathds W^{a(2)}_{b,n}\:|\: 1\le a,b\le K,n\in \mathbb Z\}$ over the base ring $\mathbb C[\epsilon_1,\epsilon_3^{\pm},\mathsf c]$. We note that $\{\Psi_{\infty}(\mathsf T_{n,m}(X))\:|\: 0\le n,m\le 1, X\in \mathfrak{sl}_K\}$ is contained in $\mathfrak U'$, thus $\Psi_\infty(\mathbb D^{(K)})\subset \mathfrak U'$. Since $\epsilon_3$ is invertible in $\mathfrak U'$, then Corollary \ref{cor: compare two DDCAs} implies that $\Psi_\infty(\mathsf D^{(K)})\subset \mathfrak U'$. Moreover, $\mathfrak U'$ is closed under the involution $\mathfrak s_{\infty}$, thus $\mathfrak s_{\infty}\circ\Psi_\infty(\mathsf D^{(K)})\subset \mathfrak U'$. We finish the proof by using Proposition \ref{prop: Psi_inf has dense image}.

If $K=1$, then let us denote by $\mathfrak U''$ the subalgebra of $\mathfrak U(\mathsf W^{(1)}_{\infty})[\epsilon_2^{-1}]$ topologically generated by $\{\mathds W^{(1)}_{n},\mathds W^{(2)}_{n},\mathds W^{(3)}_{n}\:|\: n\in \mathbb Z\}$ over the base ring $\mathbb C[\epsilon_1,\epsilon_2^{\pm},\mathsf c]$. We note that $\{\Psi_\infty(\mathsf t_{2,0}),\Psi_\infty(\mathsf t_{0,n})\:|\: n\in \mathbb N\}$ is contained in $\mathfrak U''$, thus $\Psi_\infty(\mathsf A^{(1)})\subset \mathfrak U''$. Moreover, $\mathfrak U''$ is closed under the involution $\mathfrak s_{\infty}$, thus $\mathfrak s_{\infty}\circ\Psi_\infty(\mathsf A^{(1)})\subset \mathfrak U''$. We finish the proof by using Proposition \ref{prop: Psi_inf has dense image}.
\end{proof}

\begin{theorem}\label{thm: Psi_infty is injective}
The map $\Psi_{\infty}: \mathsf Y^{(K)}\to \mathfrak U(\mathsf W^{(K)}_{\infty})[\epsilon_2^{-1}]$ in the Proposition \ref{prop: map Y^K to modes of W(infinity)} is injective.
\end{theorem}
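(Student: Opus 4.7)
The plan is to prove injectivity by combining a PBW theorem for $\mathsf Y^{(K)}$ with an associated graded argument along the vertical filtration, ultimately reducing the problem to injectivity of a natural Lie algebra embedding.

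First, I would extend the vertical filtration $V_\bullet$ of Definition \ref{def: vertical filtration on L} from $\mathsf L^{(K)}$ to $\mathsf Y^{(K)}$ by setting $\deg_v \mathbf T_{n,m}(X) = \deg_v \mathbf t_{n,m} = n$ and $\deg_v \mathbf c = 0$. The crucial observation is that the gluing relations \eqref{eqn: gluing relations of Y} remain compatible with this filtration: in e.g.\ $\mathsf t^-_{1,2} = -\mathsf t^+_{1,0} - \epsilon_1\epsilon_2\epsilon_3 \mathsf t^-_{0,1}\mathbf c$, both the left-hand side and $\mathsf t^+_{1,0}$ have vertical degree $1$, while the correction term $\mathsf t^-_{0,1}\mathbf c$ has vertical degree $0$, so it sits in strictly lower filtration. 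An inductive reordering argument paralleling Proposition \ref{prop: filtration on L} then shows that $\mathsf Y^{(K)}$ is spanned over $\mathbb C[\epsilon_1,\epsilon_2,\mathbf c]$ by the ordered monomial set $\mathfrak B(\mathsf L^{(K)})$. Combining this with the identification $\mathsf Y^{(K)}/(\mathbf c)\cong \mathsf L^{(K)}$ and Theorem \ref{thm: PBW for L}, an argument along the lines of Theorem \ref{thm: gluing construction of L} (now over the base ring $\mathbb C[\epsilon_1,\epsilon_2,\mathbf c]$) yields that $\mathsf Y^{(K)}$ is a free $\mathbb C[\epsilon_1,\epsilon_2,\mathbf c]$-module with basis $\mathfrak B(\mathsf L^{(K)})$, and hence that $\mathrm{gr}_V \mathsf Y^{(K)}$ is the universal enveloping algebra of the central extension by $\mathbf c$ of $\mathscr O(\mathbb C\times \mathbb C^\times)\otimes \gl_K^\sim$, with 2-cocycle governed by the surviving affine-type brackets $[\mathsf T_{0,n}(X),\mathsf T_{0,m}(Y)] = \mathsf T_{0,n+m}([X,Y]) + n\epsilon_2\delta_{n,-m}\kappa_{\epsilon_3,\epsilon_1}(X,Y)\mathbf c$ and $[\mathsf t_{0,n},\mathsf t_{0,m}]=-nK\delta_{n,-m}\mathbf c$ established in \eqref{eqn: affine Lie algebra}.

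On the $\mathfrak U(\mathsf W^{(K)}_\infty)$ side, by Section \ref{subsec: vertical filtration on W_inf} and Lemma \ref{lem: associated graded of W-algebra} passed to the uniform-in-$L$ limit, the associated graded $\mathrm{gr}_V \mathfrak U(\mathsf W^{(K)}_\infty)$ is (a completion of) the universal enveloping algebra of the affine-Kac-Moody extension of $\gl_K[z]$ at level $\kappa_{\epsilon_3\mathsf c,\epsilon_1\mathsf c}$. By Proposition \ref{prop: Psi_inf filtered}, the map $\Psi_\infty: \mathsf A^{(K)}_+\to \mathfrak U(\mathsf W^{(K)}_\infty)$ is filtered, and the same filtered property for $\mathsf A^{(K)}_-$ follows from Definition \ref{def: Psi_infinity minus} together with Lemma \ref{lem: s_inf}. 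The associated graded map $\mathrm{gr}_V \Psi_\infty$ then sends $\mathbf T_{n,m}(E^a_b)\mapsto (-1)^n \mathds W^{a(n+1)}_{b,m-n}$ modulo lower filtration, and the central element matches via $\Psi_\infty(\mathbf c) = \mathsf c/\epsilon_2$: the coefficient $n\epsilon_2\delta_{n,-m}\kappa_{\epsilon_3,\epsilon_1}\mathbf c$ in the source maps to $n\delta_{n,-m}\kappa_{\epsilon_3\mathsf c,\epsilon_1\mathsf c}$ in the target, exactly the affine Kac-Moody cocycle dictated by the vertex algebra structure.

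It then remains to show that $\mathrm{gr}_V \Psi_\infty$ is injective, and for this it suffices (by PBW on both sides) to check injectivity at the primitive level. This reduces to the injectivity of the Lie algebra map sending $E^a_b x^m y^n \mapsto (-1)^n E^a_b z^n t^{m-n}$ (with appropriate $\gl_K^\sim$-modifications) from the centrally extended $\mathscr O(\mathbb C\times \mathbb C^\times)\otimes \gl_K^\sim$ into the centrally extended $\gl_K[z]\otimes\mathbb C[t^{\pm 1}]$, which is a natural injective embedding whose central parts agree. A standard argument, using that $V_\bullet\mathsf Y^{(K)}$ is exhaustive and separated, lifts this injectivity back to $\Psi_\infty$. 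The hard part of the argument is Paragraph 1, namely carefully carrying out the PBW-type bookkeeping for $\mathsf Y^{(K)}$ so that the vertical filtration is identified with a free-module structure over $\mathbb C[\epsilon_1,\epsilon_2,\mathbf c]$; once this is in place, the associated graded matching of central cocycles is essentially automatic, and the Lie-level injectivity is immediate from the explicit formulas.
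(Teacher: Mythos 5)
Your overall architecture (filtered map, injectivity on the associated graded, matching of central cocycles) is sensible, and your final reduction is genuinely different from the paper's: the paper reduces the problem modulo $(\epsilon_1,\mathsf c)$ and invokes the linear degeneration limit of Proposition \ref{prop: linear degeneration}, identifying the two sides with enveloping algebras of $D_{\epsilon_2}(\mathbb C^{\times})\otimes\gl_K^{\sim}$ and $D_{\epsilon_2}(\mathbb C^{\times})\otimes\gl_K$, whereas you pass to the associated graded of the vertical filtration. Either degeneration would do for that last step (modulo the fact that Proposition \ref{prop: filtration on Y} is proved for the \emph{shifted} filtration $\tilde V_\bullet$, so your compatibility claims for the unshifted $V_\bullet$ on $\mathsf Y^{(K)}$ would still have to be verified rather than asserted). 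The real problem is earlier.

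The genuine gap is in your first paragraph, where you claim that spanning by $\mathfrak B(\mathsf L^{(K)})$, together with $\mathsf Y^{(K)}/(\mathbf c)\cong\mathsf L^{(K)}$ and ``an argument along the lines of Theorem \ref{thm: gluing construction of L}'', yields freeness of $\mathsf Y^{(K)}$ over $\mathbb C[\epsilon_1,\epsilon_2,\mathbf c]$. That argument does not transfer. The proof of Theorem \ref{thm: gluing construction of L} obtains linear independence by comparing with the a priori faithful model $S(w):\mathsf L^{(K)}\hookrightarrow \mathsf A^{(K)}(\!(w^{-1})\!)$, which is how Theorem \ref{thm: PBW for L} itself is proved; but $\mathsf Y^{(K)}$ is defined purely by generators and relations (Definition \ref{def: the algebra Y^(K)}) and has no such faithful model available at this stage --- the only candidate is $\Psi_\infty$, whose injectivity is the statement under proof. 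Surjectivity of the spanning set plus freeness of the quotient mod $\mathbf c$ does not exclude $\mathbf c$-torsion or relations among ordered monomials divisible by $\mathbf c$, so you cannot identify $\mathrm{gr}_V\mathsf Y^{(K)}$ with the enveloping algebra of your centrally extended Lie algebra; without that identification the injectivity of the map out of that enveloping algebra says nothing about $\Psi_\infty$. This is precisely why the paper proves Theorem \ref{thm: PBW for Y} and the injectivity \emph{simultaneously}: it shows the single composite $\mathbb C[\epsilon_1,\epsilon_2,\mathbf c]\cdot\mathfrak B(\mathsf L^{(K)})\to \mathsf Y^{(K)}\to \mathfrak U(\mathsf W^{(K)}_\infty)[\epsilon_2^{-1}]$ is injective, which delivers both conclusions at once. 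Your argument is repairable by the same restructuring: run your associated-graded analysis on the composite starting from the free module on ordered monomials (whose image spans by Proposition \ref{prop: filtration on Y}), rather than on $\Psi_\infty$ itself; at that point your Lie-level injectivity check, or the paper's $\epsilon_1=0$ reduction, closes the argument.
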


The above theorem will be proved together with the PBW theorem for $\mathsf Y^{(K)}$ in the next subsection. We note that the map $\widetilde\Psi_{\infty}: \mathsf A^{(K)}\to \mathfrak U(\widetilde{\mathsf W}^{(K)}_{\infty})[\epsilon_3^{-1}]$ also extends to a $\mathbb C[\epsilon_1,\epsilon_2]$ algebra homomorphism $\widetilde\Psi_{\infty}: \mathsf Y^{(K)}\to \mathfrak U(\widetilde{\mathsf W}^{(K)}_{\infty})[\epsilon_3^{-1}]$ such that
\begin{align*}
    \widetilde\Psi_{\infty}(\mathsf T_{0,-n}(E^a_b))=\mathds U^{a(1)}_{b,-n}+\frac{\epsilon_1}{\epsilon_3}\delta^a_b \mathds U^{c(1)}_{c,-n},\quad \widetilde\Psi_{\infty}(\mathsf t_{0,-n})=-\frac{1}{\epsilon_3}\mathds U^{a(1)}_{a,-n},\quad \widetilde\Psi_{\infty}(\mathbf c)=\frac{\mathsf c}{\epsilon_3},
\end{align*}
for all $n\in \mathbb N$. This is because the duality automorphism \eqref{eqn: duality for A} $\sigma:\mathsf A^{(K)}\cong \mathsf A^{(K)}$ extends to a duality automorphism $\sigma:\mathsf Y^{(K)}\cong \mathsf Y^{(K)}$, and we set 
\begin{align*}
    \widetilde\Psi_{\infty}=\sigma_{\infty}\circ \Psi_{\infty}\circ \sigma,
\end{align*}
where $\sigma_{\infty}:{\mathsf W}^{(K)}_{\infty}\cong \widetilde{\mathsf W}^{(K)}_{\infty}$ is the duality automorphism for the rectangular $\mathcal W_{\infty}$-algebra defined in Section \ref{subsec: duality for W}. By Theorem \ref{thm: Psi_infty is injective}, $\widetilde\Psi_{\infty}$ is also injective.

\subsection{PBW theorem for \texorpdfstring{$\mathsf Y^{(K)}$}{Yk}}
In this subsection, we extend the PBW theorem for $\mathsf L^{(K)}$ (Theorem \ref{thm: PBW for L}) to its deformation $\mathsf Y^{(K)}$. We will show that the ordered-monomial basis $\mathfrak B(\mathsf L^{(K)})$ extends to a basis of $\mathsf Y^{(K)}$ over the ring $\mathbb C[\epsilon_1,\epsilon_2,\mathbf c]$. Along the way we also prove Theorem \ref{thm: Psi_infty is injective}.

We begin with noticing that the shifted vertical filtration $\tilde V_{\bullet}\mathsf A^{(K)}_\pm$ naturally extends to a filtration $\tilde V_{\bullet}\mathsf Y^{(K)}$ by letting $\deg_{\tilde v}\epsilon_1=\deg_{\tilde v}\epsilon_2=\deg_{\tilde v}\mathbf c=0$ and $\deg_{\tilde v}\mathbf T_{n,m}(X)=\deg_{\tilde v}\mathbf t_{n,m}=n+1$. 

\begin{lemma}\label{lem: glue filtrations}
For all $(n,m)\in \mathbb N^2$ and $X\in \gl_K$, we have $\mathbf T^-_{n,m}(X)\equiv (-1)^n\mathbf T_{n,2n-m}(X)\pmod{\tilde V_n\mathfrak Y^{(K)}}$ and $\mathbf t^-_{n,m}\equiv (-1)^n\mathbf t_{n,2n-m}\pmod{\tilde V_n\mathsf Y^{(K)}}$.
\end{lemma}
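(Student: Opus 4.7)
The plan is to prove the lemma by induction on $n$, exploiting the isomorphism $\mathsf Y^{(K)}/(\mathbf c)\cong\mathsf L^{(K)}$: under this quotient the asserted identities become literal equalities by the re-definition of $\mathbf T_{n,m}(X)$ and $\mathbf t_{n,m}$ for negative second index. Hence the difference $\mathbf T^-_{n,m}(X)-(-1)^n\mathbf T_{n,2n-m}(X)$ (and similarly for $\mathbf t^-_{n,m}$) automatically lies in the ideal $(\mathbf c)\subset\mathsf Y^{(K)}$, and the real content of the lemma is only a shifted vertical degree bound. The range $m>2n$ is itself the re-definition of the $\mathbf T_{n,m}(X)$ and $\mathbf t_{n,m}$ with $m<0$, and $n=0$ reduces to $\mathsf T^-_{0,0}(X)=\mathsf T^+_{0,0}(X)$ together with $\mathbf t^-_{0,m}=\mathbf t_{0,-m}$, so only the range $n\ge 1$ and $0\le m\le 2n$ needs serious work.

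First I would verify the four base cases $(n,m)=(1,0),(1,2),(2,2)$ for $\mathbf t$ and $(n,m)=(1,1)$ for $\mathbf T(X)$, each read off directly from the gluing relations \eqref{eqn: gluing relations of Y}. The only nontrivial correction is of the form $\epsilon_1\epsilon_2\epsilon_3\,\mathbf t^\pm_{0,1}\,\mathbf c$, whose shifted vertical degree is $1$, so it belongs to $\tilde V_1$. For the inductive step, assume the lemma for all $(n',m')$ with $n'<n$. Every remaining $\mathbf T^-_{n,m}(X)$ or $\mathbf t^-_{n,m}$ with $0\le m\le 2n$ can be reached inside $\mathsf A^{(K)}_-$ by iterated adjoint actions of $\mathsf t^-_{2,0}, \mathsf t^-_{0,2}, \mathsf t^-_{1,1}$ and brackets with $\mathsf T^-_{1,0}(X)$ on the already-identified generators, analogously to the construction in the proof of Theorem~\ref{thm: simpler definition, K>1}. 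Mirroring the same bracket sequence on the $+$-side and applying Proposition~\ref{prop: filtration_vertical and horizontal}, the leading term of the iterated commutator is exactly $(-1)^n\mathbf T_{n,2n-m}(X)$ or $(-1)^n\mathbf t_{n,2n-m}$ modulo $\tilde V_n$, while the cross-bracket corrections coming from \eqref{eqn: gluing relations of Y} and from $[\mathsf t^-_{0,1},\mathsf t^+_{0,1}]=K\mathbf c$ all carry a factor of $\mathbf c$ and are of strictly smaller shifted vertical degree.

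The main obstacle is bookkeeping: for the $\mathbf T$-statement one has to ensure that the error actually lies in $\tilde V_n\mathfrak Y^{(K)}$ (no bare $\mathbf t$'s) rather than only in $\tilde V_n\mathsf Y^{(K)}$. This demands that whenever a cross-bracket yields the central correction $K\mathbf c$ via $[\mathsf t^-_{0,1},\mathsf t^+_{0,1}]$, it appears multiplied by a $\mathbf T$-valued companion carrying the matrix index $X\in\mathfrak{gl}_K$; this in turn is arranged by propagating $\mathbf T^-_{n,m}(X)$ only through adjoint actions of scalar elements $\mathsf t^\pm_{p,q}$ and never through brackets of two $\mathbf T$-valued generators across the $+/-$ divide. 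Once this structural choice is respected the induction closes uniformly for both parts of the statement.
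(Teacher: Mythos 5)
Your overall strategy --- convert $-$-side brackets into $+$-side computations via the gluing relations \eqref{eqn: gluing relations of Y}, induct on $n$, and control degrees with Proposition \ref{prop: filtration_vertical and horizontal} --- is the same as the paper's, and your base cases are the right ones. But the inductive step as you describe it has a concrete gap. You propose to reach $\mathbf T^-_{n,m}(X)$ by iterated adjoint actions of $\mathsf t^-_{2,0},\mathsf t^-_{0,2},\mathsf t^-_{1,1}$ in the spirit of Theorem \ref{thm: simpler definition, K>1}, i.e.\ essentially $\mathbf T^-_{n,m}(X)\propto\mathrm{ad}_{\mathsf t^-_{2,0}}^{n}(\mathsf T^-_{0,n+m}(X))$, and then to ``mirror the same bracket sequence on the $+$-side.'' The starting vector $\mathsf T^-_{0,n+m}(X)=\mathbf T_{0,-(n+m)}(X)$ is a genuinely negative-mode element (it is not in $\mathsf D^{(K)}_+$ once $n+m>0$), while $\mathsf t^-_{2,0}$, after conversion through the gluing relations, becomes a positive-mode element ($\mathbf t_{2,4}$ plus corrections). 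The mirrored bracket is therefore a cross-commutator between $\mathsf A^{(K)}_+$ and $\mathsf A^{(K)}_-$, and Proposition \ref{prop: filtration_vertical and horizontal} says nothing about those: the cross-commutator estimates are exactly the content of Proposition \ref{prop: filtration on Y}, which is proved \emph{after} this lemma and uses it, so invoking them here is circular. Your assertion that all corrections ``carry a factor of $\mathbf c$ and are of strictly smaller shifted vertical degree'' is precisely what must be proved, and it cannot be extracted from the tools available at this stage along the route you chose. (Your opening observation that the difference lies in the ideal $(\mathbf c)$ is correct but gives no bound on its filtration degree, so it does not reduce the work.)

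The paper's proof is organized specifically to dodge this. It first establishes the whole $m=0$ column: $\mathbf T^-_{1,0}(X)=[\mathbf t^-_{1,0},\mathbf T^-_{1,1}(X)]$ and $\mathbf t^-_{2,1}\propto[\mathbf t^-_{1,0},\mathbf t^-_{2,2}]$ are built solely from gluing-identified generators, so after substitution every bracket in $\mathbf T^-_{n,0}(X)\propto\mathrm{ad}_{\mathbf t^-_{2,1}}^{n-1}(\mathbf T^-_{1,0}(X))$ takes place entirely inside $\mathsf A^{(K)}_+$, where Proposition \ref{prop: filtration_vertical and horizontal} applies and the error stays in $\tilde V_n\mathsf D^{(K)}_+$. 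It then raises $m$ using only $\mathrm{ad}_{\mathbf t^-_{1,2}}$, where $\mathbf t^-_{1,2}=-\mathbf t^+_{1,0}-\epsilon_1\epsilon_2\epsilon_3\mathbf t^-_{0,1}\mathbf c$ lives on \emph{both} sides of the gluing, so its adjoint action preserves both $\tilde V_n\mathsf D^{(K)}_+$ and $\tilde V_n\mathsf D^{(K)}_-$; the $\mathbf c$-correction $[\mathbf t^-_{0,1},\mathbf T^-_{n,m}(X)]=-n\,\mathbf T^-_{n-1,m}(X)$ drops the vertical degree by one and is absorbed by the induction hypothesis at level $n-1$, and the residual $\tilde V_n\mathsf D^{(K)}_-$ error is pushed into $\tilde V_n\mathfrak Y^{(K)}$ again by the hypothesis at levels $k<n$. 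If you replace your raising operators by these (or first prove, by a one-sided computation of the same kind, that $\mathsf t^-_{2,0}$ agrees with $\mathbf t_{2,4}$ modulo $\tilde V_2\mathsf A^{(K)}_+$ and arrange every starting vector of your bracket sequences to be positive-mode), your argument closes; as written, it does not.
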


\begin{proof}
For $m>2n$, there is nothing to prove because they are equal by definition. For $m\le 2n$, we proceed by induction as follows. As the first step, we claim that
\begin{align*}
    \mathbf T^-_{n,0}(X)\equiv (-1)^{n}\mathbf T_{n,2n}(X)\pmod{\tilde V_{n}\mathsf D^{(K)}_+},\quad \mathbf t^-_{n,0}\equiv (-1)^{n}\mathbf t_{n,2n}\pmod{\tilde V_{n}\mathsf A^{(K)}_+}.
\end{align*}
By the gluing relations \eqref{eqn: gluing relations of Y}, the lemma automatically holds for $(n,m)=(0,0)$; moreover $\mathbf t^-_{1,0}\equiv -\mathbf t_{1,2}\pmod{\tilde V_1\mathsf A^{(K)}_+}$ and $\mathbf T^-_{1,1}(X)\equiv -\mathbf T_{1,1}(X)\pmod{\tilde V_1\mathsf D^{(K)}_+}$. Then 
\begin{align*}
    \mathbf T^-_{1,0}(X)=[\mathbf t^-_{1,0},\mathbf T^-_{1,1}(X)]\equiv -\mathbf T_{1,2}(X)\pmod{\tilde V_1\mathsf D^{(K)}_+}.
\end{align*}
Similarly 
\begin{align*}
    \mathbf t^-_{2,1}=2[\mathbf t^-_{1,0},\mathbf t^-_{2,2}]\equiv -\mathbf t_{2,3}\pmod{\tilde V_1\mathsf A^{(K)}_+}.
\end{align*}
It follows that
\begin{align*}
    \mathbf T^-_{n,0}(X)=\frac{(-1)^{n-1}}{(n-1)!}\mathrm{ad}_{\mathbf t^-_{2,1}}^{n-1}(\mathbf T^-_{1,0}(X))\equiv (-1)^{n}\mathbf T_{n,2n}(X)\pmod{\tilde V_{n}\mathsf D^{(K)}_+}
\end{align*}
for all $n\in \mathbb N$. Similarly
\begin{align*}
    \mathbf t^-_{n,0}\equiv (-1)^{n}\mathbf t_{n,2n}\pmod{\tilde V_{n}\mathsf A^{(K)}_+}
\end{align*}
for all $n\in \mathbb N$. This proves the claim.

Let us fix $n\ge 1$ and $0\le m<2n$, and suppose that the lemma holds for all $(k,r)$ such that $k<n$, or $k=n$ and $r\le m$. Then 
\begin{align*}
\mathbf T^-_{n,m+1}(X)&=\frac{1}{m-2n}[\mathbf t^-_{1,2},\mathbf T^-_{n,m}(X)]=\frac{1}{2n-m}[\mathbf t^+_{1,0},\mathbf T^-_{n,m}(X)]+\frac{\epsilon_1\epsilon_2\epsilon_3\mathbf c}{2n-m}[\mathbf t^-_{0,1},\mathbf T^-_{n,m}(X)]\\ &\equiv (-1)^n\mathbf T_{n,2n-m-1}(X)\pmod{\tilde V_{n}\mathsf D^{(K)}_+ + \tilde V_{n}\mathsf D^{(K)}_-}.
\end{align*}
Since $\tilde V_{n}\mathsf A^{(K)}_-$ is spanned by monomials in $\mathbf T^-_{k,r}(X),\mathbf t^-_{k,r}$ for $k<n$, so $\tilde V_{n}\mathsf A^{(K)}_-\subset \tilde V_{n}\mathsf Y^{(K)}$ and $\tilde V_{n}\mathsf A=D^{(K)}_-\subset \tilde V_{n}\mathfrak Y^{(K)}$ by our induction assumption, whence $\mathbf T^-_{n,m+1}(X)\equiv (-1)^n\mathbf T_{n,2n-m-1}(X)\pmod{\tilde V_{n}\mathfrak Y^{(K)}}$. Similarly $\mathbf t^-_{n,m+1}\equiv (-1)^n\mathbf t_{n,2n-m-1}\pmod{\tilde V_{n}\mathsf Y^{(K)}}$. Therefore the lemma holds for $(n,m+1)$. The it follows automatically from induction that the lemma holds for all $(n,m)$.
\end{proof}

We have the following generalization of the Proposition \ref{prop: filtration} to $\mathsf Y^{(K)}$.

\begin{proposition}\label{prop: filtration on Y}
The commutators between generators of $\mathsf Y^{(K)}$ can be schematically written as
\begin{equation}\label{eqn: schematic commutators in Y}
\begin{split}
&[\mathbf T_{n,m}(X),\mathbf T_{p,q}(Y)]=\mathbf T_{n+p,m+q}([X,Y])\pmod{\tilde V_{n+p}\mathfrak Y^{(K)}},\\
&[\mathbf t_{n,m},\mathbf T_{p,q}(X)]=(nq-mp)\mathbf T_{n+p-1,m+q-1}(X)\pmod{\tilde V_{n+p-1}\mathfrak Y^{(K)}},\\
&[\mathbf t_{n,m},\mathbf t_{p,q}]=(nq-mp)\mathbf t_{n+p-1,m+q-1}\pmod{\tilde V_{n+p-1}\mathsf Y^{(K)}},
\end{split}
\end{equation}
for all $(n,m,p,q)\in \mathbb N\times \mathbb Z\times \mathbb N\times \mathbb Z$, and all $X,Y\in \mathfrak{gl}_K$.
\end{proposition}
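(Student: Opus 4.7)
The plan is to mirror the inductive strategy used in the proof of Proposition \ref{prop: filtration on L}, with additional care for the central corrections introduced by the gluing relations \eqref{eqn: gluing relations of Y}. I will exploit three structural features of the setup: (a) the shifted vertical filtration $\tilde V_\bullet\mathsf Y^{(K)}$ pulled back from $\mathsf A^{(K)}_\pm$ is compatible with the gluing, as ensured by Lemma \ref{lem: glue filtrations}; (b) the involution $\iota:\mathsf Y^{(K)}\cong \mathsf Y^{(K)}$ preserves $\tilde V_\bullet$ and exchanges $\mathsf A^{(K)}_+$ with $\mathsf A^{(K)}_-$; (c) the quotient $\mathsf Y^{(K)}\twoheadrightarrow \mathsf Y^{(K)}/(\mathbf c)\cong \mathsf L^{(K)}$ reduces estimates modulo $\mathbf c$ to Proposition \ref{prop: filtration on L}.

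First, for $(n,m,p,q)\in \mathbb N^4$, the estimates \eqref{eqn: schematic commutators in Y} follow directly from Proposition \ref{prop: filtration} applied to the subalgebra $\mathsf A^{(K)}_+\subset \mathsf Y^{(K)}$, since the diagonal filtration on $\mathsf A^{(K)}_+$ coincides with the restriction of $\tilde V_\bullet\mathsf Y^{(K)}$ on that subalgebra. Applying the involution $\iota$ and invoking Lemma \ref{lem: glue filtrations}, which lets us replace $\mathbf T^-_{n,m}(X)$ by $(-1)^n\mathbf T_{n,2n-m}(X)$ modulo $\tilde V_n\mathfrak Y^{(K)}$, we extend the estimates to all commutators whose generators both lie in $\mathsf A^{(K)}_-$, i.e.\ where $m\le 2n$ and $q\le 2p$.

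For the remaining mixed commutators—those involving arbitrary integer values of the second indices—I proceed by descending induction on the first index, exactly as in the proof of Proposition \ref{prop: filtration on L}. A typical induction step uses the identity $\mathbf T_{n-1,m}(X)=-\tfrac{1}{n}[\mathsf t_{0,1},\mathbf T_{n,m}(X)]$, coming from \eqref{eqn: [mathbf T(n,m), t(0,1)]} and its extension across the gluing (valid in $\mathsf Y^{(K)}$ up to $\mathbf c$-corrections of lower filtration degree), together with the Jacobi identity. This reduces the commutator $[\mathbf T_{n-1,m}(X),\mathbf T_{p,q}(Y)]$ to a linear combination of $[\mathbf T_{n,m}(X),\mathbf T_{p-1,q}(Y)]$ and $\mathbf T_{n+p-1,m+q}([X,Y])$ modulo $\tilde V_{n+p-1}\mathfrak Y^{(K)}$, closing the induction. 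The analogous arguments handle the $[\mathbf t,\mathbf T]$ and $[\mathbf t,\mathbf t]$ cases.

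The main obstacle, though I expect it to be manageable, is bookkeeping the $\mathbf c$-corrections. Two independent checks control them: (i) modulo $\mathbf c$ the proposition collapses to Proposition \ref{prop: filtration on L}, so one only needs to examine the $\mathbf c$-linear error terms separately; (ii) since $\deg_{\tilde v}\mathbf c = 0$ and the explicit corrections appearing in the gluing relations \eqref{eqn: gluing relations of Y} and the affine relations \eqref{eqn: affine Lie algebra} involve only low-degree generators (e.g.\ products of $\mathsf t_{0,\pm 1}$, $\mathsf T_{0,\pm 1}(X)$, or scalars), a direct degree count using the cases already established above places every $\mathbf c$-correction inside the claimed filtration level. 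Combining these two controls completes the verification of \eqref{eqn: schematic commutators in Y} in full generality.
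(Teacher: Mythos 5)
Your overall architecture (settle the $\mathsf A^{(K)}_\pm$-internal cases first, then induct on the mixed cases while tracking $\mathbf c$-corrections) matches the paper's, but two key steps do not go through as written. For the base case you invoke Proposition \ref{prop: filtration} on the grounds that ``the diagonal filtration on $\mathsf A^{(K)}_+$ coincides with the restriction of $\tilde V_\bullet\mathsf Y^{(K)}$.'' That is false: the diagonal filtration assigns $\deg_F\mathsf T_{n,m}(X)=n+m+1$, while the shifted vertical filtration assigns $\deg_{\tilde v}\mathsf T_{n,m}(X)=n+1$. A bound $f^{X,Y}_{n,m,p,q}\in F_{n+m+p+q}$ therefore does not place the error in $\tilde V_{n+p}\mathfrak Y^{(K)}$; for instance $\mathsf T_{n+m+p+q-1,0}(Z)$ has diagonal degree $n+m+p+q$ but shifted vertical degree $n+m+p+q$, which exceeds $n+p$ as soon as $m+q>0$. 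The correct input is Proposition \ref{prop: filtration_vertical and horizontal}, whose error terms lie in $V_{n+p-1}\tilde V_{n+p}H_{m+q-1}\tilde H_{m+q}\mathsf D^{(K)}$ and hence in $\tilde V_{n+p}$; this, combined with Lemma \ref{lem: glue filtrations} and the $\iota$-invariance of $\tilde V_\bullet$, is what actually handles both the $\mathsf A^{(K)}_+$- and $\mathsf A^{(K)}_-$-internal cases.

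The induction step for the mixed cases is circular as stated. You lower the first index via $\mathbf T_{n-1,m}(X)=-\tfrac{1}{n}[\mathsf t_{0,1},\mathbf T_{n,m}(X)]$, citing \eqref{eqn: [mathbf T(n,m), t(0,1)]} ``and its extension across the gluing,'' but \eqref{eqn: [mathbf T(n,m), t(0,1)]} is only available for $m\le 1$, and in $\mathsf Y^{(K)}$ the statement you need for all $m\in\mathbb Z$ --- namely \eqref{eqn: schematic [T[n,m],t[0,1]]}, which holds only modulo $\tilde V_{n-1}\mathfrak Y^{(K)}$ --- is itself one of the commutator estimates under proof and requires its own induction. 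The paper first settles $q\in\{0,1\}$ explicitly via the gluing relations (using $\mathbf T_{p,0}(X)\equiv(-1)^p\mathbf T^-_{p,2p}(X)$ and $\mathsf t^-_{1,0}=-\mathsf t^+_{1,2}+\epsilon_1\epsilon_2\epsilon_3\mathsf t^+_{0,1}\mathbf c$), then proves \eqref{eqn: schematic [T[n,m],t[0,1]]} by increasing induction on $n$, deduces \eqref{eqn: schematic [T[n,m],t[1,2]]}, and only then runs the inductions on $q$ and $p$; without this scaffolding your descent has no starting point and no justification for the commutator with $\mathsf t_{0,1}$ when $m<0$. Relatedly, your control (i) of the $\mathbf c$-corrections is insufficient: Proposition \ref{prop: filtration on L} bounds errors only by the unshifted $V_{n+p-1}$, which for products of two or more generators is strictly larger than $\tilde V_{n+p}$, so the mod-$\mathbf c$ statement you must prove is strictly stronger than what that proposition records; and knowing the error modulo $\mathbf c$ does not bound the filtration degree of its $\mathbf c$-multiples. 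The degree count (ii) is the right idea, but it has to be carried out inside the induction rather than invoked as an afterthought.
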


\begin{proof}
For $(n,m,p,q)\in\mathbb N^4$ or $(n,m,p,q)\in\mathbb N\times \mathbb Z_{<0}\times \mathbb N\times \mathbb Z_{<0}$, \eqref{eqn: schematic commutators in Y} follows from Proposition \ref{prop: filtration_vertical and horizontal} together with Lemma \ref{lem: glue filtrations}. 

Next, the equation \eqref{eqn: schematic commutators in Y} in the cases when $(n,m,p,q)\in \mathbb N\times \mathbb Z_{<0}\times \mathbb N\times \{0\}$ follows from the observation that $\mathbf T_{p,0}(X)\equiv (-1)^p\mathbf T^-_{p,2p}(X)\pmod{\tilde V_p\mathsf D^{(K)}_-}$ and $\mathbf t_{p,0}\equiv (-1)^p\mathbf t^-_{p,2p}\pmod{\tilde V_p\mathsf A^{(K)}_-}$. The same observation together with the gluing relation $\mathbf t^-_{1,0}=-\mathbf t^-_{1,2}+\epsilon_1\epsilon_2\epsilon_3\mathbf t^+_{0,1}\mathbf c$ imply that 
\begin{align*}
    \mathbf T_{p,1}(X)-\frac{\epsilon_1\epsilon_2\epsilon_3}{2}\mathbf T_{p-1,0}(X)&\equiv (-1)^p\mathbf T^-_{p,2p-1}(X)\pmod{\tilde V_p\mathsf D^{(K)}_-},\\
    \mathbf t_{p,1}-\frac{\epsilon_1\epsilon_2\epsilon_3}{2}\mathbf t_{p-1,0}&\equiv (-1)^p\mathbf t^-_{p,2p-1}\pmod{\tilde V_p\mathsf A^{(K)}_-}.
\end{align*}
Hence
\begin{align*}
    [\mathbf t_{0,-1},\mathbf T_{p,1}(X)]&\equiv p\mathbf T_{p-1,-1}(X) \pmod{\tilde V_{p-1}\mathsf D^{(K)}_-},\\
     [\mathbf t_{0,-1},\mathbf t_{p,1}]&\equiv p\mathbf t_{p-1,-1}\pmod{\tilde V_{p-1}\mathsf A^{(K)}_-}.
\end{align*}

Next we claim that for all $(n,m)\in \mathbb N\times\mathbb Z$,
\begin{equation}\label{eqn: schematic [T[n,m],t[0,1]]}
\begin{split}
&[\mathbf T_{n,m}(X),\mathbf t_{0,1}]=n\mathbf T_{n-1,m}(X)\pmod{\tilde V_{n-1}\mathfrak Y^{(K)}},\\
&[\mathbf t_{n,m},\mathbf t_{0,1}]=n\mathbf t_{n-1,m}\pmod{\tilde V_{n-1}\mathsf Y^{(K)}}.
\end{split}
\end{equation}
We prove this claim by increasing induction on $n$. The $n=0$ case is implied by \eqref{eqn: affine Lie algebra}. Assume that \eqref{eqn: schematic [T[n,m],t[0,1]]} is true for all $n<n_0$ and all $m\in \mathbb Z$, then 
\begin{align*}
[\mathbf T_{n_0,-1}(X),\mathbf t_{0,1}]&\equiv \frac{1}{n_0+1}[[\mathbf t_{0,-1},\mathbf T_{n_0+1,1}(X)],\mathbf t_{0,1}]=[\mathbf t_{0,-1},\mathbf T_{n_0,1}(X)]\\
&\equiv n_0 \mathbf T_{n_0-1,-1}(X) \pmod{\tilde V_{n_0-1}\mathfrak Y^{(K)}},
\end{align*}
and similarly $[\mathbf t_{n_0,-1},\mathbf t_{0,1}]\equiv n_0 \mathbf t_{n_0-1,-1} \pmod{\tilde V_{n_0-1}\mathsf Y^{(K)}}$, i.e. \eqref{eqn: schematic [T[n,m],t[0,1]]} is true for all $n=n_0$ and $m=-1$. Note that for $m<0$, there are relations
\begin{align*}
    [\mathbf t_{0,-1},\mathbf T_{n_0,m}(X)]&\equiv n_0\mathbf T_{n_0,m-1}(X)\pmod{\tilde V_{n_0-1}\mathsf D^{(K)}_-},\\
    [\mathbf t_{0,-1},\mathbf t_{n_0,m}]&\equiv n_0\mathbf t_{n_0,m-1}\pmod{\tilde V_{n_0-1}\mathsf A^{(K)}_-},
\end{align*}
by the Proposition \ref{prop: filtration_vertical and horizontal}, so we can use the decreasing induction on $m$ staring from $m=-1$ to prove \eqref{eqn: schematic [T[n,m],t[0,1]]} for $n=n_0$ and all $m<0$. The \eqref{eqn: schematic [T[n,m],t[0,1]]} for $m>0$ is covered in Proposition \ref{prop: filtration_vertical and horizontal}, and this finishes the proof of the claim.

Then it can be deduced from \eqref{eqn: schematic [T[n,m],t[0,1]]} that for all $(n,m)\in \mathbb N\times\mathbb Z$,
\begin{equation}\label{eqn: schematic [T[n,m],t[1,2]]}
\begin{split}
&[\mathbf T_{n,m}(X),\mathbf t_{1,2}]=(2n-m)\mathbf T_{n,m+1}(X)\pmod{\tilde V_{n}\mathfrak Y^{(K)}},\\
&[\mathbf t_{n,m},\mathbf t_{1,2}]=(2n-m)\mathbf t_{n,m+1}\pmod{\tilde V_{n}\mathsf Y^{(K)}}.
\end{split}
\end{equation}
The cases when $m\ge 0$ is covered in Proposition \ref{prop: filtration_vertical and horizontal}. For the cases when $m<0$, the gluing relation $\mathbf t^-_{1,0}=-\mathbf t^-_{1,2}+\epsilon_1\epsilon_2\epsilon_3\mathbf t^+_{0,1}\mathbf c$ implies that 
\begin{align*}
[\mathbf T_{n,m}(X),\mathbf t_{1,2}]&=(-1)^{n-1}[\mathbf T^-_{n,2n-m}(X),\mathbf t^-_{1,0}]+\epsilon_1\epsilon_2\epsilon_3\mathbf c[\mathbf T_{n,m}(X),\mathbf t_{0,1}]\\
&\equiv (-1)^{n} (2n-m)\mathbf T^-_{n,2n-m-1}(X)+n\epsilon_1\epsilon_2\epsilon_3\mathbf c \mathbf T_{n-1,m}(X)\pmod{\tilde V_{n}\mathfrak Y^{(K)}}\\
&\equiv (2n-m)\mathbf T_{n,m+1}(X)\pmod{\tilde V_{n}\mathfrak Y^{(K)}},
\end{align*}
and similarly $[\mathbf t_{n,m},\mathbf t_{1,2}]=(2n-m)\mathbf t_{n,m+1}\pmod{\tilde V_{n}\mathsf Y^{(K)}}$. 

Using \eqref{eqn: schematic [T[n,m],t[1,2]]}, we can prove \eqref{eqn: schematic commutators in Y} for all $(n,m,p,q)\in \mathbb N\times \mathbb Z_{<0}\times \mathbb N\times\mathbb N$ such that $q\le 2p$, by increasing induction on $q$ starting from $q=0$ which has been proven in the previous step. In fact, 
\begin{align*}
&[\mathbf T_{n,m}(X),\mathbf T_{p,q}(Y)]=\frac{1}{q-1-2p}[\mathbf T_{n,m}(X),[\mathbf t_{1,2},\mathbf T_{p,q-1}(Y)]]\\
&=\frac{1}{q-1-2p}[\mathbf t_{1,2},[\mathbf T_{n,m}(X),\mathbf T_{p,q-1}(Y)]]-\frac{1}{q-1-2p}[[\mathbf t_{1,2},\mathbf T_{n,m}(X)],\mathbf T_{p,q-1}(Y)]\\
&\equiv \frac{1}{q-1-2p}[\mathbf t_{1,2},\mathbf T_{n+p,m+q-1}([X,Y])]-\frac{m-2n}{q-1-2p}[\mathbf T_{n,m+1}(X),\mathbf T_{p,q-1}(Y)] \pmod{\tilde V_{n+p}\mathfrak Y^{(K)}}\\
&\equiv \mathbf T_{n+p,m+q}([X,Y])\pmod{\tilde V_{n+p}\mathfrak Y^{(K)}},
\end{align*}
and the other two equations in \eqref{eqn: schematic commutators in Y} are deduced similarly.

Finally, the remaining cases are those $(n,m,p,q)\in \mathbb N\times \mathbb Z_{<0}\times \mathbb N\times\mathbb N$ such that $q> 2p$. This is proven by decreasing induction on $p$ using \eqref{eqn: schematic [T[n,m],t[0,1]]}, starting from $q=2p$ or $q=2p-1$ which is proven in the previous step. In fact, 
\begin{align*}
&[\mathbf T_{n,m}(X),\mathbf T_{p,q}(Y)]=\frac{-1}{p+1}[\mathbf T_{n,m}(X),[\mathbf t_{0,1},\mathbf T_{p+1,q}(Y)]]\\
&=\frac{-1}{p+1}[\mathbf t_{0,1},[\mathbf T_{n,m}(X),\mathbf T_{p+1,q}(Y)]]+\frac{1}{p+1}[[\mathbf t_{0,1},\mathbf T_{n,m}(X)],\mathbf T_{p+1,q}(Y)]\\
&\equiv \frac{-1}{p+1}[\mathbf t_{0,1},\mathbf T_{n+p+1,m+q}([X,Y])]-\frac{n}{p+1}[\mathbf T_{n-1,m}(X),\mathbf T_{p+1,q}(Y)] \pmod{\tilde V_{n+p}\mathfrak Y^{(K)}}\\
&\equiv \mathbf T_{n+p,m+q}([X,Y])\pmod{\tilde V_{n+p}\mathfrak Y^{(K)}},
\end{align*}
and the other two equations in \eqref{eqn: schematic commutators in Y} are deduced similarly. This finishes the proof of the Proposition.
\end{proof}

\begin{proposition}\label{prop: t[m,n] preserves Y^K}
For all $(n,m)\in \mathbb N\times\mathbb Z$, the adjoint action of $\mathbf t_{n,m}$ preserves the subalgebra $\mathfrak Y^{(K)}$.
\end{proposition}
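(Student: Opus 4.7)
The plan is to reduce the statement to a single commutator check by using the derivation property of $\mathrm{ad}_{\mathbf t_{n,m}}$, and then to read off the answer directly from Proposition~\ref{prop: filtration on Y}.

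First, I observe that $\mathrm{ad}_{\mathbf t_{n,m}}$ is a derivation of $\mathsf Y^{(K)}$. Since $\mathfrak Y^{(K)}$ is generated, as a $\mathbb C[\epsilon_1,\epsilon_2,\mathbf c]$-algebra, by $\{\mathbf T_{p,q}(X)\,|\,X\in\gl_K,\,(p,q)\in \mathbb N\times\mathbb Z\}$, it therefore suffices to show
\begin{equation*}
[\mathbf t_{n,m},\,\mathbf T_{p,q}(X)]\in \mathfrak Y^{(K)}
\end{equation*}
for every $(n,m)\in \mathbb N\times\mathbb Z$, every $(p,q)\in \mathbb N\times\mathbb Z$ and every $X\in \gl_K$.

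Now I invoke the second line of Proposition~\ref{prop: filtration on Y}, which gives
\begin{equation*}
[\mathbf t_{n,m},\mathbf T_{p,q}(X)]\;=\;(nq-mp)\,\mathbf T_{n+p-1,\,m+q-1}(X)\pmod{\tilde V_{n+p-1}\mathfrak Y^{(K)}}.
\end{equation*}
The leading term $(nq-mp)\mathbf T_{n+p-1,m+q-1}(X)$ lies in $\mathfrak Y^{(K)}$ by the generating set description (when $n+p=0$ the coefficient $nq-mp$ vanishes, so no issue arises with the index $n+p-1=-1$), and the correction lies in $\tilde V_{n+p-1}\mathfrak Y^{(K)}\subset \mathfrak Y^{(K)}$ by definition of the shifted vertical filtration on $\mathfrak Y^{(K)}$. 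Hence $[\mathbf t_{n,m},\mathbf T_{p,q}(X)]\in \mathfrak Y^{(K)}$, completing the proof.

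There is no real obstacle here once Proposition~\ref{prop: filtration on Y} is in hand; the content is that the earlier filtration estimate was phrased with the correction term explicitly living in $\tilde V_{\bullet}\mathfrak Y^{(K)}$ rather than merely in $\tilde V_{\bullet}\mathsf Y^{(K)}$, so the desired closure under $\mathrm{ad}_{\mathbf t_{n,m}}$ falls out for free. The proposition is essentially a corollary of the filtration statement.
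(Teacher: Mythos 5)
Your proof is correct and is essentially the paper's own argument: the paper's proof consists of a single sentence citing the second equation of Proposition~\ref{prop: filtration on Y}, and you have simply spelled out the routine details (derivation property of the adjoint action, the generating set of $\mathfrak Y^{(K)}$, and the vanishing of the coefficient in the edge case $n+p=0$).
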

\begin{proof}
This follows from the second equation in \eqref{eqn: schematic commutators in Y}.
\end{proof}

Note that there is a natural $\mathbb C[\epsilon_1,\epsilon_2,\mathbf c]$-module map $\mathbb C[\epsilon_1,\epsilon_2,\mathbf c]\cdot \mathfrak B(\mathsf L^{(K)})\to \mathsf Y^{(K)}$. We claim that this map is surjective, which can be proved by the following inductive argument. $\tilde V_0\mathsf Y^{(K)}$ is generated as a $\mathbb C[\epsilon_1,\epsilon_2,\mathbf c]$-module by $1$. Assume that $\tilde V_s\mathsf Y^{(K)}$ is generated by elements in $\mathfrak B(\mathsf L^{(K)})$, then Proposition \ref{prop: filtration on Y} implies that we can reorder any monomials in $\mathfrak{G}(\mathsf L^{(K)})$ with total degree $s+1$ into the non-decreasing order modulo terms in $\tilde V_{s}\mathsf Y^{(K)}$, therefore $\tilde V_{s+1}\mathsf Y^{(K)}$ is generated by elements in $\mathfrak B(\mathsf L^{(K)})$.

Similarly there is a natural surjective $\mathbb C[\epsilon_1,\epsilon_2,\mathbf c]$-module map $\mathbb C[\epsilon_1,\epsilon_2,\mathbf c]\cdot \mathfrak B(\mathfrak L^{(K)})\twoheadrightarrow \mathfrak Y^{(K)}$.

\begin{theorem}\label{thm: PBW for Y}
The map $\mathbb C[\epsilon_1,\epsilon_2,\mathbf c]\cdot \mathfrak B(\mathsf L^{(K)})\to \mathsf Y^{(K)}$ is an isomorphism. In particular $\mathsf Y^{(K)}$ is a free module over $\mathbb C[\epsilon_1,\epsilon_2,\mathbf c]$.
\end{theorem}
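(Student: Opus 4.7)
The surjectivity of $\phi:\mathbb C[\epsilon_1,\epsilon_2,\mathbf c]\cdot\mathfrak B(\mathsf L^{(K)})\twoheadrightarrow\mathsf Y^{(K)}$ has already been established by the inductive argument in the paragraph preceding the theorem, using the shifted vertical filtration $\tilde V_\bullet$ and Proposition \ref{prop: filtration on Y}. What remains is to prove injectivity. I propose to do this simultaneously with Theorem \ref{thm: Psi_infty is injective} by showing that the composite $\Psi_\infty\circ\phi:\mathbb C[\epsilon_1,\epsilon_2,\mathbf c]\cdot\mathfrak B(\mathsf L^{(K)})\to\mathfrak U(\mathsf W^{(K)}_\infty)[\epsilon_2^{-1}]$ is injective; this immediately forces $\phi$ to be an isomorphism and $\Psi_\infty$ to be injective on $\mathsf Y^{(K)}$.

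The key tool is the vertical filtration. First I would extend the filtration $V_\bullet\mathsf L^{(K)}$ of Definition \ref{def: vertical filtration on L} to $\mathsf Y^{(K)}$ (and pull it back to $\mathbb C[\epsilon_1,\epsilon_2,\mathbf c]\cdot\mathfrak B(\mathsf L^{(K)})$) by assigning $\deg_V\mathsf T_{n,m}(X)=\deg_V\mathsf t_{n,m}=n$. Proposition \ref{prop: filtration on Y} guarantees this is a well-defined algebra filtration, and Proposition \ref{prop: Psi_inf filtered} together with the identity $\Psi_\infty(\mathsf t_{n,m})=\epsilon_2^{-1}\Psi_\infty(\mathsf T_{n,m}(1))$ implies that $\Psi_\infty\circ\phi$ strictly preserves $V_\bullet$, with leading symbols
\begin{equation*}
\mathsf T_{n,m}(E^a_b)\longmapsto(-1)^n\mathds W^{a(n+1)}_{b,m-n},\qquad
\mathsf t_{n,m}\longmapsto(-1)^n\epsilon_2^{-1}\mathds W^{c(n+1)}_{c,m-n}.
\end{equation*}

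By the uniform-in-$L$ identification discussed in Section \ref{subsec: vertical filtration on W_inf}, $\mathrm{gr}_V\mathsf W^{(K)}_\infty\cong V^{\epsilon_3\mathsf c,\epsilon_1\mathsf c}(\gl_K[z])$, so $\mathrm{gr}_V\mathfrak U(\mathsf W^{(K)}_\infty)[\epsilon_2^{-1}]$ is a suitable completion of the universal enveloping algebra of the affinization of the polynomial current algebra $\gl_K[z]$, admitting a PBW basis in ordered monomials of the symbols of the $\mathds W^{a(r)}_{b,n}$ over $\mathbb C[\epsilon_1,\epsilon_2^{\pm},\mathsf c]$. Since $(n,m)\mapsto(n+1,m-n)$ is a bijection of index sets and the dictionary order on $\mathfrak B(\mathsf L^{(K)})$ transports to a $V$-compatible admissible order on the $\mathds W$-symbols, distinct ordered monomials in $\mathfrak B(\mathsf L^{(K)})$ go to distinct PBW monomials on $\mathrm{gr}_V$ up to signs and powers of $\epsilon_2$. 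Hence $\mathrm{gr}_V(\Psi_\infty\circ\phi)$ is injective, so $\Psi_\infty\circ\phi$ itself is injective after inverting $\epsilon_2$. Since $\mathbb C[\epsilon_1,\epsilon_2,\mathbf c]\cdot\mathfrak B(\mathsf L^{(K)})$ is $\epsilon_2$-torsion-free and $\Psi_\infty(\mathbf c)=\mathsf c/\epsilon_2$ provides a flat embedding of coefficient rings $\mathbb C[\epsilon_1,\epsilon_2,\mathbf c]\hookrightarrow\mathbb C[\epsilon_1,\epsilon_2^{\pm},\mathsf c]$, injectivity descends to $\mathbb C[\epsilon_1,\epsilon_2,\mathbf c]$.

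The main obstacle is making the PBW property of the mode algebra rigorous: since $\mathfrak U(\mathsf W^{(K)}_\infty)$ is a topological completion rather than a plain enveloping algebra, one must either argue via the restricted mode subalgebra $U(\mathsf W^{(K)}_\infty)$ of Lemma \ref{lem: restricted mode alg of W(infinity)} — checking that $\Psi_\infty(\mathsf Y^{(K)})$ lands inside it via \eqref{eqn: map to W(infinity)} together with the definition of $\Psi_\infty^-$ in Definition \ref{def: Psi_infinity minus} — or directly transfer the ordered-monomial basis of $\mathsf W^{(K)}_\infty$ constructed in Section \ref{sec: W(infinity)} to its mode algebra through the faithful action on the vacuum module, verifying that linear independence survives the passage to the $V$-associated graded. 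The remaining reindexing and descent along $\epsilon_2$ and $\mathbf c\leftrightarrow\mathsf c/\epsilon_2$ are then routine.
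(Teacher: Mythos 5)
Your opening reduction --- proving Theorem \ref{thm: Psi_infty is injective} and the PBW statement simultaneously by showing that the composite $\mathbb C[\epsilon_1,\epsilon_2,\mathbf c]\cdot\mathfrak B(\mathsf L^{(K)})\to\mathsf Y^{(K)}\xrightarrow{\Psi_\infty}\mathfrak U(\mathsf W^{(K)}_\infty)[\epsilon_2^{-1}]$ is injective --- is exactly the paper's first move. From there the routes diverge. The paper uses torsion-freeness of both sides over $\mathbb C[\epsilon_1,\epsilon_2,\mathbf c]$ to specialize $\epsilon_1=\mathbf c=0$ at once; after that, by Proposition \ref{prop: linear degeneration} and Remark \ref{rmk: vertival filtrations on L vs A}, the map becomes the map of enveloping algebras induced by the injective Lie algebra map $D_{\epsilon_2}(\mathbb C^{\times})\otimes\gl_K^{\sim}\hookrightarrow D_{\epsilon_2}(\mathbb C^{\times})\otimes\gl_K[\epsilon_2^{-1}]$, and ordinary PBW for Lie algebras finishes the argument. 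You instead keep $\epsilon_1$ and $\mathsf c$ generic and pass to the vertical associated graded of the full deformed algebras. That is a legitimate alternative in outline, but it forces you to confront two points that the paper's specialization deliberately avoids.

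First, your leading-symbol formula $\mathsf T_{n,m}(E^a_b)\mapsto(-1)^n\mathds W^{a(n+1)}_{b,m-n}$ is justified by Proposition \ref{prop: Psi_inf filtered} only for $(n,m)\in\mathbb N^2$, i.e.\ on $\mathsf A^{(K)}_+$. The basis $\mathfrak B(\mathsf L^{(K)})$ also contains the generators $\mathbf T_{n,m}(X)$ and $\mathbf t_{n,m}$ with $m<0$, whose images go through $\Psi^-_\infty=\mathfrak s_\infty\circ\Psi_\infty\circ\mathfrak s_{\mathsf A}$ (Definition \ref{def: Psi_infinity minus}); to compute their symbols you must additionally invoke Lemma \ref{lem: s_inf} to see that $\mathfrak s_\infty(\mathds W^{a(r)}_{b,n})$ agrees with $\mathds W^{b(r)}_{a,-n}$ modulo lower vertical degree. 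This is fillable but currently asserted rather than proved. Second, and more seriously, the linear independence of the transported ordered monomials inside $\mathrm{gr}_V\mathfrak U(\mathsf W^{(K)}_\infty)[\epsilon_2^{-1}]$ is the entire content of the injectivity claim, and you explicitly defer it as ``the main obstacle.'' The mode algebra is a degree-wise completion on which the vertical filtration is not exhaustive, and the identification of $\mathrm{gr}_V$ of that completion with a completion of the enveloping algebra of the centrally extended current algebra $\gl_K[z]\otimes\mathbb C[t^{\pm 1}]$, \emph{together with} a PBW basis statement for it, is not available in the paper in the form your argument needs; neither of the two remedies you sketch is carried out. Until that step is supplied the proof is incomplete. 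The paper's trick of first killing $\epsilon_1$ and $\mathbf c$ is precisely what lets it quote Proposition \ref{prop: linear degeneration} and reduce to injectivity at the level of Lie algebras instead of having to prove such a completed-PBW statement.
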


\begin{proof}[Proof of Theorem \ref{thm: Psi_infty is injective} and Theorem \ref{thm: PBW for Y}]
Notice that both Theorem \ref{thm: Psi_infty is injective} and Theorem \ref{thm: PBW for Y} follow if we can show that the composition of $\mathbb C[\epsilon_1,\epsilon_2,\mathbf c]$-module maps
\begin{align*}
    \mathbb C[\epsilon_1,\epsilon_2,\mathbf c]\cdot \mathfrak B(\mathsf L^{(K)})\longrightarrow \mathsf Y^{(K)}\overset{\Psi_\infty}{\longrightarrow} \mathfrak U(\mathsf W^{(K)}_{\infty})[\epsilon_2^{-1}]
\end{align*}
is injective. Since both $\mathbb C[\epsilon_1,\epsilon_2,\mathbf c]\cdot \mathfrak B(\mathsf L^{(K)})$ and $\mathfrak U(\mathsf W^{(K)}_{\infty})[\epsilon_2^{-1}]$ are torsion-free over the base ring $\mathbb C[\epsilon_1,\epsilon_2,\mathbf c]$, so it suffices to show that $$\mathbb C[\epsilon_2]\cdot \mathfrak B(\mathsf L^{(K)})\to \mathfrak U(\mathsf W^{(K)}_{\infty}/(\epsilon_1=\mathsf c=0))[\epsilon_2^{-1}]$$ is injective. By the PBW theorem for $\mathsf L^{(K)}$, the natural map $\mathbb C[\epsilon_1,\epsilon_2]\cdot \mathfrak B(\mathsf L^{(K)})\to \mathsf Y^{(K)}/(\mathbf c=0)\cong \mathsf L^{(K)}$ is an isomorphism, therefore it is enough to show that the map $$\Psi_{\infty}:\mathsf L^{(K)}/(\epsilon_1=0)\to \mathfrak U(\mathsf W^{(K)}_{\infty}/(\epsilon_1=\mathsf c=0))[\epsilon_2^{-1}]$$ is injective. 

Using Proposition \ref{prop: linear degeneration}, we see that $\mathfrak U(\mathsf W^{(K)}_{\infty}/(\epsilon_1=\mathsf c=0))$ is the degree-wise completion of the universal enveloping algebra $U(D_{\epsilon_2}(\mathbb C^{\times})\otimes\gl_K)$. On the other hand, $\mathsf L^{(K)}/(\epsilon_1=0)$ is isomorphic to the universal enveloping algebra $U(D_{\epsilon_2}(\mathbb C^{\times})\otimes\gl_K^{\sim})$ according to the Remark \ref{rmk: vertival filtrations on L vs A}. Moreover, it is shown in the proof of Proposition \ref{prop: linear degeneration} that $\Psi_\infty$ maps $E^a_bx^{m}(\epsilon_2\partial_x)^n\in D_{\epsilon_2}(\mathbb C^{\times})\otimes\gl_K^{\sim}$ to $\mathds W^{a(n+1)}_{b,-m-n}+\epsilon_2\cdot(\text{linear combination of $\mathds W^{(i)}_{-m-n}$ for $i\le n$})\in\mathfrak U(\mathsf W^{(K)}_{\infty}/(\epsilon_1=\mathsf c=0))$, for all $(m,n)\in \mathbb Z\times\mathbb N$. In particular, $\Psi_\infty$ maps $D_{\epsilon_2}(\mathbb C^{\times})\otimes\gl_K^{\sim}$ injectively into $D_{\epsilon_2}(\mathbb C^{\times})\otimes\gl_K[\epsilon_2^{-1}]$, thus $\Psi_{\infty}:\mathsf L^{(K)}/(\epsilon_1=0)\to \mathfrak U(\mathsf W^{(K)}_{\infty}/(\epsilon_1=\mathsf c=0))[\epsilon_2^{-1}]$ is injective. This proves Theorem \ref{thm: Psi_infty is injective} and Theorem \ref{thm: PBW for Y}.
\end{proof}

\begin{corollary}\label{cor: PBW for mathfrak Y}
The map $\mathbb C[\epsilon_1,\epsilon_2,\mathbf c]\cdot \mathfrak B(\mathfrak L^{(K)})\to \mathfrak Y^{(K)}$ is an isomorphism. In particular $\mathfrak Y^{(K)}$ is a free module over $\mathbb C[\epsilon_1,\epsilon_2,\mathbf c]$. Moreover, the specialization $\mathfrak Y^{(K)}/(\mathbf c)\to \mathsf Y^{(K)}/(\mathbf c)\cong\mathsf L^{(K)}$ is injective and it induces an isomorphism $\mathfrak Y^{(K)}/(\mathbf c)\cong \mathfrak L^{(K)}$.
\end{corollary}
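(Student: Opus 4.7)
The plan is to mirror the two-step argument used in the proof of Theorem \ref{thm: PBW for Y} and derive everything directly from PBW statements already established for $\mathsf Y^{(K)}$ and $\mathsf L^{(K)}$.

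\textbf{Surjectivity of $\mathbb C[\epsilon_1,\epsilon_2,\mathbf c]\cdot \mathfrak B(\mathfrak L^{(K)})\to \mathfrak Y^{(K)}$.} First I would restrict the shifted vertical filtration $\tilde V_\bullet \mathsf Y^{(K)}$ to $\mathfrak Y^{(K)}$. Because $\mathfrak Y^{(K)}$ is generated as a $\mathbb C[\epsilon_1,\epsilon_2,\mathbf c]$-algebra by $\{\mathbf T_{n,m}(X)\:|\:X\in\gl_K,(n,m)\in\mathbb N\times\mathbb Z\}$, the base of the induction $\tilde V_0\mathfrak Y^{(K)}=\mathbb C[\epsilon_1,\epsilon_2,\mathbf c]$ is clear. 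For the inductive step I would apply Proposition \ref{prop: filtration on Y}: swapping two adjacent generators of $\mathfrak Y^{(K)}$ in a monomial of total shifted vertical degree $s+1$ produces a single generator $\mathbf T_{n+p,m+q}([X,Y])$ plus a correction that lies in $\tilde V_{s}\mathfrak Y^{(K)}$ (note that the correction stays inside $\mathfrak Y^{(K)}$, which is precisely what the first line of \eqref{eqn: schematic commutators in Y} guarantees). By the induction hypothesis the correction is already in the span of $\mathfrak B(\mathfrak L^{(K)})$, so $\tilde V_{s+1}\mathfrak Y^{(K)}$ is in the span of $\mathfrak B(\mathfrak L^{(K)})$ as well; the filtration is exhaustive, yielding surjectivity.

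\textbf{Injectivity.} Here I would exploit Theorem \ref{thm: PBW for Y}, which tells us that $\mathsf Y^{(K)}$ is a free $\mathbb C[\epsilon_1,\epsilon_2,\mathbf c]$-module with basis $\mathfrak B(\mathsf L^{(K)})$. Every element of $\mathfrak B(\mathfrak L^{(K)})$ is obtained from the corresponding element of $\mathfrak B(\mathsf L^{(K)})$ by replacing each scalar-valued factor $\mathbf T_{n,m}(1)$ by $\epsilon_2\mathbf t_{n,m}$, so an ordered monomial in $\mathfrak B(\mathfrak L^{(K)})$ equals $\epsilon_2^k$ times a distinct ordered monomial in $\mathfrak B(\mathsf L^{(K)})$, where $k$ is the number of $\mathbf T_{*,*}(1)$ factors. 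Since $\mathbb C[\epsilon_1,\epsilon_2,\mathbf c]$ is a domain and $\mathfrak B(\mathsf L^{(K)})$ is a free basis of $\mathsf Y^{(K)}$, these rescaled elements are $\mathbb C[\epsilon_1,\epsilon_2,\mathbf c]$-linearly independent in $\mathsf Y^{(K)}$, hence also in the subalgebra $\mathfrak Y^{(K)}$. Combined with surjectivity this establishes the PBW isomorphism and the freeness of $\mathfrak Y^{(K)}$.

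\textbf{The specialization at $\mathbf c=0$.} Freeness over $\mathbb C[\epsilon_1,\epsilon_2,\mathbf c]$ implies that $\mathfrak Y^{(K)}/(\mathbf c)$ is a free $\mathbb C[\epsilon_1,\epsilon_2]$-module with basis $\mathfrak B(\mathfrak L^{(K)})$, and similarly $\mathsf Y^{(K)}/(\mathbf c)\cong\mathsf L^{(K)}$ is free with basis $\mathfrak B(\mathsf L^{(K)})$. The induced map $\mathfrak Y^{(K)}/(\mathbf c)\to \mathsf L^{(K)}$ sends the basis $\mathfrak B(\mathfrak L^{(K)})$ to $\epsilon_2^k$-rescalings of distinct elements of $\mathfrak B(\mathsf L^{(K)})$, so it is injective. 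Its image is the $\mathbb C[\epsilon_1,\epsilon_2]$-span of $\mathfrak B(\mathfrak L^{(K)})$ inside $\mathsf L^{(K)}$, which by Theorem \ref{thm: PBW for L} is exactly the subalgebra $\mathfrak L^{(K)}$, giving the desired isomorphism $\mathfrak Y^{(K)}/(\mathbf c)\cong \mathfrak L^{(K)}$.

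\textbf{Main obstacle.} There is no genuinely new difficulty; everything reduces to invoking Proposition \ref{prop: filtration on Y} and Theorem \ref{thm: PBW for Y} carefully. The one point worth checking is that Proposition \ref{prop: filtration on Y} was stated with correction terms landing in $\tilde V_\bullet\mathfrak Y^{(K)}$ (rather than $\tilde V_\bullet\mathsf Y^{(K)}$), which is precisely the closure property needed to run the reordering induction \emph{inside} $\mathfrak Y^{(K)}$; without that closure the surjectivity step would fail.
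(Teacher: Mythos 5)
Your proof is correct and follows essentially the same route as the paper: the surjectivity via the reordering induction with Proposition \ref{prop: filtration on Y} is exactly the argument the paper records just before Theorem \ref{thm: PBW for Y}, and the injectivity and $\mathbf c=0$ specialization via the $\epsilon_2$-rescaling relation between $\mathfrak B(\mathfrak L^{(K)})$ and $\mathfrak B(\mathsf L^{(K)})$ together with Theorems \ref{thm: PBW for Y} and \ref{thm: PBW for L} is precisely the paper's proof of the corollary. Your closing remark about the correction terms landing in $\tilde V_\bullet\mathfrak Y^{(K)}$ rather than merely $\tilde V_\bullet\mathsf Y^{(K)}$ correctly identifies the one closure property the argument hinges on.
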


\begin{proof}
The elements in $\mathfrak B(\mathfrak L^{(K)})$ are rescaling of elements in $\mathfrak B(\mathsf L^{(K)})$, and we have just shown that the latter are $\mathbb C[\epsilon_1,\epsilon_2,\mathbf c]$-linearly independent in $\mathsf Y^{(K)}$, thus elements in $\mathfrak B(\mathfrak L^{(K)})$ are $\mathbb C[\epsilon_1,\epsilon_2,\mathbf c]$-linearly independent in $\mathfrak Y^{(K)}$. According to Theorem \ref{thm: PBW for L}, $\mathfrak B(\mathfrak L^{(K)})$ is a $\mathbb C[\epsilon_1,\epsilon_2]$-basis of $\mathfrak L^{(K)}$, so the composition $\mathbb C[\epsilon_1,\epsilon_2]\cdot \mathfrak B(\mathfrak L^{(K)})\to \mathfrak Y^{(K)}/(\mathbf c)\to \mathsf Y^{(K)}/(\mathbf c)\cong\mathsf L^{(K)}$ is an isomorphism, thus $\mathfrak Y^{(K)}/(\mathbf c)\cong \mathfrak L^{(K)}$.
\end{proof}

\begin{remark}
Another corollary to the Theorem \ref{thm: Psi_infty is injective} is that the subalgebra $\mathsf Y^{(K)}\subset \mathfrak{U}(\mathsf W^{(K)}_\infty)[\epsilon_2^{-1}]$ is invariant under the automorphism $\boldsymbol{\tau}_\beta:\mathfrak{U}(\mathsf W^{(K)}_\infty)\cong \mathfrak{U}(\mathsf W^{(K)}_\infty)$ in \eqref{eqn: shift automorphism of W_integral basis}. It is straightforward to compute that
\begin{equation}\label{eqn: shift automorphism of Y}
\begin{split}
&\boldsymbol{\tau}_\beta(\mathsf T_{0,n}(X))=\mathsf T_{0,n}(X)+\epsilon_2\beta\delta_{n,0}\mathrm{tr}(X)\mathbf c,\quad \boldsymbol{\tau}_\beta(\mathsf t_{0,n})=\mathsf t_{0,n}+\beta\delta_{n,0} K\mathbf c\\
&\boldsymbol{\tau}_\beta(\mathsf t_{1,m})=\mathsf t_{1,m}+\beta\mathsf t_{0,m-1}+\delta_{m,1}(\beta+\epsilon_1\epsilon_2\epsilon_3\mathbf c)\beta\mathbf c,\\
&\boldsymbol{\tau}_\beta(\mathsf T_{1,0}(X))=\mathsf T_{1,0}(X)+\beta\mathsf T_{0,-1}(X),\\
&\boldsymbol{\tau}_\beta(\mathsf t_{2,0})=\mathsf t_{2,0}+2\beta\mathbf t_{1,-1}+(\beta^2-2\beta\epsilon_1\epsilon_2\epsilon_3\mathbf c)\mathsf t_{0,-2},
\end{split}
\end{equation}
where $n\in \mathbb Z$ and $m\in \mathbb N$.
In particular, modulo $\mathbf c$, the automorphism $\boldsymbol{\tau}_\beta:\mathsf Y^{(K)}\cong \mathsf Y^{(K)}$ agrees with the automorphism ${\tau}_\beta:\mathsf L^{(K)}\cong \mathsf L^{(K)}$ defined in \eqref{eqn: shift automorphism of L}.
\end{remark}

\subsection{Coproduct of \texorpdfstring{$\mathsf Y^{(K)}$}{Yk}}
Obviously $\mathsf Y^{(K)}$ inherits a $\mathbb Z$-grading from $\mathsf A^{(K)}_\pm$ such that $\deg \mathsf T_{n,m}(X)=\deg \mathsf t_{n,m}=m-n$. Therefore we have completed tensor product $\mathsf Y^{(K)}\widetilde\otimes \mathsf Y^{(K)}$.

Straightforward computation shows that $\Delta_{\mathsf W}$ maps the subalgebra $\mathsf Y^{(K)}\subset \mathfrak U(\mathsf W^{(K)}_{\infty})[\epsilon_2^{-1}]$ to the subalgebra $\mathsf Y^{(K)}\widetilde\otimes \mathsf Y^{(K)}\subset \mathfrak U(\mathsf W^{(K)}_{\infty})\widetilde\otimes \mathfrak U(\mathsf W^{(K)}_{\infty})[\epsilon_2^{-1}]$.

\begin{definition}\label{def: YY coproduct}
We define the $\mathbb C[\epsilon_1,\epsilon_2]$-algebra homomorphism $\Delta_{\mathsf Y}:\mathsf Y^{(K)}\to \mathsf Y^{(K)}\widetilde\otimes \mathsf Y^{(K)}$ by the restriction of $\Delta_{\mathsf W}$ to $\mathsf Y^{(K)}$.
\end{definition}

$\Delta_{\mathsf Y}$ is uniquely determined by its action on a set of generators as follows
\begin{equation}\label{eqn: YY coproduct}
\boxed{
\begin{aligned}
&\Delta_{\mathsf Y}(\mathsf T_{0,n}(X))=\square(\mathsf T_{0,n}(X)),\;X\in \gl_K,\quad \Delta_{\mathsf Y}(\mathsf t_{0,n})=\square(\mathsf t_{0,n}),\quad (n\in \mathbb Z) \\
&\Delta_{\mathsf Y}(\mathsf T_{1,0}(E^a_b))=\square(\mathsf T_{1,0}(E^a_b))+\epsilon_1\sum_{n= 0}^{\infty}(\mathsf T_{0,n}(E^c_b)\otimes \mathsf T_{0,-n-1}(E^a_c)-\mathsf T_{0,n}(E^a_c)\otimes \mathsf T_{0,-n-1}(E^c_b)),\\
&\Delta_{\mathsf Y}(\mathsf t_{2,0})=\square(\mathsf t_{2,0})-2\epsilon_1\sum_{n= 0}^{\infty}(n+1)(\mathsf T_{0,n}(E^a_b)\otimes \mathsf T_{0,-n-2}(E^b_a)+\epsilon_1\epsilon_2 \mathsf t_{0,n}\otimes \mathsf t_{0,-n-2}),\\
&\Delta_{\mathsf Y}(\mathbf c)=\square(\mathbf c),
\end{aligned}}
\end{equation}
where $\square(X)=X\otimes 1+1\otimes X$. 

$\Delta_{\mathsf Y}$ is coassociative, i.e. the image of $(\Delta_{\mathsf Y}\otimes 1)\circ \Delta_{\mathsf Y}$ is contained in the intersection between $(\mathsf Y^{(K)}\widetilde\otimes \mathsf Y^{(K)})\widetilde\otimes \mathsf Y^{(K)}$ and $\mathsf Y^{(K)}\widetilde\otimes (\mathsf Y^{(K)}\widetilde\otimes \mathsf Y^{(K)})$ in the sense of Lemma \ref{lem: compare different completions} and $(\Delta_{\mathsf Y}\otimes 1)\circ \Delta_{\mathsf Y}=(1\otimes \Delta_{\mathsf Y})\circ \Delta_{\mathsf Y}$. 

Moreover, the composition $\mathsf Y^{(K)}\to \mathsf Y^{(K)}/(\mathbf c=0)\cong \mathsf L^{(K)}$ with the augmentation $\mathfrak{C}_{\mathsf L}$ gives an augmentation $\mathfrak C_{\mathsf Y}:\mathsf Y^{(K)}\to \mathbb C[\epsilon_1,\epsilon_2]$ which is a counit for $\Delta_{\mathsf Y}$, i.e. $(\mathfrak C_{\mathsf Y}\otimes 1)\circ \Delta_{\mathsf Y}=\mathrm{id}=( 1\otimes \mathfrak C_{\mathsf Y})\circ \Delta_{\mathsf Y}$.

\bigskip On the other hand, it is easy to see that the map $ \Delta_{\infty}:\mathsf A^{(K)}\to\mathsf A^{(K)}\widetilde\otimes \mathfrak U(\mathsf W^{(K)}_\infty)[\epsilon_2^{-1}]$ in \eqref{eqn: AW(infinity) coproduct} factors through the subalgebra $\mathsf A^{(K)}\widetilde\otimes \mathsf Y^{(K)}$, so $\Delta_{\infty}$ is induced from a mixed coproduct map
\begin{align}\label{eqn: mixed AY coproduct}
    \Delta_{\mathsf A,\mathsf Y}: \mathsf A^{(K)}\to\mathsf A^{(K)}\widetilde\otimes \mathsf Y^{(K)}.
\end{align}
It follows from Proposition \ref{prop: AW coproduct compatible with WW coproduct} that $\Delta_{\mathsf A,\mathsf Y}$ is the restriction of $\Delta_{\mathsf Y}$ to the subalgebra $\mathsf A^{(K)}$.

The co-associativity of $\Delta_{\mathsf Y}$ together with the compatibility \eqref{eqn: D_Y compatible with mixed coproduct} implies that $\mathsf A^{(K)}$ is a comodule of $\mathsf Y^{(K)}$, i.e. the two ways to map $\mathsf A^{(K)} \to \mathsf A^{(K)} \widetilde{\otimes}\mathsf Y^{(K)}\widetilde{\otimes}\mathsf Y^{(K)}$ agree:
\begin{equation}
(\Delta_{\mathsf A,\mathsf Y} \otimes 1 ) \circ \Delta_{\mathsf A,\mathsf Y} = (1 \otimes \Delta_{\mathsf Y}) \circ \Delta_{\mathsf A,\mathsf Y}.
\end{equation}

\begin{proposition}\label{prop: Y^K is a bialgebra}
The coproduct $\Delta_{\mathsf Y}$ and the counit $\mathfrak C_{\mathsf Y}$ make $\mathsf Y^{(K)}$ a bialgebra, and the mixed coproduct $\Delta_{\mathsf A,\mathsf Y}$ makes $\mathsf A^{(K)}$ a comodule of $\mathsf Y^{(K)}$. 
\end{proposition}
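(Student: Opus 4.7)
The plan is to reduce everything to statements about the mode algebra $\mathfrak U(\mathsf W^{(K)}_\infty)$ using the embedding $\Psi_\infty:\mathsf Y^{(K)}\hookrightarrow \mathfrak U(\mathsf W^{(K)}_\infty)[\epsilon_2^{-1}]$ established in Theorem \ref{thm: Psi_infty is injective}, together with the fact (noted right after Definition \ref{def: YY coproduct}) that $\Delta_{\mathsf Y}$ is by definition the restriction of the vertex-algebra coproduct $\Delta_{\mathsf W}$. The first step is to verify the counit axiom $(\mathfrak C_{\mathsf Y}\otimes 1)\circ\Delta_{\mathsf Y}=\mathrm{id}=(1\otimes\mathfrak C_{\mathsf Y})\circ\Delta_{\mathsf Y}$. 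Since both sides are $\mathbb C[\epsilon_1,\epsilon_2]$-algebra homomorphisms, it suffices to check equality on a generating set; inspecting the explicit formulas \eqref{eqn: YY coproduct} and recalling that $\mathfrak C_{\mathsf Y}$ annihilates every generator $\mathsf T_{n,m}(X)$, $\mathsf t_{n,m}$ and also $\mathbf c$, each extra ``cross-term'' in $\Delta_{\mathsf Y}$ is of the form $\mathcal O_1\otimes\mathcal O_2$ with at least one of the $\mathcal O_i$ in the augmentation ideal of $\mathfrak C_{\mathsf Y}$, hence vanishes after applying $\mathfrak C_{\mathsf Y}\otimes 1$ or $1\otimes\mathfrak C_{\mathsf Y}$.

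The second step, which is the heart of the proof, is coassociativity of $\Delta_{\mathsf Y}$. By part (1) of Theorem \ref{thm: matrix extended W(infty)} the vertex-algebra coproduct $\Delta_{\mathcal W}$ is coassociative, so the induced mode coproduct $\Delta_{\mathsf W}$ satisfies $(\Delta_{\mathsf W}\otimes 1)\circ\Delta_{\mathsf W}=(1\otimes\Delta_{\mathsf W})\circ\Delta_{\mathsf W}$ inside a suitable triple completion of $\mathfrak U(\mathsf W^{(K)}_\infty)^{\otimes 3}$. Because $\Delta_{\mathsf Y}$ is the restriction of $\Delta_{\mathsf W}$ via $\Psi_\infty\otimes\Psi_\infty$, applying $\Psi_\infty^{\otimes 3}$ to both $(\Delta_{\mathsf Y}\otimes 1)\circ\Delta_{\mathsf Y}$ and $(1\otimes\Delta_{\mathsf Y})\circ\Delta_{\mathsf Y}$ gives equal elements of $\mathfrak U(\mathsf W^{(K)}_\infty)^{\widetilde\otimes 3}$. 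To conclude the desired equality in $\mathsf Y^{(K)}\widetilde\otimes\mathsf Y^{(K)}\widetilde\otimes\mathsf Y^{(K)}$, one first checks that both iterated coproducts land in the same completion (invoking Lemma \ref{lem: compare different completions}), then uses the fact that $\Psi_\infty^{\otimes 3}$ is injective graded component by graded component, which follows from the injectivity of $\Psi_\infty$ itself. I expect this verification of the completion compatibility to be the main technical obstacle; the cleanest route is to inspect the two iterated coproducts on the generators in \eqref{eqn: YY coproduct} and observe that each matrix coefficient of $(\Delta_{\mathsf Y}\otimes 1)\circ\Delta_{\mathsf Y}$ and $(1\otimes\Delta_{\mathsf Y})\circ\Delta_{\mathsf Y}$, when paired against a fixed element of the continuous dual of $\mathsf Y^{(K)}\widetilde\otimes\mathsf Y^{(K)}\widetilde\otimes\mathsf Y^{(K)}$, is a finite sum, so both sides are well-defined in the common completion.

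Finally, the comodule property of $\mathsf A^{(K)}$ follows formally from coassociativity of $\Delta_{\mathsf Y}$ together with the compatibility \eqref{eqn: D_Y compatible with mixed coproduct}, which reads $\Delta_{\mathsf Y}\circ i=(i\otimes 1)\circ\Delta_{\mathsf A,\mathsf Y}$. Apply the injective map $i\otimes 1\otimes 1$ to the desired identity $(\Delta_{\mathsf A,\mathsf Y}\otimes 1)\circ\Delta_{\mathsf A,\mathsf Y}=(1\otimes\Delta_{\mathsf Y})\circ\Delta_{\mathsf A,\mathsf Y}$: the left side becomes $(\Delta_{\mathsf Y}\otimes 1)\circ\Delta_{\mathsf Y}\circ i$ after pushing $i\otimes 1$ through both $\Delta_{\mathsf A,\mathsf Y}$ applications using \eqref{eqn: D_Y compatible with mixed coproduct}, while the right side similarly becomes $(1\otimes\Delta_{\mathsf Y})\circ\Delta_{\mathsf Y}\circ i$. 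These agree by coassociativity of $\Delta_{\mathsf Y}$, and injectivity of $i\otimes 1\otimes 1$ (which is immediate from the PBW theorems for $\mathsf A^{(K)}$ and $\mathsf Y^{(K)}$, Theorem \ref{thm: PBW} and Theorem \ref{thm: PBW for Y}) then yields the desired comodule identity. The counit compatibility $(\mathfrak C_{\mathsf Y}\otimes 1)\circ\Delta_{\mathsf A,\mathsf Y}=\mathrm{id}_{\mathsf A^{(K)}}$ is again a direct inspection on the generators listed in \eqref{eqn: AW(infinity) coproduct}, completing the proof.
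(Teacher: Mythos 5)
Your proposal is correct and follows essentially the same route as the paper: the paper likewise obtains coassociativity of $\Delta_{\mathsf Y}$ by restricting the coassociative $\Delta_{\mathsf W}$ along the embedding $\Psi_\infty$ (with the completion bookkeeping handled via Lemma \ref{lem: compare different completions}), verifies the counit $\mathfrak C_{\mathsf Y}$ on the generators of \eqref{eqn: YY coproduct}, and deduces the comodule identity for $\mathsf A^{(K)}$ formally from coassociativity together with the compatibility \eqref{eqn: D_Y compatible with mixed coproduct}. Your write-up merely makes explicit the injectivity inputs ($\Psi_\infty$ from Theorem \ref{thm: Psi_infty is injective}, and $i\otimes 1\otimes 1$ from the PBW theorems) that the paper leaves implicit.
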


With some patience, one can compute the formula of coproduct for more elements, for example:
\begin{align*}
\Delta_{\mathsf Y}(\mathsf t_{1,n})=\square(\mathsf t_{1,n})+n\epsilon_1\epsilon_2\epsilon_3\mathsf t_{0,n-1}\otimes \mathbf c,
\end{align*}
and
\begin{align*}
\Delta_{\mathsf Y}(\mathsf t_{2,n})&=\square(\mathsf t_{2,n})+n\epsilon_1\epsilon_2\epsilon_3\mathsf t_{1,n-1}\otimes \mathbf c+\frac{n(n-1)}{2}(\epsilon_1\epsilon_2\epsilon_3)^2\mathsf t_{0,n-2}\otimes \mathbf c^2\\
&-\epsilon_1\sum_{m= 0}^{\infty}(n+2m+2)(\mathsf T_{0,m+n}(E^a_b)\otimes \mathsf T_{0,-m-2}(E^b_a)+\epsilon_1\epsilon_2 \mathsf t_{0,m+n}\otimes \mathsf t_{0,-m-2})\\
&-\epsilon_1\sum_{m= 0}^{n-1}\frac{(m+1)(m+2)}{n+1}(\mathsf T_{0,m}(E^a_b)\otimes \mathsf T_{0,-m-2+n}(E^b_a)+\epsilon_1\epsilon_2 \mathsf t_{0,m}\otimes \mathsf t_{0,-m-2+n}).
\end{align*}

\subsection{Meromorphic coproduct of \texorpdfstring{$\mathsf Y^{(K)}$}{Yk}}

\begin{proposition}\label{prop: Y^K is vertex comodule}
The meromorphic coproduct $\Delta_{\mathsf A}(w): \mathsf A^{(K)}\to \mathsf A^{(K)}\otimes\mathsf A^{(K)}(\!(w^{-1})\!)$ extends to a $\mathbb C[\epsilon_1,\epsilon_2]$-algebra homomorphism $\Delta_{\mathsf Y}(w): \mathsf Y^{(K)}\to \mathsf Y^{(K)}\otimes \mathsf A^{(K)}(\!(w^{-1})\!)$ such that
\begin{equation}\label{eqn: YA meromorphic coproduct}
\begin{split}
\Delta_{\mathsf Y}(w)(\mathsf T_{0,-n}(E^a_b))&=\mathsf T_{0,-n}(E^a_b)\otimes 1+\sum_{m=0}^{\infty} \binom{-n}{m} w^{-n-m} 1\otimes\mathsf T_{0,m}(E^a_b),\\
\Delta_{\mathsf Y}(w)(\mathbf c)&=\mathbf c\otimes 1.
\end{split}
\end{equation}
The pair $(\mathsf{Y}^{(K)}, \Delta_{\mathsf Y}(w))$ is a vertex comodule of the vertex coalgebra $(\mathsf A^{(K)},\Delta_{\mathsf A}(w),\mathfrak{C}_{\mathsf A})$ over the base ring $\mathbb C[\epsilon_1,\epsilon_2]$. Moreover, the natural map $\Psi_{\infty}:\mathsf Y^{(K)}\hookrightarrow U(\mathsf W^{(K)}_{\infty})[\epsilon_3^{-1}]$ intertwines between their vertex comodule structures with respect to vertex coalgebras $\mathsf A^{(K)}$ and $U_+(\mathsf W^{(K)}_{\infty})[\epsilon_3^{-1}]$.
\end{proposition}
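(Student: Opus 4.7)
The map $\Delta_{\mathsf Y}(w)$ is already specified in \eqref{eqn: YA meormorphic coproduct} as the composition $(1\otimes S_{\mathsf Y}(w))\circ\Delta_{\mathsf Y}$, with $S_{\mathsf Y}(w)=S(w)\circ p$ factoring through $p:\mathsf Y^{(K)}\twoheadrightarrow\mathsf L^{(K)}$, and is an algebra homomorphism as a composition of such. The formulas on the generators $\mathsf T_{0,-n}(E^a_b)$ and $\mathbf c$ follow immediately: $\mathbf c$ is primitive and lies in $\ker p$, so $\Delta_{\mathsf Y}(w)(\mathbf c)=\mathbf c\otimes 1$; meanwhile $\Delta_{\mathsf Y}(\mathsf T_{0,-n}(E^a_b))=\square(\mathsf T_{0,-n}(E^a_b))$ by \eqref{eqn: YY coproduct}, and $S(w)(\mathsf T_{0,-n}(E^a_b))=\sum_{m\ge 0}\binom{-n}{m}w^{-n-m}\mathsf T_{0,m}(E^a_b)$ follows from the shift $x_i\mapsto x_i+w$ in the Dunkl realization (cf.\ \eqref{eqn: meromorphic coproduct finite N}). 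That $\Delta_{\mathsf Y}(w)$ extends $\Delta_{\mathsf A}(w)$ in the sense $\Delta_{\mathsf Y}(w)\circ i=(i\otimes 1)\circ\Delta_{\mathsf A}(w)$ is compatibility~(3) of the introduction, checkable on $\mathsf A^{(K)}$-generators by comparing \eqref{eqn: AA coproduct} with \eqref{eqn: YY coproduct}.

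For the vertex comodule axioms, the counit $(1\otimes\mathfrak C_{\mathsf A})\circ\Delta_{\mathsf Y}(w)=\mathrm{id}$ at leading order in $w^{-1}$ follows from the counit $\mathfrak C_{\mathsf Y}$ of the bialgebra $\mathsf Y^{(K)}$ together with the observation that $\mathfrak C_{\mathsf A}\circ S_{\mathsf Y}(w)$ evaluated at $w=\infty$ equals $\mathfrak C_{\mathsf Y}$. The substantive axiom is the coassociativity $(1\otimes\Delta_{\mathsf A}(z))\circ\Delta_{\mathsf Y}(w)=(\Delta_{\mathsf Y}(w)\otimes 1)\circ\Delta_{\mathsf Y}(z+w)$. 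Unwinding both sides using $\Delta_{\mathsf Y}(w)=(1\otimes S_{\mathsf Y}(w))\circ\Delta_{\mathsf Y}$, the coassociativity of $\Delta_{\mathsf Y}$ (Proposition \ref{prop: Y^K is a bialgebra}), and the factorization $(p\otimes p)\circ\Delta_{\mathsf Y}=\Delta_{\mathsf L}\circ p$, the identity reduces to
\begin{equation*}
\Delta_{\mathsf A}(z)\circ S(w)=(S(w)\otimes S(z+w))\circ\Delta_{\mathsf L}
\end{equation*}
as maps $\mathsf L^{(K)}\to\mathsf A^{(K)}\otimes\mathsf A^{(K)}(\!(z^{-1},w^{-1})\!)$. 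This is the content of the vertex comodule axiom for $\mathsf L^{(K)}$ (Theorem \ref{thm: loop Yangian is a vertex comodule}) after unwinding $\Delta_{\mathsf L}(w)=(1\otimes S(w))\circ\Delta_{\mathsf L}$; alternatively it can be verified directly on generators using \eqref{eqn: LL coproduct} and \eqref{eqn: AA coproduct}, where the underlying geometric fact is that the shift $x\mapsto x+w$ composed with the coproduct $x\mapsto x^{(1)}\otimes 1+1\otimes(x^{(2)}+z)$ agrees with the double shift $x\mapsto(x^{(1)}+w)\otimes 1+1\otimes(x^{(2)}+z+w)$.

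For compatibility with $\Psi_\infty$: Definition \ref{def: YY coproduct} realizes $\Delta_{\mathsf Y}$ as the restriction of $\Delta_{\mathsf W}$ to $\mathsf Y^{(K)}$, giving $(\Psi_\infty\otimes\Psi_\infty)\circ\Delta_{\mathsf Y}=\Delta_{\mathsf W}\circ\Psi_\infty$ (cf.\ Proposition \ref{prop: AW coproduct compatible with WW coproduct}). Applying $(1\otimes S_{\mathsf W}(w))$ on the target and using that $\Psi_\infty$ intertwines $S_{\mathsf Y}(w)$ with the $\mathcal W$-algebra shift $S_{\mathsf W}(w)$ — verifiable on low-order generators and extended by density via Proposition \ref{prop: Psi_inf has dense image} — yields $(\Psi_\infty\otimes\Psi_\infty)\circ\Delta_{\mathsf Y}(w)=\Delta_{\mathsf W}(w)\circ\Psi_\infty$. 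The main obstacle in the whole argument is tracking the $\mathbf c$-linear corrections in $\Delta_{\mathsf Y}$ (e.g.\ the term $n\epsilon_1\epsilon_2\epsilon_3\mathsf t_{0,n-1}\otimes\mathbf c$ in $\Delta_{\mathsf Y}(\mathsf t_{1,n})$) under the iterated compositions: these are killed whenever $S_{\mathsf Y}(w)$ acts on the $\mathbf c$-carrying slot, but the bookkeeping must be done systematically to justify the clean reduction to the $\mathbf c=0$ case covered by Theorem \ref{thm: loop Yangian is a vertex comodule}.
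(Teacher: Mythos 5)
Your proposal is correct and, for the construction and the comodule axioms, follows the same route as the paper: define $\Delta_{\mathsf Y}(w)=(1\otimes S(w))\circ(1\otimes p)\circ\Delta_{\mathsf Y}$, read off \eqref{eqn: YA meromorphic coproduct} from \eqref{eqn: YY coproduct}, and get the counit axiom from counitality of $\Delta_{\mathsf Y}$ (note the identity $\mathfrak C_{\mathsf A}\circ S_{\mathsf Y}(w)=\mathfrak C_{\mathsf Y}$ is exact, not merely ``at leading order in $w^{-1}$''). Where you differ is the coassociativity step: the paper checks it directly on generators, ``similar to Theorem \ref{thm: loop Yangian is a vertex comodule}'', whereas you reduce it formally — using coassociativity of $\Delta_{\mathsf Y}$ (Proposition \ref{prop: Y^K is a bialgebra}) and $(p\otimes p)\circ\Delta_{\mathsf Y}=\Delta_{\mathsf L}\circ p$ — to the single identity $\Delta_{\mathsf A}(z)\circ S(w)=(S(w)\otimes S(z+w))\circ\Delta_{\mathsf L}$, which is equivalent to the comodule axiom of Theorem \ref{thm: loop Yangian is a vertex comodule} after applying $\mathfrak C_{\mathsf L}\otimes 1\otimes 1$ and using counitality of $\Delta_{\mathsf L}$. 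This buys you a genuine reuse of the $\mathbf c=0$ case instead of a second generator computation; the only thing to say explicitly is that the maps being composed extend to the completed tensor products and that the equalities are meant as expansions of a common element in $\mathsf Y^{(K)}\otimes\mathsf A^{(K)}\otimes\mathsf A^{(K)}[\![z^{-1},w^{-1},(z+w)^{-1}]\!][z,w]$, which the gradings guarantee.

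The one soft spot is the $\Psi_\infty$-intertwining. As written, your argument routes through the identity $(1\otimes S_{\mathsf W}(w))\circ\Delta_{\mathsf W}=\Delta_{\mathcal W}(w)$ on the image of $\Psi_\infty$, which is essentially the statement being proved (it is compatibility \eqref{eqn: D_Y compatible with meromorphic coproducts} of the introduction) and is nowhere established; moreover $S_{\mathsf W}(w)$ is only defined in the paper modulo $(\epsilon_1,\mathsf c)$ in the proof of Proposition \ref{prop: linear degeneration}. The clean fix requires no density argument at all: both $(\Psi_\infty\otimes\Psi_\infty)\circ\Delta_{\mathsf Y}(w)$ and $\Delta_{\mathcal W}(w)\circ\Psi_\infty$ are algebra homomorphisms out of $\mathsf Y^{(K)}$, so it suffices to compare them on the generators $\mathsf T_{0,n}(X)$ $(n\in\mathbb Z)$, $\mathsf T_{1,0}(X)$, $\mathsf t_{2,0}$, $\mathbf c$, using \eqref{eqn: map to W(infinity)} and \eqref{eqn: image of t[2,0] in restricted modes} for the images and the explicit formula \eqref{eqn: meromorphic coproduct of U(V)} for the meromorphic coproduct of the restricted mode algebra; for single modes $A_{[n]}$ the two sides coincide term by term with \eqref{eqn: YA meromorphic coproduct}, and for the $\mathcal O(A,B;f)$ terms in $\Psi_\infty(\mathsf t_{2,0})$ and $\Psi_\infty(\mathsf T_{1,0}(X))$ the comparison is a finite computation. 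With that replacement the argument is complete.
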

\begin{proof}
Consider the natural $\mathbb Z$-graded $\mathbb C[\epsilon_1,\epsilon_2]$-algebra map $1\otimes S(w): \mathsf Y^{(K)}\otimes \mathsf L^{(K)}\to \mathsf Y^{(K)}\otimes \mathsf A^{(K)}(\!(w^{-1})\!)$, it is easy to to that it extends uniquely to a $\mathbb C[\epsilon_1,\epsilon_2]$-algebra map $1\otimes S(w): \mathsf Y^{(K)}\widetilde\otimes \mathsf L^{(K)}\to \mathsf Y^{(K)}\otimes \mathsf A^{(K)}(\!(w^{-1})\!)$, thus we define $\Delta_{\mathsf Y}(w)$ to be the composition of $\Delta_{\mathsf Y}$ followed by the projection $\mathsf Y^{(K)}\widetilde\otimes \mathsf Y^{(K)}\twoheadrightarrow \mathsf Y^{(K)}\widetilde\otimes \mathsf L^{(K)}$ and then applying $1\otimes S(w)$. The equation \eqref{eqn: YA meromorphic coproduct} follows from equation \eqref{eqn: YY coproduct}.

For the statement on vertex comodule structures, the counit axiom is the result of counital property of $\Delta_{\mathsf Y}$, and the coassociativity can be checked on the generators directly, the computation is similar to that of Theorem \ref{thm: loop Yangian is a vertex comodule} and we omit it.
\end{proof}

\subsection{A bimodule of \texorpdfstring{$\mathsf Y^{(K)}$}{Yk}}\label{subsec: bimodule}
In the Lemma \ref{lem: rho_N extends to loop Yangian} we have constructed the truncation $\rho_n: \mathsf L^{(K)}\twoheadrightarrow \mathrm{S}\mathbb H^{(K)}_n$, we will keep using the same notation $\rho_n$ for its precomposition with the projection $\mathsf Y^{(K)}\twoheadrightarrow \mathsf L^{(K)}$, and we also identify the spherical Cherednik algebra $\mathrm{S}\mathbb H^{(K)}_n$ with its image in $D(\mathbb C^{\times n}_{\mathrm{disj}})\otimes \gl_K^{\otimes n}$ under the Dunkl embedding.

We observe that the image of $$(\rho_n\otimes \widetilde\Psi_\infty)\circ \Delta_{\mathsf Y}: \mathsf Y^{(K)}\to \left( D(\mathbb C^{\times n}_{\mathrm{disj}})\widetilde\otimes U(\widetilde{\mathsf W}^{(K)}_\infty)\right)\otimes \gl_K^{\otimes n}[\epsilon_3^{-1}]$$
is contained in the subalgebra $D(\mathbb C^{\times n}_{\mathrm{disj}};\widetilde{\mathsf W}^{(K)}_\infty)\otimes  \gl_K^{\otimes n}[\epsilon_3^{-1}]$ (see the Definition \ref{def: differential operators valued in restricted modes} and Proposition \ref{prop: D(V) is a sub of completed tensor product}). In fact, the images of $\mathsf T_{0,n}(X), \mathsf t_{0,n}$ and $\mathbf c$ under the map $(\rho_n\otimes \widetilde\Psi_\infty)\circ \Delta_{\mathsf Y}$ are in the usual tensor product $D(\mathbb C^{\times n}_{\mathrm{disj}})\otimes U(\widetilde{\mathsf W}^{(K)}_\infty)\otimes \gl_K^{\otimes n}[\epsilon_3^{-1}]$. $D(\mathbb C^{\times n}_{\mathrm{disj}})\otimes U(\widetilde{\mathsf W}{(K)}_\infty)\otimes \gl_K^{\otimes n}$ is obviously contained in $D(\mathbb C^{\times n}_{\mathrm{disj}};\widetilde{\mathsf W}^{(K)}_\infty)\otimes  \gl_K^{\otimes n}$. For the generator $\mathsf t_{2,0}$, we can rewrite $(\rho_n\otimes \widetilde\Psi_\infty)\circ \Delta_{\mathsf Y}(\mathsf t_{2,0})$ as
\begin{align*}
    \rho_n(\mathsf t_{2,0})\otimes 1+1\otimes \widetilde\Psi_\infty(\mathsf t_{2,0})-2\epsilon_1\sum_{i=1}^n\mathcal O\left(U^{a(1)}_{b,-1}|0\rangle;\frac{1}{(x_i-z_1)^2}\right)E^b_a,
\end{align*}
so $(\rho_n\otimes \widetilde\Psi_\infty)\circ \Delta_{\mathsf Y}(\mathsf t_{2,0})$ is also in the subalgebra $D(\mathbb C^{\times n}_{\mathrm{disj}};\widetilde{\mathsf W}^{(K)}_\infty)\otimes  \gl_K^{\otimes n}[\epsilon_3^{-1}]$, thus the image of $(\rho_n\otimes \widetilde\Psi_\infty)\circ \Delta_{\mathsf Y}$ is contained in $D(\mathbb C^{\times n}_{\mathrm{disj}};\widetilde{\mathsf W}^{(K)}_\infty)\otimes  \gl_K^{\otimes n}[\epsilon_3^{-1}]$.

According to Proposition \ref{prop: D(V) bimodule}, the space $$\Hom_{\mathbb C[\epsilon_1,\epsilon_2]}(\widetilde{\mathsf W}^{(K)}_\infty,D(\mathbb C^{\times n}_{\mathrm{disj}})\widetilde\otimes \widetilde{\mathsf W}^{(K)}_\infty)\otimes  \gl_K^{\otimes n}$$
admits a $D(\mathbb C^{\times n}_{\mathrm{disj}};\widetilde{\mathsf W}^{(K)}_\infty)\otimes  \gl_K^{\otimes n}$ bimodule structure, and by precomposing with $(\rho_n\otimes \widetilde\Psi_\infty)\circ \Delta_{\mathsf Y}$, it is endowed with a $\mathsf Y^{(K)}$ bimodule structure. More precisely, the left action of $f\in \mathsf Y^{(K)}$ is the left multiplication on $D(\mathbb C^{\times n}_{\mathrm{disj}})\widetilde\otimes \widetilde{\mathsf W}^{(K)}_\infty$ by $(\rho_n\otimes \widetilde\Psi_\infty)\circ \Delta_{\mathsf Y}(f)$, and the the right action of $f\in \mathsf Y^{(K)}$ is the precomposing every map in the Hom space with the action of $(\rho_n\otimes \widetilde\Psi_\infty)\circ \Delta_{\mathsf Y}^{\mathrm{op}}(f)$ on $\widetilde{\mathsf W}^{(K)}_\infty$. Note that the right action uses the opposite coproduct because the order of expansion is opposite to that of the left action.

We can generalize the differential operator to pseudodifferential symbols, namely for any $\mathbb Z$-graded vertex algebra $\mathcal V$ we define a $\mathcal V$-valued pseudodifferential symbol in $n$ variables to be the linear combination of following terms
\begin{align*}
    D=\sum_{r\in \mathbb Z_{\ge 0}}\partial_{x_1}^{\mu_1-r_1}\cdots\partial_{x_n}^{\mu_n-r_n}\cdot D_r,\quad D_r\in \mathscr O(\mathbb C^{\times n}_{\mathrm{disj}})\widetilde\otimes \mathcal V,
\end{align*}
where $\mu_1,\cdots,\mu_n$ are formal variables. Let $\Psi\mathrm{DS}_n(\mathcal V)$ be the linear space of $\mathcal V$-valued pseudodifferential symbol in $n$ variables, then the Hom space
\begin{align}
    \mathsf M^{(K)}_n:=\Hom_{\mathbb C[\epsilon_1,\epsilon_2]}(\widetilde{\mathsf W}^{(K)}_\infty,\Psi\mathrm{DS}_n(\widetilde{\mathsf W}^{(K)}_\infty) )\otimes\gl_K^{\otimes n}
\end{align}
possesses a natural $\mathsf Y^{(K)}$ bimodule structure.

\begin{example}
The pseudodifferential Miura operator
\begin{align}
    \mathcal L_1(x):=(\epsilon_2\partial_x)^{\epsilon_1\mathsf c}+\epsilon_1\sum_{r\ge 1}(\epsilon_2\partial_x)^{\epsilon_1\mathsf c-r}\cdot \mathds U^{a(r)}_{b}(x)E^b_a
\end{align}
is an element in $\mathsf M^{(K)}_1$ such that its image under the truncation map $\pi_L:\widetilde{\mathsf W}^{(K)}_\infty\to \widetilde{\mathcal W}^{(K)}_L$ is $\epsilon_1^L\mathcal L_1^L(x)$, where $\mathcal L_1^L(x)$ is the Miura operator defined in \ref{eqn: finite L Miura operator}. More generally
\begin{align}
    \mathcal L_n(x_1,\cdots,x_n):=\mathcal L_1(x_1)\cdots\mathcal L_1(x_n)
\end{align}
is an element in $\mathsf M^{(K)}_n$.
\end{example}

\section{Miura Operators as Intertwiners}\label{sec: Miura Operators as Intertwiners}

In the end of last section, we introduce a bimodule $\mathsf M^{(K)}_n$ of the algebra $\mathsf Y^{(K)}$, and introduce the pseudodifferential Miura operator $\mathcal L_n(x_1,\cdots,x_n)\in\mathsf M^{(K)}_n$. The main result of this section is the following.
\begin{theorem}\label{thm: PsiDS Miura as intertwiner}
The pseudodifferential Miura operator $\mathcal L_n(x_1,\cdots,x_n)$ in the $\mathsf Y^{(K)}$ bimodule $\mathsf M^{(K)}_n$ intertwines between the left and the right $\mathsf Y^{(K)}$-actions, i.e. for all $f\in \mathsf Y^{(K)}$,
\begin{align}\label{eqn: infinity intertwiner}
    (\rho_n\otimes \widetilde\Psi_\infty)\circ \Delta_{\mathsf Y}(f)\cdot\mathcal L_n(x_1,\cdots,x_n)=\mathcal L_n(x_1,\cdots,x_n)\cdot(\rho_n\otimes \widetilde\Psi_\infty)\circ \Delta_{\mathsf Y}^{\mathrm{op}}(f).
\end{align}
\end{theorem}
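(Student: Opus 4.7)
The plan is to establish the intertwining identity \eqref{eqn: infinity intertwiner} by descending it to a finite truncation level and there bootstrapping from the elementary Miura operator at $L=1$.

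\textbf{Step 1: Reduction to finite $L$.} Applying the vertex algebra truncation $1\otimes\pi_L$ to the second tensor factor of both sides of \eqref{eqn: infinity intertwiner} transports the pseudodifferential symbol $\mathcal L_n(x_1,\ldots,x_n)$ to the genuine matrix-valued differential operator $\epsilon_1^{nL}\mathcal L_1^L(x_1)\cdots\mathcal L_1^L(x_n)$, where $\mathcal L_1^L(x)$ is the finite-$L$ Miura operator of \eqref{eqn: finite L Miura operator}. Using $\widetilde\Psi_L=\pi_L\circ\widetilde\Psi_\infty$ together with the induced finite-$L$ mixed coproduct $\widetilde\Delta_L^{\mathsf Y}:=(1\otimes\pi_L)\circ(1\otimes\widetilde\Psi_\infty)\circ\Delta_{\mathsf Y}$, the identity \eqref{eqn: infinity intertwiner} becomes a sequence of finite-$L$ intertwining identities. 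Since the joint kernel of $\{\pi_L\}_{L\ge 1}$ on each homogeneous component of $\widetilde{\mathsf W}^{(K)}_\infty$ is trivial (all correlators are polynomial in $\lambda$ by the construction of the integral form, and vanishing at infinitely many integers forces the polynomial to be zero), verifying the identity for all sufficiently large $L$ suffices.

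\textbf{Step 2: Multiplicative bootstrap in $L$.} The compatibility \eqref{compatible between coproducts_cd tilde} between $\widetilde\Delta_L$ and the vertex algebra splitting $\Delta_{L_1,L_2}\colon\widetilde{\mathcal W}^{(K)}_{L_1+L_2}\to \widetilde{\mathcal W}^{(K)}_{L_1}\otimes \widetilde{\mathcal W}^{(K)}_{L_2}$ implies that the finite Miura operator factors as $\mathcal L_1^{L_1+L_2}(x)=\mathcal L_1^{L_1}(x)\cdot \mathcal L_1^{L_2}(x)$ across disjoint Kac--Moody strata, and that stacking two intertwiners living in disjoint factors produces an intertwiner at level $L_1+L_2$: the ``new'' terms introduced by the upgraded coproduct $\widetilde\Delta_{L_1+L_2}$ are exactly those needed to commute through the product of Miura operators. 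Iterating reduces the finite-$L$ family to $L=1$.

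\textbf{Step 3: The base case $L=1$.} Here the operator reduces to the elementary expression $\tilde{\mathcal L}(x)=\epsilon_3\partial_x-\tilde J^a_b(x)E^b_a$ acting on a single copy of $V^{\kappa_{\bar\alpha}}(\gl_K)$. By Lemma \ref{lem: A_+ and t[0,-1] generates Y} together with Theorem \ref{thm: simpler definition, K>1}, it suffices to verify the intertwining on the generators $\mathbf c$, $\mathsf T_{0,n}(X)\;(n\in\mathbb Z)$, $\mathsf t_{0,-1}$, $\mathsf T_{1,0}(E^a_b)$ and $\mathsf t_{2,0}$. The central element $\mathbf c$ is trivial; for $\mathsf T_{0,n}(X)$ and $\mathsf t_{0,n}$ with $n\ge 0$ the identity collapses to the current--current OPE of $\tilde J^a_b$, while the negative-mode case $\mathsf t_{0,-1}$ is handled using the explicit extension $\rho_n(\mathsf t_{0,-m})=\epsilon_2^{-1}\sum_i x_i^{-m}$ from Lemma \ref{lem: rho_N extends to loop Yangian}.

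\textbf{Main obstacle.} The computationally heaviest verifications are those for $\mathsf T_{1,0}(E^a_b)$ and $\mathsf t_{2,0}$, which are quadratic and cubic respectively in the currents. Both sides of \eqref{eqn: infinity intertwiner} must be expanded, the normal-ordered contractions collected, and the resulting contact terms shown to cancel against the explicit cross-terms in the coproduct formulas \eqref{eqn: YY coproduct} (and their opposites acting on the right). These cancellations are in fact governed by the same algebraic identities underlying Propositions \ref{prop: map A to affine Kac-Moody} and \ref{prop: mixed coproduct L=1}: the Miura operator exchanges the left $\Delta_{\mathsf Y}$-action with the right $\Delta_{\mathsf Y}^{\mathrm{op}}$-action, so the difference between the two sides is precisely the $\Delta_{\mathsf Y}-\Delta_{\mathsf Y}^{\mathrm{op}}$ cocommutator evaluated on the second tensor factor via $\widetilde\Psi_1$, and the latter is engineered to match the commutator of $\tilde{\mathcal L}(x)$ with the left action.
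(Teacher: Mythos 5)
Your skeleton is the same as the paper's: reduce to finite truncation level $L$ by polynomiality of the coefficients in $\epsilon_1,\epsilon_2,\mathsf c$, bootstrap via the coproduct factorizations, and verify a base case on generators. Two points, however, need attention.

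First, your bootstrap only runs in the $L$-direction. Your Step 2 splits $\mathcal L_1^{L_1+L_2}(x)=\mathcal L_1^{L_1}(x)\mathcal L_1^{L_2}(x)$ in a single variable, and your Step 3 base case is the one-variable elementary operator; nowhere do you reduce the $n$-point operator $\mathcal L_n^L(x_1,\dots,x_n)$ to a product of one-point operators. This second reduction is genuinely needed and is not automatic from Step 2: it requires the factorization $\rho_n=(\rho_1\otimes\cdots\otimes\rho_1)\circ\Delta_{\mathsf Y}^{\,n-1}$ on the Cherednik side (Proposition \ref{prop: L coproduct compatible with DAHA coproduct}), together with the choice of expansion region $|x_1|<\cdots<|x_n|$ under which $\mathcal L_n^1$ is read as the ordered product $\mathcal L_1^1(x_n)\cdots\mathcal L_1^1(x_1)$ so that the intermediate coproduct legs can be passed one Miura factor at a time. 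The argument is parallel to the $L$-bootstrap, but it is a separate induction and must be stated.

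Second, the base case is where all the work lives, and you have asserted it rather than proved it. The heuristic that ``the difference between the two sides is the $\Delta_{\mathsf Y}-\Delta_{\mathsf Y}^{\mathrm{op}}$ cocommutator, engineered to match the commutator with $\tilde{\mathcal L}(x)$'' is not a proof: the discrepancy also receives contributions from commuting the modes of $\widetilde\Psi_1(f)$ past the current $J^a_b(x)$ sitting inside $\mathcal L_1^1(x)$, and these only cancel against the coproduct cross-terms after a nontrivial normal-ordering computation (splitting $J(x)=J(x)_++J(x)_-$ and tracking the contact terms). For $\mathsf T_{1,0}(E^a_b)$ (when $K>1$) and $\mathsf t_{2,0}$ (when $K=1$) this cancellation must be exhibited explicitly; it does not follow formally from Propositions \ref{prop: map A to affine Kac-Moody} and \ref{prop: mixed coproduct L=1}, though it uses the same techniques. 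Also note that since $\mathsf Y^{(K)}$ is generated by $\mathsf A^{(K)}_+$ and $\mathsf t_{0,-1}$, and since the chosen quadratic/cubic generators only generate $\mathsf A^{(K)}$ after localizing $\epsilon_2$ and $\epsilon_3$, you should invoke flatness of both sides over $\mathbb C[\epsilon_1,\epsilon_2]$ to justify working in the localization — this is implicit in your appeal to Theorem \ref{thm: simpler definition, K>1} but should be said.
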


Notice that $\widetilde\Psi_\infty$ maps $\mathsf A^{(K)}$ to the non-negative modes of $\widetilde{\mathsf W}^{(K)}_\infty$, thus for $f\in \mathsf A^{(K)}$ we have $\widetilde\Psi_\infty(f)|0\rangle=\mathfrak{C}_{\mathsf A}(f)|0\rangle$, whence
\begin{align*}
    (\rho_n\otimes \widetilde\Psi_\infty)\circ\Delta^{\mathrm{op}}_{\mathsf Y}(f)|0\rangle=\rho_n(f)\otimes|0\rangle.
\end{align*} 
$\mathsf N^{(K)}_n:= \Psi\mathrm{DS}_n(\widetilde{\mathsf W}^{(K)}_\infty) \otimes\gl_K^{\otimes n}$ possesses a natural $\mathsf A^{(K)}$ bimodule structure, such that the left action is $\mathsf A^{(K)}\ni f\mapsto $ left multiplication by $(\rho_n\otimes \widetilde\Psi_\infty)\circ \Delta_{\mathsf Y}(f)$ and the right action is $\mathsf A^{(K)}\ni f\mapsto $ right multiplication by $\rho_n(f)$. The above computation together with Theorem \ref{thm: PsiDS Miura as intertwiner} implies the following. 
\begin{corollary}
The pseudodifferential Miura operator acting on vacuum $\mathcal L_n(x_1,\cdots,x_n)|0\rangle$ is an element of $\mathsf N^{(K)}_n$ and it intertwines between left and right $\mathsf A^{(K)}$ actions, i.e. for all $f\in \mathsf A^{(K)}$,
\begin{align}\label{eqn: infinity intertwiner_DDCA}
    (\rho_n\otimes \widetilde\Psi_\infty)\circ \Delta_{\mathsf Y}(f)\cdot\mathcal L_n(x_1,\cdots,x_n)|0\rangle=\mathcal L_n(x_1,\cdots,x_n)|0\rangle\cdot\rho_n(f).
\end{align}
\end{corollary}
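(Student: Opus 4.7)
The corollary follows from Theorem~\ref{thm: PsiDS Miura as intertwiner} by specializing $f$ to lie in the subalgebra $\mathsf A^{(K)}\subset \mathsf Y^{(K)}$ and evaluating the resulting intertwining identity on the vacuum $|0\rangle$. First I would confirm that $\mathcal L_n(x_1,\dots,x_n)|0\rangle$ is a well-defined element of $\mathsf N^{(K)}_n$: each field $\mathds U^{a(r)}_b(x_i)$ acts on $|0\rangle$ by creation modes only (positive modes annihilate the vacuum), so $\mathcal L_1(x_i)|0\rangle$ is a $\widetilde{\mathsf W}^{(K)}_\infty$--valued power series in $x_i$, and composing the elementary Miura factors via the state--operator action yields an element of $\Psi\mathrm{DS}_n(\widetilde{\mathsf W}^{(K)}_\infty)\otimes \gl_K^{\otimes n}=\mathsf N^{(K)}_n$.

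Next I would apply Theorem~\ref{thm: PsiDS Miura as intertwiner} to $f\in \mathsf A^{(K)}\hookrightarrow \mathsf Y^{(K)}$ and evaluate both sides of \eqref{eqn: infinity intertwiner} at $|0\rangle$. The LHS reads directly $(\rho_n\otimes \widetilde\Psi_\infty)\circ \Delta_{\mathsf Y}(f)\cdot\mathcal L_n(x_1,\dots,x_n)|0\rangle$, i.e.\ the left $\mathsf A^{(K)}$-action on $\mathsf N^{(K)}_n$. The only real work is to identify the RHS with the right action. Here I would invoke the computation already displayed in the excerpt, namely that for $f\in \mathsf A^{(K)}$ one has $(\rho_n\otimes \widetilde\Psi_\infty)\circ \Delta^{\mathrm{op}}_{\mathsf Y}(f)\cdot |0\rangle = \rho_n(f)\otimes |0\rangle$. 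This rests on two inputs: first, the explicit formulae \eqref{eqn: Psi tilde} show $\widetilde\Psi_\infty(\mathsf A^{(K)})\subset U_+(\widetilde{\mathsf W}^{(K)}_\infty)$, so for any $f'\in \mathsf A^{(K)}$ the positive modes annihilate the vacuum and $\widetilde\Psi_\infty(f')|0\rangle=\mathfrak{C}_{\mathsf A}(f')|0\rangle$; second, applying $\mathfrak{C}_{\mathsf A}$ to the first tensor factor of \eqref{eqn: AW(infinity) coproduct} kills the $\square$--terms and all cross terms, giving the counit identity $(\mathfrak{C}_{\mathsf A}\otimes 1)\circ \Delta_{\mathsf A,\mathsf Y}=i$, the inclusion $\mathsf A^{(K)}\hookrightarrow \mathsf Y^{(K)}$. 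Writing $\Delta_{\mathsf A,\mathsf Y}(f)=\sum f_{(1)}\otimes f_{(2)}$ with $f_{(1)}\in\mathsf A^{(K)}$, $f_{(2)}\in \mathsf Y^{(K)}$, these combine to give $\sum \rho_n(f_{(2)})\otimes \mathfrak{C}_{\mathsf A}(f_{(1)})|0\rangle = \rho_n(f)\otimes|0\rangle$.

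Finally I would feed $\rho_n(f)\otimes|0\rangle$ into $\mathcal L_n$: since $\mathcal L_n$ acts only on the $\widetilde{\mathsf W}^{(K)}_\infty$-factor (via the VOA state--operator map), while $\rho_n(f)\in D(\mathbb C^{\times n}_{\mathrm{disj}})\otimes \gl_K^{\otimes n}$ is carried through as a scalar, the tensor-product conventions defining the bimodule structure on $\mathsf M^{(K)}_n$ (Proposition~\ref{prop: D(V) bimodule}) place this scalar on the right of the resulting pseudodifferential symbol, producing $\mathcal L_n(x_1,\dots,x_n)|0\rangle\cdot \rho_n(f)$. This is exactly the right $\mathsf A^{(K)}$-action, matching the RHS of \eqref{eqn: infinity intertwiner_DDCA}. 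The only genuinely delicate point is this last bookkeeping step — that $\rho_n(f)$ lands on the right rather than the left — and it is forced by the conventions of Proposition~\ref{prop: D(V) bimodule}; everything else is a direct specialization of Theorem~\ref{thm: PsiDS Miura as intertwiner} and the augmentation property of $\widetilde\Psi_\infty|_{\mathsf A^{(K)}}$.
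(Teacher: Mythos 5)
Your proposal is correct and follows essentially the same route as the paper: specialize Theorem \ref{thm: PsiDS Miura as intertwiner} to $f\in\mathsf A^{(K)}$, act on the vacuum, and use that $\widetilde\Psi_\infty(\mathsf A^{(K)})\subset U_+(\widetilde{\mathsf W}^{(K)}_\infty)$ together with the counit property $(\mathfrak C_{\mathsf A}\otimes 1)\circ\Delta_{\mathsf A,\mathsf Y}=i$ to reduce $(\rho_n\otimes\widetilde\Psi_\infty)\circ\Delta^{\mathrm{op}}_{\mathsf Y}(f)|0\rangle$ to $\rho_n(f)\otimes|0\rangle$, which is exactly the computation the paper records just before the corollary. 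The only nitpick is the phrase that applying $\mathfrak C_{\mathsf A}$ to the first factor "kills the $\square$-terms": it kills only the $x\otimes 1$ part and retains $1\otimes\Psi_\infty(x)$, which is precisely why the counit identity yields $i$ rather than $0$; your subsequent use of the identity is nevertheless correct.
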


To prove Theorem \ref{thm: PsiDS Miura as intertwiner}, notice that the difference between the left-hand-side and right-hand-side of \eqref{eqn: infinity intertwiner} can be written as
\begin{align*}
\sum_{r\in \mathbb Z^n} (\epsilon_2\partial_{x_1})^{\epsilon_1\mathsf c-r_1}\cdots (\epsilon_2\partial_{x_n})^{\epsilon_1\mathsf c-r_n}\cdot F_r,
\end{align*}
$F_r$ is an element in $\mathscr O(\mathbb C^{\times n}_{\mathrm{disj}})\widetilde\otimes \widetilde{\mathsf W}^{(K)}_\infty$ which can be written as $F_r=\sum_{m\in \mathbb Z_{\ge 0}} g_{r,m}\otimes v_{r,m}$ such that $v_{r,m}$ is a homogeneous element in $\widetilde{\mathsf W}^{(K)}_\infty$ of degree $-m$ and $g_{r,m}$ is a homogeneous function on $\mathbb C^{\times n}_{\mathrm{disj}}$ of degree $m+\deg f-\sum_{i=1}^n r_i$. To show \eqref{eqn: infinity intertwiner} holds, it is equivalent to showing that $v_{r,m}$ vanishes for all $r,m$. Since $v_{r,m}$ depends on $\epsilon_1,\epsilon_2,\mathsf c$ in a polynomial way, then it is enough to show that \eqref{eqn: infinity intertwiner} holds after applying the truncation $1\otimes \pi_L$ on both sides, for all $L\in \mathbb Z_{\ge 1}$. In other words, we need to prove that for all $f\in \mathsf Y^{(K)}$,
\begin{align}\label{eqn: finite L intertwiner}
    (\rho_n\otimes \widetilde\Psi_L)\circ \Delta_{\mathsf Y}(f)\cdot\mathcal L^L_n(x_1,\cdots,x_n)=\mathcal L^L_n(x_1,\cdots,x_n)\cdot(\rho_n\otimes \widetilde\Psi_L)\circ \Delta_{\mathsf Y}^{\mathrm{op}}(f).
\end{align}

\subsection{The case of elementary Miura operator}
Consider the elementary Miura operator\footnote{To match with the brane setting, the elementary Miura operator defined here is denoted by $\mathcal L^{0,1,0}_{0,1,0}(x)$ in \cite{gaiotto2022miura}.}
\begin{align}
    \mathcal L^1_1 (x)=\bar\alpha\partial_x+J^a_b(x)E^b_a,
\end{align}
where $\bar\alpha=\epsilon_2/\epsilon_1$ and $J^a_b(x)$ is the field of affine Kac-Moody vertex algebra $V^{\kappa_{\bar\alpha}}(\mathfrak{gl}_K)$ and $E^b_a$ is the elementary matrix of $\mathfrak{gl}_K$.
\begin{lemma}\label{lemma: Miura as intertwiner A}
The elementary Miura operator $\mathcal L^1_1 (x)$ intertwines the left and right $\mathsf Y^{(K)}$ actions, i.e. for all $f\in \mathsf Y^{(K)}$,
\begin{align}\label{eqn: elementary intertwiner}
    (\rho_1\otimes \widetilde\Psi_1)\circ \Delta_{\mathsf Y}(f)\cdot\mathcal L^1_1(x)=\mathcal L^1_1(x)\cdot(\rho_1\otimes \widetilde\Psi_1)\circ \Delta_{\mathsf Y}^{\mathrm{op}}(f).
\end{align}
\end{lemma}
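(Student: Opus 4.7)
The strategy is to reduce the identity to a finite set of generators of $\mathsf Y^{(K)}$ and then verify it by direct OPE computation. Write $\Phi := (\rho_1\otimes\widetilde\Psi_1)\circ \Delta_{\mathsf Y}$ and $\Phi^{\mathrm{op}} := (\rho_1\otimes\widetilde\Psi_1)\circ \Delta_{\mathsf Y}^{\mathrm{op}}$; both are $\mathbb C[\epsilon_1,\epsilon_2]$-algebra homomorphisms (the second because $\tau\circ\Delta_{\mathsf Y}$ is still an algebra map for the standard tensor product). Consequently the set $\mathcal I\subseteq\mathsf Y^{(K)}$ of elements satisfying \eqref{eqn: elementary intertwiner} is closed under multiplication: if $f,g\in\mathcal I$ then $\Phi(fg)\mathcal L^1_1 = \Phi(f)\Phi(g)\mathcal L^1_1 = \Phi(f)\mathcal L^1_1\Phi^{\mathrm{op}}(g) = \mathcal L^1_1\Phi^{\mathrm{op}}(f)\Phi^{\mathrm{op}}(g) = \mathcal L^1_1\Phi^{\mathrm{op}}(fg)$. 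By Lemma \ref{lem: A_+ and t[0,-1] generates Y} combined with Theorem \ref{thm: simpler definition, K>1} (respectively Proposition \ref{prop: simpler definition, K=1}), it then suffices to check the intertwining on the finite set $\{\mathsf T_{0,n}(X), \mathsf T_{1,0}(X), \mathsf t_{2,0}, \mathsf t_{1,1}, \mathsf t_{0,2}\}\cup\{\mathsf t_{0,-1}\}$ for $K>1$ and on an analogous small set for $K=1$; the central element $\mathbf c$ is then handled automatically via the gluing relation $\mathbf c = \frac{1}{K}[\mathsf t_{0,-1},\mathsf t_{0,1}]$.

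For the primitive generators $f = \mathsf T_{0,n}(X)$ with $n\ge 0$ and $f = \mathsf t_{0,-1}$ one has $\Delta_{\mathsf Y}(f) = f\otimes 1 + 1\otimes f = \Delta_{\mathsf Y}^{\mathrm{op}}(f)$, so the intertwining reduces to the commutator identity $[\Phi(f),\mathcal L^1_1(x)] = 0$. This splits into a Cherednik piece $[\rho_1(f),\bar\alpha\partial_x]$ and a Kac--Moody piece $[\widetilde\Psi_1(f), J^a_b(x)E^b_a]$, the latter computed through the OPE $J^a_b(z)J^c_d(w)\sim \frac{\bar\alpha\delta^c_b\delta^a_d + \delta^a_b\delta^c_d}{(z-w)^2}+\frac{\delta^c_b J^a_d(w)-\delta^a_d J^c_b(w)}{z-w}$. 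The two pieces cancel after invoking the identity $\bar\alpha + K = -\epsilon_3/\epsilon_1$, which matches the normalisation of $\widetilde\Psi_1$ read off from \eqref{eqn: Psi tilde} at $L=1$, namely $\widetilde\Psi_1(\mathsf T_{0,n}(E^a_b)) = J^a_{b,n} + \frac{\epsilon_1}{\epsilon_3}\delta^a_b J^c_{c,n}$ and $\widetilde\Psi_1(\mathsf t_{0,-1}) = -\frac{1}{\epsilon_3}J^c_{c,-1}$.

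For the remaining generators $\mathsf T_{1,0}(X)$ and $\mathsf t_{2,0}$ the coproduct \eqref{eqn: YY coproduct} contributes cross-terms involving negative-mode generators $\mathsf T_{0,-k-1}(Y)$ and $\mathsf t_{0,-k-2}$; applying $\rho_1\otimes\widetilde\Psi_1$ these become sums of products of the Cherednik monomials $x^{-k-1}$ with modes $J^a_{b,-k-1}$. The required cancellation in \eqref{eqn: elementary intertwiner} then follows from the same contour-integral calculations that appear in the proofs of Proposition \ref{prop: map A to affine Kac-Moody} (the vanishing of $[\Psi_1(\mathsf t_{2,0}),\Psi_1(\mathsf T_{1,0}(X))]$) and Proposition \ref{prop: mixed coproduct L=1} (the vanishing of $[\Delta_1(\mathsf T_{1,0}(X)),\Delta_1(\mathsf t_{2,0})]$); via the duality isomorphism of Section \ref{subsec: duality for W} those computations are essentially equivalent, after applying $\sigma$, to the identities needed here.

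The main obstacle is the case $f = \mathsf t_{2,0}$: since $\widetilde\Psi_1(\mathsf t_{2,0})$ contains the cubic Sugawara-type expression $\widetilde V_{-2}$ in the currents $J^a_b$, showing that the three-term contour-integral commutator of $\widetilde V_{-2}$ with $J^a_b(x)E^b_a$ precisely matches the quadratic negative-mode cross-terms in $\Delta_{\mathsf Y}(\mathsf t_{2,0})$ requires careful tracking of normal-ordering contributions and deformation of integration contours, exactly in the spirit of (but in a two-sided version of) the computation carried out in the proof of Proposition \ref{prop: map A to affine Kac-Moody}.
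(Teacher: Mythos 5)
Your overall skeleton matches the paper's: reduce to a small generating set using that both $\Phi=(\rho_1\otimes\widetilde\Psi_1)\circ\Delta_{\mathsf Y}$ and $\Phi^{\mathrm{op}}$ are algebra homomorphisms (so the set of $f$ satisfying \eqref{eqn: elementary intertwiner} is multiplicatively closed), then verify the identity on generators; your treatment of the primitive generators $\mathsf T_{0,n}(X)$ ($n\in\mathbb Z$) via the cancellation governed by $\bar\alpha+K=-\epsilon_3/\epsilon_1$ is correct and is exactly the "easy" part the paper omits. The paper's choice of generators is slightly different (it uses $\{\mathsf t_{2,0},\mathsf T_{0,n}(1)\}$ for $K=1$ and $\{\mathsf T_{1,0}(E^1_2),\mathsf T_{0,n}(E^a_b)\}$ for $K>1$, justified by Corollary \ref{cor: compare two DDCAs} after localizing $\epsilon_2,\epsilon_3$ plus flatness), but that difference is immaterial.

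The genuine gap is your handling of $\mathsf T_{1,0}(X)$ and $\mathsf t_{2,0}$. You assert that the required cancellation "follows from the same contour-integral calculations" as in Propositions \ref{prop: map A to affine Kac-Moody} and \ref{prop: mixed coproduct L=1}, being "essentially equivalent, after applying $\sigma$, to the identities needed here." This is not true. Those propositions verify commutation relations among the \emph{images of generators} inside $\mathfrak U(\widehat{\gl}_K^{\alpha})$ (resp. $\mathsf A^{(K)}\widetilde\otimes\mathfrak U(\widehat{\gl}_K^{\alpha})$), i.e.\ that $\Psi_1$ and $\Delta_1$ are algebra maps. The intertwining identity is a statement about left and right multiplication on the specific bimodule element $\mathcal L^1_1(x)=\bar\alpha\partial_x+J^a_b(x)E^b_a$, which contains the differential operator $\partial_x$; the cancellation that must occur pairs terms like $-2\epsilon_2\,\partial J(x)_+\cdot\partial_x-2\epsilon_2\,\partial_x\cdot\partial J(x)_-$ against $\epsilon_2[\partial_x^2,J(x)]$ and the normally ordered pieces coming from $\widetilde V_{-2}$, a configuration that simply does not appear in the homomorphism verifications. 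The duality $\sigma_\infty\circ\Psi\circ\sigma$ relates $\Psi_L$ to $\widetilde\Psi_L$; it does not convert a homomorphism check into an intertwiner check. The paper closes this gap by a direct, self-contained computation of the difference of the two actions of $\mathsf t_{2,0}$ (for $K=1$) and of $\mathsf T_{1,0}(E^1_2)$ (for $K>1$) on $\mathcal L^1_1(x)$, showing each vanishes. Since this computation is the actual content of the lemma (it is what singles out the Miura operator as the intertwiner), your proposal as written does not establish the result; you would need to carry out that computation rather than cite the earlier ones.
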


\begin{proof}
We need to check \eqref{eqn: elementary intertwiner} for a set of generators in $\mathsf A^{(K)}$. The set of generators that we choose depends on $K$:
\begin{itemize}
    \item For $K=1$, we choose $\mathsf t_{2,0}$ and $\mathsf T_{0,n}(1)$.
    \item For $K>1$, we choose $\mathsf T_{1,0}(E^1_2)$ and $\mathsf T_{0,n}(E^a_b)$.
\end{itemize}
In both of the cases, these subsets of element generate $\mathsf A^{(K)}$ after localizing $\epsilon_2$ and $\epsilon_3$ (see Corollary \ref{cor: compare two DDCAs}), and by the flatness of $\mathsf A^{(K)}$ and $\widetilde{\mathcal W}^{(K)}_L$ over the base $\mathbb C[\epsilon_1,\epsilon_2]$, it suffices to prove \eqref{eqn: elementary intertwiner} after localization. This justifies our choice of generators. 

For $\mathsf T_{0,n}(E^a_b)$, it is easy to compute the commutator between the left and right action, and we omit the details. It remains to compute for $\mathsf t_{2,0}$ when $K=1$, and for $\mathsf T_{1,0}(E^1_2)$ when $K>1$.\\

When $K=1$, 
\begin{align*}
    (\rho_1\otimes \widetilde\Psi_1)\circ \Delta_{\mathsf Y}(\mathsf t_{2,0})&=\epsilon_2\partial_x^2-\frac{\epsilon_1}{3\alpha}\sum_{k,l\in \mathbb Z}:J_{-k-l-2}J_kJ_l:+\frac{\epsilon_2}{\alpha}\sum_{n\ge 1}nJ_{-n-1}J_{n-1}-2\epsilon_1\partial J(x)_+,\\
    (\rho_1\otimes \widetilde\Psi_1)\circ \Delta^{\mathrm{op}}_{\mathsf Y}(\mathsf t_{2,0})&=\epsilon_2\partial_x^2-\frac{\epsilon_1}{3\alpha}\sum_{k,l\in \mathbb Z}:J_{-k-l-2}J_kJ_l:+\frac{\epsilon_2}{\alpha}\sum_{n\ge 1}nJ_{-n-1}J_{n-1}+2\epsilon_1\partial J(x)_-.
\end{align*}
where $J(z)$ is the Heisenberg field with OPE $J(z)J(w)\sim\frac{-\alpha}{(z-w)^2}$. The difference between the left action and the right action of $\mathsf t_{2,0}$ on $\mathcal L^1_1(x)$ is
\begin{align*}
&\epsilon_2[\partial_x^2,J(x)]+2\epsilon_1:\partial J(x)J(x):+\epsilon_2(\partial^2J(x)_- -\partial^2J(x)_+)\\
&-2\epsilon_2 \partial J(x)_+\cdot \partial_x-2\epsilon_2  \partial_x\cdot \partial J(x)_- -2\epsilon_1:\partial J(x)J(x):
\end{align*}
which vanishes by direct computation.\\

When $K>1$, 
\begin{align*}
    (\rho_1\otimes \widetilde\Psi_1)\circ \Delta_{\mathsf Y}(\mathsf T_{1,0}(E^1_2))&=\epsilon_2E^1_2\partial_x-\epsilon_1\sum_{m\ge 0}J^c_{2,-m-1}J^1_{c,m}+\epsilon_1[J^a_b(x)_+E^b_a,E^1_2],\\
    (\rho_1\otimes \widetilde\Psi_1)\circ \Delta^{\mathrm{op}}_{\mathsf Y}(\mathsf T_{1,0}(E^1_2))&=\epsilon_2E^1_2\partial_x-\epsilon_1\sum_{m\ge 0}J^c_{2,-m-1}J^1_{c,m}+\epsilon_1[E^1_2,J^a_b(x)_-E^b_a].
\end{align*}
where $J^a_b(z)$ is the affine Kac-Moody field with OPE $J^a_b(z)J^c_d(w)\sim\frac{\bar\alpha\delta^a_d\delta^c_b+\delta^a_b\delta^c_d}{(z-w)^2}$. The difference between the left action and the right action of $\mathsf T_{1,0}(E^1_2)$ on $\mathcal L^1_1(x)$ is
\begin{align*}
&[\epsilon_2E^1_2\partial_{x}-\epsilon_1\sum_{m\ge 0}J^{c}_{2,-m-1}J^{1}_{c,m},\mathcal L^1_1(x)]+\epsilon_2[J^a_b(x)_+E^b_a,E^1_2]\cdot\partial_x+\epsilon_2\partial_x\cdot[J^a_b(x)_-E^b_a,E^1_2]\\
&+\epsilon_1 [J^a_b(x)_+E^b_a,E^1_2]\cdot J^c_d(x)E^d_c+ \epsilon_1 J^c_d(x)E^d_c\cdot [J^a_b(x)_-E^b_a,E^1_2].
\end{align*}
The above expression can be further expanded to
\begin{align*}
&\epsilon_2(\partial J^a_2(x)_+E^1_a+\partial J^1_b(x)_-E^b_2)-\epsilon_1(J^a_2(x)_+J^1_b(x)E^b_a-J^c_2(x)_+J^a_c(x)E^1_a)\\
&-\epsilon_1(J^c_b(x)J^1_c(x)_-E^b_2-J^a_2(x)J^1_b(x)_-E^b_a)|0\rangle-\epsilon_2(\partial J^a_2(x)_+E^1_a+\partial J^1_b(x)_-E^b_2)-\epsilon_1\partial J^1_2(x)\\
&+\epsilon_1(J^1_c(x)_+J^a_2(x)E^c_a-J^c_2(x)_+J^a_c(x)E^1_a-J^1_b(x)J^c_2(x)_-E^b_c+J^a_b(x)J^1_a(x)_-E^b_2)\\
=&\epsilon_1 [J^1_c(x)_{+},J^a_2(x)_+]E^c_a-\epsilon_1 [J^1_c(x)_{-},J^a_2(x)_-]E^c_a -\epsilon_1 \partial J^1_2(x),
\end{align*}
which vanishes by direct computation.
\end{proof}

\subsection{Proof of Theorem \ref{thm: PsiDS Miura as intertwiner}}

As we have explained, it remains to prove \eqref{eqn: finite L intertwiner} for every $f\in \mathsf Y^{(K)}$. Notice that \eqref{eqn: finite L intertwiner} can be reformulated as
\begin{align}\label{eqn: finite L intertwiner_reform}
    (\rho_n\otimes \widetilde\Psi_L)\circ \Delta_{\mathsf Y}(f)\cdot\mathcal L^L_n(x_1,\cdots,x_n)=\mathcal L^L_n(x_1,\cdots,x_n)\cdot( \widetilde\Psi_L\otimes\rho_n)\circ \Delta_{\mathsf Y}(f).
\end{align}
Let us bootstrap this equation from the elementary case \eqref{eqn: elementary intertwiner}. First, we consider the case when $L=1$ and $n>1$. Using the compatibility between the coproducts (Proposition \ref{prop: L coproduct compatible with DAHA coproduct}) we have
\begin{align*}
    \rho_n=(\rho_1\otimes\cdots\otimes\rho_1)\circ\Delta_{\mathsf Y}^{n-1},
\end{align*}
this equation should be understood as expanding rational functions in the order $|x_1|<\cdots<|x_n|$. Under such ordering, the Miura operator $\mathcal L^1_n(x_1,\cdots,x_n)$ should be expanded as $\mathcal L^1_1(x_n)\cdots \mathcal L^1_1(x_1)$. Therefore we have
\begin{align*}
    &(\rho_n\otimes \widetilde\Psi_1)\circ \Delta_{\mathsf Y}(f)\cdot\mathcal L^1_n(x_1,\cdots,x_n)=(\rho_1\otimes\cdots\otimes\rho_1\otimes \widetilde\Psi_1)\circ \Delta_{\mathsf Y}^n(f)\cdot\mathcal L^1_1(x_n)\cdots \mathcal L^1_1(x_1)\\
    &=\mathcal L^1_1(x_n)(\rho_1\otimes\cdots\otimes\widetilde\Psi_1\otimes \rho_1)\circ \Delta_{\mathsf Y}^n(f)\cdot\mathcal L^1_1(x_{n-1})\cdots \mathcal L^1_1(x_1)\\
    &=\mathcal L^1_1(x_n)\cdots \mathcal L^1_1(x_1)\cdot(\widetilde\Psi_1\otimes\rho_1\otimes\cdots\otimes\rho_1)\circ \Delta_{\mathsf Y}^n(f)\\
    &=\mathcal L^1_n(x_1,\cdots,x_n)\cdot( \widetilde\Psi_1\otimes\rho_n)\circ \Delta_{\mathsf Y}(f).
\end{align*}
This proves \eqref{eqn: finite L intertwiner_reform} in the case $L=1$. For general case, we proceed similarly using the compatibility between affine Yangian coproduct and W-algebra coproducts. In fact, by the definition of $\Delta_{\mathsf Y}$ we have
\begin{align*}
    \widetilde\Psi_L=(\widetilde\Psi_1\otimes\cdots\otimes\widetilde\Psi_1)\circ\Delta_{\mathsf Y}^{L-1}.
\end{align*}
Then
\begin{align*}
    &(\rho_n\otimes \widetilde\Psi_L)\circ \Delta_{\mathsf Y}(f)\cdot\mathcal L^L_n(x_1,\cdots,x_n)\\
    &=(\rho_n\otimes\widetilde\Psi_1\otimes\cdots\otimes \widetilde\Psi_1)\circ \Delta_{\mathsf Y}^L(f)\cdot\mathcal L^1_n(x_1,\cdots,x_n)^{[1]}\cdots \mathcal L^1_n(x_1,\cdots,x_n)^{[L]}\\
    &=\mathcal L^1_n(x_1,\cdots,x_n)^{[1]}(\widetilde\Psi_1\otimes\rho_1\otimes\widetilde\Psi_1\otimes\cdots\otimes \widetilde\Psi_1)\circ \Delta_{\mathsf Y}^L(f)\cdot\mathcal L^1_n(x_1,\cdots,x_n)^{[2]}\cdots \mathcal L^1_n(x_1,\cdots,x_n)^{[L]}\\
    &=\mathcal L^1_n(x_1,\cdots,x_n)^{[1]}\cdots \mathcal L^1_n(x_1,\cdots,x_n)^{[L]}\cdot(\widetilde\Psi_1\otimes\cdots\otimes \widetilde\Psi_1\otimes\rho_n)\circ \Delta_{\mathsf Y}^L(f)\\
    &=\mathcal L^L_n(x_1,\cdots,x_n)\cdot( \widetilde\Psi_L\otimes\rho_n)\circ \Delta_{\mathsf Y}(f).
\end{align*}
This proves \eqref{eqn: finite L intertwiner_reform} for all $n$ and $L$, whence finishes the proof of Theorem \ref{thm: PsiDS Miura as intertwiner}.

\subsection{Application: the correlators of Miura operators}

\begin{corollary}\label{cor: correlator of Miura}
The correlator $\langle 0|\mathcal L_n(x_1,\cdots,x_n)|0\rangle$ equals to
\begin{align}
       (y_1\cdots y_n\mathbf e)^{\epsilon_1\mathsf c}\in \Psi\mathrm{DS}_n(\gl_K^{\otimes n})
\end{align}
where $y_i$ are elements of the Cherednik algebra $\mathbb H^{(K)}_n$, and $y_1\cdots y_n\mathbf e$ is regarded as an element of $(D_{\epsilon_2}(\mathbb C^{\times n}_{\mathrm{disj}})\otimes \gl_K^{\otimes n})^{\mathfrak{S}_n}$ via the Dunkl representation.
\end{corollary}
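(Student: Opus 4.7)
The plan is a three-step reduction. First I would pass from the formal variable $\lambda := \epsilon_1\mathsf c$ to positive integer values, using that both sides lie in $\Psi\mathrm{DS}_n(\gl_K^{\otimes n})$ with leading term $(\epsilon_2\partial_{x_1})^\lambda\cdots(\epsilon_2\partial_{x_n})^\lambda$ and coefficients polynomial in $\lambda$. For the right-hand side this is exactly Proposition \ref{prop: formal powers of PsiDS}. For the left-hand side I would deduce it from the analog of Proposition \ref{prop: correlators are polynomials in L} transported across the duality $\sigma_\infty:\mathsf W^{(K)}_\infty\cong\widetilde{\mathsf W}^{(K)}_\infty$ of Corollary \ref{cor: W(0,infty,0)}: every correlator $\langle 0|\mathds U^{a_1(r_1)}_{b_1,k_1}\cdots\mathds U^{a_m(r_m)}_{b_m,k_m}|0\rangle$ is a polynomial in $(\epsilon_1,\epsilon_2,\mathsf c)$, so after dividing out the leading $\lambda$-exponential the expansion is polynomial in $\lambda$ term by term. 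Hence it suffices to verify the identity at every $L=\epsilon_1\mathsf c\in\mathbb Z_{>0}$.

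At $\lambda=L$, the truncation $1\otimes\pi_L$ identifies $\mathcal L_n|_{\lambda=L}=\epsilon_1^{Ln}\widetilde{\mathcal L}^L_n$, where $\widetilde{\mathcal L}^L_n(x_1,\ldots,x_n)$ is the ordinary rectangular Miura in $\widetilde{\mathcal W}^{(K)}_L\subset V^{\kappa_{\bar\alpha}}(\gl_K)^{\otimes L}$. Because currents attached to different copies of the Kac–Moody algebra commute, the $Ln$ elementary factors can be rearranged to give
\begin{equation*}
    \widetilde{\mathcal L}^L_n(x_1,\ldots,x_n)=\prod_{j=1}^L \widetilde{\mathcal L}^1_n(x_1,\ldots,x_n)^{[j]},
\end{equation*}
and the vacuum expectation factorises across the decoupled copies. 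This gives $\langle 0|\widetilde{\mathcal L}^L_n|0\rangle=C^L$ in $(D_{\epsilon_2}(\mathbb C^{\times n}_{\mathrm{disj}})\otimes\gl_K^{\otimes n})^{\mathfrak S_n}$, where $C:=\langle 0|\widetilde{\mathcal L}^1_n|0\rangle$ is the single-copy correlator and the power is in the noncommutative algebra. Note that $(y_1\cdots y_n\mathbf e)^L=(y_1\cdots y_n)^L\mathbf e$, using that the $y_i$'s mutually commute and $\mathbf e y_1\cdots y_n=y_1\cdots y_n\mathbf e$, so matching $\epsilon_1^{Ln}C^L$ with $(y_1\cdots y_n\mathbf e)^L$ reduces to matching $\epsilon_1^n C$ with $\rho_n(y_1\cdots y_n\mathbf e)$.

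The last step is a direct computation at $L=1$. Expanding $\widetilde{\mathcal L}^1_n=\prod_i(\bar\alpha\partial_{x_i}+J^a_b(x_i)E^b_{a;i})$ and applying Wick's theorem using $\langle 0|J^a_b(x_i)J^c_d(x_j)|0\rangle=(\bar\alpha\delta^a_d\delta^c_b+\delta^a_b\delta^c_d)/(x_i-x_j)^2$ produces a sum indexed by subsets $S\subset\{1,\ldots,n\}$ of even size (those sites where one picks the $JE$ term) together with a complete pairing of $S$ that contributes a product of Casimirs $\Omega_{ij}$ and identities. On the other side, expanding $y_i=\epsilon_2\partial_{x_i}+\epsilon_1\sum_{j\neq i}s_{ij}\Omega_{ij}/(x_i-x_j)$, multiplying out $y_1\cdots y_n\mathbf e$, and using $s_{ij}\mathbf e=\mathbf e$, $(s_{ij}\Omega_{ij})^2=\mathrm{id}$, and $\partial_i s_{ij}=s_{ij}\partial_j$ produces a sum with the same combinatorial structure. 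The two expansions should match term by term after the symmetrizer collapses the permutation factors.

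The main obstacle is this final combinatorial bookkeeping: keeping track of the signs and the identity-versus-Casimir contributions on both sides, as the $s_{ij}\Omega_{ij}$ products produced by Dunkl contain both $\Omega$-like and $\mathrm{id}$-like pieces once rewritten in the spherical image. If the direct Wick matching turns out to be inefficient, a uniqueness argument can replace it: $\epsilon_1^n C$ is a $\mathfrak S_n$-invariant pseudo-differential symbol whose leading symbol is $\epsilon_2^n\partial_{x_1}\cdots\partial_{x_n}$, and by the $n$-point bootstrap of Lemma \ref{lemma: Miura as intertwiner A} obtained via Proposition \ref{prop: L coproduct compatible with DAHA coproduct}, it intertwines with the Dunkl image of $\mathsf A^{(K)}$ (or rather $\mathsf Y^{(K)}$) in a way that matches the intertwining property of $\rho_n(y_1\cdots y_n\mathbf e)$; the task would then be to show that leading symbol plus intertwining uniquely characterise the element in $(D_{\epsilon_2}(\mathbb C^{\times n}_{\mathrm{disj}})\otimes\gl_K^{\otimes n})^{\mathfrak S_n}$.
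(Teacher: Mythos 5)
Your reductions to integer $L$ and then to $L=1$ are sound and coincide with the paper's own first two steps. The gap is in the final step. Your primary route --- ``applying Wick's theorem'' to $\langle 0|\prod_i(\bar\alpha\partial_{x_i}+J^a_b(x_i)E^b_{a;i})|0\rangle$ and summing over complete pairings of a subset $S$ --- fails for $K>1$: the currents $J^a_b$ of $V^{\kappa_{\bar\alpha}}(\gl_K)$ are not free fields, and their $m$-point functions are not determined by the two-point function; the first-order pole $J^{[X,Y]}(w)/(z-w)$ in the OPE contributes connected terms beyond pairings. The paper itself flags exactly this immediately after Corollary \ref{cor: correlator of Miura_trigonometric}, where the Wick-type formula of Bernard et al.\ is observed to be false for $K>1$ because the current correlator ``is not given by a simple Wick contraction formula.'' So the term-by-term matching you describe cannot even be set up as stated, and no amount of combinatorial bookkeeping will rescue it.

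Your fallback --- characterising $\epsilon_1^nC$ by its leading symbol together with the intertwining property --- is the right idea and is essentially what the paper does, but the step you defer (``show that leading symbol plus intertwining uniquely characterise the element'') is the actual content of the proof and is missing. The paper's argument: set $r_n=\mathcal G_n-\mathcal F_n$ and induct on $n$. Commuting with $x_n$ extracts the coefficient of $\partial_{x_n}$, which equals $\bar\alpha$ times the $(n-1)$-point quantity on each side, so $[r_n,x_i]=0$ by the induction hypothesis; capping the intertwiner identity \eqref{eqn: infinity intertwiner_DDCA} with $\langle 0|$ at $f=\mathsf t_{2,0}$ kills the cross terms (they involve $\widetilde\Psi_1(\mathsf t_{2,0})$ of negative degree and $\langle 0|J^a_b(x_i)_+=0$) and gives $[\rho_n(\mathsf t_{2,0}),r_n]=0$; together with $[\rho_n(\mathsf t_{0,2}),r_n]=0$ this yields $[\rho_n(\mathsf t_{1,1}),r_n]=0$, while $r_n$ has weight $n\neq 0$ under the Euler field, forcing $r_n=0$. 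Without this (or another complete uniqueness argument) the proof is not finished; note in particular that leading symbol alone does not suffice --- one genuinely needs the commutation with all the $x_i$ and with $\rho_n(\mathsf t_{2,0})$, plus the homogeneity constraint.
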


\begin{proof}
It suffices to show that $\langle 0|\mathcal L^L_n(x_1,\cdots,x_n)|0\rangle=\epsilon_1^{-nL}(y_1\cdots y_n\mathbf e)^{L}$ for all $L\in \mathbb Z_{\ge 1}$. Since it is obvious from the definition of Miura operator that $\langle 0|\mathcal L^L_n(x_1,\cdots,x_n)|0\rangle=(\langle 0|\mathcal L^1_n(x_1,\cdots,x_n)|0\rangle)^L$, it is enough to prove the case $L=1$.

Let $\mathcal G_n(x_1,\cdots,x_n)=\langle 0|\mathcal L^1_n(x_1,\cdots,x_n)|0\rangle$ and $\mathcal F_n(x_1,\cdots,x_n)=\epsilon_1^{-n}y_1y_2\cdots y_n\mathbf e$ and let $r_n=\mathcal G_n-\mathcal F_n\in (D_{\epsilon_2}(\mathbb C^{\times n}_{\mathrm{disj}})\otimes \gl_K^{\otimes n})^{\mathfrak{S}_n}$, we will show by induction on $n$ that $r_n=0$. The $n=1$ case is obvious. Now assume that $r_i=0$ for all $i<n$, then we claim that 
\begin{align*}
    [r_n,\rho_n(\mathsf t_{0,m})]=0\text{ for all }m\in \mathbb Z_{\ge 0}.
\end{align*}
To prove the claim, it suffices to regard $r_n$ as an element in $D_{\epsilon_2}(\mathbb C^{\times n}_{\mathrm{disj}})\otimes \gl_K^{\otimes n}$ and show that $[r_n,x_i]=0$ for all $1\le i\le n$. By symmetry, we only need to show that $[r_n,x_n]=0$. It is easy to see that $[\mathcal G_n,x_n]=\bar\alpha\mathcal G_{n-1}(x_1,\cdots,x_{n-1})$, thus we need to show that $[\mathcal F_n,x_n]=\bar\alpha\mathcal F_{n-1}(x_1,\cdots,x_{n-1})$.

Let $\bar{y}_1,\cdots,\bar{y}_{n-1}$ be the (differential operators part) generators of $\mathbb{H}_{n-1}^{(K)}$, then the image of $\epsilon_1^{-n}y_1y_2\cdots y_n\mathbf e$ in the Dunkl representation can be written as
\begin{align}
    \left(\frac{\bar{y}_1}{\epsilon_1}+\frac{1}{x_1-x_n}s^x_{1,n}\right)\cdots \left(\frac{\bar{y}_{n-1}}{\epsilon_1}+\frac{1}{x_{n-1}-x_n}s^x_{n-1,n}\right)\left(\bar\alpha\partial_{x_n}+\sum_{i=1}^{n-1}\frac{1}{x_n-x_i}s^x_{n,i}\right)[s_{i,j}^x\mapsto \Omega_{i,j}],
\end{align}
where $\bar{y}_1,\cdots ,\bar{y}_{n-1}$ are identified with their corresponding Dunkl operators, and the last term $[s_{i,j}^x\mapsto \Omega_{i,j}]$ means whenever a permutation operator $s_{i,j}^x$ moves to the rightmost side, it becomes the quadratic Casimir $\Omega_{i,j}$. It is easy to see from the above expression that $\mathcal F_n$ can be schematically written as
\begin{align}
    \mathcal F_n(x_1,\cdots,x_n)=\bar\alpha\mathcal F_{n-1}(x_1,\cdots,x_{n-1})\partial_{x_n}+\text{diff. ops. which do not involve }\partial_{x_n}.
\end{align}
Hence $[\mathcal F_n,x_n]=\bar\alpha\mathcal F_{n-1}(x_1,\cdots,x_{n-1})$ and our previous claim is proven.

Next we cap the equation \eqref{eqn: infinity intertwiner_DDCA} with covacuum $\langle 0|$, and take $f=\mathsf t_{2,0}$, and get:
\begin{equation}
    \begin{split}
        \langle 0|\mathcal L_n(x_1,\cdots,x_n)|0\rangle \rho_n(\mathsf t_{2,0})&=\rho_n(\mathsf t_{2,0})\langle 0|\mathcal L_n(x_1,\cdots,x_n)|0\rangle\\
        +\langle 0|\widetilde\Psi_1(\mathsf t_{2,0})\mathcal L_n(x_1,\cdots,x_n)|0\rangle
    &- 2\epsilon_1\langle 0|\sum_{i=1}^n\partial J^a_b(x_i)_{+}E^b_{a,i}\mathcal L_n(x_1,\cdots,x_n)|0\rangle.
    \end{split}
\end{equation}
Since $\deg \widetilde\Psi_1(\mathsf t_{2,0})=-2$ and $\langle 0|J^a_b(x_i)_{+}=0$, only the first term on the right hand side of the above equation is nonzero. In other word $[\rho_n(\mathsf t_{2,0}),\mathcal G_n]=0$. On the other hand, $[\rho_n(\mathsf t_{2,0}),y_1\cdots y_n\mathbf e]=0$ in $\mathrm{S}\mathbb H_{n}^{(K)}$, thus $[\rho_n(\mathsf t_{2,0}),\mathcal F_n]=0$ and whence $[\rho_n(\mathsf t_{2,0}),r_n]=0$.

Finally, combining the two commutation relations $[\rho_n(\mathsf t_{2,0}),r_n]=0$ and $[\rho_n(\mathsf t_{0,2}),r_n]=0$, we see that  $[\rho_n(\mathsf t_{1,1}),r_n]=0$. However, $\rho_n(\mathsf t_{1,1})$ is the sum of the Euler vector field $\sum_{i=1}^nx_i\partial_{x_i}$ and a scalar $n/2$, so $[\rho_n(\mathsf t_{1,1}),r_n]=-nr_n$ because $r_n$ has conformal dimension $n$. This forces $r_n$ to be zero, which proves our induction step.
\end{proof}

Let $u\in \mathbb C$, then the correlation function
\begin{align*}
    \mathcal T_u(x_1,\cdots,x_n):=\langle 0|(\bar\alpha\partial_{x_1}+u+J^a_b(x_1)E^b_{a,1})\cdots(\bar\alpha\partial_{x_n}+u+J^a_b(x_n)E^b_{a,n})|0\rangle
\end{align*}
can be written as
\begin{align}
    \mathcal T_u(x_1,\cdots,x_n)=\prod_{i=1}^n(u+\epsilon_1^{-1}y_i)\mathbf e\in \mathrm{S}\mathbb H_{n}^{(K)}
\end{align}
via the Dunkl representation of the right hand side. This can be deduced from Corollary \ref{cor: correlator of Miura} as follows:
\begin{align*}
\mathcal T_u(x_1,\cdots,x_n)&=e^{-\bar\alpha^{-1}u(x_1+\cdots+x_n)}\mathcal T_0(x_1,\cdots,x_n)e^{\bar\alpha^{-1}u(x_1+\cdots+x_n)}\\
&=\frac{1}{\epsilon_1^{n}}e^{-\bar\alpha^{-1}u(x_1+\cdots+x_n)}y_1y_2\cdots y_ne^{\bar\alpha^{-1}u(x_1+\cdots+x_n)}\mathbf e\\
&=\frac{1}{\epsilon_1^{n}}(y_1+\epsilon_1 u)(y_2+\epsilon_1 u)\cdots (y_n+\epsilon_1 u)\mathbf e.
\end{align*}
Then it follows that the operators $\mathcal T_u(x_1,\cdots,x_n)$ commute with each other, i.e. for $u,v\in \mathbb C$, $[\mathcal T_u,\mathcal T_v]=0$. In other words, $\mathcal T_u(x_1,\cdots,x_n)$ is the generating function of $n$ commuting differential operators of degrees $1,2,\cdots,n$. The degree one and two differential operators in this hierarchy are
\begin{align}
    \sum_{i=1}^n\bar\alpha\partial_{x_{i}}\quad\text{ and }\quad \sum_{i<j}^n \left[\bar\alpha^2\partial_{x_i}\partial_{x_j}+\frac{\bar\alpha \Omega_{ij}+1}{(x_i-x_j)^2}\right].
\end{align}
And we recognize that $\mathcal T_u(x_1,\cdots,x_n)$ is the generating function of higher Calogero-Sutherland Hamiltonians of $n$-particle system with $\mathfrak{gl}_K$ internal symmetry on a plane. In the case of $K=1$, Feynman showed the commutativity $[\mathcal T_u,\mathcal T_v]=0$ by elementary method, see \cite{polychronakos2019feynman}.\\

Similarly, we can also look at Miura operators on a cylinder coordinate $\theta=\log(x)$:
\begin{equation}
\begin{split}
\bar\alpha \partial_\theta+\sum_{m\in \mathbb Z}J^a_{b,m}E^b_a e^{-m\theta}=x\mathcal L^1_1(x).
\end{split}
\end{equation}
The generating function of their correlators
\begin{align}
\mathbb T_u(\theta_1,\cdots,\theta_n):=\langle 0|(\bar\alpha \partial_{\theta_1}+u+J^a_{b}(\theta_1)E^b_{a,1})\cdots (\bar\alpha \partial_{\theta_n}+u+J^a_{b}(\theta_n)E^b_{a,n})|0\rangle
\end{align}
is expected to be related to higher Calogero-Sutherland Hamiltonians of $n$-particle system with $\mathfrak{gl}_K$ internal symmetry on a cylinder \cite{polychronakos2019feynman}. This is indeed the case. Recall the definition of Cherednik operator \cite{bernard1993yang}.
\begin{definition}\label{def: Cherednik operators}
The Cherednik operator $\mathcal D_i,i=1,\cdots,N$ in the extended trignometric Cherednik algebra $\mathbb H^{(K)}_N$ is defined as\footnote{In the literature \cite{bernard1993yang}, the rescaled Cherednik operator $\epsilon_2^{-1}\mathcal D_i$ is denoted by $\widehat{D}_i$, and $D_i$ in \emph{loc. cit.} is the operator $\epsilon_2^{-1}x_iy_i$ in this paper. Note that the parameter $\lambda$ in \cite{bernard1993yang} matches with our $\bar\alpha^{-1}$.}
    \begin{align}
        \mathcal D_i=x_iy_i-\epsilon_1\sum_{j<i}s_{ij}\Omega_{ij}.
    \end{align}
One can easily verify that Cherednik operators commute with each other, i.e. $[\mathcal D_i,\mathcal D_j]=0$. 
\end{definition}

\begin{corollary}\label{cor: correlator of Miura_trigonometric}
The correlator between Miura operators on a cylinder can be written as
\begin{align}\label{eqn: correlator of Miura_trigonometric}
    \mathbb T_u(\theta_1,\cdots,\theta_n)=\prod_{i=1}^n(u+\epsilon_1^{-1}\mathcal D_i)\mathbf e
\end{align}
via the Dunkl representation of the right hand side. 
\end{corollary}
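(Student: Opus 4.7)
The plan is to reduce the cylindrical correlator $\mathbb T_u$ to the planar Miura correlator of Corollary \ref{cor: correlator of Miura} via the change of variable $\theta_i=\log x_i$ followed by a conjugation that absorbs $u$, and then to match the resulting element of $\mathrm{S}\mathbb H^{(K)}_n$ with $\prod_i(u+\epsilon_1^{-1}\mathcal D_i)\mathbf{e}$ by an algebraic identity in the trigonometric Cherednik algebra.

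First, I rewrite each cylindrical Miura operator as $\bar\alpha\partial_{\theta_i}+J^a_b(\theta_i)E^b_{a,i}=x_i\,\mathcal L^1_1(x_i)$, so that $\mathbb T_u=\langle 0|\prod_{i=1}^n(x_i\,\mathcal L^1_1(x_i)+u)|0\rangle$. Introduce $D_u:=\prod_{i=1}^n x_i^{u/\bar\alpha}$. Using $D_u^{-1}\partial_{x_i}D_u=\partial_{x_i}+(u/\bar\alpha)x_i^{-1}$ and that $D_u$ commutes with vertex-algebra elements and with $|0\rangle$, one checks $D_u^{-1}(x_i\mathcal L^1_1(x_i))D_u=x_i\mathcal L^1_1(x_i)+u$, whence $\mathbb T_u=D_u^{-1}\mathbb T_0 D_u$. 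Since $x_j$ commutes with $\mathcal L^1_1(x_i)$ for $i\neq j$, $\mathbb T_0=(\prod_i x_i)\langle 0|\mathcal L^1_n(x_1,\ldots,x_n)|0\rangle=\epsilon_1^{-n}(\prod_i x_i)\,y_1\cdots y_n\,\mathbf{e}$ by Corollary \ref{cor: correlator of Miura} at $L=1$. Conjugating back, using that $D_u$ is $\mathfrak{S}_n$-invariant and commutes with the $s_{ij}\Omega_{ij}$-pieces of $y_i$ so that $D_u^{-1}y_iD_u=y_i+\epsilon_1 u x_i^{-1}$ while $D_u$ fixes $x_i$ and $\mathbf{e}$, I obtain
\begin{align*}
\mathbb T_u=\epsilon_1^{-n}\Bigl(\prod_{i=1}^n x_i\Bigr)\prod_{i=1}^n(y_i+\epsilon_1 u x_i^{-1})\,\mathbf{e}.
\end{align*}

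It then remains to prove the identity
\begin{align*}
\prod_{i=1}^n(\epsilon_1 u+\mathcal D_i)=\Bigl(\prod_{i=1}^n x_i\Bigr)\prod_{i=1}^n(y_i+\epsilon_1 u x_i^{-1})\quad\text{in }\mathbb H^{(K)}_n[\![u]\!].
\end{align*}
The same conjugation by $D_u$ sends $\mathcal D_i\mapsto \mathcal D_i+\epsilon_1 u$ (since $\sum_{j<i}s_{ij}\Omega_{ij}$ is $\mathfrak{S}_n$-invariant), so the identity for generic $u$ reduces by $\mathrm{Ad}(D_u)$ to the $u=0$ specialization $\prod_i\mathcal D_i=(\prod_i x_i)\,y_1\cdots y_n$ in $\mathbb H^{(K)}_n$. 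The latter is proved by induction on $n$, the cases $n=1,2$ being direct: the step $k-1\to k$ uses $\mathcal D_k=x_k y_k-\epsilon_1\sum_{j<k}s_{jk}\Omega_{jk}$ together with $[y_1\cdots y_{k-1},x_k]=\epsilon_1\sum_{j=1}^{k-1}y_1\cdots y_{j-1}s_{jk}\Omega_{jk}y_{j+1}\cdots y_{k-1}$ (from $[y_i,x_k]=\epsilon_1 s_{ik}\Omega_{ik}$), and transports each $s_{jk}$ rightward past $y_{j+1}\cdots y_k$ via $s_{jk}y_l=y_l s_{jk}$ for $l\neq j,k$, $s_{jk}y_k=y_j s_{jk}$, and $[y_i,y_j]=0$ to arrange the required cancellation with the term $-\epsilon_1 y_1\cdots y_{k-1}\sum_{j<k}s_{jk}\Omega_{jk}$ coming from $\mathcal D_k$.

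The main obstacle is the $n\geq 3$ step of this induction: whereas for $n=2$ the Casimir $\Omega_{12}$ trivially commutes with $y_2$, for larger $n$ the Casimir $\Omega_{jk}$ fails to commute with $y_l$ for $l\neq j,k$, because $[\Omega_{jk},\Omega_{lm}]\neq 0$ when $\{l,m\}\cap\{j,k\}\neq\emptyset$. One must therefore carefully collect the $\Omega$-cross-terms and invoke the $\mathfrak{gl}_K$-invariance identity $[\Omega_{ij}+\Omega_{ik}+\Omega_{jk},\,\cdot\,]=0$ on the relevant tensor factors to secure the cancellation. This is the classical identity underlying the Dunkl--Cherednik commuting hierarchy in the trigonometric DAHA; once in hand, the corollary follows immediately.
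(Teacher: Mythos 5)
Your argument is correct and takes essentially the same route as the paper: reduce to $u=0$ via the shift automorphism $y_i\mapsto y_i+\epsilon_1 u x_i^{-1}$ (your conjugation by $D_u$), invoke Corollary \ref{cor: correlator of Miura} at $L=1$, and establish $\prod_{i=1}^n\mathcal D_i=\bigl(\prod_{i=1}^n x_i\bigr)y_1\cdots y_n$ by the telescoping induction you describe. The only correction concerns your final paragraph: the ``main obstacle'' you flag for $n\ge 3$ does not exist. In the presentation of $\mathbb H^{(K)}_n$ as $\mathbb C[\mathfrak S_n]\ltimes\bigl(\mathbb C\langle x_i,y_i\rangle\otimes\mathfrak{gl}_K^{\otimes n}\bigr)$, the tensor factor $\mathfrak{gl}_K^{\otimes n}$ commutes with every $y_l$, so $[\Omega_{jk},y_l]=0$ for \emph{all} $l$; your worry about $[\Omega_{jk},\Omega_{lm}]\neq 0$ only arises if one substitutes the explicit Dunkl formula for $y_l$ into the abstract relations, which is both unnecessary and inconsistent (the Dunkl map is an injective algebra homomorphism, so the identity need only be verified in the abstract algebra). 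Hence the transport of $s_{jk}\Omega_{jk}$ past $y_{j+1}\cdots y_k$ closes the induction for every $n$ exactly as in your $n=2$ case, with no appeal to the infinitesimal-braid identity $[\Omega_{ij}+\Omega_{ik}+\Omega_{jk},\,\cdot\,]=0$ --- and this is precisely the computation in the paper's proof.
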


\begin{proof}
The case when $u=0$ follows from Corollary \ref{cor: correlator of Miura} together with the following equation in $\mathbb H_{n}^{(K)}$:
\begin{align}\label{eqn: Cherednik operators identity}
    \prod_{i=1}^nx_i\prod_{j=1}^n y_j=\prod_{i=1}^n\mathcal D_i.
\end{align}
Equation \eqref{eqn: Cherednik operators identity} can be shown by direct computation:
\begin{align*}
    &\prod_{i=1}^nx_i\prod_{j=1}^n y_j=\prod_{i=1}^{n-1}x_i\prod_{j=1}^{n-1} y_j x_ny_n+\prod_{i=1}^{n-1}x_i\left[x_n,\prod_{j=1}^{n-1} y_j \right]y_n\\
    &~=\prod_{i=1}^{n-1}x_i\prod_{j=1}^{n-1} y_j x_ny_n-\epsilon_1\prod_{i=1}^{n-1}x_i\left(\sum_{j=1}^{n-1}y_1\cdots\hat{y}_j\cdots y_{n-1}s_{jn}\Omega_{jn}\right)y_n\\
    &~=\prod_{i=1}^{n-1}x_i\prod_{j=1}^{n-1} y_j x_ny_n-\epsilon_1\prod_{i=1}^{n-1}x_i\left(\sum_{j=1}^{n-1}y_1\cdots y_i\cdots y_{n-1}s_{jn}\Omega_{jn}\right)\\
    &~=\prod_{i=1}^{n-1}x_i\prod_{j=1}^{n-1} y_j \mathcal D_n\\
    &~=\prod_{i=1}^{n-2}x_i\prod_{j=1}^{n-2} y_j \mathcal D_{n-1}\mathcal D_n=\cdots =\prod_{i=1}^n\mathcal D_i.
\end{align*}
The general case can be derived from the $u=0$ case by the automorphism $\partial_{x_i}\mapsto \partial_{x_i}+\bar\alpha^{-1}u /x_i$ of $(D_{\epsilon_2}(\mathbb C^{\times n}_{\mathrm{disj}})\otimes \gl_K^{\otimes n})^{\mathfrak{S}_n}$.
Dunkl representation intertwines this automorphism with the automorphism $y_i\mapsto y_i+\epsilon_1 u/x_i$ of $\mathrm{S}\mathbb{H}_{n}^{(K)}$, hence $\mathcal D_i$ is mapped to $\mathcal D_i+\epsilon_1u$, whence the equation \eqref{eqn: correlator of Miura_trigonometric} follows.
\end{proof}

Then it follows from \eqref{eqn: correlator of Miura_trigonometric} that the operators $\mathbb T_u(\theta_1,\cdots,\theta_n)$ commute with each other, i.e. for $u,v\in \mathbb C$, $[\mathbb T_u,\mathbb T_v]=0$. In other words, $\mathbb T_u(\theta_1,\cdots,\theta_n)$ is the generating function of $n$ commuting differential operators of degrees $1,2,\cdots,n$. The degree one and two differential operators in this hierarchy are
\begin{align}
    \sum_{i=1}^n\bar\alpha\partial_{\theta_{i}}\quad\text{ and }\quad \sum_{i<j}^n \left[\bar\alpha^2\partial_{\theta_i}\partial_{\theta_j}+\frac{\bar\alpha \Omega_{ij}+1}{4\sinh^2 \left(\frac{\theta_i-\theta_j}{2}\right)}\right].
\end{align}
And we recognize that $\mathbb T_{u}(z_1,\cdots,z_n)$ is the generating function of higher Calogero-Sutherland Hamiltonian of $n$-particle system with $\mathfrak{gl}_K$ internal symmetry on a cylinder. The $K=1$ case was discussed in \cite{prochazka2019instanton,polychronakos2019feynman}.

In \cite{bernard1993yang}, it was conjectured that the coefficients of the expansion $\prod_{i=1}^n(u+\epsilon_2^{-1}\mathcal D_i)\mathbf e=\sum_{p=0}^n C_p u^{n-p}$ are given by the following explicit formula
\begin{equation}
    \begin{split}
        C_p=\sum_{i_1<\cdots<i_p}\sum_{I\sqcup J=\{i_1,\cdots,i_p\}}F_J\prod_{i\in I}x_i\partial_{x_i},
    \end{split}
\end{equation}
with
\begin{equation}
    F_J=\sum_{\sqcup\{j_k,j_k'\}=J}\prod_{k}\left((\bar\alpha^{-1}\Omega_{j_k,j_k'}+\bar\alpha^{-2})\frac{x_{j_k}x_{j_k'}}{(x_{j_k}-x_{j_k'})^2}\right).
\end{equation}
Compare the expansions $\bar\alpha^{-n}\mathbb T_{\bar\alpha u}=\prod_{i=1}^n(u+\epsilon_2^{-1}\mathcal D_i)\mathbf e=\sum_{p=0}^n C_p u^{n-p}$ with respect to $u$, we can easily see that this conjecture is true when $K=1$, in this case $F_J$ is the Wick contraction formula of the correlator
\begin{align}
    \bar\alpha^{-|J|}\bigg\langle \prod_{i\in J}x_j\phi(x_j)\bigg\rangle,
\end{align}
where $\phi(x)$ is the field of Heisenberg algebra: $\phi(x)\phi(y)\sim \frac{-\alpha}{(x-y)^2}$. However, when $K>1$, the conjecture is not true. To correct it, we need to modify the $F_J$ to be the correlator
\begin{align}
    \bar\alpha^{-|J|}\bigg\langle \prod_{j\in J}x_j J^a_b(x_j)E^b_{a,j}\bigg\rangle,
\end{align}
which is not given by a simple Wick contraction formula \cite[Equation (54)]{gaberdiel2000axiomatic}.\\

As a corollary of the correlator $\langle 0|\mathcal L^L_n(x_1,\cdots,x_n)|0\rangle$ being an element in the spherical Cherednik algebra, we see that all the coefficients of the Miura operator $\mathcal L^L_n(x_1,\cdots,x_n)$ are elements in the spherical Cherednik algebra. 
\begin{proposition}\label{prop: Miura operator coeff in Cherednik alg}
$\mathcal L^L_n(x_1,\cdots,x_n)$ is an element in
\begin{align}
    \Hom_{\mathbb C[\epsilon_1,\epsilon_2]}(\widetilde{\mathcal W}^{(K)}_L,\mathrm{S}\mathbb{H}_{n}^{(K)}\widetilde{\otimes}\widetilde{\mathcal W}^{(K)}_L),
\end{align}
where $\mathrm{S}\mathbb{H}_{n}^{(K)}$ is identified with a subspace of $(D_{\epsilon_2}(\mathbb C^{\times n}_{\mathrm{disj}})\otimes \gl_K^{\otimes n})^{\mathfrak{S}_n}$ via the Dunkl embedding.
\end{proposition}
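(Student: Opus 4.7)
The strategy is to combine the intertwining property of the Miura operator (Theorem \ref{thm: PsiDS Miura as intertwiner}, in its finite-$L$ form \eqref{eqn: finite L intertwiner_reform}) with the explicit vacuum-to-vacuum correlator computation of Corollary \ref{cor: correlator of Miura}. The statement is equivalent to showing that for every $v \in \widetilde{\mathcal W}^{(K)}_L$, the vector-valued pseudodifferential operator $\mathcal L^L_n(x_1,\ldots,x_n)\cdot v$ lies in the subspace $\mathrm{S}\mathbb H^{(K)}_n \widetilde\otimes \widetilde{\mathcal W}^{(K)}_L$ of $\Psi\mathrm{DS}_n(\widetilde{\mathcal W}^{(K)}_L)\otimes\gl_K^{\otimes n}$ (via the Dunkl identification).

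The first step is to establish the base case $v = |0\rangle$. Expanding $\mathcal L^L_n|0\rangle = \mathcal L^L_1(x_1)\cdots\mathcal L^L_1(x_n)|0\rangle$ and using that positive affine Kac-Moody modes annihilate the vacuum, each factor $J^{[k]}(x_i)$ contributes only through its negative mode expansion $\sum_{m\ge 0}J^{[k]}_{-m-1}x_i^m$, which is polynomial in $x_i$. Between factors at different positions $x_i,x_j$ with the same affine label $[k]$, the OPE produces rational contributions with poles only along $x_i - x_j = 0$. The resulting expression sits in $\mathbb C[x_1,\ldots,x_n][(x_i-x_j)^{-1}]\otimes \mathfrak U_-(\widetilde{\mathcal W}^{(K)}_L)\otimes \gl_K^{\otimes n}$, which embeds into the Dunkl image of $\mathrm{S}\mathbb H^{(K)}_n \widetilde\otimes \widetilde{\mathcal W}^{(K)}_L$.

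The second step is to propagate from the vacuum to arbitrary vectors via an induction. For any $f \in \mathsf Y^{(K)}$, applying \eqref{eqn: finite L intertwiner_reform} to $v$ expresses $\mathcal L^L_n\cdot \widetilde\Psi_L(f) v$ in terms of $\mathcal L^L_n\cdot v$ acted on by the Cherednik operators $\rho_n(f_{(i)})$ and by the W-algebra modes $\widetilde\Psi_L(f_{(i)})$ arising from the coproduct \eqref{eqn: YY coproduct}. Since left and right multiplication by Cherednik elements and left action by W-algebra modes all preserve the target $\mathrm{S}\mathbb H^{(K)}_n \widetilde\otimes \widetilde{\mathcal W}^{(K)}_L$, the inductive step follows. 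To conclude, the finite-$L$ analog of Proposition \ref{prop: Psi_inf has dense image}, which follows from the topological generation of $\mathfrak U(\widetilde{\mathcal W}^{(K)}_L)$ by the modes $U^{a(1)}_{b,n}$ and $U^{a(2)}_{b,n}$ established in \cite{ueda2022affine}, ensures that vectors of the form $\widetilde\Psi_L(f)|0\rangle$ span a dense subspace of $\widetilde{\mathcal W}^{(K)}_L$, so the containment extends to all $v$.

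The main obstacle I anticipate is the base case: one must verify that the rational terms produced by OPE contractions combined with the polynomial $x_i$-dependence really match the $\mathfrak S_n$-symmetry requirement of $\mathrm{S}\mathbb H^{(K)}_n \hookrightarrow (D_{\epsilon_2}(\mathbb C^{\times n}_{\mathrm{disj}})\otimes \gl_K^{\otimes n})^{\mathfrak S_n}$, rather than landing in a larger, non-symmetric algebra. The key observation is that permutation of the positions $x_i\leftrightarrow x_j$ must be accompanied by a simultaneous swap of the $\gl_K$-factors $E^b_a$ coming from the two Miura factors, which together realize the $\mathfrak S_n$-action on $(D_{\epsilon_2}\otimes \gl_K^{\otimes n})^{\mathfrak S_n}$ as intended by the Dunkl embedding. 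An alternative route that sidesteps this symmetry check would be to directly verify the containment on a set of generators of $\mathrm{S}\mathbb H^{(K)}_n$ (such as $\sum_i y_i^m \mathbf e$ and $\sum_i x_i^{\pm}y_i\mathbf e$) by applying the intertwiner \eqref{eqn: finite L intertwiner_reform} to a suitable choice of $f \in \mathfrak Y^{(K)}$, thereby reducing the statement to the correlator identity of Corollary \ref{cor: correlator of Miura}.
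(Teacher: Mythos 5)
Your route is genuinely different from the paper's. The paper avoids any induction over vectors: it embeds $\mathcal L^1_n(x_1,\dots,x_n)$ as the middle block of a larger vacuum correlator, $\langle 0|\mathcal L^1_r(z_\bullet)\mathcal L^1_n(x_\bullet)\mathcal L^1_s(w_\bullet)|0\rangle=\langle 0|\mathcal L^1_{r+n+s}|0\rangle$, which Corollary \ref{cor: correlator of Miura} identifies with $\epsilon_1^{-(r+n+s)}\,y_1\cdots y_{r+n+s}\mathbf e\in \mathrm{S}\mathbb H^{(K)}_{r+n+s}$; expanding in the region $|z_i|>|x_j|>|w_k|$, isolating the terms with no $\partial_{z_i},\partial_{w_k}$, and taking Fourier modes in the outer variables produces \emph{every} matrix element $\langle v_2|\mathcal L^1_n|v_1\rangle$ at once and exhibits it inside $\mathrm{S}\mathbb H^{(K)}_n$. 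The reduction to $L=1$ is immediate from $\mathcal L^L_n=\mathcal L^1_n(x_\bullet)^{[1]}\cdots\mathcal L^1_n(x_\bullet)^{[L]}$. Your proposal instead splits the claim into a base case at $v=|0\rangle$ plus an induction through the intertwining relation and a density argument; this is a legitimately different decomposition, but the base case is where the difficulty concentrates.

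The genuine gap is in that base case. The assertion that the mode/OPE expansion of $\mathcal L^L_n|0\rangle$ lies in $\mathrm{S}\mathbb H^{(K)}_n\widetilde\otimes\widetilde{\mathcal W}^{(K)}_L$ because it visibly sits in $\mathbb C[x_1,\dots,x_n][(x_i-x_j)^{-1}]\otimes\mathfrak U_-(\widetilde{\mathcal W}^{(K)}_L)\otimes\gl_K^{\otimes n}$ does not follow: the Dunkl image of $\mathrm{S}\mathbb H^{(K)}_n$ is a \emph{proper} subalgebra of $(D_{\epsilon_2}(\mathbb C^{\times n}_{\mathrm{disj}})\otimes\gl_K^{\otimes n})^{\mathfrak S_n}$, not the set of all symmetric expressions with poles along the diagonals. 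Indeed, the order-zero part of the order filtration on $\mathrm{S}\mathcal H^{(K)}_n$ is generated by polynomials in the $x_i$ alone, and poles at $x_i=x_j$ enter only through the very specific Dunkl substitution for the $y_i$; which rational coefficients can occur is therefore a strong constraint. You flag only the $\mathfrak S_n$-symmetry as the potential obstacle, but symmetry is necessary and far from sufficient (e.g.\ $\sum_{i<j}(x_i-x_j)^{-4}$ is symmetric yet not in the spherical algebra for generic parameters). Establishing the base case rigorously already requires the full strength of a correlator identity of the type in Corollary \ref{cor: correlator of Miura}, i.e.\ it is essentially the proposition restricted to $|v_1\rangle=|0\rangle$ with arbitrary output vector, so as written the argument is circular at its foundation. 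A secondary issue: your closing density step needs the vectors $\widetilde\Psi_L(f)|0\rangle$ to span each weight space of $\widetilde{\mathcal W}^{(K)}_L$ over $\mathbb C[\epsilon_1,\epsilon_2]$, but $\widetilde\Psi_L$ is only defined after inverting $\epsilon_3$ (resp.\ $\bar\alpha$ for $\Psi_L$), so at best you would obtain the containment after localization, whereas the proposition is stated integrally over $\mathbb C[\epsilon_1,\epsilon_2]$.
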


\begin{proof}
Since $\mathcal L^L_n(x_1,\cdots,x_n)=\mathcal L^1_n(x_1,\cdots,x_n)^{[1]}\cdots \mathcal L^1_n(x_1,\cdots,x_n)^{[L]}$, it suffices to show that $\mathcal L^1_n(x_1,\cdots,x_n)$ is an element in $\Hom_{\mathbb C[\epsilon_1,\epsilon_2]}(\widetilde{\mathcal W}^{(K)}_1,\mathrm{S}\mathbb{H}_{n}^{(K)}\widetilde{\otimes}\widetilde{\mathcal W}^{(K)}_1)$. Consider the correlator
\begin{align*}
    \mathcal G_{r,n,s}&=\langle 0|\mathcal L^1_r(z_1,\cdots,z_r)\mathcal L^1_n(x_1,\cdots,x_n)\mathcal L^1_s(w_1,\cdots,w_s)|0\rangle\\
    &=\langle 0|\mathcal L^1_{r+n+s}(z_1,\cdots,z_r,x_1,\cdots,x_n,w_1,\cdots,w_s)|0\rangle,
\end{align*}
then expand $\mathcal G_{r,n,s}$ in the region $|z_i|>|x_j|>|w_k|$ for all $1\le i\le r,1\le j\le n,1\le k\le s$, then the expansion takes value in 
\begin{align*}
\mathrm{S}\mathbb{H}_{r}^{(K)}\widetilde{\otimes}\mathrm{S}\mathbb{H}_{n}^{(K)}\widetilde{\otimes}\mathrm{S}\mathbb{H}_{s}^{(K)}.
\end{align*}
By singling out terms which do not involve $\partial_{z_i}$ or $\partial_{w_j}$ for any $1\le i\le r,1\le j\le s$, we see that 
the correlation functions $\langle 0|J^{a_1}_{b_1}(z_1)\cdots J^{a_r}_{b_r}(z_r)\mathcal L_n(x_1,\cdots,x_n)J^{c_1}_{d_1}(w_1)\cdots J^{c_s}_{d_s}(w_s)|0\rangle$ are in the space $\mathscr O(\mathbb C^{\times r}_{\mathrm{disj}}) \widetilde{\otimes}\mathrm{S}\mathbb{H}_{n}^{(K)}\widetilde{\otimes} \mathscr O(\mathbb C^{\times s}_{\mathrm{disj}})$ for all $a_1,b_1,\cdots,a_r,b_r,c_1,d_1,\cdots,c_s,d_s$. Taking the Fourier modes in terms of variables $z_1,\cdots,z_s,w_1,\cdots, w_s$, we see that for any pair of vectors $|v_1\rangle$ in the vacuum module and $\langle v_2|$ in the dual vacuum module, we have
\begin{align*}
    \langle v_2|\mathcal L_n(x_1,\cdots,x_n)|v_1\rangle\in \mathrm{S}\mathbb{H}_{n}^{(K)}.
\end{align*}
This implies that $\mathcal L^1_n(x_1,\cdots,x_n)\in\Hom_{\mathbb C[\epsilon_1,\epsilon_2]}(\widetilde{\mathcal W}^{(K)}_1,\mathrm{S}\mathbb{H}_{n}^{(K)}\widetilde{\otimes}\widetilde{\mathcal W}^{(K)}_1)$, from which the general case follows.
\end{proof}

\section{Compare \texorpdfstring{$\mathsf Y^{(K)}$}{Yk} with Affine Yangian of Type \texorpdfstring{$A_{K-1}$}{A(k-1)}}
In this section, we compare our algebra $\mathsf Y^{(K)}$ with a more conventional algebra, known as the affine Yangian algebra of type $A_{K-1}$. We begin with a brief review of the definition and the basic properties of affine Yangians of type $A_{K-1}$. We basically follow the literature \cite{ueda2019affine,ueda2022affine,guay2007affine,bershtein2019homomorphisms,tsymbaliuk2017affine,kodera2015affine}.

For a number $n\in \mathbb N_{>0}$, we use the notation $[n]:=\{0,1,\cdots,n-1\}$ and regard it as mod $n$ residues.
\begin{definition}
The affine Yangian $\mathbb{Y}^{(K)}$ is defined as $\mathbb C[\epsilon_1,\epsilon_2]$-algebra generated by $\{X^+_{i,r},X^-_{i,r},H_{i,r}\}_{i\in [K],r\in \mathbb Z_{\ge 0}}$ with relations specified as follows. Set $[x,y]=xy-yx$ and $\{x,y\}=xy+yx$ and $\epsilon_3=-K\epsilon_1-\epsilon_2$. Let $\{a_{ij}\}_{i,j\in [K]}$ be the Cartan matrix of type $A_{K-1}^{(1)}$.
\begin{equation}\label{eqn: Y0}
    [H_{i,r},H_{j,s}]=0,\quad [X^+_{i,r},X^-_{j,s}]=\delta_{ij}H_{i,r+s},\tag{Y0}
\end{equation}
and we specify \eqref{eqn: Y1}-\eqref{eqn: Y4} in the cases $K>2$ or $K=2$ or $K=1$ separately.\\
$\bullet$ Case $K>2$. Let $\{m_{ij}\}_{i,j\in [K]}$ be the following matrix
\begin{equation}
    m_{ij}=\begin{cases}
        1,& (i,j)=(0,1)\text{ or }(K-1,0)\\
        -1,& (i,j)=(1,0)\text{ or }(0,K-1)\\
        0,& \text{otherwise}.
    \end{cases}
\end{equation}
Then the relations \eqref{eqn: Y1}-\eqref{eqn: Y4} are:
\begin{equation}\label{eqn: Y1}
    [H_{i,r+1},X^{\pm}_{j,s}]-[H_{i,r},X^{\pm}_{j,s+1}]=\pm\frac{a_{ij}}{2}\epsilon_1\{H_{i,r},X^{\pm}_{j,s}\}+\frac{m_{ij}}{4}(\epsilon_2-\epsilon_3)[H_{i,r},X^{\pm}_{j,s}],\tag{Y1}
\end{equation}
\begin{equation}\label{eqn: Y2}
    [X^{\pm}_{i,r+1},X^{\pm}_{j,s}]-[X^{\pm}_{i,r},X^{\pm}_{j,s+1}]=\pm\frac{a_{ij}}{2}\epsilon_1\{X^{\pm}_{i,r},X^{\pm}_{j,s}\}+\frac{m_{ij}}{4}(\epsilon_2-\epsilon_3)[X^{\pm}_{i,r},X^{\pm}_{j,s}],\tag{Y2}
\end{equation}
\begin{equation}\label{eqn: Y3}
    [H_{i,0},X^{\pm}_{j,r}]=\pm a_{ij}X^{\pm}_{j,r},\tag{Y3}
\end{equation}
\begin{equation}\label{eqn: Y4}
    \underset{r_1,r_2}{\sym}\:[X^{\pm}_{i,r_1},[X^{\pm}_{i,r_2},X^{\pm}_{i\pm 1,s}]]=0, \text{ and }[X^{\pm}_{i,r},X^{\pm}_{j,s}]=0\text{ if }a_{ij}=0.\tag{Y4}
\end{equation}
$\bullet$ Case $K=2$. Then the relations \eqref{eqn: Y1_2}-\eqref{eqn: Y4_2} are:
\begin{equation}\label{eqn: Y1_2}
\begin{cases}
[H_{i,r+1},X^{\pm}_{j,s}]-[H_{i,r},X^{\pm}_{j,s+1}]=\pm\epsilon_1\{H_{i,r},X^{\pm}_{j,s}\}, & i=j\\
\begin{aligned}
&[H_{i,r+2},X^{\pm}_{j,s}]-2[H_{i,r+1},X^{\pm}_{j,s+1}]+[H_{i,r},X^{\pm}_{j,s+2}]=\\
&-\frac{\epsilon_2\epsilon_3}{4}[H_{i,r},X^{\pm}_{j,s}]\mp \epsilon_1(\{H_{i,r+1},X^{\pm}_{j,s}\}-\{H_{i,r},X^{\pm}_{j,s+1}\})
\end{aligned}, & i\neq j
\end{cases}\tag{Y1}
\end{equation}

\begin{equation}\label{eqn: Y2_2}
\begin{cases}
[X^{\pm}_{i,r+1},X^{\pm}_{j,s}]-[X^{\pm}_{i,r},X^{\pm}_{j,s+1}]=\pm\epsilon_1\{X^{\pm}_{i,r},X^{\pm}_{j,s}\}, & i=j\\
\begin{aligned}
&[X^{\pm}_{i,r+2},X^{\pm}_{j,s}]-2[X^{\pm}_{i,r+1},X^{\pm}_{j,s+1}]+[H_{i,r},X^{\pm}_{j,s+2}]=\\
&-\frac{\epsilon_2\epsilon_3}{4}[X^{\pm}_{i,r},X^{\pm}_{j,s}]\mp \epsilon_1(\{X^{\pm}_{i,r+1},X^{\pm}_{j,s}\}-\{X^{\pm}_{i,r},X^{\pm}_{j,s+1}\})
\end{aligned}, & i\neq j
\end{cases}\tag{Y2}
\end{equation}

\begin{equation}\label{eqn: Y3_2}
    [H_{i,0},X^{\pm}_{j,r}]=\pm a_{ij}X^{\pm}_{j,r},\quad [H_{i,1},X^{\pm}_{i+1,r}]=\mp (2X^{\pm}_{i+1,r+1}+\epsilon_1\{H_{i,0},X^{\pm}_{i+1,r}\}),\tag{Y3}
\end{equation}

\begin{equation}\label{eqn: Y4_2}
    \underset{r_1,r_2,r_3}{\sym}\:[X^{\pm}_{i,r_1},[X^{\pm}_{i,r_2},[X^{\pm}_{i,r_3},X^{\pm}_{i+ 1,s}]]]=0.\tag{Y4}
\end{equation}
$\bullet$ Case $K=1$. Then the relations \eqref{eqn: Y1'}-\eqref{eqn: Y4'} are:
\begin{equation}\label{eqn: Y1'}
\begin{split}
    &[H_{0,r+3},X^{\pm}_{0,s}]-3[H_{0,r+2},X^{\pm}_{0,s+1}]+3[H_{0,r+1},X^{\pm}_{0,s+2}]-[H_{0,r},X^{\pm}_{0,s+3}]=\\
    &-\sigma_2([H_{0,r+1},X^{\pm}_{0,s}]-[H_{0,r},X^{\pm}_{0,s+1}])\pm \sigma_3\{H_{0,r},X^{\pm}_{0,s}\},
\end{split}\tag{Y1}
\end{equation}
\begin{equation}\label{eqn: Y2'}
    \begin{split}
    &[X^{\pm}_{0,r+3},X^{\pm}_{0,s}]-3[X^{\pm}_{0,r+2},X^{\pm}_{0,s+1}]+3[X^{\pm}_{0,r+1},X^{\pm}_{0,s+2}]-[X^{\pm}_{0,r},X^{\pm}_{0,s+3}]=\\
    &-\sigma_2([X^{\pm}_{0,r+1},X^{\pm}_{0,s}]-[X^{\pm}_{0,r},X^{\pm}_{0,s+1}])\pm \sigma_3\{X^{\pm}_{0,r},X^{\pm}_{0,s}\},
\end{split}\tag{Y2}
\end{equation}
\begin{equation}\label{eqn: Y3'}
    [H_{0,0},X^{\pm}_{0,r}]=[H_{0,1},X^{\pm}_{0,r}]=0,\quad [H_{0,2},X^{\pm}_{0,r}]=\pm 2 X^{\pm}_{0,r},\tag{Y3}
\end{equation}
\begin{equation}\label{eqn: Y4'}
    \underset{r_1,r_2,r_3}{\sym}\:[X^{\pm}_{0,r_1},[X^{\pm}_{0,r_2},X^{\pm}_{0,r_3+1}]]=0,\tag{Y4}
\end{equation}
where we set $\sigma_2=\epsilon_1\epsilon_2+\epsilon_2\epsilon_3+\epsilon_3\epsilon_1$ and $\sigma_3=\epsilon_1\epsilon_2\epsilon_3$.
\end{definition}

\begin{remark}\label{rmk: compare different presentations}
Our presentation of affine Yangian for $K>2$ is aligned with \cite{guay2007affine}, such that the parameters $(\lambda,\beta)$ in \cite{guay2007affine} are related to our $\epsilon_1,\epsilon_2,\epsilon_3$ by
\begin{align*}
    \epsilon_1=\lambda,\quad \frac{\epsilon_2-\epsilon_3}{2}=\lambda-2\beta.
\end{align*}
A slightly different presentation is given in \cite{kodera2022coproduct} and it is related to ours by a re-parametrization as follows. Let $x^{\pm}_{i,r},h_{i,r}$ denote the generators in \cite[Definition 6.1]{kodera2022coproduct}. We set $\epsilon_1=\hbar$ and $\epsilon_2=\epsilon$, and set
\begin{align*}
X^\pm_i(z):=\sum_{r\in \mathbb N}X^{\pm}_{i,r}z^{-r-1},&\quad H_i(z):=1+\sum_{r\in \mathbb N}H_{i,r}z^{-r-1},\\
    x^\pm_i(z):=\sum_{r\in \mathbb N}x^{\pm}_{i,r}z^{-r-1},&\quad h_i(z):=1+\sum_{r\in \mathbb N}h_{i,r}z^{-r-1},
\end{align*}
then the isomorphism is given by 
\begin{align*}
    X^{\pm}_{0}(z)&=x^{\pm}_0\left(z-\frac{2\epsilon+K\hbar}{4}\right),& H_{0}(z)&=h_0\left(z-\frac{2\epsilon+K\hbar}{4}\right),\\
    X^{\pm}_{i}(z)&=x^{\pm}_i(z),& H_{i}(z)&=h_{i}(z),
\end{align*}
for $1\le i\le K-1$.

Another presentation of $\mathbb Y^{(K)}$ shows up in the context of quiver Yangian \cite{li2020quiver,galakhov2020quiver,galakhov2021shifted,bao2022note}.
\end{remark}

\begin{definition}
The loop Yangian $\mathbb{L}^{(K)}$ is the quotient of $\mathbb{Y}^{(K)}$ by the central element $\mathfrak c=\sum_{i\in [K]}H_{i,0}$.
\end{definition}

\begin{proposition}[{\cite[Proposition 3.1]{guay2007affine}}]\label{prop: minimal generators}
If $K>2$, then $\mathbb Y^{(K)}$ is generated by $\{X^{\pm}_{i,r},H_{i,r}\}_{i\in [K],r\in \{0,1\}}$ subject to relations \eqref{eqn: Y0}-\eqref{eqn: Y4} restricted to $r=0,1$.
\end{proposition}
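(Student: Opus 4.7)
Let $\mathcal Y$ denote the $\mathbb C[\epsilon_1,\epsilon_2]$-algebra with generators $\{X^{\pm}_{i,r}, H_{i,r}\}_{i\in[K], r\in\{0,1\}}$ cut out by the restrictions of \eqref{eqn: Y0}--\eqref{eqn: Y4} in which every mode index lies in $\{0,1\}$. There is a canonical surjection $\varphi : \mathcal Y \twoheadrightarrow \mathbb Y^{(K)}$ sending each generator to the generator of the same name, and the goal is to construct an inverse. The plan splits into two phases: in Phase 1 I would define higher-mode elements $X^{\pm}_{j,r}, H_{j,r}$ inside $\mathcal Y$; in Phase 2 I would verify that these elements obey all of \eqref{eqn: Y0}--\eqref{eqn: Y4}.

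\emph{Phase 1: Constructing higher-mode generators.} Since $K>2$, the Cartan matrix of type $A_{K-1}^{(1)}$ is such that every node $j\in[K]$ has a neighbor $i(j)$ with $a_{i(j),j}=-1$. Specializing the $r=0$ case of \eqref{eqn: Y1} and solving for $X^{\pm}_{j,s+1}$ gives the recipe
\begin{equation*}
X^{\pm}_{j,s+1} \;:=\; \pm\frac{1}{a_{i(j),j}}\Bigl( [H_{i(j),1},X^{\pm}_{j,s}] \mp \tfrac{a_{i(j),j}\epsilon_1}{2}\{H_{i(j),0},X^{\pm}_{j,s}\} - \tfrac{m_{i(j),j}(\epsilon_2-\epsilon_3)}{4}[H_{i(j),0},X^{\pm}_{j,s}]\Bigr),
\end{equation*}
which I would use recursively in $\mathcal Y$ starting from $s=1$. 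The Cartan modes are then produced by $H_{j,r}:=[X^+_{j,r},X^-_{j,0}]$, using the $s=0$ instance of \eqref{eqn: Y0}. The first nontrivial consistency point is to show that the definition of $X^{\pm}_{j,s+1}$ does not depend on the choice of neighbor $i(j)$. For two valid choices $i_1,i_2$, the difference of the two candidate expressions lies at mode level $s+1$ in the two-parameter family generated by $\{H_{i_1,\le 1},H_{i_2,\le 1},X^{\pm}_{j,\le 1}\}$, and Jacobi combined with the restricted instances of \eqref{eqn: Y1} and \eqref{eqn: Y3} at the $(0,1)$ level should force this difference to vanish.

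\emph{Phase 2: Promoting the relations.} With the higher modes defined, I would verify each of \eqref{eqn: Y0}--\eqref{eqn: Y4} by induction on $r+s$, base case furnished by the restricted relations. The key trick is that commuting the error term $R^{\pm}_{ij}(r,s):=\text{LHS}-\text{RHS}$ of any of these relations with $H_{k,1}$ (for a well-chosen $k$) produces, up to Jacobi reshufflings that activate the inductive hypothesis, the same error term at indices $(r+1,s)$ or $(r,s+1)$ with an invertible scalar coefficient $\pm a_{k\ast}$. Hence vanishing at one mode level propagates to the next. Relation \eqref{eqn: Y0} then follows from the Drinfeld-current identity $[X^+_{i}(z),X^-_{j}(w)]=\delta_{ij}\frac{H_i(z)-H_i(w)}{z-w}$ being enforced on all coefficients once it is enforced at low order. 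Relations \eqref{eqn: Y1}--\eqref{eqn: Y2} are propagated as just described; \eqref{eqn: Y3} is immediate once \eqref{eqn: Y1} is available.

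\emph{Main obstacle.} The hardest step will be the inductive promotion of the Serre relation \eqref{eqn: Y4}. Commuting a Serre element $[X^{\pm}_{i,r_1},[X^{\pm}_{i,r_2},X^{\pm}_{i\pm 1,s}]]$ with $H_{k,1}$ for $k$ adjacent to $i$ produces, besides the desired mode-shifted Serre element, a family of correction terms coming from the inhomogeneous pieces on the right side of \eqref{eqn: Y1}; after symmetrization over $r_1,r_2$ one must check that these corrections are themselves either symmetrized Serre elements at lower total mode (vanishing by induction) or manifestly zero by \eqref{eqn: Y2}. This combinatorial cancellation is delicate, and it is exactly this bookkeeping that breaks when $K=2$: the quadratic-in-shift form of \eqref{eqn: Y1_2}--\eqref{eqn: Y2_2} means that a single commutator with $H_{k,1}$ no longer shifts a mode cleanly by one, forcing a different, more elaborate minimal presentation (which is why the proposition is stated only for $K>2$).
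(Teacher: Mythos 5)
First, a point of reference: the paper does not prove this proposition at all — it is imported verbatim from \cite[Proposition 3.1]{guay2007affine} — so the only meaningful comparison is with the argument in that reference. Your architecture (invert the $r=0$ instance of \eqref{eqn: Y1} against an adjacent node with $a_{i(j),j}=-1$ to define $X^{\pm}_{j,s+1}$ recursively, set $H_{j,r}:=[X^{+}_{j,r},X^{-}_{j,0}]$, then promote all relations by induction on the total mode) is exactly the Levendorskii-type strategy that the cited source follows, and your recursion formula is the correct inversion; the role of $K>2$ in guaranteeing a neighbor with $a_{i(j),j}=-1$ at every node, including the affine one, is also correctly identified.

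As a proof, however, the proposal defers the two steps that carry all of the content, and the "should force this to vanish" assertions are not substitutes for them. (i) For the consistency of the recursion you invoke "Jacobi combined with the restricted instances of \eqref{eqn: Y1} and \eqref{eqn: Y3}", but the mechanism is missing: the standard device is to pass to corrected Cartan elements $\widetilde H_{i,1}=H_{i,1}-\tfrac{\epsilon_1}{2}H_{i,0}^2$, for which the restricted relations give $[\widetilde H_{i,1},X^{\pm}_{j,s}]=\pm a_{ij}\bigl(X^{\pm}_{j,s+1}+\tfrac{m_{ij}}{4}(\epsilon_2-\epsilon_3)X^{\pm}_{j,s}\bigr)$, and then to deduce $[\widetilde H_{i_1,1},\widetilde H_{i_2,1}]=0$ from the restricted \eqref{eqn: Y0}; only after this does $\mathrm{ad}(\widetilde H_{i,1})$ act as a well-defined mode-raising operator and does the comparison of two neighbors reduce to a Jacobi identity. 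Note also that the agreement of the two candidate definitions of $X^{\pm}_{j,2}$ is a relation among the degree-one generators that is \emph{not} among the imposed restricted relations (the $r=0$, $s=1$ instance of \eqref{eqn: Y1} involves $X^{\pm}_{j,2}$ and hence is not imposed), so it genuinely has to be derived, not just observed to be "at mode level $s+1$". (ii) The inductive promotion of the Serre relation \eqref{eqn: Y4} is correctly flagged as the main obstacle, but no cancellation mechanism for the inhomogeneous correction terms produced by $\mathrm{ad}(\widetilde H_{k,1})$ is exhibited; stating that the corrections "must be checked" to be lower-mode Serre elements is a restatement of the problem. Your closing diagnosis of why the argument breaks at $K=2$ (the second-order form of \eqref{eqn: Y1_2}--\eqref{eqn: Y2_2} prevents a single commutator from shifting the mode cleanly) is accurate. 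In sum: right route, same as the cited reference, but an outline with the two load-bearing verifications left undone.
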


When $K>2$, Guay proved a PBW theorem for $\mathbb{Y}^{(K)}$ \cite[6.1]{guay2007affine}, in particular $\mathbb{Y}^{(K)}$ is a free $\mathbb C[\epsilon_1,\epsilon_2]$-module in the cases $K>2$.

We observe that $X^{\pm}_{i,0},H_{i,0}$ for $0\le i\le K-1$ generates a copy of $U(\widehat{\mathfrak{sl}}_K)$ inside $\mathbb{Y}^{(K)}$. Another observation is that $X^{\pm}_{i,r},H_{i,r}$ for $i\neq 0, r\in \mathbb N$ generates a subalgebra inside $\mathbb{Y}^{(K)}$ which is isomorphic to the Yangian $Y_{\epsilon_1}({\mathfrak{sl}}_K)$ \cite[6.1]{guay2007affine}. $Y_{\epsilon_1}({\mathfrak{sl}}_K)$ has another presentation with generators $x,J(x)$ for $x\in {\mathfrak{sl}}_K$ \cite{chari1995guide}, and $J(x)$ are related to $X^{\pm}_{i,r},H_{i,r}$ by the following
\begin{equation}\label{eqn: J generators}
    \begin{split}
    J(X_i^{\pm})=X^{\pm}_{i,1}+\epsilon_1\omega_i^{\pm},\;&\text{ where }\omega_i^{\pm}=\pm\frac{1}{4}\sum_{\alpha\in \Delta^+}\{[X^{\pm}_i,X_{\pm \alpha}],X_{\mp \alpha}\}-\frac{1}{4}\{X^{\pm}_i,H_i\}\\
        J(H_i)=H_{i,1}+\epsilon_1\nu_i,\;&\text{ where }\nu_i=\frac{1}{4}\sum_{\alpha\in \Delta^+}(\alpha,\alpha_i)\{X_{ \alpha},X_{-\alpha}\}-\frac{1}{2}H_i^2.
    \end{split}
\end{equation}
\begin{definition}\label{def: J generators and modes}
    Assume that $K>2$, then for $X\in \mathfrak{sl}_K$ we denote by $J(X)$ the $J$-generator of the subalgebra $Y_{\epsilon_1}({\mathfrak{sl}}_K)\subset \mathbb Y^{(K)}$ determined by \eqref{eqn: J generators}, and denote by $X\otimes u^n$ the elements in the subalgebra $U(\widehat{\mathfrak{sl}}_K)\subset \mathbb Y^{(K)}$ determined by 
    \begin{equation}
        \begin{split}
            X^{\pm}_i\otimes u^0=X^{\pm}_{i,0}&, \quad H_i\otimes u^0=H_{i,0}, \\
            E^K_1\otimes u=X^{+}_{0,0}&,\quad E^1_K\otimes u^{-1}=X^{-}_{0,0}.
        \end{split}
    \end{equation}
\end{definition}

\subsection{Compare loop Yangian with \texorpdfstring{$\mathsf L^{(K)}$}{Lk}}
Recall that we introduce an algebra $\mathsf L^{(K)}$ in Definition \ref{def: the algebra L^K}. In this subsection we show that it is closely related to the loop Yangian $\mathbb L^{(K)}$, at least when $K\neq 2$.

\subsubsection*{The case $K>2$}

When $K>2$, Guay constructed a family of representations of $\mathbb{L}^{(K)}$ by mapping it to the matrix extended trigonometric Cherednik algebra \cite{guay2005cherednik}, and we recall his construction here. Fix $N\in \mathbb N$, and let $\mathbb H^{(K)}_N$ (resp. $\mathrm{S}\mathbb{H}^{(K)}_N$) be the extended trigonometric Cherednik algebra (resp. its spherical subalgebra) defined in Section \ref{sec: L^K and coproducts}.

\begin{proposition}[{\cite[Theorem 5.2]{guay2005cherednik},\cite[Corollary 7.2]{guay2007affine}}]\label{prop: Schur-Weyl for affine Yangian}
Assume $K>2$, then for every $N\in \mathbb Z_{\ge 1}$ there exists an algebra homomorphism $\rho'_N :\mathbb{L}^{(K)}\to \mathrm{S}\mathbb{H}^{(K)}_N$ which maps the generators as:
\begin{equation}\label{eqn: rep in trig Cherednik}
    \begin{split}
        \rho'_N(X\otimes u^n)= \sum_{j=1}^N X_jx_j^n \mathbf e,\quad
        \rho'_N(J(X))= \sum_{j=1}^N X_{j}\mathcal Y_j \mathbf e, \quad
        \rho'_N(X^-_{0,1})= \sum_{j=1}^N E^K_{1,j}y_j \mathbf e
    \end{split}
\end{equation}
where $X\in\mathfrak{sl}_K$, and $\mathcal Y_i=\frac{1}{2}\{x_i,y_i\}\in \mathbb H_{N}^{(K)}$, and $J(X)$ is defined in equation \eqref{eqn: J generators}. Moreover, the intersection of kernels of $\rho'_N$ for all $N$ is trivial, i.e. $\ker(\prod_N \rho'_N)=0$.
\end{proposition}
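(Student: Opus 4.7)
The plan is to proceed in three stages: first establish the well-definedness of $\rho'_N$ as an algebra map $\mathbb Y^{(K)} \to \mathrm{S}\mathbb H^{(K)}_N$, then verify that it descends to $\mathbb L^{(K)}$, and finally prove triviality of $\bigcap_N \ker(\rho'_N)$. Throughout, I would lean heavily on Proposition~\ref{prop: minimal generators}, which reduces checking relations to the finite set $\{X^{\pm}_{i,r}, H_{i,r}\}_{i\in [K],\, r\in\{0,1\}}$, so only \eqref{eqn: Y0}--\eqref{eqn: Y4} restricted to $r\in\{0,1\}$ need verification on the proposed images. To avoid translating the $J$-presentation each time, I would first fix the images $\rho'_N(X^+_{0,0}), \rho'_N(X^-_{0,0}), \rho'_N(H_{i,0}), \rho'_N(X^{\pm}_{i,0})$ for $i\neq 0$ from the loop-algebra formulas in \eqref{eqn: rep in trig Cherednik}, then define $\rho'_N(J(X))$ via $\sum_j X_j \mathcal{Y}_j\mathbf e$, and finally recover $\rho'_N(X^{\pm}_{i,1})$ and $\rho'_N(H_{i,1})$ from \eqref{eqn: J generators}.

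For the $r=0$ relations, the images satisfy the defining relations of $U(\widehat{\mathfrak{sl}}_K)$ at level zero: the standard Schur--Weyl-type computation in $\mathrm{S}\mathbb H^{(K)}_N$ shows that $X \otimes u^n \mapsto \sum_j X_j x_j^n \mathbf e$ is a representation of the polynomial loop algebra $\mathfrak{sl}_K[u^{\pm 1}]$, with $\sum_i \rho'_N(H_{i,0}) = \sum_j (\sum_i H_{i})_j \mathbf e = 0$, so the central element $\mathfrak{c}$ goes to zero; this handles the reduction from $\mathbb Y^{(K)}$ to $\mathbb L^{(K)}$. The harder work is the $r=1$ relations: these translate, via \eqref{eqn: J generators}, into identities among Dunkl operators and the Calogero-type elements $\mathcal{Y}_j = \tfrac{1}{2}\{x_j, y_j\}$. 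The central computation is $[\mathcal{Y}_j, x_k^n X_k] = $ a prescribed combination of terms of the form $x_j^a x_k^b s_{jk}\Omega_{jk}$; carrying out this commutator calculation carefully and summing over $j,k$ yields exactly the right-hand sides of \eqref{eqn: Y1}, \eqref{eqn: Y2}, \eqref{eqn: Y3} for the images. The Serre-type relations \eqref{eqn: Y4} then follow from the observation that, after these relations for the generating data are verified, both sides of \eqref{eqn: Y4} lie in a $\mathfrak{sl}_K[u]$-isotypic component where the remaining check is automatic.

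\textbf{Main obstacle.} The technical heart of the argument is the verification of \eqref{eqn: Y1} and \eqref{eqn: Y2} for $r=1$: the correction terms $\epsilon_1\omega_i^\pm$ and $\epsilon_1\nu_i$ in the definition of $J(X)$ introduce cubic (in the $X_j$ variables) contributions whose interplay with the $s_{jk}\Omega_{jk}$ residues from the Dunkl commutation $[y_j, x_k] = \epsilon_2 \delta_{jk} - \epsilon_1(1-\delta_{jk}) s_{jk}\Omega_{jk}$ is delicate. I would organize this calculation by isolating the contributions according to the number of coincident indices and using the symmetry under $\mathbf e$ to collapse off-diagonal terms; the key identity to prove is that the sum of ``bubble'' terms involving $s_{jk}\Omega_{jk}\Omega_{j\ell}$ reorganizes into $\tfrac{1}{2}\{H_i, \cdot\}$ expressions matching the right side of the Yangian relation.

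Finally, for injectivity of $\prod_N \rho'_N$, I would run the same filtration strategy used for $\mathsf L^{(K)}$ in Theorem~\ref{thm: PBW for L}. Set $\deg_V X^{\pm}_{i,r} = \deg_V H_{i,r} = r$; by the PBW theorem of Guay for $\mathbb Y^{(K)}$, $\mathrm{gr}_V \mathbb L^{(K)}$ is isomorphic to $U(\mathfrak{sl}_K[u^{\pm 1}, v])$ with $v$ the symbol of $\mathcal Y$-operators. On the classical side, $\mathrm{gr}_V \mathrm{S}\mathbb H^{(K)}_N$ contains $(\mathscr O(\mathbb C^{\times N}_{\mathrm{disj}} \times \mathbb C^N) \otimes \mathfrak{gl}_K^{\otimes N})^{\mathfrak S_N}$, and the associated graded of $\rho'_N$ becomes the power-sum map $X \otimes u^n v^m \mapsto \sum_j X_j x_j^n y_j^m$. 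Standard symmetric function / separation-of-points arguments show that the induced map on associated graded is injective on the product over $N$. Since the filtration on $\mathbb L^{(K)}$ is exhaustive and separated (by Guay's PBW), injectivity lifts from associated graded to $\mathbb L^{(K)}$ itself, completing the proof.
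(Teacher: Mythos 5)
The paper does not actually prove this proposition: it is imported verbatim from Guay's work, \cite[Theorem 5.2]{guay2005cherednik} for the existence of the homomorphism and \cite[Corollary 7.2]{guay2007affine} for the faithfulness of $\prod_N\rho'_N$, so any self-contained argument is necessarily a different route. Your architecture (reduce to the $r\in\{0,1\}$ generators via Proposition \ref{prop: minimal generators}, check the level-zero loop relations, verify the deformed $r=1$ relations, then prove faithfulness by a filtration argument) is the natural one, but the faithfulness step contains a genuine error. You assert $\mathrm{gr}_V\,\mathbb L^{(K)}\cong U(\mathfrak{sl}_K[u^{\pm 1},v])$; by Guay's PBW theorem --- quoted in this very paper in the proof of Lemma \ref{lem: shifted Yangian embeds into loop Yangian} --- one has $\mathrm{gr}\,\mathbb Y^{(K)}\cong U(\mathfrak{st}_K[u^{\pm},v])$, the \emph{universal central extension}, whose centre $\Omega^1(\mathbb C[u^{\pm},v])/d$ is infinite-dimensional. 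Killing the single class $\mathbf c=\overline{u^{-1}du}$ leaves infinitely many central elements in $\mathrm{gr}\,\mathbb L^{(K)}$, and these are precisely the elements your power-sum map kills on the associated graded: they arise as anomalies of Lie brackets, and commutators in $\mathrm{S}\mathbb H^{(K)}_N$ drop the order filtration by one, so every such central class maps to zero in $\mathrm{gr}$. Hence ``injective on associated graded'' is false as stated, and injectivity cannot be lifted this way; one must track where these central classes actually land (lower-order terms with nontrivial dependence on $N$), which is the real content of \cite[Corollary 7.2]{guay2007affine}.

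On the existence side, the heart of the matter --- verifying \eqref{eqn: Y1} and \eqref{eqn: Y2} at $r=1$ for the proposed images --- is asserted rather than carried out, and the Dunkl relation you base it on, $[y_j,x_k]=\epsilon_2\delta_{jk}-\epsilon_1(1-\delta_{jk})s_{jk}\Omega_{jk}$, has the wrong sign on the off-diagonal term and omits the diagonal correction $-\epsilon_1\delta_{jk}\sum_{l\neq j}s_{jl}\Omega_{jl}$ appearing in the paper's definition of $\mathcal H^{(K)}_N$; with these errors the ``bubble'' terms will not reassemble into the claimed $\tfrac12\{H_i,\cdot\}$ expressions, so the key identity would fail as computed. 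Finally, the Serre relations \eqref{eqn: Y4} are not ``automatic from isotypic components'' in the deformed algebra: the standard argument reduces them to lowest-weight vectors for the adjoint $\mathfrak{sl}_K[u]$-action and still requires the previously verified deformed relations, so this step needs to be spelled out rather than dismissed.
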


Combining the above proposition with our Lemma \ref{lem: rho_N extends to loop Yangian}, we conclude that when $K>2$ there is a $\mathbb C[\epsilon_1,\epsilon_2]$-algebra embedding $\mathbb L^{(K)}\hookrightarrow \mathsf L^{(K)}$ given by 
\begin{equation}\label{eqn: loop Yangian into L^(K)}
    \begin{split}
        X\otimes u^n\mapsto \mathsf T_{0,n}(X),\quad
        J(X)\mapsto\mathsf T_{1,1}(X), \quad
        X^-_{0,1}\mapsto \mathsf T_{1,0}(E^K_{1}).
    \end{split}
\end{equation}
Note that the image of $\mathbb L^{(K)}$ generates $\mathsf L^{(K)}$ after localizing to $\mathbb C[\epsilon_1,\epsilon_2^\pm,\epsilon_3^\pm]$, by the Corollary \ref{cor: compare two DDCAs}.

\subsubsection*{The case $K=1$}

When $K=1$, it is known \cite{gaiotto2022miura,tsymbaliuk2017affine,schiffmann2013cherednik} that $\mathbb L^{(1)}$ embeds into the $\prod_N \mathrm{S}\mathbb{H}^{(1)}_N$ via the representations $\rho'_N: \mathbb L^{(1)}\to \mathrm{S}\mathbb{H}^{(1)}_N$ which is uniquely determined by 
\begin{equation}
\begin{split}
&\rho'_N\left(\mathrm{ad}_{X^+_{0,1}}^{n-1}X^+_{0,0}\right)=\frac{(n-1)!}{\epsilon_2}\sum_{i=1}^N x^n_i,\quad \rho'_N(X^-_{0,0})=-\frac{1}{\epsilon_2}\sum_{i=1}^N x^{-1}_i,\\
&\rho'_N(H_{0,1})= \frac{N}{\epsilon_2},\quad \rho'_N([X^-_{0,1},X^-_{0,2}])=\frac{1}{\epsilon_2}\sum_{i=1}^N y_i^2.
\end{split}
\end{equation}
Compare with $\rho_N: \mathsf L^{(1)}\to \mathrm{S}\mathbb{H}^{(1)}_N$, and we obtain an algebra embedding $\mathsf L^{(1)}\hookrightarrow \mathbb L^{(1)}$ such that 
\begin{align}\label{eqn: map L to loop Yangian}
    \mathsf t_{0,0}\mapsto H_{0,1},\quad\mathsf t_{0,-1}\mapsto -X^{-}_{0,0},\quad \mathsf t_{0,n}\mapsto \frac{1}{(n-1)!}\mathrm{ad}_{X^+_{0,1}}^{n-1}X^+_{0,0},\quad\mathsf t_{2,0}\mapsto [X^-_{0,1},X^-_{0,2}].
\end{align}
It is easy to see that the image of $\mathsf L^{(1)}$ generates $\mathbb L^{(1)}$, thus $\mathsf L^{(1)}$ is isomorphic to $\mathbb L^{(1)}$.

The following proposition summarize the discussions in this subsection.

\begin{proposition}
If $K=1$, then $\mathbb L^{(1)}\cong \mathsf L^{(1)}$. If $K>2$, then there is an embedding $\mathbb L^{(K)}\hookrightarrow \mathsf L^{(K)}$ such that it becomes an isomorphism after localizing to $\mathbb C[\epsilon_1,\epsilon_2^\pm,\epsilon_3^\pm]$.
\end{proposition}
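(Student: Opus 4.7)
Both cases rely on the Schur--Weyl injections into $\prod_N \mathrm{S}\mathbb H^{(K)}_N$: the map $\prod_N\rho_N$ for $\mathsf L^{(K)}$ (Lemma \ref{lem: rho_N extends to loop Yangian}) and $\prod_N\rho'_N$ for $\mathbb L^{(K)}$ (Proposition \ref{prop: Schur-Weyl for affine Yangian} for $K>2$, together with the $K=1$ analogue cited in the excerpt). A direct comparison of the Dunkl-type formulas shows that, under the proposed map between $\mathbb L^{(K)}$ and $\mathsf L^{(K)}$, the generators on one side have $\rho_N$- or $\rho'_N$-image matching those on the other. This simultaneously yields well-definedness of the map (relations verified via injectivity of $\prod_N\rho_N$ and the fact that $\rho'_N$ is a homomorphism) and injectivity (kernel controlled by injectivity of $\prod_N\rho'_N$). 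For $K>2$, Proposition \ref{prop: minimal generators} reduces the verification to relations \eqref{eqn: Y0}--\eqref{eqn: Y4} with $r\le 1$.

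For $K=1$, surjectivity of $\mathsf L^{(1)}\hookrightarrow\mathbb L^{(1)}$ is proven by exhibiting each Drinfeld generator in the image. The cases $X^{\pm}_{0,0}$, $H_{0,0}=[X^+_{0,0},X^-_{0,0}]$, and $H_{0,1}$ are immediate from \eqref{eqn: map L to loop Yangian}. For the higher Drinfeld generators, I would inductively iterate commutators of the image of $\mathsf t_{2,0}=[X^-_{0,1},X^-_{0,2}]$ with images of $\mathsf t_{0,n}$: for instance, $[\mathsf t_{2,0},\mathsf t_{0,1}]=2\mathsf t_{1,0}$ computes to $-2X^-_{0,1}$ using \eqref{eqn: Y3'}, producing $X^-_{0,1}$; analogous brackets recursively yield all $X^{\pm}_{0,r}$ and hence $H_{0,r}=[X^+_{0,r},X^-_{0,0}]$.

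For $K>2$, the essential content is surjectivity after localizing $\epsilon_2,\epsilon_3$. The image contains $\mathsf T_{0,n}(X)$ for $X\in\mathfrak{sl}_K$, $n\in\mathbb Z$, as well as $\mathsf T_{1,1}(X)$ and, via the bracket $[\mathsf T_{0,0}(Y),\mathsf T_{1,0}(E^K_1)]=\mathsf T_{1,0}([Y,E^K_1])$ together with simplicity of $\mathfrak{sl}_K$, also $\mathsf T_{1,0}(X)$ for all $X\in\mathfrak{sl}_K$. Hence the image contains $\mathbb D^{(K)}$. By Corollary \ref{cor: compare two DDCAs}, $\mathbb D^{(K)}[\epsilon_3^{-1}]=\mathsf D^{(K)}[\epsilon_3^{-1}]$, and since $\mathsf D^{(K)}[\epsilon_2^{-1}]=\mathsf A^{(K)}[\epsilon_2^{-1}]$ we obtain $\mathsf A^{(K)}[\epsilon_2^{\pm},\epsilon_3^{\pm}]\subseteq\mathrm{image}$. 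Since $\mathsf L^{(K)}$ is generated over $\mathsf A^{(K)}$ by the single element $\mathsf t_{0,-1}$, only $\mathsf t_{0,-1}\in\mathrm{image}$ remains to be shown.

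This last step is the main obstacle. My plan is as follows. By Proposition \ref{prop: filtration on L}, the bracket formula $[\mathsf t_{1,2},\mathsf t_{0,-2}]=-2\mathsf t_{0,-1}$ is exact in $\mathsf L^{(K)}$ (the lower-vertical correction lies in $V_{-1}\mathsf L^{(K)}=0$); since $\mathsf t_{1,2}\in\mathsf A^{(K)}[\epsilon_2^{-1}]\subseteq\mathrm{image}$, it suffices to place $\mathsf t_{0,-2}$ in the image. I would extract this from the commutator $[\mathsf T_{1,0}(E^a_b),\mathsf T_{0,-1}(E^b_a)]$ (both factors being in the image) by applying the shift map $S(w):\mathsf L^{(K)}\hookrightarrow\mathsf A^{(K)}(\!(w^{-1})\!)$ and expanding mode-by-mode via \eqref{eqn: A3}; the trace part of $\{E^a_b,E^b_a\}=E^a_a+E^b_b$ contributes a summand proportional to $\epsilon_2\epsilon_3\mathsf t_{0,-2}/K$, while the remaining correction terms are products of elements already in the image. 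Solving for the $\mathsf t_{0,-2}$-component and then bracketing with $\mathsf t_{1,2}$ produces $\mathsf t_{0,-1}$. Careful tracking of the $\epsilon_1$ correction terms from \eqref{eqn: A3} is the principal technical difficulty.
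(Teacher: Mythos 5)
Your overall route coincides with the paper's: both algebras are compared inside $\prod_N \mathrm{S}\mathbb{H}^{(K)}_N$ via the Dunkl-type formulas for $\rho_N$ and $\rho'_N$, well-definedness and injectivity come from the triviality of $\ker(\prod_N\rho_N)$ and $\ker(\prod_N\rho'_N)$, and surjectivity after localizing $\epsilon_2,\epsilon_3$ is reduced to Corollary \ref{cor: compare two DDCAs}. The paper states the generation claim tersely and leaves the production of $\mathsf t_{0,-1}$ implicit, so your attempt to spell that step out is the right instinct; however, the specific commutator you chose does not work as written.

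With $X=E^a_b$, $Y=E^b_a$ and $a\neq b$ one has $[X,Y]=E^a_a-E^b_b\neq 0$, so the expansion of $[\mathsf T_{1,0}(E^a_b),\mathsf T_{0,-1}(E^b_a)]$ via \eqref{eqn: A3} (transported by $S(w)$) contains the leading term $\mathsf T_{1,-1}(E^a_a-E^b_b)$, and this is \emph{not} known to lie in the image of $\mathbb L^{(K)}$ at that stage: by Proposition \ref{prop: filtration on L} it is only accessible through brackets such as $[\mathsf T_{1,1}(X),\mathsf T_{0,-2}(Y)]$ up to corrections in $V_0\mathfrak L^{(K)}$, and $V_0\mathfrak L^{(K)}$ already contains the negative-mode trace elements $\mathsf t_{0,-n}$ you are trying to produce, so the argument becomes circular. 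The fix is exactly the device used in the proof of Corollary \ref{cor: compare two DDCAs}: take $X=Y=H_1:=E^1_1-E^2_2$, so that $[X,Y]=0$ and the offending term disappears, while $\{X,Y\}=2(E^1_1+E^2_2)$ still has nonzero trace and the quadratic correction $(H_1\otimes H_1-(H_1)^2\otimes 1)\cdot\Omega$ involves only off-diagonal (hence traceless) elementary matrices, whose modes $\mathsf T_{0,n}(E^i_j)$, $n\in\mathbb Z$, are already in the image via the affine Lie algebra part of $\mathbb L^{(K)}$. This yields $\epsilon_2\epsilon_3\,\mathsf t_{0,-2}$ modulo the image, and your final step $[\mathsf t_{1,2},\mathsf t_{0,-2}]=-2\mathsf t_{0,-1}$ then closes the argument. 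The remainder of the proposal (the Schur--Weyl comparison, the $K=1$ generation argument, and the reduction of the $K>2$ surjectivity to $\mathbb D^{(K)}[\epsilon_3^{-1}]=\mathsf D^{(K)}[\epsilon_3^{-1}]$ together with $\mathsf t_{0,-1}$) is sound and agrees with the paper.
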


\subsection{Imaginary 1-shifted affine Yangian algebras of type \texorpdfstring{$A_{K-1}$}{A(k-1)}}
Let $\lambda$ be a \textit{dominant} coweight of the root system $A^{(1)}_{K-1}$, we define the shifted affine Yangian \footnote{See also \cite{braverman2016coulomb} for the discussion of shifted finite Yangians.} $\mathbb Y^{(K)}_\lambda$ as the subalgebra of $\mathbb Y^{(K)}$ generated by $\{X^+_{i,r},H_{i,r}\}_{i\in [K],r\in \mathbb Z_{\ge 0}}$ and $\{X^-_{i,r}\}_{i\in [K],r\in \mathbb Z_{\ge \langle\lambda,\alpha_i\rangle}}$.

We only focus on the case $\mathbb Y^{(K)}_\delta$ where $\delta$ is the imaginary fundamental coweight, i.e. $\langle\delta,\alpha_i\rangle=\delta_{i0}$. Then $\mathbb Y^{(K)}_\delta$ is generated by $\{X^{\pm}_{i,r},H_{i,r}, X^+_{0,r}\}_{i\neq 0,r\in \mathbb Z_{\ge 0}}$ and $\{X^-_{0,r}\}_{i\in [K],r\in \mathbb Z_{>0}}$. Using the notation in Definition \ref{def: J generators and modes}, $\mathbb Y^{(K)}_\delta$ is generated by the subalgebra $Y_{\epsilon_1}(\mathfrak{sl}_K)$ and non-negative modes $\mathfrak{sl}_K[u]\subset U(\widehat{\mathfrak{sl}}_K)$ and $X^-_{0,1}$. $\mathbb Y^{(K)}_\delta$ was discussed in \cite[Definition 3.5]{guay2007affine} under the name deformed double current algebra.

\begin{lemma}\label{lem: shifted Yangian embeds into loop Yangian}
Assume that $K\neq 2$, then the projection $\mathbb Y^{(K)}_\delta\to \mathbb{L}^{(K)}$ is injective.
\end{lemma}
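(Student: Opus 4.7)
The plan is to factor the projection $\mathbb{Y}^{(K)}_\delta \to \mathbb{L}^{(K)}$ through the embedding $\mathbb{L}^{(K)}\hookrightarrow \mathsf{L}^{(K)}$, which is available for $K\neq 2$ (an isomorphism when $K=1$ via \eqref{eqn: map L to loop Yangian}, and an embedding for $K>2$ via \eqref{eqn: loop Yangian into L^(K)}), and then to show that the composite lands inside the subalgebra $\mathsf{A}^{(K)}_+ \subset \mathsf{L}^{(K)}$ from the gluing construction of Theorem \ref{thm: gluing construction of L}. Once the image sits in $\mathsf{A}^{(K)}_+$, injectivity of $\mathbb{Y}^{(K)}_\delta \to \mathbb{L}^{(K)}$ would follow from injectivity of $\mathbb{Y}^{(K)}_\delta \to \mathsf{A}^{(K)}$, since $\mathsf{A}^{(K)}_+\hookrightarrow \mathsf{L}^{(K)}$ is one of the two PBW summands by Theorem \ref{thm: PBW for L}.

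To verify the landing in $\mathsf{A}^{(K)}_+$, I would inspect the generators of $\mathbb{Y}^{(K)}_\delta$ one by one. For $K>2$, Definition \ref{def: J generators and modes} (together with Proposition \ref{prop: minimal generators}) yields a minimal generating set: the non-negative loop modes $X\otimes u^n$ with $n\ge 0$, the Yangian elements $J(X)$ for $X\in \mathfrak{sl}_K$, and $X^-_{0,1}$; by \eqref{eqn: loop Yangian into L^(K)} these are sent respectively to $\mathsf{T}_{0,n}(X)$ with $n\geq 0$, $\mathsf{T}_{1,1}(X)$, and $\mathsf{T}_{1,0}(E^K_1)$, all of which live in $\mathsf{A}^{(K)}_+$. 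For $K=1$, I would use the inverse of the isomorphism \eqref{eqn: map L to loop Yangian} to send the generators $H_{0,r}, X^+_{0,r}$ ($r\geq 0$) and $X^-_{0,r}$ ($r\geq 1$) of $\mathbb{Y}^{(1)}_\delta$ into the $\mathbb{C}[\epsilon_1,\epsilon_2]$-subalgebra generated by $\{\mathsf{t}_{n,m}\,:\,(n,m)\in \mathbb{N}^2\}$, using the commutation relations in Lemma \ref{lem: relations A^1} to check that the missing generator $X^-_{0,0}$ (which alone corresponds to $\mathsf{t}_{0,-1}$) is not needed.

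The injectivity of $\mathbb{Y}^{(K)}_\delta \to \mathsf{A}^{(K)}$ itself I would then extract from existing structural results. For $K>2$, Proposition \ref{prop: DDCA=shifted Yangian} (mentioned in the introduction) identifies $\mathbb{Y}^{(K)}_\delta$ with the subalgebra $\mathbb{D}^{(K)}\subset \mathsf{A}^{(K)}$ from Definition \ref{def: subalg of A}, and the composite becomes the tautological inclusion $\mathbb{D}^{(K)}\hookrightarrow \mathsf{A}^{(K)}$. For $K=1$, the algebra $\mathsf{A}^{(1)}$ is itself the $1$-shifted affine Yangian, as recorded in the introduction. The step I expect to be the main obstacle is the compatibility check between the two routes: one must verify that the factorization $\mathbb{Y}^{(K)}_\delta \to \mathbb{L}^{(K)}\hookrightarrow \mathsf{L}^{(K)}$ coincides with $\mathbb{Y}^{(K)}_\delta \cong \mathbb{D}^{(K)}\hookrightarrow \mathsf{A}^{(K)}_+\hookrightarrow \mathsf{L}^{(K)}$. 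This is a generator-level matching, guided by Proposition \ref{prop: minimal generators}, but it requires aligning Guay's normalization of the affine Yangian (implicit in Proposition \ref{prop: DDCA=shifted Yangian}) with the normalization of the embedding \eqref{eqn: loop Yangian into L^(K)} we are using here.
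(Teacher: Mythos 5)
Your argument for $K>2$ is circular within the logical structure of the paper. You propose to deduce injectivity of $\mathbb Y^{(K)}_\delta\to\mathbb L^{(K)}$ from the identification $\mathbb Y^{(K)}_\delta\cong\mathbb D^{(K)}\subset\mathsf A^{(K)}$ of Proposition \ref{prop: DDCA=shifted Yangian}; but the paper's proof of that proposition (for $K>2$) invokes Lemma \ref{lem: shifted Yangian embeds into loop Yangian} itself. The reason it must: the representations $\rho'_N$ of Proposition \ref{prop: Schur-Weyl for affine Yangian} are only known to have trivial joint kernel on $\mathbb L^{(K)}$, so to conclude that $\prod_N\rho'_N$ is injective on $\mathbb Y^{(K)}_\delta$ — which is what upgrades the surjection $\mathbb Y^{(K)}_\delta\twoheadrightarrow\mathbb D^{(K)}$ to an isomorphism — one already needs $\mathbb Y^{(K)}_\delta\hookrightarrow\mathbb L^{(K)}$. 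Landing the image in $\mathsf A^{(K)}_+$ and matching generators does not resolve this; the whole difficulty is to show that the subalgebra $\mathbb Y^{(K)}_\delta\subset\mathbb Y^{(K)}$ meets the two-sided ideal $(\mathfrak c)$ trivially, and your proposal never engages with that ideal.

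What is genuinely needed (and what the paper uses) is an \emph{independent} upper bound on the size of $\mathbb Y^{(K)}_\delta$, namely Guay's PBW theorems: $\mathrm{gr}\,\mathbb Y^{(K)}_\delta\cong U(\mathfrak{st}_K[t_1,t_2])$ and $\mathrm{gr}\,\mathbb Y^{(K)}\cong U(\mathfrak{st}_K[u^{\pm},v])$ for the filtration $\deg X^{\pm}_{i,r}=\deg H_{i,r}=r$. Injectivity then reduces to showing that the central element $\mathbf c$ of the universal central extension $\mathfrak{st}_K[u^{\pm},v]$ does not lie in $\mathfrak{st}_K[t_1,t_2]$, which follows from Kassel's computation $H_2(\mathfrak{sl}_K(A))\cong\Omega^1(A)/dA$ together with the observation that $\mathbf c$ corresponds to the one-form $u^{-1}du\notin\Omega^1(\mathbb C[u,u^{-1}v])/d$. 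This cocycle analysis is the actual content of the lemma and is absent from your proposal. (Your $K=1$ argument is essentially fine, since there the identification $\mathbb Y^{(1)}_\delta\cong\mathsf A^{(1)}$ is imported from the literature independently of the lemma; also note that Proposition \ref{prop: minimal generators} concerns $\mathbb Y^{(K)}$, not $\mathbb Y^{(K)}_\delta$, though the generating set you list for $\mathbb Y^{(K)}_\delta$ is indeed the one stated in the paper.)
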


\begin{proof}
If $K>2$, then the lemma is a direct consequence of PBW theorem for $\mathbb Y^{(K)}$ and $\mathbb Y^{(K)}_\delta$ proved in \cite[Section 7]{guay2007affine}. More precisely, if we give $\mathbb Y^{(K)}$ (resp. $\mathbb Y^{(K)}_\delta$) a filtration by letting $\deg X^{\pm}_{i,r}=\deg H_{i,r}=r$, then it is shown in \emph{loc. cit.} that there are canonical isomorphism $U(\mathfrak{st}_K[t_1,t_2])\cong \mathrm{gr}\:\mathbb Y^{(K)}_\delta$ and $U(\mathfrak{st}_K[u^{\pm},v])\cong \mathrm{gr}\:\mathbb Y^{(K)}$ where $t_1=u,t_2=u^{-1}v$, where $\mathfrak{st}_K[t_1,t_2]$ is the universal central extension of the Lie algebra $\mathfrak{sl}_K[t_1,t_2]$ and similarly for $\mathfrak{st}_K[u^{\pm},v]$. It is shown in \cite[Théorème 1.7]{kassel1982extensions} (see also \cite[Theorem 3.1]{bloch2006dilogarithm}) that for a commutative algebra $A$ the Lie algebra homology $H_2(\mathfrak{sl}_K(A))$ is isomorphic to the second Hochschild homology $ HC_2(A)\cong \Omega^1(A)/dA$, and that the universal central extension of $\mathfrak{sl}_K(A)$ is given by the $2$-cocycle 
\begin{align}
    x\wedge y\mapsto \mathrm{Tr}(xdy)
\end{align}
from $\wedge^2 \mathfrak{sl}_K(A)$ to $\Omega^1(A)/dA$. The element $\mathbf c\in \mathfrak{st}_K[u^{\pm},v]$ corresponds to the one-form $u^{-1}du$, which is not an element of $\Omega^1(\mathbb C[u,u^{-1}v])$, so $\mathbf c\notin \mathfrak{st}_K[t_1,t_2]$. Then it follows that the composition $\mathfrak{st}_K[t_1,t_2]\hookrightarrow \mathfrak{st}_K[u^{\pm},v]\twoheadrightarrow \mathfrak{st}_K[u^{\pm},v]/\mathbf c$ is injective, so $\mathrm{gr}\:\mathbb Y^{(K)}_\delta\to \mathrm{gr}\:\mathbb{L}^{(K)}$ is injective, thus $\mathbb Y^{(K)}_\delta\to \mathbb{L}^{(K)}$ is injective.

The case $K=1$ follows from the Proposition \ref{prop: DDCA=shifted Yangian}.
\end{proof}

\begin{proposition}\label{prop: DDCA=shifted Yangian}
If $K=1$, then $\mathbb Y^{(1)}_\delta\cong \mathsf A^{(1)}$. If $K>2$, then $\mathbb Y^{(K)}_\delta\cong \mathbb D^{(K)}$.
\end{proposition}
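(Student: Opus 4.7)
The proof naturally splits into the cases $K > 2$ and $K = 1$, treated separately because Lemma \ref{lem: shifted Yangian embeds into loop Yangian}, which is available for $K > 2$, is derived from the present proposition in the case $K = 1$.

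For $K > 2$, composing Lemma \ref{lem: shifted Yangian embeds into loop Yangian} with the embedding \eqref{eqn: loop Yangian into L^(K)} yields an injection $\iota : \mathbb Y^{(K)}_\delta \hookrightarrow \mathsf L^{(K)}$, and my plan is to show that its image coincides with $\mathbb D^{(K)}$. Under \eqref{eqn: loop Yangian into L^(K)}, the generators $X \otimes u^n$, $J(X)$, and $X^-_{0,1}$ of $\mathbb Y^{(K)}_\delta$ map to $\mathsf T_{0,n}(X)$, $\mathsf T_{1,1}(X)$, and $\mathsf T_{1,0}(E^K_1)$ respectively, all of which lie in $\mathbb D^{(K)}$. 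Invoking Proposition \ref{prop: minimal generators} (applied to the $\delta$-shifted generating set, with $X^-_{0,0}$ omitted) reduces everything to a finite level-0/level-1 set, so $\iota(\mathbb Y^{(K)}_\delta) \subseteq \mathbb D^{(K)}$. The reverse inclusion requires recovering the four families $\mathsf T_{n,m}(X)$ with $X \in \mathfrak{sl}_K$ and $(n,m) \in \{0,1\}^2$; three are immediate, and for $\mathsf T_{1,0}(X)$ with general $X$ I start from $\mathsf T_{1,0}(E^K_1) = \iota(X^-_{0,1})$ and iterate the identity $[\mathsf T_{0,0}(Y),\mathsf T_{1,0}(X)] = \mathsf T_{1,0}([Y,X])$, using that $\mathrm{ad}_{\mathfrak{sl}_K}(E^K_1)$ spans $\mathfrak{sl}_K$.

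For $K = 1$ I instead use the isomorphism $\mathsf L^{(1)} \cong \mathbb L^{(1)}$ from \eqref{eqn: map L to loop Yangian} and study the composition $\pi : \mathbb Y^{(1)}_\delta \to \mathbb L^{(1)} \cong \mathsf L^{(1)}$. The plan is to show $\pi$ restricts to an isomorphism onto $\mathsf A^{(1)}$. On one hand, an explicit identification shows $X^+_{0,0} \mapsto \mathsf t_{0,1}$, $H_{0,1} \mapsto \mathsf t_{0,0}$, $X^-_{0,1} \mapsto -\mathsf t_{1,0}$, and $X^-_{0,2} \mapsto -\mathsf t_{2,1}$ (the last two verified by matching $[X^-_{0,1},X^-_{0,2}] \mapsto \mathsf t_{2,0}$ from \eqref{eqn: map L to loop Yangian} with the bracket computation $[-\mathsf t_{1,0},-\mathsf t_{2,1}] = \mathsf t_{2,0}$); higher generators are then shown to land in $\mathsf A^{(1)}$ via recursion through \eqref{eqn: Y1'}--\eqref{eqn: Y4'}. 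On the other hand, Proposition \ref{prop: simpler definition, K=1} says $\mathsf A^{(1)}$ is generated by $\mathsf t_{3,0}, \mathsf t_{2,0}, \mathsf t_{1,0}, \mathsf t_{1,1}$, and $\mathsf t_{0,n}$ for $n \ge 0$, each of which can be exhibited in the image of $\pi$ directly or as a derived bracket. Injectivity of the restriction $\pi : \mathbb Y^{(1)}_\delta \to \mathsf A^{(1)}$ is established by constructing an explicit inverse $\psi : \mathsf A^{(1)} \to \mathbb Y^{(1)}_\delta$ on the generators of Proposition \ref{prop: simpler definition, K=1} and verifying that it respects \eqref{eqn: A1', K=1}--\eqref{eqn: A4', K=1}.

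The principal obstacle is the verification of \eqref{eqn: A4', K=1} under the candidate map $\psi$: one must extract the $\sigma_2$ and $\sigma_3$ terms and the quadratic $\mathsf t_{0,m}\mathsf t_{0,n-3-m}$ pieces from the cubic Serre relation \eqref{eqn: Y4'} together with the Yangian relations \eqref{eqn: Y1'}--\eqref{eqn: Y2'}. The $K > 2$ case is comparatively lighter, since Guay's PBW theorem (packaged in Lemma \ref{lem: shifted Yangian embeds into loop Yangian}) already handles the injectivity, reducing the argument to an identification of generating sets.
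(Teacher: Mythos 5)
Your $K>2$ argument is essentially the paper's own proof in different packaging. The paper also identifies the two algebras by matching generators inside the (product over $N$ of) spherical Cherednik algebras, $\rho_N(\mathsf T_{0,n}(X))=\rho'_N(X\otimes u^n)$, $\rho_N(\mathsf T_{1,1}(X))=\rho'_N(J(X))$, $\rho_N(\mathsf T_{1,0}(E^K_1))=\rho'_N(X^-_{0,1})$, and then invokes faithfulness on both sides (Proposition \ref{prop: Schur-Weyl for affine Yangian} and Corollary \ref{cor: truncation of D}) together with Lemma \ref{lem: shifted Yangian embeds into loop Yangian}; composing with \eqref{eqn: loop Yangian into L^(K)} as you do is the same data. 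Your explicit check that the two generating sets generate the same subalgebra --- recovering $\mathsf T_{1,0}(X)$ for general $X$ from $\mathsf T_{1,0}(E^K_1)$ by bracketing against $\mathsf T_{0,0}(\mathfrak{sl}_K)$, which works because $[\mathsf T_{0,0}(Y),\mathsf T_{1,0}(X)]=\mathsf T_{1,0}([Y,X])$ holds exactly (apply $\tau$ to \eqref{eqn: A1}) and $E^K_1$ generates $\mathfrak{sl}_K$ under $\mathrm{ad}$ --- is a worthwhile detail the paper glosses over. One small correction: the generation of $\mathbb Y^{(K)}_\delta$ by $\mathfrak{sl}_K[u]$, $Y_{\epsilon_1}(\mathfrak{sl}_K)$ and $X^-_{0,1}$ is not an application of Proposition \ref{prop: minimal generators} (which concerns the unshifted $\mathbb Y^{(K)}$); it is the description of $\mathbb Y^{(K)}_\delta$ recorded in the paper right after its definition, resting on Guay's PBW theorem.

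For $K=1$ the two routes genuinely diverge, and yours has a gap. The paper does not reprove the statement: it writes down the generator dictionary and cites the literature. You propose a self-contained proof via Proposition \ref{prop: simpler definition, K=1}, constructing $\psi:\mathsf A^{(1)}\to\mathbb Y^{(1)}_\delta$ and checking \eqref{eqn: A1', K=1}--\eqref{eqn: A4', K=1}; the logical skeleton is sound, and you correctly avoid the circularity with Lemma \ref{lem: shifted Yangian embeds into loop Yangian}, whose $K=1$ case the paper deduces from this very proposition. But the entire mathematical content of the $K=1$ isomorphism is the verification of \eqref{eqn: A4', K=1} --- extracting the $\sigma_2$, $\sigma_3$ and quadratic $\mathsf t_{0,m}\mathsf t_{0,n-3-m}$ terms from \eqref{eqn: Y1'}--\eqref{eqn: Y4'} --- and you only name this as the ``principal obstacle'' without carrying it out. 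As written, that step is absent rather than routine: it is exactly where the nontrivial identities of the $\gl_1$ affine Yangian enter, and without it the map $\psi$ is not known to be well defined. A secondary unverified point: the identifications $X^-_{0,1}\mapsto-\mathsf t_{1,0}$ and $X^-_{0,2}\mapsto-\mathsf t_{2,1}$ are only checked through their bracket, which is invariant under a simultaneous sign flip, so the individual signs (and their compatibility with $[X^+_{0,r},X^-_{0,s}]=H_{0,r+s}$ and the normalization of $H_{0,r}$) still need to be pinned down.
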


\begin{proof}
The case $K=1$ is known in the literature (see for example \cite{gaiotto2022miura}), and the isomorphism is given by an explicit map between generators:
\begin{align}
    \mathsf t_{0,0}\mapsto H_{0,1},\quad\mathsf t_{2,0}\mapsto [X^-_{0,1},X^-_{0,2}],\quad \mathsf t_{0,n}\mapsto \frac{1}{(n-1)!}\mathrm{ad}_{X^+_{0,1}}^{n-1}X^+_{0,0}.
\end{align}
For the cases $K>2$, we claim that the following map of generators gives rise to an algebra isomorphism between $ \mathbb D^{(K)}$ and $\mathbb Y^{(K)}_\delta$
\begin{equation}\label{eqn: embedding into affine Yangian}
    \begin{split}
        \mathsf T_{0,n}(X)\mapsto X\otimes u^n,\quad \mathsf T_{1,1}(X)\mapsto J(X),\quad \mathsf T_{1,0}(E^K_1)\mapsto X^-_{0,1},
    \end{split}
\end{equation}
where $X\in\mathfrak{sl}_K$ (see Definition \ref{def: J generators and modes}). In fact, compare \eqref{eqn: rep in trig Cherednik} with Lemma \ref{lem: map rho_N} and we find that equations  $\rho_N(\mathsf T_{0,n}(X))=\rho'_N(X\otimes u^n), \rho_N(\mathsf T_{1,1}(X))=\rho'_N( J(X)),\rho_N(\mathsf T_{1,0}(E^K_1))=\rho'_N( X^-_{0,1})$ hold for all $N$, therefore the claim follows from Corollary \ref{cor: A is DDCA} and Proposition \ref{prop: Schur-Weyl for affine Yangian} and Lemma \ref{lem: shifted Yangian embeds into loop Yangian}.
\end{proof}

\subsection{Map from the affine Yangian to \texorpdfstring{$\mathsf Y^{(K)}$}{Yk}}\label{subsec: map from affine Yangian to Y}

If $K\neq 2$, then it is known that there exists algebra homomorphism $\Psi'_L: \mathbb Y^{(K)}\to \mathfrak{U}(\mathcal W^{(K)}_L)[\bar\alpha^{-1}]$ for every $L\in \mathbb N_{>0}$.

The case when $K=1$ is worked out by Schiffmann-Vasserot in \cite{schiffmann2013cherednik}, see also \cite{gaiotto2022miura,tsymbaliuk2017affine}. Explicitly, $\Psi'_L$ is uniquely determined by
\begin{equation}\label{eqn: map affine Yangian to W_K=1}
\begin{split}
&\Psi'_L\left(\mathrm{ad}_{X^+_{0,1}}^{n-1}X^+_{0,0}\right)=\frac{(n-1)!}{\epsilon_2}W^{(1)}_n,\quad \Psi'_L(X^-_{0,0})=-\frac{1}{\epsilon_2}W^{(1)}_{-1},\\
&\Psi'_L(H_{0,1})=\frac{1}{\epsilon_2} W^{(1)}_0,\quad \Psi'_L(H_{0,0})= \frac{L}{\epsilon_1\epsilon_2},\quad \Psi'_L([X^-_{0,1},X^-_{0,2}])=\Psi_L(\mathsf t_{2,0}).
\end{split}
\end{equation}
The maps $\Psi'_L$ promote uniquely to a map $\Psi'_\infty: \mathbb Y^{(1)}\to \mathfrak{U}(\mathsf W^{(1)}_{\infty})[\epsilon_2^{-1}]$, and the image of $\Psi'_\infty$ is contained in the image of $\Psi_\infty:\mathsf Y^{(1)}\to \mathfrak{U}(\mathsf W^{(1)}_{\infty})[\epsilon_2^{-1}]$, thus it induces a $\mathbb C[\epsilon_1,\epsilon_2]$-algbera map $f: \mathbb Y^{(1)}\to \mathsf Y^{(1)}$ which is uniquely determined by
\begin{equation}\label{eqn: isom Y^1}
\begin{split}
&f\left(\mathrm{ad}_{X^+_{0,1}}^{n-1}X^+_{0,0}\right)=(n-1)!\mathsf t_{0,n},\quad f(X^-_{0,0})=-\mathsf t_{0,-1},\\
&f(H_{0,1})=\mathsf t_{0,0},\quad f(H_{0,0})= \mathbf c,\quad f([X^-_{0,1},X^-_{0,2}])=\mathsf t_{2,0}.
\end{split}
\end{equation}
The image of $f$ generates $\mathsf Y^{(1)}$ by Lemma \ref{lem: A_+ and t[0,-1] generates Y}, so $f$ is surjective. On the other hand, modulo $H_{0,0}$, $f$ agrees with the inverse of the isomorphism $\mathsf L^{(1)}\cong \mathbb L^{(1)}$ in \eqref{eqn: map L to loop Yangian}, this implies that $f$ is also injective because $H_{0,0}$ has no zero-divisor in $\mathbb Y^{(1)}$ (by \cite[Proposition 1.36]{schiffmann2013cherednik}, the central term $H_{0,0}$ is added freely so that $\mathbb Y^{(1)}\cong \mathbb L^{(1)}\otimes \mathbb C[H_{0,0}]$ as a vector space), and $\mathbf c$ has no zero-divisor in $\mathsf Y^{(1)}$ (by the Theorem \ref{thm: PBW for Y}). Therefore $f$ is an isomorphism.\\

The case when $K>2$ is worked out in Kodera-Ueda \cite{kodera2022coproduct}. In \emph{loc. cit.}, a homomorphism $\Phi_L: \mathbb Y^{(K)}\to \mathfrak{U}(\mathcal W^{(K)}_L)$ is found for every $L\in \mathbb N_{>0}$. Recall the automorphism $\eta_{\beta}^{\otimes L}$ defined in Remark \ref{rmk: shift automorphism of W}, we define $$\Psi'_L:=\eta_{\frac{K}{4}+(\frac{1}{2}-L)\alpha}^{\otimes L}\circ \Phi_L,$$ which is uniquely determined by
\begin{equation}\label{eqn: map affine Yangian to W_K>2}
\begin{split}
\Psi'_L(X^+_{i,0})=\begin{cases}
    W^{i(1)}_{i+1,0},& i\neq 0\\
    W^{K(1)}_{1,1}, & i=0
\end{cases},&
\quad 
\Psi'_L(X^-_{i,0})=\begin{cases}
    W^{i+1(1)}_{i,0},& i\neq 0\\
    W^{1(1)}_{K,-1}, & i=0
\end{cases},\\
\Psi'_L(X^-_{0,1})=\Psi_L(\mathsf T_{1,0}(E^K_1)),&\quad \Psi'_L(J(X))= \Psi_L(\mathsf T_{1,1}(X)).
\end{split}
\end{equation}
The maps $\Psi'_L$ promote uniquely to a map $\Psi'_\infty: \mathbb Y^{(K)}\to \mathfrak{U}(\mathsf W^{(K)}_{\infty})$, and the image of $\Psi'_\infty$ is contained in the image of $\Psi_\infty:\mathsf Y^{(K)}\to \mathfrak{U}(\mathsf W^{(K)}_{\infty})[\epsilon_2^{-1}]$, thus it induces a $\mathbb C[\epsilon_1,\epsilon_2]$-algbera map $g: \mathbb Y^{(K)}\to \mathsf Y^{(K)}$ which is uniquely determined by
\begin{equation}\label{eqn: isom Y^K}
\begin{split}
&g(X^+_{i,0})=\begin{cases}
    \mathsf T_{0,0}(E^i_{i+1}),& i\neq 0\\
    \mathsf T_{0,1}(E^K_{1}), & i=0
\end{cases},
\quad 
g(X^-_{i,0})=\begin{cases}
    \mathsf T_{0,0}(E^{i+1}_i),& i\neq 0\\
    \mathsf T_{0,-1}(E^1_{K}), & i=0
\end{cases},\\
&g(X^-_{0,1})=\mathsf T_{1,0}(E^K_1),\quad g(J(X))= \mathsf T_{1,1}(X), \quad g(\mathfrak{c})=\epsilon_2\epsilon_3\mathbf c.
\end{split}
\end{equation}
Moreover, modulo $\mathfrak c$, $g$ agrees with the embedding $\mathbb L^{(K)}\hookrightarrow \mathsf L^{(K)}$ in \eqref{eqn: map L to loop Yangian}, this implies that $g$ is also injective because $\mathfrak c$ has no zero-divisor in $\mathbb Y^{(K)}$ (by the PBW theorem for $\mathbb Y^{(K)}$ \cite{guay2007affine}) and $\epsilon_2\epsilon_3\mathbf c$ has no zero-divisor in $\mathsf Y^{(K)}$ (by the Theorem \ref{thm: PBW for Y}). Finally, if we localize to $\mathbb C[\epsilon_1,\epsilon_2^\pm,\epsilon_3^\pm]$, then the image of $g$ generates $\mathsf Y^{(K)}$, by Corollary \ref{cor: compare two DDCAs}.

The above discussions are summarized in the following theorem.

\begin{theorem}\label{thm: compare affine Yangian with Y}
If $K=1$, then \eqref{eqn: isom Y^1} induces an isomorphism $\mathbb Y^{(1)}\cong \mathsf Y^{(1)}$. If $K>2$, then \eqref{eqn: isom Y^K} induces an embedding $\mathbb Y^{(K)}\hookrightarrow \mathsf Y^{(K)}$ such that it becomes an isomorphism after localizing to $\mathbb C[\epsilon_1,\epsilon_2^\pm,\epsilon_3^\pm]$.
\end{theorem}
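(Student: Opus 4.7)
The plan is to bootstrap the desired maps from the W-algebra side, using the injectivity of $\Psi_\infty$ proved in Theorem \ref{thm: Psi_infty is injective} to define maps $\mathbb Y^{(K)} \to \mathsf Y^{(K)}$ and then separately verify injectivity and surjectivity (after localization when $K>2$).

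\textbf{Step 1: Lift the finite $L$ maps to the uniform limit.} The maps $\Psi'_L : \mathbb Y^{(K)} \to \mathfrak U(\mathcal W^{(K)}_L)[\bar\alpha^{-1}]$ are already recorded in \eqref{eqn: map affine Yangian to W_K=1} (due to Schiffmann-Vasserot) and \eqref{eqn: map affine Yangian to W_K>2} (due to Kodera-Ueda, modified by the shift $\eta^{\otimes L}_{K/4 + (1/2 - L)\alpha}$). I would first show these maps are compatible with the truncations $\pi_L : \mathsf W^{(K)}_\infty \twoheadrightarrow \mathcal W^{(K)}_L$, i.e.\ that the formulas lift uniformly in $L$ to a map $\Psi'_\infty : \mathbb Y^{(K)} \to \mathfrak U(\mathsf W^{(K)}_\infty)[\epsilon_2^{-1}]$. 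This is a direct check on the generators of $\mathbb Y^{(K)}$ since the right-hand sides of \eqref{eqn: map affine Yangian to W_K=1}, \eqref{eqn: map affine Yangian to W_K>2} are expressed using $\Psi_L$ and the $\mathds W^{a(1)}_b$ modes, which behave uniformly in $L$.

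\textbf{Step 2: Factor through $\mathsf Y^{(K)}$.} By inspection, each generator of $\mathbb Y^{(K)}$ is sent by $\Psi'_\infty$ to an element in the image of $\Psi_\infty : \mathsf Y^{(K)} \to \mathfrak U(\mathsf W^{(K)}_\infty)[\epsilon_2^{-1}]$ --- for instance $\Psi'_\infty(X^-_{0,1}) = \Psi_\infty(\mathsf T_{1,0}(E^K_1))$ and $\Psi'_\infty(J(X)) = \Psi_\infty(\mathsf T_{1,1}(X))$ by construction. Since $\Psi_\infty$ is injective (Theorem \ref{thm: Psi_infty is injective}), the composition $\Psi_\infty^{-1} \circ \Psi'_\infty$ yields a well-defined $\mathbb C[\epsilon_1,\epsilon_2]$-algebra map $f$ (for $K=1$) or $g$ (for $K>2$) from $\mathbb Y^{(K)}$ to $\mathsf Y^{(K)}$, acting on generators as prescribed by \eqref{eqn: isom Y^1} and \eqref{eqn: isom Y^K}.

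\textbf{Step 3: Surjectivity.} For $K=1$, the image of $f$ contains $\mathsf A^{(K)}_+$ (via the image of $\mathrm{ad}_{X^+_{0,1}}^{n-1} X^+_{0,0}$ giving the generators $\mathsf t_{0,n}$, and $[X^-_{0,1},X^-_{0,2}]$ giving $\mathsf t_{2,0}$) and also contains $\mathsf t_{0,-1}$, so Lemma \ref{lem: A_+ and t[0,-1] generates Y} gives surjectivity directly. For $K>2$, after localizing to $\mathbb C[\epsilon_1,\epsilon_2^\pm,\epsilon_3^\pm]$, the image of $g$ contains all the generators $\mathsf T_{n,m}(X)$ for $X \in \mathfrak{sl}_K$, $(n,m) \in \{0,1\}^2$ of both $\mathsf A^{(K)}_\pm$ (again via the relation $\mathsf T_{0,1}(E^K_1),\mathsf T_{0,-1}(E^1_K) \in \mathrm{image}$ from $X^\pm_{0,0}$, plus the $\mathfrak{sl}_K[u^{\pm}]$ loop subalgebra), so by Corollary \ref{cor: compare two DDCAs} the image contains $\mathsf D^{(K)}_\pm[\epsilon_3^{-1}]$, and hence the whole of $\mathsf Y^{(K)}[\epsilon_3^{-1}]$, and in particular surjects onto $\mathsf Y^{(K)}[\epsilon_2^\pm,\epsilon_3^\pm]$.

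\textbf{Step 4: Injectivity.} The key observation is that reducing modulo the central element produces the loop Yangian map already studied. Explicitly: $f$ modulo $H_{0,0}$ (respectively $g$ modulo $\mathfrak c$) agrees with the inverse of $\mathsf L^{(1)} \cong \mathbb L^{(1)}$ from \eqref{eqn: map L to loop Yangian} (respectively with the embedding $\mathbb L^{(K)} \hookrightarrow \mathsf L^{(K)}$ from the same discussion combined with Lemma \ref{lem: shifted Yangian embeds into loop Yangian}), which is injective. Since $H_{0,0}$ has no zero divisors in $\mathbb Y^{(1)}$ (using $\mathbb Y^{(1)} \cong \mathbb L^{(1)} \otimes \mathbb C[H_{0,0}]$ from \cite{schiffmann2013cherednik}) and $\mathfrak c$ has no zero divisors in $\mathbb Y^{(K)}$ (by the PBW theorem of \cite{guay2007affine}), while $\mathbf c$ has no zero divisors in $\mathsf Y^{(K)}$ by the PBW Theorem \ref{thm: PBW for Y}, a standard argument ``lifting'' the injectivity through the $\mathbf c$-adic filtration promotes it to injectivity of $f$ (respectively $g$) itself.

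The principal subtlety I anticipate is \emph{Step 2}, specifically verifying that $\Psi'_\infty(J(X))$ really coincides with $\Psi_\infty(\mathsf T_{1,1}(X))$ and that $\Psi'_\infty(X^-_{0,1})$ coincides with $\Psi_\infty(\mathsf T_{1,0}(E^K_1))$: these identifications require matching the Kodera-Ueda quadratic formulas against our $\Psi_\infty$ formulas \eqref{eqn: map to W(infinity)} term by term, including the shift automorphism $\eta_{K/4+(1/2-L)\alpha}^{\otimes L}$ used to align conventions (see Remark \ref{rmk: compare different presentations}). Once this identification is established, Steps 3 and 4 are essentially formal.
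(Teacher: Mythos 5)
Your proposal follows essentially the same route as the paper: lift the Schiffmann--Vasserot and (shifted) Kodera--Ueda maps to $\Psi'_\infty$, factor through $\mathsf Y^{(K)}$ via the injectivity of $\Psi_\infty$ (Theorem \ref{thm: Psi_infty is injective}), obtain surjectivity from Lemma \ref{lem: A_+ and t[0,-1] generates Y} (for $K=1$) and Corollary \ref{cor: compare two DDCAs} after localization (for $K>2$), and deduce injectivity by reducing modulo the central element to the known loop-Yangian comparison and lifting via the zero-divisor-freeness of $H_{0,0}$, $\mathfrak c$, and $\mathbf c$ guaranteed by the respective PBW theorems. The argument is correct and matches the paper's proof, including the delicate point you flag in Step 2, which the paper handles by building the shift automorphism $\eta^{\otimes L}_{K/4+(1/2-L)\alpha}$ into the definition of $\Psi'_L$ so that the identities $\Psi'_L(X^-_{0,1})=\Psi_L(\mathsf T_{1,0}(E^K_1))$ and $\Psi'_L(J(X))=\Psi_L(\mathsf T_{1,1}(X))$ hold by construction.
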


\subsection{Compare coproducts}
When $K=1$, Schiffmann-Vasserot show that there exists a coproduct $\Delta:\mathbb Y^{(1)}\to \mathbb Y^{(1)}\widehat{\otimes}\mathbb Y^{(1)}$ \cite[Theorem 7.9]{schiffmann2013cherednik}, moreover $\Delta$ is compatible with the W-algebra coproduct $\Delta_{L_1,L_2}:\mathcal W^{(1)}_{L_1+L_2}\to \mathcal W^{(1)}_{L_1}\otimes\mathcal W^{(1)}_{L_2}$ \cite[Section 8.9]{schiffmann2013cherednik} in the sense that
\begin{align*}
    \Delta_{L_1,L_2}\circ \Psi'_{L_1+L_2}=(\Psi'_{L_1}\otimes\Psi'_{L_2})\circ \Delta,
\end{align*}
where $\Psi'_L$ is the map in \eqref{eqn: map affine Yangian to W_K=1}. The above compatibility promotes to an $L\to \infty$ version:
\begin{align*}
    \Delta_{\mathsf W}\circ \Psi'_{\infty}=(\Psi'_{\infty}\otimes \Psi'_{\infty})\circ \Delta.
\end{align*}
Since $\Psi'_{\infty}$ induces the isomorphism $\mathbb Y^{(1)}\cong \mathsf Y^{(1)}$ in the Theorem \ref{thm: compare affine Yangian with Y}, thus $\Delta$ agrees with our coproduct $\Delta_{\mathsf Y}$.\\

When $K>2$, a coproduct for $\mathbb Y^{(K)}$ was found by Guay in \cite{guay2007affine}, see also \cite{guay2018coproduct}. A variation of Guay's formulation was presented by Kodera-Ueda in \cite{kodera2022coproduct}, the difference is that \cite{kodera2022coproduct} uses a completed tensor product which is different from \cite{guay2018coproduct}, and the corproduct in \cite{kodera2022coproduct} is opposite to that of \cite{guay2007affine,guay2018coproduct}. We shall compare the coproduct in \cite{kodera2022coproduct} with our coproduct \eqref{eqn: YY coproduct}. We first recall their coproduct as follows. 

\begin{proposition}[{\cite[Theorem 7.1, Corollary 10.2]{kodera2022coproduct}}]
Let $K>2$, then there exists a coproduct $\Delta:\mathbb{Y}^{(K)}\to \mathbb{Y}^{(K)}\widetilde{\otimes} \mathbb{Y}^{(K)}$ such that
\begin{align}\label{eqn: affine Yangian coproduct vs W coproduct}
    (\eta^{\otimes L_1}_{-\alpha L_2}\otimes 1)\circ\Delta_{L_1,L_2}\circ \Phi_{L_1+L_2}=(\Phi_{L_1}\otimes\Phi_{L_2})\circ \Delta,
\end{align}
where $\Phi_L:\mathbb Y^{(K)}\to \mathfrak{U}(\mathcal W^{(K)}_L)$ is the map defined in \cite[Definition 9.1]{kodera2022coproduct}, and $\eta$ is the shift automorphism of $\mathfrak{U}(\mathcal W^{(K)}_L)$ defined in Remark \ref{rmk: shift automorphism of W}.
\end{proposition}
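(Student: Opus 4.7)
The plan is to bootstrap this identity from the coproduct $\Delta_{\mathsf Y}$ on $\mathsf Y^{(K)}$ already constructed (Definition \ref{def: YY coproduct}), together with the embedding $g:\mathbb Y^{(K)}\hookrightarrow\mathsf Y^{(K)}$ of \eqref{eqn: isom Y^K}. The first step would be to define the affine Yangian coproduct as the unique map $\Delta:\mathbb Y^{(K)}\to\mathbb Y^{(K)}\widetilde\otimes\mathbb Y^{(K)}$ making $(g\otimes g)\circ\Delta = \Delta_{\mathsf Y}\circ g$, then to derive \eqref{eqn: affine Yangian coproduct vs W coproduct} by combining three compatibility statements already established in the paper. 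To show that $\Delta$ is well-defined, I would exploit Proposition \ref{prop: minimal generators}: since $\mathbb Y^{(K)}$ is generated over $\mathbb C[\epsilon_1,\epsilon_2]$ by the finite set $\{X^\pm_{i,r},H_{i,r}\}_{i\in[K],r\in\{0,1\}}$, it suffices to evaluate $\Delta_{\mathsf Y}(g(x))$ on each of these generators using the explicit formulae \eqref{eqn: YY coproduct} and to rewrite each result in the image of $g\otimes g$. For the generators $X^\pm_{i,0},H_{i,0}$ this is immediate because their images $g(\cdot)$ are linear in $\mathsf T_{0,n}(X)$ for $X\in\mathfrak{sl}_K$ and the coproduct is simply $\square$; the nontrivial cases are $H_{i,1}$, $X^\pm_{i,1}$, whose images involve $\mathsf T_{1,0}(E^K_1)$ and $J$-generators, and for which the extra terms in $\Delta_{\mathsf Y}$ are finite sums of products $\mathsf T_{0,n}(X)\otimes\mathsf T_{0,-n-1}(Y)$ that lie in $g(\mathbb Y^{(K)})\otimes g(\mathbb Y^{(K)})$ by \eqref{eqn: isom Y^K}.

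The second step is to deduce \eqref{eqn: affine Yangian coproduct vs W coproduct} from the commutative diagram
\begin{align*}
\Delta_{\mathsf W}\circ\Psi_\infty = (\Psi_\infty\otimes\Psi_\infty)\circ\Delta_{\mathsf Y},\qquad \Delta_{L_1,L_2}\circ\pi_{L_1+L_2} = (\pi_{L_1}\otimes\pi_{L_2})\circ\Delta_{\mathsf W},
\end{align*}
obtained from Proposition \ref{prop: AW coproduct compatible with WW coproduct} and Theorem \ref{thm: matrix extended W(infty)}, combined with $\Psi'_L = \pi_L\circ\Psi_\infty\circ g$. Applying these and the intertwining rule $\Delta_{L_1,L_2}\circ\eta^{\otimes(L_1+L_2)}_\beta = (\eta^{\otimes L_1}_\beta\otimes\eta^{\otimes L_2}_\beta)\circ\Delta_{L_1,L_2}$ for the shift automorphism, one computes that after substituting $\Phi_L = \eta^{\otimes L}_{-K/4+(L-1/2)\alpha}\circ\Psi'_L$ the two sides of \eqref{eqn: affine Yangian coproduct vs W coproduct} differ only by $\eta$-shift automorphisms in the two tensor slots whose exponents should cancel exactly, the single $\eta^{\otimes L_1}_{-\alpha L_2}$ on the left of \eqref{eqn: affine Yangian coproduct vs W coproduct} being the precise amount needed to match the difference $c_{L_1+L_2}-c_{L_1}=-\alpha L_2$.

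The principal obstacle will be the second-factor shift: my naive matching leaves a residual automorphism $\eta^{\otimes L_2}_{-\alpha L_1}$ acting on the second factor of the right-hand side that is not present in \eqref{eqn: affine Yangian coproduct vs W coproduct}. The resolution has to be that the Kodera-Ueda coproduct $\Delta$ is not exactly the pullback of our $\Delta_{\mathsf Y}$ along $g$; rather, the two coproducts differ by conjugation by a ``spectral-shift'' automorphism of $\mathbb Y^{(K)}\widetilde\otimes\mathbb Y^{(K)}$ that implements the residual $\eta^{\otimes L_2}_{-\alpha L_1}$ across all $L_2$. Concretely, I would conjugate $\Delta_{\mathsf Y}$ on the second factor by the automorphism $\boldsymbol\tau_\beta$ of \eqref{eqn: shift automorphism of Y} with $\beta$ chosen to reproduce a shift scaling linearly with the first-factor rank (using that the rank is recovered from $H_{0,0}$); this produces an honest new coproduct $\Delta$ of the affine Yangian that satisfies \eqref{eqn: affine Yangian coproduct vs W coproduct}. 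An alternative, and probably cleaner, route is to take the Kodera-Ueda coproduct as given and verify directly that it agrees with our pullback twisted by this canonical spectral-shift, which reduces the whole argument to a finite generator-level check since both sides are algebra maps and generators coincide under $g$.
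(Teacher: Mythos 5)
This Proposition is recalled verbatim from Kodera--Ueda and the paper supplies no proof of it --- the only ``proof'' in the text is the citation, and in fact the paper runs the logic in the \emph{opposite} direction: it takes \eqref{eqn: affine Yangian coproduct vs W coproduct} as input and deduces from it, in the very next proposition, that $\Delta_{\mathsf Y}|_{\mathbb Y^{(K)}}$ equals $\Delta'=(\mathrm{id}\otimes\boldsymbol{\tau}_{-\mathfrak c\otimes 1})\circ\Delta$. Your proposal inverts this: you construct $\Delta$ from $\Delta_{\mathsf Y}$ and then prove the compatibility. As a strategy this is coherent and not circular with respect to the coproduct statement itself (though note that the existence of $g$, hence of your candidate $\Delta$, already rests on Kodera--Ueda's evaluation map $\Phi_L$, so the derivation is independent only of their \emph{coproduct} results). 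Your bookkeeping of the shifts is correct: with $\Psi'_L=\eta^{\otimes L}_{K/4+(1/2-L)\alpha}\circ\Phi_L$ the identity \eqref{eqn: affine Yangian coproduct vs W coproduct} is equivalent to $\Delta_{L_1,L_2}\circ\Psi'_{L_1+L_2}=(1\otimes\eta^{\otimes L_2}_{-L_1\alpha})\circ(\Psi'_{L_1}\otimes\Psi'_{L_2})\circ\Delta$, so the residual second-factor shift you identify is real, and absorbing it by $(\mathrm{id}\otimes\boldsymbol{\tau}_{\mathfrak c\otimes 1})$ (reading the first factor's rank off $\Psi'_{L_1}(\mathfrak c)=\alpha L_1$) is exactly the mechanism the paper uses in reverse. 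If you adopt this route you should be explicit that the later comparison proposition then becomes true by construction rather than a theorem.

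The genuine gap is in your first step. You assert that $\Delta_{\mathsf Y}(g(x))$ lands in $(g\otimes g)(\mathbb Y^{(K)}\widetilde{\otimes}\mathbb Y^{(K)})$ because the extra terms are ``finite sums'' lying in the image of $g\otimes g$; neither half of this is immediate. The correction terms in \eqref{eqn: YY coproduct} are \emph{infinite} sums (this is why the target is the completed tensor product), and each summand such as $\mathsf T_{0,n}(E^c_1)\otimes\mathsf T_{0,-n-1}(E^K_c)$ involves non-traceless matrices whose $\mathfrak{gl}_1$ components $\mathsf t_{0,m}$ are \emph{not} in $g(\mathbb Y^{(K)})$ without inverting $\epsilon_3$ (Corollary \ref{cor: compare two DDCAs}); one must check that the trace parts cancel between the two summands (they do, but this is a computation you must carry out, not something that follows from \eqref{eqn: isom Y^K}). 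Likewise for $\Delta_{\mathsf Y}(\mathsf T_{1,1}(X))$, which you would need to compute explicitly since only $\Delta_{\mathsf Y}(\mathsf t_{1,n})$ is written out in the paper. You also need, and should state, the intertwining $\Delta_{L_1,L_2}\circ\eta^{\otimes(L_1+L_2)}_\beta=(\eta^{\otimes L_1}_\beta\otimes\eta^{\otimes L_2}_\beta)\circ\Delta_{L_1,L_2}$ (easy from the Miura-operator definition, but nowhere recorded) and the counit/coassociativity of your $\Delta$ inherited from Proposition \ref{prop: Y^K is a bialgebra} via injectivity of $g\otimes g$, in order to call $\Delta$ a coproduct at all. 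With those points filled in, your argument would stand as a self-contained alternative to citing Kodera--Ueda.
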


Apparently $\Delta$ does not agree with our $\Delta_{\mathsf Y}$ since the latter is compatible with W-algebra coproduct by the Definition \ref{def: YY coproduct}, whereas the former is compatible with a twisted coproduct of W-algebras. Nevertheless these two coproducts are closely related by the automorphism $\boldsymbol{\tau}$ in \eqref{eqn: shift automorphism of Y}, more precisely we have the following.

\begin{lemma}
The subalgebra $\mathbb{Y}^{(K)}\subset \mathsf{Y}^{(K)}$ is invariant under the automorphism $\boldsymbol{\tau}_{\beta}:\mathsf{Y}^{(K)}\cong \mathsf{Y}^{(K)}$
\end{lemma}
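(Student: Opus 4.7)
The case $K=1$ is immediate: by Theorem \ref{thm: compare affine Yangian with Y} the map $f$ is an isomorphism $\mathbb Y^{(1)}\cong \mathsf Y^{(1)}$, so there is nothing to prove. For the rest of the plan, assume $K>2$, identify $\mathbb Y^{(K)}$ with its image under the embedding $g$ of \eqref{eqn: isom Y^K}, and note that since $\boldsymbol\tau_\beta$ is an algebra automorphism of $\mathsf Y^{(K)}$ (which by construction fixes $\mathbf c$), it suffices to check that $\boldsymbol\tau_\beta(g(X))\in g(\mathbb Y^{(K)})$ for every element $X$ in a chosen set of algebra generators of $\mathbb Y^{(K)}$. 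By Proposition \ref{prop: minimal generators} combined with $H_{i,r}=[X^+_{i,r},X^-_{i,0}]$ (a consequence of (Y0)), it is enough to take the $4K$ generators $\{X^\pm_{i,r}\mid i\in[K],\,r\in\{0,1\}\}$.

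The $r=0$ generators are disposed of immediately. Inspection of \eqref{eqn: isom Y^K} shows that each $g(X^\pm_{i,0})$ is one of the elements $\mathsf T_{0,0}(E^i_{i+1})$, $\mathsf T_{0,0}(E^{i+1}_i)$, $\mathsf T_{0,1}(E^K_1)$, $\mathsf T_{0,-1}(E^1_K)$, all lying in the affine Kac--Moody subalgebra
\[
g(\widehat{\mathfrak{sl}}_K)\;=\;\mathbb C[\epsilon_1,\epsilon_2]\cdot\{\mathsf T_{0,n}(X)\mid X\in\mathfrak{sl}_K,\,n\in\mathbb Z\}\;\subset\;g(\mathbb Y^{(K)}).
\]
By the first line of \eqref{eqn: shift automorphism of Y}, $\boldsymbol\tau_\beta(\mathsf T_{0,n}(X))=\mathsf T_{0,n}(X)$ whenever $X\in\mathfrak{sl}_K$ (since $\mathrm{tr}(X)=0$), so $\boldsymbol\tau_\beta$ acts trivially on $g(\widehat{\mathfrak{sl}}_K)$, settling the $r=0$ case.

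For $r=1$, the explicit formula \eqref{eqn: shift automorphism of Y} gives
\[
\boldsymbol\tau_\beta(\mathsf T_{1,0}(E^K_1))\;=\;\mathsf T_{1,0}(E^K_1)+\beta\,\mathsf T_{0,-1}(E^K_1),
\]
and the second summand is in $g(\widehat{\mathfrak{sl}}_K)$; this handles $X^-_{0,1}$. The remaining generators $X^\pm_{i,1}$ $(i\neq 0)$ and $X^+_{0,1}$ can be written via \eqref{eqn: J generators} and iterated commutators with the $r=0$ generators and $X^-_{0,1}$, so their images are expressible in terms of $\mathsf T_{1,1}(X)$ with $X\in\mathfrak{sl}_K$, the image $\mathsf T_{1,0}(E^K_1)$, and $g(\widehat{\mathfrak{sl}}_K)$. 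It therefore suffices to verify $\boldsymbol\tau_\beta(\mathsf T_{1,1}(X))\in g(\mathbb Y^{(K)})$ for $X\in\mathfrak{sl}_K$. Here I would use \eqref{eqn: A3} at $n=1$ with $X_1,X_2\in\mathfrak{sl}_K$ to write
\[
\mathsf T_{1,1}([X_1,X_2])\;=\;[\mathsf T_{1,0}(X_1),\mathsf T_{0,1}(X_2)]\;-\;R(X_1,X_2),
\]
where $R(X_1,X_2)$ is an explicit polynomial in the affine Kac--Moody generators $\mathsf T_{0,n}(Z)$ with $Z\in\gl_K$. Since $[\mathfrak{sl}_K,\mathfrak{sl}_K]=\mathfrak{sl}_K$, every $\mathsf T_{1,1}(X)$ arises this way. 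Applying $\boldsymbol\tau_\beta$ to both sides, the bracket transforms by $[\mathsf T_{1,0}(X_1)+\beta\mathsf T_{0,-1}(X_1),\,\mathsf T_{0,1}(X_2)]$, a bracket of two elements of $g(\mathbb Y^{(K)})$, hence in $g(\mathbb Y^{(K)})$.

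The main bookkeeping obstacle is to control $\boldsymbol\tau_\beta(R(X_1,X_2))$, since $R(X_1,X_2)$ a priori involves $\mathsf T_{0,n}(Z)$ with $Z\in\gl_K$ containing trace components and hence picks up central $\mathrm{tr}(Z)\mathbf c$ shifts under $\boldsymbol\tau_\beta$. This is resolved abstractly: because the left-hand side $\mathsf T_{1,1}([X_1,X_2])=g(J([X_1,X_2]))$ lies in $g(\mathbb Y^{(K)})$ and so does $[\mathsf T_{1,0}(X_1),\mathsf T_{0,1}(X_2)]$, the element $R(X_1,X_2)$ already lies in $g(\mathbb Y^{(K)})$; and since $\mathbf c\in g(\mathbb Y^{(K)})$ (from $g(\mathfrak c)=\epsilon_2\epsilon_3\mathbf c$), any central-shift corrections produced by $\boldsymbol\tau_\beta$ also stay inside $g(\mathbb Y^{(K)})$. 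Combining the three steps, $\boldsymbol\tau_\beta$ maps every algebra generator of $\mathbb Y^{(K)}\subset \mathsf Y^{(K)}$ into $\mathbb Y^{(K)}$, so $\boldsymbol\tau_\beta(\mathbb Y^{(K)})\subset \mathbb Y^{(K)}$; the same argument applied to $\boldsymbol\tau_{-\beta}$ yields the reverse inclusion.
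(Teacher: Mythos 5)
Your overall strategy (check $\boldsymbol{\tau}_{\beta}$ on a generating set, using that it is an algebra automorphism) is sound, and the paper's own proof is in fact much shorter along these lines: it simply records that $\boldsymbol{\tau}_{\beta}(X^{\pm}_{i,0})=X^{\pm}_{i,0}$ and $\boldsymbol{\tau}_{\beta}(X^{\pm}_{i,1})=X^{\pm}_{i,1}+\beta X^{\pm}_{i,0}$ for all $i\in[K]$, and invokes the fact that $\{X^{\pm}_{i,r}\}_{r=0,1}$ generates $\mathbb Y^{(K)}$. Your $r=0$ step and your treatment of $X^-_{0,1}$ agree with this and are fine.

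The gap is in your final step, where you need $\boldsymbol{\tau}_{\beta}(R(X_1,X_2))\in g(\mathbb Y^{(K)})$. Your argument is: $R\in g(\mathbb Y^{(K)})$ (true, as a difference of two elements of $g(\mathbb Y^{(K)})$) and $\mathbf c\in g(\mathbb Y^{(K)})$, hence the central-shift corrections stay in $g(\mathbb Y^{(K)})$. This is a non sequitur. The relation \eqref{eqn: A3} at $n=1$ expresses $R(X_1,X_2)$ as a polynomial in the operators $\mathsf T_{0,0}(Z)$ with $Z$ running over products like $\{X_1,X_2\}$, $[X_1,X_2]E^a_b$, $E^b_a$, which are \emph{not} traceless; by \eqref{eqn: shift automorphism of Y} each such factor shifts by $\epsilon_2\beta\,\mathrm{tr}(Z)\,\mathbf c$, so $\boldsymbol{\tau}_{\beta}(R)-R$ is a sum of terms $\mathbf c\cdot P$ where $P$ is again a polynomial in the $\mathsf T_{0,0}(Z)$, $Z\in\gl_K$. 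Knowing that the \emph{total} $R$ lies in $g(\mathbb Y^{(K)})$ tells you nothing about whether these individual $P$'s do, since $\mathsf t_{0,0}=\tfrac{1}{\epsilon_2}\mathsf T_{0,0}(1)$ itself is not in $g(\mathbb Y^{(K)})$ (the Cartan of $g(\widehat{\mathfrak{sl}}_K)$ at mode zero only contains traceless combinations plus multiples of $\mathbf c$). The claim does hold, but only after an explicit computation: e.g.\ the correction coming from $\sum_{a,b}\mathsf T_{0,0}([X_1,X_2]E^a_b)\mathsf T_{0,0}(E^b_a)$ is $2\epsilon_2\beta\,\mathbf c\,\mathsf T_{0,0}([X_1,X_2])$, and the correction from $\mathsf T_{0,0}(\{X_1,X_2\})$ is the scalar multiple $2\epsilon_2\beta\,\mathrm{tr}(X_1X_2)\,\mathbf c$ — in each case the traces conspire to produce $\mathbf c$ times an element of $g(\widehat{\mathfrak{sl}}_K)$ or of $\mathbb C[\epsilon_1,\epsilon_2]\cdot\mathbf c$. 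You must carry out this verification (or avoid it altogether by computing $\boldsymbol{\tau}_{\beta}$ directly on the Chevalley generators, as the paper does) for the proof to be complete.
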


\begin{proof}
Using \eqref{eqn: shift automorphism of Y}, it is straightforward to compute that
\begin{align}
    \boldsymbol{\tau}_{\beta}(X^{\pm}_{i,0})=X^{\pm}_{i,0},\quad \boldsymbol{\tau}_{\beta}(X^{\pm}_{i,1})=X^{\pm}_{i,1}+\beta X^{\pm}_{i,0},
\end{align}
for all $0\le i\le K-1$. Since $\{X^{\pm}_{i,r}\}_{i\in [K],r=0,1}$ generates $\mathbb Y^{(K)}$, the lemma follows from the the above equation.
\end{proof}

\begin{remark}
For all $K\in \mathbb N_{>0}$, one can define an automorphism $\tau_{\beta}:\mathbb{Y}^{(K)}\cong \mathbb{Y}^{(K)}$ by letting
\begin{align}
    \tau_{\beta}(X^{\pm}_{i}(z))=X^{\pm}_{i}(z-\beta),\quad \tau_{\beta}(H_{i}(z))=H_{i}(z-\beta),
\end{align}
where $X^{\pm}_{i}(z):=\sum_{r\in \mathbb N}X^{\pm}_{i,r}z^{-r-1}$ and $H_{i}(z):=1+\sum_{r\in \mathbb N}H_{i,r}z^{-r-1}$ are generating functions. When $K\neq 2$, $\tau_{\beta}$ agrees with the restriction of $\boldsymbol{\tau}_{\beta}$ to the subalgebra $\mathbb{Y}^{(K)}\subset \mathsf{Y}^{(K)}$.
\end{remark}

Let us set 
\begin{align}
    \Delta':=(\mathrm{id}\otimes \boldsymbol{\tau}_{-\mathfrak{c}\otimes 1})\circ \Delta,
\end{align}
then we have the following.
\begin{proposition}
The coproduct $\Delta_{\mathsf Y}:\mathsf Y^{(K)}\to \mathsf Y^{(K)}\widehat{\otimes}\mathsf Y^{(K)}$ maps the subalgebra $\mathbb Y^{(K)}$ to $\mathbb Y^{(K)}\widehat{\otimes}\mathbb Y^{(K)}$. Moreover, the restriction of $\Delta_{\mathsf Y}$ to $\mathbb Y^{(K)}$ agrees with $\Delta'$.
\end{proposition}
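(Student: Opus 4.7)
The plan is to prove both assertions simultaneously by pulling everything through the embedding $\Psi_\infty: \mathsf Y^{(K)} \hookrightarrow \mathfrak U(\mathsf W^{(K)}_\infty)[\epsilon_2^{-1}]$ (injective by Theorem \ref{thm: Psi_infty is injective}) and using Kodera--Ueda's compatibility \eqref{eqn: affine Yangian coproduct vs W coproduct} in its uniform-in-$L$ form. The case $K=1$ is essentially tautological: Theorem \ref{thm: compare affine Yangian with Y} gives an isomorphism $\mathbb Y^{(1)} \cong \mathsf Y^{(1)}$, and for $K=1$ the Miura operator carries no $\eta$-shift relating $\Psi'_L$ to $\Phi_L$, so Schiffmann--Vasserot's coproduct \cite[Theorem 7.9]{schiffmann2013cherednik} coincides with $\Delta_{\mathsf Y}$ on the nose, and one checks that $\boldsymbol{\tau}_{-\mathfrak c \otimes 1}$ acts trivially on the generators already present in $\mathsf Y^{(1)}$ (consistent with the primitivity observed in \eqref{eqn: isom Y^1}). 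The case $K=2$ is implicitly excluded, as in Theorem \ref{thm: compare affine Yangian with Y}.

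For $K>2$ the essential inputs are: (a) the compatibility $(\Psi_\infty \otimes \Psi_\infty)\circ \Delta_{\mathsf Y} = \Delta_{\mathsf W} \circ \Psi_\infty$ (Proposition \ref{prop: AW coproduct compatible with WW coproduct} together with Definition \ref{def: YY coproduct}); (b) the factorization $\Psi_\infty \circ g = \Psi'_\infty = \boldsymbol{\eta}_{K/4 + (1/2 - \lambda)\alpha} \circ \Phi_\infty$, coming from \eqref{eqn: isom Y^K} and the definition $\Psi'_L = \eta^{\otimes L}_{K/4+(1/2-L)\alpha}\circ \Phi_L$; (c) the uniform limit of Kodera--Ueda's identity, $(\boldsymbol{\eta}^{(1)}_{-\alpha\lambda_2})\circ \Delta_{\mathsf W}\circ \Phi_\infty = (\Phi_\infty\otimes \Phi_\infty)\circ \Delta$, where $\lambda_i$ denotes the central parameter $\lambda$ in the $i$-th tensor factor; and (d) a key intertwining lemma stating that for any polynomial $\beta(\lambda)$,
\[
\Delta_{\mathsf W}\circ \boldsymbol{\eta}_{\beta(\lambda)} = \bigl(\boldsymbol{\eta}^{(1)}_{\beta(\lambda_1+\lambda_2)} \otimes \boldsymbol{\eta}^{(2)}_{\beta(\lambda_1+\lambda_2)}\bigr)\circ \Delta_{\mathsf W}.
\]
I would prove (d) by first checking the identity $(\eta^{\otimes L_1}_\gamma \otimes \eta^{\otimes L_2}_\gamma)\circ \Delta_{L_1,L_2} = \Delta_{L_1,L_2}\circ \eta^{\otimes(L_1+L_2)}_\gamma$ at finite $L$ for constant $\gamma$, which is immediate from the Miura-splitting definition of $\Delta_{L_1,L_2}$ (a uniform shift of all $L_1+L_2$ elementary Miura operators by $\gamma$ factors through any splitting); the uniform-in-$L$ promotion and $\Delta_{\mathsf W}(\lambda) = \lambda\otimes 1 + 1\otimes \lambda$ handle the $\lambda$-dependent case by Taylor expansion.

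With (d) in hand, substituting (b) and (a) into $(\Psi_\infty \otimes \Psi_\infty)\circ \Delta_{\mathsf Y}\circ g = \Delta_{\mathsf W}\circ \boldsymbol{\eta}_{K/4+(1/2-\lambda)\alpha}\circ \Phi_\infty$, commuting the shift past $\Delta_{\mathsf W}$ by (d), applying (c) to eliminate $\Delta_{\mathsf W}\circ \Phi_\infty$, and reassembling the shifts tensor-factor-wise back into $\Psi'_\infty = \Psi_\infty \circ g$ produces
\[
(\Psi_\infty \otimes \Psi_\infty)\circ \Delta_{\mathsf Y}\circ g = \bigl(1 \otimes \boldsymbol{\eta}^{(2)}_{-\alpha\lambda_1}\bigr)\circ (\Psi'_\infty \otimes \Psi'_\infty)\circ \Delta.
\]
Converting the outer $\boldsymbol{\eta}$ back to $\boldsymbol{\tau}$ via $\boldsymbol{\tau}_\beta = \boldsymbol{\eta}_{\beta/\epsilon_1}$ and the identifications $\Psi_\infty(\mathbf c) = \mathsf c/\epsilon_2$, $g(\mathfrak c) = \epsilon_2\epsilon_3 \mathbf c$, $\alpha = \epsilon_3/\epsilon_1$, $\lambda = \epsilon_1\mathsf c$ shows that the shift $1\otimes \boldsymbol{\eta}^{(2)}_{-\alpha\lambda_1}$ on $\mathsf W^{(K)}_\infty \widetilde\otimes \mathsf W^{(K)}_\infty$ corresponds under $\Psi_\infty \otimes \Psi_\infty$ and $g\otimes g$ to the shift $1\otimes \boldsymbol{\tau}^{(2)}_{-g(\mathfrak c)\otimes 1}$, which restricts on $\mathbb Y^{(K)}\otimes \mathbb Y^{(K)}$ to $1 \otimes \tau^{(2)}_{-\mathfrak c\otimes 1}$ by the remark following Lemma \ref{lem: shifted Yangian embeds into loop Yangian}. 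Injectivity of $\Psi_\infty\otimes \Psi_\infty$ on $(g\otimes g)\bigl(\mathbb Y^{(K)}\widetilde\otimes \mathbb Y^{(K)}\bigr)$ then yields $\Delta_{\mathsf Y}\circ g = (g\otimes g)\circ \Delta'$, giving both the preservation of $\mathbb Y^{(K)}$ and the identification of the restriction with $\Delta'$ at once.

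The main obstacle will be the intertwining identity (d), because the shift $\beta(\lambda) = K/4 + (1/2-\lambda)\alpha$ in the definition of $\Psi'_\infty$ is not constant in $\lambda$, and $\Delta_{\mathsf W}(\lambda)=\lambda_1+\lambda_2$ is not central; one must verify carefully that the $\lambda$-dependent shifts combining with the shift $\boldsymbol{\eta}^{(1)}_{-\alpha\lambda_2}$ from (c) reassemble correctly into $\boldsymbol{\eta}_{K/4+(1/2-\lambda_1)\alpha}\otimes 1 = \Psi'_\infty\otimes 1$ on the first factor and leave a clean residual shift of the form $1\otimes \boldsymbol{\eta}^{(2)}_{-\alpha\lambda_1}$ on the second. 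A secondary, bookkeeping-intensive obstacle is aligning the $(\alpha,\lambda)$-conventions on $\mathsf W^{(K)}_\infty$ with the $(\epsilon_1,\epsilon_2,\mathfrak c)$-conventions on $\mathbb Y^{(K)}$ so that the shift lands at $-\mathfrak c \otimes 1$ and not at some rescaled variant; this is really a sequence of substitutions $\alpha\lambda_1 = \epsilon_3\mathsf c^{(1)} = \epsilon_2\epsilon_3\Psi_\infty(\mathbf c^{(1)}) = \Psi_\infty(g(\mathfrak c^{(1)}))$, but it is crucial to do it without error.
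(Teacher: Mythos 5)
Your argument is correct and is essentially the paper's proof: the paper verifies $\Delta_{L_1,L_2}\circ\Psi'_{L_1+L_2}=(\Psi'_{L_1}\otimes\Psi'_{L_2})\circ\Delta'$ at finite $L_1,L_2$ (where your key lemma (d) reduces to the evident fact that a uniform, \emph{constant-parameter} shift of all elementary Miura factors commutes with splitting the Miura operator) and then promotes to generic $\lambda$ and invokes injectivity of $\Psi'_{\infty}$, exactly as you do after unwinding your uniform-in-$L$ packaging. The only slip is the $K=1$ aside --- $\boldsymbol{\tau}_{\beta}$ does not act trivially on the generators of $\mathsf Y^{(1)}$ (it moves $\mathsf t_{2,0}$) --- but that case lies outside the scope of the proposition, since $\Delta'$ is built from the Kodera--Ueda coproduct, which is defined for $K>2$.
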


\begin{proof}
It suffices to show that $\Delta'$ is compatible with the maps $\Psi'_L:\mathbb Y^{(K)}\to \mathfrak{U}(\mathcal W^{(K)}_L)$ in \eqref{eqn: map affine Yangian to W_K>2}. Notice that
\begin{align*}
    \Psi'_L(\mathfrak c)=\alpha L,
\end{align*}
then we have
\begin{align*}
    \Delta_{L_1,L_2}\circ \Psi'_{L_1+L_2}&=(1\otimes \eta_{-L_1\alpha}^{\otimes L_2})\circ(\Psi'_{L_1}\otimes \Psi'_{L_2})\circ \Delta\\
    &=(1\otimes \Psi'_{L_2})\circ (1\otimes \boldsymbol{\tau}_{-\alpha L_1})\circ (\Psi'_{L_1}\otimes 1)\circ \Delta\\
    &=(\Psi'_{L_1}\otimes \Psi'_{L_2})\circ (1\otimes \boldsymbol{\tau}u_{-\mathfrak c\otimes 1})\circ \Delta\\
    &=(\Psi'_{L_1}\otimes \Psi'_{L_2})\circ \Delta'.
\end{align*}
Here in the first line of the above equation we have used \eqref{eqn: affine Yangian coproduct vs W coproduct}. The above compatibility promotes to an $L\to \infty$ version:
\begin{align*}
    \Delta_{\mathsf W}\circ \Psi'_{\infty}=(\Psi'_{\infty}\otimes \Psi'_{\infty})\circ \Delta'.
\end{align*}
Since $\Psi'_{\infty}$ induces the embedding $\mathbb Y^{(K)}\hookrightarrow \mathsf Y^{(K)}$ in the Theorem \ref{thm: compare affine Yangian with Y}, thus $\Delta'$ agrees with our coproduct $\Delta_{\mathsf Y}$. In particular $\Delta_{\mathsf Y}$ maps the subalgebra $\mathbb Y^{(K)}$ to $\mathbb Y^{(K)}\widehat{\otimes}\mathbb Y^{(K)}$. This finishes the proof.
\end{proof}

\subsection{Duality automorphism of the affine Yangian} 
To conclude this section, we note that the duality automorphism $\sigma:\mathsf Y^{(K)}\cong \mathsf Y^{(K)}$, which is obtained by gluing two duality automorphisms $\sigma: \mathsf A^{(K)}_{\pm}\cong \mathsf A^{(K)}_{\pm}$ subject to the relations \eqref{eqn: gluing relations of Y}, maps the subalgebra $\mathbb Y^{(K)}$ to itself. This can be shown by the following straightforward computation of its action on as set of generators of $\mathbb Y^{(K)}$:
\begin{align}
\sigma(\epsilon_1)=\epsilon_1,\quad\sigma(\epsilon_2)=\epsilon_3,\quad\sigma(X\otimes u^n)= -X^{\mathrm{t}}\otimes u^n,\quad
       \sigma(J(X))= -J(X^{\mathrm{t}}),
\end{align}
where we have used the notation on the elements of $\mathbb Y^{(K)}$ introduced in Definition \ref{def: J generators and modes}. Therefore the restriction of $\sigma$ to $\mathbb Y^{(K)}$ is an algebra involution $\mathbb Y^{(K)}\cong \mathbb Y^{(K)}$.\\

Such duality automorphism on $\mathbb Y^{(K)}$ is closely related to the reflection symmetry of the $A^{(1)}_{K-1}$ Dynkin diagram. More precisely, let $\iota:\mathbb Y^{(K)}\cong \mathbb Y^{(K)}$ be the Dynkin diagram reflection automorphism defined as follows
\begin{align*}
\iota(\epsilon_1)=\epsilon_1,\quad\iota(\epsilon_2)=\epsilon_3,\quad \iota(X^{\pm}_{i,r})=-X^{\pm}_{K-i,r},\quad \iota(H_{i,r})=H_{K-i,r},
\end{align*}
for all $i\in [K]$. Then direct computation shows that 
\begin{align}
    \sigma=\mathrm{Ad}(w_0)\circ \iota,
\end{align}
where $w_0\in \mathrm{SL}_K$ is the longest element in the Weyl group of $\mathrm{SL}_K$, which acts naturally on $\mathbb Y^{(K)}$ by integrating the infinitesimal adjoint action of $X\otimes u^0$, where $X\in \mathfrak{sl}_K$.

\section*{Acknowledgement}
We thank Kevin Costello, Ryosuke Kodera, Si Li, and Hiraku Nakajima for discussions. Research at Perimeter Institute is supported by the Government of Canada through Industry Canada and by the Province of Ontario through the Ministry of Research and Innovation. Kavli IPMU is supported by World Premier International Research Center Initiative (WPI), MEXT, Japan.

\appendix

\section{Quantum ADHM Quiver Variety}\label{sec: quantum ADHM quiver variety}
In this appendix we study the Calogero representation of the quantum ADHM quiver variety. The ADHM quiver is the following:
\begin{center}
\begin{tikzpicture}[x={(2cm,0cm)}, y={(0cm,2cm)}, baseline=0cm]
  \node[draw,circle,fill=white] (Gauge) at (0,0) {$N$};
  \node[draw,rectangle,fill=white] (Framing) at (1,0) {$K$};
  \node (Z) at (-.5,0) {\scriptsize $X$};
  \node (Zdag) at (-.73,0) {\scriptsize $Y$};
 \draw[->] (Gauge.15) -- (Framing.160) node[midway,above] {\scriptsize $I$};
 \draw[<-] (Gauge.345) -- (Framing.197) node[midway,below] {\scriptsize $J$};

  \draw[->,looseness=5] (Gauge.210) to[out=210,in=150] (Gauge.150);
  \draw[<-,looseness=6] (Gauge.240) to[out=210,in=150] (Gauge.120);

\end{tikzpicture}
\end{center}
The operators $\{X^i_j,Y^i_j,I^a_i,J^j_a\:|\: 1\le i,j\le N,1\le a\le K\}$, depicted in the above quiver diagram, satisfy the following commutation relations:
\begin{equation}\label{eqn: X,Y,I,J relations}
\begin{aligned}
[X^i_j,Y^k_l]=\epsilon_1\delta^i_l\delta^k_j,\quad [J^j_a,I^b_i]=\epsilon_1\delta^j_i\delta_a^b.
\end{aligned}
\end{equation}
Denote by $R_{\epsilon_1}$ the $\mathbb C[\epsilon_1]$ algebra generated by $(X,Y,I,J)$ with relations \eqref{eqn: X,Y,I,J relations}.
\begin{definition}
We define the quantum Nakajima quiver variety associated to the ADHM quiver $\mathscr O_{\epsilon_1}(\mathcal M_{\epsilon_2}(N,K))$ to be the $\mathbb C[\epsilon_1,\epsilon_2]$-algebra 
\begin{align*}
    \mathscr O_{\epsilon_1}(\mathcal M_{\epsilon_2}(N,K)):=\left(R_{\epsilon_1}[\epsilon_2]/R_{\epsilon_1}[\epsilon_2]\cdot \mu_{\epsilon_2}(\mathfrak{gl}_N)\right)^{\GL_N}.
\end{align*}
Here $\mu_{\epsilon_2}:\mathfrak{gl}_N\to R_{\epsilon_1}[\epsilon_2]$ is the Lie algebra map
\begin{align*}
    \mu_{\epsilon_2}(E^i_j)=:{[X,Y]^i_j}:+I^a_jJ^i_a-\epsilon_2\delta^i_j,
\end{align*}
where the normal ordering convention is such that $Y$ is to the left of $X$ and that $I$ to the left of $J$.
\end{definition}
Note that $\left(R_{\epsilon_1}[\epsilon_2]\cdot \mu_{\epsilon_2}(\mathfrak{gl}_N)\right)^{\GL_N}$ is a two-sided ideal in $\left(R_{\epsilon_1}[\epsilon_2]\right)^{\GL_N}$, so $\mathscr O_{\epsilon_1}(\mathcal M_{\epsilon_2}(N,K))$ is an algebra.

\smallskip There is also a sheaf-theoretic version of the quantum Nakajima quiver variety, recalled as follows.

Let us choose a nontrivial stability $\theta\neq 0$, then the stable moduli space $\mathcal{M}^{\theta}_{\epsilon_2}(N,K)$ is smooth over the base $\Spec \mathbb C[\epsilon_2]$. Moreover the natural projection $p:\mathcal{M}^{\theta}_{\epsilon_2}(N,K)\to \mathcal{M}_{\epsilon_2}(N,K)$ is projective, and for all $\lambda\in \mathbb C$, $p_{\lambda}:\mathcal{M}^{\theta}_{\epsilon_2=\lambda}(N,K)\to \mathcal{M}_{\epsilon_2=\lambda}(N,K)$ is a symplectic resolution. 
By the construction in \cite{losev2012isomorphisms}, there is a sheaf of flat $\mathbb C[\![\epsilon_1]\!]$-algebras on $\mathcal{M}^{\theta}_{\epsilon_2}(N,K)$, denote by $\widetilde{\mathcal O}_{\mathcal{M}^{\theta}_{\epsilon_2}(N,K)}$, such that $\widetilde{\mathcal O}_{\mathcal{M}^{\theta}_{\epsilon_2}(N,K)}/(\epsilon_1)$ is the structure sheaf ${\mathcal O}_{\mathcal{M}^{\theta}_{\epsilon_2}(N,K)}$. By \cite{losev2012isomorphisms} there is a natural $\mathbb C[\epsilon_1,\epsilon_2]$-algebra homomorphism $$\mathscr O_{\epsilon_1}(\mathcal M_{\epsilon_2}(N,K))\to \Gamma\left(\mathcal{M}^{\theta}_{\epsilon_2}(N,K),\widetilde{\mathcal O}_{\mathcal{M}^{\theta}_{\epsilon_2}(N,K)}\right).$$ Since the ADHM quiver satisfies the flatness condition of Crawley-Boevey \cite[Theorem 1.1]{crawley2001geometry}, i.e. the moment map $\mu$ is flat, so its quantization commutes with reduction \cite[Lemma 3.3.1]{losev2012isomorphisms}, therefore we can apply \cite[Lemma 4.2.4]{losev2012isomorphisms} to the ADHM quiver and obtain the following.
\begin{proposition}\label{Proposition_Sheaf version of quantization}
Under the above homomorphism $\mathscr O_{\epsilon_1}(\mathcal M_{\epsilon_2}(N,K))$ is identified with $\mathbb C^{\times}$-finite elements in $\Gamma\left(\mathcal{M}^{\theta}_{\epsilon_2}(N,K),\widetilde{\mathcal O}_{\mathcal{M}^{\theta}_{\epsilon_2}(N,K)}\right)$, where $\mathbb C^{\times}$ acts on quiver path generators with weight one and on $\epsilon_1,\epsilon_2$ with weight two.
\end{proposition}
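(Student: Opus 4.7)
The plan is to reduce the statement to two general results of Losev on quantum Hamiltonian reduction \cite{losev2012isomorphisms}, after checking that the ADHM quiver input satisfies the required hypotheses. The key point is that the proposition asserts a comparison between the algebraic quantum Hamiltonian reduction (the definition of $\mathscr O_{\epsilon_1}(\mathcal M_{\epsilon_2}(N,K))$) and the sheaf-theoretic quantization $\widetilde{\mathcal O}_{\mathcal{M}^{\theta}_{\epsilon_2}(N,K)}$ obtained via Fedosov-type constructions on the symplectic resolution. Such a comparison is exactly the content of \cite[Lemma 4.2.4]{losev2012isomorphisms}, but that lemma has the hypothesis that ``quantization commutes with reduction.''

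First, I would verify that the Crawley-Boevey flatness criterion \cite[Theorem 1.1]{crawley2001geometry} applies to the ADHM quiver with dimension vector $(N,K)$, so that the classical moment map $\mu:T^*\mathrm{Rep}(Q,N,K)\to \mathfrak{gl}_N^*$ is flat. This is well-known and immediate from the criterion. Given flatness, \cite[Lemma 3.3.1]{losev2012isomorphisms} applies: quantization commutes with reduction, meaning that the natural map
\begin{equation*}
\left(R_{\epsilon_1}[\epsilon_2]/R_{\epsilon_1}[\epsilon_2]\cdot \mu_{\epsilon_2}(\mathfrak{gl}_N)\right)^{\GL_N}\longrightarrow \Gamma\bigl(\mathcal{M}^{\theta}_{\epsilon_2}(N,K),\widetilde{\mathcal O}_{\mathcal{M}^{\theta}_{\epsilon_2}(N,K)}\bigr)
\end{equation*}
exists and is well-defined at the level of formal power series in $\epsilon_1$.

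Second, I would apply \cite[Lemma 4.2.4]{losev2012isomorphisms}, which, under the assumption that quantization commutes with reduction, identifies the algebraic quantum Hamiltonian reduction with the subalgebra of $\mathbb{C}^{\times}$-finite vectors inside the global sections. The $\mathbb{C}^{\times}$-action is the one induced by the grading that assigns weight one to the quiver path generators $X,Y,I,J$ and weight two to the deformation parameters $\epsilon_1,\epsilon_2$; this action lifts from the representation variety to the resolution $\mathcal{M}^{\theta}_{\epsilon_2}(N,K)$ and extends to the quantization sheaf because the Fedosov star product involves only polynomial expressions in these generators.

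The main obstacle is essentially bookkeeping rather than substance: one has to check that the $\mathbb{C}^{\times}$-action used here matches the contracting torus action required by Losev (so that only finitely many weights appear in finite-dimensional pieces, and $\mathbb C^\times$-finite global sections are a well-behaved subalgebra). Since our weights on $X,Y,I,J$ are all strictly positive, this Kleinian-style contracting condition holds, and the identification follows. The remaining verifications --- flatness of $\widetilde{\mathcal O}_{\mathcal{M}^{\theta}_{\epsilon_2}(N,K)}$ over $\mathbb C[\![\epsilon_1]\!]$, surjectivity of the comparison map after inverting $\epsilon_1$, and injectivity via a filtration argument reducing to the classical statement $\mathcal M_{\epsilon_2}(N,K)=\Gamma(\mathcal{M}^{\theta}_{\epsilon_2}(N,K),\mathcal O)$ for affine symplectic resolutions --- are all packaged into \emph{loc. cit.} once the hypotheses are in place, so the proof reduces to citing these two lemmas after the flatness check.
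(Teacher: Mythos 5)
Your proposal follows the paper's argument exactly: the paper likewise verifies flatness of the moment map via the Crawley--Boevey criterion, invokes \cite[Lemma 3.3.1]{losev2012isomorphisms} to conclude that quantization commutes with reduction, and then applies \cite[Lemma 4.2.4]{losev2012isomorphisms} to identify $\mathscr O_{\epsilon_1}(\mathcal M_{\epsilon_2}(N,K))$ with the $\mathbb C^{\times}$-finite global sections. Your additional remarks about matching the contracting $\mathbb C^{\times}$-action are sensible bookkeeping but do not change the route.
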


The flatness of moment map and the quantization commutes with reduction property imply the following.

\begin{lemma}
$\mathscr O_{\epsilon_1}(\mathcal M_{\epsilon_2}(N,K))$ is a flat over the base ring $\mathbb C[\epsilon_1,\epsilon_2]$, and 
\begin{equation*}
    \begin{split}
        \mathscr O_{\epsilon_1}(\mathcal M_{\epsilon_2}(N,K))/(\epsilon_1)&\cong \mathscr O(\mathcal M_{\epsilon_2}(N,K)).
    \end{split}
\end{equation*}
\end{lemma}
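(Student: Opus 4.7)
The plan is to derive both statements from the sheaf-theoretic description provided by Proposition \ref{Proposition_Sheaf version of quantization}. By that proposition, $\mathscr O_{\epsilon_1}(\mathcal M_{\epsilon_2}(N,K))$ is the subspace of $\mathbb C^\times$-finite vectors inside the $\mathbb C[\![\epsilon_1]\!]$-algebra $\Gamma(\mathcal M^\theta_{\epsilon_2}(N,K),\widetilde{\mathcal O})$, where $\widetilde{\mathcal O}$ is a flat $\mathbb C[\![\epsilon_1]\!]$-deformation of $\mathcal O_{\mathcal M^\theta_{\epsilon_2}(N,K)}$ on the smooth symplectic resolution. Since extracting $\mathbb C^\times$-finite vectors is an exact functor and converts $\mathbb C[\![\epsilon_1]\!]$ into $\mathbb C[\epsilon_1]$, it suffices to establish the sheaf-level statements: that $\Gamma(\widetilde{\mathcal O})$ is flat over $\mathbb C[\![\epsilon_1]\!]\otimes\mathbb C[\epsilon_2]$ and that its reduction modulo $\epsilon_1$ coincides with $\Gamma(\mathcal O_{\mathcal M^\theta_{\epsilon_2}(N,K)})=\mathscr O(\mathcal M_{\epsilon_2}(N,K))$.

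First I would establish the global cohomological vanishing $H^i(\mathcal M^\theta_{\epsilon_2}(N,K),\mathcal O)=0$ for $i>0$. This uses two inputs: the map $p:\mathcal M^\theta_{\epsilon_2}(N,K)\to \mathcal M_{\epsilon_2}(N,K)$ is a projective symplectic resolution, so $Rp_*\mathcal O=p_*\mathcal O=\mathcal O_{\mathcal M_{\epsilon_2}(N,K)}$ by the rationality of symplectic singularities; and $\mathcal M_{\epsilon_2}(N,K)$ is affine, so its structure sheaf has no higher cohomology. In particular this also gives $\Gamma(\mathcal M^\theta_{\epsilon_2}(N,K),\mathcal O)=\mathscr O(\mathcal M_{\epsilon_2}(N,K))$, which identifies the desired $\epsilon_1=0$ limit.

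Next, apply the $\epsilon_1$-adic short exact sequence $0\to\widetilde{\mathcal O}\xrightarrow{\epsilon_1}\widetilde{\mathcal O}\to\mathcal O\to 0$ (and more generally $0\to\widetilde{\mathcal O}/\epsilon_1^n\to\widetilde{\mathcal O}/\epsilon_1^{n+1}\to\mathcal O\to 0$). Taking global sections and using the vanishing $H^1(\widetilde{\mathcal O}/\epsilon_1^n)=0$ (obtained inductively from the $H^1$-vanishing at $n=1$ together with the long exact sequences) yields exactness, so $\Gamma(\widetilde{\mathcal O})$ has no $\epsilon_1$-torsion and $\Gamma(\widetilde{\mathcal O})/\epsilon_1^n\cong\Gamma(\widetilde{\mathcal O}/\epsilon_1^n)$ is a free $\mathbb C[\epsilon_1]/(\epsilon_1^n)$-module at every level. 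A standard inverse-limit argument then shows $\Gamma(\widetilde{\mathcal O})$ is $\mathbb C[\![\epsilon_1]\!]$-flat. Flatness in $\epsilon_2$ comes for free: Crawley-Boevey's flatness of the ADHM moment map guarantees that $\mathscr O(\mathcal M_{\epsilon_2}(N,K))$ is flat over $\mathbb C[\epsilon_2]$, and smoothness of the resolution $\mathcal M^\theta_{\epsilon_2}(N,K)$ over $\Spec\mathbb C[\epsilon_2]$ propagates this flatness to all intermediate quotients $\widetilde{\mathcal O}/\epsilon_1^n$. Combining the two parameters via the local criterion of flatness on $\mathbb C[\epsilon_1,\epsilon_2]$ finishes the argument after restricting to $\mathbb C^\times$-finite vectors.

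The main technical point is the passage from fiberwise flatness of the sheaf to flatness of its global sections; this rests entirely on the cohomological vanishing $H^{>0}(\mathcal M^\theta_{\epsilon_2}(N,K),\mathcal O)=0$. Everything else is formal manipulation of short exact sequences and extraction of $\mathbb C^\times$-weight spaces. I expect the vanishing step to be the one requiring the most care, since it is where symplectic-resolution input is genuinely used.
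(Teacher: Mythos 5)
Your argument is correct in outline, but it takes a genuinely different and much longer route than the paper. The paper's proof is a one-liner: by Crawley-Boevey the moment map $\mu$ is flat, so Losev's ``quantization commutes with reduction'' (Lemma 3.3.1 of \cite{losev2012isomorphisms}) applies directly to the algebraic Hamiltonian reduction $\left(R_{\epsilon_1}[\epsilon_2]/R_{\epsilon_1}[\epsilon_2]\cdot\mu_{\epsilon_2}(\mathfrak{gl}_N)\right)^{\GL_N}$ and says at once that it is a flat $\mathbb C[\epsilon_1]$-deformation of the classical reduction $\mathscr O(\mathcal M_{\epsilon_2}(N,K))$; flatness in $\epsilon_2$ again comes from flatness of $\mu$. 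You instead pass through the sheaf-theoretic description on the symplectic resolution, prove $H^{>0}(\mathcal M^{\theta}_{\epsilon_2}(N,K),\mathcal O)=0$ via rationality of symplectic singularities and affineness of the base, and then run the $\epsilon_1$-adic d\'evissage on global sections. Your route buys a transparent geometric picture and would apply whenever a symplectic resolution with the vanishing is available, but it costs more: (i) the identification $\Gamma(\mathcal O_{\mathcal M^{\theta}})=\mathscr O(\mathcal M_{\epsilon_2}(N,K))$ needs normality of the affine quotient (true for ADHM data, but it should be said); (ii) exactness of extracting $\mathbb C^{\times}$-finite vectors from degree-wise completed modules deserves a word, since finite vectors of a quotient need not lift in general; and (iii) your argument is not actually independent of the paper's input, because Proposition \ref{Proposition_Sheaf version of quantization}, which you take as your starting point, is itself established in the paper via the same Losev machinery (flatness of $\mu$ plus quantization commutes with reduction). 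So you are re-deriving the lemma from a downstream consequence of the very fact the paper cites directly. None of these points is fatal, but the paper's argument is both shorter and logically upstream of yours.
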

The goal of the rest of this appendix is to present the generators and some of the relations of $\mathscr O_{\epsilon_1}(\mathcal M_{\epsilon_2}(N,K))$.

\begin{definition}
Define the algebra $\mathscr O_{\epsilon_1^{\pm}}(\mathcal M_{\epsilon_2}(N,K))$ be $\mathscr O_{\epsilon_1}(\mathcal M_{\epsilon_2}(N,K))[\epsilon_1^{-1}]$, and we define the following elements in $\mathscr O_{\epsilon_1^{\pm}}(\mathcal M_{\epsilon_2}(N,K))$
\begin{align*}
    e^a_{b;n,m}=\frac{1}{\epsilon_1}I^a\sym (X^nY^m)J_b,\quad t_{n,m}=\frac{1}{\epsilon_1}\mathrm{Tr}(\sym (X^nY^m)).
\end{align*}
The above elements generate $\mathscr O_{\epsilon_1^{\pm}}(\mathcal M_{\epsilon_2}(N,K))$ over the base ring $\mathbb C[\epsilon_1^{\pm},\epsilon_2]$, and the following relations are easily derived from definition.
\end{definition}

\begin{lemma}\label{Lemma_Basic Commutation Relation}
$t_{0,0}=N/\epsilon_1$, in particular it is central. The trace of $e^a_{b;n,m}$ is related to $t_{n,m}$ by
\begin{align}
    e^a_{a;n,m}=\epsilon_2t_{n,m}.
\end{align}
$e^a_{b;0,0}$ acts on $\mathscr O_{\epsilon_1^{\pm}}(\mathcal M_{\epsilon_2}(N,K))$ as generators of $\mathfrak{gl}_K$, namely
\begin{align}
    [e^a_{b;0,0},e^c_{d;n,m}]=\delta^c_b e^a_{d;n,m}-\delta^a_d e^c_{b;n,m}.
\end{align}
The linear span of $t_{2,0},t_{1,1},t_{0,2}$ is an $\mathfrak{sl}_2$-triple such that the span of $e^a_{b;n,m}$ with fixed $n+m$ is an irreducible representation of spin $\frac{n+m}{2}$, i.e.
\begin{align}
    [t_{2,0},e^a_{b;n,m}]=2m e^a_{b;n+1,m-1},\;[t_{1,1},e^a_{b;n,m}]=(m-n)e^a_{b;n,m},\;[t_{0,2},e^a_{b;n,m}]=-2n e^a_{b;n-1,m+1},
\end{align}
and the action of $t_{1,0}$ and $t_{0,1}$ lower the spin by $\frac{1}{2}$
\begin{align}
    [t_{1,0},e^a_{b;n,m}]=m e^a_{b;n,m-1},\; [t_{0,1},e^a_{b;n,m}]=n e^a_{b;n-1,m}.
\end{align}
And moreover
\begin{align}
    [e^a_{b;n,0},t_{2,1}]=n e^a_{b;n+1,0}.
\end{align}
\end{lemma}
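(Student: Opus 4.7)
The proof is a sequence of direct calculations in $\mathscr O_{\epsilon_1}(\mathcal M_{\epsilon_2}(N,K))$ from the commutators \eqref{eqn: X,Y,I,J relations} together with the moment map relation. The easy claims are immediate: since $\sym(X^0Y^0)=1_N$, we have $t_{0,0}=N/\epsilon_1$, a scalar and hence central. The $\mathfrak{gl}_K$-bracket in (3) follows by expanding $[\frac{1}{\epsilon_1}I^a_iJ^i_b,\frac{1}{\epsilon_1}I^c_j M^j_k J^k_d]$ with $M=\sym(X^nY^m)$; only the cross-commutators $[J^i_b,I^c_j]=\epsilon_1\delta^c_b\delta^i_j$ and $[I^a_i,J^k_d]=-\epsilon_1\delta^a_d\delta^k_i$ survive (since $I,J$ commute with $X, Y$), and summation reproduces the stated formula.

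The $\mathfrak{sl}_2$-triple action and the single-derivative relations for $t_{1,0}, t_{0,1}$ all follow from the elementary commutators $[\mathrm{Tr}(X^2), Y^k_l]=2\epsilon_1 X^k_l$, $[\mathrm{Tr}(X), Y^k_l]=\epsilon_1\delta^k_l$, and their $X\leftrightarrow Y$ analogues (together with $[t_{p,q}, I]=[t_{p,q}, J]=0$). These show that $\ad_{t_{p,q}}$ for $p+q\le 2$ acts on the subalgebra generated by $X, Y$ as the classical derivation of polynomials in two commuting variables corresponding to the $\mathfrak{sl}_2$-action, and since $\sym(X^nY^m)$ is the symmetric realization of $x^ny^m$, its transformation laws read off exactly as stated. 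The final relation $[e^a_{b;n,0}, t_{2,1}]=n\,e^a_{b;n+1,0}$ follows from $[X^p_q, t_{2,1}]=(X^2)^p_q$ together with mutual commutativity of the $X$'s.

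The only non-trivial claim is the trace identity $e^a_{a;n,m}=\epsilon_2 t_{n,m}$. Moving the rightmost $J^j_a$ past the middle polynomial (allowed since $J$ commutes with $X, Y$) and applying the moment map $\sum_a I^a_j J^i_a = \epsilon_2 \delta^i_j - :[X,Y]^i_j:$ yields $e^a_{a;n,m} = \epsilon_2 t_{n,m} - \frac{1}{\epsilon_1}\mathrm{Tr}(:[X,Y]:\sym(X^nY^m))$, so the claim reduces to
\[
\mathrm{Tr}\bigl(:[X,Y]:\cdot \sym(X^nY^m)\bigr)=0
\]
as an identity in $R_{\epsilon_1}^{\GL_N}$. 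The plan is to exploit $\mathfrak{sl}_2$-symmetry: one checks directly that $:[X,Y]:$ commutes with each of $t_{2,0}, t_{1,1}, t_{0,2}$ (for $t_{2,0}$, the two Leibniz contributions to $[t_{2,0}, X^j_kY^k_i-Y^j_kX^k_i]$ each equal $2X^j_kX^k_i$ and cancel), so $M \mapsto \mathrm{Tr}(:[X,Y]:M)$ is $\mathfrak{sl}_2$-equivariant into the trivial representation. Since $\ad_{t_{0,2}}(X^n) = -2n\,\sym(X^{n-1}Y)$ and its iterates span the space of symmetric monomials of total degree $n$, it suffices to verify vanishing at the highest-weight vector $M = X^{n+m}$; this is a short telescoping calculation using cyclic-trace defects, giving $\mathrm{Tr}([X,Y]X^n) = \epsilon_1 N\,\mathrm{Tr}(X^n)$, which exactly cancels the $\epsilon_1 N\,\mathrm{Tr}(X^n)$ coming from the normal-ordering subtraction $:[X,Y]:=[X,Y]-\epsilon_1N\cdot 1_N$.

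The main obstacle is this trace identity; the rest is standard Leibniz-style computation.
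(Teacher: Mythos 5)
Your proposal is correct, and it supplies details the paper deliberately omits: the paper states only that these relations ``are easily derived from definition'' and gives no proof, so there is nothing to compare against except the implicit expectation of a direct computation from \eqref{eqn: X,Y,I,J relations} and the moment map. You correctly isolate the one step that is genuinely non-trivial, namely $e^a_{a;n,m}=\epsilon_2 t_{n,m}$, and your $\mathfrak{sl}_2$-equivariance reduction to the extremal vector $X^{n+m}$ is the right way to handle it; note that the symmetrization is essential here (e.g.\ $\mathrm{Tr}(:[X,Y]:(XY-YX))=\mathrm{Tr}(:[X,Y]:^2)\neq 0$, so the identity fails for general words), which confirms that some such argument is needed and a naive ``cyclic'' manipulation would not suffice. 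Two small points. First, since the ideal is the \emph{left} ideal $R_{\epsilon_1}[\epsilon_2]\cdot\mu_{\epsilon_2}(\mathfrak{gl}_N)$, the moment-map substitution must leave $\mu$ at the right end of the word: the clean route is $I^a_iM^i_jJ^j_a=M^i_j I^a_i J^j_a=M^i_j\mu^j_i+M^i_j(\epsilon_2\delta^j_i-:[X,Y]^j_i:)$, which produces $\mathrm{Tr}(M:[X,Y]:)$ rather than your $\mathrm{Tr}(:[X,Y]:M)$; the two agree because $\sum_{i,j}[\mu^j_i,M^i_j]=\sum_{i,j}(E^j_i\cdot M)^i_j=0$ for any adjoint-covariant word $M$, but this deserves a sentence. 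Second, carrying out your own ``read off the derivation'' step carefully gives $[t_{0,1},e^a_{b;n,m}]=-n\,e^a_{b;n-1,m}$ (since $[\mathrm{Tr}(Y),X^i_j]=-\epsilon_1\delta^i_j$), which matches \eqref{eqn: A2} under the map $p_N$ but has the opposite sign to the lemma as printed; the statement contains a sign typo there, and your claim that the relations come out ``exactly as stated'' glosses over it.
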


The next two commutation relations are harder and the purpose if the rest of this subsection is to derive them using the Calogero representation technique developed in the previous subsection.

\begin{proposition}\label{Proposition_The Key Commutation Relation}
Let $\epsilon_3=-K\epsilon_1-\epsilon_2$, then
\begin{equation}\label{eqn: [e[1,0],e[0,n]]}
    \begin{split}
        [e^a_{b;1,0},e^c_{d;0,n}]&=\delta^c_b e^a_{d;1,n}-\delta^a_d e^c_{b;1,n}-\frac{\epsilon_3 n}{2}\left(\delta^c_b e^a_{d;0,n-1}+\delta^a_d e^c_{b;0,n-1}\right)-n\epsilon_1\delta^c_d e^a_{b;0,n-1}\\
&-\epsilon_1\sum_{m=0}^{n-1}\frac{m+1}{n+1}\delta^a_d e^c_{f;0,m}e^f_{b;0,n-1-m}-\epsilon_1\sum_{m=0}^{n-1}\frac{n-m}{n+1}\delta^c_b e^a_{f;0,m}e^f_{d;0,n-1-m}\\
&+\epsilon_1\sum_{m=0}^{n-1}e^a_{d;0,m}e^c_{b;0,n-1-m}
    \end{split}
\end{equation}

\begin{equation}\label{eqn: [t[3,0],t[0,n]]}
    \begin{split}
        [t_{3,0},t_{0,n}]=~&3n t_{2,n-1}+\frac{n(n-1)(n-2)}{4}(\epsilon_1^2-\epsilon_2\epsilon_3) t_{0,n-3}\\
&-\frac{3\epsilon_1}{2}\sum_{m=0}^{n-3}(m+1)(n-2-m)(e^a_{c;0,m}e^c_{a;0,n-3-m}+\epsilon_1\epsilon_2t_{0,m}t_{0,n-3-m}).
    \end{split}
\end{equation}

\end{proposition}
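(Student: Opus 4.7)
The plan is to verify both identities by computing in the Calogero representation of $\mathscr O_{\epsilon_1^\pm}(\mathcal M_{\epsilon_2}(N,K))$ inside the spherical matrix-extended rational Cherednik algebra $\mathrm S\mathcal H^{(K)}_N[\epsilon_1^{-1}]$, under which
\begin{align*}
e^a_{b;n,m}\longmapsto \sum_{i=1}^N \sym(x_i^m y_i^n)\,E^a_{b,i}\,\mathbf e,\qquad t_{n,m}\longmapsto \tfrac{1}{\epsilon_2}\sum_{i=1}^N \sym(x_i^m y_i^n)\,\mathbf e,
\end{align*}
with $y_i$ the Dunkl operators (cf. the $\rho_N$ of Lemma \ref{lem: map rho_N}). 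By Proposition \ref{Proposition_Sheaf version of quantization} and the flatness of $\mathscr O_{\epsilon_1}(\mathcal M_{\epsilon_2}(N,K))$ over $\mathbb C[\epsilon_1,\epsilon_2]$, the family of Calogero representations is jointly faithful, so both identities will follow once verified in $\mathrm S\mathcal H^{(K)}_N$ for all $N$.

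For \eqref{eqn: [e[1,0],e[0,n]]}, expand $\bigl[\sum_i y_i E^a_{b,i},\sum_j x_j^n E^c_{d,j}\bigr]\mathbf e$ using the rational Cherednik relations
\[
[y_i,x_i]=\epsilon_2-\epsilon_1\!\!\sum_{k\ne i}s_{ik}\Omega_{ik},\qquad [y_i,x_j]=\epsilon_1 s_{ij}\Omega_{ij}\ (i\ne j),
\]
extended to $[y_i,x_j^n]$ via Leibniz. The diagonal ($i=j$) piece yields the matrix commutator $\delta^c_b e^a_{d;1,n}-\delta^a_d e^c_{b;1,n}$ together with the single-trace $-\tfrac{\epsilon_3 n}{2}(\cdots)$- and $-n\epsilon_1\mathrm{tr}(\cdots)$-corrections, arising from contracting the Casimir $\Omega_{ik}$ against the framing indices. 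The off-diagonal $i\ne j$ piece, after applying $s_{ij}\mathbf e=\mathbf e$ and rewriting the $\Omega_{ij}$-insertions as products of elementary matrices, reassembles into the three quadratic sums in $e^a_{b;0,\bullet}$ on the right-hand side, the averaging weights $\tfrac{m+1}{n+1}$ and $\tfrac{n-m}{n+1}$ reflecting the position at which the Dunkl contraction is inserted inside $\sym(x_i^m y_i^n)$.

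For \eqref{eqn: [t[3,0],t[0,n]]}, use Lemma \ref{Lemma_Basic Commutation Relation} to reduce the left-hand side to computing $\tfrac{1}{\epsilon_2^2}\bigl[\sum_i y_i^3,\sum_j x_j^n\bigr]\mathbf e$ (up to normalization). Iterated Leibniz produces three classes of contribution: (i) the leading diagonal piece giving $3n\,t_{2,n-1}$ plus an $\epsilon_2^2$-correction to $t_{0,n-3}$; (ii) mixed single-off-diagonal terms contributing $\epsilon_1\epsilon_2$- and $\epsilon_1^2$-corrections to $t_{0,n-3}$, which combine with (i) via the substitution $\epsilon_3=-K\epsilon_1-\epsilon_2$ into the coefficient $\binom{n}{3}(\epsilon_1^2-\epsilon_2\epsilon_3)$; and (iii) doubly-off-diagonal terms with two $\Omega$-insertions, reassembling into the quadratic sum $\sum_m (m+1)(n-2-m)\bigl(e^a_{c;0,m}e^c_{a;0,n-3-m}+\epsilon_1\epsilon_2\, t_{0,m}t_{0,n-3-m}\bigr)$, the double-trace $t\cdot t$ summand originating from trace--trace contractions among the two $\Omega$-tensors.

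The main obstacle is the combinatorial bookkeeping for the triple commutator in \eqref{eqn: [t[3,0],t[0,n]]}: the various $\epsilon_2^2$-, $\epsilon_1\epsilon_2$-, and $\epsilon_1^2$-terms must organize precisely into the single combination $\epsilon_1^2-\epsilon_2\epsilon_3$, which requires careful tracking of every Casimir contraction and sphericalization step, and substituting $\epsilon_3=-K\epsilon_1-\epsilon_2$ only at the final stage. The off-diagonal reassembly also demands patience in simplifying products of two permutation operators $s_{ij}s_{jk}$ and the corresponding elementary-matrix products $\Omega_{ij}\Omega_{jk}$, and in using the spherical invariance $\mathbf e g\mathbf e=\mathbf e$ (for $g\in \mathfrak S_N$) to collapse these into the stated symmetrized trace expressions.
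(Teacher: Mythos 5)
There is a genuine gap, and it lies at the very first step: the representation you propose to compute in does not exist as a representation of $\mathscr O_{\epsilon_1^{\pm}}(\mathcal M_{\epsilon_2}(N,K))$ for independent parameters $\epsilon_1,\epsilon_2$. The Calogero representation used in the paper is obtained by diagonalizing $Y$ and embeds the ADHM algebra into twisted differential operators on the Quot scheme $\mathcal N(N,K)$; there the operators $E^a_{b,i}$ live in $D^{\epsilon_2/\epsilon_1}(\mathbb P^{K-1})$ and satisfy $E^a_{a,i}=\epsilon_2/\epsilon_1$ and $E^a_{c,i}E^c_{b,i}=-\tfrac{\epsilon_1+\epsilon_3}{\epsilon_1}E^a_{b,i}$ (Lemma \ref{lem: BB map}). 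These are \emph{not} the relations of elementary matrices acting on $\mathbb C^K$: the fundamental representation of $\mathfrak{gl}_K$ factors through $D^{\lambda}(\mathbb P^{K-1})$ only for $\lambda=1$, i.e.\ only on the locus $\epsilon_1=\epsilon_2$. Consequently the assignment $e^a_{b;n,m}\mapsto\sum_i\sym(x_i^my_i^n)E^a_{b,i}\mathbf e$ into the spherical rational Cherednik algebra, which is what you write down, is an algebra map out of the ADHM algebra only after specializing $\epsilon_1=\epsilon_2$ (this is exactly the content of the last proposition of Appendix \ref{sec: quantum ADHM quiver variety}). Verifying the two identities in $\mathrm S\mathcal H^{(K)}_N$ for all $N$ would therefore only establish them on the diagonal $\epsilon_1=\epsilon_2$, which does not determine a polynomial identity in two independent variables; indeed the algebras $\mathsf B^{(K)}_N=\mathscr O_{\epsilon_1}(\mathcal M_{\epsilon_2}(N,K))$ and $\mathsf D^{(K)}_N=\mathrm S\mathcal H^{(K)}_N$ are distinct truncations in the paper's Figure \eqref{relations between DDCAs}, with no comparison map between them for generic parameters.

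There is a secondary logical problem even if one ignores the parameter issue: the joint faithfulness you invoke is not supplied by Proposition \ref{Proposition_Sheaf version of quantization} (which concerns sections of the quantized structure sheaf on $\mathcal M^{\theta}_{\epsilon_2}(N,K)$, i.e.\ the paper's Calogero representation for a single $N$, not any map to Cherednik algebras), and the statements that \emph{would} supply it --- Lemma \ref{lem: map rho_N} and Corollary \ref{cor: truncation of D} --- are themselves proved by appeal to this appendix, so using them here is circular. The correct route, and the one the paper takes, is: embed $\mathscr O_{\epsilon_1}(\mathcal M_{\epsilon_2}(N,K))$ into $D^{\epsilon_2/\epsilon_1}(\mathrm{Quot}^K_N)$ (faithful for each fixed $N$ by quantization-commutes-with-reduction), compute the images of $t_{2,0}$ and $e^a_{b;0,n}$ there (Lemma \ref{lemma: t[2,0] and e[0,n] in Calogero}), and verify \eqref{eqn: [e[1,0],e[0,n]]} directly and \eqref{eqn: [t[3,0],t[0,n]]} after reducing it to the two identities $[t_{2,1},t_{0,n}]=2nt_{1,n}$ and the expression for $[t_{2,1},t_{1,n}]$, using the relations \eqref{eqn: BB relations} for the $E^a_{b,i}$ throughout. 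Your reduction of the second identity via Lemma \ref{Lemma_Basic Commutation Relation} is in the right spirit, but the computation must be carried out with the twisted-differential-operator realization of $E^a_{b,i}$, not with Dunkl operators and matrix units.
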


As an initial step towards the proof of Proposition \ref{Proposition_The Key Commutation Relation}, we notice that \eqref{eqn: [t[3,0],t[0,n]]} is equivalent to the following two equations:
\begin{align}\label{eqn_[t[2,1],t[0,n]]}
    [t_{2,1},t_{0,n}]=2nt_{1,n},
\end{align}
\begin{equation}\label{eqn_[t[2,1],t[1,n]]}
    \begin{split}
   [t_{2,1},t_{1,n}]=&(2n-1)t_{2,n}+\frac{n(n-1)}{4}(\epsilon_2\epsilon_3-\epsilon_1^2)t_{0,n-2}+\\
    &+\frac{3\epsilon_1}{2}\sum_{m=0}^{n-2}\frac{(m+1)(n-1-m)}{n+1}(e^a_{c;0,m}e^c_{a;0,n-2-m}+\epsilon_1\epsilon_2t_{0,m}t_{0,n-2-m}).
    \end{split}
\end{equation}
Indeed, $[t_{3,0},t_{0,n}]=\frac{1}{2}[[t_{2,0},t_{2,1}],t_{0,n}]=\frac{1}{2}[t_{2,0},[t_{2,1},t_{0,n}]]-n[t_{2,1},t_{1,n-1}]$.\\

\subsection{Calogero representation}
A physicist reader may be familiar with the Calogero representation as a standard manipulation employed to study non-singlet sectors in matrix quantum mechanics \cite{gibbons1984generalisation,krichever1994spin}. The following discussions hold

Let $R(N,K)$ be the affine space of representations of the following ``half'' of ADHM quiver:
\begin{center}
\begin{tikzpicture}[x={(2cm,0cm)}, y={(0cm,2cm)}, baseline=0cm]
  \node[draw,circle,fill=white] (Gauge) at (0,0) {$N$};
  \node[draw,rectangle,fill=white] (Framing) at (1,0) {$K$};
  \node (Zdag) at (-.73,0) {\scriptsize $Y$};
 \draw[->] (Gauge.0) -- (Framing.180) node[midway,above] {\scriptsize $I$};

  \draw[<-,looseness=6] (Gauge.240) to[out=210,in=150] (Gauge.120);

\end{tikzpicture}
\end{center}
Denote by $R^s(N,K)$ be the stable locus of $R(N,K)$, i.e. the open subset of $R(N,K)$ consisting of $(Y,I)$ such that if $V\subset \mathrm{Ker}(I)$ and $Y(V)\subset V$ then $V=0$. There is an open embedding
\begin{align*}
    T^*{R}^{s}(N,K)\sslash _{\epsilon_2} \GL(\mathbf v) \hookrightarrow \mathcal{M}^{\theta}_{\epsilon_2}(N,K),
\end{align*}
where the stability condition $\theta=-1$. Note that $ T^*{R}^{s}(N,K)\sslash _{\epsilon_2} \GL(\mathbf v)$ is the $\epsilon_2$-twisted cotangent bundle $T^*_{\epsilon_2}\mathcal N(N,K)$ of the quotient variety
\begin{align}\label{eqn: stable moduli of quiver}
    \mathcal N(N,K)={R}^{s}(N,K)/ \GL(\mathbf v).
\end{align}
Here $\epsilon_2$-twisted cotangent bundle is defined as the affine bundle over $ \mathcal N(N,K)\times \Spec \mathbb C[\epsilon_2]$ whose underlying vector bundle is $T^*\mathcal N(N,K)\times \Spec \mathbb C[\epsilon_2]$ and is determined by the class $\epsilon_2\cdot c_1(\mathcal O(1))\in \mathrm{H}^1(\Omega^1_{\mathcal N(N,K)})$, where $\mathcal O(1)$ is the tautological line bundle on $\mathcal N(N,K)$.

\bigskip After passing to quantization, $\widetilde{\mathcal O}_{\mathcal{M}^{\theta}_{\epsilon_2}(N,K)}|_{T^*_{\epsilon_2}\mathcal N(N,K)}$ is naturally identified with $\epsilon_1$-adic completion of the sheaf of $\epsilon_2$-twisted $\epsilon_1$-differential operators on $\mathcal N(N,K)$. Since $T^*_{\epsilon_2}\mathcal N(N,K)$ is open and dense in $\mathcal{M}^{\theta}_{\epsilon_2}(N,K)$, composing the embedding in Proposition \ref{Proposition_Sheaf version of quantization} with the restriction map $\Gamma\left(\mathcal{M}^{\theta}_{\epsilon_2}(N,K),\widetilde{\mathcal O}_{\mathcal{M}^{\theta}_{\epsilon_2}(N,K)}\right)\hookrightarrow \Gamma\left(T^*_{\epsilon_2}\mathcal N(N,K),\widetilde{\mathcal O}_{\mathcal{M}^{\theta}_{\epsilon_2}(N,K)}\right)$, we obtain an embedding between $\mathbb C[\epsilon_1,\epsilon_2]$-algebras
\begin{align}\label{eqn: Calogero embedding}
    \mathscr O_{\epsilon_1}(\mathcal M_{\epsilon_2}(N,K))\hookrightarrow D^{\epsilon_2}_{\epsilon_1}(\mathcal N(N,K)),
\end{align}
where the right-hand-side is the ring of $\epsilon_2$-twisted $\epsilon_1$-differential operators on $\mathcal N(N,K)$. We call such embedding a Calogero representation of $\mathscr O_{\epsilon_1}(\mathcal M_{\epsilon_2}(N,K))$.

\bigskip Concretely, $\mathcal N(N,K)$ is isomorphic to the Quot scheme $\mathrm{Quot}^K_N$ parametrizing length-$N$ quotients of $\mathcal O_{\mathbb C}^{\oplus K}$. The Hilbert-Chow map $\mathrm{Quot}^K_N\to \sym^N(\mathbb C)$ sends a quotient sheaf to the cycle class corresponding to the sheaf, in quiver language this maps $(I,Y)$ to the spectrum of $Y$. Restricted to the open locus where spectrum of $Y$ are distinct, $\mathrm{Quot}^K_N$ is isomorphic to product of $N$ copies of $\mathbb P^{K-1}$ fibered over the base $\sym^N(\mathbb C)_{\mathrm{disj}}$. 

\bigskip The previous discussion shows that the Calogero representation \eqref{eqn: Calogero embedding} embeds the quantum ADHM quiver variety into the ring of $\mathcal O(1)^{\otimes \epsilon_2}$-twisted $\epsilon_1$-differential operators on the Quot scheme, where $\mathcal O(1)$ is the tautological line bundle on $\mathrm{Quot}^K_N$. Setting $\epsilon_1$ to be invertible, i.e. tensoring with $\mathbb C[\epsilon^{\pm}]$, we can rescale the differential operators, then the target of Calogero representation \eqref{eqn: Calogero embedding} becomes $D^{\epsilon_2/\epsilon_1}(\mathrm{Quot}^K_N)$, i.e. ring of $\mathcal O(1)^{\otimes \frac{\epsilon_2}{\epsilon_1}}$-twisted differential operators on $\mathrm{Quot}^K_N$. The next step is to write down the explicit formula of the image of generators of $\mathscr O_{\epsilon_1^{\pm}}(\mathcal M_{\epsilon_2}(N,K))$ as differential operators on the aforementioned open locus.

\bigskip Obviously the flavor symmetry $\GL_K$ acts on $\mathrm{Quot}^K_N$ and this action is compatible with the natural $\GL_K$ action on $\mathbb P^{K-1}$. The infinitesimal action gives rise to a Lie algebra map $\mathfrak{gl}_K\to D^{\epsilon_2/\epsilon_1}(\mathbb P^{K-1})$ and we define $E^a_{b}$ as the image of the generator $e^a_b\in \mathfrak{gl}_K$ under this map.

\begin{lemma}\label{lem: BB map}
$E^a_b$ satisfy the the following equations:
\begin{align}\label{eqn: BB relations}
    E^a_a=\frac{\epsilon_2}{\epsilon_1},\quad E^a_cE^c_b=-\frac{\epsilon_1+\epsilon_3}{\epsilon_1}E^a_b,\quad E^a_bE^b_a=-\frac{(\epsilon_1+\epsilon_3)\epsilon_2}{\epsilon_1^2}.
\end{align}
\end{lemma}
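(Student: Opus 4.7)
The approach will be an explicit calculation in an affine chart of $\mathbb{P}^{K-1}$, together with an observation that the third identity is formally a consequence of the second. Since the infinitesimal action of $\mathfrak{gl}_K$ on sections of $\mathcal{O}(1)^{\otimes \mu}$ with $\mu = \epsilon_2/\epsilon_1$ is, in the standard homogeneous coordinates $[z_1:\cdots:z_K]$ with chart $\{z_1 \neq 0\}$ and affine coordinates $w_i = z_i/z_1$ for $i \geq 2$, computed from $e^a_b \mapsto z_a \partial_{z_b}$ acting on sections written as $z_1^{\mu} g(w_2,\ldots,w_K)$, a short calculation yields the explicit formulas
\begin{align*}
E^a_b = w_a \partial_{w_b} \; (a,b \geq 2), \quad E^1_b = \partial_{w_b} \; (b \geq 2), \quad E^a_1 = w_a(\mu - \Theta) \; (a \geq 2), \quad E^1_1 = \mu - \Theta,
\end{align*}
where $\Theta = \sum_{i\geq 2} w_i \partial_{w_i}$ is the Euler operator. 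These formulas make sense for arbitrary formal $\mu$, so the whole argument will run over the base ring $\mathbb{C}[\epsilon_1^{\pm}, \epsilon_2]$.

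The trace identity $E^a_a = \mu$ is then immediate, since $E^1_1 + \sum_{a \geq 2} E^a_a = (\mu - \Theta) + \Theta = \mu = \epsilon_2/\epsilon_1$. For the quadratic identity $E^a_c E^c_b = c_1 E^a_b$ with $c_1 = -(\epsilon_1+\epsilon_3)/\epsilon_1 = K - 1 + \mu$ (using $\epsilon_3 = -K\epsilon_1 - \epsilon_2$), I will split the sum into blocks by whether the indices equal $1$ or not, and compute each case using the Weyl algebra commutations $[\Theta, w_c] = w_c$ and $[\Theta, \partial_{w_c}] = -\partial_{w_c}$. In the anchor case $a = b = 1$ one finds
\begin{align*}
E^1_c E^c_1 = (\mu - \Theta)^2 + \sum_{c \geq 2} \partial_{w_c} w_c (\mu - \Theta) = (\mu - \Theta)^2 + (K-1)(\mu - \Theta) + \Theta(\mu - \Theta) = (K - 1 + \mu)(\mu - \Theta),
\end{align*}
where the crucial simplification $(\mu - \Theta)^2 + \Theta(\mu - \Theta) = \mu(\mu - \Theta)$ collapses the $\Theta^2$ terms. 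The remaining cases $(a = 1, b \geq 2)$, $(a \geq 2, b = 1)$, and $(a, b \geq 2)$ follow by the same pattern, each time collapsing to $(K - 1 + \mu) E^a_b$.

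Finally, the Casimir identity is a direct consequence: contracting $a = b$ in the quadratic identity and summing gives $E^a_c E^c_a = (K - 1 + \mu) E^a_a = (K - 1 + \mu) \mu$, which equals $-(\epsilon_1 + \epsilon_3)\epsilon_2/\epsilon_1^2$ as claimed. The main obstacle is purely bookkeeping: the $a = b = 1$ case of the quadratic identity is the only place where genuine Euler-operator manipulations occur, and there one must carefully track how $\Theta$ commutes past itself and past the formal parameter $\mu$. A completely parallel proof is available via representation theory: since the identities are polynomial in $\mu$ and hold for every non-negative integer $\mu$ (where the action on $H^0(\mathbb{P}^{K-1}, \mathcal{O}(\mu)) = \mathrm{Sym}^\mu$ can be checked directly on homogeneous polynomials), they hold for formal $\mu$; but the affine-chart calculation above has the advantage of working uniformly in $\mu$.
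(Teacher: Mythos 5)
Your proof is correct. The paper takes a slightly different (and shorter) route: it never descends to an affine chart, but instead realizes $D^{\epsilon_2/\epsilon_1}(\mathbb P^{K-1})$ as the quantum Hamiltonian reduction of the Weyl algebra $D(\mathbb C^K)$ by $\mathbb C^{\times}$, sets $E^a_b=u^av_b$ with $[v_a,u^b]=\delta^b_a$, and computes $E^a_cE^c_b$ modulo the right ideal generated by the moment map $u^cv_c-\tfrac{\epsilon_2}{\epsilon_1}$; the whole quadratic identity then reduces to the single reordering $v_cu^c=u^cv_c+K\equiv \tfrac{\epsilon_2}{\epsilon_1}+K$, with no case analysis on whether indices equal $1$. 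Your chart computation is the same calculation performed \emph{after} solving the constraint (trivializing $\mathcal O(\mu)$ by $z_1^{\mu}$), which is why your key identities $\sum_{c\ge 2}\partial_{w_c}w_c=\Theta+K-1$ and $(\mu-\Theta)+\Theta=\mu$ are exactly the reduction of $v_cu^c=u^cv_c+K$ and $u^cv_c=\mu$ to the chart. What the paper's version buys is brevity and manifest $\GL_K$-covariance; what yours buys is self-containedness — it does not presuppose the identification of $D^{\mu}(\mathbb P^{K-1})$ with the reduction, and it makes the formal-$\mu$ dependence visible. Both derive the third identity by contracting the second against the first. One small point worth making explicit in your write-up: the identities are asserted in the ring of twisted differential operators on all of $\mathbb P^{K-1}$, so you should note that restriction to the dense affine chart $\{z_1\neq 0\}$ is injective on (twisted) differential operators, which is why a single-chart verification suffices.
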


\begin{proof}
$D^{\epsilon_2/\epsilon_1}(\mathbb P^{K-1})$ is obtained from $D(\mathbb C^K)$ by performing quantum Hamiltonian reduction by $\mathbb C^{\times}$. Namely, let $u^a,a\in \{1,\cdots,K\}$ be the coordinates on $\mathbb C^K$ and let $v_a$ be the differential operators on $\mathbb C^K$, i.e.
\begin{align*}
    [v_a,u^b]=\delta^b_a,
\end{align*}
then $E^a_b:=u^av_b$ satisfy the $\gl_K$ commutation relations, which gives a Lie algebra map $\gl_K\to D(\mathbb C^K)$. $D^{\epsilon_2/\epsilon_1}(\mathbb P^{K-1})$ is the $\mathbb C^{\times}$-invariant subalgebra of $D(\mathbb C^K)$ generated by $E^a_b:=u^av_b$ modulo the right ideal generated by $u^av_a-\frac{\epsilon_2}{\epsilon_1}$. Therefore the equation $E^a_a=\frac{\epsilon_2}{\epsilon_1}$ follows from construction, and 
\begin{align*}
    E^a_cE^c_b=u^av_cu^cv_b=-u^av_b+u^av_bv_cu^c=(K-1+\frac{\epsilon_2}{\epsilon_1})E^a_b=-\frac{\epsilon_1+\epsilon_3}{\epsilon_1}E^a_b.
\end{align*}
The last equation in \eqref{eqn: BB relations} follows from the first and the second.
\end{proof}

Now we can write down the explicit formula of the Calogero representation.
\begin{lemma}\label{lemma: t[2,0] and e[0,n] in Calogero}
Composing the Calogero representation $ \mathscr O_{\epsilon_1}(\mathcal M_{\epsilon_2}(N,K))\hookrightarrow D^{\epsilon_2/\epsilon_1}(\mathrm{Quot}^K_N)$ with the restriction map $D^{\epsilon_2/\epsilon_1}(\mathrm{Quot}^K_N)\hookrightarrow D^{\epsilon_2/\epsilon_1}(\mathbb P^{K-1}\times\cdots\times \mathbb P^{K-1}\times \mathbb C^N_{\mathrm{disj}})$, then the elements $t_{2,0}$ and $e^a_{b;0,n}$ are mapped to
\begin{align}
    t_{2,0}\mapsto \epsilon_1\sum_{i=1}^N\Delta^{-1}\partial_{y_i}^2\Delta-2\sum_{i<j}^N\frac{\epsilon_1 \Omega_{ij}+\epsilon_2}{(y_i-y_j)^2},\qquad e^a_{b;0,n}\mapsto \sum_{i=1}^N E^a_{b,i} y_i^n,
\end{align}
where $(y_1,\cdots,y_N)$ is the coordinate on $\mathbb C^N$, $E^a_{b,i}$ is the $E^a_{b}$ for the $i$-th $\mathbb P^{K-1}$, $\Omega_{ij}=E^a_{b,i}E^b_{a,j}$ is the quadratic Casimir of $ij$ sites, $\Delta$ is the Vandermonde factor
\begin{align*}
    \Delta=\prod_{i>j}^N(y_i-y_j).
\end{align*}
\end{lemma}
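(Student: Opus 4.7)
The lemma exhibits $\mathscr O_{\epsilon_1}(\mathcal M_{\epsilon_2}(N,K))$ as the algebra of (spin) Calogero--Moser operators. The plan is the standard Hamiltonian-reduction derivation: restrict to the open locus where $Y$ has distinct eigenvalues, diagonalize $Y$ using the residual $\GL_N$ freedom, solve the quantum moment map for the off-diagonal components of $X$, and finally read off the images of the generators in the induced algebra of twisted differential operators.

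\textbf{Step 1 (open locus and coordinates).} On the open subset of $R^s(N,K)$ where $Y$ has $N$ pairwise distinct eigenvalues, one can use the residual $\GL_N$ action to put $Y$ in diagonal form $Y=\mathrm{diag}(y_1,\dots,y_N)$. The residual stabilizer is the diagonal torus $T\subset \GL_N$ extended by the Weyl group $\mathfrak S_N$, and quotienting the fibers $(I^a_i)\in\mathbb C^K$ by the $T$-action replaces each copy of $\mathbb C^K$ by a copy of $\mathbb P^{K-1}$. Thus the quotient is $\mathbb P^{K-1}\times\cdots\times\mathbb P^{K-1}\times\mathbb C^N_{\mathrm{disj}}$ modulo $\mathfrak S_N$, and the target $D^{\epsilon_2/\epsilon_1}(\mathbb P^{K-1}\times\cdots\times\mathbb P^{K-1}\times\mathbb C^N_{\mathrm{disj}})$ in the statement is the $\mathfrak S_N$-cover of the twisted differential operators on this quotient. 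The twist by $\mathcal O(1)^{\otimes \epsilon_2/\epsilon_1}$ arises from the $\epsilon_2$-shift in the moment map.

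\textbf{Step 2 (computing $e^a_{b;0,n}$).} Since $Y$ is now diagonal, $I^a Y^n J_b=\sum_{i=1}^N y_i^n\,I^a_i J^i_b$. Under the Borel--Weil realization of $\gl_K$ described in Lemma \ref{lem: BB map}, each bilinear $\tfrac{1}{\epsilon_1}I^a_i J^i_b$ is identified with $E^a_{b,i}$ acting on the $i$-th $\mathbb P^{K-1}$ factor. This immediately gives $e^a_{b;0,n}\mapsto \sum_i y_i^n E^a_{b,i}$ without further work.

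\textbf{Step 3 (solving the moment map).} The nontrivial step is the computation of $t_{2,0}=\tfrac{1}{\epsilon_1}\mathrm{tr}(X^2)$. First, the quantum moment map equations $\mu_{\epsilon_2}(E^i_j)=0$ must be used to eliminate the $X$-variables in favor of operators on the reduced space. For $i\neq j$, the off-diagonal equations read $(y_j-y_i)X^i_j+I^a_j J^i_a=0$, solved by
\[
X^i_j=\frac{I^a_j J^i_a}{y_j-y_i}\qquad (i\neq j).
\]
For the diagonal entries $X^i_i$, a standard radial-part calculation (conjugation of $\partial_{Y}$ by the Jacobian of the diagonalization, which is the Vandermonde $\Delta$) produces the replacement $X^i_i\mapsto \epsilon_1\Delta^{-1}\partial_{y_i}\Delta$.

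\textbf{Step 4 (putting it all together).} Splitting $\mathrm{tr}(X^2)=\sum_i (X^i_i)^2+\sum_{i\neq j}X^i_j X^j_i$, the diagonal piece contributes $\epsilon_1\sum_i \Delta^{-1}\partial_{y_i}^2\Delta$ after division by $\epsilon_1$. For the off-diagonal piece, substituting the moment-map solution yields
\[
\frac{1}{\epsilon_1}\sum_{i\neq j}\frac{I^a_j J^i_a\,I^b_i J^j_b}{(y_j-y_i)(y_i-y_j)}.
\]
Reordering $J^i_a I^b_i$ using $[J^i_a,I^b_i]=\epsilon_1\delta^b_a$, one finds the classical piece $\epsilon_1\Omega_{ij}/(y_i-y_j)^2$ (via Lemma \ref{lem: BB map}) plus a quantum correction that, after using $E^a_a=\epsilon_2/\epsilon_1$ from the same lemma, produces the additive constant $\epsilon_2/(y_i-y_j)^2$. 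Pairing $(i,j)$ and $(j,i)$ and dividing by $2$ gives the claimed form.

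\textbf{Main obstacle.} The core technical difficulty is the bookkeeping of quantum orderings in Step~4. The substitution $X^i_j=(IJ)^i_j/(y_j-y_i)$ is derived from the quantum (not classical) moment map, and the subsequent product $X^i_j X^j_i$ requires moving $J$'s past $I$'s and dividing by $(y_i-y_j)(y_j-y_i)$ in a way compatible with the normal-ordering conventions used in the definition of $\mu_{\epsilon_2}$. The elegant numerator $\epsilon_1\Omega_{ij}+\epsilon_2$ is the net result of (a) the direct Casimir contribution, (b) the single-index commutator $[J^i_a,I^b_i]=\epsilon_1\delta^b_a$, and (c) the Borel--Weil normalization $E^a_a=\epsilon_2/\epsilon_1$; verifying that the sub-leading contributions combine exactly to $\epsilon_2$, with no residual terms, is the calculation that requires care. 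Once this is carried out, the rest of the argument is formal.
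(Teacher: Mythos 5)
Your overall strategy is the same as the paper's: diagonalize $Y$ on the locus of distinct eigenvalues, identify the fibers with copies of $\mathbb P^{K-1}$ via the operators $E^a_{b,i}$ of Lemma \ref{lem: BB map}, read off $e^a_{b;0,n}$ immediately, and obtain $t_{2,0}$ by eliminating $X$ through the quantum moment map. Steps 1, 2 and the bookkeeping in your "main obstacle" paragraph (Casimir term, the commutator $[J^i_a,I^b_i]=\epsilon_1\delta^b_a$, and $E^a_a=\epsilon_2/\epsilon_1$ combining into $\epsilon_1\Omega_{ij}+\epsilon_2$) match the paper's computation of $\frac{1}{\epsilon_1}u^a_iv^j_au^b_jv^i_b=\epsilon_1\Omega_{ij}+\epsilon_2$ exactly.

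The problem is the attribution in Steps 3--4. You claim the diagonal entries individually become $X^i_i\mapsto\epsilon_1\Delta^{-1}\partial_{y_i}\Delta$, so that $\sum_i(X^i_i)^2$ alone produces $\epsilon_1^2\sum_i\Delta^{-1}\partial_{y_i}^2\Delta$, while the off-diagonal piece $\sum_{i\neq j}X^i_jX^j_i$ produces only the Calogero potential. Neither statement survives direct verification. In the paper's computation the restriction of the diagonal entry is $\overline X^i_i=:(H^{-1}XH)^i_i:=\epsilon_1\partial_{y_i}$ with \emph{no} Vandermonde, and the first-order terms $2\epsilon_1\sum_{i\neq j}\frac{1}{y_i-y_j}\partial_{y_i}$ that complete $\Delta^{-1}\sum_i\partial_{y_i}^2\Delta$ arise instead as ordering corrections to the \emph{off-diagonal} product: the identity $\overline X^i_j\overline X^j_i=X^i_jX^j_i-2\epsilon_1^2\sum_{i\neq j}\frac{1}{y_i-y_j}\partial_{y_i}$, obtained by tracking the commutators $[X^l_k,H^n_i]$ in the change of variables $Y=HDH^{-1}$. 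Your two substitutions are each wrong by exactly compensating amounts; the total is correct only because you are implicitly invoking the Harish--Chandra radial/angular decomposition of the full second-order invariant operator, which is \emph{not} the same as the Cartesian split $\sum_i(X^i_i)^2+\sum_{i\neq j}X^i_jX^j_i$ (the two decompositions agree only at first order at $H=1$). To make the argument rigorous you need either the paper's explicit $[X,H]$ bookkeeping or a correct appeal to the radial-part formula applied to $\mathrm{Tr}(X^2)$ as a whole, not entrywise. There is also a harmless sign slip: $(y_j-y_i)X^i_j+I^a_jJ^i_a=0$ gives $X^i_j=I^a_jJ^i_a/(y_i-y_j)$, not $/(y_j-y_i)$; this cancels in the quadratic expression but would matter for the other generators.
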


\begin{proof}
We diagonalize $Y=H\mathrm{diag}(y_1,\cdots,y_N)H^{-1}$, and define 
\begin{align}
    u^a_i=(IH)^a_i,\quad v^i_a=(H^{-1}J)^i_a.
\end{align}
The commutators between $u$ and $v$ are
\begin{align}
    [v^{i}_a,u^b_j]=\epsilon_1 \delta^b_a\delta^i_j.
\end{align}
Therefore $u^a_i$ are the projective coordinates on the $i$-th $\mathbb P^{K-1}$ and $v^{i}_a$ are the differential operators on it. According to the Lemma \ref{lem: BB map}, $e^a_b\mapsto E^a_{b,i}:=\frac{1}{\epsilon_1}u^a_iv^i_b$ (do not sum over $i$) gives the Lie algebra map from $\gl_K$ to the differential operators on the $i$-th copy of $\mathbb P^{K-1}$.

From the diagonalization $Y=HDH^{-1}$ where $D$ is the diagonal matrix, we read out the tangent map $dY=[dH\cdot H^{-1},Y]+HdD H^{-1}$, and in the dual basis the above equation becomes
\begin{align}
    \partial_{H^i_j}=\frac{1}{\epsilon_1}(H^{-1}:[Y,X]:)^j_i,\quad \partial_{y_i}=\frac{1}{\epsilon_1}:(H^{-1}XH)^i_i:,
\end{align}
here we use the identification $X^i_j=\epsilon_1\partial_{Y^j_i}$, and the normal ordering such that $X$ is always at the right-hand-side of $H$ and $Y$. Let $\overline{X}^i_j=:(H^{-1}XH)^i_j:$, then 
\begin{align}\label{eq: transform of X}
    \overline{X}^i_j=\begin{cases}
    \frac{\epsilon_1}{y_i-y_j}:(\partial_H\cdot H)^i_j:, & i\neq j\\
    \epsilon_1 \partial_{y_i}, & i=j,
    \end{cases}
\end{align}
and the quantum moment map equation becomes
\begin{align}
    u^a_iv^j_a=\begin{cases}
    -\epsilon_1 :(\partial_H\cdot H)^i_j:,& i\neq j\\
    \epsilon_2,& i=j.
    \end{cases}
\end{align}
Thus the image of $e^a_{b;0,n}=\frac{1}{\epsilon_1}I^aY^nJ_b$ is
\begin{align*}
    \frac{1}{\epsilon_1}\sum_{i=1}^Nu^a_iy_i^nv^i_b=\sum_{i=1}^N E^a_{b,i} y_i^n.
\end{align*}
To compute the image of $t_{2,0}=\frac{1}{\epsilon_1}{X}^i_j{X}^j_i$, let us write
\begin{equation}\label{eq: bar X^2}
\begin{split}
     \overline{X}^i_j\overline{X}^j_i&={X}^i_j{X}^j_i+(H^{-1})^i_lH^m_j[X^l_m,(H^{-1})^j_kH^n_i]X^k_n\\
    &={X}^i_j{X}^j_i-(H^{-1})^i_l[X^l_m,H^m_p](H^{-1})^p_kH^n_iX^k_n+(H^{-1})^i_l[X^l_k,H^n_i]X^k_n\\
    &={X}^i_j{X}^j_i-(H^{-1})^i_l[X^l_m,H^m_p]\overline{X}^p_i+(H^{-1})^i_l[X^l_k,H^n_i]X^k_n.
\end{split}
\end{equation}
Using the relation $X^i_j=:(H\overline{X}H^{-1})^i_j:$ and \eqref{eq: transform of X}, we find 
\begin{align}\label{eq: [X,H]}
    [X^a_b,H^c_d]=\sum_{e\neq d}^N\frac{\epsilon_1}{y_d-y_e}H^a_dH^c_e(H^{-1})^e_b.
\end{align}
Plug \eqref{eq: [X,H]} into \eqref{eq: bar X^2} and we get $\overline{X}^i_j\overline{X}^j_i={X}^i_j{X}^j_i-2\epsilon_1^2\sum_{i\neq j}^N\frac{1}{y_i-y_j}\partial_{y_i}$. Therefore the image of $t_{2,0}$ is
\begin{align*}
    \epsilon_1\sum_{i=1}^N\partial_{y_i}^2+2\epsilon_1\sum_{i\neq j}^N\frac{1}{y_i-y_j}\partial_{y_i}-\frac{2}{\epsilon_1}\sum_{i<j}^N\frac{ u^a_iv^j_a u^b_jv^i_b}{(y_i-y_j)^2}= \epsilon_1\sum_{i=1}^N\Delta^{-1}\partial_{y_i}^2\Delta-\frac{2}{\epsilon_1}\sum_{i<j}^N\frac{ u^a_iv^j_a u^b_jv^i_b}{(y_i-y_j)^2}.
\end{align*}
It is straightforward to compute that $\frac{1}{\epsilon_1}u^a_iv^j_a u^b_jv^i_b=\epsilon_1 \Omega_{ij}+\epsilon_2$, this proves our claim.
\end{proof}

From now on, we conjugate the Calogero representation by the Vandermonde factor $\Delta$, so that
\begin{align}\label{eq: image of t[2,0] and e[0,n]}
    t_{2,0}\mapsto \epsilon_1\sum_{i=1}^N\partial_{y_i}^2-2\sum_{i<j}^N\frac{\epsilon_1 \Omega_{ij}+\epsilon_2}{(y_i-y_j)^2},\qquad e^a_{b;0,n}\mapsto \sum_{i=1}^N E^a_{b,i} y_i^n.
\end{align}
The conjugation is an algebra isomorphism so it will not change the relations. Using \eqref{eq: image of t[2,0] and e[0,n]}, we can derive the formula for more generators.
\begin{align}
    t_{0,n}\mapsto\frac{1}{\epsilon_1}\sum_{i=1}^N y_i^{n}, \qquad t_{1,n}\mapsto \sum_{i=1}^N \left(\frac{n}{2}y_i^{n-1}+y_i^n\partial_{y_i}\right),
\end{align}

\begin{align}
    e^a_{b;1,n}\mapsto\epsilon_1 \sum_{i=1}^N E^a_{b,i}\left(\frac{n}{2}y_i^{n-1}+y_i^n\partial_{y_i}\right)+\epsilon_1\sum_{i<j}^N\frac{y_i^{n+1}-y_j^{n+1}}{n+1}\frac{E^a_{c,i}E^c_{b,j}-E^a_{c,j}E^c_{b,i}}{(y_i-y_j)^2}
\end{align}

\begin{align}
    t_{2,n}\mapsto \epsilon_1 \sum_{i=1}^N\left(\frac{n(n-1)}{4}y_i^{n-2}+ny_i^{n-1}\partial_{y_i}+y_i^n\partial_{y_i}^2\right)-\frac{2}{n+1}\sum_{i<j}^N\frac{y_i^{n+1}-y_j^{n+1}}{(y_i-y_j)^3}(\epsilon_1 \Omega_{ij}+\epsilon_2).
\end{align}

\noindent We can compute more relations in the Calogero representation.
\begin{equation}\label{eqn_e[0,m]e[0,n]}
    \begin{split}
        e^a_{c;0,m}e^c_{b;0,n}&=-\frac{\epsilon_1+\epsilon_3}{\epsilon_1}e^a_{b;0,m+n}+\sum_{i<j}^N y_i^my_j^nE^a_{c,i}E^c_{b,j}+ y_i^ny_j^mE^c_{b,i}E^a_{c,j},\\
    e^a_{b;0,m}e^b_{a;0,n}&=-\frac{(\epsilon_1+\epsilon_3)\epsilon_2}{\epsilon_1}t_{0,m+n}+\sum_{i<j}^N (y_i^my_j^n+y_i^ny_j^m)\Omega_{ij}\\
    &=-\frac{\epsilon_2\epsilon_3}{\epsilon_1}t_{0,m+n}-\epsilon_1\epsilon_2t_{0,m}t_{0,n}+\frac{1}{\epsilon_1}\sum_{i<j}^N (y_i^my_j^n+y_i^ny_j^m)(\epsilon_1\Omega_{ij}+\epsilon_2).
    \end{split}
\end{equation}

\begin{proof}[Proof of Equation \eqref{eqn: [e[1,0],e[0,n]]}]
The left hand side of \eqref{eqn: [e[1,0],e[0,n]]} can be written as
\begin{align*}
    &[e^a_{b;1,0},e^c_{d;0,n}]=\epsilon_1 \sum_{i=1}^N [E^a_{b,i}\partial_{y_i},E^c_{d,i}y_i^{n-1}]+\epsilon_1\sum_{i<j}^N[\frac{E^a_{f,i}E^f_{b,j}-E^a_{f,j}E^f_{b,i}}{y_i-y_j},E^c_{d,i}y_i^n+E^c_{d,j}y_j^n]\\
    & =\epsilon_1\sum_{i=1}^N\left([E^a_{b,i},E^c_{d,i}](ny_i^{n-1}+y_i^{n}\partial_{y_i})+nE^c_{d,i}E^a_{b,i}y_i^n\right)+\epsilon_1\sum_{i<j}^N\frac{y_i^n-y_j^n}{y_i-y_j}(E^a_{d,i}E^c_{b,j}+E^a_{d,j}E^c_{b,i})\\
    &~ -\epsilon_1\delta^a_d\sum_{i<j}^N\frac{E^c_{f,i}E^f_{b,j}y_i^n-E^c_{f,j}E^f_{b,i}y_j^n}{y_i-y_j}-\epsilon_1\delta^c_b\sum_{i<j}^N\frac{E^f_{d,i}E^a_{f,j}y_i^n-E^f_{d,j}E^a_{f,i}y_j^n}{y_i-y_j}\\
    & =\delta^c_b e^a_{d;1,n}-\delta^a_d e^c_{b;1,n}+\frac{\epsilon_3 n}{2}\left(\delta^c_b e^a_{d;0,n-1}+\delta^a_d e^c_{b;0,n-1}\right)-\epsilon_1 n \delta^a_d e^c_{b;0,n-1}\\
    &~ -\epsilon_1\delta^a_d\sum_{m=0}^{n-1}\frac{m+1}{n+1} e^c_{f;0,m}e^f_{b;0,n-1-m}-\epsilon_1\sum_{m=0}^{n-1}\frac{n-m}{n+1}\delta^c_b e^a_{f;0,m}e^f_{d;0,n-1-m}\\
    &~ +\epsilon_1 n \sum_{i=1}^N(E^c_{d,i}E^a_{b,i}-E^a_{d,i}E^c_{b,i})y_i^{n-1}+\epsilon_1\sum_{m=0}^{n-1}e^a_{d;0,m}e^c_{b;0,n-1-m}.
\end{align*}
Use the identity $E^c_{d,i}E^a_{b,i}-E^a_{d,i}E^c_{b,i}=\delta^a_dE^c_{b,i}-\delta^c_dE^a_{b,i}$, we get the right hand side of \eqref{eqn: [e[1,0],e[0,n]]}.
\end{proof}

\begin{proof}[Proof of Equation \eqref{eqn_[t[2,1],t[0,n]]}]
In the Calogero representation we have
\begin{align*}
    [t_{2,1},t_{0,n}]=\sum_{i=1}^N[\partial_{y_i}+y_i\partial_{y_i}^2,y_i^n]=\sum_{i=1}^N \left(n^2y_i^{n-1}+2ny_i^n\partial_{y_i}\right)=2nt_{1,n}.
\end{align*}
\end{proof}

\begin{proof}[Proof of Equation \eqref{eqn_[t[2,1],t[1,n]]}]
The left hand side of \eqref{eqn_[t[2,1],t[1,n]]} can be written as
\begin{align*}
    [t_{2,1},t_{1,n}]=&\epsilon_1\sum_{i=1}^N[\partial_{y_i}+y_i\partial_{y_i}^2,\frac{n}{2}y_i^{n-1}+y_i^n\partial_{y_i}]-\sum_{i<j}^N(\epsilon_1 \Omega_{ij}+\epsilon_2)[\frac{y_i+y_j}{(y_i-y_j)^2},y_i^n\partial_{y_i}+y_j^n\partial_{y_j}]\\
    =&\epsilon_1 \sum_{i=1}^N\left(\frac{n(n-1)^2}{2}y_i^{n-2}+n(2n-1)y_i^{n-1}\partial_{y_i}+(2n-1) y_i^{n}\partial_{y_i}^2\right)\\
    &+\sum_{i<j}^N(\epsilon_1 \Omega_{ij}+\epsilon_2)\frac{3(y_iy_j^n-y_i^ny_j)-(y_i^{n+1}-y_j^{n+1})}{(y_i-y_j)^3},
\end{align*}
And the relevant summations that we encounter in the right hand side of  \eqref{eqn_[t[2,1],t[1,n]]} can be written as
\begin{equation}\label{eqn_sum ee}
    \begin{split}
        \frac{\epsilon_1}{2}&\sum_{m=0}^{n-2}(m+1)(n-1-m)(e^a_{c;0,m}e^c_{a;0,n-2-m}+\epsilon_1\epsilon_2t_{0,m}t_{0,n-2-m})\\
    &=-\frac{(n+1)n(n-1)}{12}\epsilon_2\epsilon_3t_{0,n-2}\\
    &~ +\sum_{i<j}^N\frac{(n-1)(y_i^{n+1}-y_j^{n+1})+(n+1)(y_iy_j^n-y_i^ny_j)}{(y_i-y_j)^3}(\epsilon_1\Omega_{ij}+\epsilon_2).
    \end{split}
\end{equation}
Now we can see that two sides of \eqref{eqn_[t[2,1],t[1,n]]} agree by direct computation using \eqref{eqn_sum ee}.
\end{proof}

To conclude this appendix, we remark that $\mathscr O_{\epsilon_1}(\mathcal M_{\epsilon_2}(N,K))$ is closely related to the spherical $\gl_K$-extended Cherednik algebra $\mathrm{S}\mathcal H^{(K)}_N$ when $\epsilon_1=\epsilon_2$. In fact, we have the following.

\begin{proposition}
There is a surjective $\mathbb C[\epsilon^\pm]$-algebra map $\mathscr O_{\epsilon^{\pm}}(\mathcal M_{\epsilon}(N,K))\twoheadrightarrow \mathrm{S}\mathcal H^{(K)}_N[\epsilon^\pm]/(\epsilon_1=\epsilon_2=\epsilon)$ which is determined by
\begin{align*}
    e^a_{b;n,m}\mapsto \rho_N(\mathsf T_{n,m}(E^a_b)).
\end{align*}
\end{proposition}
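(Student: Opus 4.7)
The plan is to obtain the map $p$ by factoring $\rho_N$ through the existing surjection $p_N$ from Lemma~\ref{lem: map to Higgs branch}. Concretely, setting $\epsilon_1=\epsilon_2=\epsilon$ and inverting $\epsilon$, Lemma~\ref{lem: map to Higgs branch} gives a surjection $\bar p_N\colon \mathsf A^{(K)}[\epsilon^{-1}]/(\epsilon_1-\epsilon,\epsilon_2-\epsilon)\twoheadrightarrow \mathscr O_\epsilon(\mathcal M_\epsilon(N,K))[\epsilon^{-1}]$ sending $\mathsf T_{n,m}(E^a_b)\mapsto e^a_{b;n,m}$, while Lemma~\ref{lem: map rho_N} gives an analogous surjection $\bar\rho_N$ onto $\mathrm S\mathcal H^{(K)}_N[\epsilon^{-1}]/(\epsilon_1-\epsilon,\epsilon_2-\epsilon)$. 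The desired $p$ with $p\circ\bar p_N=\bar\rho_N$ exists iff $\ker\bar p_N\subset \ker\bar\rho_N$, and surjectivity of $p$ then follows from surjectivity of $\bar\rho_N$.

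To establish the kernel inclusion, I would compare the Calogero representation of $\mathscr O_\epsilon(\mathcal M_\epsilon(N,K))$ with the Dunkl representation of $\mathrm S\mathcal H^{(K)}_N$ at $\epsilon_1=\epsilon_2=\epsilon$. The Calogero embedding \eqref{eqn: Calogero embedding} at twist $\epsilon_2/\epsilon_1=1$ lands in $D^1(\mathcal N(N,K))$; on the open locus of distinct $Y$-eigenvalues this is identified (after $\mathfrak S_N$-symmetrization) with twisted differential operators on $(\mathbb P^{K-1})^N\times \mathbb C^N_{\mathrm{disj}}$. By Beilinson--Bernstein at integer twist, the finite-dimensional $D^1(\mathbb P^{K-1})$-module $\mathcal O(1)\cong \mathbb C^K$ produces a surjective algebra map $\pi\colon D^1(\mathbb P^{K-1})^{\otimes N}\twoheadrightarrow \mathrm{End}(\mathbb C^K)^{\otimes N}=\gl_K^{\otimes N}$. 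Composing Calogero with $\pi$ yields a homomorphism $\mathscr O_\epsilon(\mathcal M_\epsilon(N,K))\to (D(\mathbb C^N_{\mathrm{disj}})\otimes \gl_K^{\otimes N})^{\mathfrak S_N}$.

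Direct inspection of Lemma~\ref{lemma: t[2,0] and e[0,n] in Calogero} at $\epsilon_1=\epsilon_2=\epsilon$ shows that this composite sends $e^a_{b;0,n}\mapsto \sum_i x_i^n E^a_{b,i}$ and $t_{2,0}\mapsto \epsilon\sum_i\partial_{x_i}^2-2\sum_{i<j}\frac{\epsilon(\Omega_{ij}+1)}{(x_i-x_j)^2}$ (after identifying $y_i$ with $x_i$), which are exactly the Dunkl embedding images of $\rho_N(\mathsf T_{0,n}(E^a_b))$ and $\rho_N(\mathsf t_{2,0})$. Since the $\mathrm{SL}_2$-triple $\{\mathsf t_{2,0},\mathsf t_{1,1},\mathsf t_{0,2}\}$ acts compatibly on both sides, repeated commutators extend the match to all generators $\mathsf T_{n,m}(E^a_b)$ and $\mathsf t_{n,m}$. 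Injectivity of the Dunkl embedding then forces $\ker\bar p_N\subset \ker\bar\rho_N$, producing $p$.

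The main technical obstacle is to set up the Beilinson--Bernstein quotient $\pi$ rigorously in this setting (including the $\mathfrak S_N$-invariant/symmetric-product structure on the distinct-eigenvalue locus) and to verify that $\pi$ is compatible with the Calogero representation at the level of products, not merely on linear generators. Explicitly, one must check that identities such as the quadratic relation \eqref{eqn_e[0,m]e[0,n]} in $\mathscr O_\epsilon(\mathcal M_\epsilon(N,K))$, whose ``local'' coefficient is $-(\epsilon_1+\epsilon_3)/\epsilon_1=K$ at $\epsilon_1=\epsilon_2=\epsilon$, descend correctly under $\pi$ to the $\gl_K^{\otimes N}$ identity $\sum_c E^a_{c,i}E^c_{b,i}=K\,E^a_{b,i}$, matching the corresponding product in the Dunkl image; this and the non-local terms must be verified. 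Once these compatibilities are in place, the desired surjective algebra homomorphism follows as described.
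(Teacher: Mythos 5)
Your proposal is correct and follows essentially the same route as the paper: compose the Calogero embedding with the algebra map $D^{1}(\mathbb P^{K-1})\to \mathrm{End}(\mathbb C^K)$ given by the action on $\Gamma(\mathbb P^{K-1},\mathcal O(1))\cong\mathbb C^K$, match the images of $e^a_{b;0,n}$ and $t_{2,0}$ with the Dunkl images of $\rho_N(\mathsf T_{0,n}(E^a_b))$ and $\rho_N(\mathsf t_{2,0})$, and conclude surjectivity from the fact that these elements generate $\mathrm{S}\mathcal H^{(K)}_N$. The ``main technical obstacle'' you flag dissolves: since $\mathbb C^K$ is a genuine module over the ring of $\mathcal O(1)$-twisted differential operators, the evaluation map is automatically an algebra homomorphism, so compatibility with products --- e.g.\ the relation $E^a_cE^c_b=-\frac{\epsilon_1+\epsilon_3}{\epsilon_1}E^a_b=KE^a_b$ at $\epsilon_1=\epsilon_2$ matching $\sum_c E^a_cE^c_b=KE^a_b$ for matrix units --- holds for free, exactly as in the paper's proof.
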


\begin{proof}
First of all, we note that $D^{1}(\mathbb P^{K-1}$ naturally acts on $\Gamma(\mathbb P^{K-1},\mathcal O(1))\cong \mathbb C^K$, such that $E^a_b$ is mapped to elementary matrix corresponding to $a$-th row and $b$-th column. I.e. there is an algebra map $D^{1}(\mathbb P^{K-1})\to \mathrm{End}(\mathbb C^K)$. Composing the embedding $ \mathscr O_{\epsilon}(\mathcal M_{\epsilon}(N,K))\hookrightarrow D^{1}(\mathbb P^{K-1}\times\cdots\times \mathbb P^{K-1}\times \mathbb C^N_{\mathrm{disj}})$ (c.f. Lemma \ref{lemma: t[2,0] and e[0,n] in Calogero}) with the algebra map $D^{1}(\mathbb P^{K-1})\to \mathrm{End}(\mathbb C^K)$, we obtain a $\mathbb C[\epsilon]$ algebra map $\mathscr O_{\epsilon}(\mathcal M_{\epsilon}(N,K))\to D(\mathbb C^N_{\mathrm{disj}})\otimes \gl_K^{\otimes N}$, which is determined by
\begin{align*}
    t_{2,0}\mapsto \epsilon\sum_{i=1}^N\partial_{y_i}^2-2\sum_{i<j}^N\frac{\epsilon \Omega_{ij}+\epsilon}{(y_i-y_j)^2},\qquad e^a_{b;0,n}\mapsto \sum_{i=1}^N E^a_{b,i} y_i^n.
\end{align*}
Here $E^a_{b,i}$ are elementary matrix in the $i$-th copy of $\gl_K$. Compare with the Dunkl embedding, we see that $t_{2,0}$ is mapped to $\rho_N(\mathsf t_{2,0})$ and $e^a_{b;n,m}$ is mapped to $\rho_N(\mathsf T_{n,m}(E^a_b))$, thus the assignment $e^a_{b;n,m}\mapsto \rho_N(\mathsf T_{n,m}(E^a_b))$ uniquely determines a $\mathbb C[\epsilon]$-algebra map $\mathscr O_{\epsilon^{\pm}}(\mathcal M_{\epsilon}(N,K))\twoheadrightarrow \mathrm{S}\mathcal H^{(K)}_N[\epsilon^\pm]/(\epsilon_1=\epsilon_2=\epsilon)$. This map is surjective because $\{\rho_N(\mathsf T_{n,m}(E^a_b)\:|\: 1\le a,b\le K, (n,m)\in \mathbb N^2\}$ generates $\mathrm{S}\mathcal H^{(K)}_N$.
\end{proof}

\section{Computation of \texorpdfstring{$W^{(1)}W^{(n)}$}{W1Wn} OPE in \texorpdfstring{$\mathcal W^{(K)}_{\infty}$}{Wk inf}}\label{app: W1Wn OPE}

In this appendix, we show that the OPE between $W^{a(1)}_a$ and $W^{b(n)}_{c}$ in $\mathcal W^{(K)}_{\infty}$ is the following:
\begin{equation}\label{W^1W^n OPE}
\begin{split}
W^{a(1)}_b(z)W^{c(n)}_{d}(w)\sim \:& \sum_{i=0}^{n-1}\frac{(\lambda-i)\cdots(\lambda-n+1)\alpha^{n-1-i}}{(z-w)^{n+1-i}}(\alpha\delta^c_bW^{a(i)}_{d}(w)+\delta^a_b W^{c(i)}_{d}(w))\\
&+\frac{\delta^c_bW^{a(n+1)}_{d}(w)-\delta^a_dW^{c(n+1)}_{b}(w)}{z-w}.
\end{split}
\end{equation}
We prove \eqref{W^1W^n OPE} by proving its counterparts in $\mathcal W^{(K)}_{L}$ for all $L$. Recall that the matrix-valued Miura operator is defined as
\begin{align*}
    \mathcal L^L(z)=(\alpha\partial_z-J(z)^{[1]})\cdots(\alpha\partial_z-J(z)^{[L]})=\sum_{i=0}^L(-1)^i(\alpha\partial_z)^{L-i}\cdot W^{(i)}(z),
\end{align*}
where $J(z)^{[i]}$ is the $i$-th copy of affine Kac-Moody $V^{\alpha,1}(\gl_K)$, and we set $W^{a(0)}_{b}(z)=\delta^a_b$. Then we have OPE
\begin{align*}
    J^a_b(z)^{[i]}\mathcal L^L(w)^c_d\sim\: & -\mathcal L^{i-1}(w)^c_b\frac{\alpha}{(z-w)^2}\mathcal L^{L-i}(w)^a_d-\mathcal L^{i-1}(w)^c_e\frac{\delta^a_b}{(z-w)^2}\mathcal L^{L-i}(w)^e_d\\
    &-\mathcal L^{i-1}(w)^c_b\frac{J^a_e(w)^{[i]}}{z-w}\mathcal L^{L-i}(w)^e_d+\mathcal L^{i-1}(w)^c_f\frac{J^f_b(w)^{[i]}}{z-w}\mathcal L^{L-i}(w)^a_d
\end{align*}
Write $J(w)^{[i]}=\alpha\partial_w-(\alpha\partial_w-J(w)^{[i]})$ in the second line, and we get
\begin{align*}
    J^a_b(z)^{[i]}\mathcal L^L(w)^c_d\sim\: & -\mathcal L^{i-1}(w)^c_b\frac{\alpha}{(z-w)^2}\mathcal L^{L-i}(w)^a_d-\mathcal L^{i-1}(w)^c_e\frac{\delta^a_b}{(z-w)^2}\mathcal L^{L-i}(w)^e_d\\
    &-\mathcal L^{i-1}(w)^c_b\frac{\alpha}{z-w}\partial_w\cdot\mathcal L^{L-i}(w)^a_d+\mathcal L^{i-1}(w)^c_b\partial_w\cdot\frac{\alpha}{z-w}\mathcal L^{L-i}(w)^a_d\\
    &+\mathcal L^{i-1}(w)^c_b\frac{1}{z-w}\mathcal L^{L-i+1}(w)^a_d-\mathcal L^{i}(w)^c_b\frac{1}{z-w}\mathcal L^{L-i}(w)^a_d\\
    =& \mathcal L^{i-1}(w)^c_b\frac{1}{z-w}\mathcal L^{L-i+1}(w)^a_d-\mathcal L^{i-1}(w)^c_e\frac{\delta^a_b}{(z-w)^2}\mathcal L^{L-i}(w)^e_d -\mathcal L^{i}(w)^c_b\frac{1}{z-w}\mathcal L^{L-i}(w)^a_d.
\end{align*}
Summing over $i$ from $1$ to $L$, we get
\begin{align*}
    W^{a(1)}_{b}(z)\mathcal L^L(w)^c_d\sim\:  \frac{\delta^c_b}{z-w}\mathcal L^{L}(w)^a_d-\mathcal L^{L}(w)^c_b\frac{\delta^a_d}{z-w}-\delta^a_b\sum_{i=1}^L\mathcal L^{i-1}(w)^c_e\frac{1}{(z-w)^2}\mathcal L^{L-i}(w)^e_d.
\end{align*}
We can move $\frac{\delta^c_b}{z-w}$ to the right-hand-side of $\mathcal L^{L}(w)^a_d$ at a cost of $w$-derivatives on $1/(z-w)$ and get
\begin{align}\label{W1L OPE}
    W^{a(1)}_{b}(z)\mathcal L^L(w)^c_d\sim\: \mathcal L^{L}(w)^a_d\frac{\delta^c_b}{z-w}-\mathcal L^{L}(w)^c_b\frac{\delta^a_d}{z-w}+[\frac{\delta^c_b}{z-w},\mathcal L^{L}(w)^a_d]+[\frac{\delta^a_b/\alpha}{z-w},\mathcal L^{L}(w)^c_d].
\end{align}
The commutator between Miura operator and $1/(z-w)$ can be computed using \eqref{eq: PDS commutator}:
\begin{align*}
    [\frac{1}{z-w},\mathcal L^{L}(w)]&=\sum_{i=0}^L(-1)^i[\frac{1}{z-w},(\alpha\partial_w)^{L-i}]\cdot W^{(i)}(w)\\
    &=\sum_{i=0}^L \sum_{s=1}^{L-i}(-1)^{i+s}\binom{L-i}{s}(\alpha\partial_w)^{L-i-s}\cdot  \frac{\alpha^s s! W^{(i)}(w)}{(z-w)^{s+1}}\\
    &=\sum_{n=0}^L (-1)^{n}(\alpha\partial_w)^{L-n}\cdot \sum_{i=0}^{n-1}\frac{(L-i)!}{(L-n)!} \frac{\alpha^{n-i}  W^{(i)}(w)}{(z-w)^{n+1-i}}.
\end{align*}
Plug the above commutator into \eqref{W1L OPE} and extract the coefficient of $(\alpha\partial_w)^{L-n}$, and we get \eqref{W^1W^n OPE}.


\section{Completed Tensor Product of Graded Algebras}\label{sec: Completion of Tensor Product}
In this appendix, we define a version of completed tensor product of graded algebras that we use in the body of this paper. We fix a commutative base ring $\mathds k$ throughout this section.

Let $R_i=\bigoplus_{j\in \mathbb Z}R^j_i$, $i=1,2$ be two $\mathbb Z$-graded algebras over the base ring $\mathds k$, where $R^j_i$ is the $j$-th homogeneous component of $R_i$, then we define a completed tensor product as follows.
\begin{definition}\label{def: completed tensor product}
The completed tensor product $R_1\widetilde\otimes R_2$ is the $\mathbb Z$-graded $\mathds k$-algebra
\begin{align}
    \bigoplus_{d\in \mathbb Z}\left(\underset{\substack{\longrightarrow\\M}}{\lim}\prod_{n\in \mathbb Z_{\ge -M}}R_1^{d+n}\otimes R^{-n}_2\right).
\end{align}
It is easy to see that degreewise multiplication of $R_1\otimes R_2$ naturally extends to the $R_1\widetilde\otimes R_2$ therefore the latter inherits a natural $\mathbb Z$-graded $\mathds k$-algebra structure.
\end{definition}

\begin{example}
Let $R_1=\mathds k[x_1^{\pm}]$ and $R_2=\mathds k[x_2^{\pm}]$, and we grade them by letting $\deg x_1=\deg x_2=1$, then $R_1\widetilde\otimes R_2=\mathds k(\!(\frac{x_1}{x_2})\!)[x_1^{\pm},x_2^{\pm}]$.
\end{example}
The next example shows that the completed tensor product is not associative.
\begin{example}
Let $R_i=\mathds k[x_i^{\pm}]\;(i=1,2,3)$, and we grade them by letting $\deg x_i=1$, then $(R_1\widetilde\otimes R_2)\widetilde\otimes R_3\ncong R_1\widetilde\otimes (R_2\widetilde\otimes R_3)$. In fact, $\sum_{n=0}^{\infty}x_1^{-n}x_2^{2n}x_3^{-n}\in (R_1\widetilde\otimes R_2)\widetilde\otimes R_3$, whereas $\sum_{n=0}^{\infty}x_1^{-n}x_2^{2n}x_3^{-n}\notin R_1\widetilde\otimes (R_2\widetilde\otimes R_3)$.
\end{example}
In the body of the paper we need to discuss coassociativity of coproduct, to remedy the issue of nonassociativity of completed tensor product, we introduce the completed tensor product of three (and more) $\mathbb Z$-graded $\mathds k$-algebras, which turns out to be the relevant object to coassociativity of coproduct.

\begin{definition}\label{def: completed tensor product_general}
Let $R_i=\bigoplus_{j\in \mathbb Z}R^j_i, \;(i=1,2,\cdots,s)$, be $\mathbb Z$-graded $\mathds k$-algebras, where $R^j_i$ is the $j$-th homogeneous component of $R_i$. The completed tensor product $R_1\widetilde\otimes R_2\widetilde\otimes \cdots\widetilde\otimes R_s$ is the $\mathbb Z$-graded $\mathds k$-algebra whose degree $d$ component is
\begin{align}
     \underset{\substack{\longrightarrow\\M}}{\lim}\prod_{(n_1,n_2,\cdots,n_{s-1})\in \mathbb Z^{s-1}_{\ge -M}} R_1^{d+n_1}\otimes R_2^{n_2-n_1}\otimes\cdots \otimes R_{s-1}^{n_{s-1}-n_{s-2}}\otimes R_s^{-n_{s-1}}.
\end{align}
It is easy to see that degreewise multiplication of $R_1\otimes R_2\otimes\cdots\otimes R_s$ naturally extends to the $R_1\widetilde\otimes R_2\widetilde\otimes \cdots\widetilde\otimes R_s$ therefore the latter inherits a natural $\mathbb Z$-graded $\mathds k$-algebra structure.
\end{definition}

\begin{lemma}\label{lem: compare different completions}
Let $R_i,(i=1,2,3)$ be $\mathbb Z$-graded $\mathds k$-algebras, then $R_1\widetilde\otimes R_2\widetilde\otimes R_3$, $(R_1\widetilde\otimes R_2)\widetilde\otimes R_3$ and $R_1\widetilde\otimes (R_2\widetilde\otimes R_3)$ are $\mathbb Z$-graded $\mathds k$-submodules of
\begin{align}
    (R_1\otimes R_2\otimes R_3)^{\wedge}:=\bigoplus_{d\in \mathbb Z}\prod_{n_1+n_2+n_3=d}R_1^{n_1}\otimes R_2^{n_2}\otimes R_3^{n_3}.
\end{align}
Moreover, the ring structures on $R_1\widetilde\otimes R_2\widetilde\otimes R_3$, $(R_1\widetilde\otimes R_2)\widetilde\otimes R_3$ and $R_1\widetilde\otimes (R_2\widetilde\otimes R_3)$ are compatible in the sense that the intersection between any pair of them is a sub-algebra of both of the algebras in the pair.
\end{lemma}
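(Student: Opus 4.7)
First I would realize each of the three completed tensor products as a $\mathbb Z$-graded $\mathds k$-submodule of $(R_1 \otimes R_2 \otimes R_3)^{\wedge}$ by a routine change of variables: an element of degree $d$ is written as a formal sum $\sum_{\mathbf m} x_{\mathbf m}$ indexed by triples $\mathbf m = (m_1, m_2, m_3) \in \mathbb Z^3$ with $m_1+m_2+m_3 = d$ and $x_{\mathbf m}\in R_1^{m_1}\otimes R_2^{m_2}\otimes R_3^{m_3}$. The three algebras are distinguished by the admissible support conditions on $\{\mathbf m : x_{\mathbf m}\neq 0\}$: for $R_1\widetilde\otimes R_2\widetilde\otimes R_3$, I demand uniform bounds $m_1 \geq d-M$ and $m_3 \leq M$; for $(R_1\widetilde\otimes R_2)\widetilde\otimes R_3$, I demand $m_3 \leq M$ uniformly, together with, for each such $m_3$, an $m_3$-dependent upper bound on $m_2$; and for $R_1\widetilde\otimes(R_2\widetilde\otimes R_3)$, I demand $m_1 \geq d-M$ uniformly, together with, for each such $m_1$, an $m_1$-dependent upper bound on $m_3$. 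In each case the condition ensures that only finitely many $\mathbf m$ with a given coordinate carry nonzero components, which is precisely what the nested direct/inverse limits in Definitions \ref{def: completed tensor product} and \ref{def: completed tensor product_general} require.

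Next, I would verify that the multiplication in each of the three algebras is given by the same termwise convolution
\[
(a\cdot b)_{\mathbf p} \;=\; \sum_{\mathbf m+\mathbf n=\mathbf p} a_{\mathbf m}\,b_{\mathbf n}, \qquad \mathbf p\in \mathbb Z^3,
\]
with each inner sum finite thanks to the support conditions on $a$ and $b$, and with the outer ``sum'' making sense as an element of $(R_1\otimes R_2\otimes R_3)^{\wedge}$ of degree $\deg a+\deg b$. This is a direct expansion of either the defining product on the three-fold completed tensor product or the nested products in the two bracketings. Consequently, whenever $a,b$ lie in the intersection of two of the three algebras, the products computed in either algebra coincide as elements of the ambient $(R_1\otimes R_2\otimes R_3)^{\wedge}$.

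Finally, closure of each pairwise intersection under this common multiplication follows by combining the support bounds termwise: uniform bounds add to uniform bounds, and slice-wise bounds propagate to slice-wise bounds, so that if $a,b$ satisfy the support conditions defining algebras $A$ and $B$, then so does $a\cdot b$. This shows each pairwise intersection is a subalgebra of both algebras in the pair. As a side benefit, one sees directly that the intersection $(R_1\widetilde\otimes R_2)\widetilde\otimes R_3\,\cap\, R_1\widetilde\otimes(R_2\widetilde\otimes R_3)$ coincides with $R_1\widetilde\otimes R_2\widetilde\otimes R_3$, since the uniform conditions $m_3\leq M$ and $m_1 \geq d-M'$ together already imply the joint uniform condition defining the three-fold completed tensor product. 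The main obstacle is the bookkeeping to unwind the nested limits defining each iterated tensor product into the uniform termwise formula above; once this is in place, the ring-theoretic content is minimal, because the multiplication in all three algebras is literally the same termwise operation on $(R_1\otimes R_2\otimes R_3)^{\wedge}$, and the subtlety lies only in which formal sums are admitted in each completion.
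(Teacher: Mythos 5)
Your central point --- that all three multiplications are the same termwise convolution on the ambient module $(R_1\otimes R_2\otimes R_3)^{\wedge}$, so that products of elements of a pairwise intersection, being computable in either algebra with the same result, land back in the intersection --- is exactly the paper's (very terse) argument, and it suffices for the lemma. Note that once this is established you do not need any support-condition bookkeeping for closure: each algebra is closed under its own product by construction, and agreement of the products forces $a\cdot b$ into both.

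However, your characterization of the two bracketed completions purely by support conditions is wrong, and the ``side benefit'' you derive from it is false. Membership of a degree-$d$ element in $(R_1\widetilde\otimes R_2)\widetilde\otimes R_3$ requires more than a uniform bound $m_3\le M$ plus an $m_3$-dependent bound on $m_2$: for each fixed $m_3$ the corresponding slice must lie in $(R_1\widetilde\otimes R_2)^{d-m_3}\otimes R_3^{m_3}$, i.e.\ it must be a \emph{finite} sum of tensors $s\otimes r$ with $s\in R_1\widetilde\otimes R_2$ and $r\in R_3^{m_3}$. This finite-rank condition on each slice is not implied by the degree bounds when $R_3^{m_3}$ is infinite-dimensional. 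The paper's own third Example is the counterexample to your claim: $\sum_{m,n\ge 0}x_1^{m-n}y_1^{n}\otimes x_2^{n-m}\otimes x_3^{-n-m}y_3^{m}$ satisfies all of your support conditions (it lies in $R_1\widetilde\otimes R_2\widetilde\otimes R_3$), yet its $R_3$-degree-$(-n)$ slice involves infinitely many linearly independent elements $x_3^{-n-m}y_3^{m}$ of $R_3^{-n}$, so it lies in neither $(R_1\widetilde\otimes R_2)\widetilde\otimes R_3$ nor $R_1\widetilde\otimes(R_2\widetilde\otimes R_3)$. Hence the containment $R_1\widetilde\otimes R_2\widetilde\otimes R_3\subseteq (R_1\widetilde\otimes R_2)\widetilde\otimes R_3\cap R_1\widetilde\otimes(R_2\widetilde\otimes R_3)$ fails, and your asserted equality of the intersection with the three-fold completion is incorrect (only the reverse containment holds, via the degree bounds). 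Strike the characterization and the side remark; the rest of the argument stands.
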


\begin{proof}
The first statement follows directly from the definition. The second statement is also easy to see: the ring structures on $R_1\widetilde\otimes R_2\widetilde\otimes R_3$, $(R_1\widetilde\otimes R_2)\widetilde\otimes R_3$ and $R_1\widetilde\otimes (R_2\widetilde\otimes R_3)$ are all natural extension of the ring structure on $R_1\otimes R_2\otimes R_3$, where we take component-wise multiplication and allow infinitely many terms, the results are finite because in each of the three algebras there is a choice of direction to which the homogeneous degree is allowed to approach infinity, thus the intersection of any pair of the three algebras inherit such ring structure.
\end{proof}

\begin{example}
In  general there is no inclusion relation between any pair of $R_1\widetilde\otimes R_2\widetilde\otimes R_3$, $(R_1\widetilde\otimes R_2)\widetilde\otimes R_3$ and $R_1\widetilde\otimes (R_2\widetilde\otimes R_3)$. For example let $R_i=\mathds k[x_i^{\pm},y_i^{\pm}],\;(i=1,2,3)$ and we grade them by letting $\deg x_i=\deg y_i=1$, then 
\begin{itemize}
    \item $\sum_{n=0}^{\infty}x_1^{-n}x_2^{2n}x_3^{-n}\in (R_1\widetilde\otimes R_2)\widetilde\otimes R_3$, but it is not an element of $ R_1\widetilde\otimes (R_2\widetilde\otimes R_3)$ or $ R_1\widetilde\otimes R_2\widetilde\otimes R_3$.
    \item $\sum_{n=0}^{\infty}x_1^{n}x_2^{-2n}x_3^{n}\in R_1\widetilde\otimes (R_2\widetilde\otimes R_3)$, but it is not an element of $(R_1\widetilde\otimes R_2)\widetilde\otimes R_3$ or $R_1\widetilde\otimes R_2\widetilde\otimes R_3$.
    \item $\sum_{m=0}^{\infty}\sum_{n=0}^{\infty}x_1^{m-n}y_1^nx_2^{n-m}x_3^{-n-m}y_3^m\in R_1\widetilde\otimes R_2\widetilde\otimes R_3$, but it is not an element of $(R_1\widetilde\otimes R_2)\widetilde\otimes R_3$ or $R_1\widetilde\otimes (R_2\widetilde\otimes R_3)$.
\end{itemize}
\end{example}

\section{Vertex Coalgebras and Vertex Comodules}\label{sec: vertex coalgebras and vertex comodules}
Keith Hubbard defined the vertex coalgebras and vertex comodules \cite{hubbard2009vertex}, which are natural dual notion to vertex algebras and vertex modules. In this appendix, We recall the definitions in \cite{hubbard2009vertex} with mild modifications. 

\begin{definition}
A vertex coalgebra is a vector space $V$ together with linear maps
\begin{itemize}
    \item Coproduct $\Delta(w) :V\to V\otimes V(\!(w^{-1})\!)$, and write $\Delta(w)(v)=\sum_{n\in \mathbb Z}\Delta_{n}(v)w^{-n-1}$,
    \item Covacuum $\mathfrak{C}:V\to \mathbb C$,
\end{itemize}
satisfying the following axioms:
\begin{itemize}
    \item[(1)] Counit: $\forall v\in V$, $$(\mathrm{id}\otimes \mathfrak{C})\circ\Delta(w)(v)=v.$$
    \item[(2)] Cocreation: $\forall v\in V$, $$(\mathfrak{C}\otimes\mathrm{id})\circ\Delta(w)(v)\in V[w],\text{ and }\lim_{w\to 0}(\mathfrak{C}\otimes\mathrm{id})\circ\Delta(w)(v)=\mathrm{id}.$$
    \item[(3)] Translation: let $T=(\mathfrak{C}\otimes\mathrm{id})\circ\Delta_{-2}$, then
    $$\frac{\mathrm{d}}{\mathrm{d}w}\Delta(w)=\Delta(w)\circ T-(T\otimes \mathrm{id})\circ\Delta(w).$$
    \item[(4)] Locality: $\forall v\in V$, the following two elements
\begin{align*}
    (\Delta(w)\otimes \mathrm{id})\circ\Delta(z)(v),\quad(\mathrm{id}\otimes P)\circ( \Delta(z)\otimes\mathrm{id})\circ\Delta(w)(v),
\end{align*}
are expansions of the same element in $(V\otimes V\otimes V)[\![z^{-1},w^{-1},(z-w)^{-1}]\!][z,w]$, where $P:V\otimes V\to V\otimes V$ is the operator that swaps two tensor component, i.e. $P(v_1\otimes v_2)=v_2\otimes v_1$.
\end{itemize}
We can similarly define vertex coalgebra over a base ring $\mathds k$.
\end{definition}

Similarly one can define vertex comodule of a vertex coalgebra \cite{hubbard2009vertex}.

\begin{definition}
We say that $(M,\Delta_M(w))$ is a vertex comodule of a vertex coalgebra $(V,\Delta_V(w),\mathfrak{C})$ if $\Delta_M(w): M\to  M\otimes V(\!(w^{-1})\!)$ is a linear map which satisfies the axioms:
\begin{itemize}
    \item[(1)] Counit: $$( \mathrm{id}\otimes \mathfrak{C})\circ \Delta_M(w)=\mathrm{id}.$$
    \item[(2)] Coassociativity: $\forall m\in M$, two elements 
\begin{align*}
    (\mathrm{id}\otimes\Delta_V(z))\circ \Delta_M(w)(m),\quad ( \Delta_M(w)\otimes \mathrm{id})\circ \Delta_M(z+w)(m),
\end{align*}
are expansions of the same element in $(M\otimes V\otimes V)[\![z^{-1},w^{-1},(z+w)^{-1}]\!][z,w]$.
\end{itemize}
We can similarly define vertex comodule of a vertex coalgebra over a base ring $\mathds k$.
\end{definition}

\section{Restricted Mode Algebra of a Vertex Algebra}\label{sec: Restricted Mode Algebra}
In this appendix we define a modified version of the mode algebra of a vertex algebra, which behaves better than the usual (topologically completed) mode algebra when discussing meromorphic coproducts. 

Throughout this appendix, we fix a vertex algebra $\mathcal V$ with the translation operator $T$. For simplicity, we shall assume that the $\mathcal V$ is $\mathbb Z$-graded, i.e. $\mathcal V=\oplus_{d\in \mathbb Z}\mathcal V^d$ such that $\deg |0\rangle=0, \deg T=-1$, and for homogeneous elements $A,B\in \mathcal V$, $\deg A_{(n)}B=\deg A+\deg B+n+1$. Our convention is opposite to the familiar one in the vertex algebra literature (for example \cite{arakawa2007representation}), because this convention matches with our grading convetion for the affine Yangian $\mathsf Y^{(K)}$ in the body of the paper.

Consider the vector space $\mathcal V\otimes \mathbb C[t,t^{-1}]$. For element $A\in \mathcal V$ we write $A_{[n]}:=A\otimes t^n$. The mode Lie algebra $\mathfrak{L}(\mathcal V)$ is defined as the vector space $\mathcal V\otimes \mathbb C[t,t^{-1}]/\mathrm{Im}(T+\partial_{t})$, equipped with the Lie bracket:
\begin{align}\label{eqn: Lie bracket}
    [A_{[m]},B_{[n]}]=\sum_{k\ge 0}\binom{m}{k}(A_{(k)}B)_{[m+n-k]}.
\end{align}
If we set $\deg A_{[n]}=\deg A+n+1$, then the Lie bracket \eqref{eqn: Lie bracket} preserves the grading so $\mathfrak{L}(\mathcal V)$ is a graded Lie algebra and its universal enveloping algebra is a graded algebra $U(\mathfrak{L}(\mathcal V))=\bigoplus_{d\in \mathbb Z}U(\mathfrak{L}(\mathcal V))^d$. Define a topology on the universal enveloping algebra $U(\mathfrak{L}(\mathcal V))$ by requiring the open neighborhood of zero to be the left ideals $I_N$ generated by $U(\mathfrak{L}(\mathcal V))_{d>N}$, and define the completed universal enveloping algebra $\widehat{U}(\mathfrak{L}(\mathcal V))$ to be the degree-wise completion:
\begin{align*}
    \widehat{U}(\mathfrak{L}(\mathcal V))=\bigoplus_{d\in \mathbb Z}\underset{\substack{\longleftarrow\\ N}}{\lim}\:U(\mathfrak{L}(\mathcal V))^d/I_N\cap U(\mathfrak{L}(\mathcal V))^d.
\end{align*}
For $A\in \mathcal V$, set $A[z]=\sum_{n\in \mathbb Z}A_{[n]}z^{-n-1}$, then the Lie bracket \eqref{eqn: Lie bracket} is equivalent to
\begin{align}
    \oint_{|x|>|z|} (x-z)^kA[x]B[z]\frac{dx}{2\pi i } -\oint_{|z|>|x|}(x-z)^kB[z]A[x]\frac{dx}{2\pi i }=(A_{(k)}B)[z],\; \forall k\ge 0.
\end{align}
Define the normal ordered product $:A[z]B[z]:$ as 
\begin{align}
    :A[z]B[z]:=\oint_{|x|>|z|}\frac{1}{x-z}A[x]B[z]\frac{dx}{2\pi i } -\oint_{|z|>|x|}\frac{1}{x-z}B[z]A[x]\frac{dx}{2\pi i }.
\end{align}
The mode algebra $\mathfrak{U}(\mathcal V)$ is defined as the quotient of $\widehat{U}(\mathfrak{L}(\mathcal V))$ by the two-sided ideal topologically generated by $|0\rangle _{[-1]}-1$ and Fourier coefficients of
\begin{align}
    :A[z]B[z]:-(A_{(-1)}B)[z].
\end{align}
We shall define another version of mode algebra which is built of physical (multi-local) operators on the torus $\mathbb C^{\times}$.

Let $F(\mathcal V)$ be the $\mathbb C$-linear space with basis $\mathcal O(A_1,\cdots,A_m;f)$, where $A_i$ are chosen from a basis of $\mathcal V$, and $f$ is chosen from a basis of $\mathbb C[z_i^{\pm 1},(z_i-z_j)^{-1}\:|\: 1\le i,j\le m]$. Formally speaking, 
\begin{align}
    F(\mathcal V)=\bigoplus_{n> 0}\mathcal V^{\otimes n}\otimes \mathscr O(\mathbb C^{\times n}_{\mathrm{disj}}).
\end{align}
Define the multiplication on $F(\mathcal V)$ by
\begin{align}\label{eqn: multiplication on U(V)}
    \mathcal O(A_1,\cdots,A_m;f)\cdot \mathcal O(B_1,\cdots,B_n;g)=\mathcal O(A_1,\cdots,A_m, B_1,\cdots,B_n;fg),
\end{align}
so that $F(\mathcal V)$ acquires a non-unital associative algebra structure.
\begin{definition}\label{def: restricted mode alg}
The restricted mode algebra $U(\mathcal V)$ is the quotient of $F(\mathcal V)$ by the linear space spanned by
\begin{equation}\label{eqn: U(V) unity relation}
    \mathcal O(A_1,\cdots,A_{i-1},|0\rangle,A_{i+1},\cdots,A_m;f)-\mathcal O(A_1,\cdots,A_{i-1},A_{i+1},\cdots,A_m;\underset{z_i\to 0}{\Res}f),
\end{equation}
\begin{equation}\label{eqn: U(V) translation relation}
    \mathcal O(A_1,\cdots,TA_i,\cdots,A_m;f)+\mathcal O(A_1,\cdots,A_i,\cdots,A_m;\partial_{z_i}f),
\end{equation}
\begin{equation}\label{eqn: U(V) commutation relation}
    \begin{split}
        &\mathcal O(A_1,\cdots,A_{i-1},A,B,A_{i+2},\cdots,A_{n};g(z_1,\cdots, z_{i},z_{i+1},\cdots,z_n))\\
        -&\mathcal O(A_1,\cdots,A_{i-1},B,A,A_{i+2},\cdots,A_{n};g(z_1,\cdots,z_{i+1},z_i,\cdots,z_n))\\
        -&\sum_{k\in \mathbb Z}\mathcal O(A_1,\cdots,A_{i-1},A_{(k)}B,A_{i+2},\cdots, A_n;f_k),
    \end{split}
\end{equation}
for all $k\ge -1$ and $g\in \mathbb C[z_i^{\pm 1},(z_i-z_j)^{-1}\:|\: 1\le i,j\le n]$, here $f_k$ are the coefficients of $(z_i-z_{i+1})^k$ in the expansion
\begin{align*}
    g(z_1,\cdots,z_i,z_{i+1},\cdots,z_n)=\sum_{k\in \mathbb Z} (z_i-z_{i+1})^kf_k(z_1,\cdots,z_{i-1},z_{i+1},\cdots,z_n),
\end{align*}
where the RHS is in the space $\mathbb C[z_{k}^{\pm 1},(z_k-z_l)^{-1}\:|\: k,l\neq i](\!(z_i-z_{i+1})\!)$, i.e. expanded in the limit $z_i-z_{i+1}\to 0$, in particular the power series is bounded from below. Note that the summation in \eqref{eqn: U(V) commutation relation} is bounded because $A_{(k)}B$ vanishes for the sufficiently large $k$. 
\end{definition}
$U(\mathcal V)$ is unital since 
\begin{align*}
    \mathcal O(|0\rangle;x^{-1})\mathcal O(A_1,\cdots,A_n;f)=\mathcal O(A_1,\cdots,A_n;f)\mathcal O(|0\rangle;x^{-1})=\mathcal O(A_1,\cdots,A_n;f),
\end{align*}
which is a consequence of \eqref{eqn: U(V) unity relation}.

We also define a linear subspace $U_+(\mathcal V)\subset U(\mathcal V)$ spanned by the identity $\mathcal O(|0\rangle;z_1^{-1})$ and those $\mathcal O(A_1,\cdots,A_m;f)$ such that $f\in \mathbb C[z_i,(z_i-z_j)^{-1}\:|\: 1\le i,j\le m]$. It is easy to see that $U_+(\mathcal V)$ is a subalgebra, and we call it the positive restricted mode algebra.

Note that $U(\mathcal V)$ inherits a $\mathbb Z$-grading from $\mathcal V$ such that $\deg \mathcal O(A_1,\cdots,A_n;f)$ for a homogeneous function $f$ is $\deg A_1+\cdots+\deg A_n+\deg f+n$. This makes $U_+(\mathcal V)$ a $\mathbb Z$-graded subalgebra.

There is a $\mathbb Z$-graded algebra homomorphism $U(\mathcal V)\to \mathfrak{U}(\mathcal V)$ given by
\begin{align}
     \mathcal O(A_1,\cdots,A_m;f)\mapsto\oint_{|z_1|>\cdots>|z_m|}f(z_1,\cdots,z_m)A_1[z_1]\cdots A_m[z_m]\prod_{j=1}^m\frac{d z_j}{2\pi i },
\end{align}
For example $\mathcal O(A;z_1^n)\mapsto A_{[n]}$ and
\begin{align*}
    \mathcal O\left(A,B;\frac{z_1}{z_1-z_2}\right)\mapsto \sum_{n=0}^{\infty}A_{[-n]}B_{[n]}\;\text{ and }\;\mathcal O\left(A,B;\frac{z_1z_2}{(z_1-z_2)^{2}}\right)\mapsto \sum_{n=1}^{\infty}n A_{[-n]}B_{[n]}.
\end{align*} 
The map $U(\mathcal V)\to \mathfrak{U}(\mathcal V)$ is  injective under some technical assumption on $\mathcal V$, one occurrence is the following

\begin{proposition}\label{prop: U(V) is a sub of mode algebra}
 Assume that $\mathcal V$ has a Hamiltonian $H$, and an increasing filtration $F$ such that $\mathrm{gr}_F\mathcal V$ is commutative, and an $H$-invariant subspace $U$ of $\mathcal V$ such that its image $\overline U$ in $\mathrm{gr}_F\mathcal V$ generate a PBW basis of $\mathrm{gr}_F\mathcal V$. Then $U(\mathcal V)\to \mathfrak{U}(\mathcal V)$ is injective.   
\end{proposition}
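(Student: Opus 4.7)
The plan is to introduce compatible filtrations on $U(\mathcal V)$ and $\mathfrak U(\mathcal V)$ induced from $F$, and deduce injectivity from a matching statement at the associated-graded level where everything commutes. Concretely, I would set $F_k U(\mathcal V)$ to be the span of multilocal operators $\mathcal O(A_1,\ldots,A_m;f)$ with $\sum_i \deg_F A_i \le k$, and $F_k \mathfrak U(\mathcal V)$ to be the topological closure of the span of monomials in modes of total $F$-degree $\le k$. The hypothesis that $\mathrm{gr}_F\mathcal V$ is commutative guarantees that $A_{(j)}B$ for $j\ge 0$ lies in a strictly smaller filtration step than $\deg_F A+\deg_F B$, so by the commutator relation \eqref{eqn: U(V) commutation relation} the natural map $U(\mathcal V)\to\mathfrak U(\mathcal V)$ is filtration-preserving.

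Next I would exhibit an explicit spanning set of $\mathrm{gr}_F U(\mathcal V)$ in canonical form. Using \eqref{eqn: U(V) unity relation} to eliminate vacuum insertions, \eqref{eqn: U(V) translation relation} to trade $T$-derivatives for $z$-derivatives of the test function, commutativity modulo lower filtration to reorder arguments, and Laurent expansion to rewrite any rational test function as a linear combination of pure monomials $\prod_j z_j^{n_j}$, every element of $\mathrm{gr}_F U(\mathcal V)$ can be written as a linear combination of canonical forms $\mathcal O(a^{(i_1)},\ldots,a^{(i_k)};\prod_j z_j^{n_j})$, where $\{a^{(i)}\}$ is the chosen homogeneous basis of $U$ and $(i_j,n_j)$ lies in some fixed lexicographic order. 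Under the induced map to $\mathrm{gr}_F\mathfrak U(\mathcal V)$, such a canonical form is sent to the ordered mode monomial $a^{(i_1)}_{[n_1]}\cdots a^{(i_k)}_{[n_k]}$.

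The main step, and the primary obstacle, will be showing that these ordered mode monomials are linearly independent in the target. The associated graded $\mathrm{gr}_F\mathfrak U(\mathcal V)$ is naturally identified with the mode algebra of the commutative vertex algebra $\mathrm{gr}_F\mathcal V$; since all modes commute there, this is a degreewise completion of a polynomial ring, and the PBW hypothesis on $\overline U$ ensures that the ordered monomials in $a^{(i)}_{[n]}$ form a topological basis. Linear independence of the images then forces the canonical forms themselves to be linearly independent in $\mathrm{gr}_F U(\mathcal V)$, since any nontrivial linear relation among canonical forms would, upon applying the map, contradict independence of the target basis. This gives injectivity of the associated graded map. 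Finally, the filtration $F_\bullet U(\mathcal V)$ is exhaustive and separated (the latter following from the bounded-below property of $F_\bullet\mathcal V$ within each $H$-weight space, which forces any nonzero element of $U(\mathcal V)$ to have a well-defined top $F$-degree), so by the standard filtered-to-graded lifting principle, injectivity at the graded level implies injectivity of $U(\mathcal V)\to\mathfrak U(\mathcal V)$.
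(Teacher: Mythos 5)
Your filtration on $U(\mathcal V)$ by the total $F$-degree of the insertions, and your identification of $\mathrm{gr}_F\mathfrak{U}(\mathcal V)$ with a completed symmetric algebra on $\overline U\otimes\mathbb C[t,t^{-1}]$ with ordered monomials as a topological basis, both match the paper. But the proof breaks at the ``canonical form'' step: you cannot rewrite an arbitrary rational test function as a linear combination of pure monomials $\prod_j z_j^{n_j}$ \emph{inside} $U(\mathcal V)$. Elements of $U(\mathcal V)$ are by definition finite linear combinations of $\mathcal O(A_1,\dots,A_m;f)$ with $f\in\mathbb C[z_i^{\pm1},(z_i-z_j)^{-1}]$, and a test function with a pole on a diagonal has an infinite Laurent expansion; its image in $\mathfrak{U}(\mathcal V)$ is an infinite, only topologically convergent sum of ordered mode monomials, e.g.\ $\mathcal O(A,B;z_1/(z_1-z_2))\mapsto\sum_{n\ge0}A_{[-n]}B_{[n]}$. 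None of the defining relations \eqref{eqn: U(V) unity relation}--\eqref{eqn: U(V) commutation relation} strips diagonal poles: the reordering relation only trades the difference of two orderings for insertions of $A_{(k)}B$; it does not turn $(z_1-z_2)^{-1}$ into a Laurent polynomial. So your proposed spanning set omits exactly the elements for which injectivity is nontrivial --- and these are the whole point of the restricted mode algebra (cf.\ $\Psi_\infty(\mathsf t_{2,0})$ in \eqref{eqn: image of t[2,0] in restricted modes}). If the reduction to monomial test functions were available, the proposition would be essentially immediate.

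What is actually needed, and what the paper supplies, is the following. After symmetrizing, $\mathrm{gr}_FU(\mathcal V)$ is spanned by the images of $(\overline U^{\otimes n}\otimes\mathbb C[z_i^{\pm1},(z_i-z_j)^{-1}])^{\mathfrak S_n}$, and the composite into the truncation $\mathrm{gr}_F\mathfrak U_N(\mathcal V)\cong\mathbb S_N(\overline U)$ sends a rational coefficient function $f$ to the sum over all orderings $|z_{\sigma(1)}|>\cdots>|z_{\sigma(n)}|$ of its Laurent expansions, cut off at mode degree $\le N$. The key lemma is that if these truncated symmetrized expansions vanish for every $N$, then each ordered expansion of $f$ is bounded, which forces $f\in\mathbb C[z_i^{\pm1}]$ (no diagonal poles), whence $f$ coincides with its own truncation for $N$ large and therefore $f=0$. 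This input about rational functions and their ordered expansions is the heart of the argument and is entirely absent from your proposal; without it, the step from ``ordered monomials form a topological basis of the completion'' to ``the map is injective on finite combinations of multilocal operators with diagonal poles'' does not follow.
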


\begin{proof}[Sketch of Proof]
Consider the linear map
\begin{align*}
    \rho:\bigoplus_{n\ge 1}(\overline U^{\otimes n}\otimes &\mathbb C[z_i^{\pm 1},(z_i-z_j)^{-1}\:|\: 1\le i,j\le n])^{\mathfrak{S}_n}\to \mathrm{gr}_FU(\mathcal V)\\
    A_1\otimes \cdots\otimes &A_n\otimes f\mapsto \frac{1}{n!}\sum_{\sigma\in\mathfrak{S}_n}\mathcal O(A_{\sigma(1)},\cdots,A_{\sigma(n)};f(z_{\sigma(1)},\cdots,z_{\sigma(n)})),
\end{align*}
where the filtration on $U(\mathcal V)$ is defined by requiring $F_pU(\mathcal V)$ to be spanned by those $\mathcal O(A_1,\cdots,A_n;f)$ such that $A_i\in F_{p_i}\mathcal V$ and $p_1+\cdots+p_n\le p$. $\rho$ is surjective because $\mathrm{gr}_F\mathcal V$ is strongly generated by $\overline U$. We claim that $\rho$ is also injective. The composition of $\rho$ with the natural map $\mathrm{gr}_FU(\mathcal V)\to \mathrm{gr}_F\mathfrak U_N(\mathcal V)$ factors though the natural map 
\begin{align*}
    \pi_N:\bigoplus_{n\ge 1}(\overline U^{\otimes n}\otimes &\mathbb C[z_i^{\pm 1},(z_i-z_j)^{-1}\:|\: 1\le i,j\le n])^{\mathfrak{S}_n}\to \mathbb S_N(\overline U),
\end{align*}
defined by summing over all permutations $\sigma\in \mathfrak{S}_n$ the expansion of a function $f\in \mathbb C[z_i^{\pm 1},(z_i-z_j)^{-1}\:|\: 1\le i,j\le n]$ in the order $|z_{\sigma(1)}|>\cdots >|z_{\sigma(n)}|$, cut-off at order $\le N$ for all variables, then divide by $n!$. For example, 
\begin{align*}
    \pi_N(A\otimes B\otimes \frac{1}{z_1-z_2})=\frac{1}{2}\sum_{\substack{m,n\le N\\m+n=-1}}A_{[m]}B_{[n]}.
\end{align*}
Since $\mathbb S_N(\overline U)\to \mathrm{gr}_F\mathfrak U_N(\mathcal V)$ is isomorphism \cite[Theorem 3.14.1]{arakawa2007representation}, to show that $\rho$ is injective, it suffices to show that $\underset{\substack{\longleftarrow\\N}}{\lim}\:\pi_N$ is injective. To this end, we fix a basis of $\overline U$ and it is enough to show that if summing over all permutations $\sigma\in \mathfrak{S}_n$ the expansion of a function $f\in \mathbb C[z_i^{\pm 1},(z_i-z_j)^{-1}\:|\: 1\le i,j\le n]$ in the order $|z_{\sigma(1)}|>\cdots >|z_{\sigma(n)}|$, cut-off at order $\le N$ for all variables, vanishes, then $f$ is identically zero. In fact, the sum over expansion being trivial implies that the expansion of $f$ in arbitrary order $|z_{\sigma(1)}|>\cdots >|z_{\sigma(n)}|$ is bounded, i.e. $f$ is in the subspace $\mathbb C[z_i^{\pm 1}\:|\: 1\le i\le n]$, thus $\pi_N(f)=f$ for sufficiently large $N$, therefore $f=0$. This concludes the proof.
\end{proof}

\begin{remark}
    The technical assumption in Proposition \ref{prop: U(V) is a sub of mode algebra} is satisfied for a wide range of vertex algebras, including the rectangular W-algebra $\mathcal W^{(K)}_L$ \cite{arakawa2007representation}.
\end{remark}

The linear map $\mathfrak{L}(\mathcal V)\to U(\mathcal V)$ by sending $A_{[n]}$ to $\mathcal O(A;x^n)$ preserves the Lie bracket, and we will denote $\mathcal O(A;x^n)$ by $A_{[n]}$. Moreover, we set
\begin{align}
    \mathcal O(\emptyset;f)=f\in \mathbb C.
\end{align}

\subsection{Meromorphic coproduct of \texorpdfstring{$U(\mathcal V)$}{U(V)}}\label{subsec: meromorphic coproduct of mode algebra}
Recall that there is a morphism:
\begin{align*}
    \mathbb C^{\times I}_{\mathrm{disj}}\times \mathbb C^{ J}_{\mathrm{disj}}\times \Spec \mathbb C(\!(w^{-1})\!)\to \mathbb C^{\times (I\sqcup J)}_{\mathrm{disj}},
\end{align*}
which is induced from 
\begin{align*}
    z_i\mapsto z_i,\; i\in I,\quad z_j\mapsto z_j+w,\;j\in J,
\end{align*}
and we write the corresponding map on the function ring as:
\begin{align}
    \Delta_{IJ}(f)=\sum_{s\in \mathbb Z} \Delta_{IJ}^{(s)}(f)\otimes \prescript{}{I}\Delta_{J}^{(s)}(f) w^{-s}.
\end{align}
The RHS is an abbreviation form, for a fixed $s$ there is a finite sum of terms $\Delta_{IJ}^{(s)}(f)_i\otimes \prescript{}{I}\Delta_{J}^{(s)}(f)_i$, indexed by $i$, but to simplify the notation, we omit $i$ and only keep the form of the summand.

We define a linear map $\Delta_{\mathcal V}(w):F(\mathcal V)\to {U}(\mathcal V)\otimes {U}_{+}(\mathcal V) (\!(w^{-1})\!)$ by
\begin{equation}\label{eqn: meromorphic coproduct of U(V)}
\boxed{
    \begin{aligned}
        &\Delta_{\mathcal V}(w)(\mathcal O(A_1,\cdots,A_n;f))=\\
        &\sum_{\substack{I=(i_1,\cdots,i_m)\\J= (j_1,\cdots,j_{n-m})\\ I\sqcup J\in\mathrm{shuffle}(1,\cdots,n)}}\sum_{s\in \mathbb Z}\mathcal O(A_{i_1},\cdots,A_{i_m};\Delta^{(s)}_{IJ}(f))\otimes \mathcal O(A_{j_1},\cdots,A_{j_{n-m}};\prescript{}{I}\Delta^{(s)}_{J}(f))w^{-s},
    \end{aligned}
}
\end{equation}
Here $I$ or $J$ can be empty set. For example,
\begin{align*}
    \Delta_{\mathcal V}(w)(A_{[n]})=\Delta_{\mathcal V}(w)(\mathcal O(A;z_1^n))=A_{[n]}\otimes 1+\sum_{s\ge 0}\binom{n}{s}  w^{n-s} 1\otimes A_{[s]},
\end{align*}
\begin{align*}
   \Delta_{\mathcal V}(w)(\mathcal O(A,B;(z_1-z_2)^{-1}))=\square(\mathcal O(A,B;(z_1-z_2)^{-1}))+\\
   +\sum_{m,n= 0}^{\infty} (-1)^m \binom{m+n}{n}  w^{-m-n-1} (B_{[n]}\otimes A_{[m]}-A_{[n]}\otimes B_{[m]}).
\end{align*}

\begin{lemma}\label{lem: meromorphic coproduct for restricted modes}
The linear map $\Delta_{\mathcal V}(w)$ factors through the restricted mode algebra $U(\mathcal V)$. Moreover $\Delta_{\mathcal V}(w):U(\mathcal V)\to {U}(\mathcal V)\otimes {U}_{+}(\mathcal V) (\!(w^{-1})\!)$ is unital and associative, i.e. it is an algebra homomorphism.
\end{lemma}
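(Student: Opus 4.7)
The plan is to verify three things: that $\Delta_{\mathcal V}(w)$ descends from $F(\mathcal V)$ to $U(\mathcal V)$ (i.e., annihilates each of the three families of relations \eqref{eqn: U(V) unity relation}--\eqref{eqn: U(V) commutation relation}), that it is multiplicative with respect to the concatenation product \eqref{eqn: multiplication on U(V)}, and that it preserves the unit $\mathcal O(|0\rangle;z_1^{-1})$. The geometric picture guiding every step is that \eqref{eqn: meromorphic coproduct of U(V)} is the algebraic shadow of the shuffle-summed map on configuration spaces $\mathbb{C}^{\times I}_{\mathrm{disj}}\times \mathbb{C}^{\times J}_{\mathrm{disj}}\times \mathrm{Spec}\,\mathbb{C}(\!(w^{-1})\!)\to \mathbb{C}^{\times(I\sqcup J)}_{\mathrm{disj}}$ sending $(z_I,z_J,w)\mapsto (z_I,z_J+w)$. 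Because each defining relation of $U(\mathcal V)$ is local in configuration space (imposed at a single marked point for \eqref{eqn: U(V) unity relation}--\eqref{eqn: U(V) translation relation} and at a pair of colliding points for \eqref{eqn: U(V) commutation relation}), and because the shuffle decomposition deals with each position independently, every verification reduces to a local computation on the function ring $\mathbb{C}[z_i^{\pm},(z_i-z_j)^{-1}]$ combined with one vertex algebra identity. I would also record that the $J$-component indeed lies in $U_+(\mathcal V)$: the substitution $z_j\mapsto z_j+w$ turns every $1/z_j$ into $\sum_{s\ge 0}(-z_j)^s w^{-s-1}$ and leaves $(z_{j}-z_{j'})^{-1}$ untouched, so after expansion the $J$-functions are polynomial in the $z_j$'s up to $(z_j-z_{j'})^{-1}$.

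For multiplicativity, every shuffle of $\{1,\dots,m+n\}$ factors canonically as a shuffle of $\{1,\dots,m\}$ and a shuffle of $\{1,\dots,n\}$, and $\Delta_{I\sqcup I', J\sqcup J'}(fg)=\Delta_{IJ}(f)\Delta_{I'J'}(g)$ because the coordinate change is a ring homomorphism; collecting the $w^{-s}$ coefficients reproduces the convolution product $\Delta_{\mathcal V}(w)(x)\cdot \Delta_{\mathcal V}(w)(y)$ inside $U(\mathcal V)\otimes U_+(\mathcal V)(\!(w^{-1})\!)$. Unitality follows from a direct computation on $\mathcal O(|0\rangle;z_1^{-1})$: the shuffle $I=\{1\}$ contributes $1\otimes 1$, while the shuffle $J=\{1\}$ contributes $\sum_{s\ge 0}w^{-s-1}\cdot 1\otimes \mathcal O(|0\rangle;(-z_1)^s)$, which is killed by \eqref{eqn: U(V) unity relation} because $\mathrm{Res}_{z_1\to 0}(-z_1)^s=0$ for $s\ge 0$. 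The unit relation \eqref{eqn: U(V) unity relation} is preserved because taking $\mathrm{Res}_{z_i\to 0}$ commutes with the substitution $z_j\mapsto z_j+w$ for $j\neq i$: if $i\in I$, the residue is taken on the $I$-side after expansion, if $i\in J$, it is taken on the $J$-side after the shift, and the shuffle sum over the remaining positions on either side exactly matches $\Delta_{\mathcal V}(w)$ applied to the reduced configuration. The translation relation \eqref{eqn: U(V) translation relation} works the same way with $\mathrm{Res}_{z_i\to 0}$ replaced by $\partial_{z_i}$, using that $\partial_{z_i}$ commutes with shifts and with the $w^{-1}$-expansion.

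The main obstacle is the commutation relation \eqref{eqn: U(V) commutation relation}, since the swap of $A_i$ and $A_{i+1}$ interacts nontrivially with how a shuffle distributes positions $i$ and $i+1$. The case analysis splits into four buckets according to whether each of $i,i+1$ lands in $I$ or $J$. In the buckets $(I,I)$ and $(J,J)$ the variables $z_i,z_{i+1}$ end up on the same side of the coproduct, the singular expansion of $g$ in $(z_i-z_{i+1})$ survives the substitution unchanged, and the vertex algebra commutation relation applied to that side yields precisely the shuffle-summed image of $\sum_k \mathcal O(\dots,A_{(k)}B,\dots;f_k)$. The delicate buckets are $(I,J)$ and $(J,I)$: here $z_i$ and $z_{i+1}$ are separated by $w$, so $(z_i-z_{i+1})^k$ expands as $(z_i-z_{i+1}\mp w)^k=\sum_{\ell}\binom{k}{\ell}(z_i-z_{i+1})^{k-\ell}(\mp w)^\ell$, which is regular as $z_i\to z_{i+1}$; the contribution of $(I,J)$ shuffles of the first term in \eqref{eqn: U(V) commutation relation} pairs up with the contribution of $(J,I)$ shuffles of the second term (a relabeling exchanges them), and I expect they cancel exactly because the OPE singularities $A_{(k\ge 0)}B$ produce no $w$-dependent contribution on this side. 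The essential calculation is a generating function identity between the two opposite $w$-expansions of $(z_i-z_{i+1})^k$; once this is established, combining the four buckets matches the two sides of \eqref{eqn: U(V) commutation relation} after applying $\Delta_{\mathcal V}(w)$. With all three relations verified and multiplicativity/unitality in hand, the lemma follows.
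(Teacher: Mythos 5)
Your proof follows essentially the same route as the paper's: the translation and commutation relations are checked through the behaviour of $\Delta_{IJ}$ on $\partial_{z_i}$ and on $(z_i-z_{i+1})^k$, with the same four-bucket case analysis and the same cancellation of the mixed $(I,J)$/$(J,I)$ contributions, and multiplicativity and unitality are handled by the same formal and direct computations. The one clarification worth recording is that the mixed-bucket cancellation requires no generating-function identity: after relabeling, the $(I,J)$ contribution of the first term and the $(J,I)$ contribution of the second are literally the same element of $U(\mathcal V)\otimes U_+(\mathcal V)(\!(w^{-1})\!)$, since both place $A$ on the left with variable $z_i$ and $B$ on the right with variable $z_{i+1}+w$ and carry the identical factor $(z_i-z_{i+1}-w)^k$ expanded in $w^{-1}$, so they cancel term by term in the difference.
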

\begin{proof}
The first statement is about checking relations \eqref{eqn: U(V) translation relation} and \eqref{eqn: U(V) commutation relation}. Equation \eqref{eqn: U(V) translation relation} follows from the identity $\partial_{z_i}\Delta_{IJ}(f)=\Delta_{IJ}(\partial_{z_i}f)$. Equation \eqref{eqn: U(V) commutation relation} follows from the identity:
\begin{align*}
    \Delta_{IJ}((z_i-z_{i+1})^k&f_k(z_1,\cdots,z_{i-1},z_{i+1},\cdots, z_n))=\\
    &\begin{cases}
        (z_i-z_{i+1})^k\Delta_{IJ}(f_k), & i,i+1\in I \text{ or }i,i+1\in J,\\
        (z_i-z_{i+1}-w)^k\Delta_{IJ}(f_k), & i\in I\text{ and }i+1\in J,\\
        (z_i+w-z_{i+1})^k\Delta_{IJ}(f_k), & i\in J\text{ and }i+1\in I.
    \end{cases}
\end{align*}
Note that the cases when $i$ and $i+1$ are in the different index groups cancel in the difference $\mathcal O(\cdots A,B\cdots;g(\cdots z_{i},z_{i+1}\cdots))
        -\mathcal O(\cdots B,A\cdots;g(\cdots z_{i+1},z_i\cdots))$.

The second statement is formal: the associativity of $\Delta_{\mathcal V}(w)$ comes from the associativity of $\Delta_{IJ}$ and the definition of $\Delta_{\mathcal V}(w)$; $\Delta_{\mathcal V}(w)$ is unital since 
\begin{align*}
    \Delta_{\mathcal V}(w)(|0\rangle_{[-1]})=|0\rangle_{[-1]}\otimes 1+ \sum_{s\ge 0} (-1)^s w^{-s-1} 1\otimes |0\rangle_{[s]}=|0\rangle_{[-1]}\otimes 1=1\otimes 1.
\end{align*}
\end{proof}

\begin{remark}
    It is easy to see that $\Delta_{\mathcal V}(w)$ maps $U_+(\mathcal V)$ to ${U}_{+}(\mathcal V)\otimes {U}_+(\mathcal V) (\!(w^{-1})\!)$.
\end{remark}

The following proposition is obvious from the construction of the meromorphic coproduct.
\begin{proposition}\label{prop: functoriality of meromorphic coproduct}
The vertex algebra meromorphic coproduct $\Delta_{\mathcal V}(w)$ is functorial, it commutes with the vertex algebra morphism $\varphi:\mathcal V_1\to \mathcal V_2$, i.e. 
\begin{equation}
    (\varphi\otimes\varphi)\circ \Delta_{\mathcal V_1}(w)=\Delta_{\mathcal V_2}(w)\circ \varphi
\end{equation}
\end{proposition}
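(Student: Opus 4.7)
The plan is to reduce the statement to the trivial observation that the defining formula \eqref{eqn: meromorphic coproduct of U(V)} of $\Delta_{\mathcal V}(w)$ is natural in $\mathcal V$, since the coefficients $\Delta_{IJ}^{(s)}(f)$ and $\prescript{}{I}\Delta_J^{(s)}(f)$ depend only on $f$ and the combinatorial shuffle data, not on the vertex algebra structure.

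First I would verify that a vertex algebra morphism $\varphi:\mathcal V_1\to\mathcal V_2$ (which by definition sends $|0\rangle$ to $|0\rangle$, intertwines the translation operators, and satisfies $\varphi(A_{(n)}B)=\varphi(A)_{(n)}\varphi(B)$ for all $n\in\mathbb Z$) induces a well-defined $\mathbb C$-algebra homomorphism $U(\varphi):U(\mathcal V_1)\to U(\mathcal V_2)$, sending $\mathcal O(A_1,\dots,A_n;f)\mapsto \mathcal O(\varphi(A_1),\dots,\varphi(A_n);f)$, and restricting to $U_+(\mathcal V_1)\to U_+(\mathcal V_2)$. This is well-defined on the quotient defining $U(\mathcal V_1)$ because the three defining relations of Definition \ref{def: restricted mode alg} are each preserved by $\varphi$: relation \eqref{eqn: U(V) unity relation} by $\varphi(|0\rangle)=|0\rangle$, relation \eqref{eqn: U(V) translation relation} by $\varphi\circ T=T\circ\varphi$, and relation \eqref{eqn: U(V) commutation relation} by $\varphi(A_{(k)}B)=\varphi(A)_{(k)}\varphi(B)$. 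The multiplicativity of $U(\varphi)$ is immediate from the definition \eqref{eqn: multiplication on U(V)} of multiplication in $U(\mathcal V)$, which only concatenates vertex algebra entries and multiplies the function $f$.

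Having established this, it suffices to verify the identity on a spanning set of elements of the form $\mathcal O(A_1,\dots,A_n;f)\in U(\mathcal V_1)$. Applying $(\varphi\otimes\varphi)\circ\Delta_{\mathcal V_1}(w)$ to such an element gives, by \eqref{eqn: meromorphic coproduct of U(V)},
\begin{equation*}
\sum_{I\sqcup J}\sum_{s\in\mathbb Z}\mathcal O(\varphi(A_{i_1}),\dots,\varphi(A_{i_m});\Delta_{IJ}^{(s)}(f))\otimes \mathcal O(\varphi(A_{j_1}),\dots,\varphi(A_{j_{n-m}});\prescript{}{I}\Delta_J^{(s)}(f))w^{-s}.
\end{equation*}
Applying $\Delta_{\mathcal V_2}(w)\circ\varphi$ to the same element first replaces each $A_i$ by $\varphi(A_i)$ and then applies \eqref{eqn: meromorphic coproduct of U(V)} in $\mathcal V_2$, producing the identical expression.

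The argument presents no real obstacle: everything rests on the fact that the shuffle decomposition and the rational-function expansions defining $\Delta_{IJ}^{(s)}$, $\prescript{}{I}\Delta_J^{(s)}$ are independent of the target vertex algebra. The only nontrivial point to verify carefully is that $U(\varphi)$ descends to the quotient defining $U(\mathcal V_1)$, which as noted reduces to the three compatibility properties of a vertex algebra morphism.
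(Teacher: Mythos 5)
Your proof is correct and is precisely the reasoning the paper has in mind: the paper gives no argument beyond the remark that the proposition "is obvious from the construction of the meromorphic coproduct," and your write-up simply makes explicit why — the coefficients $\Delta_{IJ}^{(s)}(f)$, $\prescript{}{I}\Delta_J^{(s)}(f)$ in \eqref{eqn: meromorphic coproduct of U(V)} depend only on $f$ and the shuffle data, while the induced map $U(\varphi)$ is well-defined because $\varphi$ preserves the three defining relations of $U(\mathcal V)$. No gaps; this matches the paper's approach.
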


\subsection{Vertex coalgebra from a vertex algebra}
Consider the linear map $\mathfrak{C}_{\mathcal V}:U_+(\mathcal V)\to \mathbb C$ sending $\mathcal O(A_1,\cdots,A_n;f)$ to $0$ for $f\in \mathbb C[z_i,(z_i-z_j)^{-1}\:|\: 1\le i,j\le n]$ and sending the identity $\mathcal O(|0\rangle;x^{-1})$ to $1$. $\mathfrak{C}_{\mathcal V}$ is an algebra homomorphism.

\begin{proposition}\label{prop: vertex coalgebra from vertex algebra}
    $(U_+(\mathcal V),\Delta_{\mathcal V}(w),\mathfrak{C}_{\mathcal V})$ is a vertex coalgebra, and $(U(\mathcal V),\Delta_{\mathcal V})$ is a vertex comodule of $U_+(\mathcal V)$.
\end{proposition}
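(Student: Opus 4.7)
My plan is to verify each vertex coalgebra and vertex comodule axiom by unpacking the explicit shuffle formula \eqref{eqn: meromorphic coproduct of U(V)} for $\Delta_{\mathcal V}(w)$ and reducing the axioms to geometric statements about the underlying rational maps between configuration spaces $\mathbb C^{\times n}_{\mathrm{disj}}$. First I would dispatch the counit and cocreation axioms by direct inspection. The counit axiom $(\mathrm{id} \otimes \mathfrak{C}_{\mathcal V}) \circ \Delta_{\mathcal V}(w) = \mathrm{id}$ holds because $\mathfrak{C}_{\mathcal V}$ annihilates every term $\mathcal O(A_{j_1},\ldots,A_{j_{n-m}};\prescript{}{I}\Delta_J^{(s)}(f))$ of the shuffle sum except the one with $J = \emptyset$, in which case $\Delta_{I\emptyset}(f) = f$ and the surviving term returns the input. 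Dually $(\mathfrak{C}_{\mathcal V} \otimes \mathrm{id}) \circ \Delta_{\mathcal V}(w)$ picks out the $I = \emptyset$ contribution $1 \otimes \mathcal O(A_1,\ldots,A_n;f(z_1+w,\ldots,z_n+w))$; since $f \in \mathbb C[z_i,(z_i-z_j)^{-1}]$ the shift yields an element of $U_+(\mathcal V)[w]$ that reduces to the identity at $w=0$. The counit axiom for the comodule $(U(\mathcal V),\Delta_{\mathcal V}(w))$ is literally the same argument.

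For the translation axiom, I would first observe that $T := (\mathfrak{C}_{\mathcal V} \otimes \mathrm{id}) \circ \Delta_{-2}$ is a derivation of $U_+(\mathcal V)$: the map $(\mathfrak{C}_{\mathcal V} \otimes \mathrm{id}) \circ \Delta_{\mathcal V}(w)$ is an algebra homomorphism $U_+(\mathcal V) \to U_+(\mathcal V)[w]$ which specializes to $\mathrm{id}$ at $w=0$ by cocreation, so its first-order Taylor coefficient in $w$ is a derivation. A direct computation on the generator $A_{[n]} = \mathcal O(A;z^n)$ gives $T(A_{[n]}) = n A_{[n-1]}$, matching the derivative $\partial_z$ transported through the function factor, and a parallel check on $A_{[n]}$ verifies $\tfrac{d}{dw}\Delta_{\mathcal V}(w)(A_{[n]}) = \Delta_{\mathcal V}(w)(T A_{[n]}) - (T \otimes \mathrm{id})\Delta_{\mathcal V}(w)(A_{[n]})$. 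To promote this to arbitrary elements I would note that the defect
\begin{equation*}
E_w := \tfrac{d}{dw}\Delta_{\mathcal V}(w) - \Delta_{\mathcal V}(w) \circ T + (T \otimes \mathrm{id}) \circ \Delta_{\mathcal V}(w)
\end{equation*}
satisfies $E_w(ab) = E_w(a)\,\Delta_{\mathcal V}(w)(b) + \Delta_{\mathcal V}(w)(a)\,E_w(b)$, i.e. it is a derivation over the algebra morphism $\Delta_{\mathcal V}(w)$, so its vanishing on the generating set $\{A_{[n]}\}$ forces $E_w \equiv 0$ on all of $U_+(\mathcal V)$.

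The main obstacle, as expected, is locality for the coalgebra together with its cousin, coassociativity for the comodule. Both iterated composites $(\Delta_{\mathcal V}(w) \otimes \mathrm{id}) \circ \Delta_{\mathcal V}(z)$ and $(\mathrm{id} \otimes P) \circ (\Delta_{\mathcal V}(z) \otimes \mathrm{id}) \circ \Delta_{\mathcal V}(w)$ arise from the two orderings in which one builds a rational map from $\mathbb C^{\times I}_{\mathrm{disj}} \times \mathbb C^{\times J}_{\mathrm{disj}} \times \mathbb C^{\times K}_{\mathrm{disj}} \times \mathrm{Spec}\,\mathbb C(\!(z^{-1},w^{-1})\!)$ into $\mathbb C^{\times (I \sqcup J \sqcup K)}_{\mathrm{disj}}$, by shifting the $J$-block by $w$ and the $K$-block by $z$. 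By a direct adaptation of Lemma \ref{lemma: Locality} the two composites expand the same element of $\mathscr O(\mathbb C^{\times(I\sqcup J\sqcup K)}_{\mathrm{disj}})[\![z^{-1},w^{-1},(z-w)^{-1}]\!][z,w]$ at the level of the shuffle decompositions of the underlying function. Transporting this identity back to $U_+(\mathcal V)^{\otimes 3}$ via the definition \eqref{eqn: meromorphic coproduct of U(V)} gives locality. Coassociativity of the comodule structure on $U(\mathcal V)$ follows by the identical geometric argument applied to the associativity identity $(z_j \mapsto z_j + w$, $z_k \mapsto z_k + w + z)$ for the shift maps. The only delicate point is combinatorial bookkeeping in the shuffle sums, but once rephrased in terms of the underlying geometric composition the argument is the same clean-up performed in the proof of Lemma \ref{lemma: Locality}.
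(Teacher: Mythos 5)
Your treatment of the counit, cocreation, locality, and comodule-coassociativity axioms matches the paper's proof: both reduce these to the corresponding statements about the shift maps on configuration spaces, exactly as in Lemma \ref{lemma: Locality}, and then transport them through the shuffle formula \eqref{eqn: meromorphic coproduct of U(V)}. That part is fine.

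There is, however, a genuine gap in your argument for the translation axiom. You reduce the vanishing of the defect $E_w$ to its vanishing on the set $\{A_{[n]}\}$ of single modes, on the grounds that a derivation over the algebra morphism $\Delta_{\mathcal V}(w)$ is determined by its values on algebra generators. But $\{A_{[n]}\}$ does \emph{not} generate $U_+(\mathcal V)$ as an algebra. The multiplication \eqref{eqn: multiplication on U(V)} of single modes only produces elements $\mathcal O(A_1,\dots,A_n;f)$ with $f\in\mathbb C[z_1^{\pm},\dots,z_n^{\pm}]$, whereas $U_+(\mathcal V)$ contains genuinely multi-local elements such as $\mathcal O\bigl(A,B;(z_1-z_2)^{-1}\bigr)$, whose image in $\mathfrak U(\mathcal V)$ is the \emph{infinite} sum $\sum_{n\ge 0}A_{[-n-1]}B_{[n]}$; the relation \eqref{eqn: U(V) commutation relation} only identifies the difference of two orderings of such an element with a single-site term, and never expresses the element itself as a finite product of single modes. (Indeed, the entire point of the restricted mode algebra is to adjoin these elements as new generators rather than as completed infinite sums.) So your Leibniz argument establishes the translation axiom only on the proper subalgebra generated by single modes, not on all of $U_+(\mathcal V)$.

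The fix is to verify the axiom directly on an arbitrary basis element $\mathcal O(A_1,\dots,A_n;f)$ with $f\in\mathbb C[z_i,(z_i-z_j)^{-1}]$, which is what the paper does: one first computes, using \eqref{eqn: U(V) translation relation}, that $D\bigl(\mathcal O(A_1,\dots,A_n;f)\bigr)=\sum_{i=1}^n\mathcal O(A_1,\dots,A_n;\partial_{z_i}f)$, and then the identity
\begin{equation*}
\sum_{j\in J}\partial_{z_{j}}\,\Delta_{IJ}(f)=\frac{\mathrm{d}}{\mathrm{d}w}\Delta_{IJ}(f),
\end{equation*}
i.e.\ the chain rule applied to $f(z_I,z_J+w)$, shows that the derivatives in the second tensor factor account exactly for $\frac{\mathrm{d}}{\mathrm{d}w}\Delta_{\mathcal V}(w)$, term by term in the shuffle sum. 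This is no harder than your generator computation and closes the gap.
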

\begin{proof}
Counit axiom of $\Delta_{\mathcal V}(w)$ is checked as follows:
\begin{align*}
    (\mathrm{id}\otimes\mathfrak{C}_{\mathcal V})\circ \Delta_{\mathcal V}(w)(\mathcal O(A_1,\cdots,A_n;f))=\sum_{s\in \mathbb Z} \mathcal O(A_1,\cdots,A_n;\Delta_{(1\cdots n)\emptyset}^{(s)}(f))w^{-s}=\mathcal O(A_1,\cdots,A_n;f).
\end{align*}
Here $f$ is allowed to have poles at $z_i=0$, i.e. the counit axiom holds for both $U(\mathcal V)$ and $U_+(\mathcal V)$. 

The cocreation axiom for $U_+(\mathcal V)$ is checked similarly:
\begin{align*}
    (\mathfrak{C}_{\mathcal V}\otimes\mathrm{id})\circ \Delta_{\mathcal V}(w)(\mathcal O(A_1,\cdots,A_n;f))=\sum_{s\in \mathbb Z} \mathcal O(A_1,\cdots,A_n;\prescript{}{\emptyset}\Delta_{(1\cdots n)}^{(s)}(f))w^{-s},
\end{align*}
where $f\in \mathbb C[z_i,(z_i-z_j)^{-1}\:|\: 1\le i,j\le n]$. Since $\prescript{}{\emptyset}\Delta_{(1\cdots n)}^{(s)}(f)$ is zero for $s>0$, and $\prescript{}{\emptyset}\Delta_{(1\cdots n)}^{(0)}(f)=f$, the cocreation axiom holds.

The translation operator $D=(\mathfrak{C}_{\mathcal V}\otimes\mathrm{id})\circ \Delta_{\mathcal V,-2}$ for $U_+(\mathcal V)$ is the sum of derivatives:
\begin{equation}
    \begin{split}
        &D(\mathcal O(A_1,\cdots,A_n;f))=\frac{\mathrm{d}}{\mathrm{d}w}\bigg|_{w=0}\mathcal O(A_1,\cdots,A_n;f(z_1+w,\cdots,z_n+w))\\
        &~=\sum_{i=1}^n \mathcal O(A_1,\cdots,A_n;\partial_{z_i}f)=-\sum_{i=1}^n \mathcal O(A_1,\cdots,TA_i,\cdots,A_n;f).
    \end{split}
\end{equation}
So we have
\begin{align*}
    &(\Delta_{\mathcal V}(w)\circ D-(D\otimes\mathrm{id})\circ \Delta_{\mathcal V}(w))(\mathcal O(A_1,\cdots,A_n;f))\\
    &~=\sum_{IJ}\sum_{s\in \mathbb Z}\sum_{l=1}^{n-m}\mathcal O(A_{i_1},\cdots,A_{i_m};\Delta^{(s)}_{IJ}(f))\otimes \mathcal O(A_{j_1},\cdots,A_{j_{n-m}};\partial_{z_{j_l}}\prescript{}{I}\Delta^{(s)}_{J}(f))w^{-s}.
\end{align*}
Use the identity $\sum_{l=1}^m \partial_{z_{i_l}}\Delta_{IJ}(f)=\frac{\mathrm{d}}{\mathrm{d}w}\Delta_{IJ}(f)$, and we get the translation axiom:
\begin{align*}
    (\Delta_{\mathcal V}(w)\circ D-(D\otimes\mathrm{id})\circ \Delta_{\mathcal V}(w))(\mathcal O(A_1,\cdots,A_n;f))=\frac{\mathrm{d}}{\mathrm{d}w}\Delta_{\mathcal V}(w)(\mathcal O(A_1,\cdots,A_n;f)).
\end{align*}
The proof of the locality for $U_+(\mathcal V)$ is formal. In fact we have
\begin{align*}
    &(\Delta_{\mathcal V}(w)\otimes\mathrm{id})\circ \Delta_{\mathcal V}(z)(\mathcal O(A_1,\cdots,A_n;f))=\\
    &\sum_{I\sqcup J\sqcup K\in\mathrm{shuffle}(1,\cdots,n)}\sum_{t\in \mathbb Z}\sum_{s\in \mathbb Z}\mathcal O(A_I;\Delta^{(s)}_{IJK}(f))\otimes \mathcal O(A_J;\prescript{}{I}\Delta^{(t)}_{JK}(f))\otimes \mathcal O(A_K;\prescript{}{IJ}\Delta^{(s,t)}_{K}(f)) z^{-s}w^{-t},
\end{align*}
where $A_I$ is the abbreviation of $A_{i_1},\cdots,A_{i_{|I|}}$, and
\begin{align*}
    \Delta_{IJK}(f)=\sum_{t\in \mathbb Z}\sum_{s\in \mathbb Z}\Delta^{(s)}_{IJK}(f)\otimes \prescript{}{I}\Delta^{(t)}_{JK}(f)\otimes \prescript{}{IJ}\Delta^{(s,t)}_{K}(f) z^{-s} w^{-t}
\end{align*}
is the expansion of $f(z_I,z_J+w,z_K+z)$ in $\mathscr O(\mathbb C^I_{\mathrm{disj}}\times \mathbb C^J_{\mathrm{disj}}\times \mathbb C^K_{\mathrm{disj}})(\!(z^{-1})\!)(\!(w^{-1})\!)$. And similarly
\begin{align*}
    &(\mathrm{id}\otimes P)\circ(\Delta_{\mathcal V}(z)\otimes \mathrm{id})\circ\Delta_{\mathcal V}(w)(\mathcal O(A_1,\cdots,A_n;f))=\\
    &\sum_{I\sqcup J\sqcup K\in\mathrm{shuffle}(1,\cdots,n)}\sum_{s\in \mathbb Z}\sum_{t\in \mathbb Z}\mathcal O(A_I;\widetilde\Delta^{(s)}_{IJK}(f))\otimes \mathcal O(A_J;\prescript{}{I}{\widetilde{\Delta}}^{(t)}_{JK}(f))\otimes \mathcal O(A_K;\prescript{}{IJ}{\widetilde{\Delta}}^{(s,t)}_{K}(f)) w^{-t}z^{-s},
\end{align*}
where
\begin{align*}
    \widetilde\Delta_{IJK}(f)=\sum_{s\in \mathbb Z}\sum_{t\in \mathbb Z}{\widetilde{\Delta}}^{(s)}_{IJK}(f)\otimes \prescript{}{I}{\widetilde{\Delta}}^{(t)}_{JK}(f)\otimes \prescript{}{IJ}{\widetilde{\Delta}}^{(s,t)}_{K}(f) w^{-t} z^{-s} 
\end{align*}
is the expansion of $f(z_I,z_J+w,z_K+z)$ in $\mathscr O(\mathbb C^I_{\mathrm{disj}}\times \mathbb C^J_{\mathrm{disj}}\times \mathbb C^K_{\mathrm{disj}})(\!(w^{-1})\!)(\!(z^{-1})\!)$. Therefore $(\Delta_{\mathcal V}(w)\otimes\mathrm{id})\circ \Delta_{\mathcal V}(z)(\mathcal O(A_1,\cdots,A_n;f))$ and $(\mathrm{id}\otimes P)\circ(\Delta_{\mathcal V}(z)\otimes \mathrm{id})\circ\Delta_{\mathcal V}(w)(\mathcal O(A_1,\cdots,A_n;f))$ are the expansions of the same element in $U_{+}(\mathcal V)^{\otimes 3}[\![z^{-1},w^{-1},(z-w)^{-1}]\!][z,w]$. This prove the locality axiom for $U_+(\mathcal V)$. The coassociativity for $U(\mathcal V)$ is proven similarly, and we shall omit it.
\end{proof}

\subsection{Ring of differential operators valued in restricted modes}
In this subsection, we keep using the notation $\mathcal V$ to denote a $\mathbb Z$-graded vertex algebra. We fix a finite index set $I$.

Let $F(I;\mathcal V)$ be the $\mathbb C$-linear space 
\begin{align}
    F(I;\mathcal V)=\bigoplus_{n> 0}\mathcal V^{\otimes n}\otimes D_I(\mathbb C^{\times (I\sqcup\{1,\cdots,n\})}_{\mathrm{disj}}),
\end{align}
where $D_I(\mathbb C^{\times (I\sqcup\{1,\cdots,n\})}_{\mathrm{disj}})$ is the subalgebra of the ring of differential operators $D(\mathbb C^{\times (I\sqcup\{1,\cdots,n\})}_{\mathrm{disj}})$ such that only the derivatives in the index set $I$ appears. Our convention for coordinates on $\mathbb C^{\times (I\sqcup\{1,\cdots,n\})}$ is such that for index $\alpha\in I$ we use $x_{\alpha}$ and for index $i\in \{1,\cdots,n\}$ we use $z_i$. Define the multiplication on $F(I;\mathcal V)$ by
\begin{align}\label{eqn: multiplication on D(V)}
    \mathcal O(A_1,\cdots,A_m;f)\cdot \mathcal O(B_1,\cdots,B_n;g)=\mathcal O(A_1,\cdots,A_m, B_1,\cdots,B_n;fg),
\end{align}
so that $F(I;\mathcal V)$ acquires a non-unital associative algebra structure.
\begin{definition}\label{def: differential operators valued in restricted modes}
The ring of differential operators on $\mathbb C^{\times I}_{\mathrm{disj}}$ valued in restricted modes of $\mathcal V$, denoted by $D(\mathbb C^{\times I}_{\mathrm{disj}};\mathcal V)$ is the quotient of $F(I;\mathcal V)$ by the linear space spanned by
\begin{equation}\label{eqn: D(V) unity relation}
    \mathcal O(A_1,\cdots,A_{i-1},|0\rangle,A_{i+1},\cdots,A_m;f)-\mathcal O(A_1,\cdots,A_{i-1},A_{i+1},\cdots,A_m;\underset{z_i\to 0}{\Res}f),
\end{equation}
\begin{equation}\label{eqn: D(V) translation relation}
    \mathcal O(A_1,\cdots,TA_i,\cdots,A_m;f)+\mathcal O(A_1,\cdots,A_i,\cdots,A_m;\partial_{z_i}f),
\end{equation}
\begin{equation}\label{eqn: D(V) commutation relation}
    \begin{split}
        &\mathcal O(A_1,\cdots,A_{i-1},A,B,A_{i+2},\cdots,A_{n};g(z_1,\cdots, z_{i},z_{i+1},\cdots,z_n))\\
        -&\mathcal O(A_1,\cdots,A_{i-1},B,A,A_{i+2},\cdots,A_{n};g(z_1,\cdots,z_{i+1},z_i,\cdots,z_n))\\
        -&\sum_{k\in \mathbb Z}\mathcal O(A_1,\cdots,A_{i-1},A_{(k)}B,A_{i+2},\cdots, A_n;f_k),
    \end{split}
\end{equation}
for all $k\ge -1$ and $g\in D_I(\mathbb C^{\times (I\sqcup\{1,\cdots,n\})}_{\mathrm{disj}})$, here $f_k$ are the coefficients of $(z_i-z_{i+1})^k$ in the expansion
\begin{align*}
    g(z_1,\cdots,z_i,z_{i+1},\cdots,z_n)=\sum_{k\in \mathbb Z} (z_i-z_{i+1})^kf_k(z_1,\cdots,z_{i-1},z_{i+1},\cdots,z_n),
\end{align*}
where the RHS is in the space $D_I(\mathbb C^{\times (I\sqcup\{1,\cdots,i-1,i+1,\cdots,n\})}_{\mathrm{disj}})(\!(z_i-z_{i+1})\!)$, i.e. expanded in the limit $z_i-z_{i+1}\to 0$, in particular the power series is bounded from below. The summation in \eqref{eqn: D(V) commutation relation} is bounded because $A_{(k)}B$ vanishes for the sufficiently large $k$. 
\end{definition}

$D(\mathbb C^{\times I}_{\mathrm{disj}};\mathcal V)$ is unital since 
\begin{align*}
    \mathcal O(|0\rangle;z^{-1})\mathcal O(A_1,\cdots,A_n;f)=\mathcal O(A_1,\cdots,A_n;f)\mathcal O(|0\rangle;z^{-1})=\mathcal O(A_1,\cdots,A_n;f),
\end{align*}
which is a consequence of \eqref{eqn: D(V) unity relation}. 

Note that $D(\mathbb C^{\times I}_{\mathrm{disj}};\mathcal V)$ inherits a $\mathbb Z$-grading from $\mathcal V$ and the ring of differential operators such that $\deg \mathcal O(A_1,\cdots,A_n;f)$ for a homogeneous differential operator $f$ is $\deg A_1+\cdots+\deg A_n+\deg f+n$, where the degree of a differential operator is such that $\deg x_{\alpha}=\deg z_i=1,\deg\partial_{x_{\alpha}}=-1$ for all $\alpha\in I,i\in \{1,\cdots,n\}$. 

There is a $\mathbb Z$-graded algebra homomorphism $D(\mathbb C^{\times I}_{\mathrm{disj}};\mathcal V)\to D(\mathbb C^{\times I}_{\mathrm{disj}})\widetilde\otimes\mathfrak{U}(\mathcal V)$ given by
\begin{align}\label{eqn: from D(V) to completed tensor product}
     \mathcal O(A_1,\cdots,A_m;f)\mapsto\oint_{|z_1|>\cdots>|z_m|>|x_{\alpha}|}f(z_1,\cdots,z_m)A_1[z_1]\cdots A_m[z_m]\prod_{j=1}^m\frac{d z_j}{2\pi i },
\end{align}
For example
\begin{align*}
    \mathcal O\left(A;\frac{1}{z_1-x_\alpha}\partial_{x_{\beta}}\right)&\mapsto \sum_{n=0}^{\infty}x_{\alpha}^n\partial_{x_{\beta}}\otimes A_{[-n-1]},\\
    \mathcal O\left(A,B;\frac{z_1z_2}{(z_1-z_2)^2(z_1-x_{\alpha})}\right)&\mapsto \sum_{m=0}^{\infty}x_{\alpha}^m\otimes\left(\sum_{n=1}^{\infty}n A_{[-n-m-1]}B_{[n]}\right).
\end{align*}

\begin{proposition}\label{prop: D(V) is a sub of completed tensor product}
Assume that $\mathcal V$ satisfies the technical assumptions in the Proposition \ref{prop: U(V) is a sub of mode algebra}, then $D(\mathbb C^{\times I}_{\mathrm{disj}};\mathcal V)\to D(\mathbb C^{\times I}_{\mathrm{disj}})\widetilde\otimes\mathfrak{U}(\mathcal V)$ is injective.   
\end{proposition}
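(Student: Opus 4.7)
The strategy will be to adapt the proof of Proposition \ref{prop: U(V) is a sub of mode algebra} almost verbatim, carrying the factor $D(\mathbb C^{\times I}_{\mathrm{disj}})$ through as a spectator. First I equip $D(\mathbb C^{\times I}_{\mathrm{disj}};\mathcal V)$ with the increasing filtration $F_p$ spanned by those $\mathcal O(A_1,\dots,A_n;f)$ with $A_i\in F_{p_i}\mathcal V$ and $\sum p_i\le p$, inherited from the filtration on $\mathcal V$ assumed in Proposition \ref{prop: U(V) is a sub of mode algebra}. The target $D(\mathbb C^{\times I}_{\mathrm{disj}})\widetilde\otimes \mathfrak U(\mathcal V)$ carries the filtration induced from the standard filtration on $\mathfrak U(\mathcal V)$, and the map \eqref{eqn: from D(V) to completed tensor product} is filtered, so it suffices to prove injectivity after passing to associated graded pieces.

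Next I would introduce the symmetrization map
\[
\rho^I:\bigoplus_{n\ge 1}\bigl(\overline U^{\otimes n}\otimes D_I(\mathbb C^{\times(I\sqcup\{1,\dots,n\})}_{\mathrm{disj}})\bigr)^{\mathfrak S_n}\longrightarrow \mathrm{gr}_FD(\mathbb C^{\times I}_{\mathrm{disj}};\mathcal V),
\]
sending $A_1\otimes\cdots\otimes A_n\otimes f$ to $\tfrac1{n!}\sum_{\sigma\in\mathfrak S_n}\mathcal O(A_{\sigma(1)},\dots,A_{\sigma(n)};{}^{\sigma\!}f)$, where ${}^{\sigma\!}f$ permutes the $z$-variables by $\sigma$ and leaves the $x_\alpha$-variables untouched. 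Surjectivity of $\rho^I$ is immediate from the commutativity of $\mathrm{gr}_F\mathcal V$ together with the fact that $\overline U$ generates a PBW basis, by using relation \eqref{eqn: D(V) commutation relation} to reorder the $A_i$ at the cost of terms of lower filtration degree.

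For injectivity, I would compose $\rho^I$ with the graded map $\mathrm{gr}_F D(\mathbb C^{\times I}_{\mathrm{disj}};\mathcal V)\to\mathrm{gr}_F\bigl(D(\mathbb C^{\times I}_{\mathrm{disj}})\widetilde\otimes \mathfrak U(\mathcal V)\bigr)$, which by Arakawa \cite[Theorem 3.14.1]{arakawa2007representation} identifies, after cut-off at level $N$, with $D(\mathbb C^{\times I}_{\mathrm{disj}})\otimes \mathbb S_N(\overline U)$. The composition factors through the natural map
\[
\pi^I_N:\bigoplus_{n\ge 1}\bigl(\overline U^{\otimes n}\otimes D_I(\mathbb C^{\times(I\sqcup\{1,\dots,n\})}_{\mathrm{disj}})\bigr)^{\mathfrak S_n}\to D(\mathbb C^{\times I}_{\mathrm{disj}})\otimes \mathbb S_N(\overline U),
\]
given by symmetrizing, expanding each $f$ in the region $|z_{\sigma(1)}|>\cdots>|z_{\sigma(n)}|>|x_\alpha|$, and truncating to $z$-modes of absolute value $\le N$. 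Thus it remains to show that $\varprojlim_N\pi^I_N$ is injective. Fixing a basis of $\overline U$ and restricting to a single summand $A_1\otimes\cdots\otimes A_n\otimes f$, this reduces to the claim that if the symmetrized sum of all expansions of $f\in D_I(\mathbb C^{\times(I\sqcup\{1,\dots,n\})}_{\mathrm{disj}})$ vanishes after truncation at every level $N$, then $f=0$.

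The main obstacle is precisely this last claim, which is the $I$-parametric version of the key step of Proposition \ref{prop: U(V) is a sub of mode algebra}. The plan is to treat the $x_\alpha$-variables and their derivatives as scalars: the absence of $(x_\alpha-z_i)^{-1}$ poles in $D_I(\mathbb C^{\times(I\sqcup\{1,\dots,n\})}_{\mathrm{disj}})$ means that the $z$-expansion in any ordering does not produce a cascade of $x_\alpha$-dependent denominators, and the argument of Proposition \ref{prop: U(V) is a sub of mode algebra} applies coefficient-by-coefficient in a basis of $D(\mathbb C^{\times I}_{\mathrm{disj}})$. Concretely, vanishing of the sum of truncated expansions forces each such coefficient, as an element of $\mathbb C[z_i^{\pm1},(z_i-z_j)^{-1}]$, to have bounded expansions in every ordering, hence to lie in $\mathbb C[z_i^{\pm1}]$; but then $\pi^I_N$ is the identity on that coefficient for $N$ large enough, forcing it to vanish. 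This establishes injectivity of $\varprojlim_N\pi^I_N$, hence of $\rho^I$, and hence by the filtered-to-graded argument of the original map.
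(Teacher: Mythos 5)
Your overall architecture is exactly what the paper intends: the paper omits this proof with the remark that it is analogous to Proposition \ref{prop: U(V) is a sub of mode algebra}, and your filtration, symmetrization map $\rho^I$, and reduction to the injectivity of $\varprojlim_N\pi^I_N$ are the correct transcription of that argument. However, the one step where the new content of this proposition actually lives is dispatched with a false premise. You assert ``the absence of $(x_\alpha-z_i)^{-1}$ poles in $D_I(\mathbb C^{\times(I\sqcup\{1,\dots,n\})}_{\mathrm{disj}})$,'' but the coefficient ring is built on the configuration space of \emph{all} $|I|+n$ points of $\mathbb C^{\times}$, so poles along $z_i=x_\alpha$ are explicitly allowed; the paper's own example $\mathcal O\bigl(A;\tfrac{1}{z_1-x_\alpha}\partial_{x_\beta}\bigr)\mapsto\sum_{n\ge 0}x_\alpha^n\partial_{x_\beta}\otimes A_{[-n-1]}$ uses such a pole, and these poles are precisely why the target must be the completed tensor product $D(\mathbb C^{\times I}_{\mathrm{disj}})\widetilde\otimes\mathfrak U(\mathcal V)$ rather than the plain one. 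As written, your reduction ``coefficient-by-coefficient in a basis of $D(\mathbb C^{\times I}_{\mathrm{disj}})$'' is not justified, because a single $f$ with a $(z_i-x_\alpha)^{-1}$ factor contributes to infinitely many $x$-monomials.

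The gap is repairable, but by a different observation than the one you give. The expansion prescribed by \eqref{eqn: from D(V) to completed tensor product} always places the $x_\alpha$ innermost, i.e.\ every factor $(z_i-x_\alpha)^{-m}$ is expanded in the region $|z_i|>|x_\alpha|$ \emph{independently of the chosen ordering of the $z$-variables}. Consequently, after writing $f$ via partial fractions as a finite sum of products of $x$-only factors, $z$-only factors, and factors $(z_i-x_\alpha)^{-m}$, the coefficient of any fixed basis element of $D(\mathbb C^{\times I}_{\mathrm{disj}})$ in the $\sigma$-expansion of $f$ is the $\sigma$-expansion of one and the same element $f_b\in\mathbb C[z_i^{\pm1},(z_i-z_j)^{-1}]$ (the finiteness of each such coefficient follows because every $(z_i-x_\alpha)^{-m}$ contributes only non-negative powers of $x_\alpha$). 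Only after this identification does the argument of Proposition \ref{prop: U(V) is a sub of mode algebra} apply to each $f_b$ separately, forcing $f_b\in\mathbb C[z_i^{\pm1}]$ and then $f_b=0$. With that substitution your proof goes through; without it, the crucial step rests on a misreading of Definition \ref{def: differential operators valued in restricted modes}.
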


The proof is analogous to that of Proposition \ref{prop: U(V) is a sub of mode algebra} and we omit it.\\

Instead of expanding an element in $D_I(\mathbb C^{\times (I\sqcup\{1,\cdots,m\})}_{\mathrm{disj}})$ in the order $|z_1|>\cdots>|z_m|>|x_{\alpha}|$, we can also expand it in the order $|x_{\alpha}|>|z_1|>\cdots>|z_m|$ and get a $\mathbb Z$-graded algebra homomorphism $D(\mathbb C^{\times I}_{\mathrm{disj}};\mathcal V)\to \mathfrak{U}(\mathcal V)\widetilde\otimes D(\mathbb C^{\times I}_{\mathrm{disj}})$ given by
\begin{align}\label{eqn: from D(V) to completed tensor product_opposite}
     \mathcal O(A_1,\cdots,A_m;f)\mapsto\oint_{|x_{\alpha}|>|z_1|>\cdots>|z_m|}f(z_1,\cdots,z_m)A_1[z_1]\cdots A_m[z_m]\prod_{j=1}^m\frac{d z_j}{2\pi i }.
\end{align}
If $\mathcal V$ satisfies the technical assumptions in the Proposition \ref{prop: U(V) is a sub of mode algebra}, then $D(\mathbb C^{\times I}_{\mathrm{disj}};\mathcal V)\to \mathfrak{U}(\mathcal V)\widetilde\otimes D(\mathbb C^{\times I}_{\mathrm{disj}})$ is injective.\\

We also define a linear subspace $D_+(\mathbb C^{I}_{\mathrm{disj}};\mathcal V)\subset D(\mathbb C^{\times I}_{\mathrm{disj}};\mathcal V)$ spanned by the identity $\mathcal O(|0\rangle;z_1^{-1})$ and those $\mathcal O(A_1,\cdots,A_m;f)$ such that $f\in \mathscr O(\mathbb C^{I\sqcup\{1,\cdots,m\}}_{\mathrm{disj}})$. It is easy to see that $D_+(\mathbb C^{I}_{\mathrm{disj}};\mathcal V)$ is a subalgebra, and we call it the ring of differential operators on $\mathbb C^{I}_{\mathrm{disj}}$ valued in positive restricted modes of $\mathcal V$.

Consider the morphism:
\begin{align*}
    \mathbb C^{\times (I_1\sqcup J_1)}_{\mathrm{disj}}\times \mathbb C^{ I_2\sqcup J_2}_{\mathrm{disj}}\times \Spec \mathbb C(\!(w^{-1})\!)\to \mathbb C^{\times (I_1\sqcup J_1\sqcup I_2\sqcup J_2)}_{\mathrm{disj}},
\end{align*}
which is induced from 
\begin{align*}
    x_{\alpha}\mapsto x_{\alpha},(\alpha\in I_1),\quad z_j\mapsto z_j,\; (i\in J_1),\quad x_{\beta}\mapsto x_{\beta}+w,(\beta\in I_2),\quad z_k\mapsto z_k+w,\;j\in J_2.
\end{align*}
The map between the spaces induces an algebra map between ring of differential operators
\begin{align*}
    \Delta^{I_1I_2}_{J_1J_2}:D_{I_1\sqcup I_2}(\mathbb C^{\times (I_1\sqcup J_1\sqcup I_2\sqcup J_2)}_{\mathrm{disj}})\to D_{I_1}(\mathbb C^{\times (I_1\sqcup J_1)}_{\mathrm{disj}})\otimes D_{I_2}(\mathbb C^{ I_2\sqcup J_2}_{\mathrm{disj}})(\!(w^{-1})\!)
\end{align*}
and we write the corresponding map on the ring of differential operators as:
\begin{align}
    \Delta^{I_1I_2}_{J_1J_2}(f)=\sum_{s\in \mathbb Z} \Delta^{I_1I_2(s)}_{J_1J_2}(f)\otimes \prescript{I_1}{J_1}\Delta_{J_2}^{I_2(s)}(f) w^{-s}.
\end{align}
Using this ``point-splitting'' trick, we define a linear map $\Delta_{I_1I_2,\mathcal V}(w):F_{I_1\sqcup I_2}(\mathcal V)\to D(\mathbb C^{\times I_1}_{\mathrm{disj}};\mathcal V)\otimes D_{+}(\mathbb C^{I_2}_{\mathrm{disj}};\mathcal V) (\!(w^{-1})\!)$ by
\begin{equation}\label{eqn: meromorphic coproduct of D(V)}
\boxed{
    \begin{aligned}
        &\Delta_{I_1I_2,\mathcal V}(w)(\mathcal O(A_1,\cdots,A_n;f))=\\
        &\sum_{\substack{J_1=(j_1,\cdots,j_m)\\J_2= (k_1,\cdots,k_{n-m})\\ J_1\sqcup J_2\in\mathrm{shuffle}(1,\cdots,n)}}\sum_{s\in \mathbb Z}\mathcal O(A_{j_1},\cdots,A_{j_m};\Delta^{I_1I_2(s)}_{J_1J_2}(f))\otimes \mathcal O(A_{k_1},\cdots,A_{k_{n-m}};\prescript{I_1}{J_1}\Delta_{J_2}^{I_2(s)}(f))w^{-s},
    \end{aligned}
}
\end{equation}
Here $J_1$ or $J_2$ can be empty set.

\begin{lemma}\label{lem: meromorphic coproduct for diff ops}
The linear map $\Delta_{I_1I_2,\mathcal V}(w)$ factors through the restricted mode algebra $D(\mathbb C^{\times (I_1\sqcup I_2)}_{\mathrm{disj}};\mathcal V)$, and it gives rise to an algebra homomorphism $\Delta_{I_1I_2,\mathcal V}(w):D(\mathbb C^{\times (I_1\sqcup I_2)}_{\mathrm{disj}};\mathcal V)\to D(\mathbb C^{\times I_1}_{\mathrm{disj}};\mathcal V)\otimes D_{+}(\mathbb C^{I_2}_{\mathrm{disj}};\mathcal V) (\!(w^{-1})\!)$.
\end{lemma}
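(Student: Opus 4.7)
The plan is to adapt the proof of Lemma \ref{lem: meromorphic coproduct for restricted modes} almost verbatim, the only new feature being the bookkeeping of the auxiliary coordinates $x_\alpha$ for $\alpha \in I_1 \sqcup I_2$ and their derivatives. First I would check that the right-hand side of \eqref{eqn: meromorphic coproduct of D(V)} is well-defined as an element of $D(\mathbb C^{\times I_1}_{\mathrm{disj}};\mathcal V)\otimes D_{+}(\mathbb C^{I_2}_{\mathrm{disj}};\mathcal V)(\!(w^{-1})\!)$: the shift $x_\beta\mapsto x_\beta+w$ for $\beta\in I_2$ and $z_k\mapsto z_k+w$ for $k\in J_2$ is expanded in the $|w|\to\infty$ region, so poles like $(x_\alpha-x_\beta)^{-1}$, $(z_j-x_\beta)^{-1}$, $(z_j-z_k)^{-1}$ (with $j\in J_1$, $k\in J_2$) all expand as power series bounded above in $w$, and the polynomial $\Delta^{I_1I_2(s)}_{J_1J_2}(f)$ lies in $D_{I_1}(\mathbb C^{\times(I_1\sqcup J_1)}_{\mathrm{disj}})$ while $\prescript{I_1}{J_1}\Delta^{I_2(s)}_{J_2}(f)$ involves no pole at $z_k=0$ or $x_\beta=0$, hence lies in the ``positive'' ring.

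Next I would verify that \eqref{eqn: meromorphic coproduct of D(V)} respects the defining relations \eqref{eqn: D(V) unity relation}--\eqref{eqn: D(V) commutation relation} of $D(\mathbb C^{\times(I_1\sqcup I_2)}_{\mathrm{disj}};\mathcal V)$. The unity relation \eqref{eqn: D(V) unity relation} is immediate from $\underset{z_i\to 0}{\Res}\Delta^{I_1I_2}_{J_1J_2}(f) = \Delta^{I_1I_2}_{J_1J_2}(\underset{z_i\to 0}{\Res}f)$ when $i\in J_1$, and the analogous statement with residue at $z_i=-w$ (which, after expansion in $w^{-1}$, reproduces the residue at $z_i=0$ of the shifted function) when $i\in J_2$. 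The translation relation \eqref{eqn: D(V) translation relation} follows from $\partial_{z_i}\Delta^{I_1I_2}_{J_1J_2}(f)=\Delta^{I_1I_2}_{J_1J_2}(\partial_{z_i}f)$, which is clear since the shift is by a constant in the variable $z_i$. The commutation relation \eqref{eqn: D(V) commutation relation} is the key point: one uses the identity
\begin{equation*}
\Delta^{I_1I_2}_{J_1J_2}\bigl((z_i-z_{i+1})^k f_k\bigr)=
\begin{cases}
(z_i-z_{i+1})^k\,\Delta^{I_1I_2}_{J_1J_2}(f_k), & i,i+1\in J_1 \text{ or } i,i+1\in J_2,\\
(z_i-z_{i+1}-w)^k\,\Delta^{I_1I_2}_{J_1J_2}(f_k), & i\in J_1,\ i+1\in J_2,\\
(z_i+w-z_{i+1})^k\,\Delta^{I_1I_2}_{J_1J_2}(f_k), & i\in J_2,\ i+1\in J_1,
\end{cases}
\end{equation*}
so that in the difference $\mathcal O(\ldots A,B\ldots;g(\ldots z_i,z_{i+1}\ldots))-\mathcal O(\ldots B,A\ldots;g(\ldots z_{i+1},z_i\ldots))$ the mixed-group contributions cancel pairwise (swapping $A$ and $B$ interchanges $J_1$ and $J_2$ for these two indices, producing equal and opposite terms after relabelling), and only the same-group contributions survive, yielding exactly the right-hand side of \eqref{eqn: D(V) commutation relation} applied under $\Delta_{I_1I_2,\mathcal V}(w)$.

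Finally, unitality and multiplicativity are formal. Unitality reduces to $\Delta_{I_1I_2,\mathcal V}(w)(\mathcal O(|0\rangle;z^{-1}))=\mathcal O(|0\rangle;z^{-1})\otimes 1$, which follows from \eqref{eqn: D(V) unity relation} and the fact that the $J_2=\{1\}$ summand contributes $\sum_{s\ge 0}(-1)^s w^{-s-1}\mathcal O(\emptyset)\otimes \mathcal O(|0\rangle;z^{s})=0$. Multiplicativity in the sense of algebra homomorphism follows from the fact that $\Delta^{I_1I_2}_{J_1J_2}$ is an algebra map on the ring of differential operators (being induced by a morphism of spaces), combined with the shuffle decomposition: concatenating two sequences and distributing the shuffles over the concatenation recovers exactly the product of shuffles of the two pieces separately.

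The main obstacle I anticipate is purely notational, namely the careful verification of \eqref{eqn: D(V) commutation relation} in the mixed case where the function $g$ involves both $(z_i-z_{i+1})^k$ factors and denominators containing $I_1\sqcup I_2$-variables; however, since shifts by $w$ commute with all $\partial_{x_\alpha}$ (for $\alpha\in I_1$, $w$ is a constant) and with all $\partial_{x_\beta}$ (for $\beta\in I_2$, both the operator and its argument shift uniformly), the computation reduces to the same cancellation argument as in Lemma \ref{lem: meromorphic coproduct for restricted modes}, giving no essentially new difficulty.
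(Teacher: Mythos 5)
Your proposal is correct and follows exactly the route the paper intends: the paper omits the proof precisely because it is ``almost the same as that of Lemma~\ref{lem: meromorphic coproduct for restricted modes}'', and your adaptation — checking well-definedness of the $w^{-1}$-expansion, verifying relations \eqref{eqn: D(V) unity relation}--\eqref{eqn: D(V) commutation relation} via the three-case identity for $\Delta^{I_1I_2}_{J_1J_2}((z_i-z_{i+1})^kf_k)$ with cancellation of the mixed-group terms, and the formal unitality/multiplicativity argument — is exactly that omitted verification. No gaps.
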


The proof of Lemma \ref{lem: meromorphic coproduct for diff ops} is almost the same as that of Lemma \ref{lem: meromorphic coproduct for restricted modes} and we omit the details. A special case of Lemma \ref{lem: meromorphic coproduct for diff ops} is when $I_2$ is the empty set, then $D_{+}(\mathbb C^{\emptyset}_{\mathrm{disj}};\mathcal V)=U_+(\mathcal V)$ by definition. It is then easy to check that $\Delta_{I\emptyset,\mathcal V}(w)$ endows $D(\mathbb C^{\times I}_{\mathrm{disj}};\mathcal V)$ with a vertex comodule structure with respect to the vertex coalgebra $U_+(\mathcal V)$. \\

Finally, consider the vector space $\Hom(\mathcal V,D(\mathbb C^{\times I}_{\mathrm{disj}})\widetilde\otimes \mathcal V)$, then it possesses a natural left module structure of $D(\mathbb C^{\times I}_{\mathrm{disj}})\widetilde\otimes\mathfrak{U}(\mathcal V)$, of which the action is given by the composition of a map in $\Hom(\mathcal V,D(\mathbb C^{\times I}_{\mathrm{disj}})\widetilde\otimes \mathcal V)$ with the action of $D(\mathbb C^{\times I}_{\mathrm{disj}})\widetilde\otimes\mathfrak{U}(\mathcal V)$. On the other hand, it also possesses a natural right module structure of $\mathfrak{U}(\mathcal V)\widetilde\otimes D(\mathbb C^{\times I}_{\mathrm{disj}})$, of which the action is given by the precomposition of a map $\Hom(\mathcal V,D(\mathbb C^{\times I}_{\mathrm{disj}})\widetilde\otimes \mathcal V)$ with the action of $\mathfrak{U}(\mathcal V)\widetilde\otimes D(\mathbb C^{\times I}_{\mathrm{disj}})$ on $\mathcal V$. Note that for all $x\in \mathfrak{U}(\mathcal V)\widetilde\otimes D(\mathbb C^{\times I}_{\mathrm{disj}})$ and all $v\in \mathcal V$, the action $x\cdot v$ belongs to the usual tensor product space $D(\mathbb C^{\times I}_{\mathrm{disj}})\otimes \mathcal V$. We summarize the above discussions as follows.

\begin{proposition}\label{prop: D(V) bimodule}
The vectors space $\Hom(\mathcal V,D(\mathbb C^{\times I}_{\mathrm{disj}})\widetilde\otimes \mathcal V)$ possesses a natural $D(\mathbb C^{\times I}_{\mathrm{disj}};\mathcal V)$ bimodule structure.
\end{proposition}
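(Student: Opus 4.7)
The plan is to construct the left and right actions explicitly and then verify bimodule compatibility, all of which is already strongly suggested by the paragraph preceding the statement.

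First I would promote the two algebra homomorphisms already set up in the paper to the bimodule structures. The map in \eqref{eqn: from D(V) to completed tensor product}
\[
\iota_L : D(\mathbb C^{\times I}_{\mathrm{disj}};\mathcal V) \longrightarrow D(\mathbb C^{\times I}_{\mathrm{disj}})\widetilde\otimes \mathfrak U(\mathcal V),
\]
obtained by expanding the kernel in the region $|z_1|>\cdots>|z_m|>|x_\alpha|$, lets us define a left action of $D(\mathbb C^{\times I}_{\mathrm{disj}};\mathcal V)$ on $\Hom_{\mathbb C}(\mathcal V, D(\mathbb C^{\times I}_{\mathrm{disj}})\widetilde\otimes \mathcal V)$ by post-composition: for $f \in D(\mathbb C^{\times I}_{\mathrm{disj}};\mathcal V)$ and $\phi \in \Hom$, declare $(f \cdot \phi)(v) := \iota_L(f) \cdot \phi(v)$, where we use the natural left module structure of $D(\mathbb C^{\times I}_{\mathrm{disj}})\widetilde\otimes \mathfrak U(\mathcal V)$ on $D(\mathbb C^{\times I}_{\mathrm{disj}})\widetilde\otimes \mathcal V$. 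Dually, the map in \eqref{eqn: from D(V) to completed tensor product_opposite}
\[
\iota_R : D(\mathbb C^{\times I}_{\mathrm{disj}};\mathcal V) \longrightarrow \mathfrak U(\mathcal V)\widetilde\otimes D(\mathbb C^{\times I}_{\mathrm{disj}}),
\]
obtained by expanding in the opposite region $|x_\alpha|>|z_1|>\cdots>|z_m|$, lets us define a right action by pre-composition: $(\phi \cdot f)(v) := (1 \otimes \iota_R(f)_{(2)}) \cdot \phi(\iota_R(f)_{(1)} \cdot v)$, in Sweedler-like notation. Both of these are \emph{a priori} sensible because of the graded/filtered structure of the completed tensor product in Appendix~\ref{sec: Completion of Tensor Product}: when $\iota_L(f)$ acts on $\phi(v)$, or when $\iota_R(f)$ acts on $v$ followed by $\phi$, only finitely many graded pieces contribute in each homogeneous degree of the output.

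Next I would check that each action is associative. For the left action this is immediate from $\iota_L$ being an algebra homomorphism and the action of $D(\mathbb C^{\times I}_{\mathrm{disj}})\widetilde\otimes \mathfrak U(\mathcal V)$ being a module structure. For the right action one has to be slightly more careful: $\iota_R$ is an algebra homomorphism into $\mathfrak U(\mathcal V)\widetilde\otimes D(\mathbb C^{\times I}_{\mathrm{disj}})$, but precomposition turns a left-module structure into a right-module structure, so associativity follows from the corresponding anti-associativity check.

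The main obstacle is verifying that the left and right actions commute, i.e.\ that $(f \cdot \phi) \cdot g = f \cdot (\phi \cdot g)$ for all $f, g \in D(\mathbb C^{\times I}_{\mathrm{disj}};\mathcal V)$ and $\phi \in \Hom$. Unpacking definitions, this amounts to the statement that $\iota_L(f)$ and $\iota_R(g)$ act on independent tensor slots of $D(\mathbb C^{\times I}_{\mathrm{disj}})\widetilde\otimes \mathcal V$: the $D(\mathbb C^{\times I}_{\mathrm{disj}})$-factor of $\iota_L(f)$ multiplies the $D$-slot of $\phi(v)$, the $\mathfrak U(\mathcal V)$-factor of $\iota_L(f)$ acts on the $\mathcal V$-slot of $\phi(v)$, and both factors of $\iota_R(g)$ are applied \emph{before} $\phi$. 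Since post-composition and pre-composition on $\Hom$ automatically commute as long as each is well defined, the nontrivial part is just confirming that the compositions make sense — concretely, that the expansion region for $\iota_L(f)$ (namely $|z|>|x|$) and the expansion region for $\iota_R(g)$ (namely $|x|>|z|$) produce well-defined elements of the respective completed tensor products when composed with $\phi$; this is guaranteed by Lemma~\ref{lem: compare different completions} together with the $\mathbb Z$-grading conventions set up in Definition~\ref{def: differential operators valued in restricted modes}.

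Finally I would record the unit axiom: the identity of $D(\mathbb C^{\times I}_{\mathrm{disj}};\mathcal V)$, namely $\mathcal O(|0\rangle; z_1^{-1})$, is sent by both $\iota_L$ and $\iota_R$ to $1 \otimes 1$, hence acts as the identity on $\Hom$. Taken together, these verifications exhibit $\Hom_{\mathbb C}(\mathcal V, D(\mathbb C^{\times I}_{\mathrm{disj}})\widetilde\otimes \mathcal V)$ as a $D(\mathbb C^{\times I}_{\mathrm{disj}};\mathcal V)$-bimodule as claimed.
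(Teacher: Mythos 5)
Your construction is exactly the one the paper intends: the paper gives no formal proof, but the paragraph preceding the proposition defines the left action by post-composition through the expansion map \eqref{eqn: from D(V) to completed tensor product} and the right action by pre-composition through the opposite expansion \eqref{eqn: from D(V) to completed tensor product_opposite}, with well-definedness of the latter resting on the observation that $x\cdot v$ lands in the uncompleted tensor product $D(\mathbb C^{\times I}_{\mathrm{disj}})\otimes\mathcal V$. Your proposal matches this in substance and merely spells out the associativity, commutation, and unit checks that the paper leaves implicit.
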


\bibliographystyle{unsrt}
\bibliography{Bib}

\end{document}